\crefname{algocf}{alg.}{algs.}
\Crefname{algocf}{Algorithm}{Algorithms}
\crefname{algocf}{pse.}{pseu.}
\Crefname{algocf}{Pseudocode}{Pseudocode}
\definecolor{weborange}{rgb}{.8,.3,.3}
\definecolor{webblue}{rgb}{0,0,.8}
\definecolor{internallinkcolor}{rgb}{0,.5,0}
\definecolor{externallinkcolor}{rgb}{0,0,.5}
\definecolor{DarkBlue}{rgb}{0,0,0.8}  
\definecolor{DarkOrange}{rgb}{0.8,0.4,0}  
\def\mylinkcolor{DarkBlue}
\newtheorem{theorem}{Theorem}[section]
\newtheorem{proposition}[theorem]{Proposition}
\newtheorem{lemma}[theorem]{Lemma}
\newtheorem{claim}[theorem]{Claim}
\newtheorem{fact}[theorem]{Fact}
\newtheorem{corollary}[theorem]{Corollary}
\newtheorem{definition}[theorem]{Definition}
\theoremstyle{definition}
\newtheorem{problem}[theorem]{Problem}
\newtheorem{example}[theorem]{Example}
\DeclareMathOperator*{\Ex}{\mathbb{E}}
\newcommand{\ket}[1]{{|#1\rangle}}
\newcommand{\bra}[1]{{\langle#1|}}
\newcommand{\bigket}[1]{{\left |#1 \right \rangle}}
\newcommand{\bigbra}[1]{{\left \langle#1 \right|}}
\newcommand{\ketbra}[2]{|#1\rangle\! \langle #2|}
\newcommand{\Tr}{\mbox{\rm Tr}}
\newcommand{\Id}{\mathcal{I}}
\newcommand{\ceil}[1]{\left\lceil #1 \right\rceil}
\newcommand{\C}{\ensuremath{\mathbb{C}}}
\newcommand{\X}{\mathcal{X}}
\newcommand{\Y}{\mathcal{Y}}
\newcommand{\A}{\mathcal{A}}
\newcommand{\B}{\mathcal{B}}
\newcommand{\D}{\mathrm{D}}
\newcommand{\puretomixed}[1]{\llbracket #1 \rrbracket}
\newcommand{\eps}{\varepsilon}
\newcommand{\poly}{\mathrm{poly}}
\newcommand{\polylog}{\mathrm{polylog}}
\newcommand{\wt}[1]{\widetilde{#1}}
\newcommand{\what}[1]{\hat{#1}}
\newcommand{\strategy}{\mathscr{S}}
\renewcommand{\P}{\mathsf{P}}
\newcommand{\supp}{\mathrm{supp}}
\newcommand{\br}{\mathbf{r}}
\newcommand{\bX}{\mathbf{X}}
\newcommand{\bY}{\mathbf{Y}}
\newcommand{\bA}{\mathbf{A}}
\newcommand{\bB}{\mathbf{B}}
\newcommand{\bOmega}{{\boldsymbol{\Omega}}}
\newcommand{\bomega}{{\boldsymbol{\omega}}}
\newcommand{\bD}{\mathbf{D}}
\newcommand{\bM}{\mathbf{M}}
\newcommand{\bZ}{\mathbb{Z}}
\renewcommand{\cal}[1]{\mathcal{#1}}
\newcommand{\Qsf}{\mathsf{Q}}
\newcommand{\Ssf}{\mathsf{S}}
\newcommand{\Rsf}{\mathsf{R}}
\newcommand{\class}[1]{\ensuremath{\mathsf{#1}}\xspace}
\newcommand{\MIP}{\class{MIP}}
\newcommand{\CLMIP}{\class{CLMIP}}
\newcommand{\coclass}{\mathrm{co}}
\newcommand{\MIPco}{\class{MIP}^{\coclass}}
\newcommand{\RE}{\class{RE}}
\newcommand{\coRE}{\class{coRE}}
\newcommand{\NEXP}{\class{NEXP}}
\newcommand{\IP}{\class{IP}}
\newcommand{\PSPACE}{\class{PSPACE}}
\newcommand{\NP}{\class{NP}}
\newcommand{\Language}{\class{L}}
\newcommand{\Decisionproblem}{\class{D}}
\newcommand{\TIME}{\mathsf{TIME}}
\newenvironment{gamespec}{
	\begin{mdframed}[style=figstyle]}{
\end{mdframed}}
\newcommand{\jqnote}[1]{}
\newcommand{\bofh}{\mathcal{B}(\mathcal{H})}
\newcommand{\alicealg}{\mathscr{A}}
\newcommand{\bobalg}{\mathscr{B}}
\newcommand{\cyclicsepvector}{\tau}
\newcommand{\positivecone}{\mathcal{H}_{\ket{\cyclicsepvector}}^{+}}
\newcommand{\tracialstate}{\ket{\tau}}
\newcommand{\MEstate}[1]{\ket{\text{ME}_{#1}}}
\newcommand{\gamesequence}{\mathscr{G}}
\newcommand{\verifiersequence}{\mathscr{V}}
\newcommand{\parameterTM}{\mathtt{P}}
\newcommand{\samplerTM}{\mathtt{Q}}
\newcommand{\deciderTM}{\mathtt{D}}
\newcommand{\decidertypedtype}{\mathtt{T}}
\newcommand{\decidertypedquestionpair}{\mathtt{E}}
\newcommand{\decidertypedfunction}{\mathtt{L}}
\newcommand{\proverTM}{\mathtt{P}}
\newcommand{\Canoline}[1]{\text{Can}(#1)}
\newcommand{\Nullline}[2]{\text{Null}_{#1}^{\text{LN}}(#2)}
\newcommand{\seedlength}{p}
\newcommand{\typelength}{l^{\text{t}}}
\newcommand{\outputlength}{l^{\text{out}}}
\newcommand{\detypesynchead}{\text{D}}
\newcommand{\Compressgame}{\text{Comp}}
\newcommand{\QuestionRed}{\text{QR}}
\newcommand{\Answerred}{\text{AR}}
\newcommand{\Pararep}{\text{Pararep}}
\newcommand{\binary}[1]{\textbf{bin}(#1)}
\newcommand{\binaryinv}[1]{\textbf{bininv}(#1)}
\newcommand{\tensor}{\otimes}
\newcommand\bC{\mathbb C}
\newcommand\bF{\mathbb F}
\newcommand\bI{\mathbb I}
\newcommand\bN{\mathbb N}
\newcommand\bR{\mathbb R}
\newcommand\bQ{\mathbb Q}
\newcommand\cA{\mathcal A}
\newcommand\cB{\mathcal B}
\newcommand\cC{\mathcal C}
\newcommand\cF{\mathcal F}
\newcommand\cG{\mathcal G}
\newcommand\cH{\mathcal H}
\newcommand\cI{\mathcal I}
\newcommand\cL{\mathcal L}
\newcommand\cM{\mathcal M}
\newcommand\cS{\mathcal S}
\newcommand\cU{\mathcal U}
\newcommand\cX{\mathcal X}
\newcommand\cY{\mathcal Y}
\newcommand\cZ{\mathcal Z}
\newcommand\rI{\mathrm I}
\newcommand{\prep}{\bot}
\newcommand{\BofH}{\mathcal{B}(\mathcal{H})}
\renewcommand{\cal}[1]{\mathcal{#1}}
\newcommand{\deltasync}{\delta_{\text{sync}}}
\newcommand{\eq}[1]{\hyperref[eq:#1]{(\ref*{eq:#1})}}
\renewcommand{\sec}[1]{\hyperref[sec:#1]{Section~\ref*{sec:#1}}}
\newcommand{\thm}[1]{\hyperref[thm:#1]{Theorem~\ref*{thm:#1}}}
\newcommand{\lem}[1]{\hyperref[lem:#1]{Lemma~\ref*{lem:#1}}}
\newcommand{\defn}[1]{\hyperref[def:#1]{Definition~\ref*{def:#1}}}
\newcommand{\prop}[1]{\hyperref[prop:#1]{Proposition~\ref*{prop:#1}}}
\newcommand{\cor}[1]{\hyperref[cor:#1]{Corollary~\ref*{cor:#1}}}
\newcommand{\fig}[1]{\hyperref[fig:#1]{Figure~\ref*{fig:#1}}}
\newcommand{\tab}[1]{\hyperref[tab:#1]{Table~\ref*{tab:#1}}}
\newcommand{\app}[1]{\hyperref[app:#1]{Appendix~\ref*{app:#1}}}
\newcommand{\chap}[1]{\hyperref[chap:#1]{Chapter~\ref*{chap:#1}}}
\newcommand{\factlink}[1]{\hyperref[fact:#1]{Fact~\ref*{fact:#1}}}
\newcommand{\proto}[1]{\hyperref[proto:#1]{Protocol~\ref*{fact:#1}}}
\newcommand{\examp}[1]{\hyperref[examp:#1]{Example~\ref*{fact:#1}}}
\newcommand{\Searchfromabove}{\texttt{\Searchfromabove}}
\newcommand{\indpoly}[1]{ind_{#1}}
\newcommand{\ldencoding}[1]{lde_{#1}}
\newcommand{\lddecoding}{dec}
\newcommand{\Turingmachinehalt}{\small\textpmhg{F}\normalsize}
\newcommand{\mi}{{-i}}
\newcommand{\dummy}{{\mathsmaller{\perp}}}
\newcommand\dummyx{{\dummy\!/x}}
\newcommand{\ac}{\mathbf{Q}_C}
\newcommand{\va}{{\vec{a}}}
\newcommand{\vb}{{\vec{b}}}
\newcommand{\vx}{{\vec{x}}}
\newcommand{\vy}{{\vec{y}}}
\newcommand{\vr}{{\vec{r}}}
\newcommand{\xp}{{x,\dummy}}
\newcommand{\py}{{\dummy, y}}
\newcommand{\xy}{{x,y}}
\newcommand{\pxy}{{\dummy\!/x,y}}
\newcommand{\pxp}{{\dummy\!/x,\dummy}}
\newcommand{\pp}{{\dummy, \dummy}}
\newcommand{\sspx}{{(\bomega_{\mi}, \vr_C) , \dummy,x}}
\newcommand{\ssxp}{{(\bomega_{\mi}, \vr_C) , x, \dummy}}
\newcommand{\sspy}{{(\bomega_{\mi}, \vr_C) , \dummy,y}}
\newcommand{\sspp}{{(\bomega_{\mi}, \vr_C) , \dummy,\dummy}}
\newcommand{\ssxy}{{(\bomega_{\mi}, \vr_C) , x,y}}
\newcommand{\sspxy}{{(\bomega_{\mi}, \vr_C) , \dummyx,y}}
\newcommand{\sspxp}{{(\bomega_{\mi}, \vr_C) , \dummyx,\dummy}}
\newcommand{\Ganchor}{\cG_\dummy}
\newcommand{\ReedMcode}{\text{RM}}
\newcommand{\Mplayer}{M_{\text{player}}}
\newcommand{\Mvalue}{M_{\text{value}}}
\newcommand{\bMplayer}{\textbf{M}_{\text{player}}}
\newcommand{\bMvalue}{\textbf{M}_{\text{value}}}
\newcommand{\blQ}{\mathbf{Q}}
\newcommand{\blS}{\mathbf{S}}
\newcommand{\blR}{\mathbf{R}}
\newcommand{\Idpoly}{\text{IdPoly}}
\begin{document}

\title{$\MIPco=\coRE$ }

\author{
Junqiao Lin\thanks{Junqiao.Lin@cwi.nl} \\ CWI \& QuSoft, Amsterdam, The Netherlands
}

\clearpage\maketitle
\thispagestyle{empty}

\begin{abstract}
In 2020, a landmark result by Ji, Natarajan, Vidick, Wright, and Yuen showed that $\MIP^*$, the class of languages that can be decided by a classical verifier interacting with multiple computationally unbounded provers sharing entanglement in the tensor product model, is equal to $\RE$. We show that the class $\MIPco$, a complexity class defined similarly to $\MIP^*$ except with provers sharing the commuting operator model of entanglement, is equal to the class $\coRE$.\footnote{We remark that the ${co}$ modifiers on both sides of $\MIPco=\coRE$ refer to different things!}. This shows that giving the provers two different models of entanglement leads to two completely different computational powers for interactive proof systems. Our proof builds upon the compression theorem used in the proof of $\MIP^*=\RE$, and we use the tracially embeddable strategies framework to show that the same compression procedure in $\MIP^* =\RE$ also has the same desired property in the commuting operator setting. We also give a more streamlined proof of the compression theorem for non-local games by incorporating the synchronous framework used by Mousavi et al. [STOC 2022], as well as the improved Pauli basis test introduced by de la Salle [ArXiv:2204.07084]. 

We introduce a new equivalence condition for $\RE$/$\coRE$-complete problems, which we call the weakly compressible condition. We show that both $\MIP^*$ and $\MIPco$ satisfy this condition through the compression theorem, and thereby establish that the uncomputability for $\MIP^*$ and $\MIPco$ can be proved under a unified framework (despite these two complexity classes being different). Notably, this approach also gives an alternative proof of the $\MIP^*=\RE$ theorem, which does not rely on the preservation of the entanglement bound. In addition to non-local games, this new condition could also potentially be applicable to other decision problems. 
\end{abstract}


\newpage

\setcounter{page}{1}
\tableofcontents
\newpage
\begin{CJK*}{UTF8}{gbsn}

\section{Introduction}

A two-prover non-local game, $\cG$ is played between a polynomial-time verifier and two computationally unbounded, but non-communicating provers, which we name Alice and Bob. In this scenario, the verifier first samples a pair of questions $(x,y)$ from a predetermined distribution $\mu$ and sends $x$ to Alice (resp.~$y$ to Bob), who then responds with the answer $a$  (resp.~$b$). The verifier then computes a predicate function $D(x,y,a,b)$ that outputs either $0$ or $1$, with $1$ indicating the verifier accepts (meaning the provers win the game), and $0$ if the verifier rejects (meaning the provers lose the game). The provers know the initial question distribution and the predicate function and can strategize before the game, but cannot communicate during the game.

Non-local games are widely studied in the quantum information community. Famously,~\cite{bellEinsteinPodolskyRosen1964a} showed that if the two provers employ the laws of quantum mechanics in their strategy, certain non-local games can be won with a higher probability, and this has since led to several experimental setups refuting the local realist model in our universe. Additionally, non-local games play an important role in quantum cryptography, both in the study of device-independent cryptography~\cite{bowlesSelftestingPauliObservables2018,jainParallelDeviceIndependentQuantum2020}, and in the analysis of certain quantum key exchange protocols~\cite{ekertQuantumCryptographyBased1991}. 

When discussing models of entanglement in a non-local game, two natural models are considered. The first model, which is the more conventional model in the quantum information community, is the \textit{tensor product model}. In this model, the provers are assumed to share an entangled state defined by a unit vector in a tensor factor of two potentially infinite-dimensional Hilbert spaces, $\cH_A \otimes \cH_B$. Alice, in this model, can make local measurements on the Hilbert space $\cH_A$ and Bob similarly on the Hilbert space $\cH_B$ in order to sample an answer pair $(a,b)$. The other, more general model is the \textit{commuting operator model}, which is used in, e.g., the Haag-Kastler axioms of quantum mechanics~\cite{haagAlgebraicApproachQuantum1964}. This model defines the entangled state shared between the provers within a single Hilbert space $\cH$. In this model, the provers are allowed to make measurements on the same Hilbert space as long as their measurement operators commute. We call the optimal winning probability for a game $\cG$ over all tensor product strategies the \textit{tensor product value} of the game $\cG$, or $\omega^*(\cG) \in [0,1]$, and the optimal winning probability over all the commuting operator strategies as the \textit{commuting operator value} of the game $\cG$ or $\omega^{co}(\cG)\in [0,1]$. Since the commuting operator model includes the tensor product model (by considering the tensor product space as a single Hilbert space), we have $\omega^*(\cG) \leq \omega^{co}(\cG)$ for all non-local games $\cG$. 

\paragraph{The complexity of non-local games. }In computational complexity, non-local games are known as two-prover one-round \textit{Multiprover Interactive Proof systems}, and these games are used to model the complexity class $\MIP$. In this paper, when discussing a Multiprover Interactive Proof system, unless otherwise stated, we implicitly assume that it is the two-prover one-round variant. Roughly speaking, a language is in $\MIP = \MIP(\frac{2}{3}, \frac{1}{3})$ if every instance $x$ of the language can be translated into a non-local game such that if $x$ is in the language, then the provers have a classical strategy that wins the game with a high probability ($\geq \frac{2}{3}$). If $x$ is not in the language, then the provers cannot win the game with high probability ($< \frac{1}{3}$) given any classical strategy. Famously~\cite{babaiNondeterministicExponentialTime1991a} showed that $\MIP=\NEXP$, where $\NEXP$ is the set of languages decidable by a nondeterministic exponential-time Turing machine, and the technique used provided the foundation for showing the famous $\class{PCP}$ theorem~\cite{aroraProbabilisticCheckingProofs1998,aroraProofVerificationHardness1998}. 


The notion of a Multiprover Interactive proof system with ``entangled provers" was first introduced in~\cite{cleveConsequencesLimitsNonlocal2004}, where the computational model is defined similarly to $\MIP$, except the provers are now given access to shared quantum entanglement. In this paper, we use $\MIP^*$ (resp.~$\MIPco$) to denote the complexity class defined by multiprover interactive proof systems with the tensor product model of entanglement (resp.~multiprover interactive proof systems with the commuting operator model of entanglement). For $t \in \{*, co\}$, the complexity class $\MIP^{t}$ is complete under polynomial time reduction with respect to the $t$ non-local game value problem, which is defined as a decision problem over the following two sets of non-local games:
\begin{equation*}
	\Language_{\text{yes}}^{t} = \left\{ \cG: \omega^{t}(\cG) = 1 \right\} \qquad \text{ and } \qquad	\Language_{\text{no}}^{t} = \left\{ \cG: \omega^{t}(\cG) < \frac{1}{2} \right\}. 
\end{equation*} 
For clarity, we refer to the $*$ non-local game value problem as the \textit{tensor product value problem} and the $co$ non-local game value problem as the \textit{commuting operator value problem}, and we define them more formally in~\Cref{def:entangledgameproblem}. 

A recent breakthrough result by Ji et al. shows that $\MIP^*=\RE$~\cite{jiMIPRE2022a}, where $\RE$ is the complexity class that contains all decision problems in which the ``yes" case can be verified by a Turing machine in finite time. In other words, it is possible to reduce an instance of the halting problem to an instance of a non-local game $\cG$ for which $\omega^{*}(\cG) = 1$ if the Turing machine halts in a finite number of steps, and $\omega^*(\cG) < \frac{1}{2}$ if the Turing machine does not halt. Previously, it was known that if the tensor product and the commuting operator value for a non-local game coincide, then one can construct a terminating algorithm that estimates the quantum value of the game up to some constant error~\cite{navascuesConvergentHierarchySemidefinite2008b}. The existence of such an algorithm, in conjunction with the $\MIP^*=\RE$ theorem, implies the existence of a game that has a commuting operator value strictly larger than its tensor product value. This, in turn, provides a negative answer to both Tsirelson's problem in quantum information and Connes's Embedding problem, a long-standing open problem in operator algebra~\cite{connesClassificationInjectiveFactors1976, ozawaConnesEmbeddingConjecture2013a}. 

In contrast, another natural variant for $\MIP$ is $\MIPco$, which is defined similarly to $\MIP^*$, but the provers are given access to the commuting operator model of entanglement instead. $\MIPco$ is known to be in $\coRE$ by the algorithm known as the ``NPA hierarchy" proposed in~\cite{navascuesConvergentHierarchySemidefinite2008b}, and it has been conjectured to be $\coRE$-complete~\cite[Section 1.4]{jiMIPRE2022a}. 

As a main contribution of this paper, we give a positive answer to this conjecture. 
\begin{theorem} \label{thm:MIPcocoRE}
	$\MIPco = \coRE$. 
\end{theorem}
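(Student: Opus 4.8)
The inclusion $\MIPco\subseteq\coRE$ is already available from the NPA semidefinite hierarchy, which gives a uniformly computable, non-increasing sequence of upper bounds converging to $\omega^{co}(\cG)$: a game with $\omega^{co}(\cG)<\tfrac{1}{2}$ is certified at some finite level of the hierarchy while a game with $\omega^{co}(\cG)=1$ is never certified, so the ``no'' instances form a recursively enumerable set. The content of \thm{MIPcocoRE} is therefore the reverse inclusion --- that the commuting operator value problem is $\coRE$-hard --- and the plan is to prove it by a gap-preserving \emph{compression} argument that mirrors the proof of $\MIP^*=\RE$ but is carried out entirely inside the commuting operator model.

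First I would establish a compression theorem for $\omega^{co}$: a polynomial-time Turing machine $\Compressgame$ that sends the description of (the verifier of) a game $\cG$ to a game $\Compressgame(\cG)$ whose verifier runs in time polylogarithmic in that of $\cG$, with $\omega^{co}(\cG)=1\Rightarrow\omega^{co}(\Compressgame(\cG))=1$ (completeness) and $\omega^{co}(\cG)<1\Rightarrow\omega^{co}(\Compressgame(\cG))<\tfrac{1}{2}$ (soundness). I would build $\Compressgame$ from the three stages of the $\MIP^*=\RE$ compression --- question reduction by introspection (the provers sample their own exponentially long questions, certified by a quantum low-degree / Pauli basis test), answer reduction by a probabilistically checkable proof of the decision predicate, and gap amplification by (anchored) parallel repetition --- using the synchronous formalism of Mousavi et al.\ and de la Salle's improved Pauli basis test to keep the rigidity estimates compact. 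The ingredient that makes every rigidity and soundness statement transfer to this setting is the framework of \emph{tracially embeddable strategies}: one works on the GNS Hilbert space of a tracial state on a (possibly infinite-dimensional) von Neumann algebra $\cM$, with cyclic separating vector $\tracialstate$, Alice's operators in $\cM$ and Bob's in the commutant $\cM'$. Since the trace makes $\tracialstate$ behave like a maximally entangled vector, the self-testing toolkit of the tensor-product proof goes through, and a preliminary synchronization step --- using that a perfect synchronous commuting operator strategy is precisely a tracial one --- licenses restricting to such strategies for both the completeness and soundness directions.

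Next I would abstract the relevant properties of $\Compressgame$ into a ``weakly compressible'' condition and show, purely formally, that it entails $\coRE$-hardness. Given a Turing machine $M$ (an instance of the $\coRE$-complete non-halting problem), Kleene's recursion theorem produces a uniformly described family $\{\cG_{M,n}\}_{n\ge0}$ in which $\cG_{M,n}$ simulates $M$ for $n$ steps and then plays a game that always rejects if $M$ has halted, and plays $\Compressgame(\cG_{M,n+1})$ otherwise. If $M$ halts at step $T$ then $\omega^{co}(\cG_{M,T})=0$, and soundness of $\Compressgame$ pushes ``value $<\tfrac{1}{2}$'' down the finite initial segment of the tower, so $\omega^{co}(\cG_{M,0})<\tfrac{1}{2}$. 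If $M$ never halts then $\cG_{M,n}=\Compressgame(\cG_{M,n+1})$ for all $n$, and I would exhibit the honest strategy for the whole tower at once --- the provers introspecting correctly and returning correct low-degree encodings and PCPs at every level --- realized on the GNS space of the trace of a tracial von Neumann algebra generated by the observables of all levels; since $M$ never catches the honest provers this strategy is perfect, so $\omega^{co}(\cG_{M,0})=1$. It is exactly here that the two models of entanglement part ways: the limiting object of the unbounded introspection tower carries a genuine normalized tracial GNS vector in the role of a maximally entangled state, which has no counterpart as a vector state of a tensor product $\cH_A\otimes\cH_B$, and this asymmetry is what separates $\RE$ from $\coRE$; the same weakly compressible condition applied on the tensor-product side reproves $\MIP^*=\RE$, and along this route one does not have to track the entanglement lower bound through the compression. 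Composing the construction of the family with the fixed point of the recursion theorem yields a polynomial-time reduction, and together with $\MIPco\subseteq\coRE$ this proves \thm{MIPcocoRE}.

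I expect the main obstacle to be the soundness analysis of $\Compressgame$ in the von Neumann algebraic setting: the rigidity statements behind introspection (the Pauli basis and low-degree tests) and the soundness of answer reduction must be established for tracially embeddable strategies, where the finite-dimensional compactness and rounding arguments used in the tensor-product proof are unavailable. I would recover these using ultraproducts of tracial von Neumann algebras (which remain tracial) and Connes-type approximate-representation arguments, being careful that every estimate stays quantitatively strong enough to survive the polylogarithmic blow-up at each level of the recursion; a secondary difficulty is checking that the elaborate ``normal-form game'' bookkeeping (conditionally linear samplers, typed decider graphs, and the like) that makes $\Compressgame$ literally polynomial-time carries over verbatim to the commuting operator model.
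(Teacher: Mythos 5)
Your overall blueprint matches the paper's: $\MIPco\subseteq\coRE$ via the NPA hierarchy, then a gap-preserving compression built from question reduction (Pauli basis test + introspection, using de la Salle's simplification), answer reduction (a PCPP on the decider), and anchored parallel repetition, all pushed through the commuting operator model by working with tracially embeddable strategies; abstracted into a ``weakly compressible'' condition and combined with Kleene's recursion theorem. Up to this point you are reconstructing the paper's structure faithfully.

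Where you depart, and where there is a genuine gap, is in the completeness direction of the self-referential construction. You propose that when $M$ never halts, one should \emph{exhibit} a single perfect commuting operator strategy for the entire infinite tower $\cG_{M,0}=\Compressgame(\cG_{M,1})=\Compressgame^2(\cG_{M,2})=\cdots$, realized ``on the GNS space of the trace of a tracial von Neumann algebra generated by the observables of all levels.'' That is not how the paper argues, and it is not an incidental stylistic difference. The paper's proof of \lem{MIPcoincoRE} together with \thm{compresscosRE} avoids any limiting construction entirely: the recursion-theorem program also embeds a clocked simulation of the $\coRE$ procedure $\texttt{Searchfromabove}_1$ applied to $\cG_{M,C_0}$, and the inductive argument uses only two things: (i) the contradiction derived if $\texttt{Searchfromabove}_1(\cG_{M,C_0})$ ever halts (soundness of compression would push ``value $<\tfrac{1}{2}$'' down the finite prefix of the tower and collide with the ``not a no-instance'' certificate), and (ii) the mere fact that $\texttt{Searchfromabove}_1(\cG_{M,C_0})$ therefore never halts forces $\omega^{co}(\cG_{M,C_0})=1$ by the semantics of the NPA-based algorithm. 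No strategy is ever constructed in the non-halting case. This is precisely what the paper advertises as its contribution beyond~\cite{jiMIPRE2022a}: the weakly compressible framework replaces the ``resource'' (entanglement dimension) lower bound of the original $\MIP^*=\RE$ argument with a purely logical use of the one-sided decision procedure, and the same template then reproves $\MIP^*=\RE$ symmetrically.

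Your proposed limit construction, by contrast, reintroduces exactly the difficulty the paper is designed to sidestep. To make it rigorous you would have to define a direct limit (or ultraproduct) of the tracial von Neumann algebras arising at each compression level, show the trace and the cyclic separating vector survive the limit, and --- crucially --- verify that the limiting strategy remains a \emph{perfect oracularizable synchronous} strategy at every level so that the completeness clauses of question reduction, oracularization, answer reduction, and parallel repetition all apply to it simultaneously. This is essentially the commuting-operator analogue of the ``entanglement preservation'' bookkeeping of~\cite{jiMIPRE2022a}, and the paper explicitly states that its route does not need it. Nothing you write shows how to carry out that limit, and absent it your argument for $\omega^{co}(\cG_{M,0})=1$ in the non-halting case is incomplete. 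You should replace this step with the paper's black-box argument: prove, by a finite downward induction along the tower using only the soundness clause of the compression theorem and the consistency-of-output clause of the weakly compressible condition, that the clocked $\texttt{Searchfromabove}_1$ subroutine can never halt, then conclude membership in $\Language^{\MIPco}_{\text{yes}}$ directly.

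A smaller point of caution: you describe the clocked simulation as returning ``a game that always rejects if $M$ has halted,'' but in the $\coRE$ reduction (Pseudocode for $\verifiersequence$) the roles are swapped relative to the $\RE$ case --- when $M$ halts one returns the \emph{rejecting} game $\cG^{\text{reject}}$ so that the soundness clause pushes ``value $\le\tfrac{1}{2}$'' down to $\cG_{M,C_0}$, and when the $\coRE$ certificate fires one returns the \emph{accepting} game, whose value-$1$ status is what produces the contradiction. Keeping these two base-case games straight is load-bearing, because the ``consistency of output'' clause of \defn{weaklycompressibleproblem} --- that the compression's output is always an instance of $\Decisionproblem$, even on garbage input --- is what licenses the finite downward induction in both branches.
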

This is a nice complementary result to $\MIP^* = \RE$, as it implies that employing two different axioms of quantum entanglement gives two completely different uncomputable verification powers to a $\MIP$ protocol. The proof of the main theorem follows by showing that the key technique used in $\MIP^*=\RE$, the gap compression for non-local games, also holds in the commuting operator model. A key part of this adaptation relies on the recently discovered tracially embeddable strategies framework~\cite{linTracialEmbeddableStrategies2024}. Then, we combine the gap compression theorem with the compression proof approach from~\cite[Section 1.1]{mousaviNonlocalGamesCompression2022} to show that $\coRE \leq_p \MIPco$. In conjunction with the NPA hierarchy algorithm, this also shows that all $p$-prover $r$-round $\MIPco$ protocols are equivalent to the $2$-prover $1$-round $\MIPco$ protocol. 

We also streamline the proof for the gap compression theorem in this paper. Mainly, we incorporate some recent simplifications, such as the simplification to the Pauli basis test given in~\cite{delasalleSpectralGapStability2022} and the synchronous game framework used in~\cite{mousaviNonlocalGamesCompression2022} for zero-gap $\MIP^*$ in our proof. Since we make use of the synchronous game framework, our result also implies that $\MIPco_s$, the complexity for the commuting operator value problem for synchronous games, is also $\coRE$-complete. 

\paragraph{The compressible condition. }The proof of the $\MIP^*=\RE$ theorem relies on a key technique known as gap compression theorem for non-local games. On a high level, the gap compression theorem shows the existence of an algorithm that takes a sequence of non-local games $\{\cG_n\}_{n \in \bN}$ with a polynomial time verifier and outputs a ``compressed" sequence of non-local games $\{\cG_n^{\Compressgame}\}_{n \in \bN}$ with a polylog time verifier. Furthermore, the compression procedure preserves the value for the tensor product value problem. 

To be more precise, for all $n \in \bN$, if $\omega^*(\cG_n) = 1$, then $\omega^{*}(\cG_n^{\Compressgame})=1$ and if $\omega^*(\cG_n) < \frac{1}{2}$, then $\omega^{*}(\cG_n^{\Compressgame}) < \frac{1}{2}$. The compression theorem is known as the ``gap" compression theorem because it preserves the $\frac{1}{2}$ gap between the yes/no cases for the tensor product value problem. The gap compression given in~\cite{jiMIPRE2022a} also has an additional clause on entanglement lower bound on the gap compression theorem, where the provers need a higher entanglement dimension, the dimension of the Hilbert space in which their joint entangled state is defined on, in the compressed game compared to the original game to formulate a strategy which wins with a probability of at least $\frac{1}{2}$. If we intuitively view the entanglement dimension as the amount of ``resources" that the provers need for the game, then the compression theorem essentially states that it is possible for the verifier to perform ``less work" when playing a non-local game with two entangled provers in the tensor product model. However, the provers would potentially have to prepare ``more resources" in the form of an entangled state with a larger Hilbert space dimension in order to convince the verifier to accept the given game. 

Based on the compression theorem, ~\cite{jiMIPRE2022a} constructs a game $\cG$ for every Turing machine such that the provers needs an entangled strategy whose entanglement dimensions correlate with the runtime of the Turing machine to succeed on the game with probability greater than $\frac{1}{2}$. Hence, if the given Turing machine does not halt, then $\omega^*(\cG) < \frac{1}{2}$, showing that $\RE \leq_p \MIP^*$. A similar observation was made in the first version of~\cite{nezhadiRecursiveCompressionMethod2025}, where for certain $\RE$/$\coRE$-complete problems such as the Halting problem and some versions of the word problem, a ``resources-dependent" version of the compression theorem can be formulated. 

Interestingly, \cite[Theorem 6.10]{mousaviNonlocalGamesCompression2022} shows that $\MIP^*$ being $\RE$-hard implies the existence of the gap compression theorem, which does not require the entanglement lower bound condition stated earlier. Since the entanglement lower bound condition is a specialized condition which only seems to apply in the context of non-local games with finite dimensional entanglement, an interesting question is whether this condition is necessary for showing that $\RE \leq_p \MIP^*$. 

In this paper, we give a new condition for decision problems being $\RE$/$\coRE$-complete which we refer to as ''compressible". Intuitively, decision problems that are compressible admit a ``compression theorem", similarly to non-local games, but without the need for the preservation of resources like previous work. Using this new formulation, we give an alternative proof for $\RE \leq_p \MIP^*$, which only relies on the gap compression theorem and the existence of an algorithm that \textbf{halts in the``yes" case}, and \textbf{runs forever in the ``no" case} for the tensor product value problem (this condition is trivially satisfied by $\MIP^* \leq_p \RE$). This, in turn, shows that the entanglement lower bound condition is \textbf{not} needed for the proof of $\MIP^*=\RE$. The same formulation can be used to show $\MIPco=\coRE$, where a gap compression theorem in conjunction with the existence of an algorithm which \textbf{halts in the ``no" case}, and \textbf{runs forever in the ``yes" case} for the commuting operator value problem implies that $\coRE \leq_p \MIPco$. Since our formulation works for general decision problems, we believe our approach can be generalized to establish the conjectured $\RE$/$\coRE$-completeness for other decision problems. 




\paragraph{Explicit separation between the models of entanglement. } As a corollary of the $\MIP^*=\RE$ theorem, the tensor product model of entanglement is mathematically different from the commuting operator model of entanglement. A natural follow-up question is whether we can find an experimental setup similar to the Bell test scenario to determine which is the right way to model entanglement in our universe? \cite[Theorem 12.10]{jiMIPRE2022a} shows the existence of a game $\cG$ such that $\omega^*(\cG) < \frac{1}{2}$ and $\omega^{co}(\cG) = 1$, which, in theory, could serve as the Bell test mentioned earlier. However, the question and answer sets for the given game have a magnitude of $10^{20}$, making it impractical for experimental implementation. 

We show that given a non-local game $\cG$, the promise problem of deciding whether $\omega^*(\cG) = \omega^{co}(\cG)$ or $|\omega^*(\cG) - \omega^{co}(\cG)| > c$ for any fixed constant $c \in [0,1]$ is $\RE$-complete. In other words, there is no algorithm which can be used to find explicit separation between the tensor product model and the commuting operator model for general non-local games! On the brighter side, our proof also gives a reduction for any non-halting Turing machine to an instance of a game such that $|\omega^*(\cG) - \omega^{co}(\cG)| > c$; however, since the technique used is similar to the one given in\cite[Theorem 12.10]{jiMIPRE2022a}, we suspect that the question size and answer size would be as large as those in~\cite{jiMIPRE2022a}.

\paragraph{Parallel repetition for the commuting operator model. } In an effort to show the gapped compression theorem, we also give the first ``informational-theoretical" proof of parallel repetition theorem for the commuting operator model. Intuitively, a parallel repetition theorem states that if $r$ instances of a non-local game are played in parallel, then the value of the game decays exponentially. Whether a parallel repetition theorem exist in general for the tensor product value of a game is still open (as the best bound is given by~\cite{yuenParallelRepetitionTheorem2016} where the decay is polynomial). However, a parallel repetition theorem (for the tensor product value) is known to hold for many special classes of games, see~\cite{cleveStrongParallelRepetition2008, kempeParallelRepetitionEntangled2011, jainParallelRepetitionTheorem2014, chaillouxParallelRepetitionFree2015, chungParallelRepetitionEntangled2015, dinurParallelRepetitionTheorem2015, bavarianAnchoredParallelRepetition2021}. In particular, a key part for showing a gap compression theorem for $\MIP^*=\RE$ in~\cite{jiMIPRE2022a} is the parallel repetition of anchored games. In contrast, very little seems to be known about parallel repetition for commuting operator values. To our knowledge, the only class of games that are known to admit a parallel repetition theorem is the XOR games~\cite{cleveStrongParallelRepetition2008}, in which the tensor product value and the commuting operator value coincide~\cite{tsirelsonQuantumAnaloguesBell1987}. 

We extended the parallel repetition results for anchoring games from~\cite{bavarianAnchoredParallelRepetition2021} to the commuting operator framework. In particular, we show that the informational-theoretical tools used in~\cite{bavarianAnchoredParallelRepetition2021}, such as an analogue of mutual information and Ulhmann's theorem, have an appropriate analogue in the commuting operator model and hence the majority of the proof from~\cite{bavarianAnchoredParallelRepetition2021} can be shown for the commuting operator model (see~\Cref{sec:qinfo} for more details). We believe these techniques also could be useful to show a parallel repetition theorem for other classes of games, and could potentially be of interest to the quantum information community. 

The proof of the parallel repetition for anchored games in this paper emerged from an early collaboration with William Slofstra and Henry Yuen. The proof of the parallel repetition theorem uses vastly different techniques from proving the main contribution of this paper, and we choose to present them in~\Cref{sec:parallelrepappendix} for readability.

\subsection{Technical overview}



\subsubsection{The compressible condition. } \label{sec:introcompression}


To introduce the compressible condition, we first present a simplified version of the compressible condition for languages and a simplified argument for a reduction from $\RE$ to a language which is compressible. We assume the standard Turing machine as the model of computation in this paper. For a Turing machine $\mathtt{T}$, we use $|\mathtt{T}|$ to denote the description length of the Turing machine. To abuse notation slightly, we also use the same notation to denote both the description of the Turing machine and the function that the Turing machine implements. We also take the convention that every integer given as an input for a Turing machine is being represented under its binary representation. This means that any integer $n$ will be treated as an $O(\polylog(n))$-bit input for all Turing machines. We give the definition of a compressible language below. 



\begin{definition}[Compressible language] \label{def:compressiblelanguage}
Let $\Language \subseteq \{0,1\}^*$. We say that $\Language$ is \textit{compressible} if there exists a universal algorithm $\texttt{Compress}^{\Language}$ with the following properties:
\begin{enumerate}
	\item (Output) $\texttt{Compress}^{\Language}$ which takes as input a description of a Turing machine $\texttt{Seq}_{\Language}$, and outputs a description of a Turing machine $\texttt{Seq}_{\Language}^{\Compressgame}$, computes the function $\texttt{Seq}_{\Language}^{\Compressgame}: \bN \rightarrow \{0,1\}^*$ in $O(\polylog(n))$ time.
	\item (Runtime): $\texttt{Compress}^{\Language}$ runs in $O(|\langle \texttt{Seq}_{\Language}\rangle|)$ time. 
	\item If $\texttt{Seq}_{\Language}$ implements a function which maps $\bN \rightarrow \{0,1\}^*$ and runs in $O(\poly(n))$ time, then for all $n \in \bN$, the following holds:
	\begin{itemize}
		\item (Completeness): If $\texttt{Seq}_{\Language}(n) \in \Language$, then $ \texttt{Seq}_{\Language}^{\Compressgame}(n) \in \Language$, 
		\item (Soundness): If $\texttt{Seq}_{\Language}(n) \not\in \Language$, then $ \texttt{Seq}_{\Language}^{\Compressgame}(n) \not\in \Language$.
	\end{itemize}
\end{enumerate}
\end{definition}
We point out that based on the above definition, if $\Language$ is compressible, then the language $\texttt{co}\Language = \{0,1\}^* \setminus \Language$ is also compressible. An important remark about the compressible condition is that the $\texttt{Compress}^{\Language}$ algorithm is language dependent. In other words, $\texttt{Compress}^{\Language}$ takes \textbf{any} Turing machine which computes a sequence of strings in $O(\poly(n))$ time and map it to a Turing machine which computes a sequence of strings with a significantly smaller runtime while still preserving whether the $n$th string of the output is in $\Language$ or not in $\Language$.



We remark that this is an unnatural condition for a language $\Language$, as $\texttt{Seq}_{\Language}^{\Compressgame}$, when generating the $n$th instance, cannot generate the $n$th instance of $\texttt{Seq}_{\Language}$ due to the smaller runtime requirement and, hence, cannot even decide whether $\texttt{Seq}_{\Language}(n) \in \Language$ or $\texttt{Seq}_{\Language}(n) \not\in \Language$. The $\texttt{Compress}^{\Language}$ algorithm has to, in some way, manipulate the description of the Turing machine $\texttt{Seq}_{\Language}$ in a black-box way to reduce the runtime.

Given a compressible language $\Language$ that is also ``non-trivially" in $\RE$, our goal is to show that $\Language$ must be $\RE$-complete. We first give a more precise definition for $\Language$ being ``non-trivially" in $\RE$ by making the following assumption:
\begin{enumerate}
	\item  $\Language \in \RE$. In other words, there exists a Turing machine $\texttt{Algo}_{\Language}: \{0,1\}^* \rightarrow \{0,1\}$, such that $\texttt{Algo}_{\Language}$, when running on the input $x \in \Language$, halts in finite time and outputs $1$, and \textbf{runs forever} if given input $y \not\in \Language$  (this can be achieved by changing the termination condition for the ``no" case for $\texttt{Algo}_{\Language}$ to running an infinite loop).
	\item $\Language$ is not trivial, meaning $|\Language| = | \left(\{0,1\}^* \setminus \Language\right)| = \infty$. There exist $x_{\text{yes}} \in \Language$ and $x_{\text{no}} \in \{0,1\}^* \setminus \Language$ which are both trivially computable. 
	\item We can generate an instance $x_{\text{yes}} \in \Language$, and an instance $x_{\text{no}} \not\in \Language$. 
\end{enumerate}
We now show that $\RE \leq \Language$ (where $\Language_1 \leq \Language_2$ implies that there exists a mapping reduction from $\Language_1$ to $\Language_2$). Let \gls{TMD}~be a Turing machine (\Turingmachinehalt~is the only non-western character used in this paper, and we use it to emphasize the fact that it is an instance of the Halting problem instead of a subroutine defined within this paper); we wish to reduce \Turingmachinehalt into an instance $x_{\text\Turingmachinehalt} \in \{0,1\}^*$ such that $|x_{\text\Turingmachinehalt}| = \poly(|\text{\Turingmachinehalt}|)$ and 
\begin{itemize}
	\item If \Turingmachinehalt halts in a finite number of steps, then $x_{\text\Turingmachinehalt} \in \Language$,
	\item If \Turingmachinehalt does not halt, then $x_{\text\Turingmachinehalt} \not\in \Language$. 
\end{itemize}

Consider the following function $\texttt{Seq}_{\Language}: \bN \rightarrow \{0,1\}^*$ defined by~\Cref{pseu:seqintro}.\\
	\vspace{10pt}
\IncMargin{1em}
\begin{algorithm}[H]
	\DontPrintSemicolon
	
	\textbf{Input}: Integer $n$. 
	
	Run \Turingmachinehalt for $n$ steps. If \Turingmachinehalt halts in the given steps, \textbf{return} $x_{\text{yes}}$, the yes-instance guaranteed by the non-triviality condition above. 
	
	Compute $\langle\texttt{Seq}_{\Language}\rangle$, the description for $\texttt{Seq}_{\Language}$. 
	
	Compute $\texttt{Seq}_{\Language}(1)$. 
	
	Simulate $\texttt{Algo}_{\Language}$, the $\RE$ algorithm for $\Language$ guaranteed by assumption 1 above, on the input $\texttt{Seq}_{\Language}(1)$ for $n$ steps. If the algorithm halts in the given steps, \textbf{return} $x_{\text{no}}$, the no-instance guaranteed by the non-triviality condition. 
	
	Compute $\texttt{Compress}^{\Language}(\langle\texttt{Seq}_{\Language}\rangle)$ and obtain the description for $\langle\texttt{Seq}_{\Language}^{\Compressgame}\rangle$. 
	
	\textbf{Return} $\texttt{Seq}_{\Language}^{\Compressgame}(n+1)$. 
	
	\caption{The description of $\texttt{Seq}_{\Language}$ for demonstrating the generalized compression framework.}
	\label{pseu:seqintro}
\end{algorithm}\DecMargin{1em}
\vspace{10pt} 
In the source code above, we use a self-referential trick to make $\texttt{Seq}_{\Language}$ perform computation steps on its own source code. We remark that this step can be done in polynomial time with respect to the description length of $\texttt{Seq}_{\Language}$ using Kleene's recursion theorem, and we refer to~\cite[Chapter 14.2]{jonesComputabilityComplexityProgramming1997} and \cite[Chapter 6.1]{sipserIntroductionTheoryComputation2006} for more details. 

We first analyze the runtime of $\texttt{Seq}_{\Language}$. Lines 2 and 5 of~\Cref{pseu:seqintro} trivially take time $O(n)$. By the recursion theorem mentioned earlier, we see that lines 3, 4 and 6 take time based on the description length of $\langle\texttt{Seq}_{\Language}\rangle$, independent of $n$. By looking at~\Cref{pseu:seqintro}, we see that the description length of $\langle\texttt{Seq}_{\Language}\rangle$ depends \textbf{only} on the description length of $\text{\Turingmachinehalt}$. Line 7 takes time $O(\polylog(n))$ by the definition of $\texttt{Compress}^{\Language}$. Hence, the runtime for $\texttt{Seq}_{\Language}(n)$ is $O(n \cdot \log(n) + \poly(|\text{\Turingmachinehalt}|)) = O(\poly(n))$ (since~\Turingmachinehalt is a fixed Turing machine, its description length is independent of the variable $n$). Thus, by the definition of $\texttt{Compress}^{\Language}$, since $\texttt{Seq}_{\Language}$ runs in $O(\poly(n))$ time,  the compression step on line 6 outputs a Turing machine that is a ``compressed" version of $\texttt{Seq}_{\Language}$. 

We claim that $x_{\text{\Turingmachinehalt}} = \texttt{Seq}_{\Language}(1)$ is the desired instance for the reduction. We first want to argue that $\texttt{Seq}_{\Language}$ never halts on line $5$ of~\Cref{pseu:seqintro} given \textbf{any} input $n \in \bN$. 

Suppose, for a contradiction, that there exists some $C \in \bN$ such that the algorithm $\texttt{Seq}_{\Language}$ halts in line $3$ of~\Cref{pseu:seqintro} with $C$ as the input; We can also assume that $C$ is the smallest integer such that this holds without loss of generality. The goal is to argue that whenever $\texttt{Seq}_{\Language}(C)$ does halt in line 3, then $\texttt{Seq}_{\Language}(1)$ simultaneously belong in $\Language$ and does not belong in $\Language$, thus creating a contradiction. 

By first analysing~\Cref{pseu:seqintro}, we see that $\texttt{Seq}_{\Language}$ halting in step 2 implies that it cannot halt in step 5. This also means that \Turingmachinehalt cannot halt in $C$ steps, and hence $\texttt{Seq}_{\Language}$ cannot halt in step 2 of~\Cref{pseu:seqintro} for all input $n < C$. 

Now, $\texttt{Seq}_{\Language}$ halting in line $5$ on input $C$ implies that $\texttt{Seq}_{\Language}(C) = x_{\text{no}} \not\in \Language$.   $\texttt{Seq}_{\Language}$ halting in line $5$ also implies that $\texttt{Algo}_{\Language}(\texttt{Seq}_{\Language}(1))$ halts in a finite number of steps, which means $\texttt{Seq}_{\Language}(1) \in \Language$ by the definition of $\texttt{Algo}_{\Language}$. By the minimal assumption of $C$ and the argument above, $\texttt{Seq}_{\Language}$ does not terminate on line $2$ or $5$ for all inputs $1 \leq n < T$, and hence $\texttt{Seq}_{\Language}(C-1) =  \texttt{Seq}_{\Language}^{\text{comp}}(C)$, which is not in $\Language$ by the definition of $\texttt{Compress}^{\Language}$. By a simple inductive argument, one can deduce that $\texttt{Seq}_{\Language}(1) \not\in \Language$. This creates the contradiction needed to argue that~\Cref{pseu:seqintro} never terminates via the exit clause on line 5. 

Now, suppose \Turingmachinehalt halts in $T$ steps, by construction, we have  $\texttt{Seq}_{\Language}(C) \in \Language$, and hence, by a similar inductive argument as above, we see that $\texttt{Seq}_{\Language}(1) \in \Language$. If \Turingmachinehalt does not halt, then by the above argument, we see that $\texttt{Algo}_{\Language}$ also cannot halt given the input $\texttt{Seq}_{\Language}(1)$, which, by the assumption we made on $\texttt{Algo}_{\Language}$, implies that $\texttt{Seq}_{\Language}(1) \not\in \Language$. This shows that $\Language$ is $\RE$-complete. We remark that this reduction is poly-time, as $x_{\text{\Turingmachinehalt}} = \texttt{Seq}_{\Language}(1)$ which can be computed in $O(\poly(|\text{\Turingmachinehalt}|))$ time.

If a language $\Language$ is shown to be compressible and in $\coRE$, $\{0,1\}^* \setminus \Language$ is compressible and in $\RE$, and hence $\Language$ is $\coRE$. Intuitively, the above argument relies on the fact that we can embed the $\RE$ algorithm into a uniform Turing machine which generates a sequence of decision problems for the given language and use compression to ``infinitely" reduce the runtime of the algorithm. The above proof approach for showing $\Language \leq \RE$ is a generalization of the conjectured approach for showing $\coRE \subseteq \MIPco$ in~\cite[Pseudocode 4]{mousaviNonlocalGamesCompression2022}. In comparison to the first draft of~\cite{nezhadiRecursiveCompressionMethod2025}, there is no dependency on some ``resource" in the $\texttt{Compress}^{\Language}$ map. Interestingly, as pointed out from~\cite{nezhadiRecursiveCompressionMethod2025}, the Halting Problem is also compressible, which means that any $\RE$-complete language is also compressible, and we present their formulation in~\Cref{exam:haltcompressible} of this paper.

\paragraph{The compressible condition for decision problems.} In order to apply the compressible condition to non-local games, we have to first reformulate the compressible condition to hold for decision problems. For a decision problem $\Decisionproblem$ given by $\Decisionproblem_{\text{yes}} \subseteq \{0,1\}^*$ and $\Decisionproblem_{\text{no}} \subseteq \{0,1\}^*$, we define $\Decisionproblem$ to be compressible if it admits a similar $\texttt{Compress}^{\Decisionproblem}$ algorithm which compresses a \textit{uniform problem instance} for $\Decisionproblem$, or a Turing machine $\texttt{Seq}_{\Decisionproblem}: \bN \rightarrow \Decisionproblem_{\text{yes}} \cup \Decisionproblem_{\text{no}}$. To draw the parallel to the definition given in~\Cref{def:compressiblelanguage}, one can interpret $\texttt{Seq}_{\Language} \rightarrow \bN \rightarrow \{0,1\}$ given in the previous section as a function which  maps $n \in \bN$ to an element to either in $\Language$ or $\{0,1\}^* \setminus \Language$. In this case, the $\texttt{Compress}^{\Decisionproblem}$ generates a description of a ``compressed" uniform problem instance $\texttt{Seq}_{\Decisionproblem}^{\text{comp}}$ which has the same completeness/soundness property given in~\Cref{def:compressiblelanguage} (i.e. for $i \in \{\text{yes}, \text{no}\}$ If $\texttt{Seq}_{\Decisionproblem}(n) \in \Decisionproblem_{i}$, then $\texttt{Seq}_{\Decisionproblem}^{\Compressgame}(n) \in \Decisionproblem_{i}$ for all $n \in \bN$), assuming the given input is a uniform problem instance for $\Decisionproblem$ which runs in $O(\poly(n))$.

Unfortunately, this generalization in practice is very hard to show for any non-trivial language. One of the reasons is that the compressible condition requires one to show the existence of a universal compressible map which works for all $O(\poly(n))$ uniform problem instances. In an effort to make this condition more applicable for general decision problems, we define a weaker notion of the compressible condition known as \texttt{weakly compressible condition}. Intuitively, instead of requiring a single $\texttt{Compress}^{\Decisionproblem}$, a decision problem is weakly compressible if for every $\alpha \in \bN$, there exists a $\texttt{Compress}^{\Decisionproblem}_{\alpha}$ which only ``compresses" uniform instances with runtime $O(n^{\alpha})$. Clearly, if $\Decisionproblem$ is compressible, the compression algorithm guaranteed by the compressible condition can be used to satisfy the condition for weakly compressible. In~\Cref{thm:compressRE} and~\Cref{thm:compresscosRE}, we show that if $\Decisionproblem \leq \RE$ or $\Decisionproblem \leq \coRE$, then the following statements about $\Decisionproblem$ are equivalent:
\begin{itemize}
	\item $\Decisionproblem$ is $\RE$/$\coRE$-complete.
	\item $\Decisionproblem$ is compressible decision problem. 
	\item $\Decisionproblem$ is weakly compressible decision problem. 
\end{itemize}
Thus showing that this weaker notion of compressibility can also be used for showing $\RE$/$\coRE$-completeness. We describe the compressible condition and weakly compressible for decision problems in more detail in~\Cref{sec:Compression}.

\paragraph{Applying the compressible condition to non-local games.} Recall from the previous section that the tensor product value problem is complete with respect to the complexity class $\MIP^{*}$, and the commuting operator value problem is complete with respect to the complexity class $\MIPco$. A uniform problem instance for the tensor product/commuting operator value problem is defined as a Turing machine $\verifiersequence: \bN \rightarrow \cG$, where to abuse notation, $\cG$ in this case is the set of all possible descriptions for a non-local game. One could intuitively think of the uniform sequence as the ``inputted Turing machine"

We show that a specific subclass of game sequences, which we refer to as ``conditionally linear verifier" admits a gap compression theorem described earlier in the introduction. We give an informal description of the conditionally linear verifier in the next section and the more detailed version in~\Cref{sec:CLverifier}. We give an informal version of the gap compression theorem below. 
\begin{theorem}[Gap compression, informal] \label{thm:gapcompressioninformal}
	For every $\alpha \in \bN$, there exists a polynomial time algorithm $\mathtt{Gapcompress}_{\alpha}$, that takes the input $\verifiersequence: \bN \rightarrow \cG$ a conditionally linear verifier such that $\verifiersequence(n)$ runs in $O(n^{\alpha})$ time, each game in the sequence can be sampled and decided in $O(n^{\alpha})$ time. The algorithm runs in $\poly(|\langle\mathtt{Gapcompress} \rangle|, \alpha)$ time and outputs a conditionally linear verifier $\verifiersequence^{\Compressgame}: n \rightarrow \cG$ such that, for $t \in \{*, co\}$:
	\begin{enumerate}
		\item (Runtime) $\verifiersequence^{\Compressgame}(n)$ runs in $O(\polylog(n))$ time, and each game $\verifiersequence^{\Compressgame}(n)$ can be sampled in $O(\polylog(n))$ time, and the decider function $D_n$ runs in $O(\polylog(n))$ time.
		\item (Completeness) If $\omega^{t}(\verifiersequence(n)) = 1$, then $\omega^{t}(\verifiersequence^{\Compressgame}(n)) = 1$ 
		\item (Soundness) If $\omega^{t}(\verifiersequence(n)) < \frac{1}{2}$, then $\omega^{t}(\verifiersequence^{\Compressgame}(n)) < \frac{1}{2}$. 
	\end{enumerate}
\end{theorem}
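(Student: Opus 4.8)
The plan is to follow the three‑stage architecture of the compression theorem of \cite{jiMIPRE2022a}, but to run every stage so that it preserves the game value simultaneously for $t=*$ and for $t=co$. Concretely, I would build $\mathtt{Gapcompress}_\alpha$ as the composition of three value‑preserving transformations on conditionally linear verifiers,
\[
\verifiersequence^{\Compressgame} \;=\; \Pararep \circ \Answerred \circ \QuestionRed\,(\verifiersequence),
\]
where $\QuestionRed$ is \emph{question reduction} (introspection: the verifier delegates the sampling of the length-$O(n^\alpha)$ questions to the provers, certifying honesty via a Pauli basis test), $\Answerred$ is \emph{answer reduction} (the provers supply a probabilistically checkable proof of proximity that the original decider would accept their answers, so the verifier reads only $O(\polylog n)$ bits), and $\Pararep$ is \emph{anchored parallel repetition}, used to restore the constant gap that the first two stages erode. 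Each transformation is an explicit polynomial‑time procedure acting on $\langle\verifiersequence\rangle$ and $\alpha$; the runtime and uniformity claim in item~(1) then follows by tracking description lengths exactly as in \cite{jiMIPRE2022a}, using the conditionally linear verifier formalism of \Cref{sec:CLverifier} to bound the sampler and decider runtime of $\verifiersequence^{\Compressgame}(n)$ by $O(\polylog n)$.

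For \textbf{completeness} (item~(2)) the argument is model‑independent and is the easy direction: at each stage, if the input game has value $1$, one exhibits an explicit honest strategy for the transformed game assembled from maximally entangled states, Pauli observables, low‑degree encodings, and the honest PCP. Such a strategy is finite‑dimensional, hence simultaneously a tensor‑product and a commuting‑operator strategy, so $\omega^{*}=1$ and $\omega^{co}=1$ are obtained at once; value‑$1$ preservation under anchored parallel repetition is immediate.

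The substance is \textbf{soundness} (item~(3)) for $t=co$. The key tool is the tracially embeddable strategies framework of \cite{linTracialEmbeddableStrategies2024} together with the synchronous‑game reformulation of \cite{mousaviNonlocalGamesCompression2022}: a commuting‑operator strategy for a synchronous game can be represented on the GNS space of a tracial von Neumann algebra with a faithful trace, so that the ``approximate representation theory'' underlying rigidity — Jordan's lemma, approximate spectral decomposition, rounding of approximate Pauli pairs, stability of approximate algebra representations — can be run in the tracial $2$‑norm exactly as it is run in the normalized Hilbert–Schmidt norm in finite dimensions. With this in place I would (a) prove soundness of the de la Salle \cite{delasalleSpectralGapStability2022} Pauli basis test in the commuting‑operator model, yielding the rigidity needed for $\QuestionRed$: any commuting‑operator strategy that wins $\QuestionRed(\cG)$ with probability $1-\eps$ is $\poly(\eps)$‑close in tracial norm to one that genuinely samples the original questions, hence induces a commuting‑operator strategy for $\cG$ of value $1-\poly(\eps)$; (b) check that the low‑individual‑degree test and the consistency/cross‑check subgames of $\Answerred$ are sound for commuting‑operator strategies — via the same tracial representation this reduces to the polynomial‑identity‑testing estimates used classically; and (c) invoke the commuting‑operator anchored parallel repetition theorem proved in \Cref{sec:parallelrepappendix} (extending \cite{bavarianAnchoredParallelRepetition2021}), which gives $\omega^{co}(\cG^{\tensor N}_{\text{anch}}) \le (1-\poly(\delta))^{\Omega(N)}$ whenever $\omega^{co}(\cG)\le 1-\delta$. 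Chaining (a)–(c) with $N=O(\polylog n)$ repetitions turns ``$\omega^{co}(\verifiersequence(n))<\tfrac12$'' into ``$\omega^{co}(\verifiersequence^{\Compressgame}(n))<\tfrac12$''; the $t=*$ case is the corresponding statement from \cite{jiMIPRE2022a}, recovered since tensor‑product strategies are a special case of commuting‑operator ones.

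I expect the main obstacle to be steps (a) and (b): re‑deriving the self‑testing and PCP‑soundness estimates in the commuting‑operator model with precisely the quantitative dependence the composition needs. In finite dimensions these rest on spectral and linear‑algebraic facts (Jordan's lemma, SVD, operator‑norm perturbation) that must be replaced by their tracial‑von‑Neumann‑algebra analogues — the framework of \cite{linTracialEmbeddableStrategies2024} is tailored for this, but one must verify that every quantity previously controlled in operator norm is now controlled in the trace‑induced $2$‑norm, and that the error accumulated across $\QuestionRed$, $\Answerred$, and $\Pararep$ stays polynomially bounded so that a fixed $\polylog(n)$ number of anchored repetitions suffices to restore the $\tfrac12$ gap. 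A secondary but delicate obstacle is the uniform bookkeeping: showing $\mathtt{Gapcompress}_\alpha$ meets the claimed $\poly$ runtime bound and that the composed verifier's sampler and decider are genuinely $\polylog(n)$‑time, which is exactly where the rigid combinatorial structure of conditionally linear verifiers is indispensable.
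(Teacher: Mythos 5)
Your proposal follows the paper's approach in all essential respects: the algorithm is the composition $\Pararep\circ\Answerred\circ\QuestionRed$ acting on conditionally linear verifiers, with the Pauli basis test taken from de la Salle's simplification, the answer reduction built from oracularization and the (simultaneous) quantum low-individual-degree test via a tailored PCPP, the parallel repetition being the anchored version, and the commuting-operator soundness at every stage obtained by running the finite-dimensional rigidity arguments in the tracial $2$-norm within the tracially embeddable strategies framework. The runtime and uniformity bookkeeping is done exactly as you describe, with Propositions analogous to your (a)–(c) each tracking $\polylog(n)$ complexities and $\poly(\eps)$ soundness losses.

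There is, however, a genuine error in your completeness argument. You claim that the honest strategy for each transformed game is ``assembled from maximally entangled states, Pauli observables, low-degree encodings, and the honest PCP'' and is therefore finite-dimensional, hence simultaneously a tensor-product and a commuting-operator strategy. This is not true: the honest strategy for the introspected (and subsequently answer-reduced) game must \emph{embed} a perfect strategy for the original game $\cG_n$ — the provers still have to answer the original decider after sampling their own questions — and that embedded strategy may exist only in the commuting-operator model on an infinite-dimensional algebra. So the honest strategy is not in general finite-dimensional, and one cannot conclude $\omega^{*}=1$ and $\omega^{co}=1$ simultaneously from a single construction. The correct statement, which is what the paper actually proves, is that each stage preserves the existence of a perfect \emph{oracularizable} strategy \emph{within each model $t$ separately}: if $\cG_n$ has a perfect oracularizable strategy in model $t$, one tensors it with the finite-dimensional Pauli/EPR/PCP gadgetry to get a perfect oracularizable strategy for the transformed game in the same model $t$. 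The oracularizability requirement (commutation of the two provers' operators on questions in the support of $\mu$) is itself needed for the completeness of the oracularization step inside $\Answerred$, and is the reason the formal compression theorem (\Cref{thm:gappedcompression}) states completeness in terms of oracularizable strategies rather than just game value $1$. Your sketch would need to carry this invariant through all three stages rather than appeal to finite-dimensionality.
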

Roughly speaking, by considering the $\mathtt{Gapcompress}_{\alpha}$ guaranteed by the theorem, this shows that $\MIP^*/\MIPco$, when restricted from games generated by a conditionally linear verifier, are weakly compressible. In practice, there are many more caveats to the above theorem which is needed to argue for weakly compressible which was not listed above. Instead, we refer to~\Cref{thm:gappedcompression} for more details. In conjunction with the NPA-hierarchy~\cite{navascuesConvergentHierarchySemidefinite2008b}, which is a $\coRE$ algorithm for the commuting operator value problem (i.e. it halts if the game is in the set $\Language_{\text{no}}$ and otherwise runs forever), we can conclude the $\coRE$-completeness of $\MIPco$, thus showing the main theorem in this paper. We refer to~\Cref{sec:MIPcocoREproof} for more details.


On the other hand, by combining the Gap compression with the well-known \texttt{Searchfrombelow} algorithm (\Cref{pseu:searchfrombelow}), an $\RE$ algorithm for the tensor product value problem, we give an alternative proof for the $\MIP^*= \RE$ theorem using the weakly compressible condition we described above, and we refer to~\Cref{sec:MIPREproof} for more details. 

We remark that the $\texttt{Gapcompress}$ algorithm defined in the formal version of the gap compression theorem in~\Cref{thm:gappedcompression} is a more streamlined version of~\cite[Theorem 12.1]{jiMIPRE2022a}, and the conditionally linear verifier is a more concise version of the normal form verifier defined in~\cite[Definition 5.31]{jiMIPRE2022a}. The primary challenge in this paper is to show that the same compression algorithm also has the desired completeness/soundness condition under the commuting operator model, which we summarize in~\Cref{sec:introMIPstartoMIPco}.


In this paper, we also model finding an explicit separation between the models of entanglement as the $\frac{1}{2}$-Bell test separation decision problem, which we model as a decision problem over the following two sets of non-local games:
\begin{equation*}
	\Language_{\text{yes}} = \left\{ \cG: \omega^{*}(\cG) =  \omega^{\mathrm{co}}(\cG) \right\} \qquad \text{ and } \qquad	\Language_{\text{no}} = \left\{ \cG: | \omega^{*}(\cG) - \omega^{\mathrm{co}}(\cG) | > \frac{1}{2} \right\}. 
\end{equation*} 
We remark that the constant above being $\frac{1}{2}$ can be changed to any arbitrary fixed constant $c \in (0,1)$ by increasing the number of parallel repetition in the proof of~\Cref{thm:gapcompressioninformal}. The problem is known to be in $\RE$ by combining \texttt{Searchfrombelow} together with the NPA hierarchy, since the $\texttt{Gapcompress}_{\alpha}$ algorithm given in~\Cref{thm:gapcompressioninformal} preserves the tensor product value and the commuting operator value. It is not hard to see that $\texttt{Gapcompress}_{\alpha}$ also preserves the yes/no case for any given conditionally linear verifier and thus can be used to argue that the above problem is weakly compressible. Although it is impossible to find a game that realizes the separation computationally, such a separation can still be found using mathematical techniques, and we refer to~\Cref{sec:compproofoutline} for further discussions on this topic. 


\subsubsection{Towards proving the gap compression theorem}
 In this subsection, we provide, from an algorithmic level, a rough outline of the subroutine used in $\texttt{Gapcompress}$ given in~\Cref{thm:gapcompressioninformal}. We first give a high-level description of a conditionally linear verifier and some intuition as to why the runtime can be compressed. In standard non-local game literature, a non-local game is defined in terms of two finite sets, $\cX$ for the question set, $\cA$ for the answer set, $\mu \sim \cX \times \cX$ as the question distribution for the two provers and $D: \cX^2 \times \cA^2 \rightarrow \{0,1\}$ as the validation function. A conditionally linear verifier $\verifiersequence$ is a game sequence that takes as input $n \in \bN$ and outputs a non-local game $\cG_n$ with the following properties:

\begin{itemize}
	\item Each game in the sequence has a question distribution that follows a specific class of distributions known as ``conditionally linear distributions" which we define formally in \Cref{def:CLdistribution}. Intuitively, a verifier can sample a question pair by first sampling a seed $s$ and then applying two special deterministic functions $\decidertypedfunction^A, \decidertypedfunction^B$ in order to compute the question pair $(\decidertypedfunction^A(s),  \decidertypedfunction^B(s))$. \cite{jiMIPRE2022a} realizes that a pair of honest provers can sample a question pair (in a way such that the question sampled for one prover is unknown to the other prover) from the conditionally linear distribution by taking the outcome from a specific set of Pauli $Z$ measurements on shared EPR pairs (where a single EPR pair corresponds to the state $\frac{\ket{00} + \ket{11}}{\sqrt{2}}$). Hence, the verifier can effectively shrink the question size for the non-local game by simply asking honest provers to perform the correct measurement on some pre-prepared EPR pairs between the provers. As seen later in this section, this fact is used with self-testing techniques to force the provers to make the correct $Z$ measurements.
	\item $\verifiersequence$ is defined by a pair of Turing machines $(\samplerTM, \deciderTM)$. The sampler, $\samplerTM$, takes as input a natural number $n$ and returns a description for the two functions $\decidertypedfunction^A_n, \decidertypedfunction^B_n$ which describes the conditionally linear distribution that samples a question pair for $\cG_n$. The decider,  $\deciderTM$, also takes a natural number and returns a description of a Turing machine, which computes $D_n$, the decision function for $\cG_n$. This definition is closer to a definition given in the interactive proof system literature, and this allows us to use techniques from the PCP theorem to shrink the answer size for the game. 
\end{itemize}

The runtime, or the complexity for the conditionally linear verifier is defined as the maximum runtime for the Turing machine $\samplerTM$ and $\deciderTM$. We remark that although the question set and the answer set are not explicitly specified by $\verifiersequence$ in a conditionally linear verifier, both $\samplerTM$ and $\deciderTM$ being time bounded effectively define a finite set of questions/answers for each of the games $\cG_n$\footnote{if $\samplerTM$ runs in time $T$, then $\samplerTM$ can samples a string with length at most $T$, which means that the question set can effectively be taken as $\{0,1\}^T$}. The $\texttt{Gapcompress}$ algorithm constructed for a conditionally linear verifier consists of three main subroutines:  \textbf{question reduction} (\Cref{sec:introspection}), \textbf{answer reduction} (\Cref{sec:PCPanswerreduction}), and \textbf{parallel repetition} (\Cref{sec:parallelrep}). Since many of the techniques used are similar to the ones used in~\cite{jiMIPRE2022a}, we only give a brief summary of each subroutine and highlight our improvements over~\cite{jiMIPRE2022a} below. For a more detailed summary, we instead refer the readers to~\cite[Chapter 2]{jiMIPRE2022a}.

\paragraph{Question Reduction. } The goal of the question reduction protocol is to force the provers to make an ``ideal" measurement in order for them to sample from a question pair following the conditionally linear verifier. The question reduction is a combination of two tests: the Pauli Basis test and the Introspection protocol. The Pauli Basis test uses self-testing techniques in order to force the dishonest provers to prepare enough EPR pairs required to sample the input distribution, as well as perform an $X$ or $Z$ measurement on all the prepared EPR pairs. The Introspection protocol utilizes the EPR pairs guaranteed by the Pauli Basis test and forces the provers to make the correct Pauli $Z$ measurement so that the provers can sample a pair of questions from the given conditionally linear distribution. The Pauli basis test has a question sampling complexity of $\polylog(n)$, and the Introspection protocol has a sampling complexity independent of $n$ (where both tests have a decision complexity of $\poly(n)$), thereby reducing the complexity of $\samplerTM$ from $\poly(n)$ to $\polylog(n)$. For more details on the question reduction protocol, we refer the readers to~\Cref{sec:introspection}. 

In this paper, we use the recent simplification due to~\cite{delasalleSpectralGapStability2022} for the Pauli Basis test, which circumvents the low-individual degree test subroutine used in the original Pauli Basis test~\cite[Section 7]{jiMIPRE2022a}. As a result, the increase in soundness error in the question reduction theorem proven in our paper is independent of the size of the game, an improvement over the polynomial dependency in~\cite{jiMIPRE2022a}.  


\paragraph{Answer Reduction. } The goal of the answer reduction protocol is to reduce the complexity of the decider $\deciderTM$ from $\poly(n)$ time to $\polylog(n)$ time while retaining a $\polylog(n)$ runtime for the sampler. Similarly to~\cite{jiMIPRE2022a, natarajanNEEXPMIP2019a}, a classical probabilistically checkable proof (PCP) is used on the verification Turing machine $\deciderTM_n$. In particular, we use the same tailor-made PCP procedure used within~\cite{jiMIPRE2022a}'s answer reduction protocol as a black box in this paper. The PCP procedure reduces checking computation steps of $\deciderTM_n$ returning accept given the corresponding question/answer pair into checking whether the two provers share the same collection of low individual degree polynomials with specific properties, which can then be checked using the ``quantum low-individual degree test" introduced in~\cite{jiQuantumSoundnessTesting2022}.

There is an immediate problem with this approach. Recall that for a classical PCP which verifies an $\NP$ instance, the provers is intuitively trying to prove the following statement to the verifier:  ``given $x \in \Language$ and a polynomial size circuit \texttt{C}, there exists a proof string $s$ such that $C(x,s)=1$. Where as in the non-local game setting, the provers is trying to prove: ``given a validation function $D_n$ for a non-local game, and a question pair $(x,y)$, there exists an answer pair $(a,b)$, which is generated by two entangled provers, such that $D_n(x,y,a,b) = 1$. In order to compute the second statement in a PCP instance, both provers need to somehow play the game using a predetermined entangled strategy and output their answer without communicating with each other. Then, the provers need to pass their answer so that they can encode the computation step of $D_n(x,y,a,b)$ as a PCP instance. In order to get around this issue, a transformation known as Oracularization is applied before computing the PCP instance (see~\Cref{sec:Oracularization}). To ensure completeness is preserved through the oracularization transformation, we need an additional property in the completeness statement in the gap compression theorem: The perfect strategies in the original game must use a special kind of strategy known as an \textit{oracularizable strategy}, which is defined in~\Cref{def:Oracularizablestrategies}. We remark that this transformation is also used in~\cite{jiMIPRE2022a}, and the oracularizable strategy in this paper is the same as the ``commuting and consistent strategy" used in~\cite{jiMIPRE2022a}.

\paragraph{Parallel repetition. } By applying the above two subroutines to a conditionally linear verifier $\verifiersequence: \bN \rightarrow \cG$ with complexity $O(\poly(n))$, the resulting conditionally linear verifier $\verifiersequence': \bN \rightarrow \cG$  which runs in $O(\polylog(n))$ time, such that for $t \in \{*, co \}$
\begin{enumerate}
	\item (Completeness) If $\omega^{t}(\verifiersequence(n)) = 1$, then $\omega^{t}(\verifiersequence'(n)) = 1$, 
	\item (Soundness) If $\omega^{t}(\verifiersequence(n)) \leq \frac{1}{2}$, then $\omega^{t}(\verifiersequence'(n)) \leq 1 - \polylog(n)$. 
\end{enumerate}

Thus, in order to show~\Cref{thm:gapcompressioninformal}, one would need to apply a logarithmic-fold parallel repetition transformation to the game in order to amplify the ``soundness" condition for $\verifiersequence'$ to $< \frac{1}{2}$ while retaining the ``completeness" condition. Where recall, $r$-fold parallel repetition is a transformation for the game $\cG$ in which $r$ question pairs are sampled independently and sent to the provers, and the provers must treat each of the $r$ question pairs as independent questions and reply with $r$ corresponding answer pairs. The provers only win the $r$-fold parallel repetition game if they win on all $r$ independent instances of the game. We remark that applying a logarithmic-fold parallel repetitions to $\verifiersequence'$ only increases the runtime by a logarithm factor.

Unfortunately, a strong parallel repetition theorem, i.e. a parallel repetition theorem that shows an exponential decay in the optimal success rate for entangled provers is an open problem. However,~\cite{bavarianAnchoredParallelRepetition2021} shows that by applying a simple transformation, the anchored transformation, the resulting game would have a strong parallel repetition theorem, and this version of parallel repetition has been used in~\cite{jiMIPRE2022a}. We use a slight modification for the anchoring transformation\footnote{The modification is designed to preserve synchronicity within the game.} in our paper, and we give a proof for the anchored parallel repetition theorem for the commuting operator model in~\Cref{sec:parallelrepappendix}. 

\subsubsection{From $\MIP^*$ to $\MIPco$. } \label{sec:introMIPstartoMIPco}

Finally, we discuss some of the challenges in extending the gap compression theorem to the commuting operator models. Since the commuting operator model of entanglement cannot be approximated by finite-dimensional strategies, unlike the tensor product model, many techniques used in~\cite{jiMIPRE2022a} for proving the gap compression theorem are not known to generalize to the commuting operator model. 

\cite{linTracialEmbeddableStrategies2024} recently introduced a subclass of (two prover) commuting operator strategies known as \textit{tracially embeddable strategies}. Tracially embeddable strategies, while being infinite dimensional, have many similar structures to a finite-dimensional tensor product strategy. Hence, many techniques from~\cite{jiMIPRE2022a}, which hold for the finite-dimensional tensor product model, can easily be translated to the setting where the provers are restricted to tracially embeddable strategies.

Furthermore,~\cite{linTracialEmbeddableStrategies2024} shows that the behaviour of provers with access to the commuting operator model of entanglement can be \textbf{well approximated} by provers restricted to using tracially embeddable strategies. To be a bit more precise, the set of \textit{correlations}, or the set of probability distributions for outputting a certain answer pair given a question pair when the provers are given access to tracially embeddable strategies is dense within the set of correlations for provers with commuting operator strategies. Hence, the complexity of $\MIPco$ is precisely the same as the complexity of an interactive proof system where the provers are restricted to using tracially embeddable strategies. By working within this class of strategy, the following techniques used in the proof of the gap compression theorem in~\cite{jiMIPRE2022a} become available in the analysis of $\MIPco$: 
\begin{enumerate}
	\item Tracially embeddable strategies provide a natural generalization to ``density matrices" and the ``observable switching trick" to finite-dimensional strategies.~\cite{linTracialEmbeddableStrategies2024} uses this to replicate the rigidity statement for the Pauli basis test for provers restricted to using tracially embedded strategies similarly to~\cite[Theorem 7.14]{jiMIPRE2022a} (which shows the rigidity for finite dimensional tensor product strategies). This is expressed in~\Cref{thm:soundnessPaulibasis} in our paper, and it is a key part of showing the ``soundness" properties of the question reduction protocol in the commuting operator model. 
	\item Tracially embeddable strategies also give a natural notion of relative entropy for quantum states in the infinite-dimensional setting~\cite{arakiRelativeEntropyStates1977}. Finite dimensional von Neumann entropy is a crucial tool for proving the anchored parallel repetition theorem~\cite{bavarianAnchoredParallelRepetition2021} (which itself is based on the informational theoretical parallel repetition theorem for classical $\MIP$ by~\cite{razParallelRepetitionTheorem1995a}). By assuming the underlying strategy is tracially embeddable, we gave the analogue for many components used in the proof of~\cite{bavarianAnchoredParallelRepetition2021}, and we refer to~\Cref{sec:parallelrepappendix} for more details. 
	\item Lastly, the answer reduction protocol relies on the ``soundness" of the quantum low-individual degree test, which is only shown to hold for a special class of strategies known as synchronous strategies in~\cite{jiQuantumSoundnessTesting2022}. In~\cite{vidickAlmostSynchronousQuantum2022}, a ``rounding" lemma, or a lemma translating results proven using synchronous strategies to regular strategies when restricted to finite dimensional strategies, was shown. By working with tracially embeddable strategies, the same lemma can be shown for tracially embeddable strategies in~\cite{linTracialEmbeddableStrategies2024}. We remark that the ``rounding" lemma can be proven without using the tracially embeddable strategies framework by the works of~\cite{delasalleAlmostSynchronousCorrelations2023}. For details about synchronous strategies and the rounding lemma, we refer the reader to~\Cref{sec:introcorrelations}.
\end{enumerate}

We express and prove all our results using tracially embeddable strategies. As seen in~\Cref{def:Tracialemd}, tracially embeddable strategies are defined using languages of tracial von Neumann algebra in standard form, which might be intimidating for readers with no prior background on von Neumann algebras. We provide a brief introduction to basic tracial von Neumann algebra in~\Cref{sec:introVNA}, and we give a translation chart which converts the notation for tracially embeddable strategies to what the intuitive finite dimensional counterpart is in~\Cref{fig:tracialemdtofd} for clarity.  For more intuition about tracially embeddable strategies in the finite-dimensional setting, we refer to~\cite[Example 3.3]{linTracialEmbeddableStrategies2024}.

\subsection{Consequences}
In this subsection, we discuss some of the additional consequences for our results.

\paragraph{Uncomputability of the commuting operator value. }Recall from the previous section, that the NPA hierarchy is an algorithm which generates a series of upper bounds that estimates the commuting operator value of a game. Although as shown in~\cite[Theorem 12.10]{jiMIPRE2022a}, there exists a game $\cG$ such that $\omega^{co}(\cG) = 1$ and  $\omega^{*}(\cG) \leq \frac{1}{2}$.  The best algorithm used in practice for estimating the tensor product value of the game is still the NPA hierarchy due to the inefficiency of the \texttt{Searchfrombelow} algorithm. Estimating the commuting operator value of a game is also important in the recently introduced compiled non-local game setting~\cite{kalaiQuantumAdvantageAny2022}, where the optimal bound of the game is known to be bounded by the commuting operator value of the game due to a recent result by~\cite{kulpeBoundQuantumValue2025}. 

As an obvious consequence from the main result of this paper is that \textbf{there is no algorithm which can estimate the commuting operator value of the game} up to any constant $c \in (0,1)$, or else one can use this algorithm to construct a halting algorithm for the $co$ non-local game value problem. Our result also implies that one cannot compute the convergence rate of the NPA hierarchy in general.

\paragraph{Connection to noncommutative polynomials. } Let $\cF_n^m$ be the free group consisting of $m$ elements of order $n$, and let $\bQ(\cF_n^m)$ be the finitely-generated $*$-algebra over $\bQ$. \cite[Theorem 1.1]{mehtaPositivityUndecidableTensor2023} showed the $\bQ(\cF_n^m)$ tensor product positivity problem is $\coRE$-hard in the case where $n,m \geq 2$, $(n,m) \neq (2,2)$, where the $\bQ(\cF_n^m)$ tensor product positivity problem is defined as follows: Given an element $g \in \bQ(\cF_n^m) \otimes \bQ(\cF_n^m)$, deciding whether the element $g$ is positive in $(\bQ(\cF_n^m) \otimes \bC) \otimes (\bQ(\cF_n^m)\otimes \bC)$. Since one can also view elements of $(\bQ(\cF_n^m)\otimes \bC)$ as a $m$ variate noncommutative polynomials over elements of order $n$, the above problem can also be formulated as determining the positivity for two noncommutative polynomials tensor-producted together. The $\bQ(\cF_n^m)$ tensor product positivity problem is conjectured to be $\Pi^2_0$-complete. 

By considering the game polynomial introduced in~\cite{wattsNoncommutativeNullstellensatzePerfect2023} and its connection to the commuting operator strategies, the complexity class $\MIPco$ is related to the ``gap" version of the $\bQ(\cF_n^m)$ tensor product positivity problem, where the gap $\bQ(\cF_n^m)$ tensor product positivity problem is defined as follows decision problem: given $g \in \bQ(\cF_n^m) \otimes \bQ(\cF_n^m)$, decide whether $g - \cI$ is positive or $g$ is not positive, where $\cI$ is the identity element in $\bQ(\cF_n^m) \otimes \bQ(\cF_n^m)$. \cite{mehtaPositivityUndecidableTensor2023} shows that our main theorem, $\MIPco=\coRE$ implies that the gap $\bQ(\cF_n^m)$ tensor product positivity problem is also $\RE$-complete. This, in turn, also implies that the $\bQ(\cF_n^m)$ tensor product positivity problem is at least $\RE$-hard, giving stronger evidence that this problem is $\Pi^2_0$-complete. 

\paragraph{Uncomputability results in operator algebra. } Famously, as a corollary of the $\MIP^*=\RE$ theorem, both Connes embedding's problem and Kirchberg's QWEP conjecture are shown to be false due to its relationship with Tsirelson's problem~\cite{fritzTsirelsonProblemKirchberg2012,ozawaQWEPConjecture2004}. In conjunction with recent work in operator algebra, our main result also gives a negative result to a stronger variant of these two famous conjectures. 

Roughly speaking, the disproof of the Connes embedding problem states that every tracial von Neumann algebra on a separable Hilbert space cannot be approximated by a limit of finite-dimensional matrices. A recent result by~\cite{arulseelanUniversalTheoryLocally2025} shows that our main result also implies that furthermore all tracial von Neumann algebras cannot be approximated by \emph{any} computable object, thus showing the class of tracial von Neumann algebra is ``uncomputable" in nature. 

The disproof of Kirchberg's QWEP conjecture states that the maximum tensor product (or the ``algebraic" tensor product) norm for $C^*(\cF_n  \times \cF_n)$ (where $\cF_n$ denotes the free group with $n$ generators) is different than the minimum tensor product (or the ``analytic" tensor product) norm. By using our main result,~\cite{goldbringDefinabilityCtensorNorms2025} shows that in general, there is no algorithm which can approximate the maximum tensor product norm of $C^*(\cF_n  \times \cF_n)$ for all $n \in \{2, 3, \cdots \}$. By taking the case where $n=2$, this also gives a negative answer to~\cite[Problem 4.2]{fritzCanYouCompute2014a}\footnote{Thus showing that ``no, you can not compute the operator norm"!}. This result also could potentially give additional insight into the Kirchberg's embedding problem\footnote{Not to be confused with Kirchberg's QWEP conjecture.}, another major open problem in $C$*-algebra and we refer to~\cite{arulseelanUniversalTheoryLocally2025, goldbringKirchbergsEmbeddingProblem2015} for more details. 

\paragraph{Estimation of quantum values for a game.} A variant of $\MIPco$ called the zero-gap $\MIPco$ ($\MIPco_0$) is introduced in~\cite{mousaviComplexityZeroGap2020}. $\MIPco_0$ is the complexity class which is complete with respect to the gapless commuting operator value problem, which is defined by the following two sets of non-local games
\begin{equation*}
	\Language_{\text{yes}}^{co} = \left\{ \cG: \omega^{t}(\cG) = 1 \right\} \qquad \text{ and } \qquad	\Language_{\text{no}}^{co} = \left\{ \cG: \omega^{t}(\cG) < 1 \right\}. 
\end{equation*} 
This class is also shown to be $\coRE$ complete due to a result by~\cite{slofstraTsirelsonProblemEmbedding2019}, which implies that $\MIPco= \MIPco_0$. Interestingly, the equivalency between the gap and zero-gap versions of $\MIPco$ does not extend to $\MIP^*$, as the complexity of the zero-gap $\MIP^*$ ($\MIP^*_0$) is shown to be $\Pi_2$-complete in~\cite{mousaviNonlocalGamesCompression2022}, where $\Pi_2$ is defined as $\coRE$ with access to an $\RE$ oracle.


\paragraph{Zero knowledge proof systems. } In~\cite{mastelTwoProverPerfect2024a}, it was shown that every $\MIP^*$ protocol has a perfect zero-knowledge proof system in the $\MIP^*$ model. In the same work, it was conjectured that the same would hold for the complexity class $\MIPco$ provided $\MIPco=\coRE$, and a parallel repetition theorem was proved for the commuting operator model. In conjunction with our result, this shows that one could similarly convert any $\MIPco$ protocol into a zero-knowledge $\MIPco$ protocol.


\subsection{Open problems}

\paragraph{The generalized compression framework. } In~\Cref{sec:introcompression}, we introduced the generalized compression framework for showing $\RE$-completeness/$\coRE$-completeness of a given decision problem, and we showed that the gap compression theorem for non-local games gives a way to use the generalized compression framework for showing that the tensor product/commuting operator value problem is $\RE$/$\coRE$-complete. Thus, an interesting open problem is whether this framework can be applied to other decision problems which are in $\RE$/$\coRE$, but conjectured to be $\RE$/$\coRE$-complete?

In the other direction, it was shown that assuming $\MIP^*=\RE$ or $\MIPco=\coRE$, there exist a ``compression theorem" for the non-local game value problem for the corresponding model of entanglement~\cite{mousaviNonlocalGamesCompression2022, mousaviComplexityZeroGap2020}. Thus, a natural follow-up question is whether \textbf{all} decision problems which are $\RE$/$\coRE$-hard admit a natural compression theorem for a subclass of uniform sequences of the said decision problem. 

\paragraph{Complexity for non-local games. } As mentioned earlier in the introduction, $\MIPco=\MIPco_0$, where $\MIPco_0$ is the zero-gap variant of $\MIPco$. This implies that there must exist a direct reduction from the commuting operator value problem to the gapless commuting operator value problem. Thus, an interesting open problem is whether there exists a natural reduction between these two problems without going through the Non-Halting problem?

Another open problem is whether there exists a more reasonable experimental realization between the tensor product and the commuting model of entanglement. Although we show that no algorithm can find such a separation, this does not eliminate other mathematical techniques which can be used to find a game that realizes said separation. This game would also provide a more reasonable way to construct a counterexample to Connes' embedding problem. 

Does there exist a more general uniform sequence of games that admit a gap compression theorem, or are conditionally linear verifiers the most general class of uniform games that can be compressed? As discussed in the technical overview, conditionally linear verifiers are tailor-made to take advantage of self-testing from non-local games literature and PCP constructions from the interactive proof systems literature, so we suspect any general uniform game sequence that admits a gap compression theorem must also be defined in a way that enables both techniques.

\paragraph{Estimating the commuting operator value for other classes of games. } In this paper, we show that the commuting operator value problem with parameters $(1, 1- \eps)$ for games with a conditional linear distribution as their input distribution, as well as synchronous games, is $\coRE$-complete for all constant $\eps > 0$. One natural question is whether this result holds for other classes of games. \cite{mancinskaGappreservingReductionsREcompleteness2025} shows that there exists a constant $c_{\text{Inde}} > 0$ such that the tensor product value problem for independent set games with parameters $(1, 1 - c_{\text{Inde}})$ is $\RE$-complete, and assuming the main result of our work,  the commuting operator value problem for independent set games with parameters $(1, 1 - c_{\text{Inde}})$ would be $\coRE$-complete. Similar results have been derived for the constraint satisfaction problem (CSP) games and 3-colouring games by~\cite{culfREcompletenessEntangledConstraint2025}. An interesting open problem is whether these type of results holds for other classes of games for both the tensor product value and the commuting operator value problems.

At the other extreme, are there parameters in which the commuting operator value problem is ``easy" (i.e. computable)? In~\cite{culfApproximationAlgorithmsNoncommutative2024}, it was shown that there exists a parameter $d_{\text{col}}$ such that for all $d' < d_{\text{col}}$, the tensor product value problem with parameters $(1, 1 - d')$ for 3-colouring games is decidable in polynomial time! Does the same phenomenon hold for commuting operator values? Does there exist a class of games such that the tensor product value problem with parameters $(1, 1-c)$ is computable, but uncomputable for the commuting operator value problem with the same parameter (or vice versa)? Does there exist a variant of the ``unique games conjecture" analogous to classical $\MIP$~\cite{khotPowerUnique2prover2002} or $\MIP^*$~\cite{kempeUniqueGamesEntangled2009, mousaviQuantumUniqueGames2024} for $\MIPco$.

\paragraph{Additional insight into the Connes embedding problem. } Our proof techniques for $\MIP^*=\RE$ and $\MIPco =\coRE$ can potentially give an alternative perspective into the counter-example for the Connes embedding problem. If we view non-local games as a functional which maps correlations into $(0,1)$, a norm on this set of functionals would correspond to its optimal success rate on the correlation set.~\Cref{thm:gapcompressioninformal} can intuitively be seen as a map that maps functionals acting on a correlation to one that maps on a smaller correlation set (in terms of input/output), while maintaining the norm to some degree. The fact that~\Cref{thm:gapcompressioninformal} preserves both tensor product and commuting operator values means that the difference between the tensor product value and the commutative operator value only lies in how the compression is being used. This intuition might be useful in constructing a counterexample for the Connes embedding problem using operator algebraic techniques.

Due to the characterization by~\cite{fritzTsirelsonProblemKirchberg2012}, the tensor product model corresponds to the ``max" tensor product between two free algebras, whereas the commuting operator model corresponds to the ``min" tensor product between two free algebras. Thus, $\MIPco=\coRE$ allows one to work with the ``max" tensor product when considering operator algebraic results which rely on the Connes embedding problem being false. Can this additional insight be helpful for the operator algebra community? 

\paragraph{Application to other operator algebra problems. } The complexity of approximating the values of non-local games has a natural connection to the study of operator algebra. As mentioned above, the $\MIP^*=\RE$ theorem gives a negative answer to the Connes embedding problem in the study of tracial von Neumann algebras.~\cite{bowenAldousLyonsConjectureSubgroup2024, bowenAldousLyonsConjectureII2024} gives a negative answer to the Aldous-Lyons problem~\cite{aldousProcessesUnimodularRandom2018} in probability theory by showing that $\class{TailoredMIP}^*$, $\MIP^*$ with a more restricted class of strategies, is $\RE$-complete. An important open problem in group theory is whether a non-hyperlinear group exists. It is shown in~\cite{paddockSatisfiabilityProblemsAlgebras2025} that $\class{LinMIP}^*$, or $\MIP^*$ protocols being restricted to Linear Constraint System game~\cite{cleveCharacterizationBinaryConstraint2014,kimSynchronousGameBinary2018}, being computable is equivalent to the existence of a non-hyperlinear group. The Linear Constraint System game does not fall into the conditionally linear verifier framework, and it would be interesting to see if a similar compression technique can be used to resolve this problem. 

Another interesting set of strategies is the set of invariant random subgroup (IRS) strategies introduced in~\cite{manzoorThereEquivalenceRelation2025}. Intuitively, this can be seen as the ``commuting operator variant" of the ``$Z$-aligned permutation strategies" introduced in~\cite{bowenAldousLyonsConjectureSubgroup2024}, used to disprove the Aldous-Lyons problem. It is conjectured that $\class{MIP}^{\text{IRS}}$, $\MIP$ with access to IRS strategies, is $\coRE$ complete. Showing this complexity theoretical result also has interesting implications for the Ergodic theory community, and we refer to~\cite{manzoorInvariantRandomSubgroups2025} for more details.  


\paragraph{Acknowledgements.}  I thank William Slofstra, Henry Yuen, and Hamoon Mousavi for their involvement in the parallel repetition project, which led to this work. I thank Henry Yuen and Thomas Vidick for many email exchanges for explaining the $\MIP^*=\RE$ result. I thank Seyed Sajjad Nezhadi for explaining the conjectured approach for showing $\MIPco=\coRE$ in~\cite{mousaviNonlocalGamesCompression2022}. I thank Andrew Marks for showing me~\Cref{exam:haltcompressible}, and the discussion related to the compressible condition,  I thank Michael Chapman for the discussion on Oracularization and Oracularizable strategies, and I thank Aareyan Manzoor for some of the discussion on the potential applications on operator algebras in this work. I thank Srijita Kundu for helpful discussion about the parallel repetition theorem. I thank Yuming Zhao for helpful discussion related to this work. I thank Nick Resch, William Slofstra, and Ronald de Wolf for going through the earlier draft of this paper. I thank the anonymous FOCS reviewer for the helpful comments. Finally, I would like to thank my supervisors, Stacey Jeffery and Jonas Helsen, for going through some of the earlier drafts of this paper and listening to me yapping about MIPco for the past year. This work is supported by ERC STG grant 101040624-ASC-Q.

\end{CJK*}
\section{Classical preliminaries}

\subsection{Finite sets and Turing Machines} \label{sec:preFinitesetTM}
In this paper, we use $\bN$ to denote the set of natural numbers. For a finite set $S$, we use \gls{setsize} to denote the number of elements in $S$, and for $a, b \in S$, we use \gls{kronecker} for the Kronecker delta between the two elements. Given a (potentially infinite) set $T$ and $n \in \bN$, we use\gls{matrixsize} to denote the set of $n$ by $n$ matrices over the set $T$. Given a distribution $\mu$, we use \gls{expectation} to denote the expectation over the distribution $\mu$ and for a set $S$, we use $\Ex_{x \in S}$ to denote the expectation over the set $S$. 

For a bit string $a, s, t \in \{0,1\}^n$, we use \gls{hamming} to denote the Hamming weight of $s$ and \gls{dotproduct} to denote the inner product between $s$ and $t$, or $\sum s_i t_i \text{ mod } 2$. We use $\text{\gls{substring}} \in \{0,1\}^{n}$ to denote the string
\begin{equation*}
 	(s|_a)_i = \begin{cases}
 		s_i &\text{if } a_i =1 \\
 		0 &\text{otherwise}
 	\end{cases}.
\end{equation*}
We use \gls{zerooutmap} to denote the function which zeros out the first $j$ entries of the string $s$ and we use $\pi_{\leq j}$ to denote the function which zeros out everything except for the first $j$ entries of the string. In other words 
\begin{align}
	\pi_{> j}(s_0, \cdots, s_{n-1}) &= (0, \cdots, 0, s_j, \cdots, s_{n-1})  \label{eq:defzeromapstring} \\
	\nonumber \pi_{< j}(s_0, \cdots, s_{n-1}) &= (s_0, \cdots, s_{j-1}, 0, \cdots, 0),
\end{align}
and we take the convention that $\pi_{> j} = \pi_{\geq  j+1}$.

In this paper, we assume all $\log$ are in base $2$. For integers $n \leq m$, we use \gls{numberset} to denote the set $\{0, \cdots, n-1\}$, and for $n \leq m$ we use \gls{numberset2} to denote the set $\{n, n+1, \cdots, m-1\}$. For $n \in \bN$, we use $\text{\gls{binaryfun}} \in \{0,1\}^{\lceil \log(n) \rceil}$ to denote the binary representation for the number $n$, and for $s \in \{0,1\}^n$, we use \gls{binaryinv} to denote the unique integer $i$ such that $\binary{i} = s$.  

We use the Turing machine as the model of computation in this paper. Let $\mathtt{A}(x_1, \cdots x_m)$ denote an $m$-input Turing machine. We assume that $\mathtt{A}$, in this case,  consists of $m$ input tapes, a single work tape, and a single output tape. When specifying the output of an $m$-input Turing machine, we might sometimes define it only for accepting fewer than $m$ inputs; in this case, we assume the Turing machine only reads the first $n$ of the input tapes during the computation step. We use \gls{descriptionTM} to denote the description of the Turing machine $\mathtt{A}$ (represented under $\{0,1\}^*$ string). To abuse notation, we use \gls{descriptionTMinput} to denote the description of the Turing machine $\mathtt{A}$ being hardcoded to run $x \in \{0,1\}^*$ as input. We use \gls{sizeTM} to denote the minimum description length of $\mathtt{A}$, and we note that the description of the Turing machine is a constant that is independent of the input size. We use $\TIME_{\mathtt{A}}(x_1, \cdots , x_m)$ to denote the maximum of $|\mathtt{A}|$ and the runtime of the Turing machine $\mathtt{A}$ running with input $(x_1, \cdots , x_m)$. $\TIME_{\mathtt{A}}(x_1, \cdots , x_m)$ could potentially be $\infty$ if the Turing machine $\mathtt{A}$ does not halt on input $(x_1, \cdots , x_m)$.

Let $\{f_n\}_{n \in \bN}$ be a sequence of functions that map $m$ finite sets $\{S_i^n\}_{i \in [m]}$ to $\{0,1\}^*$. We say the Turing machine $\mathtt{A}$ computes the sequence of functions $f_n$ if $\mathtt{A}$ is an $(m+1)$-input Turing machine in which for all $n \in \bN$ 
\begin{equation*}
	\mathtt{A}(\binary{n}, x_1, \cdots x_m) = f_n( x_1, \cdots x_m), 
\end{equation*}
where each element of $S_i^n$ is encoded using a binary representation with $x_i \in S_i^n$. If $\mathtt{A}$ is the Turing machine which computes the functions $\{f_n\}$, to abuse notation, we use $\mathtt{A}_n$ to denote the function $f_n$. For $n \in \bN$, we use the notation \gls{RuntimeTM} to denote the maximum of $\TIME_{\mathtt{A}}(n, x_1, \cdots, x_m)$ over all input $x_1 \cdots x_m \in \{0,1\}^*$. In this paper, if $f$ takes an integer as input, the integer will always be represented under the binary representation, and hence any integer $n$ is considered a $\log(n)$-bit input under this formulation. 

Let $\Decisionproblem = (\Language_{\text{yes}}^{\Decisionproblem}, \Language_{\text{no}}^{\Decisionproblem})$ be a decision problem for two disjoint non-empty subset $\Language_{\text{yes}}^{\Decisionproblem}, \Language_{\text{no}}^{\Decisionproblem} \subseteq \{0,1\}^*$. We use \gls{complementdecision} to denote the complement of $\Decisionproblem$ (i.e. $\Language_{\text{yes}}^{\class{co}\Decisionproblem} = \Language_{\text{no}}^{\Decisionproblem}$ and $\Language_{\text{no}}^{\class{co}\Decisionproblem} = \Language_{\text{yes}}^{\Decisionproblem}$). We remark that this is different than the notion $\Decisionproblem^{co}$, which we define later in this paper. To abuse notation, we write $x \in \Decisionproblem$ as $x \in  \Language_{\text{yes}}^{\Decisionproblem} \cup \Language_{\text{no}}^{\Decisionproblem}$ and, for a set $S$, we write $f: S\rightarrow \Decisionproblem$ as $f: S \rightarrow \Language_{\text{yes}}^{\Decisionproblem} \cup \Language_{\text{no}}^{\Decisionproblem}$. For two decision problems $\Decisionproblem_1$ and $\Decisionproblem_2$, we write \gls{mappingreduction} if there exists a mapping reduction from $\Decisionproblem_1$ to $\Decisionproblem_2$ and \gls{mappingreductionpoly} if furthermore the reduction is under polynomial time. We define a \textit{uniform problem instance} for $\Decisionproblem$ as a Turing machine $\texttt{Seq}: \bN \rightarrow \Decisionproblem$. Intuitively, this is a way to package a countable number of decision problems from $\Decisionproblem$ in a uniform manner.

In this paper, we consider the following two complexity classes. Recall, the complexity class recursively enumerable languages, \gls{RE}, corresponds to the class of decision problems in which there exists an algorithm that can decide all instances in $\Language_{\text{yes}}$ in finite time (but the same algorithm could potentially run forever for instances in $\Language_{\text{no}}$). We say that a language $\Language \subseteq \{0,1\}^*$ is in $\RE$ (or $\Language \in \RE$) if there exists an algorithm that can correctly decide whether $x \in \Language$ in a finite amount of time. $\RE$ is complete with respect to the halting problem. The halting problem is defined by $\Language_{\text{yes}}^{\RE} = \Language_{\text{halt}}$ and $\Language_{\text{no}}^{\RE} = \Language_{\text{nothalt}}$, with the definition of $\Language_{\text{halt}}, \Language_{\text{nothalt}}$ given below:
\begin{itemize}
	\item $\Language_{\text{halt}}$: The set of Turing machines (represented under the binary description) that halt on the empty input,
	\item $\Language_{\text{nothalt}}$: The set of Turing machines which does not halt on the empty input.
\end{itemize}
Similarly, the complexity class \gls{coRE}, or the complement of $\RE$, corresponds to the class of decision problems in which there exists an algorithm which can decide all instances in $\Language_{\text{no}}$ in finite time. We say that $\Language \in \coRE$ if there exists an algorithm that can correctly decide whether $x \not\in \Language$ in a finite amount of time (but could potentially run forever if $x \in \Language$). $\coRE$ is complete with respect to the non-halting problem. The non-halting problem is defined similarly as the halting problem but with the ``yes" and ``no" instances being swapped, or $\Language_{\text{yes}}^{\coRE} =  \Language_{\text{nothalt}}$, $\Language_{\text{no}}^{\coRE} =  \Language_{\text{halt}}$.

For a more comprehensive introduction on computability and complexity theory, we refer the reader to \cite{sipserIntroductionTheoryComputation2006}.




\subsection{Finite fields} \label{sec:fieldintro}
In this subsection, we recall some basic properties regarding finite fields of the form \gls{Finitefield}. In this paper, we always assume that $p$ is odd for finite fields of the form $\bF_{2^p}$. $\bF_{2^p}$ can always be viewed as a $p$-dimensional vector space over $\bF_2$. Unless otherwise specified, we always assume that $\bF_{2^p}$ has its basis defined over $\bF_2$ (although the basis specified might be different). Given an element $a \in \bF_{2^{p}}$ and a set of basis $\{ \hat{e}_i\}_{i \in [p]}$ for $\bF_{2^p}$, there exists a bijection map from $a$ to $\bF_2^{p}$ by 
\begin{equation} \label{eq:finitefieldbijection}
	\kappa_{\{ \hat{e}_i\}}: a \rightarrow (a_0, \cdots, a_{p-1})
\end{equation}
where $a = \sum_{i=0}^p a_i \hat{e}_i$. Since elements of $\bF_2$ can be represented as elements of $\{0,1\}$, any element of $\bF_{2^p}$ can be represented as a bit string in $\{0,1\}^p$ as long as the set of bases is specified. Recall from~\cite[Definition 2.1.80]{mullenHandbookFiniteFields2013}, every finite field $\bF_{2^{p}}$ admits a trace function over $\bF_2$, or $\text{\gls{Finitetrace}}:\bF_{2^p} \rightarrow \bF_2$ defined as
\begin{equation*}
	\Tr(a) := \sum_{i = 0}^{p} a^{2^i}.
\end{equation*}
The trace function has the properties that it is $\bF_{2}$-linear, meaning $\Tr(a + b) = \Tr(a) + \Tr(b)$ and $\Tr(cb) = c \cdot \Tr(b)$ for $a, b\in \bF_{2^p}$ and $c \in \bF_{2}$. The field $\bF_{2^p}$ is a linear space over $\bF_2$ of dimension $p$. We refer to a set of bases $\{ \hat{e}_i\}_{i \in [p]}$ for $\bF_{2^p}$ over $\bF_2$ to be \textit{self-dual} if for all $i,j \in [p]$
\begin{equation*}
	\Tr(\hat{e}_i \hat{e}_j) = \delta_{i,j}. 
\end{equation*}
Self-dual bases are known to exist for all finite fields of the form $\bF_{2^q}$~\cite[Theorem 1.9]{blakeApplicationsFiniteFields1993}. We refer to a set of bases $\{\hat{e}_i\}_{i \in [p]}$ as normal if there exists an element $a \in \bF_{2^{p}}$ such that $\hat{e}_i = a^{2^i}$. The following lemma shows that, for $p$ odd, there exists an efficient deterministic algorithm that computes a self-dual normal basis $\{\hat{e}_i\}_{i \in [p]}$ for $\bF_{2^p}$ given $p$. The deterministic algorithm also outputs a description of an efficient algorithm for computing finite field multiplication when elements of  $\bF_{2^p}$ are represented under the bijection specified by~\eqref{eq:finitefieldbijection} using the basis $\{\text{\gls{Canobasis}}\}_{i \in [p]}$.
\begin{lemma}[Computability of finite fields, Lemma 3.16 of \cite{jiMIPRE2022a}] \label{lem:compfinitefield}
	There exists a deterministic algorithm that, given an odd integer $p >0$, outputs a self-dual normal basis of $\bF_{2^p}$ over $\bF_2$ and the multiplication tables for the basis in $\poly(n)$ time.  
\end{lemma}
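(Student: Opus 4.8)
The plan is to build the required object in four deterministic polynomial-time stages, each running in time polynomial in $p$ (matching the $\poly(n)$ bound of the statement, where $n$ is the input size in the paper's convention).

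\emph{Constructing the field and a normal basis.} First I would fix a concrete model of $\bF_{2^p}$ by invoking a deterministic polynomial-time algorithm (such as Shoup's) that produces an irreducible polynomial $f\in\bF_2[x]$ of degree $p$, and set $\bF_{2^p}=\bF_2[x]/(f)$; in this model, arithmetic, the Frobenius $z\mapsto z^2$, the $\bF_2$-linear trace $\Tr$ of \Cref{sec:fieldintro}, and the solution of size-$p$ linear systems over $\bF_2$ are all computable in $\poly(p)$ time. Next, because $p$ is odd we have $\gcd(p,2)=1$, so $x^p-1$ is squarefree over $\bF_2$; I would factor it into distinct monic irreducibles, which is deterministic and polynomial-time since the characteristic is the fixed prime $2$ (Berlekamp). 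The Frobenius makes $\bF_{2^p}$ a module over $R:=\bF_2[x]/(x^p-1)\cong\prod_j\bF_2[x]/(f_j)$, and the normal basis theorem says this module is free of rank one; a generator $\alpha$ is precisely a normal element, and a deterministic polynomial-time construction of one (working locally in each factor and gluing by the Chinese remainder theorem) is classical, see e.g.\ L\"uneburg's algorithm or standard finite-field references.

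\emph{Turning a normal basis into a self-dual one.} Write $\hat e_i=\alpha^{2^i}$. Because $\Tr$ is Frobenius-invariant, the Gram matrix $G_{ij}=\Tr(\hat e_i\hat e_j)=\Tr(\alpha^{2^i+2^j})$ depends only on $j-i\bmod p$, so it is a symmetric circulant over $\bF_2$; identify it with a unit $g\in R$ (a unit because the trace form of the separable extension $\bF_{2^p}/\bF_2$ is non-degenerate), and note $g$ is fixed by the involution $\iota\colon x\mapsto x^{-1}$. Every normal basis of the form $\{\beta^{2^i}\}$ corresponds to a unit $h\in R^\times$ giving the $\alpha$-normal coordinates of $\beta$, and its Gram matrix is the circulant $\iota(h)\,g\,h$; so I need $h\in R^\times$ with $h\,\iota(h)=g^{-1}$. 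Since $x^p-1$ is squarefree, $\iota$ acts on the factors $R_j=\bF_2[x]/(f_j)$ either by swapping a pair (then one coordinate is free and the other is forced by $\iota$) or as a field automorphism $\tau$ of order $\le 2$ of a single factor $R_j$ (then the local equation is $h_j^2=u_j$, solved by the bijective Frobenius, or a relative-norm equation $N_{R_j/R_j^{\tau}}(h_j)=u_j$ with $d=\deg f_j$ even, which has a solution by surjectivity of the norm and can be found by factoring $X^{1+2^{d/2}}-u_j$ over $R_j$). Gluing the local solutions by CRT yields $h$, hence $\beta$, and $\hat e_i:=\beta^{2^i}$ is a self-dual normal basis; in particular this reproves existence of such bases over $\bF_{2^p}$ (cf.\ \cite{blakeApplicationsFiniteFields1993}).

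\emph{Multiplication tables, and the main obstacle.} Finally I would output the multiplication data. Since the basis is normal, $\hat e_i\hat e_j=(\hat e_0\,\hat e_{j-i\bmod p})^{2^i}$ and squaring cyclically permutes normal-basis coordinates, so the whole multiplication rule is determined by the $p$ structure vectors $\hat e_0\hat e_k=\sum_m t_{k,m}\hat e_m$; each $t_{k,m}$ is computed by one product in the polynomial model of the first stage followed by a change of basis through the invertible matrix $[\hat e_0\mid\cdots\mid\hat e_{p-1}]$, all in $\poly(p)$ time, and I output these vectors together with the $O(\poly(p))$-size description of the resulting multiplication routine. I expect the only genuinely non-routine ingredients to be the two deterministic constructions above --- producing a normal element and then solving the circulant norm equation $h\,\iota(h)=g^{-1}$; both hinge on deterministic factorization of $x^p-1$ over $\bF_2$ (available because the characteristic is a fixed prime) together with solving square-root/norm equations in the local factors, and it is exactly the hypothesis that $p$ is odd, hence $x^p-1$ squarefree, that makes the CRT decomposition clean. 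Since this statement is quoted verbatim from \cite{jiMIPRE2022a}, one may alternatively simply invoke their proof.
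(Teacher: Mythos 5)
The paper you are working in does not prove this lemma at all --- it is quoted as Lemma~3.16 of \cite{jiMIPRE2022a} and used as a black box --- so there is no internal proof to compare your proposal against, only the external reference. Evaluated on its own terms, your architecture is sound: construct a normal element $\alpha$ deterministically, read off the Gram circulant $g\in R=\bF_2[x]/(x^p-1)$, observe that changing the normal generator by $h\in R^\times$ replaces the Gram form by $\iota(h)\,g\,h$, and solve $h\,\iota(h)=g^{-1}$ locally in the factors of the squarefree ring $R$. This is a correct reduction and, as you note, it reproves existence along the way; the appeals to Shoup's deterministic irreducibility construction, Berlekamp over the fixed prime field $\bF_2$, and the cyclic-module characterization of normal elements are all standard and correctly invoked.

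There is, however, one step that does not run in polynomial time as written, and it is the crux of the construction. In the case of a self-reciprocal factor $R_j\cong\bF_{2^d}$ with $d>1$ even, you propose to find a solution of the relative-norm equation $N_{R_j/R_j^\tau}(h_j)=u_j$ by \emph{factoring} $X^{1+2^{d/2}}-u_j$ over $R_j$. That polynomial has degree $1+2^{d/2}$, which is exponential in $d$ and hence in $p$ (a self-reciprocal irreducible factor of $x^p-1$ can have degree comparable to $p$), so one cannot even write it down inside the target time bound. The fix is easy and makes the whole stage elementary: with $e=d/2$, the element $u_j$ lies in $R_j^\tau=\bF_{2^e}$, so its order divides $2^e-1$, and $N(h)=h^{2^e+1}$. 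Taking $h_j:=u_j^{2^{e-1}}$ gives $N(h_j)=u_j^{2^{e-1}(2^e+1)}$, and $2^{e-1}(2^e+1)\equiv 2\cdot 2^{e-1}=2^e\equiv 1\pmod{2^e-1}$, so $N(h_j)=u_j$; this is one repeated squaring in $R_j$, hence $\poly(p)$. Together with $h_j=u_j$ for the $f_j=x-1$ factor (where $R_j=\bF_2$) and the free/forced assignment on $\iota$-swapped pairs, the local system is solved explicitly and the rest of your argument goes through unchanged.
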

In the lemma above, a multiplication table for the set of basis $\{ \hat{e}_i\}_{i \in [p]}$ is the unique matrix representation $\{M_{\hat{e}_i}\}_{i \in [p]} \subseteq \cM_{k}(\bF_2)$ such that 
\begin{equation*}
	M_{\hat{e}_i} \kappa_{\{ \hat{e}_i\}}(a) = \kappa(\hat{e}_i a) 
\end{equation*}
for all $i \in [p]$ and $a \in \bF_{2^p}$. In this paper, we refer to the set of self-dual basis generated by~\Cref{lem:compfinitefield} as the \textit{canonical basis} of $\bF_{2^p}$. We use \gls{Canobasismap} to denote the bijection given in~\eqref{eq:finitefieldbijection} for the canonical basis. We represent elements $x \in \bF_{2^p}$ as $\kappa(x) \in \{0,1\}^p$ (where we identify elements of $\bF_2$ as $\{0,1\}$) in this paper and we refer to this as the \textit{canonical representation} for $\bF_{2^p}$. In this paper, we represent elements of any element $x \in \bF_{2^p}$ as elements of $\bF_{2^p}$ through the bijection map $\kappa$. Due to~\Cref{lem:compfinitefield}, for elements represented under the canonical representation, addition, multiplication, inversion and computing the trace can all be computed in $\poly(p)$ time (see~\cite[Lemma 3.18]{jiMIPRE2022a} for more details). 


Let $m \in \bN$, any elements in $\bF_{2^p}^m$ can be represented as elements of $\{0,1\}^{mp}$ through the canonical representation of $\bF_{2^p}$, and we refer to this as the Canonical representation for $\bF_{2^p}^m$. For clarity, we use $\{\hat{e}_i\}_{i \in [p\cdot m]}$ to denote the canonical basis for $\bF_{2^p}^m$, and $\{e_i\}_{i \in [m]}$ to denote the element $(0_0,\cdots, 1_i, \cdots, 0_m)$ in $\bF_{2^p}^m$ (where $1$ is the identity element in $\bF_{2^p}$). To abuse notation, we also use $\kappa$ to denote the map from $\bF_{2^p}^m \rightarrow \{0,1\}^{p \cdot m}$ where for $s= (s_0, \cdots, s_m) \in \bF_{2^p}^m$ 
\begin{equation} \label{eq:canorepfinitefield}
	\kappa(s) := (\kappa(s_0), \cdots, \kappa(s_m)),
\end{equation}
where $\kappa$ is the bijection given in~\eqref{eq:finitefieldbijection} for the canonical basis $\{\hat{e}_i\}_{i \in [p\cdot m]}$.  When describing elements of $\bF_{2^p}^m$, we always assume that the element is represented under the canonical representation. We refer to a subspace as a \textit{canonical basis subspace} if it is the span of some subsets of the canonical basis. We remark that this is the same definition as the ``register subspaces" in~\cite{jiMIPRE2022a}. We use \gls{dimspace} to denote the dimension of the subspace $V \subseteq \bF_{2^p}^m$. For any subspace $W \subseteq V \subseteq  \bF_{2^p}^m$, we define the orthogonal subspace of $W$ over $V$ as the space
\begin{equation*}
\text{\gls{Orthogonalsubspace}} :=  \{ u \in V: u \cdot w = 0\text{ for all } v \in W \}.
\end{equation*}
Unless otherwise stated, $W^{\prep}$ defaults to the orthogonal subspace over $\bF_{2^p}^m$. 

For subspaces $V_1, V_2 \subseteq V  \subseteq \bF_{2^p}^m$, we say that the two subspaces are disjoint if $V_1 \cap V_2 = \{0\}$. For any $k \leq l$, we refer to a set of pairwise disjoint partition of subspace $\{V_j\}_{j \in [k]}$ of $V$ as a \textit{disjoint partition} of $V$ if $\oplus_{j \in [k]} V_j = V$. For any disjoint partition of subspaces $\{V_i\}_{i \in [k]}$ of $V \subseteq \bF_{2^\seedlength}^m$ and $0 \leq i < k$ we write 
\begin{equation*}
	\text{\gls{Disjointpartsubspace}} := \bigoplus_{j \in [i]} V_{j},  \qquad	V_{> i} := \bigoplus_{i < j < k} V_{j}
\end{equation*}
and we use the convention that $V_{< i+1} = V_{\leq i}$,  $V_{> i} = V_{\geq i+1}$ and $V_{< 0 } = \{0\}$. Given $\{V_j\}_{j \in [k]}$, a disjoint partition of $V$ and $v \in V$, there exists a unique decomposition $s_j \in V_j$ for each $j \in[k]$ such that $s = \sum_{j \in [k]} s_j$. 

Given subspace $U,W \subseteq V \subseteq \bF_{2^p}^m$, we say $(U,W)$ forms a pair of complementary subspaces over $V$ if $U$ and $W$ form a disjoint partition of $V$ and $U + W = V$, and we say $(U,W)$ forms a pair of complementary subspace if it forms a complementary subspaces over $\bF_{2^p}^m$. Give $v \in V$ and a pair of complementary subspace over $V$, there exists a unique decomposition $v = u + w$ such that $u \in U$ and $v \in W$. There could potentially be multiple subspace of $V$ which can be used to form a pair complementary subspaces with $W$. For example, for $V  = \bF_{2}^2$ and $W = \text{span}{(1,1)}$, the subspace $\text{span}\{(1,0)\}$ and $ \text{span}\{(0,1)\}$ both forms a pair of complementary subspace of $V$ with $W$. 

We wish to define a notion of a unique ``canonical complement" in this paper. For a canonical basis subspace $V$ and $W \subseteq V$, we define the \textit{canonical complement} as the following: Let $\{\hat{e}_0, \cdots, \hat{e}_{\dim{V}-1} \}$ be the canonical basis element used to define $V$, and let $\{w_1, \cdots , w_{\dim(W)}\}$ be a set of linearly independent vectors in $W$.  Write each of the vector as $w_i = \sum_{j \in [\dim(V)]} a_{i,j} \hat{e}_j$ for some $ a_{i,j} \in \bF_{2^p}^m$ and run the Gaussian elimination on the $\dim(V)$ by $\dim(W)$ matrix defined by $(a_{i,j})$. Let $I$ be the set of $\dim(V)$ column with leading 1 entries in the resulting matrix, we define the \textit{canonical complement} over $V$, or \gls{Canocomplement} to be the subspace $\text{span}\{ \hat{e}_j| j \not\in I\}$. Unless otherwise stated, $W^{C}$ defaults to the canonical complement of $\bF_{2^p}^m$. The canonical complement is unique and can be computed efficiently in $\poly$ time. We remark that this is the same definition for canonical complement given in~\cite[Definition 3.6]{jiMIPRE2022a}. 

We recall the following lemma regarding the subspaces of $\bF_{2^p}^m$. 
\begin{lemma}[Lemma 3.14 of~\cite{jiMIPRE2022a}]\label{lem:finitefielddecomp}
	Let $\{\hat{e}_i\}$ be the canonical basis for $\bF_{2^p}$ over $\bF_2$ and let $V$ be a subspace of $\bF_{2^p}^m$ with linear independent basis $\{ b_1 \cdots , b_t\} \subseteq \bF_{q^k}^n$. Then the following holds:
	\begin{itemize}
		\item $\kappa(V)$ is a subspace of $\bF_2^{mp}$.
		\item $\{ \kappa(\hat{e}_i b_j)  \}_{(i,j) \in [p] \times [n] }$ forms a set of linearly independent basis of $\kappa(V)$ over $\bF_2$.
		\item Let $V,W$ be complementary subspaces of of $\bF_{2^p}^m$. Then $\kappa(V)$ and $\kappa(W)$ are complementary subspaces of $\bF_{2}^{pm}$. Furthermore, for all $a \in \bF_{2^p}^m$ with $a = a^V + a^W$, $a^V \in V$ and $a^W \in W$, we have $\kappa(a^V) \in \kappa(V)$ and $\kappa(a^W)\in \kappa(W)$. 
	\end{itemize}
\end{lemma}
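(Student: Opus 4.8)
The plan is to reduce every part of the lemma to one observation: the map $\kappa : \bF_{2^p}^m \to \bF_2^{mp}$ is an $\bF_2$-linear bijection. Indeed, having fixed the canonical basis $\{\hat e_i\}_{i\in[p]}$ of $\bF_{2^p}$ over $\bF_2$, the single-coordinate map $a\mapsto(a_0,\dots,a_{p-1})$ with $a=\sum_i a_i\hat e_i$ is by definition an $\bF_2$-linear isomorphism $\bF_{2^p}\cong\bF_2^p$, and applying it in each of the $m$ slots (which is exactly the coordinatewise definition of $\kappa$) gives an $\bF_2$-linear isomorphism $\bF_{2^p}^m\cong\bF_2^{mp}$. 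Granting this, the first bullet is immediate: the image of an $\bF_2$-subspace under an $\bF_2$-linear map is an $\bF_2$-subspace, so $\kappa(V)\subseteq\bF_2^{mp}$ is a subspace.

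For the second bullet I would unwind the two layers of coordinates. Every $v\in V$ has a unique expansion $v=\sum_j c_j b_j$ with $c_j\in\bF_{2^p}$, since $\{b_j\}$ is an $\bF_{2^p}$-basis of $V$, and each $c_j$ has a unique expansion $c_j=\sum_i a_{i,j}\hat e_i$ with $a_{i,j}\in\bF_2$; hence $v=\sum_{i,j}a_{i,j}\,\hat e_i b_j$, and applying the $\bF_2$-linear map $\kappa$ yields $\kappa(v)=\sum_{i,j}a_{i,j}\,\kappa(\hat e_i b_j)$, so the family $\{\kappa(\hat e_i b_j)\}$ spans $\kappa(V)$. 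For independence, suppose $\sum_{i,j}a_{i,j}\kappa(\hat e_i b_j)=0$ with $a_{i,j}\in\bF_2$; injectivity of $\kappa$ gives $\sum_j\bigl(\sum_i a_{i,j}\hat e_i\bigr)b_j=0$ in $\bF_{2^p}^m$, so $\bF_{2^p}$-independence of $\{b_j\}$ forces $\sum_i a_{i,j}\hat e_i=0$ for each $j$, and then $\bF_2$-independence of $\{\hat e_i\}$ forces every $a_{i,j}=0$. As a byproduct this records $\dim_{\bF_2}\kappa(V)=p\cdot t$, where $t$ is the size of the chosen basis of $V$.

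For the third bullet, $V,W$ being complementary in $\bF_{2^p}^m$ means $V\cap W=\{0\}$ and $V+W=\bF_{2^p}^m$; since $\kappa$ is an $\bF_2$-linear bijection it carries intersections to intersections and sums to sums, so $\kappa(V)\cap\kappa(W)=\kappa(\{0\})=\{0\}$ and $\kappa(V)+\kappa(W)=\kappa(\bF_{2^p}^m)=\bF_2^{pm}$, i.e. $\kappa(V)$ and $\kappa(W)$ are complementary in $\bF_2^{pm}$. Finally, if $a=a^V+a^W$ is the unique decomposition with $a^V\in V$ and $a^W\in W$, then $\bF_2$-linearity of $\kappa$ gives $\kappa(a)=\kappa(a^V)+\kappa(a^W)$ with $\kappa(a^V)\in\kappa(V)$ and $\kappa(a^W)\in\kappa(W)$, which is the asserted compatibility with the direct-sum decomposition.

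I do not expect a genuine obstacle: the whole content is restriction-of-scalars bookkeeping, concentrated in the second bullet, and the only point needing a moment's care is keeping the two independence conditions separate ($\bF_{2^p}$-independence of the $b_j$ versus $\bF_2$-independence of the $\hat e_i$) when concluding $a_{i,j}=0$. A purely cosmetic issue is the mismatched notation in the quoted statement (the basis is written $\{b_1,\dots,b_t\}$ but indexed over $[n]$, and a stray $\bF_{q^k}^n$ appears in place of $\bF_{2^p}^m$), which I would reconcile by working with $t$ and $\bF_{2^p}^m$ consistently throughout.
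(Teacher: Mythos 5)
Your proof is correct. Note, however, that the paper does not actually supply a proof of this lemma: it is imported verbatim as Lemma~3.14 of the cited reference, and no argument is given in the present text. Your restriction-of-scalars argument---treating $\kappa$ as the $\bF_2$-linear bijection induced coordinatewise by the coordinate map for the fixed basis $\{\hat e_i\}$, and reducing every bullet to standard facts about linear isomorphisms---is the standard proof, and it is essentially the one in the source you would look up; the only step deserving the attention you gave it is keeping the two layers of independence (over $\bF_{2^p}$ for the $b_j$, over $\bF_2$ for the $\hat e_i$) separate in the second bullet. Your observation about the typographical mismatch ($\{b_1,\dots,b_t\}$ indexed by $[n]$, and the stray $\bF_{q^k}^n$ in place of $\bF_{2^p}^m$) is also correct.
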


Given $(v_0, \cdots, v_{n-1}) \in \bF_{2^{ \seedlength}}^m$ and $j \in [n]$, we use $\pi_{> j}^m$ to denote the function which zeros out the first $j$ entries of $\bF_{2^{\seedlength}}^m$ and we use $\pi_{\leq j}^m$ to denote the function which zeros out everything except for the first $j$ entries of $\bF_{2^{ \seedlength}}^m$. In other words 
\begin{align}
\text{\gls{Zeromap}}(v_0, \cdots, v_{m-1}) &= (0, \cdots, 0, v_j, \cdots, v_{m-1}) \label{eq:defzeromapfield} \\
\nonumber \pi_{\leq j}^m(v_0, \cdots, v_{m-1}) &= (v_0, \cdots, v_{j-1}, 0, \cdots, 0).
\end{align}

We remark that this is a different map define in~\eqref{eq:defzeromapstring}, as the entry specified here are over the finite field elements $\bF_{2^{ \seedlength}}$ instead of the $\{0,1\}$ string. We further remark that $\kappa^{-1} \circ \pi_{\leq j} \circ \kappa$ and $\bF_{2^\seedlength}^{m}$ are different maps, where the first map are usually used for treating an element from $\bF_{2^p}^{m}$ as a string and the second one are usually used for treating an element from $\bF_{2^p}^{m}$ as a vector space over $\bF_{2^p}$. 

In this paper, we work with linear functions over canonical basis subspace. For a linear function $\decidertypedfunction$ mapping from $V \rightarrow V$, we use \gls{KerLinfun} to denote the subspace of $V$ such that $\decidertypedfunction(a) = 0$ for all $a \in \ker(\decidertypedfunction)$. 

For a linear function $\decidertypedfunction: V \rightarrow V$ over a canonical basis subspace $V \subseteq \bF_{2^p}^m$, we define the linear function $\text{\gls{ProjCompsubspace}}: V \rightarrow V$ as the projection into the subspace of $\left(\ker{(\decidertypedfunction)}^{\bot}\right)^{C}$. To be more precise, for $v = v_1 + v_2 \in V$ with $v_1 \in \ker{(\decidertypedfunction)}^{\bot}$ and $v_2 \in \left(\ker{(\decidertypedfunction)}^{\bot}\right)^{C}$, $\decidertypedfunction(v) = v_2$. We remark that this is the same as the linear map defined in~\cite[Definition 3.11]{jiMIPRE2022a}. 

By a simple calculation, we see that
\begin{equation*}
	\ker{(L)}^{\bot}=  \ker{\left(\decidertypedfunction^{\bot}\right)}. 
\end{equation*}

\subsection{Affine lines and polynomials over a finite field}
In this subsection, we recall some properties related to affine lines and low-degree polynomials over a finite field. In this paper, we refer to an affine line $\textbf{l}$ over $\bF_{2^p}^m$ as the set of the form
\begin{equation*}
	\{ u + t \cdot v: t \in \bF_{2^p}\}
\end{equation*}
for $u, v \in \bF_{2^p}^m$. Given a line $\textbf{l}$, we wish to define a unique $u_\textbf{l}, v_\textbf{l} \in \bF_{2^p}^m$ which can be used to represent $\textbf{l}$. Given $u \in V$, we define the linear function $\text{Null}_{v}^{\text{LN}}: \bF_{2^p}^m \rightarrow \bF_{2^p}^m $ as 
\begin{equation*}
	\Nullline{v}{u} := u_v,
\end{equation*}

where $u = u_v + u_{v}^{C}$ is the unique decomposition such that $u_v \in \text{span}({v})$ and $u_{v^{C}} \in\text{span}({v})^{C}$. We remark that $\text{Null}_{v}^{\text{LN}}$ is the same as~\cite[Definition  7.3]{jiMIPRE2022a}. We define the canonical representation of an affine line as the following:
\begin{definition}[Canonical representation of an affine line]\label{def:canorepline}
	Let $p \in \bN$ be an odd integer and $m \in \bN$, and let $l = \{ u + tv: t \in \bF_{2^p}\}$ be an affine line passing through $\bF_{2^p}^m$. The canonical representation of $l$ is defined as 
	\begin{equation*}
		\text{\gls{Canolinerep}} := (v, \Nullline{v}{u}) \in \bF_{2^p}^{2m}.
	\end{equation*}
\end{definition}
In the definition above, we see that $\Nullline{v}{u} =  \Nullline{v}{u'}$ for $u, u' \in l$ as $u' = u + t \cdot v$ any scalar $t \in \bF_{2^p}$. Hence, the canonical representation is independent of the initial point chosen. We remark that this is the same definition given in~\cite[Definition 7.3]{jiMIPRE2022a}.

Given a function $\textbf{f}: \bF_{2^p}^m \rightarrow \bF_{2^p}$, we say $\textbf{f}$ is an \textit{$m-$variant polynomial} over $\bF_{2^p}$ if $\textbf{f}$ is of the form
\begin{equation*}
	\textbf{f}(x_1, \cdots, x_m) = \sum_{(i_1, \cdots, i_m) \in [2^p]^m} \alpha_{i_1, \cdots, i_m} x_1^{i_1} \cdots x_m^{i_m},
\end{equation*}
where each of the $\alpha_{i_1, \cdots, i_m}$ are some coefficients in $\bF_{2^p}$. Furthermore, we say that $\textbf{f}$ has an \textit{individual degree} of $d \in \bN$ if the sum above is defined over $[d]^m$ instead (in other words, $\alpha_{i_1, \cdots, i_m} =0$ if there exists a $j \in [m]$ such that $i_j > d$). We use \gls{LDpolyset} to denote the set of polynomials $\textbf{g}: \bF_{2^p}^m \rightarrow\bF_{2^p}$ with individual degree of at most $d$. We recall the following lemma regarding the distance between two distinct low-individual degree polynomials. 
\begin{lemma}[Schwartz-Zippel~\cite{schwartzFastProbabilisticAlgorithms1980,zippelProbabilisticAlgorithmsSparse1979}] \label{lem:Schwartz_Zipple}
	Let $\textbf{f},\textbf{g}\in \Idpoly(p,m,d)$ be two different $m$-variant polynomials with individual degree of at most $c$, then 
	\begin{equation*}
		\Pr_{u \sim \bF_{2^p}^m} [\textbf{f}(u) = \textbf{g}(u)] \leq \frac{md}{2^p}. 
	\end{equation*}
\end{lemma}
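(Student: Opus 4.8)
The plan is to run the textbook induction on the number of variables $m$, applied to the difference polynomial. First I would set $\textbf{h} := \textbf{f} - \textbf{g}$, which lies in $\Idpoly(p,m,d)$ and is nonzero precisely because $\textbf{f} \neq \textbf{g}$; since $\{u : \textbf{f}(u) = \textbf{g}(u)\} = \{u : \textbf{h}(u) = 0\}$, it suffices to prove that for \emph{any} nonzero $\textbf{h} \in \Idpoly(p,m,d)$ one has $\Pr_{u \sim \bF_{2^p}^m}[\textbf{h}(u) = 0] \leq md/2^p$ (here I write $d$ for the common individual-degree bound; the appearance of $c$ in the statement is the same quantity). The base case $m = 1$ is the familiar fact that a nonzero univariate polynomial over the field $\bF_{2^p}$ of degree at most $d$ has at most $d$ roots, so a uniformly random element of $\bF_{2^p}$ is a root with probability at most $d/2^p$.

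For the inductive step I would assume the bound for $m-1$ variables and group $\textbf{h}$ by powers of its last variable: write $\textbf{h}(x_1,\dots,x_m) = \sum_{i=0}^{d} x_m^{\,i}\,\textbf{h}_i(x_1,\dots,x_{m-1})$, where the individual-degree hypothesis guarantees each coefficient polynomial $\textbf{h}_i$ lies in $\Idpoly(p,m-1,d)$ and that the $x_m$-degree is at most $d$. Since $\textbf{h} \neq 0$, there is a largest index $k$ with $\textbf{h}_k$ not identically zero. Now sample $u = (u', u_m)$ with $u' \sim \bF_{2^p}^{m-1}$ and $u_m \sim \bF_{2^p}$ independently, and split on whether $\textbf{h}_k(u') = 0$. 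By the inductive hypothesis applied to the nonzero polynomial $\textbf{h}_k$, this event has probability at most $(m-1)d/2^p$. Conditioned on $\textbf{h}_k(u') \neq 0$, the map $t \mapsto \textbf{h}(u', t) = \sum_{i=0}^{k} t^{i}\,\textbf{h}_i(u')$ is a nonzero univariate polynomial in $t$ of degree at most $d$ (its $t^k$-coefficient is $\textbf{h}_k(u') \neq 0$), so it vanishes at the uniformly random $u_m$ with probability at most $d/2^p$. A union bound over the two cases then yields
\[
\Pr_{u \sim \bF_{2^p}^m}[\textbf{h}(u) = 0] \;\leq\; \frac{(m-1)d}{2^p} + \frac{d}{2^p} \;=\; \frac{md}{2^p},
\]
which closes the induction and hence proves the lemma.

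There is essentially no serious obstacle here; this is a standard argument. The only points that deserve a moment's care are (i) verifying that the conditioning/union-bound decomposition $\Pr[\textbf{h}(u)=0] \le \Pr[\textbf{h}_k(u')=0] + \Pr[\textbf{h}(u)=0 \mid \textbf{h}_k(u')\neq 0]$ is set up correctly using the independence of $u'$ and $u_m$, and (ii) recording that the individual-degree constraint is exactly what lets us invoke the inductive hypothesis on $\textbf{h}_k$ (as an element of $\Idpoly(p,m-1,d)$) and bound the $x_m$-degree by $d$ rather than by $md$.
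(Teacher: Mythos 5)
Your proof is correct. Note, however, that the paper does not actually prove this lemma: it is stated as a black-box import from the Schwartz--Zippel literature, so there is no in-paper argument to compare against. Your inductive argument on the number of variables (factoring out the highest power of $x_m$, applying the inductive hypothesis to the leading coefficient polynomial $\textbf{h}_k \in \Idpoly(p,m-1,d)$, and union-bounding against the univariate root count) is the standard derivation and directly yields the individual-degree form $md/2^p$ used in the paper; you also correctly flag that the ``$c$'' in the lemma statement is a typo for $d$.
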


For a more comprehensive introduction for finite fields, we refer the readers to~\cite{mullenHandbookFiniteFields2013}. 

\subsection{Generalized Reed-Muller code } \label{sec:RMcode}
Finally, we recall the generalized Reed-Muller code in this subsection. Recall from~\cite{jiQuantumSoundnessTesting2022}, a linear $[n, c, d]_{\bF_{q}}$ code is a set $\mathfrak{C}$ of functions $g: [p] \rightarrow \bF_{q}$ with size $|\mathfrak{C}| = q^c$ that is closed under linear combination, such that for any two distinct $g \neq g'$, the number of coordinates $i \in [n]$ such that $g(i) \neq g'(i)$ is at least $d$. Given $\mathfrak{C}$, a linear $[n, c, d]_{\bF_{q}}$ code, the tensor code $\mathfrak{C}^{\otimes m}$ is the set of all functions $f:[n]^m \rightarrow \bF_{q}$ such that the restriction $f|_{\textbf{l}_j}$ to any axis-parallel line $\textbf{l}_j$ is a codeword in $\mathfrak{C}$, where for $j \in m$, an axis parallel line $\textbf{l}_j$ is defined as
\begin{equation*}
	\textbf{l}_j = \{(s_0, \cdots, s_{j-1}, x, s_{j+1}, \cdots, s_{m-1}): x \in \bF_{q}\}
\end{equation*}
Given constants $p, c \in \bN$, the set $\mathfrak{C}$ consists of all degree $c$ polynomials $\textbf{f}: \bF_{2^p} \rightarrow \bF_{2^p}$ is a $[2^p, c, c]_{\bF_{2^p}}$ code by the Schwartz-Zippel lemma. We further see that the set of low-individual degree polynomials $\textbf{f}: \bF_{2^p}^m \rightarrow \bF_{2^p}$ with individual degree at most $c$ is a tensor code $\mathfrak{C}^{\otimes}$ (since $\textbf{f}|_{\textbf{l}_j}$ always gives a 1-variant polynomial of degree at most $c$). 

Low-individual degree polynomials can also be used to define an error correction code with good distance properties in the context of the \textit{generalized Reed-Muller code}. Given a string $s \in \{0,1\}^m$, we define the \textit{indicator polynomial} for $m$ over the field $\bF_{2^p}$ as the $m-$variant polynomial with $\indpoly{y}: \bF_q^m \rightarrow \bF_q$ as
\begin{equation*}
	\indpoly{m,a}(x) := \prod_{i: a_i = 0} x_i \cdot \prod_{i: a_i = 1} (1 - x_i).
\end{equation*} 
where here, we identify $\{0,1\}^m$ as elements of $\bF_{2^p}^m$. The indicator polynomial has individual degree of 1 and has the properties that for all $s \in \{0,1\}^m \subseteq \bF_{2^p}^m$, $\indpoly{a}(s) = 0$ except when $a = s$. 

Let $M = 2^m$, for any elements $b \in \{0,1\}^M$, we define the generalized Reed-Muller encoding of $a$ to be the polynomial
\begin{equation} \label{eq:REM}
	\text{\gls{RMcode}}(x) := \sum_{y \in \{0,1\}^m} b_{\binaryinv{y}} \indpoly{m,y}(x),
\end{equation}
where recalled, $\binaryinv{\cdot}$ is the map which maps the corresponding binary representation back to an integer. Since each $\indpoly{m,y}(x)$ has an individual degree of $1$, $\ReedMcode_b$ also has an individual degree of $1$. For any $y \in \{0,1\}^m \subseteq \bF_q^m$, evaluating $\ReedMcode_b$ with $y$ returns the $\binaryinv{y}$th coordinate of the string $b$.

\section{Quantum preliminaries}

\subsection{Von Neumann algebras}  \label{sec:introVNA}

We use the language of tracial von Neumann algebras to discuss non-local games in this paper. We introduce some of the necessary background needed for the main body of this paper. This section follows a similar structure as~\cite[Section 2.3]{linTracialEmbeddableStrategies2024}.

Let \gls{Hilbertspace} be a Hilbert space, and \gls{BoundHS} denote the set of bounded operators on $\cH$. Given $\ket{\psi} \in \cH$, we use \gls{Vectornorm} to denote the vector norm. Recall, a (concrete, unital) $C$*-algebra $\alicealg \subseteq \bofh$ is a normed $*$-algebra with $\cI_{\cH} = \cI_{\alicealg}$ and closed in the norm topology. We use $\alicealg^{+}$ to denote the set of positive elements within $\alicealg$ (i.e.~elements of the form $s^*s$ for $s \in \alicealg$).

A \textit{state} on a $C$*-algebra is a linear function $\psi: \alicealg \rightarrow \bC$, which is \textit{positive}, meaning that $\psi(a) \geq 0$ for all $a \in \text{\gls{VNApositive}}$ and satisfies $\psi(\cI) = 1$. We use \gls{Statenorm} to denote the operator norm of $\psi$, or 
\begin{equation*}
	\| \psi \| := \sup\{ \psi(z) | z \in \alicealg^{+}\},
\end{equation*}
and we write $\| \psi\|_{\alicealg}$ in order to emphasize the underlying algebra that the norm is taken over. A state $\psi$ on $\alicealg$ is said to be \textit{faithful} if, for all $a \in \alicealg^{+}$, we have $\psi(a) = 0$ if and only if $a = 0$. Furthermore, we say that the  state is a \textit{tracial state} if $\psi(st) = \psi(ts)$ for all $s,t \in \alicealg$. The famed GNS representation theorem states that every state $\psi$ on a $C$*-algebra $\alicealg$ induces a \textit{representation} (a $*$-homomorphism to some $\BofH$) $\pi_{\psi}$ onto $\cB(\cH_{\psi})$, and a unit vector $\ket{\psi} \in \cH_{\psi}$ such that $\psi(z) = \braket{\psi|\pi_{\psi}(z)|\psi}$ for all $z \in \alicealg$, and $\overline{\alicealg \ket{\psi}} = \cH$ (we refer to \cite[Theorem 4.5.2]{kadisonFundamentalsTheoryOperator1997} for more details about the GNS representation). This representation is specified with the triplet $(\pi_{\psi}, \cH_{\psi}, \ket{\psi})$.

An element $P \in \alicealg$ is a projector if $P^2 = P$. An element $V \in \alicealg$ is a partial isometry if and only if $VV^*$ and $V^*V$ are both a projector, and unitary if furthermore $VV^* = V^*V = \cI_{\alicealg}$. For projector $P, Q \in \alicealg$, we say the two projectors are equivalent if there exist some partial isometry $V \in \alicealg$ such that $VV^* = P$ and $V^*V = Q$. We use $\cU(\alicealg)$ to denote the set of unitary elements ($A^* A = A^* A= \cI$) in $\alicealg$ in this paper. 

For $\alicealg \subseteq \BofH$, the \textit{commutant} \text{\gls{VNAcommutant}} of $\alicealg$ is defined to be the set of all elements which commute with $\alicealg$, or $\alicealg' := \{z \in \BofH : zw = wz \text{ for all } w \in \alicealg\}$. A $C$*-algebra $\text{\gls{VNA}} \subseteq \BofH$ is said to be a \textit{von Neumann algebra} if $\alicealg = \alicealg''$. By the von Neumann bicommutant theorem, an equivalent definition for von Neumann algebra  $\alicealg$ is for $\alicealg$ to be closed in the weak $*$-topology. Since the weak $*$-topology is more coarse than the norm topology, not every $C$*-algebra is a von Neumann algebra. Unless stated otherwise, $\alicealg$ is assumed to be a concrete von Neumann algebra for the remainder of this paper. 

A state $\psi$ on $\alicealg$ is said to be \textit{normal} if for all bounded increasing nets $\{A_{\lambda}\} \subseteq \alicealg^{+}$ with $A = \sup_{\lambda} \{ A_{\lambda} \}$, we have $\psi(A) = \lim \psi(A_{\lambda})$. 


\paragraph{Tracial von Neumann algebras.}  We refer to a von Neumann algebra to be \textit{tracial} if it admits a faithful normal tracial state \gls{VNAtrace}, and we use $(\alicealg, \tau)$ to emphasize the existence of $\tau$. Whenever $\alicealg$ is finite-dimensional, we use \gls{VNAtraceFD} to denote the trace function for clearly\footnote{We remark that the notation for the trace function for a finite-dimensional matrix is the same as the trace function for a finite field. In the context of the $\Tr$ function in this paper, we typically use lower case letter for finite field element and upper case letter for a matrix element.}. The faithful trace $\tau$ naturally gives the notion of a ``Hilbert-Schmidt'' norm on $\alicealg$, defined to be
\begin{equation*}
	\text{\gls{HSnorm}} := \sqrt{\tau(A^* A)}.
\end{equation*}
Recall that the \textit{standard form} for a tracial von Neumann algebra $(\alicealg, \tau)$ is the GNS representation triplet $(\chi_{\tau}, \text{\gls{SDHilbertspace}}, \text{\gls{SDTracialstate}})$ of $\alicealg$ for the tracial state $\tau$, where $\cL^2(\alicealg, \tau)$ denotes the Hilbert space for the representation. Note that the standard form for a tracial von Neumann algebra is unique up to canonical isomorphism \cite[Proposition 7.5.1]{anantharamanIntroductionII1Factors2010}. For simplicity of notation, if $(\alicealg, \tau)$ is in standard form, for each $a \in \alicealg$, we use $a$ to denote $\chi_{\tau}(a)$ as the operator defined within $\cB(\cL^2(\alicealg, \tau))$. In this representation, the vector $\ket{\tau}$ is \textit{cyclic}, meaning that $\overline{\chi_{\tau}(\alicealg) \ket{\tau}} = \cL^2(\alicealg, \tau)$, and \textit{separating}, meaning that for all $z \in \alicealg$, we have $z \ket{\tau} = 0$ if and only if $z = 0$. This means that each $\sigma \in \alicealg$ gives a unique vector $\sigma \ket{\tau} \in \cL^2(\alicealg, \tau)$, and we can specify the action of $a$ acting on the Hilbert space $\cL^2(\alicealg, \tau)$ by its \textit{left regular representation}:
\begin{equation*}
	a (\sigma \ket{\tau}) = (a \sigma) \ket{\tau}
\end{equation*}
for all $\sigma \in \alicealg$.

Recall, given a von Neumann algebra $\alicealg$, the \textit{opposite algebra} $\text{\gls{Oppmap}} := \{a^{op}: a \in \alicealg \}$ is a von Neumann algebra which has the same linearity as $\alicealg$, but has the opposite multiplication structure, or more precisely $(ab)^{op} = (b)^{op} (a)^{op}$. The algebra $\alicealg^{op}$ can also be faithfully embeddable onto $\cB(\cL^2(\alicealg, \tau))$ by
\begin{equation} \label{eq:oppemb}
	\chi_{\tau}^{op}(a^{op}) (\sigma  \ket{\tau}) = (\sigma a) \ket{\tau}.
\end{equation}
This is known as the right regular representation for $\alicealg$. Clearly, $\chi_{\tau}^{op}(\alicealg)  \subseteq \alicealg'$, and in fact, $\alicealg^{'} = \alicealg^{op}$ \cite[Theorem 7.1.1]{anantharamanIntroductionII1Factors2010}. For simplicity of notation, we use \gls{Oppmapele} to denote $\chi_{\tau}^{op}(a)$ in this paper. The map $op: a \rightarrow a^{op}$ forms a $*$-anti-isomorphism from $\alicealg \rightarrow \alicealg'$, meaning 
\begin{equation*}
	(\lambda a + b)^{op} = \lambda a^{op} + b^{op}, \; \quad (ab)^{op} = b^{op} a^{op}  \; \quad (a^*)^{op} = ((a)^{op})^*
\end{equation*}
for all $a,b \in \alicealg$, $\lambda \in \bC$. 

\paragraph{Finite dimension example.} To make the definition above more concrete, let $n \in \bN$ and $\alicealg = \cM_n(\bC)$, we define the maximally entangled state $\MEstate{n}$ as the vector state on $\bC^{n} \otimes \bC^{n}$ as 
\begin{equation*}
\MEstate{n} := \frac{1}{\sqrt{n}} \sum_{i \in n} \ket{i} \otimes \ket{i}
\end{equation*}
We remark that this is precisely the vector state which arises from applying the GNS theorem on the normalized matrix trace $\Tr_n(A) $ on the algebra $\cM_n(\bC)$, the resulting vector representation for $\Tr$ are $\MEstate{n}$. Under this representation, elements of $\cM_n(\bC)$ gets mapped to $\cM_n(\bC) \otimes \cI_n$, with the commutant being $\cI_n \otimes \cM_n(\bC)$. For all vectors $\ket{\psi} \in \bC^{n} \otimes \bC^{n}$, we can always find some positive element $\sigma \in \cM_n(\bC) \tensor \cI_n$ such that $\sigma \MEstate{n}$, mainly, the canonical square root (the unique square root which is also positive) of the reduced density on the first register. The opposite algebra map, in this case, is the map $op: \alicealg \otimes \cI_n \rightarrow \cI_n \otimes \alicealg$ defined by $op(A \otimes \cI_n) = \cI_n \otimes A^{T}$, where $A^{T}$ is the transpose of $A$. As a further sanity check, for all $A,B \in \cM_n(\bC)$ 
\begin{equation}
	op(A \otimes \cI_n) (B \otimes \cI_n) \MEstate{n} = (B \otimes A^T) \MEstate{n}  = (B A\otimes \cI_n)\MEstate{n}, 
\end{equation}
consistent with the definition given above. For a more comprehensive introduction on von Neumann algebra, we refer the reader to \cite{blackadarOperatorAlgebras2006}.

\subsection{Quantum measurements}

Let $\cH$ be a (potentially infinite-dimensional) Hilbert space, and let $\BofH$ denote the set of the bounded operators on $\cH$. If $\cA$ is a finite set, then a \textit{positive operator-valued measure (POVM)} on $\cH$ with outcome set $\cA$ is a collection of positive operators $\{ A_a \}_{a \in \cA} \subseteq \BofH$, such that $\sum_{a \in \cA} A_a = \Id_{\cH}$. A \textit{projection-valued measure (PVM)}, or \textit{projective measurement}, is a POVM where each of the operators $A_a$ is a projection operator (i.e.~$A_a^2 = A_a$). 



Let $\textbf{f}: \cS \rightarrow \cA$ be a function mapping a finite set $\cS$ to another finite set $\cA$, and let $\{A_t\}_{t \in \cA}$ be a POVM with measurement outcome in $T$. For all $s \in \cS$, we denote
\begin{equation}\label{eq:dataprocessmeasdef}
	\text{\gls{Datameas}} := \sum_{a: \textbf{f}(a) = s} A_a,
\end{equation}
and $A_{[f|s]} = 0$ if $s$ is not in the image for $f$. Intuitively, this corresponds to performing the measurement which first samples an element from $A_t$, and apply the map  $f$ through the measurement outcome. This is known as a ``data processed measurement" in the literature. Before ending this section, we recall the orthogonalization lemm, which is used to approximate a set of POVMs on a von Neumann algebra by a PVM being defined on the same algebra. 

\begin{lemma}[Orthogonalization lemma, Theorem 1.2 of~\cite{delasalleOrthogonalizationPositiveOperator2022}] \label{lem:orthogonalizationlemma}
	Let $\alicealg \subseteq \BofH$ be a von Neumann algebra and let $\ket{\psi} \in \cH$ be a unit vector. For any POVM $\{ A_a \} \subseteq \alicealg$ such that $\sum_a \braket{\psi|A_a^2|\psi} >1-  \epsilon$, there exists a PVM $\{P_a\} \subseteq \alicealg$ such that
	\begin{equation*}
		\sum_{a} \braket{\psi| (A_a - P_a)^2 |\psi}< 9 \eps.
	\end{equation*}
\end{lemma}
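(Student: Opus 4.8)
I sketch a proof (the statement is due to de la Salle). The plan is to spectrally round each $A_a$ to a projection in $\alicealg$, show that the resulting family is an \emph{approximate} PVM with estimates that do not see $|\cA|$, and then correct it to an exact PVM. First, the quantitative content of the hypothesis: since $0\le A_a\le\Id_{\cH}$ and $\sum_aA_a=\Id_{\cH}$, the operators $A_a(\Id_{\cH}-A_a)=\sum_{b\ne a}A_aA_b$ are positive, so with $\eps_a:=\braket{\psi|A_a(\Id_{\cH}-A_a)|\psi}\ge 0$ the hypothesis reads $\sum_a\eps_a=1-\sum_a\braket{\psi|A_a^2|\psi}<\eps$ --- each $A_a$ is, on average over $a$ as seen from $\ket\psi$, within $\eps_a$ of being a projection. (Structurally, $P:=\sum_{a,b}e_{ab}\otimes A_a^{1/2}A_b^{1/2}$ is a genuine projection in $\cM_{|\cA|}(\alicealg)$ whose diagonal blocks are the $A_a$, and it is almost fixed by the unit vector $\sum_a\ket{a}\otimes A_a\ket\psi$; I would keep this picture in mind but argue directly in $\alicealg$.)

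\textbf{The candidate projections.} Put $Q_a:=\mathbf{1}_{(1/2,1]}(A_a)\in\alicealg$. Functional calculus on $[0,1]$ gives $(t-\mathbf{1}_{(1/2,1]}(t))^2\le t(1-t)$ and $(\mathbf{1}_{(1/2,1]}(t)-\sqrt t)^2\le 2t(1-t)$, whence $(A_a-Q_a)^2\le A_a(\Id_{\cH}-A_a)$ and $\|(Q_a-A_a^{1/2})\ket\psi\|^2\le 2\eps_a$; in particular $\sum_a\|(A_a-Q_a)\ket\psi\|^2\le\sum_a\eps_a<\eps$. I would then show, with absolute constants, that $\{Q_a\}$ is an approximate PVM: (i) writing $Q_a-A_a=\mathbf{1}_{(1/2,1]}(A_a)(\Id_{\cH}-A_a)-\mathbf{1}_{[0,1/2]}(A_a)A_a$ with each summand $\le 2A_a(\Id_{\cH}-A_a)$ yields $\bigl|\braket{\psi|(\sum_aQ_a-\Id_{\cH})|\psi}\bigr|\le 2\eps$; (ii) using $Q_b\le 2A_b$ together with $\sum_{b\ne a}A_b=\Id_{\cH}-A_a$ gives $\sum_{b\ne a}\|Q_bQ_a\ket\psi\|^2\le 2\braket{Q_a\psi|(\Id_{\cH}-A_a)|Q_a\psi}$, and bounding the right-hand side by $O(\eps_a)$ with the help of $\|(Q_a-A_a^{1/2})\ket\psi\|^2\le 2\eps_a$ gives $\sum_{a\ne b}\|Q_aQ_b\ket\psi\|^2\le C\eps$ for an absolute $C$. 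The point of (ii) is never to sum term-by-term over $a$ or $b$.

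\textbf{Correcting to an exact PVM.} From the near-complete, near-orthogonal family $\{Q_a\}\subseteq\alicealg$ I would manufacture projections $\{P_a\}\subseteq\alicealg$ with $\sum_aP_a=\Id_{\cH}$ and $\sum_a\|(Q_a-P_a)\ket\psi\|^2=O(\eps)$, for instance by a sequential orthogonalization (replace $Q_a$ by a spectral projection of $(\Id_{\cH}-\sum_{b<a}P_b)Q_a(\Id_{\cH}-\sum_{b<a}P_b)$, then absorb the residual $\Id_{\cH}-\sum_aP_a$ into one fixed $P_{a_0}$), using (i)--(ii) to control the accumulated error. Combining with $\sum_a\|(A_a-Q_a)\ket\psi\|^2<\eps$ via the triangle inequality in $\ell^2(\cA;\cH)$ then gives $\sum_a\braket{\psi|(A_a-P_a)^2|\psi}=\sum_a\|(A_a-P_a)\ket\psi\|^2<9\eps$, the exact constant $9$ coming from keeping every intermediate estimate sharp (in particular, the threshold $1/2$ makes the coefficient in $(A_a-Q_a)^2\le A_a(\Id_{\cH}-A_a)$ equal to $1$).

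\textbf{Main obstacle.} The enemy throughout is dimension: the naive bound $\|\sum_a(Q_a-A_a)\ket\psi\|\le\sum_a\|(Q_a-A_a)\ket\psi\|\le|\cA|^{1/2}(\sum_a\eps_a)^{1/2}$ carries a fatal $\sqrt{|\cA|}$, so both the near-PVM estimates and, above all, the correction step must be arranged so that no factor of $|\cA|$ ever appears. Step~3 is where the real care is needed: spectral projections are not $\|\cdot\|_2$-continuous, so the orthogonalization must be set up so that the (necessarily sublinear) loss incurred at each stage still sums to $O(\eps)$ over all of $\cA$; it is precisely the cross-term estimate (ii) and the completeness estimate (i), combined with the identity $\sum_{b\ne a}A_b=\Id_{\cH}-A_a$ and the pointwise bound $Q_b\le 2A_b$, that keep everything dimension-free, and squeezing the final constant down to exactly $9$ (rather than a larger $O(1)$) is a matter of optimizing the threshold and the correction.
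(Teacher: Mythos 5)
The paper states this result as a citation to de la Salle (Theorem 1.2 of \cite{delasalleOrthogonalizationPositiveOperator2022}) and does not reproduce a proof, so there is no in-paper argument to compare against; what follows is an assessment of your sketch on its own terms. Steps 1 and 2 are sound and dimension-free: the pointwise inequalities $(t-\mathbf 1_{(1/2,1]}(t))^2 \le t(1-t)$ and $\mathbf 1_{(1/2,1]}(t)\le 2t$ do give $\sum_a\|(A_a-Q_a)\ket\psi\|^2<\eps$, $\bigl|\braket{\psi|(\textstyle\sum_aQ_a-\Id)|\psi}\bigr|\le 2\eps$, and $\sum_{a\ne b}\|Q_aQ_b\ket\psi\|^2\le 4\eps$ with absolute constants; for the last you can even avoid the $A_a^{1/2}$ detour, since $Q_a(\Id-A_a)Q_a=\mathbf 1_{(1/2,1]}(A_a)(\Id-A_a)\le 2A_a(\Id-A_a)$ already bounds $\braket{Q_a\psi|(\Id-A_a)|Q_a\psi}$ by $2\eps_a$.

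Step 3 is where a genuine gap remains. In the sequential correction, the new error at stage $a$ is governed by $\|S_{a-1}Q_a\ket\psi\|$ where $S_{a-1}=\sum_{b<a}P_b$. Mutual orthogonality of the already-built $P_b$ does give $\|S_{a-1}Q_a\ket\psi\|^2=\sum_{b<a}\|P_bQ_a\ket\psi\|^2$, which is the right shape; but to relate this to the quantity $\sum_{b<a}\|Q_bQ_a\ket\psi\|^2$ that estimate (ii) controls, you would need $\|(P_b-Q_b)Q_a\ket\psi\|$ to be small for $b<a$, whereas the inductive hypothesis only makes $\|(P_b-Q_b)\ket\psi\|$ small. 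Since $P_b-Q_b$ is self-adjoint of operator norm up to $2$, the first quantity is not controlled by the second when $a\ne b$, so the induction does not close as sketched; one would have to carry a much stronger invariant (or avoid a sequential correction entirely). You flag this yourself as the place ``where the real care is needed,'' which is honest, but the proposal does not supply that care, and in particular the claimed constant $9$ is not earned. For the purposes of this paper it is cleanest to keep citing de la Salle's Theorem 1.2 as a black box rather than reprove it.
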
 
If $(\alicealg, \tau)$ is a tracial von Neumann algebra in standard form, we can replace $\ket{\psi}$ by $\sigma \ket{\tau}$ for some $\sigma \in \alicealg$ for the lemma above in order to obtain a Hilbert-Schmidt norm approximation of the original POVM. 

\paragraph{Generalized Pauli measurements.} Let $p \in \bN$ be an odd integer. Recall, for $W \in \{X,Z\}$, the generalized Pauli measurement over $\bF_{2^p}$ are the sets of PVM  $\left\{\text{\gls{GenPauli1}} = \ketbra{a^{W, p}}{a^{W, p}} \right\}_{a \in \bF_{2^{p}}}$ where 
\begin{equation*}
	\ket{a^{X, p}} := \frac{1}{\sqrt{2^{p}}} \sum_{b \in \bF_{2^p}} (-1)^{\Tr(ab)}  \ket{b}, \qquad \ket{a^{Z, p}} := \ket{a},
\end{equation*}
for all $a \in \bF_{2^{p}}$. In the case where $p = 1$, the generalized Pauli measurements corresponds to the eigenspace of the Pauli $X$ and $Z$ matrices.

In this paper, we often associate PVMs with binary observables to better analyze the commutation properties of these measurements. A binary observable is a unitary matrix which squares to the identity. For an odd integer $p \in \bN$, $W \in \{X,Z\}$ and $a \in \bF_{2^p}$, we define the generalized Pauli matrices \gls{GenPauli2} as 
\begin{equation*}
	\rho^{X, p}(a) := \sum_{b \in \bF_{2^p}} \ketbra{b + a}{b},  \qquad \rho^{Z, p}(a) := \sum_{b \in \bF_{2^p}} (-1)^{\Tr(a \cdot b)} \ketbra{j}{j}
\end{equation*}
where the addition and multiplication above are over $\bF_{2^p}$. In the case where $p = 1$, we drop the superscript and simply write $\rho^{W}_i$ to denote the qubit Pauli $W$-measurement for $i \in \{0,1\}$ and $\rho^{W}$ for the Pauli $W$-matrix. We see that for $W \in \{X,Z\}$ the eigenspace for $\rho^{W, p}(a)$ is precisely the PVM measurement for the generalized Pauli measurement, and we can write 
\begin{equation}\label{eq:genpaulitoobser}
	\rho^{W, p}(a) := \sum_{b \in \bF_{2^p}} (-1)^{\Tr(ab)} \rho^{W, p}_{a}. 
\end{equation}
This also implies that $\rho^{W, p}(a)$ commutes with $\rho^{W, p}(b)$ for any $a,b \in \bF_{2^p}$. As shown in~\cite[equation 19]{jiMIPRE2022a}, the generalized Pauli measurements can also be written as
\begin{equation}\label{eq:genpaulitomeasure}
	\rho^{W, p}_{a} = \Ex_{b \in \bF_{2^p}} (-1)^{-\Tr(ab)}  	\rho^{W, p}(b).
\end{equation}
By a simple calculation, we see that for all $W \in \{X,Z\}$ and $a,b\in \bF_{2^p}$, the generalized Pauli observables obey the following relationships  
\begin{equation}\label{eq:genpauliaddmult}
	\rho^{W, p}(a) \cdot  \rho^{W, p}(b)  = \rho^{W, p}(a+b).
\end{equation}
The generalized Pauli observables also follow the ``twisted commutation" relations, whereby
\begin{equation}\label{eq:genpaulicomm}
	\rho^{X, p}(a) 	\cdot \rho^{Z, p}(b) = (-1)^{\Tr(ab)} \rho^{Z, p}(b) \cdot \rho^{X, p}(a). 
\end{equation}
For $W \in \{X,Z\}$ and $s \in \bF_{2^p}^m$, we define
\begin{equation} \label{eq:genPaulimeasurement}
	\rho^{W, p}(s) = \bigotimes_{i \in [m]} \rho^{W, p}(s_i) \qquad \text{and } \qquad \rho^{W, p}_{s} = \bigotimes_{i \in [m]} \rho^{X, p}_{s_i}
\end{equation}
where each $s_i \in \bF_{2^p}$. We recall the following lemma which shows the existence of a unitary which converts between the generalized Pauli measurement to the one qubit Pauli measurement.
\begin{lemma}[Lemma 3.26 of~\cite{jiMIPRE2022a}] \label{lem:PaulitogeneralPauli}
	Let $p, m \in \bN$ where $p$ is an odd integer, there exists a unitary $\text{\gls{GenPauli4}}:  (\bC^{2})^{\otimes p \cdot m} \rightarrow (\bC^{2^p})^{\otimes m}$ such that for all $W \in \{X,Z\}$ and for all $s\in \bF_{2^p}^{\otimes n}$, we have 
	\begin{align*}
		&\rho^{p, W}_s =  U_{2 \rightarrow p}^* \left( \bigotimes^{m}_{i = 0} \bigotimes^{p}_{j = 0} \rho^{W}_{\kappa(s_i)_j}  \right)U_{2 \rightarrow p} \\
		&\left(U_{2 \rightarrow p}^* \otimes U_{2 \rightarrow p}^* \right)\MEstate{2^p}^{\otimes m} =  \MEstate{2}^{\otimes p \cdot m}. 
	\end{align*}
\end{lemma}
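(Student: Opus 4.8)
The plan is to construct $U_{2\to p}$ by hand as the unitary that re-labels the computational basis of $(\bC^2)^{\otimes p}$ by the canonical basis of $\bF_{2^p}$, applied blockwise to the $m$ groups of $p$ qubits, and then to verify the two displayed identities by direct computation. The only step with genuine content is the $X$-measurement identity, where self-duality of the canonical basis enters; everything else is bookkeeping about tensor factors.

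\textbf{Construction.} First I would treat a single block. Let $\{\hat e_i\}_{i\in[p]}$ be the self-dual canonical basis of $\bF_{2^p}$ over $\bF_2$ from~\Cref{lem:compfinitefield} and $\kappa:\bF_{2^p}\to\{0,1\}^p$ the associated canonical representation~\eqref{eq:finitefieldbijection}. Define a unitary $u:(\bC^2)^{\otimes p}\to\bC^{2^p}$ as the linear extension of $\ket{c_0}\otimes\cdots\otimes\ket{c_{p-1}}\mapsto\ket{\kappa^{-1}(c)}$ for $c=(c_0,\dots,c_{p-1})\in\{0,1\}^p$; this is unitary because $c\mapsto\kappa^{-1}(c)$ is a bijection between orthonormal bases, so its adjoint satisfies $u^*\ket b=\bigotimes_{j\in[p]}\ket{\kappa(b)_j}$. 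Set $U_{2\to p}:=u^{\otimes m}$, acting by $u$ on each block of $p$ qubits. Since both the conjugation identity and the entangled-state identity factor over the $m$ blocks (resp.\ over the $m$ tensor factors, cf.~\eqref{eq:genPaulimeasurement}), it suffices to verify them for $m=1$, which I do next.

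\textbf{The $Z$ and $X$ identities.} For $W=Z$: since $\rho^{Z,p}_a=\ketbra{a}{a}$ and $u^*\ket a=\bigotimes_j\ket{\kappa(a)_j}$, conjugating gives $u^*\rho^{Z,p}_a u=\bigotimes_{j\in[p]}\ketbra{\kappa(a)_j}{\kappa(a)_j}=\bigotimes_{j\in[p]}\rho^{Z}_{\kappa(a)_j}$, using only that $\kappa$ is a bijection. The $W=X$ case is the crux. Writing $\ket{c^{X}}=\tfrac1{\sqrt2}\sum_{d\in\{0,1\}}(-1)^{cd}\ket d$, I would compute
\[
  u^*\ket{a^{X,p}}=\frac1{\sqrt{2^p}}\sum_{b\in\bF_{2^p}}(-1)^{\Tr(ab)}\bigotimes_{j\in[p]}\ket{\kappa(b)_j},
  \qquad
  \bigotimes_{j\in[p]}\ket{\kappa(a)_j^{X}}=\frac1{\sqrt{2^p}}\sum_{b\in\bF_{2^p}}(-1)^{\kappa(a)\cdot\kappa(b)}\bigotimes_{j\in[p]}\ket{\kappa(b)_j},
\]
where the second sum ranges over all $\kappa(b)\in\{0,1\}^p$. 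These agree because, expanding $a=\sum_i\kappa(a)_i\hat e_i$ and $b=\sum_k\kappa(b)_k\hat e_k$ and using $\bF_2$-linearity of $\Tr$ together with self-duality $\Tr(\hat e_i\hat e_k)=\delta_{i,k}$, one gets $\Tr(ab)=\sum_i\kappa(a)_i\kappa(b)_i=\kappa(a)\cdot\kappa(b)$ in $\bF_2$. Hence $u^*\ket{a^{X,p}}=\bigotimes_j\ket{\kappa(a)_j^{X}}$, so $u^*\rho^{X,p}_a u=\bigotimes_{j\in[p]}\rho^{X}_{\kappa(a)_j}$. Tensoring over the $m$ blocks and over $s=(s_0,\dots,s_{m-1})$ gives $U_{2\to p}^*\,\rho^{W,p}_s\,U_{2\to p}=\bigotimes_{i\in[m]}\bigotimes_{j\in[p]}\rho^{W}_{\kappa(s_i)_j}$, which is the displayed conjugation identity. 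I expect the identity $\Tr(ab)=\kappa(a)\cdot\kappa(b)$ to be the one real obstacle: it is exactly the property the self-dual canonical basis is designed to supply, and without it the $X$-observables would be conjugated into $X$-observables expressed in the \emph{dual} basis, and the lemma would fail.

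\textbf{The maximally entangled state.} Finally, $(u^*\otimes u^*)\MEstate{2^p}=\frac1{\sqrt{2^p}}\sum_{a\in\bF_{2^p}}\big(\bigotimes_{j}\ket{\kappa(a)_j}\big)\otimes\big(\bigotimes_{j}\ket{\kappa(a)_j}\big)$; as $a\mapsto\kappa(a)$ sweeps all of $\{0,1\}^p$, regrouping the $2p$ tensor factors into the fixed ``first system then second system'' convention yields $\bigotimes_{j\in[p]}\tfrac1{\sqrt2}(\ket{00}+\ket{11})=\MEstate{2}^{\otimes p}$. Tensoring over the $m$ blocks gives $(U_{2\to p}^*\otimes U_{2\to p}^*)\MEstate{2^p}^{\otimes m}=\MEstate{2}^{\otimes p\cdot m}$, completing the proof; the only remaining subtlety is the harmless tensor-factor reordering, once an ordering convention is fixed.
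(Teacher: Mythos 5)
Your construction and argument are correct: define $u$ on a single block as the basis-relabeling unitary $\ket{c_0}\cdots\ket{c_{p-1}}\mapsto\ket{\kappa^{-1}(c)}$, set $U_{2\to p}=u^{\otimes m}$, reduce to $m=1$ by multiplicativity over blocks, verify the $Z$ case by inspection, verify the $X$ case using $\Tr(ab)=\kappa(a)\cdot\kappa(b)$ (which is exactly what self-duality of the canonical basis from Lemma 3.16 of~\cite{jiMIPRE2022a} buys you, since bilinearity of $\Tr$ over $\bF_2$ reduces this to $\Tr(\hat e_i\hat e_k)=\delta_{i,k}$), and regroup tensor factors for the maximally-entangled-state identity. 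The present paper does not reproduce a proof of this lemma (it only cites the source), so there is no in-text argument to compare against; your route is the standard one.

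One caveat that you should not gloss over. What you actually establish is
\[
U_{2\to p}^*\,\rho^{W,p}_s\,U_{2\to p}\;=\;\bigotimes_{i\in[m]}\bigotimes_{j\in[p]}\rho^{W}_{\kappa(s_i)_j},
\]
which is the \emph{opposite} conjugation from the first displayed identity of the lemma, not ``the displayed conjugation identity'' as you assert. The two are not algebraically equivalent (agreement would require $\rho^{p,W}_s$ to commute with $U_{2\to p}^2$, which is false), and only your version type-checks against the declared signature $U_{2\to p}:(\bC^2)^{\otimes p\cdot m}\to(\bC^{2^p})^{\otimes m}$: the printed expression $U_{2\to p}^*\bigl(\bigotimes_{i,j}\rho^{W}_{\kappa(s_i)_j}\bigr)U_{2\to p}$ sandwiches a $pm$-qubit operator between maps whose shared middle space is the $m$-qudit register $(\bC^{2^p})^{\otimes m}$, which is a type error. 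The lemma as printed is a typo (the source carries a margin note to that effect), and your proof is the correct one; you should simply record the corrected statement rather than claim to have proven the identity as displayed.
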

\jqnote{You probably need to fix this later!}


To abuse notation, for register subspace $V \subseteq \bF_{2}^m$ and $W \in \{X,Z\}$, we use \gls{GenPauli3} to denote the measurement
\begin{equation*}
	\rho^{W}_{s} := \sum_{a| a_V = s} \rho^{W}_{s}
\end{equation*}
where for $a \in \bF_{2}^m$, $a = a_V + a_V^C$ is the unique decomposition such that $a_V \in V$ and $a_V^C \in V^C$. For register subspace $V \subseteq \bF_{2^p}^m$, we use $\{\rho^{W}_{s} \}_{s \in V}$ to denote the measurement  $\{\rho^{V}_{s} \}_{s \in \kappa(W)}$. We recall the following lemma from~\cite{jiMIPRE2022a}. 
\begin{lemma}[Lemma 8.5 of~\cite{jiMIPRE2022a}] \label{lem:linePaulicommutation}
	Let $\decidertypedfunction_1, \decidertypedfunction_2: \bF_{2^\seedlength}^m \rightarrow \bF_{2^\seedlength}^m$ be two linear map. Then
	\begin{equation*}
		\ker{\left(\decidertypedfunction_2\right)}^{\bot} \subseteq \ker{\left(\decidertypedfunction_1\right)},
	\end{equation*}
	implies that the measurement operators $\left\{\rho^{Z, \seedlength}_{[\decidertypedfunction_1|s]}\right\}_{s\in \bF_{2^\seedlength}^m}$ and $\left\{\rho^{X, \seedlength}_{[\decidertypedfunction_2|s]}\right\}_{s\in \bF_{2^\seedlength}^m}$ pairwise commute, as well as the measurement operators $\left\{\rho^{Z, \seedlength}_{[\decidertypedfunction_2|s]}\right\}_{s\in \bF_{2^\seedlength}^m}$ and $\left\{\rho^{X, \seedlength}_{[\decidertypedfunction_1|s]}\right\}_{s\in \bF_{2^\seedlength}^m}$ pairwise commute
\end{lemma}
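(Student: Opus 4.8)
The plan is to expand each data-processed projector in the ``Weyl basis'' of generalized Pauli observables and then read off commutation from the twisted commutation relation \eqref{eq:genpaulicomm}. Precisely, I would show that $\rho^{Z,\seedlength}_{[\decidertypedfunction_1|s]}$ lies in the linear span of $\{\rho^{Z,\seedlength}(b): b\in\ker(\decidertypedfunction_1)^{\bot}\}$ and, symmetrically, that $\rho^{X,\seedlength}_{[\decidertypedfunction_2|t]}$ lies in the span of $\{\rho^{X,\seedlength}(c): c\in\ker(\decidertypedfunction_2)^{\bot}\}$. The hypothesis $\ker(\decidertypedfunction_2)^{\bot}\subseteq\ker(\decidertypedfunction_1)$ then forces $b\cdot c=0$ for all such $b,c$, so the two families of generators commute, and hence so do the projectors; indices $s$ outside $\mathrm{Im}(\decidertypedfunction_1)$ (resp.\ $\mathrm{Im}(\decidertypedfunction_2)$) are handled trivially since the corresponding projector is $0$.

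\emph{Step 1 (Weyl expansion).} Tensoring \eqref{eq:genpaulitomeasure} over the $m$ coordinates gives $\rho^{W,\seedlength}_{a}=\Ex_{b\in\bF_{2^\seedlength}^m}(-1)^{-\Tr(a\cdot b)}\rho^{W,\seedlength}(b)$ for $a\in\bF_{2^\seedlength}^m$. Writing a non-empty fibre $\{a:\decidertypedfunction_1(a)=s\}$ as a coset $a_0+\ker(\decidertypedfunction_1)$ and summing, the inner character sum $\sum_{k\in\ker(\decidertypedfunction_1)}(-1)^{-\Tr(k\cdot b)}$ equals $|\ker(\decidertypedfunction_1)|$ when $\Tr(k\cdot b)=0$ for all $k\in\ker(\decidertypedfunction_1)$ and vanishes otherwise, so
\begin{equation*}
\rho^{Z,\seedlength}_{[\decidertypedfunction_1|s]}=\sum_{a:\decidertypedfunction_1(a)=s}\rho^{Z,\seedlength}_{a}=\frac{|\ker(\decidertypedfunction_1)|}{|\bF_{2^\seedlength}^m|}\sum_{b}(-1)^{-\Tr(a_0\cdot b)}\,\rho^{Z,\seedlength}(b),
\end{equation*}
the sum running over the $\bF_2$-trace-orthogonal complement of $\ker(\decidertypedfunction_1)$. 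A dimension count using non-degeneracy of both $(u,v)\mapsto u\cdot v$ over $\bF_{2^\seedlength}$ and $(u,v)\mapsto\Tr(u\cdot v)$ over $\bF_2$ shows that for an $\bF_{2^\seedlength}$-subspace this trace-orthogonal complement coincides with $\ker(\decidertypedfunction_1)^{\bot}$, giving the claimed containment; the same computation expresses $\rho^{X,\seedlength}_{[\decidertypedfunction_2|t]}$ as a linear combination of $\{\rho^{X,\seedlength}(c): c\in\ker(\decidertypedfunction_2)^{\bot}\}$.

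\emph{Step 2 (commuting the generators and concluding).} For $b\in\ker(\decidertypedfunction_1)^{\bot}$ and $c\in\ker(\decidertypedfunction_2)^{\bot}$, the hypothesis gives $c\in\ker(\decidertypedfunction_2)^{\bot}\subseteq\ker(\decidertypedfunction_1)$, hence $b\cdot c=0$ and a fortiori $\Tr(b\cdot c)=0$. Tensoring \eqref{eq:genpaulicomm} over the $m$ coordinates then yields $\rho^{X,\seedlength}(c)\,\rho^{Z,\seedlength}(b)=(-1)^{\Tr(b\cdot c)}\rho^{Z,\seedlength}(b)\,\rho^{X,\seedlength}(c)=\rho^{Z,\seedlength}(b)\,\rho^{X,\seedlength}(c)$, so these observables commute; by Step 1 the projectors $\rho^{Z,\seedlength}_{[\decidertypedfunction_1|s]}$ and $\rho^{X,\seedlength}_{[\decidertypedfunction_2|t]}$ are linear combinations of commuting operators and therefore commute for all $s,t$. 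For the ``as well as'' statement, applying $\bot$ to the hypothesis and using $(U^{\bot})^{\bot}=U$ for $\bF_{2^\seedlength}$-subspaces gives $\ker(\decidertypedfunction_1)^{\bot}\subseteq\ker(\decidertypedfunction_2)$, which is exactly the hypothesis with $\decidertypedfunction_1$ and $\decidertypedfunction_2$ interchanged, so repeating Steps 1--2 shows $\{\rho^{Z,\seedlength}_{[\decidertypedfunction_2|s]}\}_s$ and $\{\rho^{X,\seedlength}_{[\decidertypedfunction_1|s]}\}_s$ pairwise commute.

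I expect the only delicate point to be Step 1: organizing the Weyl expansion of the data-processed PVM and reconciling the two notions of orthogonal complement — the $\bF_2$-trace pairing that appears in the character sum versus the $\bF_{2^\seedlength}$-dot product used to define $W^{\bot}$. Once these are seen to agree on $\bF_{2^\seedlength}$-subspaces, everything else is a direct application of \eqref{eq:genpaulitomeasure} and \eqref{eq:genpaulicomm}, mirroring the finite-field Pauli bookkeeping used throughout \cite{jiMIPRE2022a}.
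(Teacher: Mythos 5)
Your proof is correct. Since the paper simply cites Lemma~8.5 of \cite{jiMIPRE2022a} without reproducing an argument, there is no ``paper's own proof'' to compare against here, but your route is essentially the same Fourier-analytic one used in \cite{jiMIPRE2022a}: expand each data-processed PVM element in the Weyl basis, observe that the character sum over a coset of $\ker(\decidertypedfunction_i)$ kills all generators $\rho^{W,\seedlength}(b)$ with $b$ outside $\ker(\decidertypedfunction_i)^\bot$, and then conclude commutation from the twisted commutation relation once the cross-pairing vanishes.

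Two details you flagged as delicate are indeed the crux, and you handled both correctly. First, for an $\bF_{2^\seedlength}$-subspace $V$ the $\bF_2$-trace-orthogonal complement $\{u : \Tr(u\cdot v)=0 \ \forall v\in V\}$ really does coincide with $V^\bot$: the inclusion $V^\bot\subseteq V^{\Tr\bot}$ is trivial, and conversely if $\Tr(u\cdot v)=0$ for all $v\in V$ then, since $V$ is closed under $\bF_{2^\seedlength}$-scalar multiplication, $\Tr(\lambda(u\cdot v))=0$ for all $\lambda\in\bF_{2^\seedlength}$, and non-degeneracy of the trace form on $\bF_{2^\seedlength}$ forces $u\cdot v=0$. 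Second, the phase $(-1)^{-\Tr(a_0\cdot b)}$ in your Step~1 is independent of the choice of coset representative $a_0$ precisely because the surviving $b$ lie in $\ker(\decidertypedfunction_1)^{\bot}$, so it is worth stating this explicitly. Your treatment of the ``as well as'' direction via $(\ker(\decidertypedfunction_2)^\bot)^\bot=\ker(\decidertypedfunction_2)$, which reduces it to the first case with $\decidertypedfunction_1$ and $\decidertypedfunction_2$ interchanged, is clean and uses nothing beyond the non-degeneracy of the $\bF_{2^\seedlength}$-dot product on $\bF_{2^\seedlength}^m$. No gaps.
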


\subsection{Distance between quantum measurements}

In this subsection, we introduce some distance between measurements which will be useful for the analysis of non-local games. Let $\cX$ be a finite set, $\mu$ be a probability measurement, $(\alicealg, \tau)$ be a tracial von Neumann algebra represented under the standard form $(\chi_{\tau}, \cL^2(\alicealg, \tau), \ket{\tau})$ and $\ket{\psi} \in \cL^2(\alicealg, \tau)$. We say that the two sets of POVM, $\{ A_a^x \}_{x \in \cX} \subseteq \alicealg$ and $\{ B_a^x \}_{x \in \cX} \subseteq \alicealg'$, are $\delta$-\textit{consistent} with each other with respect to $\ket{\psi}$ and $\mu$ if
\begin{equation} \label{eq:Consistencydef}
	\Ex_{x \sim \mu} \sum_{a \neq b} \braket{\psi|A_a^x B_b^x|\psi} \leq O(\delta),
\end{equation}
and we write 
\begin{equation*}
	A_a^x \text{ \gls{Consistency} }  B_a^x
\end{equation*}
if $\{ A_a^x \}$ and $\{ B_a^x \}$ are $\delta$-consistent with each other and $\ket{\psi}$ and $\mu$ are clear from context.

For two sets of POVM $\{ A_a^x \}_{x \in \cX}, \{ B_a^x \}_{x \in \cX} \subseteq \BofH$, we say that $\{ A_a^x \}$ and $\{ B_a^x \}$ are $\delta$-\text{close} with each other with respect to $\ket{\psi}$ and $\mu$ if
\begin{equation} \label{eq:Closedef}
	\Ex_{x \sim \mu} \sum_{a} \| \left(A_a^x -  B_a^x\right) \ket{\psi}\|^2 \leq O(\delta),
\end{equation}
and we write
\begin{equation*}
	A_a^x \text{ \gls{Close} }  B_a^x
\end{equation*}
if $\{ A_a^x \}$ and $\{ B_a^x \}$ are $\delta$-close with each other and $\ket{\psi}$ and $\mu$ are clear from context. We also use the same notation to denote distances between matrices if the superscript ``x" are omitted when describing $\simeq$ or $\approx$ distances. By definition, $A_a^x  \approx_{\eps} B_a^x $ is the same as writing $\left(A_a^x - B_a^x \right) \approx_{\eps} 0$. We remark that both measurements distance defined above are analogous to~\cite[Definition 5.15, 5.16]{jiMIPRE2022a}.

For three sets of POVM $\{ A_a^x \}_{x \in \cX}, \{ B_a^x \}_{x \in \cX}, \{ C_a^x \}_{x \in \cX} \subseteq \BofH$, if $ A_a^x \approx_{\delta} B_a^x$ and $B_a^x \approx_{\eps} C_a^x $ over $\ket{\psi} \in \cH$ and $\mu$ by the triangle inequality, this implies that $A_a^x \approx_{\delta+ \eps} C_a^x $ over $\mu$ and $\ket{\psi}$. Since applying a function to the measurement output cannot decrease the probability of two measurement outcome agree with each other, we get the following analogue of~\cite[Fact 4.26]{natarajanNEEXPMIP2019a}.
\begin{fact}[Data processing] 
	Let $\cX$ be a finite set,  $\{ A_{a}^x \}_{a \in \cA}$ and  $\{ B_{b}^y \}_{(a,b) \in \cA^2}$ be two sets of POVMs, and $\textbf{f}: \cA \rightarrow \cB$ be a function. Then $A_a^x \simeq_{\eps} B_a^x$ implies that $A_{[\textbf{f}|a]}^x \simeq_{\eps} B_{[\textbf{f}|a]}^x$ 
\end{fact}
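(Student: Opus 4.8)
The plan is to unfold the definition of $\simeq_\eps$ for the data-processed measurements and dominate it, term by term, by the same quantity for the original measurements $\{A_a^x\}$, $\{B_b^x\}$. Let $\cB = \textbf{f}(\cA)$ denote the (finite) outcome set of the data-processed POVMs and fix $x \in \cX$. By the definition in~\eqref{eq:dataprocessmeasdef}, $A_{[\textbf{f}|c]}^x = \sum_{a \,:\, \textbf{f}(a) = c} A_a^x$ for each $c \in \cB$, and likewise for $B$, so expanding the products gives the identity
\begin{equation*}
	\sum_{c \neq c' \in \cB} \braket{\psi | A_{[\textbf{f}|c]}^x B_{[\textbf{f}|c']}^x | \psi} \;=\; \sum_{\substack{a, b \,:\, \textbf{f}(a) \neq \textbf{f}(b)}} \braket{\psi | A_a^x B_b^x | \psi}.
\end{equation*}

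Next I would use two elementary observations. First, $\textbf{f}(a) \neq \textbf{f}(b)$ forces $a \neq b$, so the right-hand index set above is contained in $\{(a,b) : a \neq b\}$. Second, each summand $\braket{\psi | A_a^x B_b^x | \psi}$ is non-negative: in the setting of~\eqref{eq:Consistencydef} we have $A_a^x \in \alicealg$ and $B_b^x \in \alicealg'$, so the two positive operators commute, their product $A_a^x B_b^x = (A_a^x)^{1/2} B_b^x (A_a^x)^{1/2}$ is positive, and hence $\braket{\psi | A_a^x B_b^x | \psi} = \| (B_b^x)^{1/2} (A_a^x)^{1/2} \ket{\psi} \|^2 \geq 0$. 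Putting these together, we may add back the non-negative off-diagonal terms with $\textbf{f}(a) = \textbf{f}(b)$ to conclude, for every $x \in \cX$,
\begin{equation*}
	\sum_{c \neq c' \in \cB} \braket{\psi | A_{[\textbf{f}|c]}^x B_{[\textbf{f}|c']}^x | \psi} \;\leq\; \sum_{a \neq b} \braket{\psi | A_a^x B_b^x | \psi}.
\end{equation*}
Taking $\Ex_{x \sim \mu}$ of both sides and applying the hypothesis $A_a^x \simeq_\eps B_a^x$, which says the right-hand side is $O(\eps)$ in expectation, yields $A_{[\textbf{f}|a]}^x \simeq_\eps B_{[\textbf{f}|a]}^x$, as desired.

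This is essentially a two-line manipulation, so I do not expect a serious obstacle; the one point requiring care is the non-negativity of the cross terms $\braket{\psi | A_a^x B_b^x | \psi}$, which is precisely where the commuting-operator structure implicit in the definition of $\simeq_\eps$ (namely $A_a^x \in \alicealg$, $B_b^x \in \alicealg'$) enters — term-by-term domination would fail for non-commuting operators. For the closeness relation $\approx_\eps$ an analogous data-processing statement also holds, but its proof would instead go through a Cauchy--Schwarz/convexity estimate on $\sum_c \| (A_{[\textbf{f}|c]}^x - B_{[\textbf{f}|c]}^x) \ket{\psi} \|^2$; only the consistency version is needed here.
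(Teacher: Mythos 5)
Your argument is correct and is precisely a careful expansion of the one-sentence justification the paper offers (``applying a function to the measurement output cannot decrease the probability of two measurement outcomes agreeing''), which the paper states without proof by reference to Fact~4.26 of Natarajan--Wright. The key observation you correctly isolate — that the cross terms $\braket{\psi \mid A_a^x B_b^x \mid \psi}$ are individually non-negative because $A_a^x \in \alicealg$ and $B_b^x \in \alicealg'$ commute, so dropping the added-back off-diagonal terms with $\textbf{f}(a)=\textbf{f}(b)$ is harmless — is exactly what makes term-by-term domination sound, and is built into the definition of $\simeq_{\eps}$ in~\eqref{eq:Consistencydef}.
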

We remark that the above fact does not work for the $\approx_{\eps}$ measurement outcome and we refer to~\cite[Fact 4.26]{natarajanNEEXPMIP2019a} for more details. We show the following trivial lemmas about distances of POVM measurements. The first lemma converts between closeness and distance for a pair of commuting measurement. We remark that this is an analogue of~\cite[Fact 4.13, 4.14]{natarajanNEEXPMIP2019a}
\begin{lemma}[Conversion between closeness and distance]\label{lem:closetodistance}
	Let $\cX$ be a finite set, $\mu$ be a distribution over $\cX$, $\ket{\psi} \in \cH$ and $\{ A_{a}^x \}_{a \in \cA}$ and  $\{ B_{b}^y \}_{(a,b) \in \cA^2}$ be two sets of POVMs in $\BofH$ such that $A_a^x B^y_b = B^y_b A_a^x$ for all $(x,y,a,b) \in \cX^2 \times \cA^2$. Then the following holds:
	\begin{itemize}
		\item If $A_a^x \simeq_{\delta} B^y_b$ over $\mu$ and $\ket{\psi}$, then $A_a^x \approx_{\delta} B^y_b$ over $\mu$ and $\ket{\psi}$.
		\item If $A_a^x \approx_{\delta} B^y_b$ over $\mu$ and $\ket{\psi}$ and additionally both $\{ A_{a}^x \}, \{ B_{b}^y \}$ are PVMs, then $A_a^x \simeq_{\delta} B^y_b$ over $\mu$ and $\ket{\psi}$.
		\item  If $A_a^x \approx_{\delta} B^y_b$ over $\mu$ and $\ket{\psi}$ and additionally either $\{ A_{a}^x \}$ or $\{ B_{b}^y \}$ are PVMs, then $A_a^x \simeq_{\sqrt{\delta}} B^y_b$ over $\mu$ and $\ket{\psi}$.
	\end{itemize}
\end{lemma}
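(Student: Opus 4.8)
The plan is to prove all three items by the same elementary computation: expand the relevant norm/inner product using that POVM elements are self-adjoint and that $A_a^x$ commutes with $B_b^y$, and then compare the ``diagonal mass'' $\sum_a \braket{\psi|A_a^x B_a^x|\psi}$ with $1$. The one algebraic identity driving everything, valid for any pair of POVMs $\{A_a^x\}_{a}$, $\{B_b^x\}_{b}$, is
\begin{equation*}
	\sum_{a \neq b} \braket{\psi|A_a^x B_b^x|\psi} \;=\; 1 - \sum_a \braket{\psi|A_a^x B_a^x|\psi},
\end{equation*}
which follows from $\sum_{a,b} A_a^x B_b^x = (\sum_a A_a^x)(\sum_b B_b^x) = \Id_{\cH}$. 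Using commutativity and self-adjointness, I also record $\|(A_a^x - B_a^x)\ket{\psi}\|^2 = \braket{\psi|(A_a^x)^2|\psi} + \braket{\psi|(B_a^x)^2|\psi} - 2\braket{\psi|A_a^x B_a^x|\psi}$; note that the left side is manifestly a nonnegative real, which forces the cross-term to be real (and in the intended application $A_a^x \in \alicealg$, $B_b^x \in \alicealg'$ are commuting positive operators, so the quantities above are genuinely nonnegative).

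For the first item, since $0 \le A_a^x \le \Id$ we get $(A_a^x)^2 \le A_a^x$, hence $\sum_a \braket{\psi|(A_a^x)^2|\psi} \le 1$ and likewise for $B$. Plugging into the expansion and summing over $a$, then using the identity, gives $\sum_a \|(A_a^x - B_a^x)\ket{\psi}\|^2 \le 2\sum_{a\neq b}\braket{\psi|A_a^x B_b^x|\psi}$; taking $\Ex_{x\sim\mu}$ and invoking $\delta$-consistency yields $\delta$-closeness (up to the constant absorbed in $O(\cdot)$). For the second item, $\{A_a^x\}$ and $\{B_b^x\}$ being PVMs makes $(A_a^x)^2 = A_a^x$ and $(B_a^x)^2 = B_a^x$, so the same expansion becomes the exact equality $\sum_a \|(A_a^x - B_a^x)\ket{\psi}\|^2 = 2\sum_{a\neq b}\braket{\psi|A_a^x B_b^x|\psi}$, and taking $\Ex_{x\sim\mu}$ converts $\delta$-closeness to $\delta$-consistency with no loss.

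The third item is where the square-root loss enters. Assume $\{A_a^x\}$ is the PVM (the other case is symmetric). Write $\sum_{a\neq b}\braket{\psi|A_a^x B_b^x|\psi} = 1 - \sum_a \braket{\psi|A_a^x B_a^x|\psi}$ and split $\braket{\psi|A_a^x B_a^x|\psi} = \braket{\psi|A_a^x|\psi} + \braket{\psi|A_a^x(B_a^x - A_a^x)|\psi}$ using $(A_a^x)^2 = A_a^x$. Bound the remainder by Cauchy--Schwarz twice: first $|\braket{\psi|A_a^x(B_a^x - A_a^x)|\psi}| = |\braket{A_a^x\psi \,|\, (B_a^x - A_a^x)\psi}| \le \|A_a^x\psi\|\cdot\|(B_a^x - A_a^x)\psi\|$, then $\sum_a \|A_a^x\psi\|\cdot\|(B_a^x - A_a^x)\psi\| \le (\sum_a \|A_a^x\psi\|^2)^{1/2}(\sum_a \|(B_a^x - A_a^x)\psi\|^2)^{1/2}$; since $\sum_a \|A_a^x\psi\|^2 = \sum_a \braket{\psi|A_a^x|\psi} = 1$, this gives $\sum_{a\neq b}\braket{\psi|A_a^x B_b^x|\psi} \le (\sum_a \|(B_a^x - A_a^x)\psi\|^2)^{1/2}$. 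Taking $\Ex_{x\sim\mu}$ and applying Jensen's inequality to pull the expectation inside $\sqrt{\cdot}$ turns the $\delta$-closeness bound into the claimed $O(\sqrt{\delta})$ consistency bound.

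I do not expect a real obstacle here; this is a routine calculation. The only points needing care are keeping track of the complex/real distinction in the cross-terms (handled by the norm-squared being real, together with the commuting-algebras setup), and ordering the two Cauchy--Schwarz steps and the final Jensen step correctly in the third item — that bookkeeping is exactly what separates the lossless bounds in items one and two from the $\sqrt{\delta}$ bound in item three.
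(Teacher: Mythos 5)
Your proposal is correct and follows essentially the same argument as the paper: item one via the algebraic expansion of $\|(A_a^x-B_a^x)\ket{\psi}\|^2$ bounded using $(A_a^x)^2\le A_a^x$, item two by observing those bounds become equalities for PVMs, and item three via exactly the paper's chain of projectivity of $\{A_a^x\}$, Cauchy--Schwarz, and Jensen to obtain the $\sqrt{\delta}$ loss. The only cosmetic difference is that you break the Cauchy--Schwarz/Jensen step into two explicitly labeled applications, whereas the paper compresses them; the content is identical.
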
 
\begin{proof}
	By definition $A_a^x \simeq_{\delta} B^y_b$, we have 
	\begin{equation*}
		\Ex_{x \sim \mu} \sum_{a} \braket{\psi|A_a^x B_a^x|\psi} \geq 1-  O(\delta)
	\end{equation*}
	By expanding the definition of closeness, we have 
	\begin{align*}
		\Ex_{x \sim \mu} \sum_{a}\| \left(A_a^x -  B_a^x\right) \ket{\psi}\|^2 &= 	\Ex_{x \sim \mu} \sum_{a}  \braket{\psi|(A_a^x)^2 +(B_b^y)^2 - 2 A_a^x B_a^x  |\psi} \\
		&\leq \Ex_{x \sim \mu} \sum_{a}  \braket{\psi|A_a^x +B_b^y - 2 A_a^x B_a^x  |\psi} \\
		&\leq 2 - 2	\Ex_{x \sim \mu} \sum_{a} \braket{\psi|A_a^x B_a^x|\psi}
	\end{align*}
	and item 1 follows accordingly. For item 2, if both $\{ A_{a}^x \}, \{ B_{b}^y \}$ are PVMs, then the above inequality becomes an equality and the statement follows accordingly. For item 3, without lost of generality assume that $\{ A_{a}^x \}$ is projective, then 
	\allowdisplaybreaks{
	\begin{align*}
	 1 - \Ex_{x \sim \mu} \sum_{a} \braket{\tau| \sigma A^x_a B^x_a \sigma |\tau} &=  \Ex_{x \sim \mu} \sum_{a} \braket{\tau| \sigma (A^x_a)^2 \sigma |\tau} - \Ex_{x \sim \mu} \sum_{a} \braket{\tau| \sigma A^x_a B^x_a \sigma |\tau} \\
	 &=  \Ex_{x \sim \mu} \sum_{a} \braket{\tau| \sigma A^x_a \cdot (A^x_a - B^x_a) \sigma |\tau} \\
	 &\leq \Ex_{x \sim \mu} \sum_{a} \| A^x_a \sigma\ket{\tau} \| \cdot \| (A^x_a - B^x_a) \sigma \ket{\tau}  \| \\
	 &\leq \sqrt{ \Ex_{x \sim \mu} \sum_{a} \| A^x_a \sigma\ket{\tau} \|^2 } \sqrt{ \Ex_{x \sim \mu} \sum_{a} \|  (A^x_a - B^x_a)\sigma\ket{\tau} \|^2} 
	\end{align*}
}
	where in line 2, we use the fact that $\{ A_{a}^x \}$ is projective and a PVM, and the third line follows from Cauchy-Schwartz and the forth line follows from Jensen's inequality. Bounding the first term in line 4 by $1$ completes the claim for the lemma. 
\end{proof}
The second lemma gives a way to combine measurements while preserving distances between the measurements, we remark that this is an analogue of~\cite[Fact 4.20]{natarajanNEEXPMIP2019a}. 
\begin{lemma}[Combination of measurement preserves distance] \label{lem:combmeasure}
	Let $\cX, \cA, \cC$ be finite sets, $\mu$ be a distribution over $\cX^2$ with marginal distribution $\mu_X \sim \cX$ and $\mu_Y \sim \cX$ over the first and second coordinates respectively. For each $(x,y) \in \cX^2$, let $\{ A_{a,b}^x \}_{(a,b) \in \cA^2}$ and  $\{ B_{a,b}^x \}_{(a,b) \in \cA^2}$ be two sets of POVMs in $\BofH$, and let $\{ C_{a,c}^{x,y} \}_{(a,c) \in \cA  \times \cC}$ be a set of POVM in $\BofH$. If $A^{x}_{a,b} \approx_{\delta} B^{x}_{a,b}$ with respect to $\ket{\psi}$ and either $\mu_X$ or $\mu_Y$, then 
	\begin{equation*}
		C_{a,c}^{x,y} A^{x}_{a,b}  \approx_{\delta}C_{a,c}^{x,y} B^{x}_{a,b}, \qquad \text{and} \qquad  A^{x}_{a,b}  C_{a,c}^{x,y} \approx_{\delta} B^{x}_{a,b}C_{a,c}^{x,y}
	\end{equation*}
	where $\approx_{\delta}$ is over the state $\ket{\psi}$ and the distribution $(x,y) \sim \mu$. 
\end{lemma}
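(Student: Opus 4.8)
The lemma has essentially one moving part: a POVM (or sub-POVM) cannot increase the total squared norm of a vector. So the plan is to isolate that fact and then simply unfold the definition of $\approx_\delta$ for the two combined measurements — each indexed by $(a,b,c)\in\cA\times\cA\times\cC$, with the label $a$ shared between the two factors — and plug it in. The elementary fact is: for any family $\{C_e\}_e\subseteq\BofH$ of positive operators with $\sum_eC_e\le\Id$ and any $\xi\in\cH$,
\[
\sum_e\|C_e\xi\|^2=\sum_e\braket{\xi|C_e^2|\xi}\le\sum_e\braket{\xi|C_e|\xi}\le\|\xi\|^2,
\]
since $0\le C_e\le\Id$ forces $C_e^2\le C_e$. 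Applying this to $\{C_{a,c}^{x,y}\}_c$ for fixed $a$ (a sub-POVM, because $\sum_cC_{a,c}^{x,y}\le\sum_{a',c}C_{a',c}^{x,y}=\Id$) gives $\sum_c\|C_{a,c}^{x,y}\xi\|^2\le\|\xi\|^2$ for every $\xi$.

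For the left-multiplication claim I would unfold $\approx_\delta$ over $\ket\psi$ and $(x,y)\sim\mu$ for the combined measurements $\{C_{a,c}^{x,y}A_{a,b}^x\}$ and $\{C_{a,c}^{x,y}B_{a,b}^x\}$:
\[
\Ex_{(x,y)\sim\mu}\sum_{a,b,c}\big\|C_{a,c}^{x,y}(A_{a,b}^x-B_{a,b}^x)\ket\psi\big\|^2=\Ex_{(x,y)\sim\mu}\sum_{a,b}\sum_c\big\|C_{a,c}^{x,y}\,\xi_{a,b}^x\big\|^2,
\]
where $\xi_{a,b}^x=(A_{a,b}^x-B_{a,b}^x)\ket\psi$. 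By the bound above, the inner sum over $c$ is at most $\|\xi_{a,b}^x\|^2$, so the whole expression is at most $\Ex_{(x,y)\sim\mu}\sum_{a,b}\|(A_{a,b}^x-B_{a,b}^x)\ket\psi\|^2$; since this summand depends only on the coordinate $x$, the expectation over $\mu$ collapses to the expectation over its first marginal, $\Ex_{x\sim\mu_X}\sum_{a,b}\|(A_{a,b}^x-B_{a,b}^x)\ket\psi\|^2$, which is $O(\delta)$ by the hypothesis $A_{a,b}^x\approx_\delta B_{a,b}^x$ over $\mu_X$. (If the hypothesis is stated over $\mu_Y$ instead — the measurements being indexed by the second coordinate — the identical computation works with $\mu_Y$ in place of $\mu_X$.)

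For the right-multiplication claim the relevant summand is $\braket{\psi|C_{a,c}^{x,y}(A_{a,b}^x-B_{a,b}^x)^2C_{a,c}^{x,y}|\psi}$, and one again wants to discharge the $c$-sum with the sub-POVM bound and then collapse to the first marginal. The natural way is to write this, using that $\{C_{a,c}^{x,y}\}$ commutes with $\{A_{a,b}^x\}$ and $\{B_{a,b}^x\}$ (which holds in the applications of this lemma), as $\big\|C_{a,c}^{x,y}\,|A_{a,b}^x-B_{a,b}^x|\,\ket\psi\big\|^2$ with $|T|:=(T^2)^{1/2}$; then $\sum_c$ of this is at most $\big\||A_{a,b}^x-B_{a,b}^x|\,\ket\psi\big\|^2=\|(A_{a,b}^x-B_{a,b}^x)\ket\psi\|^2$, and summing over $(a,b)$ and taking $\Ex_{(x,y)\sim\mu}$ gives $O(\delta)$ as before. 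This is where the only real care is needed: because $\approx$ (unlike $\simeq$) is an $L^2$ distance evaluated on $\ket\psi$, it is not preserved by right-composition for free, so one genuinely uses positivity and the commutation just mentioned; everything else is index bookkeeping, the key observation being that the bounding quantity no longer mentions $c$ or the second coordinate, which is precisely what licenses the passage from $\mu$ to its marginal.
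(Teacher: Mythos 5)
Your treatment of the first (left-composition) inequality matches the paper's: expand the $\approx_\delta$ norm, apply the sub-POVM bound $\sum_c\|C^{x,y}_{a,c}\xi\|^2\le\|\xi\|^2$, and pass to the appropriate marginal of $\mu$.

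You have, however, put your finger on a real gap in the paper's proof. The paper writes ``both implications follow a similar proof,'' but the right-composition case does not: $\approx_\delta$ measures $\sum\|(A-B)\ket\psi\|^2$ on the \emph{fixed} vector $\ket\psi$, and right-multiplication by POVM elements is not contractive in that metric. Indeed, $\ket\psi$ can lie in the kernel of $A^x_{a,b}-B^x_{a,b}$ while $C^{x,y}_{a,c}\ket\psi$ does not (e.g.\ $\ket\psi=\ket0$, $A-B=\ketbra{1}{1}$, $C$ drawn from $\{\ketbra{+}{+},\ketbra{-}{-}\}$, suitably padded to POVMs over the stated index sets), making the right-composition sum $\Omega(1)$ even when the hypothesis is $0$. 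So the second conclusion genuinely needs the commutation hypothesis you add; and once $[C^{x,y}_{a,c},\,A^x_{a,b}-B^x_{a,b}]=0$, the most economical route is simply $\|(A-B)C\ket\psi\|=\|C(A-B)\ket\psi\|$, reducing to the first case without the detour through $|A-B|$ (your version is also correct). One caveat worth checking: your fix requires $C$ to commute with \emph{both} $A$ and $B$. That is clean when $A,B\in\alicealg$ and $C\in\alicealg'$, as in the question-reduction soundness argument, but in the oracularization soundness proof the right-composition form is invoked with $A\in\alicealg$ and both $B,C\in\alicealg'$, where $[B,C]=0$ is not automatic, and one-sided commutation by itself does not rescue the inequality (the counterexample above already has $[B,C]=0$). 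It would be worth pinning down exactly which hypothesis each application of the lemma relies on and making it explicit in the statement.
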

\begin{proof}
 	Since both implication follows a similar proof, we only show the first one below. Fix $(x,y) \in \cX^2$ and $(a,b) \in \cA^2$. By expanding the vector state, we see that
	\begin{align*}
		\sum_c  \| \left(C_{a,c}^{x,y} A^{x}_{a,b}  -  C_{a,c}^{x,y} B^{x}_{a,b}   \right) \ket{\psi}\|^2 &= \sum_c \braket{\psi| (A^{x}_{a,b} -  B^{x}_{a,b})^*  (C_{a,c}^{x,y})^*C_{a,c}^{x,y}  (A^{x}_{a,b} -  B^{x}_{a,b})|\psi } \\
		&\leq  \braket{\psi| (A^{x}_{a,b} -  B^{x}_{a,b})^* (A^{x}_{a,b} -  B^{x}_{a,b})|\psi } \\
		&=	 \| \left(	A^{x}_{a,b}  - B^{x}_{a,b}   \right) \ket{\psi}\|^2  
	\end{align*}
	where the second inequality follows from $C$ being a POVM. The lemma follows accordingly.
\end{proof}

\subsection{Quantum correlations } \label{sec:introcorrelations}

In this subsection, we introduce different notions of quantum information that will be used in this paper. Given two finite sets $\cX$ and $\cA$, a (bipartite) correlation set with question set $\cX$ and answer set $\cA$ is the set $\{\text{\gls{Correlations}}\}_{(x,y) \in \cX^2, (a,b) \in \cA^2} \subseteq [0,1]^{\cX^2 \times \cA^2}$ such that 
\begin{equation*}
	\sum_{(a,b) \in \cA^2} C_{x,y,a,b} = 1
\end{equation*}
for all $(x,y) \in \cX^2$. For fixed question pair $(x,y) \in \cX^2$, $\mu(a,b) = C_{x,y,a,b}$ forms a probability distribution over $\cA^2$. We remark that a correlation set could be defined with two different question set and two different answer set, but the formulation above is equivalent by setting some of the $ C_{x,y,a,b} = 0$. In this paper, we are primarily concerned with two sets of correlations, the quantum tensor correlations and the quantum commuting correlations, which we introduce below:

\paragraph{Quantum tensor correlations. }A correlation set $\{C_{x,y,a,b}\}_{(x,y) \in \cX^2, (a,b) \in \cA^2}$ is a \textit{quantum tensor correlation}, if there exist two collections of POVM, $\{ A^x_a \}_{a \in \cA} \subseteq \bM_m(\bC)$ and $\{ B^y_b \}_{b \in \cA} \subseteq \bM_n(\bC)$, along with an entangled state $\ket{\psi} \in \bC^m \otimes \bC^n$ such that 
\begin{equation*}
	C_{x,y,a,b} = \braket{\psi|A_a^x \otimes  B_b^y|\psi}
\end{equation*}
for all $(x,y) \in \cX^2$ and $(a,b) \in \cA^2$. In this case, we refer to the set $\strategy= ( \bC^m \otimes \bC^n, \{ A^x_a \}_{a \in \cA}, \{ B^y_b \}_{b \in \cA}, \ket{\psi})$ as the \textit{tensor product strategy} (or a $*$ strategy) which realizes the correlation $C_{x,y,a,b}$. We use $C_q(\cX, \cA)$ to denote the set of quantum tensor correlations with input set $\cX$ and output set $\cA$ in this paper or simply~\gls{Correlationtensor} if $\cX$ and $\cA$ is clear from context. To abuse notation, we write $C_q$ as $C_*$ so that it is consistent with the non-local games notation. For an integer $n$, we use \gls{CorrelationtensorFD} to denote the set of quantum correlations achievable by a tensor product strategies with dimension $n$
\begin{equation*}
	C_{q}^n:= \left\{ \{\braket{\psi|A_a^x \otimes B_b^y|\psi}\}_{(x,y,a,b)} | \ket{\psi} \in \bC^n \tensor \bC^n,  \{A_a^x\}, \{B^y_b\} \text{ are POVMs in } \bM_n(\bC)    \right\}.
\end{equation*}
For $m < n$, we have $C_q^m \subseteq C_q^n$ and furthermore
\begin{equation*}
	C_q = \bigcup_{n \in \bN^{+}}^{\infty} C_q^n.
\end{equation*}

\paragraph{Quantum commuting correlations. }A correlation $\{C_{x,y,a,b}\}_{(x,y) \in \cX^2, (a,b) \in \cA^2}$ is a \textit{quantum commuting correlation} if there exist a (potentially infinite-dimensional) Hilbert space $\cH$, two sets of POVM $\{ A^x_a \}_{a \in \cA}, \{ B^y_b \}_{b \in \cA} \subseteq \BofH$ such that $[A^x_a, B^y_b] = A^x_a \cdot B^y_b - B^y_b \cdot A^x_a = 0$ for all $(x,y) \in \cX^2$ and $(a,b) \in \cA^2$, and a vector state $\ket{\psi} \in \cH$ such that 
\begin{equation*}
	C_{x,y,a,b} = \braket{\psi|A_a^xC  B_b^y|\psi}
\end{equation*}
for all $(x,y) \in \cX^2$ and $(a,b) \in \cA^2$. In this case, we refer to the set $\strategy= ( \cH, \{ A^x_a \}_{a \in \cA}, \{ B^y_b \}_{b \in \cA}, \ket{\psi})$ as the \textit{commuting operator strategy} (or a $qc$ strategy) which realizes the correlation $C_{x,y,a,b}$. We refer to a strategy (for both tensor product and commuting operator) to be a projective strategy if both the measurement operator $\{A^x_a\}$ and $\{B^y_b\}$ are PVMs. 

We use $C_{qc}(\cX, \cA)$ to denote the set of quantum commuting correlations with input set $\cX$ and output set $\cA$ and \gls{Correlationcommuting} if $\cX$ and $\cA$ are clear from context. Since any tensor product strategy is a commuting operator strategy by definition (as $[A_a^x \otimes \cI_n,\cI_m \otimes B_b^y]$), we have $C_q \subseteq C_{qc}$. As a consequence of the $\MIP^*=\RE$ theorem, we know that this inclusion is strict, or $\overline{C_q} \subsetneq C_{qc}$~\cite{jiMIPRE2022a}. 

In this paper, we work with a specific class of commuting operator strategies known as tracially embeddable strategies introduced in~\cite{linTracialEmbeddableStrategies2024}. We define this class of strategies below.
\begin{definition}[Tracially embeddable strategy, Definition 3.1 of~\cite{linTracialEmbeddableStrategies2024}] \label{def:Tracialemd}
	Let $\cX$ and $\cA$ be a finite set. A commuting operator strategy $\strategy = (\cH, \ket{\psi},\{A_a^x \}_{a \in \cA} , \{B_b^y\}_{b \in \cA})$ is called tracially embeddable if there exists a tracial von Neumann algebra $(\alicealg, \tau)$ with standard form $(\chi_{\tau}, \cL^2(\alicealg, \tau), \ket{\tau})$ and $\sigma \in \alicealg^{+}$ such that $\cH = \cL^2(\alicealg, \tau)$, $\ket{\psi} = \sigma \ket{\tau}$, $\{A_a^x \}_{a \in \cA} \subseteq \alicealg$ and $ \{B_b^y\}_{b \in \cA} \subseteq \alicealg'$. 
\end{definition}
We represent a tracially embeddable strategy as $\strategy = \text{\gls{Tracialstrategy}}$ in this paper (we write $\sigma \ket{\tau}$ in the formulation as $\ket{\psi}$ when the density matrix is not used within the proof). We remark that $(B_b^y)^{op}$ in the above formulation is actually in $\alicealg$ instead, and this is similar to writing Bob's measurement as $(B^y_b \otimes \cI)$ in the finite dimension case (even though Bob's measurement is made on the second register). A correlation $C_{x,y,a,b}$ is tracially embeddable if there exists a tracially embeddable strategy which realizes said correlation, and we use $C^{\Tr}_{qc} \subseteq C_{qc} $ to denote the set of tracially embeddable correlations. As the main result of~\cite{linTracialEmbeddableStrategies2024}, the set of tracially embeddable correlations can be used to approximate the set of quantum commuting correlations. 
\begin{theorem}[Approximation of tracially embeddable correlations,Theorem 3.2 of~\cite{linTracialEmbeddableStrategies2024}] \label{thm:tracialembedding}
	Let $\cX$ and $\cA$ be two arbitrary finite sets, then
	\begin{equation*}
		\overline{\cC_{qc}^{Tr}(\cX, \cA)} = \cC_{qc}(\cX, \cA).
	\end{equation*}
	where the closure above is in the $l_1$ norm of $[0,1]^{|\cX|^2 \cdot |\cA|^2}$. 
\end{theorem}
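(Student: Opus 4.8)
The plan is to prove the two inclusions of $\overline{\cC_{qc}^{Tr}(\cX,\cA)} = \cC_{qc}(\cX,\cA)$ separately; the inclusion $\subseteq$ is routine, and the reverse inclusion carries all the content. For $\overline{\cC_{qc}^{Tr}(\cX,\cA)}\subseteq \cC_{qc}(\cX,\cA)$: a tracially embeddable strategy is in particular a commuting operator strategy, since in the standard form of a tracial von Neumann algebra $(\alicealg,\tau)$ the algebra $\alicealg$ commutes with its commutant $\alicealg'$, so $\cC_{qc}^{Tr}\subseteq\cC_{qc}$. It then suffices to recall that $\cC_{qc}(\cX,\cA)$ is $l_1$-closed: it is the image of the (weak-$*$ compact) state space of the universal unital $C^*$-algebra generated by $|\cX|$ POVMs with outcomes in $\cA$, maximally tensored with a second copy of itself, under the continuous evaluation map $\phi\mapsto(\phi(a_a^x\otimes a_b^y))_{x,y,a,b}$ into the finite-dimensional space $[0,1]^{|\cX|^2|\cA|^2}$, hence a compact, thus closed, subset.

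For the reverse inclusion $\cC_{qc}(\cX,\cA)\subseteq\overline{\cC_{qc}^{Tr}(\cX,\cA)}$, fix a correlation $C\in\cC_{qc}(\cX,\cA)$, an error parameter $\epsilon>0$, and a commuting operator strategy $(\cH,\ket\psi,\{A_a^x\},\{B_b^y\})$ realizing $C$. First I would restrict $\cH$ to the cyclic subspace $\overline{(\mathrm{vN}(\{A_a^x\})\vee\mathrm{vN}(\{B_b^y\}))\ket\psi}$, which changes nothing, and then perturb $\ket\psi$ by an $O(\epsilon)$ amount in norm (changing $C$ by $O(\epsilon)$ in $l_1$) so that $\ket\psi$ becomes cyclic and separating for $\cM:=\mathrm{vN}(\{A_a^x\})$. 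In the resulting (Haagerup) standard form of $\cM$, the commutant $\cM'$ is anti-isomorphic to $\cM^{op}$ and contains Bob's operators, so the only remaining defect relative to \Cref{def:Tracialemd} is that the vector state $\omega_\psi$ on $\cM$ need not be tracial.

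The heart of the argument is to repair exactly this defect while only perturbing $C$ by $O(\epsilon)$ in $l_1$. I would do this by passing to the continuous core, i.e. the crossed product $\widetilde\cM=\cM\rtimes_{\sigma^{\omega_\psi}}\bR$ of $\cM$ by its modular automorphism group: $\widetilde\cM$ is a semifinite von Neumann algebra with a canonical normal trace $\tau$, into which $\cM$ embeds (and correspondingly $\cM'$ embeds into the commutant, so Bob's operators are accommodated), and in which $\omega_\psi$ is implemented by a positive operator $h$ affiliated to $\widetilde\cM$, $\omega_\psi(\cdot)=\tau(h\,\cdot)$. Truncating $h$ to a bounded $h_N=h\wedge N$ replaces $\omega_\psi$ by $\tau(h_N\,\cdot)$, which is $O(\epsilon)$-close for $N$ large by normality of $\tau$; cutting $\widetilde\cM$ down by a finite projection $q\in\widetilde\cM$ chosen so that $q$ barely moves the relevant vectors produces a genuine tracial von Neumann algebra $(\alicealg,\tau)=(q\widetilde\cM q,\tau(q)^{-1}\tau(q\cdot q))$ in standard form, with a bounded positive $\sigma=\sqrt{q h_N q}$, with Alice's and Bob's compressed operators lying in $\alicealg$ and $\alicealg'$, and with correlation $\tau(\sigma A_a^x\sigma B_b^y)$ (as in \Cref{def:Tracialemd}) within $O(\epsilon)$ of $C$. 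Letting $\epsilon\to0$ finishes the proof.

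The main obstacle is precisely this last step: reconciling the generically type III modular structure of an arbitrary vector state with the rigid finite-trace, standard-form requirement of a tracially embeddable strategy, \emph{while} keeping the $l_1$ error under control. The truncation of the unbounded Radon--Nikodym operator $h$ and the choice of the finite cutdown projection $q$ both require normality of $\tau$ together with a careful Cauchy--Schwarz-type estimate to guarantee that compressing Alice's and Bob's operators to $q\widetilde\cM q$ perturbs each correlation entry by only $O(\epsilon)$. Everything else — closedness of $\cC_{qc}$, the reduction to a cyclic and separating vector, and (if one prefers projective measurements) a Naimark dilation respecting the commuting structure — is standard.
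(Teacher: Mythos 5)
The easy direction is handled correctly: a tracially embeddable strategy is a commuting operator strategy, and $\cC_{qc}(\cX,\cA)$ is compact, hence closed, as the continuous image of a weak-$*$ compact state space. You are also right that the heart of the matter is bridging the gap between an arbitrary (possibly type III) von Neumann algebra $\cM$ carrying a normal state and the finite, tracial, standard-form setup of Definition~\ref{def:Tracialemd}, and the continuous core $\widetilde\cM = \cM\rtimes_{\sigma^{\omega_\psi}}\bR$ is a natural tool to reach for.

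However, the central formula your argument hinges on is not correct as stated, and the gap it hides is exactly the content of the theorem. The identity ``$\omega_\psi(\cdot)=\tau(h\,\cdot)$'' conflates the dual weight $\tilde\omega_\psi$ on $\widetilde\cM$ with the original state $\omega_\psi$ on $\cM$. The dual weight is a normal faithful \emph{semifinite weight} on the much larger algebra $\widetilde\cM$, not a state; it satisfies $\tilde\omega_\psi(1)=\infty$ and does not restrict to $\omega_\psi$ on the embedded copy of $\cM$. The vector $\ket\psi$ lives on $\cH$, which is not $L^2(\widetilde\cM,\tau)$, and there is no canonical vector on $\cH\otimes L^2(\bR)$ implementing $\omega_\psi$ as a vector state for $\widetilde\cM$. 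The correct bridge is Haagerup's $L^p$-space construction, under which $\ket\psi$ corresponds to $h^{1/2}$ --- but $h^{1/2}$ is a $\tau$-measurable operator scaling under the dual $\bR$-action, not an element of $\widetilde\cM$, and the relevant pairing uses the Haagerup trace on $L^1(\cM)$, which is a different functional from $\tau$ (it vanishes identically on $\widetilde\cM$). Translating the Haagerup expression $\mathrm{tr}(h^{1/2}A_a^x h^{1/2}JB_b^yJ)$ into a genuine tracially embeddable strategy --- bounded $\sigma\in\alicealg^+$, finite trace, corner standard form --- is where the proof has to live, and your proposal describes the obstacle rather than surmounting it. In particular, nothing in the proposal shows that the truncation $h_N=h\wedge N$ and compression by a finite projection $q\in\widetilde\cM$ produce a correlation of the form $\tau_q(\sigma\,qA_a^xq\,(qB_b^yq)^{op}\sigma)$ that agrees with $\braket{\psi|A_a^xB_b^y|\psi}$ up to an error uniformly small over all $(x,y,a,b)$; you would need to exhibit the relation between the corner's standard form (and its modular conjugation $J_q$) and the ambient $J$, and verify that $qB_b^yq$ really lands in the commutant of $q\widetilde\cM q$ on $qL^2(\widetilde\cM,\tau)q$, not merely that the compressed operators commute pairwise. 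As written, the proposal names the right ingredients but does not supply the estimate or the identification that would make the error go to zero.

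A secondary, more minor point: the reduction to a cyclic \emph{and} separating vector for $\cM=\mathrm{vN}(\{A_a^x\})$ is glossed over. Restricting to the cyclic subspace for $\cM\vee\cM'$ does not by itself make $\ket\psi$ cyclic for $\cM$ alone; you need to cut by the support projection of $\omega_\psi$ in $\cM$ (lying in $\cM$, hence commuting with Bob's operators) to get separating, and then restrict to $[\cM\ket\psi]$ (whose projection lies in $\cM'$) to get cyclic, checking at each stage that the correlation and the commuting structure are preserved. This is routine but should be stated.
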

Intuitively, a tracially embeddable strategy is a commuting operator strategy with similar structure as a finite-dimensional, tensor product strategy. We refer to~\cite[Example 3.3]{linTracialEmbeddableStrategies2024} for more intuition on these similarities. In this paper, we primarily work with tracially embeddable strategies when considering a correlation from the commuting operator model. Due to the similarities with finite-dimensional strategies, a large portion of the proofs given in this paper follow similarly to some of the proofs given in~\cite{jiMIPRE2022a}. Audiences with no prior background in operator algebra might find the reference chart given in~\Cref{fig:tracialemdtofd} to be helpful when reading some of the proofs in this paper. 

\begin{table}[!htb]
	\centering
	\small
	\vspace{1em}
	\begin{tabular}{|c |c| c|}
	\hline
			 & Tracially embeddable   & Finite-dimensional, tensor product   \\
			 & strategies &    strategy (over $\bC^n \otimes \bC^n$) \\
		\hline
		Algebra &  $\alicealg$  & $\cM_n (\bC) \otimes \cI_n$ \\
		\hline
		Commutant & $\alicealg'$ &  $\cI_n \otimes \cM_n (\bC)$ \\
		\hline
		Hilbert space ($\cH$) & $\cL^2(\alicealg, \tau)$ & $\bC^n \otimes \bC^n$ \\
		\hline
		Tracial state & $\ket{\tau}$ & $\MEstate{n} = \frac{1}{\sqrt{n}}\sum_{i=0}^n \ket{ii}$ \\
		\hline
		Reduced density matrices & $\sigma^2$ & $\Tr_B(\ketbra{\psi}{\psi})$ \\
		\hline
		Measurement operator for prover 1 (Alice) & $A^x_a$ & $A^x_a \otimes \cI_n$ \\
		\hline
		Measurement operator for prover 2 (Bob) & $B^y_b$ & $\cI_n \otimes B^y_b$ \\
		\hline
		Observable switching trick & $A \ket{\tau} = A^{op} \ket{\tau}$ & $A \otimes \cI_n \MEstate{n} = \cI_n \otimes A^T \MEstate{n}$ \\
		\hline
	\end{tabular}
	\caption{A diagram translating components of a tracially embeddable strategy to its finite-dimensional counterpart. We assume the finite-dimensional strategy is defined over registers $A$ and $B$. }
	\label{fig:tracialemdtofd}
\end{table}
Tracially embeddable strategies also give a notion of a symmetric strategy for the commuting operator model, given by the definition below:
\begin{definition}[Symmetric strategy] \label{def:symstrategy}
	Let $( \cL^2(\alicealg, \tau), \sigma \ket{\tau},\{ A_{a}^x \},  \{ (B_{b}^y)^{op} \} )$ be a tracially embeddable strategy. We call this strategy \textit{symmetric} if $A_{a}^x = B_{a}^x$ for all $x \in \cX$ and $a \in \cA$. 
\end{definition}
In the finite-dimensional setting, a symmetric strategy is equivalent to $A_{a}^x \otimes \cI = (B_{a}^x)^{T} \otimes \cI$ for all $(x,a) \in \cX \times \cA$. Symmetric strategies will be written as $\strategy^{\text{sym}} = (\cL^2(\alicealg, \tau), \sigma \ket{\tau},\{ A_{a}^x \})$ in this paper.

\paragraph{Synchronous correlations. } In this paper, we work with a set of correlations known as synchronous correlations. For $t \in \{ *, qc\}$ and finite set $\cX$ and $\cA$, a correlation $\{ C_{x,y,a,b}\} \in C^t(\cX, \cA)$ is synchronous iff for all $x \in \cX$ and $(a,b) \in \cA^2$
\begin{equation*}
	C_{x,x,a,b} = \delta_{a,b},
\end{equation*}
and we use \gls{Correlationsyncronous} to denote the set of synchronous correlations for models $t$. Synchronous correlations were first studied in~\cite{paulsenEstimatingQuantumChromatic2016}, and have been used in~\cite{mousaviNonlocalGamesCompression2022} to study the complexity of zero gap $\MIP^*$. We call a strategy that realizes a synchronous correlation to be a \textit{synchronous strategy}. The following theorem shows that all synchronous correlations can be realized by a symmetric strategy. 
\begin{lemma}[Synchronous correlations can be realized using a symmetric strategy] \label{lem:structsync}
	Let $C_{x,y,a,b} \in C_{qc}^s$, then there exists a projective, symmetric strategy $\strategy = (\cL^2(\alicealg, \tau), \ket{\tau}, \{A_a^x\})$ which realizes $C_{x,y,a,b}$. Furthermore, if $C_{x,y,a,b} \in C_q^s$, then $\strategy$ is finite-dimensional.
\end{lemma}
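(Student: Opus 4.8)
The plan is to take an arbitrary commuting operator strategy realizing the synchronous correlation and reorganize it into a tracial von Neumann algebra in standard form, so that Alice's and Bob's operators become left and right multiplication by the \emph{same} elements. Since $C_{x,y,a,b} \in C_{qc}^s$, fix a commuting operator strategy $\strategy = (\cH, \ket{\psi}, \{A_a^x\}, \{B_b^y\})$ realizing it, so $[A_a^x, B_b^y]=0$ and $C_{x,y,a,b} = \braket{\psi|A_a^x B_b^y|\psi}$. The synchronicity condition $C_{x,x,a,b} = \delta_{a,b}$ gives $\sum_a \braket{\psi|A_a^x B_a^x|\psi} = \sum_a C_{x,x,a,a} = 1$, and combined with $0 \le A_a^x, B_a^x \le \cI$, $\sum_a A_a^x = \sum_a B_a^x = \cI$, and $\braket{\psi|A_a^xB_a^x|\psi}=\braket{\psi|B_a^xA_a^x|\psi}$, a one-line estimate bounding $(A_a^x)^2 \le A_a^x$ (and likewise for $B$) yields $\sum_a \|(A_a^x - B_a^x)\ket{\psi}\|^2 \le 1+1-2 = 0$, hence $A_a^x\ket{\psi} = B_a^x\ket{\psi}$ for all $x,a$. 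Let $\alicealg_0$ be the $C^*$-algebra generated by $\{A_a^x\}$ and set $\tau_0(T) := \braket{\psi|T|\psi}$ on $\alicealg_0$. Using $A_c^z\ket{\psi}=B_c^z\ket{\psi}$ (hence $\braket{\psi|A_c^z}=\braket{\psi|B_c^z}$ by self-adjointness) together with $[A,B]=0$, one checks $\tau_0$ is cyclically invariant on words: moving the last factor of $A_{a_1}^{x_1}\cdots A_{a_k}^{x_k}$ to the front replaces $A_{a_k}^{x_k}$ by $B_{a_k}^{x_k}$, commutes it past everything, and turns it back into $A_{a_k}^{x_k}$ on the left. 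Iterating gives all cyclic rotations, so by linearity and continuity $\tau_0$ is a tracial state on $\alicealg_0$ (this is precisely the synchronous-to-tracial passage of~\cite{paulsenEstimatingQuantumChromatic2016}).

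Next I would apply the GNS construction to $(\alicealg_0, \tau_0)$ and let $\alicealg := \chi_{\tau_0}(\alicealg_0)''$ act on $\cL^2(\alicealg_0,\tau_0)$, with the normal extension of the trace (still written $\tau$). Then $(\alicealg,\tau)$ is a tracial von Neumann algebra, $\ket{\tau}$ is cyclic and separating, and this is the standard form. Put $A_a^x := \chi_{\tau_0}(A_a^x) \in \alicealg$ and let Bob use $(A_b^y)^{op} \in \alicealg'$. By the right regular representation~\eqref{eq:oppemb}, $(A_b^y)^{op}(A_a^x\ket{\tau}) = (A_a^x A_b^y)\ket{\tau}$, so $\braket{\tau|A_a^x (A_b^y)^{op}|\tau} = \tau(A_a^x A_b^y) = \tau_0(A_a^x A_b^y) = \braket{\psi|A_a^x A_b^y|\psi} = \braket{\psi|A_a^x B_b^y|\psi} = C_{x,y,a,b}$, using $A_b^y\ket{\psi}=B_b^y\ket{\psi}$ in the penultimate step. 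Hence $\strategy^{\text{sym}} = (\cL^2(\alicealg,\tau), \ket{\tau}, \{A_a^x\})$ is a symmetric strategy realizing $C_{x,y,a,b}$.

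For projectivity, synchronicity gives $\tau(A_a^x A_b^x) = C_{x,x,a,b} = 0$ for $a\neq b$. Since $\tau$ is a \emph{faithful} trace, $0 = \tau(A_a^x A_b^x) = \tau\big((A_a^x)^{1/2} A_b^x (A_a^x)^{1/2}\big)$ with a positive argument, so $(A_a^x)^{1/2} A_b^x (A_a^x)^{1/2} = 0$ and therefore $A_a^x A_b^x = 0$ for $a \neq b$; then $A_a^x = A_a^x\big(\sum_b A_b^x\big) = (A_a^x)^2$, so each $\{A_a^x\}_a$ is a PVM (one could also invoke~\Cref{lem:orthogonalizationlemma}, but exact projectivity is free here). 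Finally, if $C_{x,y,a,b} \in C_q^s$ the original strategy may be taken finite-dimensional, so $\alicealg_0 \subseteq \cM_m(\bC)$ is finite-dimensional, hence $\cL^2(\alicealg_0,\tau_0)$ is finite-dimensional and the resulting $\strategy^{\text{sym}}$ is finite-dimensional.

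I expect the main obstacle to be purely the infinite-dimensional bookkeeping in the middle step: checking that $\tau_0$ extends to a normal faithful trace on $\alicealg := \chi_{\tau_0}(\alicealg_0)''$ and that $(\alicealg,\tau)$ with the GNS vector is genuinely in standard form (cyclic and separating $\ket{\tau}$, $\alicealg' = \alicealg^{op}$). The algebraic heart — traciality — is essentially automatic once $A_a^x\ket{\psi}=B_a^x\ket{\psi}$ and the two families commute, since that forces cyclic invariance on the generating $*$-algebra; faithfulness is the standard consequence of passing to the GNS quotient, and normality follows from taking the double commutant and the canonical normal extension of a trace. Everything else is routine verification.
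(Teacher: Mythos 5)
Your proof is correct and follows exactly the route the paper indicates: the paper delegates the work to~\cite[Theorem 5.5]{paulsenEstimatingQuantumChromatic2016}, and your argument is a self-contained reconstruction of that theorem (deriving $A_a^x\ket{\psi}=B_a^x\ket{\psi}$ from synchronicity, reading off cyclic invariance of $\tau_0$, passing to the GNS double commutant, and recovering projectivity from faithfulness of the trace). No gaps.
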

In the above lemma, the state used for the $\strategy$ is precisely the GNS to the tracial state to the algebra $\alicealg$, or in the finite dimension case, the maximally entangled state $\MEstate{n}$.The above statement can be proven by first taking the double commutant of $\{ A^x_a\}$ from~\cite[Theorem 5.5]{paulsenEstimatingQuantumChromatic2016}, then applying point iii) of~\cite[Theorem 5.5]{paulsenEstimatingQuantumChromatic2016} to get the desired result. In this paper, we assume all synchronous correlations are realized using synchronous strategies guaranteed by~\Cref{lem:structsync}. Since the synchronous strategy guaranteed in~\Cref{lem:structsync} can be represented by a symmetric strategy, we denote all synchronous strategies in this paper as the projective strategy $\strategy = (\cL^2(\alicealg, \tau), \ket{\tau}, \{A_a^x \})$. 

In this paper, we also consider correlations which are approximately synchronous. Given a distribution $\mu \sim \cX^2$, we denote the \textit{synchronicity} of the correlation set with respect to $\mu$ as
\begin{equation} \label{eq:syncgame}
	\text{\gls{Synchronicity}} := \max\{\Ex_{x \sim \mu_x} \sum_{a \neq b} C_{x,x,a,b} , \Ex_{y \sim \mu_y} \sum_{a \neq b} C_{y,y,a,b}\}, 
\end{equation}
where $\mu_x$ (resp. $\mu_y$) denotes the marginal distribution of $x$  (resp. $y$) over $\mu$. For a quantum stategy $\strategy$, we use $\deltasync(\mu, \strategy)$ to denote the synchronicity for the correlation generated by $\strategy$ with respect to $\mu$. For a tensor product/commuting operator strategy $\strategy = \strategy= ( \cH, \{ A^x_a \}_{a \in \cA}, \{ B^y_b \}_{b \in \cA}, \ket{\psi})$, by definition we have 
\begin{equation} \label{eq:distancemeasuresyncstrategy}
 A^x_a \simeq_{\deltasync(\mu, \strategy)} B^x_a
\end{equation}
where $\simeq_{\deltasync(\mu, \strategy)}$ is over the state $\ket{\psi}$ and both the distribution $\mu_x$ and $\mu_y$. We define a correlation $C$ to be \textit{$\delta$-synchronous} with respect to $\mu$ if $\deltasync(\mu, C)  \leq \delta$ and $\delta$-synchronous if the underlying distribution $\mu$ is clear from context. We recall the following lemma which states that all approximately synchronous correlations can be approximated by a correlation realized by a symmetric strategy. 

\begin{corollary}[Corollary A.7 of~\cite{linTracialEmbeddableStrategies2024}] \label{cor:orthogonalizationlemmasync}
	Let $C_{x,y,a,b}$ be a $\delta$-synchronous correlation with respect to some distribution $\mu$, and let $(\cL^2(\alicealg, \tau),\sigma \ket{\tau}, \{A_a^x \} , \{B_b^y\})$ be a strategy which realizes the correlation  $C_{x,y,a,b}$. Then there exists a symmetric, projective, and $\delta^{\frac{1}{4}}$-synchronous strategy $( \cL^2(\alicealg, \tau), \sigma \ket{\tau},\{ P_{a}^x \})$ with $A_a^x \approx_{ O(\delta)}  P^x_a$ over the distribution $\mu_x$, the marginal distribution of $\mu$ on the first variable, and over the state $\sigma \ket{\tau}$.Moreover,
	\begin{equation}\label{eq:orthogonalizationlemmasynceq}
		\Ex_{(x,y) \sim \mu} \sum_{a \in A} |\braket{\tau| \sigma A_a^x (B_b^y)^{op} \sigma |\tau} -\braket{\tau| \sigma P_a^x (P_b^y)^{op} \sigma |\tau}| \leq O(\delta^{\frac{1}{4}}),
	\end{equation} 
\end{corollary}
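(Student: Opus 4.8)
The plan is to prove the corollary in two stages: a \emph{projectivization} stage, reducing to the case where the given strategy is already projective, and a \emph{symmetrization} stage, which uses the almost-synchronous structure of the correlation to merge Alice's and Bob's measurements into a single symmetric PVM on the same tracial von Neumann algebra $\alicealg$ and the same vector state $\sigma\ket{\tau}$.

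For projectivization, observe first that $\delta$-synchronicity forces the measurement operators to be nearly projective on average: the diagonal constraint gives $\Ex_{x\sim\mu_x}\sum_a\braket{\tau|\sigma A_a^x (B_a^x)^{op}\sigma|\tau}\ge 1-\delta$, and applying Cauchy--Schwarz twice together with $\sum_a (A_a^x)^2\le\cI$ yields $\Ex_{x\sim\mu_x}\sum_a\braket{\tau|\sigma (A_a^x)^2\sigma|\tau}\ge 1-O(\delta)$, and likewise for $\{B_b^y\}$. By Markov this holds with $O(\sqrt\delta)$-slack on a $(1-O(\sqrt\delta))$-fraction of questions, so the Hilbert--Schmidt form of the orthogonalization lemma (\lem{orthogonalizationlemma}, applied with the vector $\sigma\ket{\tau}$) produces PVMs $\{\tilde A_a^x\}\subseteq\alicealg$ and $\{\tilde B_b^y\}\subseteq\alicealg'$ with $\tilde A_a^x\approx_{O(\sqrt\delta)}A_a^x$ over $\mu_x,\sigma\ket{\tau}$ and the analogous bound for $\{\tilde B_b^y\}$ over $\mu_y,\sigma\ket{\tau}$. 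Propagating these vector-norm bounds to the correlation entries via Cauchy--Schwarz and \lem{combmeasure} (the family $\{(\tilde B_b^y)^{op}\}$ being projective makes the sum over $b$ collapse cleanly), the strategy $(\cL^2(\alicealg,\tau),\sigma\ket{\tau},\{\tilde A_a^x\},\{\tilde B_b^y\})$ is $O(\delta^{1/4})$-synchronous and its correlation is $O(\delta^{1/4})$-close to $C$. Thus from now on we may assume $\{A_a^x\},\{B_b^y\}$ projective, having downgraded $\delta$ to $O(\sqrt\delta)$.

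For symmetrization, I would follow the Vidick--de la Salle almost-synchronous method~\cite{vidickAlmostSynchronousQuantum2022,delasalleAlmostSynchronousCorrelations2023} in the tracial setting of~\cite{linTracialEmbeddableStrategies2024}. With projective measurements, $\delta$-synchronicity and \lem{closetodistance} first upgrade $A_a^x\simeq_\delta (B_a^x)^{op}$ to $A_a^x\approx_{O(\delta)}(B_a^x)^{op}$ over $\mu_x,\sigma\ket{\tau}$; combining this with the switching identity $a\ket{\tau}=a^{op}\ket{\tau}$ and the fact that $\sigma$ commutes with $\alicealg'$, one extracts the key estimate that the reduced density $\sigma$ \emph{approximately commutes} with the measurements in Hilbert--Schmidt norm, $\Ex_{x\sim\mu_x}\sum_a\|[\sigma,A_a^x]\ket{\tau}\|^2\le O(\poly(\delta))$. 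This commutation is exactly what makes symmetrization legitimate: it lets one take $P_a^x$ to be (morally) $A_a^x$, apply \lem{orthogonalizationlemma} once more to restore exact projectivity after the perturbation, and then check that $\{P_a^x\}\subseteq\alicealg$ is a PVM with $A_a^x\approx_{O(\delta)}P_a^x$ over $\mu_x,\sigma\ket{\tau}$, that the symmetric strategy $(\cL^2(\alicealg,\tau),\sigma\ket{\tau},\{P_a^x\})$ is $O(\delta^{1/4})$-synchronous (using $\tau(P_a^x P_b^x)=\delta_{ab}\tau(P_a^x)$ together with the commutation to slide $\sigma$ past the measurements with $O(\poly(\delta))$ error), and that the displayed correlation estimate holds by the same bookkeeping as in the first stage. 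Tracking the losses, each Cauchy--Schwarz/Markov step costs a square root and a constant number of them are used, which is how one lands at the exponent $\tfrac14$.

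The main obstacle is the Hilbert--Schmidt commutation estimate $\sum_a\|[\sigma,A_a^x]\ket{\tau}\|^2\lesssim\poly(\delta)$: this is the one genuinely non-formal ingredient, since naive symmetrization fails (replacing the vector state $\sigma\ket{\tau}$ by $\ket{\tau}$ changes the marginals, while simply declaring both players to use $\{A_a^x\}$ on $\sigma\ket{\tau}$ lets the correlation drift by an amount only the commutation can control). Proving it is the heart of the almost-synchronous machinery of~\cite{linTracialEmbeddableStrategies2024}, and beyond that one must keep every estimate independent of $|\cX|$ and $|\cA|$ (or absorb into the $O(\cdot)$ whatever $\poly(|\cA|)$ factors arise when converting between $\ell_2$ and $\ell_1$ measurement distances), and perform both orthogonalizations inside $\alicealg$, respectively $\alicealg'$, so that the output strategy is again tracially embeddable on the \emph{same} algebra with the \emph{same} vector state $\sigma\ket{\tau}$, as the statement requires.
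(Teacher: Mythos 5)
The corollary is quoted from Appendix~A of~\cite{linTracialEmbeddableStrategies2024} rather than proved in the present paper, so there is no in-paper argument to compare against; reviewing your sketch on its own terms, the two-stage plan (projectivize with \lem{orthogonalizationlemma}, then symmetrize on the same state $\sigma\ket{\tau}$) is the right shape and the projectivization stage reads fine, but the symmetrization stage has two real gaps. You claim to ``extract'' $\Ex_{x\sim\mu_x}\sum_a\|[\sigma,A_a^x]\ket{\tau}\|^2 \le O(\poly(\delta))$ from the switching identity together with $A_a^x\approx_{O(\delta)}(B_a^x)^{op}$, but this does not close by itself: $A_a^x\sigma\ket{\tau}\approx\sigma B_a^x\ket{\tau}$ only rewrites $[\sigma,A_a^x]\ket{\tau}$ as approximately $\sigma(A_a^x-B_a^x)\ket{\tau}$, which still hides the commutator of $\sigma$ with $B$, and triangle-inequality chasing just cycles between $[\sigma,A]$ and $[\sigma,B]$. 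The estimate is nonetheless true and elementary, for a different reason: cyclicity of $\tau$ gives the exact identity $\sum_a\|[\sigma,A_a^x]\ket{\tau}\|^2 = 2\sum_a\tau(\sigma^2 (A_a^x)^2) - 2\sum_a\tau(\sigma A_a^x\sigma A_a^x)$; the first sum is at most $1$ for a POVM, while $\delta$-synchronicity combined with the positivity of $X\mapsto\tau(\sigma X\sigma X)$ on self-adjoints and Araki--Lieb--Thirring forces $\Ex_x\sum_a\tau(\sigma A_a^x\sigma A_a^x)\ge 1-2\delta$, so the commutator is $O(\delta)$ by a few lines of algebra. It is not ``the heart of the almost-synchronous machinery,'' and invoking the Vidick--de la Salle rounding theorem here (which produces an \emph{exactly} synchronous correlation on a \emph{different} state) overshoots the much weaker target that this corollary actually asks for.

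More seriously, you propose ``absorbing into the $O(\cdot)$ whatever $\poly(|\cA|)$ factors arise'' in the conversion from Hilbert--Schmidt closeness of the measurements to the $\ell_1$ bound on the correlations. That option is not available: the right-hand side of \eq{orthogonalizationlemmasynceq} is dimension-free, and the corollary is applied in this paper at scales where $|\cA|$ grows. A raw Cauchy--Schwarz from $\Ex_x\sum_a\|(A_a^x-P_a^x)\sigma\ket{\tau}\|^2\le O(\delta)$ to $\Ex_{(x,y)\sim\mu}\sum_{a,b}|C_{x,y,a,b}-C'_{x,y,a,b}|$ genuinely picks up a $\sqrt{|\cA|}$ factor. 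To remove it one has to exploit that both correlation families are probability distributions over $(a,b)$, for instance by lower-bounding the Bhattacharyya overlap $\sum_{a,b}\sqrt{C_{x,y,a,b}\,C'_{x,y,a,b}}$, which (after using the commutation of the two algebras and $\sum_b (B_b^y)^{op}=\cI$) reduces to near-projectivity-controlled quantities such as $\sum_a\|((A_a^x)^{1/2}-P_a^x)\sigma\ket{\tau}\|^2$. Without some such square-root-of-measurements device the $\ell_1$ step does not close, and this is a genuine gap in the sketch rather than bookkeeping.
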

As shown in the theorem below, the set of $\delta$-synchronous can always be approximated by the set of synchronous correlations. 

\begin{theorem}[Rounding for synchronous correlations] \label{thm:Rounding}
	There exist a universal polynomial $s^{\text{Rd}}: [0,1] \rightarrow [0,1]$ such that $ \textbf{s}^{\text{Rounding}}(\delta) =  O(\delta^{\frac{1}{8}})$ such that that the following holds:
	Let $\mu \sim \cX^2$ be a distribution and $t \in \{q,qc\}$, and let $\{C_{x,y,a,b}\} \in \cC_{t}(\cX, \cA)$ be a $\delta$-synchronous correlation. Then there exist a collection of synchronous correlations $C_{x,y,a,b}^{s}  \subseteq \cC_{t}^{s} (\cX, \cA)$ such that
	\begin{equation*}
		\Ex_{(x,y) \sim \mu} \sum_{a,b}|C_{x,y,a,b} - C_{x,y,a,b}^{s} |  \leq \textbf{s}^{\text{Rounding}}(\delta).
	\end{equation*}
\end{theorem}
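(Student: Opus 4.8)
The plan is to get the symmetrization and orthogonalization for free from \Cref{cor:orthogonalizationlemmasync}, and then to round the resulting strategy to an exactly synchronous one by truncating the reduced density, in the spirit of de la Salle~\cite{delasalleAlmostSynchronousCorrelations2023} but carried out inside the tracially embeddable framework. First I would reduce to symmetric projective strategies. Given a $\delta$-synchronous $\{C_{x,y,a,b}\}\in\cC_t(\cX,\cA)$: when $t=q$ the witnessing strategy is already finite dimensional, and when $t=qc$ a routine compactness argument over the dense family supplied by \Cref{thm:tracialembedding} (using that $\cC_{qc}(\cX,\cA)$ and $\cC_{qc}^s(\cX,\cA)$ are closed) lets me assume the realizing strategy is tracially embeddable. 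Applying \Cref{cor:orthogonalizationlemmasync} then produces a symmetric, projective, $\delta^{1/4}$-synchronous strategy $\strategy=(\cL^2(\alicealg,\tau),\sigma\ket{\tau},\{P_a^x\})$ whose correlation $\hat C_{x,y,a,b}=\braket{\sigma\tau|P_a^x(P_b^y)^{op}|\sigma\tau}=\tau(\sigma P_a^x\sigma P_b^y)$ satisfies $\Ex_{(x,y)\sim\mu}\sum_{a,b}|C_{x,y,a,b}-\hat C_{x,y,a,b}|=O(\delta^{1/4})$. Since $O(\delta^{1/4})\le O(\delta^{1/8})$ for $\delta\in[0,1]$, it remains only to produce an exactly synchronous correlation of type $t$ within $O(\delta^{1/8})$ of $\hat C$.

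The structural fact driving the rest is that approximate synchronicity of $\strategy$ is literally the statement that the reduced density $\rho:=\sigma^2$ approximately commutes with the measurements: for a projection $P$ and self-adjoint $\sigma$ one computes $\|[\sigma,P]\|_{2,\tau}^2=2(\tau(\sigma P\sigma)-\tau(\sigma P\sigma P))$, so summing over a PVM gives $\sum_{a\neq b}\hat C_{x,x,a,b}=\tfrac12\sum_a\|[\sigma,P_a^x]\|_{2,\tau}^2$, whence $\delta^{1/4}$-synchronicity of $\hat C$ yields $\Ex_{x\sim\mu_x}\sum_a\|[\sigma,P_a^x]\|_{2,\tau}^2=O(\delta^{1/4})$ and likewise over $\mu_y$. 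If $\rho$ were \emph{exactly} central for the von Neumann algebra $\cP$ generated by $\{P_a^x:x\in\supp(\mu_x)\cup\supp(\mu_y)\}$, then $\tau_\rho(\cdot):=\tau(\rho\,\cdot)$ would restrict to a faithful normal tracial state on $\cP$ and $\hat C_{x,y,a,b}=\tau(\sigma^2P_a^xP_b^y)=\tau_\rho(P_a^xP_b^y)$ would already be an exactly synchronous correlation of type $t$, realized by the corresponding symmetric projective strategy on $(\cP,\tau_\rho)$ in the style of \Cref{lem:structsync}; so the whole task is to repair the failure of centrality, which the estimate above controls on average.

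The remaining, and main, step is to convert ``$\rho$ approximately central'' into ``exactly central at bounded cost,'' and I expect this to be the technical heart of the proof as well as the source of the exponent loss. I would truncate $\rho$ to its bulk: for $R\ge1$ let $q$ be the spectral projection of $\rho$ onto $[R^{-1},R]$; since $\tau(\rho)=1$ we have $\tau(\cI-q)=O(R^{-1})$, so discarding the tails of $\rho$ costs only $O(R^{-1})$ in $\ell_1$, while on the corner $q\alicealg q$ the density is two-sided bounded, so commutator errors propagate with only an $O(R)$ amplification and no uncontrolled operator norm. Applying the conditional expectation onto the commutant of $\rho$ to each $P_a^x$ and re-orthogonalizing via \Cref{lem:orthogonalizationlemma} produces genuine PVMs on a tracial von Neumann algebra whose induced exactly synchronous correlation agrees with $\hat C$ on the bulk up to $O(R\cdot\delta^{1/4})$ in $\ell_1$; optimizing $R\sim\delta^{-1/8}$ balances this against the $O(R^{-1})$ truncation loss for a total error of $O(\delta^{1/8})$. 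The construction stays finite dimensional when $t=q$ and commuting when $t=qc$, so the output lies in $\cC_t^s(\cX,\cA)$, and a universal polynomial dominating the resulting bound serves as $\mathbf{s}^{\mathrm{Rounding}}$. The hard part is exactly the corner estimate --- making the replacement of $\{P_a^x\}$ by exact PVMs cost only $O(R)$ times the average commutator size \emph{uniformly in the answer alphabet size}, which forces one to work with Hilbert--Schmidt rather than operator norms and to treat each PVM as a single object rather than outcome by outcome, precisely as in~\cite{delasalleAlmostSynchronousCorrelations2023}; passing through tracially embeddable strategies is what makes those Hilbert--Schmidt estimates available in the commuting-operator setting.
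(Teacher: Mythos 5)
The paper does not actually prove this theorem: immediately below its statement it attributes the result to \cite[Corollary 3.3]{vidickAlmostSynchronousQuantum2022} for $t=q$ and to \cite[Theorem 4.1]{linTracialEmbeddableStrategies2024} and \cite[Theorem 2.1]{delasalleAlmostSynchronousCorrelations2023} for $t=qc$, adding only the observation that a convex combination of synchronous correlations is again synchronous. So your proposal is a from-scratch reconstruction rather than a match against an in-paper argument. The framing is sound and in the same spirit as the cited works: reducing to a symmetric projective tracially embeddable strategy via \Cref{cor:orthogonalizationlemmasync}, and recasting $\delta$-synchronicity as approximate centrality of the reduced density $\rho=\sigma^2$ via $\sum_{a\ne b}\hat C_{x,x,a,b}=\tfrac12\sum_a\|[\sigma,P^x_a]\|_{2,\tau}^2$ are both correct.

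The gap is in the truncation step. You take $q$ to be the spectral projection of $\rho$ onto $[R^{-1},R]$ and assert $\tau(\cI-q)=O(R^{-1})$ from $\tau(\rho)=1$. Markov only gives $\tau(\mathbbm{1}_{\rho>R})\le R^{-1}$; the low-spectrum mass $\tau(\mathbbm{1}_{\rho<R^{-1}})$ has no such bound. Already in $M_n(\bC)$ with the normalized trace, $\rho=\mathrm{diag}(n,0,\dots,0)$ is a valid density with $\tau(\mathbbm{1}_{\rho<R^{-1}})=1-\tfrac1n$. The quantity that actually controls the $\ell_1$ cost of discarding the tails is $\|\sigma(\cI-q)\|_{2,\tau}^2=\tau(\rho(\cI-q))$; its low part is indeed $\le R^{-1}$, but its high part $\tau(\rho\,\mathbbm{1}_{\rho>R})$ has no uniform bound (in the same example it equals $1$ whenever $n>R$, and in a II$_1$ factor $\rho$ can be unbounded with most of its $\tau(\rho\,\cdot)$ mass above any fixed threshold). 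Crucially, the synchronicity hypothesis constrains the commutators $[\sigma,P^x_a]$, not the spectral distribution of $\rho$, so it cannot be used to rule this out. You also cannot simply drop the upper cutoff, since your corner argument ("commutator errors propagate with only $O(R)$ amplification") explicitly needs $\rho$ two-sided bounded on $q\alicealg q$. As written, the optimization $R\sim\delta^{-1/8}$ therefore balances a genuine error term against one that was never controlled. Getting past this point is precisely the technical content of the cited rounding theorems, and a hard two-sided spectral cutoff of $\rho$ by itself does not supply it.
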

The rounding theorem is proven in the tensor model in~\cite[Corollary 3.3]{vidickAlmostSynchronousQuantum2022}, and in the commuting operator model independently in~\cite[Theorem 4.1]{linTracialEmbeddableStrategies2024} and~\cite[Theorem 2.1]{delasalleAlmostSynchronousCorrelations2023}. Note in the original formulations for all the reference above, $C_{x,y,a,b}^{s}$ is define as a convex combination of synchronous correlations. The theorem above follows because any convex combination of synchronous correlations are still synchronous by definition.

%

\subsection{Non-local games} \label{sec:prenonlocalgames}

A \textit{two-prover one-round (non-local) game} is described by a tuple \gls{games}$= (\cX^2, \cA^2, \mu, D)$, where $\cX$ is a finite set denoting the list of potential questions, $\cA$ is another finite set denoting the list of potential answers, $\mu$ is a distribution over $\cX^2$ which corresponds to the question distribution, and $D: \cX^2 \times \cA^2 \rightarrow \{0,1\}$ is the evaluation map. The game is played between two cooperating \textit{provers}, Alice and Bob, and a \textit{verifier}\footnote{We are adopting the notation from an interactive proof setting in this paper. In other non-local games literature, the provers might be referred to as ``players" and the verifier might be referred to as the ``referee" }. In this game, the verifier first samples a question pair $(x,y)$ according to the distribution $\mu$ and sends $x$ to Alice and $y$ to Bob. Upon receiving their questions, Alice (and resp.~Bob) must, without communicating with the other prover, respond with answers $a$ (resp.~$b$) in $\cA$ back to the referee, and the provers win if and only if $D(x,y,a,b) = 1$. Conventionally, non-local games are usually expressed with provers sharing different question and answer sets. However, by forcing the probability distribution $\mu$ to be zero on certain question pair, the formulations we give are equivalent to the conventional formulation. In this paper, we consider non-local games where the provers have access to the two models of entanglement described in the previous subsection, which gives two different \textit{values}, or the optimal success probability for the provers, for a given game $\cG$. We introduce these notions in the remainder of this subsection: 

\paragraph{Quantum value of a game. }  Under the quantum tensor product model, the provers first prepare a joint entangled quantum state $\ket{\psi} \in \bC^n \otimes \bC^m$ between them. After receiving the question from the verifier, the provers then perform localized measurements on their respective register based on the question they receive. The provers then respond to the verifier with the measurement output as their answers. In this case, the behaviour of the provers precisely describes a tensor product strategy defined in the previous subsection, and the probability of outputting the answer pair $(a,b)$ given the question pair $(x,y)$ is $C_{x,y,a,b}$, where $\{C_{x,y,a,b}\}_{x,y,a,b}$ is the correlation generated by the said strategy. 

Given a quantum tensor correlation $C = \{C_{x,y,a,b}\} \in C_q(\cX, \cA)$, we define the success rate for the correlation, or the \textit{value} of the correlation to be 
\begin{equation}\label{eq:quantumvaluecorrelation} 
	\text{\gls{Corrgamevalue}} := \sum_{(x,y) \in \cX^2} \mu(x,y) \sum_{(a,b) \in \cA} D(x,y,a,b) C_{x,y,a,b}. 
\end{equation}
Similarly, for a tensor product strategy $\strategy$, we use $\omega(\cG, \strategy)$ to denote the value of the correlation realized by the strategy $\strategy$. The optimal success rate given quantum tensor model of entanglement, or the \textit{tensor product value} of a game $\cG$ is
\begin{equation}
	\text{\gls{tensorvalue}} := \sup_{\{C_{x,y,a,b} \in C_q \}} \omega(\cG, C).
\end{equation}
We remark that since $C_q$ is not a closed set~\cite{slofstraSetQuantumCorrelation2019}, the supremum in the above equation might not be realizable by a tensor product correlation. 

Similarly, if the provers are allowed to use the commuting model of entanglement. The set up is similar as above, except the provers use commuting operator strategies instead. Given a quantum commuting operator correlation $C = \{C_{x,y,a,b}\} \in C_q(\cX, \cA)$, the value for the correlation is the same as~\eqref{eq:quantumvaluecorrelation}. The optimal success rate given quantum commuting model of entanglement, or the \textit{commuting operator value} of a game $\cG$ is
\begin{equation}
	\text{\gls{commutingvalue}} := \sup_{\{C_{x,y,a,b} \in C_{qc} \}} \omega(\cG, C).
\end{equation}
Due to~\Cref{thm:tracialembedding}, the $C_{qc}$ in the above equation can be replaced with $C_{qc}^{\Tr}$. Since $C_q \subsetneq C_{qc}$, $\omega^*(\cG) \leq \omega^{co}(\cG)$ for all games $\cG$. For model $t \in \{*,co\}$, We call a strategy $\strategy$ in model $t$ a perfect strategy for game $\cG$ if $\omega(\cG, \strategy) = 1$. 

While the value of a game can be defined in terms of either quantum correlations or quantum strategies, there is a distinction between correlations and strategies. From the verifier's point of view, he can only ``detect" the correlation by sampling a pair of questions and getting a pair of response from the provers. However, the provers can choose different strategies (which the verifier cannot detect) in order to realize the same correlation. 

\paragraph{Synchronous games. }Given a game $\cG$, we call a game \textit{synchronous} iff $D(x,x, a,b) = \delta_{a,b}$. In other words, the provers must provide the same answer pair when given the same question pair. For a synchronous game $\cG$, we call a question pair $(x,y)$ to be synchronous iff $x = y$. For model $t \in \{*, qc\}$ and a synchronous game $\cG$, we define the synchronous value for $\cG$ in model $t$ to be
\begin{equation*}
	\text{\gls{syncvalue}}:= \sup_{\{C_{x,y,a,b} \in C_t^s \}} \omega(\cG, C).
\end{equation*}
Intuitively, a synchronous strategy corresponds to the set of strategies in which the provers always give the same answer when given the same questions. Since $C_{q}^s \subseteq C_{q}$ (resp. $C_{qc}^s \subseteq C_{qc}$), we have $\omega^{*}_s(\cG) \leq \omega^{*} (\cG)$  (resp. $\omega^{co}_s(\cG) \leq \omega^{co} (\cG)$). However, as seen by the following theorem, these two values are equivalent whenever $\cG$ admits a perfect strategy. 
\begin{theorem}[Perfect quantum value implies perfect synchronous value, Theorem 3.2 of~\cite{mousaviNonlocalGamesCompression2022}] \label{thm:syncvaluepreserve}
	Let $\cG = (\cX, \cA, \mu, D)$ be a synchronous game such that $\mu(x,x) > 0$ for all $x \in \cX$. For model $t \in \{0,1\}$, $\omega^{t}(\cG) = 1 \rightarrow \omega^{t}_s(\cG) =1$.
\end{theorem}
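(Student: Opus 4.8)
The plan is to deduce the statement from the rounding theorem for synchronous correlations (\Cref{thm:Rounding}) together with the Lipschitz continuity of the value functional. Fix the model $t\in\{*,co\}$ and recall that $\omega^t(\cG)$ is a supremum over the relevant correlation set ($\cC_q$ when $t=*$, $\cC_{qc}$ when $t=co$), both of which are among the cases $\{q,qc\}$ covered by \Cref{thm:Rounding}. Since $\omega^t(\cG)=1$, for every $\eps>0$ there is a correlation $C=\{C_{x,y,a,b}\}$ in this set with $\omega(\cG,C)\ge 1-\eps$. The goal is to produce from $C$ an \emph{exactly} synchronous correlation $C^s$ in the same set whose value is still at least $1-o_\eps(1)$; feeding this into the definition of $\omega^t_s$ and letting $\eps\to 0$ gives the claim.

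The first real step is to show that any such near-optimal $C$ is almost synchronous, quantitatively in $\eps$. Since $D(x,x,a,b)=\delta_{a,b}$, for each $x$ we have $\sum_{a,b}D(x,x,a,b)C_{x,x,a,b}=\sum_a C_{x,x,a,a}=1-\sum_{a\neq b}C_{x,x,a,b}$, hence
\begin{equation*}
	1-\omega(\cG,C)\;=\;\sum_{(x,y)}\mu(x,y)\Big(1-\sum_{a,b}D(x,y,a,b)C_{x,y,a,b}\Big)\;\ge\;\sum_{x}\mu(x,x)\sum_{a\neq b}C_{x,x,a,b}.
\end{equation*}
Let $c:=\min_{x\in\cX}\mu(x,x)$, which is a positive constant depending only on $\cG$ because $\cX$ is finite and $\mu(x,x)>0$ for all $x$. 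Using $1\le\mu(x,x)/c$ termwise and $\mu_x(x)\le 1$ gives $\Ex_{x\sim\mu_x}\sum_{a\neq b}C_{x,x,a,b}\le\frac1c\sum_x\mu(x,x)\sum_{a\neq b}C_{x,x,a,b}\le\eps/c$, and the identical estimate applies to the $\mu_y$-term, so $\deltasync(\mu,C)\le\eps/c$.

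Now apply \Cref{thm:Rounding} to the $(\eps/c)$-synchronous correlation $C$: there is a genuinely synchronous correlation $C^s$ in the same model with $\Ex_{(x,y)\sim\mu}\sum_{a,b}|C_{x,y,a,b}-C^s_{x,y,a,b}|\le\textbf{s}^{\text{Rounding}}(\eps/c)=O((\eps/c)^{1/8})$. Because $D$ is $\{0,1\}$-valued, $C\mapsto\omega(\cG,C)$ is $1$-Lipschitz with respect to exactly this averaged $\ell_1$-distance, so
\begin{equation*}
	\omega^t_s(\cG)\;\ge\;\omega(\cG,C^s)\;\ge\;\omega(\cG,C)-\textbf{s}^{\text{Rounding}}(\eps/c)\;\ge\;1-\eps-O\big((\eps/c)^{1/8}\big).
\end{equation*}
Since $c$ is a fixed constant and $\eps>0$ was arbitrary, letting $\eps\to 0$ forces $\omega^t_s(\cG)=1$. (In the commuting case one can shortcut: $\cC_{qc}$ is closed, so there is an exactly perfect $C$; the displayed inequality with $\eps=0$ then forces $\sum_{a\neq b}C_{x,x,a,b}=0$ for every $x$, i.e.\ $C$ is already synchronous.)

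There is no serious obstacle here: the only point needing care is the quantitative passage from ``near-optimal value'' to ``near-synchronous correlation'', which is precisely where the hypotheses $\mu(x,x)>0$ for all $x$ and $\cX$ finite enter (to ensure $c>0$); the rest is the rounding theorem and Lipschitz continuity of the value.
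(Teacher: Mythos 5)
Your proof is correct. Note, though, that this paper does not actually prove the statement: it cites Theorem~3.2 of Mousavi, Nezhadi, and Yuen verbatim and uses it as a black box, so there is no in-paper argument to compare against. What you have done is give a clean self-contained derivation from the rounding theorem (\Cref{thm:Rounding}): a near-perfect correlation for a synchronous game with $\mu(x,x)>0$ is forced to be almost synchronous with synchronicity $O(\eps/c)$ where $c=\min_x\mu(x,x)>0$, the rounding theorem then produces a genuinely synchronous correlation in the same model at averaged $\ell_1$-distance $O((\eps/c)^{1/8})$, and Lipschitz continuity of the $\{0,1\}$-valued payoff transfers the value. Your shortcut in the commuting case (compactness of $\cC_{qc}$ gives an exact optimizer, which the synchronicity estimate with $\eps=0$ then forces to be synchronous outright) is a nice observation and avoids the rounding machinery entirely in that case. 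The original proof in Mousavi et al.\ predates the Vidick/Lin/de la Salle rounding theorem in the form the present paper quotes, and as I recall argues more directly by manipulating near-perfect strategies and their algebraic structure rather than invoking a rounding black box; your route buys uniformity across the two models at the modest cost of depending on a heavier tool, and is perfectly compatible with the framework of the present paper. One small stylistic point: you bound $\mu_x(x)\le 1\le\mu(x,x)/c$, which is correct but coarser than necessary; the cleaner inequality is simply $\mu_x(x)\sum_{a\neq b}C_{x,x,a,b}\le\frac{1}{c}\cdot\mu(x,x)\sum_{a\neq b}C_{x,x,a,b}$ once one uses $\mu(x,x)\ge c\,\mu_x(x)$, but since you only need $\mu_x(x)\le 1$ anyway for finite $\cX$ this is harmless.
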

We call a synchronous game $\cG$ $c$-balanced if there exists some constant $c \in [0,1]$ such that $c \cdot \mu_x(x) \leq \mu(x,x)$ and $c \cdot \mu_y(x) \leq \mu(x,x)$ for all $x \in \cX$. In other words, the synchronous question pair will always appear with at least probability $c$ on the marginal distribution. Based on the definition of $\delta$-synchronous correlations, we have the following lemma about any correlations which are near perfect for any balanced game. 
\begin{lemma}[Almost perfect correlation implies almost synchronous correlation ] \label{lem:balancedperfecttoalmostsync}
	Let $\cG = (\cX, \cA, \mu, D)$ be a $c$-balance synchronous game, and for model $t \in \{*, qc\}$,  let $\strategy= (\cL^2(\alicealg, \tau), \ket{\tau}, \\ \{A_a^x \}, \{(B_b^y)^{op}\})$ be a tracially embeddable strategy such that $\omega(\cG, \strategy) \geq 1 - \eps$. Then $\deltasync(\mu, C) \leq \frac{\eps}{c}$ and there exists a symmetric and projective strategy $\strategy^{\text{sym}} = ( \cL^2(\alicealg, \tau), \sigma \ket{\tau},\{ P_{a}^x \})$ defined on the same Hilbert space as $\strategy$ such that 
	\begin{equation*}
		\omega(\cG, \strategy^{\text{sym}}) \geq 1 - \eps - \left(\frac{\eps}{c} \right)^{\frac{1}{4}}. 
	\end{equation*}
\end{lemma}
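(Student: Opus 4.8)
The plan is to combine the definition of $c$-balanced with the near-optimality hypothesis to first control the synchronicity, and then to invoke Corollary~\ref{cor:orthogonalizationlemmasync} to produce the symmetric projective strategy with a value loss governed by the synchronicity. First I would bound $\deltasync(\mu,C)$. Let $C$ be the correlation realized by $\strategy$. Since $\omega(\cG,\strategy)\geq 1-\eps$ and $\cG$ is synchronous, the only way a synchronous question pair $(x,x)$ can contribute a loss is via answers with $a\neq b$; thus $\sum_{(x,y)\in\cX^2}\mu(x,y)\sum_{(a,b):D(x,y,a,b)=0}C_{x,y,a,b}\leq\eps$, and restricting the outer sum to the diagonal $y=x$ (where $D(x,x,a,b)=0$ exactly when $a\neq b$) gives $\Ex_{x\sim\mu_{\mathrm{diag}}}\sum_{a\neq b}C_{x,x,a,b}\cdot(\text{weight})\leq\eps$. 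More precisely, $\sum_{x}\mu(x,x)\sum_{a\neq b}C_{x,x,a,b}\leq\eps$. Using $c$-balancedness, $c\cdot\mu_x(x)\leq\mu(x,x)$, so $c\cdot\Ex_{x\sim\mu_x}\sum_{a\neq b}C_{x,x,a,b}=c\sum_x\mu_x(x)\sum_{a\neq b}C_{x,x,a,b}\leq\sum_x\mu(x,x)\sum_{a\neq b}C_{x,x,a,b}\leq\eps$, hence $\Ex_{x\sim\mu_x}\sum_{a\neq b}C_{x,x,a,b}\leq\eps/c$; the same argument with $\mu_y$ gives the other term, so by~\eqref{eq:syncgame}, $\deltasync(\mu,C)\leq\eps/c$.

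Next I would apply Corollary~\ref{cor:orthogonalizationlemmasync} with $\delta=\eps/c$: since $C$ is $(\eps/c)$-synchronous with respect to $\mu$ and is realized by the tracially embeddable strategy $\strategy$, there is a symmetric, projective, $(\eps/c)^{1/4}$-synchronous strategy $\strategy^{\mathrm{sym}}=(\cL^2(\alicealg,\tau),\sigma\ket{\tau},\{P_a^x\})$ on the same Hilbert space with $A_a^x\approx_{O(\eps/c)}P_a^x$ over $\mu_x$ and $\sigma\ket{\tau}$, and moreover the correlations are close in the sense of~\eqref{eq:orthogonalizationlemmasynceq}:
\begin{equation*}
	\Ex_{(x,y)\sim\mu}\sum_{a}\bigl|\braket{\tau|\sigma A_a^x (B_b^y)^{op}\sigma|\tau}-\braket{\tau|\sigma P_a^x (P_b^y)^{op}\sigma|\tau}\bigr|\leq O\bigl((\eps/c)^{1/4}\bigr).
\end{equation*}
Then I would estimate $|\omega(\cG,\strategy)-\omega(\cG,\strategy^{\mathrm{sym}})|$. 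Writing both values via~\eqref{eq:quantumvaluecorrelation} as $\sum_{(x,y)}\mu(x,y)\sum_{a,b}D(x,y,a,b)C_{x,y,a,b}$, the difference is bounded by $\Ex_{(x,y)\sim\mu}\sum_{a,b}|C_{x,y,a,b}-C^{\mathrm{sym}}_{x,y,a,b}|$ since $D\in\{0,1\}$; after summing out $b$ (data processing, or just triangle inequality over $b$) this is controlled by the left-hand side of the displayed inequality above, giving $|\omega(\cG,\strategy)-\omega(\cG,\strategy^{\mathrm{sym}})|\leq O((\eps/c)^{1/4})$. Hence $\omega(\cG,\strategy^{\mathrm{sym}})\geq 1-\eps-O((\eps/c)^{1/4})$, which is the claimed bound up to adjusting the implicit constant to match the stated $1-\eps-(\eps/c)^{1/4}$ (absorbing constants, or noting the statement is up to the $O(\cdot)$ convention used throughout).

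The main obstacle — really a bookkeeping subtlety rather than a deep point — is matching the precise constant in the conclusion: Corollary~\ref{cor:orthogonalizationlemmasync} delivers an $O((\eps/c)^{1/4})$ bound, and one must check that the sum-over-$b$ step does not lose more than a constant factor and that the stated $(\eps/c)^{1/4}$ (without an explicit $O$) is consistent with the $O$-notation conventions of the paper. A secondary point to be careful about is that~\eqref{eq:orthogonalizationlemmasynceq} is stated with $b$ already appearing on the left (it is really $\sum_a$ of the diagonal-type terms), so I would need to match the indexing: the quantity bounded there is $\Ex_{(x,y)\sim\mu}\sum_a|\braket{\tau|\sigma A_a^x(B_a^y)^{op}\sigma|\tau}-\braket{\tau|\sigma P_a^x(P_a^y)^{op}\sigma|\tau}|$ or its natural $\sum_{a,b}$ analogue, and I would derive the $\sum_{a,b}$ version from the $A_a^x\approx P_a^x$ closeness plus a Cauchy–Schwarz/triangle-inequality argument, exactly as in the standard manipulations of Section~\ref{sec:introcorrelations}. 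Everything else is routine.
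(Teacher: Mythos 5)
Your proposal follows exactly the paper's proof: bound $\deltasync(\mu,C)\leq\eps/c$ via $c$-balancedness applied to the diagonal loss, then invoke Corollary~\ref{cor:orthogonalizationlemmasync} and the triangle inequality on the resulting correlation-closeness bound. The bookkeeping concerns you raise (the indexing in~\eqref{eq:orthogonalizationlemmasynceq} and the $O(\cdot)$ vs.\ exact-constant mismatch) are real but apply equally to the paper's own argument, which simply absorbs them.
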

\begin{proof}
	For any correlation $C$, $\cG$ being synchronous and $\omega(\cG, \strategy) \geq 1 - \eps$ implies
	\begin{equation*}
		\sum_{x \in \cX} \mu(x,x)  \sum_{a \neq b} \braket{\tau|\sigma A_a^x B_b^x \sigma| \tau} \leq  \eps.
	\end{equation*}
	By the $c$-balanced condition,
	\begin{align*}
		c\cdot \left(\Ex_{x \sim \mu_x} \sum_{a \neq b}\braket{\tau|\sigma A_a^x B_b^x \sigma| \tau}\right)	&\leq \sum_{x, a\neq b} \mu(x,x) \braket{\tau|\sigma A_a^x B_b^x \sigma| \tau} \leq \eps \\
		c\cdot \left(\Ex_{x \sim \mu_y} \sum_{a \neq b}\braket{\tau|\sigma A_a^x B_b^x \sigma| \tau}\right)	&\leq \sum_{x, a\neq b} \mu(x,x) \braket{\tau|\sigma A_a^x B_b^x \sigma| \tau} \leq \eps.
	\end{align*}
	This shows that $ \deltasync(\mu, \strategy)\leq \frac{\eps}{c}$. For the second part of the lemma, by~\Cref{cor:orthogonalizationlemmasync}, there exist a symmetric strategy $\strategy^{\text{sym}} = ( \cL^2(\alicealg, \tau), \sigma \ket{\tau},\{ P_{a}^x \})$ such that 
	\begin{equation}
		\Ex_{(x,y) \sim \mu} \sum_{a \in A} |\braket{\tau| \sigma A_a^x (B_b^y)^{op} \sigma |\tau} -\braket{\tau| \sigma P_a^x (P_b^y)^{op} \sigma |\tau}| \leq O(\left(\frac{\eps}{c} \right)^{\frac{1}{4}}).
	\end{equation} 	
	By the triangle inequality, $|\omega(\cG, \strategy^{\text{sym}}) -  \omega(\cG, \strategy)| \leq O( \left(\frac{\eps}{c} \right)^{\frac{1}{4}})$, and hence the lemma follows accordingly. 
\end{proof}

For a synchronous game $\cG$, we introduce the notion of an \textit{oracularizable strategy} for tracially embeddable strategy. This class of strategies plays an important part in the oracularization within the answer reduction procedure (see~\Cref{sec:PCPanswerreduction} for more details). We remark that this set of strategy follows an analogue of the ``commuting" property, a property for finite-dimensional strategy, given in~\cite[Definition 5.8]{jiMIPRE2022a}.

\begin{definition}[Oracularizable strategy] \label{def:Oracularizablestrategies}
	A tracially embeddable strategy $\strategy = ( \cL^2(\alicealg, \tau), \sigma \ket{\tau}, \\ \{ A_{a}^x \},  \{ (B_{b}^y)^{op} \} )$ for a synchronous game $\cG = (\cX, \cA, \mu, D)$ is oracularizable if whenever $\mu(x,y) >0$, then for all $(a,b) \in \cA^2$
	\begin{equation*}
		[A_{a}^x, B_{b}^y] = A_{a}^x B_{b}^y - B_{b}^y A_{a}^x = 0. 
	\end{equation*}
\end{definition}

We remark that in the above theorem, both $\{ A_{a}^x \}$ and $\{ (B_{b}^y) \}$ are defined in $\alicealg$. Hence the above condition does not follow immediately from the definition for a commuting operator strategy. We remark that the notion of an Oracularizable strategy used in this paper is different from the one defined in~\cite[Definition 2.14]{mousaviNonlocalGamesCompression2022}, and we discuss more about the difference between the two notions in~\Cref{sec:Oracularization}. In this paper, we also assume that the verifier is computationally bounded, and we give the formulation for a computationally bounded verifier in~\Cref{sec:MIPcompressibility}.

\paragraph{Parallel repetition. } In this paper, we also consider a transformation of a game known as parallel repetition. Given a non-local game $\cG = (\cX, \cA, \mu, D)$ and $r \in \bN$, we define the $r$-fold parallel repetition to be the game $\text{\gls{parallelrepetitiongame}} = (\cX^r, \cA^r, \mu^r, D^r)$ as the game with the following question distribution and validation function
\begin{itemize}
	\item $\mu^n((x_0, \cdots, x_{r-1}), (y_0, \cdots y_{r-1})) = \prod_{i = 0}^{r} \mu(x_i, y_i)$.
	\item  $\D^r\left((x_0, \cdots, x_{r-1}), (y_0, \cdots y_{r-1}), (a_0, \cdots , a_{r-1}),(b_0, \cdots , b_{r-1})\right) =\prod_{i = 0}^{r} D(x_i,. y_i, a_i,b_i)$
\end{itemize}
Intuitively, the above transformation corresponds to the verifier sampling $r$ pairs of questions $(x_i, y_i), i \in [r]$ from the distribution $\mu$, and sends them to the provers. The provers must respond with answers $(a_i,b_i) \in \cA^2$ for $i \in [r]$, and the provers win iff $D(x_i,y_i,a_i,b_i) = 1$ for all $i \in [r]$. If a game $\cG$ is synchronous, then its $r$-fold parallel repetition $\cG^{\otimes r}$ is also synchronous. For clarity of notation, we use \gls{parallelsquestion} (resp. \gls{parallelszanswer}) to emphasize that the question/answer pairs (resp. $(\vec{a}, \vec{b})$) come from the parallel-repeated game. Parallel repetition plays an important part in the final step in proving the compression theorem, and we refer the reader to~\Cref{sec:parallelrep} and~\Cref{sec:parallelrepappendix} for more details. 

\begin{CJK*}{UTF8}{gbsn}


\section{Compression condition and the compression theorem} \label{sec:Compression}

In this section, we introduce the compression condition for general decision problems, and show the equivalence between a compressible language and $\RE$/$\coRE$. Recall from the preliminaries that given a decision problem $\Decisionproblem = (\{\Language_{\text{yes}}^{\Decisionproblem}, \Language_{\text{no}}^{\Decisionproblem}\})$, a uniform instance of $\Decisionproblem$ is a Turing machine $\texttt{Seq}_{\Decisionproblem}: \bN \rightarrow \Decisionproblem$. We introduce the notion of a compressible decision problem below.
\begin{definition}[Compressible problems]\label{def:compressibleproblem}
	Let $\Decisionproblem = (\{\Language_{\text{yes}}^{\Decisionproblem}, \Language_{\text{no}}^{\Decisionproblem}\})$ be a decision problem. We say that the decision problem $\Decisionproblem$ is compressible if there exists an algorithm $\texttt{Compress}^{\Decisionproblem}$, which takes, as input, $\langle \texttt{Seq}_{\Decisionproblem}\rangle$, a description of a uniform instance of $\Decisionproblem$. $\texttt{Compress}^{\Decisionproblem}$ outputs a description of a uniform instance of $\Decisionproblem$, $\langle\texttt{Seq}^{\text{Comp}}_{\Decisionproblem}\rangle$, such that the following holds:
	\begin{itemize}
		\item (Runtime): $\TIME_{\texttt{Compress}^{\Decisionproblem}} = O(\poly(|\langle\texttt{Seq}_{\Decisionproblem}\rangle|))$. 
		\item (Consistency of the output) $\texttt{Seq}^{\text{Comp}}_{\Decisionproblem}(n) \in \Decisionproblem$ for all $n \in \bN$, even if the initial input for $\texttt{Compress}^{\Decisionproblem}$ is not a valid uniform instance of $\Decisionproblem$. 
		\item (Complexity bound for the output) $\TIME_{\texttt{Seq}^{\text{Comp}}_{\Decisionproblem}} = O(\polylog(n))$. 
	\end{itemize}
	Furthermore, if $\TIME_{\texttt{Seq}_{\Decisionproblem}} = O(\poly(n))$, then for all $n \in \bN$, the following holds:
	\begin{itemize}
		\item (Completeness) $\texttt{Seq}_{\Decisionproblem}(n) \in\Language_{\text{yes}}^{\Decisionproblem} \Longrightarrow \texttt{Seq}_{\Decisionproblem}^{\text{Comp}}(n) \in\Language_{\text{yes}}^{\Decisionproblem}$.
		\item (Soundness) $\texttt{Seq}_{\Decisionproblem}(n) \in\Language_{\text{no}}^{\Decisionproblem} \Longrightarrow \texttt{Seq}_{\Decisionproblem}^{\text{Comp}}(n) \in\Language_{\text{no}}^{\Decisionproblem}$.
	\end{itemize}
\end{definition}
This generalizes the compressible property introduced in~\Cref{sec:introcompression} to decision problems. Similar to the remark given in~\Cref{sec:introcompression}, one should interpret the algorithm $\texttt{Compress}^{\Decisionproblem}$ given in the above as a property associated with the decision problem rather than the uniform sequence itself. This means that $\texttt{Compress}^{\Decisionproblem}$ is a single algorithm which works \textbf{for all} uniform instances of $\texttt{Seq}_{\Decisionproblem}$. Intuitively, the compressible property allows one to generate a problem instance more efficiently (even though the new problem instance might be equally hard to decide). If $\Decisionproblem$ is compressible, then $\class{co}\Decisionproblem$ is also compressible by definition. 

We use the following clever example given in~\cite{nezhadiRecursiveCompressionMethod2025} to show that the halting problem is compressible. 

\begin{example}[The Halting problem is compressible] \label{exam:haltcompressible}
	For the Halting problem, consider the compression algorithm $\texttt{Compress}^{\texttt{HALT}}$ for the halting problem defined by the following: given the input  $\langle \texttt{Seq}_{\texttt{HALT}} \rangle$, $\texttt{Compress}^{\texttt{HALT}}$ returns a description of $\langle \texttt{Seq}_{\texttt{HALT}}^{\text{Comp}}\rangle$, described by~\Cref{pseu:compressionhalt1}
	
	\vspace{10pt}
	\IncMargin{1em}
	\begin{algorithm}[H]
		\DontPrintSemicolon
		\textbf{Input}: Integer $n$
		
		Compute and return the description of $\langle \texttt{Prog}_{n}^{\langle\texttt{Seq}_{\texttt{HALT}} \rangle } \rangle$, where the pseudocode for $\texttt{Prog}_{n}^{\langle\texttt{Seq}_{\texttt{HALT}} \rangle }$ is given in~\Cref{pseu:compressionhalt2}
	
		\caption{The description of $\texttt{Seq}_{\texttt{HALT}}^{\text{Comp}}$.}
		\label{pseu:compressionhalt1}
	\end{algorithm}\DecMargin{1em}
	\vspace{10pt} 
	
	\vspace{10pt}
	\IncMargin{1em}
	\begin{algorithm}[H]
		\DontPrintSemicolon
		Compute the description of $\langle \texttt{Seq}_{\texttt{HALT}}(n) \rangle$ using $\langle \texttt{Seq}_{\texttt{HALT}} \rangle$ and $n$, halt and return ERROR if $\langle \texttt{Seq}_{\texttt{HALT}} \rangle$ is not a valid description for a Turing machine. 
		
		Run the program $\langle \texttt{Seq}_{\texttt{HALT}}(n) \rangle$, halt and return ERROR if $\langle \texttt{Seq}_{\texttt{HALT}}(n) \rangle$ is not a valid description for a Turing machine. 
		
		\caption{The description of the output for $\texttt{Prog}_{n}^{\langle\texttt{Seq}_{\texttt{HALT}} \rangle }$. We remark that the above program depends on both $\texttt{Seq}_{\texttt{HALT}}$ and $n$}
		\label{pseu:compressionhalt2}
	\end{algorithm}\DecMargin{1em}
	\vspace{10pt} 
	
	In the above example, one should interpret the description of $\langle \texttt{Prog}_{n}^{\langle\texttt{Seq}_{\texttt{HALT}} \rangle } \rangle$ given in~\Cref{pseu:compressionhalt2} as an instance for the halting problem (and hence, the runtime of line 1 from~\Cref{pseu:compressionhalt2} is irrelevant). The program $\texttt{Seq}_{\texttt{HALT}}^{\text{Comp}}$ merely generates different instances of the halting problem by outputting the description of $\langle \texttt{Prog}_{n}^{\langle\texttt{Seq}_{\texttt{HALT}} \rangle } \rangle$. 
	
	Since the Turing machine $\texttt{Seq}_{\texttt{HALT}}^{\text{Comp}}(n)$ only returns the description for $\langle \texttt{Prog}_{n}^{\langle\texttt{Seq}_{\texttt{HALT}} \rangle } \rangle$ which depends on $n$, the description $\langle \texttt{Seq}_{\texttt{HALT}}^{\text{Comp}}(n) \rangle$ can be computed in $O(\poly(|\langle \texttt{Seq} \rangle|))$ time. Since any integer input is represented under the binary representation, $\TIME_{\texttt{Seq}^{\text{Comp}}_{\Decisionproblem}} = O(\polylog(n))$. Furthermore, for all integer $n \in \bN$ 
	\begin{itemize}
		\item If $\texttt{Seq}_{\texttt{HALT}}(n)$ returns a description of a halting program, then $\texttt{Seq}_{\texttt{HALT}}^{\text{comp}}(n) = \langle \texttt{Prog}_{n}^{\langle\texttt{Seq}_{\texttt{HALT}} \rangle } \rangle$ is a description of a halting program. 
		\item Otherwise, if $\texttt{Seq}_{\texttt{HALT}}(n)$ returns a description of a non-halting program, then $\texttt{Seq}_{\texttt{HALT}}^{\text{comp}}(n) = \langle \texttt{Prog}_{n}^{\langle\texttt{Seq}_{\texttt{HALT}} \rangle } \rangle$ is a description of a non-halting program.
	\end{itemize}
	This shows that the halting problem is an example of a compressible decision problem. 
\end{example}

The above example also shows that all $\RE$/$\coRE$-complete problems are also compressible. Although the compressible property offers a novel characterization for an $\RE$/$\coRE$-complete decision problem, it is often hard to construct the $\texttt{Compress}$ algorithm and make this characterization practical. To this end, we give a weaker notion of compressibility below. 

\begin{definition}[Weakly compressible problems]  \label{def:weaklycompressibleproblem}
	Let $\Decisionproblem = (\{\Language_{\text{yes}}^{\Decisionproblem}, \Language_{\text{no}}^{\Decisionproblem}\})$ be a decision problem. We say that the decision problem $\Decisionproblem$ is weakly compressible if for every $\alpha \in \bN $, there exists an algorithm $\texttt{Compress}_{\alpha}^{\Decisionproblem}$, which takes, as input, $\langle \texttt{Seq}_{\Decisionproblem} \rangle$, a description of a uniform instance of decision problems for $\Decisionproblem$. $\texttt{Compress}_{\alpha}^{\Decisionproblem}$ outputs a description for a uniform instance of decision problems for $\Decisionproblem$, $\langle\texttt{Seq}^{\text{Comp}}_{\Decisionproblem}\rangle$, such that the following holds:
	There exists an integer $\gamma= O(\poly(\alpha))$ such that
	\begin{enumerate}
		\item (Runtime): $\TIME_{\texttt{Compress}_{\alpha}^{\Decisionproblem}} = O(\poly(|\langle\texttt{Seq}_{\Decisionproblem, \alpha }\rangle|, \alpha))$. 
		\item (Consistency of the output) $\texttt{Seq}^{\text{Comp}}_{\Decisionproblem}(n) \in \Decisionproblem$ for all $n \in \bN$, even if the initial input for $\texttt{Compress}_{\alpha}^{\Decisionproblem}$ is not a valid uniform instance of $\Decisionproblem$. 
		\item (Complexity bound for the output) $\TIME_{\texttt{Seq}^{\text{Comp}}_{\Decisionproblem}} = O(\polylog(n)^{\gamma})$.
	\end{enumerate}
	Furthermore, if there exists some constant $n_0 \in \bN$ such that for all $n \geq n_0$
	\begin{equation} 
		\TIME_{\texttt{Seq}_{\Decisionproblem}} \leq n^\alpha. 
	\end{equation}
	Then there exist some constant $n_0^{\Compressgame} = \poly(\gamma, n_0)$ such that for all $n \geq n_0^{\Compressgame}$
	\begin{itemize}
		\item (Completeness) $\texttt{Seq}_{\Decisionproblem}(n) \in\Language_{\text{yes}}^{\Decisionproblem} \Longrightarrow \texttt{Seq}_{\Decisionproblem}^{\text{Comp}}(n) \in\Language_{\text{yes}}^{\Decisionproblem}$.
		\item (Soundness) $\texttt{Seq}_{\Decisionproblem}(n) \in\Language_{\text{no}}^{\Decisionproblem} \Longrightarrow \texttt{Seq}_{\Decisionproblem}^{\text{Comp}}(n) \in\Language_{\text{no}}^{\Decisionproblem}$.
	\end{itemize}
\end{definition}

Instead of requiring a single \texttt{Compress} algorithm which works for all uniform problem instances, the weakly compressible condition instead just requires a $\texttt{Compress}_{\alpha}$ algorithm that compresses all uniform problem instances that run in $O(n^{\alpha})$ time for all $\alpha \in \bN$ (i.e. these algorithms could be different depending on $\alpha$). If $\Decisionproblem$ is compressible, then it is trivially weakly compressible. We have the following two theorems relating the compressible condition to $\RE$/$\coRE$-complete languages.

\begin{theorem}[Compression criteria for $\RE$-complete problems] \label{thm:compressRE}
	Let $\Decisionproblem = (\{\Language_{\text{yes}}^\Decisionproblem, \Language_{\text{no}}^\Decisionproblem\})$ be a decision problem. If $\Language_{\text{no}}^\Decisionproblem \in \coRE$, then the following are equivalent:
	\begin{enumerate}
		\item $\Decisionproblem$ is $\RE$-complete. 
		\item $\Decisionproblem$ is a compressible decision problem.
		\item $\Decisionproblem$ is a weakly compressible decision problem.
	\end{enumerate}
\end{theorem}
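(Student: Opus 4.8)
The plan is to establish the cycle $(1)\Rightarrow(2)\Rightarrow(3)\Rightarrow(1)$. First note that the hypothesis $\Language_{\text{no}}^{\Decisionproblem}\in\coRE$ already yields $\Decisionproblem\in\RE$: a semi-decider for $\{0,1\}^*\setminus\Language_{\text{no}}^{\Decisionproblem}$ halts on every string that is not a no-instance, hence in particular halts on all of $\Language_{\text{yes}}^{\Decisionproblem}$ and runs forever on all of $\Language_{\text{no}}^{\Decisionproblem}$, so it is an $\RE$-algorithm $\texttt{Algo}_{\Decisionproblem}$ for $\Decisionproblem$. Thus for this theorem ``$\RE$-complete'' amounts to ``$\RE$-hard'', and the only real content is in $(1)\Rightarrow(2)$ and $(3)\Rightarrow(1)$; the implication $(2)\Rightarrow(3)$ is immediate, since setting $\texttt{Compress}^{\Decisionproblem}_{\alpha}:=\texttt{Compress}^{\Decisionproblem}$ for every $\alpha$ (with $\gamma$ any fixed constant $\geq 1$) satisfies \Cref{def:weaklycompressibleproblem}: the runtime bound $O(\poly(|\langle\texttt{Seq}_{\Decisionproblem}\rangle|))$ is in particular $O(\poly(|\langle\texttt{Seq}_{\Decisionproblem}\rangle|,\alpha))$, the output complexity $O(\polylog(n))$ is $O(\polylog(n)^{\gamma})$, and ``$\TIME_{\texttt{Seq}_{\Decisionproblem}}\leq n^{\alpha}$ for $n\geq n_0$'' implies ``$\TIME_{\texttt{Seq}_{\Decisionproblem}}=O(\poly(n))$'', so the completeness/soundness clauses of \Cref{def:compressibleproblem} (which hold for \emph{all} $n$) give those of \Cref{def:weaklycompressibleproblem}.

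For $(1)\Rightarrow(2)$ the idea is to transport the compression algorithm for $\texttt{HALT}$ (\Cref{exam:haltcompressible}) through the polynomial-time reductions guaranteed by $\RE$-completeness. Fix polynomial-time mapping reductions $r : \Decisionproblem \to \texttt{HALT}$ (which exists since $\Decisionproblem\in\RE$) and $s : \texttt{HALT} \to \Decisionproblem$ (which exists since $\Decisionproblem$ is $\RE$-hard). Given $\langle\texttt{Seq}_{\Decisionproblem}\rangle$, the algorithm $\texttt{Compress}^{\Decisionproblem}$ forms the description of $r\circ\texttt{Seq}_{\Decisionproblem}$ (a uniform $\texttt{HALT}$-instance of size $O(|\langle\texttt{Seq}_{\Decisionproblem}\rangle|)$), runs $\texttt{Compress}^{\texttt{HALT}}$ on it to obtain a uniform $\texttt{HALT}$-instance $(r\circ\texttt{Seq}_{\Decisionproblem})^{\text{Comp}}$ of polylog time complexity, and post-composes with $s$; set $\texttt{Seq}^{\text{Comp}}_{\Decisionproblem}:=s\circ(r\circ\texttt{Seq}_{\Decisionproblem})^{\text{Comp}}$. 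The runtime clause holds because each step is polynomial in $|\langle\texttt{Seq}_{\Decisionproblem}\rangle|$; output consistency holds because $\texttt{Compress}^{\texttt{HALT}}$ already has this property and $s$ maps every $\texttt{HALT}$-instance into $\Decisionproblem$; the output-complexity clause holds because $(r\circ\texttt{Seq}_{\Decisionproblem})^{\text{Comp}}(n)$ is computed in $O(\polylog n)$ time with output of length $O(\polylog n)$, and applying the polynomial-time $s$ costs $O(\poly(\polylog n))=O(\polylog n)$; and completeness/soundness follow by chaining the reduction properties of $r$ and of $s$ around $\texttt{Compress}^{\texttt{HALT}}$, after noting that $r\circ\texttt{Seq}_{\Decisionproblem}$ runs in $O(\poly(n))$ time whenever $\texttt{Seq}_{\Decisionproblem}$ does.

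The substantive direction is $(3)\Rightarrow(1)$, where I would follow the self-referential construction of \Cref{sec:introcompression}. Hardcode anchors $x_{\text{yes}}\in\Language_{\text{yes}}^{\Decisionproblem}$, $x_{\text{no}}\in\Language_{\text{no}}^{\Decisionproblem}$ (these sets are non-empty, so such strings exist; for the reduction it suffices that they exist). Fix a Turing machine $\mathtt{M}$ whose halting is to be decided, and using Kleene's recursion theorem define a uniform instance $\texttt{Seq}_{\Decisionproblem}$ that on input $n$: (i) returns $x_{\text{yes}}$ if $\mathtt{M}$ halts within $n$ steps; (ii) otherwise, for an appropriate fixed index $K$, computes $\texttt{Seq}_{\Decisionproblem}(K)$ and simulates $\texttt{Algo}_{\Decisionproblem}$ on it --- both truncated after $n$ steps --- returning $x_{\text{no}}$ if this accepts; (iii) otherwise computes $\texttt{Compress}^{\Decisionproblem}_{\alpha}(\langle\texttt{Seq}_{\Decisionproblem}\rangle)$ and returns $\texttt{Seq}^{\text{Comp}}_{\Decisionproblem}(n+1)$. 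Choosing $\alpha$ to be a fixed constant (say $\alpha=2$) makes $\gamma=O(\poly(\alpha))$ a fixed constant, so line (iii) costs $O(\polylog(n)^{\gamma})$ and the only other $n$-dependent cost is the $O(n\,\polylog n)$ of the truncated simulations; hence $\TIME_{\texttt{Seq}_{\Decisionproblem}}(n)\leq n^{\alpha}$ for all $n$ beyond some $n_0=O(\poly(|\langle\texttt{Seq}_{\Decisionproblem}\rangle|))$, so $\texttt{Compress}^{\Decisionproblem}_{\alpha}$ genuinely compresses $\texttt{Seq}_{\Decisionproblem}$ and supplies a threshold $n_0^{\Compressgame}$ past which completeness and soundness hold. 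The reduction outputs $\texttt{Seq}_{\Decisionproblem}(K)$. Its correctness is the argument of \Cref{sec:introcompression}: a minimality argument shows line (ii) never fires (if $C$ were the least input on which it did, then $\texttt{Seq}_{\Decisionproblem}(C)=x_{\text{no}}\notin\Language_{\text{yes}}^{\Decisionproblem}$ while $\texttt{Algo}_{\Decisionproblem}$ accepting $\texttt{Seq}_{\Decisionproblem}(K)$ forces $\texttt{Seq}_{\Decisionproblem}(K)\in\Language_{\text{yes}}^{\Decisionproblem}$, and since only line (iii) fires for inputs below $C$ we have $\texttt{Seq}_{\Decisionproblem}(m-1)=\texttt{Seq}^{\text{Comp}}_{\Decisionproblem}(m)$, so soundness of $\texttt{Compress}^{\Decisionproblem}_{\alpha}$ propagates $\texttt{Seq}_{\Decisionproblem}(\cdot)\notin\Language_{\text{yes}}^{\Decisionproblem}$ down from $C$ to $K$, a contradiction); granting this, if $\mathtt{M}$ halts then line (i) fires at all large $n$ and completeness propagates $\texttt{Seq}_{\Decisionproblem}(\cdot)\in\Language_{\text{yes}}^{\Decisionproblem}$ down to $K$, while if $\mathtt{M}$ never halts then line (i) never fires, $\texttt{Algo}_{\Decisionproblem}$ never accepts $\texttt{Seq}_{\Decisionproblem}(K)$, hence $\texttt{Seq}_{\Decisionproblem}(K)\in\Language_{\text{no}}^{\Decisionproblem}$. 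Since $\texttt{Seq}_{\Decisionproblem}(K)$ is computable in $O(\poly(|\mathtt{M}|))$ time, this gives a polynomial-time mapping reduction $\texttt{HALT}\to\Decisionproblem$, so $\Decisionproblem$ is $\RE$-hard and therefore $\RE$-complete.

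The place I expect the real work is in pinning down the index $K$ in $(3)\Rightarrow(1)$. In the genuinely compressible case one takes $K=1$ and the inductions reach all the way down, exactly as in \Cref{sec:introcompression}; but in the merely \emph{weakly} compressible case the completeness/soundness guarantees only start at the threshold $n_0^{\Compressgame}$, so the propagation steps are only valid for indices $\geq n_0^{\Compressgame}$, which forces $K\geq n_0^{\Compressgame}$. Because line (ii) monitors $\texttt{Seq}_{\Decisionproblem}(K)$, the description of $\texttt{Seq}_{\Decisionproblem}$ then refers to $n_0^{\Compressgame}$, which itself depends on $\texttt{Seq}_{\Decisionproblem}$. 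Untangling this is the delicate point: one uses that $n_0^{\Compressgame}=\poly(\gamma,n_0)$ with $\gamma=O(\poly(\alpha))$ for the fixed $\alpha$ and $n_0=O(\poly(|\langle\texttt{Seq}_{\Decisionproblem}\rangle|))$, so $n_0^{\Compressgame}$ is controlled by a function of $|\mathtt{M}|$ alone, and then takes $K$ to be a sufficiently large such value and checks via the recursion theorem that the resulting $\texttt{Seq}_{\Decisionproblem}$ is self-consistent and still runs in time $\leq n^{\alpha}$ eventually. The remaining bookkeeping --- that the truncations prevent infinite recursion, that every propagation step stays in the range $\geq n_0^{\Compressgame}$, and that the reduction is polynomial time --- is routine but should be carried out with care.
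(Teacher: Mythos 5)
Your proposal is correct and follows the same route as the paper: the paper also establishes the cycle $(2)\Rightarrow(3)$ (trivial), $(1)\Rightarrow(2)$ (via \Cref{exam:haltcompressible}), and $(3)\Rightarrow(1)$ (via the self-referential construction with the $\coRE$ semi-decider and anchors $x_{\text{yes}},x_{\text{no}}$), with the polynomial-time reduction giving $\RE$-hardness. Your write-up is actually somewhat more careful than the paper's in two places --- you spell out the transport $s\circ\texttt{Compress}^{\texttt{HALT}}\circ(r\circ\cdot)$ that the paper leaves implicit in the sentence following \Cref{exam:haltcompressible}, and you explicitly name the $K/n_0^{\Compressgame}$ circularity that the paper resolves by setting $C_0 = g(n_0,\mathbf{f}(\alpha))$ and observing that all dependences are polynomial so a fixpoint exists --- but no new ideas are involved.
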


\begin{theorem}[Compression criteria for $\coRE$-complete problems] \label{thm:compresscosRE}
		Let $\Decisionproblem = (\{\Language_{\text{yes}}^\Decisionproblem, \Language_{\text{no}}^\Decisionproblem\})$ be a decision problem. If $\Language_{\text{yes}}^\Decisionproblem \in \coRE$, then the following are equivalent:
		\begin{enumerate}
		\item $\Decisionproblem$ is $\coRE$-complete. 
		\item $\Decisionproblem$ is a compressible decision problem.
		\item $\Decisionproblem$ is a weakly compressible decision problem.
	\end{enumerate}
\end{theorem}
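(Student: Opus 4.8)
The plan is to prove the cycle of implications $(1)\Rightarrow(2)\Rightarrow(3)\Rightarrow(1)$; the whole argument is the $\coRE$-dual of the proof of \Cref{thm:compressRE}, obtained by interchanging $\Language_{\text{yes}}^{\Decisionproblem}\leftrightarrow\Language_{\text{no}}^{\Decisionproblem}$, completeness $\leftrightarrow$ soundness, and ``halts'' $\leftrightarrow$ ``runs forever''.

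\emph{The two easy directions.} $(2)\Rightarrow(3)$ is immediate: take $\texttt{Compress}_{\alpha}^{\Decisionproblem}:=\texttt{Compress}^{\Decisionproblem}$ for every $\alpha$, with $\gamma=1$ and $n_0^{\Compressgame}=n_0$. For $(1)\Rightarrow(2)$: by \Cref{exam:haltcompressible} the halting problem is compressible, hence so is the non-halting problem $\texttt{coHALT}$, which is $\coRE$-complete; fixing polynomial-time mapping reductions $f\colon\texttt{coHALT}\to\Decisionproblem$ and $g\colon\Decisionproblem\to\texttt{coHALT}$ (available since $\Decisionproblem$ is $\coRE$-complete), define $\texttt{Compress}^{\Decisionproblem}$ by: form $\langle g\circ\texttt{Seq}_{\Decisionproblem}\rangle$, apply $\texttt{Compress}^{\texttt{coHALT}}$ to obtain $\langle(g\circ\texttt{Seq}_{\Decisionproblem})^{\Compressgame}\rangle$, and output $\langle f\circ(g\circ\texttt{Seq}_{\Decisionproblem})^{\Compressgame}\rangle$. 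Because $\texttt{coHALT}$ is a total problem the composed instance is always valid, and the required runtime/consistency/completeness/soundness properties are inherited from those of $\texttt{Compress}^{\texttt{coHALT}}$ (using that $\polylog$ is closed under fixed powers, so composing with the polynomial-time $f$ keeps the output polylog-time).

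\emph{$(3)\Rightarrow(1)$, the crux.} Use the hypothesis $\Language_{\text{yes}}^{\Decisionproblem}\in\coRE$ to fix an algorithm $\texttt{Algo}_{\Decisionproblem}$ that halts on every instance of $\Language_{\text{no}}^{\Decisionproblem}$ and runs forever on every instance of $\Language_{\text{yes}}^{\Decisionproblem}$, together with easily computable witnesses $x_{\text{yes}}\in\Language_{\text{yes}}^{\Decisionproblem}$ and $x_{\text{no}}\in\Language_{\text{no}}^{\Decisionproblem}$ (as in the non-triviality assumptions of \Cref{sec:introcompression}; both are available for the commuting operator value problem, where one may take trivial accept/reject games). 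It suffices to reduce the non-halting problem to $\Decisionproblem$. Fix a small constant $\alpha$ (say $\alpha=2$), and let $\gamma=\poly(\alpha)$ and $n_0^{\Compressgame}$ be the constants attached to $\texttt{Compress}_{\alpha}^{\Decisionproblem}$ by \Cref{def:weaklycompressibleproblem}. Given a Turing machine $\mathtt{M}$, use Kleene's recursion theorem to build, in $\poly(|\langle\mathtt{M}\rangle|)$ time, a uniform instance $\texttt{Seq}_{\Decisionproblem}$ with access both to its own code $\langle\texttt{Seq}_{\Decisionproblem}\rangle$ and to $\langle\texttt{Seq}_{\Decisionproblem}^{\Compressgame}\rangle:=\texttt{Compress}_{\alpha}^{\Decisionproblem}(\langle\texttt{Seq}_{\Decisionproblem}\rangle)$, which on input $n$: (i) returns $x_{\text{no}}$ if $\mathtt{M}$ halts within $n$ steps; else (ii) returns $x_{\text{yes}}$ if $\texttt{Algo}_{\Decisionproblem}$ halts within $n$ steps on the instance $x_{\mathtt{M}}:=\texttt{Seq}_{\Decisionproblem}(N)$; else (iii) returns $\texttt{Seq}_{\Decisionproblem}^{\Compressgame}(n+1)$. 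Here $N=\poly(\gamma,|\langle\mathtt{M}\rangle|)\ge n_0^{\Compressgame}$ is a constant chosen large enough that the self-reference in (ii) resolves through the (fast, non-recursive) compressed machine $\texttt{Seq}_{\Decisionproblem}^{\Compressgame}$ rather than through $\texttt{Seq}_{\Decisionproblem}$ itself, and the reduction outputs $x_{\mathtt{M}}$. One checks $\TIME_{\texttt{Seq}_{\Decisionproblem}}(n)=O(n)+\poly(|\langle\mathtt{M}\rangle|)+\polylog(n)^{\gamma}\le n^{\alpha}$ for all $n$ past a constant depending only on $\mathtt{M}$, so $\texttt{Seq}_{\Decisionproblem}^{\Compressgame}$ is a genuine compression of $\texttt{Seq}_{\Decisionproblem}$; hence $\texttt{Seq}_{\Decisionproblem}(k)=\texttt{Seq}_{\Decisionproblem}^{\Compressgame}(k+1)$ whenever (i),(ii) do not fire, and $\texttt{Seq}_{\Decisionproblem}(k)\in\Language_{i}^{\Decisionproblem}\Rightarrow\texttt{Seq}_{\Decisionproblem}^{\Compressgame}(k)\in\Language_{i}^{\Decisionproblem}$ for $k\ge N$ and $i\in\{\text{yes},\text{no}\}$. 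The proof then splits into: (A) clause (ii) never fires — for otherwise, at the least such $n=C$ the failure of (i) shows $\mathtt{M}$ does not halt within $\le C$ steps, and chaining completeness down the chain $\texttt{Seq}_{\Decisionproblem}(k)=\texttt{Seq}_{\Decisionproblem}^{\Compressgame}(k+1)$ for $N\le k<C$ (or reading off the value at $N$ directly when $C\le N$) yields $x_{\mathtt{M}}\in\Language_{\text{yes}}^{\Decisionproblem}$, contradicting that $\texttt{Algo}_{\Decisionproblem}$ halted on $x_{\mathtt{M}}$, which forces $x_{\mathtt{M}}\in\Language_{\text{no}}^{\Decisionproblem}$; and (B) with (ii) excluded: if $\mathtt{M}$ halts (in $T$ steps) then $\texttt{Seq}_{\Decisionproblem}(\max(T,N))=x_{\text{no}}$ and chaining soundness downward to $N$ gives $x_{\mathtt{M}}\in\Language_{\text{no}}^{\Decisionproblem}$; if $\mathtt{M}$ never halts then (i) and (ii) never fire, so $\texttt{Algo}_{\Decisionproblem}$ runs forever on $x_{\mathtt{M}}$, and since $x_{\mathtt{M}}\in\Decisionproblem$ by consistency of the compressed output, the defining property of $\texttt{Algo}_{\Decisionproblem}$ forces $x_{\mathtt{M}}\in\Language_{\text{yes}}^{\Decisionproblem}$. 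Thus $\mathtt{M}\mapsto x_{\mathtt{M}}$ is a polynomial-time reduction from the non-halting problem to $\Decisionproblem$, which together with $\Decisionproblem\in\coRE$ (from the hypothesis) yields $\coRE$-completeness.

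\emph{Expected main obstacle.} The heart of the matter is making the self-referential construction in $(3)\Rightarrow(1)$ genuinely well-defined and of the right complexity: one must fix $\alpha$ (hence $\gamma$, $n_0^{\Compressgame}$, $N$) \emph{before} $\langle\texttt{Seq}_{\Decisionproblem}\rangle$ exists, yet the runtime bound $\TIME_{\texttt{Seq}_{\Decisionproblem}}\le n^{\alpha}$ — the very hypothesis that lets the compression apply — itself depends on $\gamma$ through clause (iii); this is resolved because $\polylog(n)^{\gamma}$ is $o(n)$ for every fixed $\gamma$, so a fixed small $\alpha$ suffices and all $\mathtt{M}$-dependence is absorbed into additive/threshold constants. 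Two further points require care that is invisible in the clean sketch of \Cref{sec:introcompression}: the self-reference in clause (ii) must be routed through the compressed (hence fast and non-recursive) machine so that $\texttt{Seq}_{\Decisionproblem}$ halts on every input, and — because \Cref{def:weaklycompressibleproblem} only guarantees completeness/soundness above a threshold $n_0^{\Compressgame}$ — the reduction target and all downward inductions must be anchored at the constant index $N\ge n_0^{\Compressgame}$ rather than at $1$. Everything else (the recursion-theorem bookkeeping, verifying $\texttt{Seq}_{\Decisionproblem}$ always outputs an element of $\Decisionproblem$, polynomial-time-ness of the reduction) is routine, and the compressible case of the theorem is the special case with $\texttt{Compress}_{\alpha}^{\Decisionproblem}=\texttt{Compress}^{\Decisionproblem}$ and no threshold.
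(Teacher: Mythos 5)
Your plan is to re-derive \Cref{thm:compresscosRE} by running the whole argument of \Cref{thm:compressRE} again with the $\RE/\coRE$ roles swapped. The paper does not do this: it observes (in the two sentences immediately preceding its proof of \Cref{thm:compressRE}) that the definitions of compressibility and weak compressibility are symmetric under exchanging $\Language_{\text{yes}}^{\Decisionproblem}\leftrightarrow\Language_{\text{no}}^{\Decisionproblem}$, so $\Decisionproblem$ is (weakly) compressible iff $\class{co}\Decisionproblem$ is, while $\Decisionproblem$ is $\coRE$-complete iff $\class{co}\Decisionproblem$ is $\RE$-complete, and the hypothesis $\Language_{\text{yes}}^{\Decisionproblem}\in\coRE$ is precisely $\Language_{\text{no}}^{\class{co}\Decisionproblem}\in\coRE$. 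Hence \Cref{thm:compresscosRE} follows by applying \Cref{thm:compressRE} to $\class{co}\Decisionproblem$ — a one-line duality argument, with no new construction at all. Redoing the whole proof buys nothing, and it costs you: the one place you have to be careful is exactly the place where you slip.

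The slip is in the self-reference of clause~(ii). As you wrote it, clause~(ii) ``returns $x_{\text{yes}}$ if $\texttt{Algo}_{\Decisionproblem}$ halts within $n$ steps on $x_{\mathtt{M}}:=\texttt{Seq}_{\Decisionproblem}(N)$,'' which requires producing the value $\texttt{Seq}_{\Decisionproblem}(N)$ \emph{before} the budgeted simulation of $\texttt{Algo}_{\Decisionproblem}$ begins. But $\texttt{Seq}_{\Decisionproblem}(N)$ reaches clause~(ii) itself, which again demands $\texttt{Seq}_{\Decisionproblem}(N)$, giving an unbounded self-call; declaring that it ``resolves through the compressed machine'' does not make it so, because the body of clause~(ii) is $\texttt{Seq}_{\Decisionproblem}$, not $\texttt{Seq}_{\Decisionproblem}^{\Compressgame}$. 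The paper avoids this in \Cref{pseu:REcompressibleproof} by folding the \emph{computation} of the target instance into the step-bounded simulation: line~5 runs lines~6--7 (first compute $\texttt{Seq}_{\Decisionproblem}(C_0)$, then run the $\coRE$ algorithm on it) for only $\max\{0,n-C_0\}$ steps, so that on input $C_0$ the budget is zero, the self-call is cut short, and $\texttt{Seq}_{\Decisionproblem}(C_0)$ terminates through the compressed branch. You need this same device. Adding the offset also repairs a second, related weak point in your part~(A): you say the case $C\le N$ is handled by ``reading off the value at $N$ directly,'' but that yields no contradiction — if $\mathtt{M}$ happens to halt between steps $C$ and $N$ then $\texttt{Seq}_{\Decisionproblem}(N)=x_{\text{no}}$, which is perfectly consistent with $\texttt{Algo}_{\Decisionproblem}$ having halted on it. With the $\max\{0,n-N\}$ budget, clause~(ii) literally cannot fire for $n\le N$, so $C>N$ holds automatically and your downward induction, anchored at $N\ge n_0^{\Compressgame}$ as in \Cref{def:weaklycompressibleproblem}, goes through exactly as in the paper's proof. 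Your $(1)\Rightarrow(2)$ (conjugating $\texttt{Compress}^{\texttt{coHALT}}$ by the two polynomial-time mapping reductions) is a reasonable way to make explicit the remark the paper states without proof after \Cref{exam:haltcompressible}, and $(2)\Rightarrow(3)$ is trivially correct.
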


Since whenever $\Decisionproblem$ is compressible/weakly compressible, $\class{co}\Decisionproblem$ is also compressible/weakly compressible, this implies that~\Cref{thm:compressRE} implies~\Cref{thm:compresscosRE} being true. Hence, we provide a proof for~\Cref{thm:compressRE} below. We remark that this proof is inspired by the suggested approach for showing $\MIPco=\coRE$ in~\cite[Conjecture 1.4]{mousaviNonlocalGamesCompression2022}. 

\newcommand{\coREalgoproof}{\texttt{coREalgo}_{\Language_{\text{no}}}}
\newcommand{\UniformseqREproof}{\texttt{Seq}_{\Decisionproblem,\text{\Turingmachinehalt}}}

\begin{proof}
	Note that (2) trivially implies (3), and by~\Cref{exam:haltcompressible} (1) implies (2). Hence it remains to show (3) implies (1). This proof follows a similar structure as the one presented in~\Cref{sec:introcompression}. 
	
	Hence, fix a $\Decisionproblem = (\{\Language_{\text{yes}}^\Decisionproblem, \Language_{\text{no}}^\Decisionproblem\})$ as per~\Cref{thm:compressRE}, and assume that $\Decisionproblem$ is a weakly compressible problem. Since by assumption,  $\Language_{\text{no}}^\Decisionproblem \in \coRE$, let $\coREalgoproof$ denote the $\coRE$ algorithm that halts whenever $x \not\in \Language_{\text{no}}^\Decisionproblem$ and runs forever if $x \in \Language_{\text{no}}^\Decisionproblem$ (this can be assumed by appending an infinite loop whenever $\coREalgoproof$ halts and correctly decides $x \in \Language_{\text{no}}^\Decisionproblem$). The algorithm $\coREalgoproof$ already implies that $\Decisionproblem \in \RE$.  Hence the only thing we need to show is that $\Decisionproblem$ can be reduced to the halting problem. 
	
	Hence, fix a Turing machine \Turingmachinehalt. We wish to find an instance $x_{\text{\Turingmachinehalt}}$ such that whenever \Turingmachinehalt halts, then $x_{\text{\Turingmachinehalt}} \in \Language_{\text{yes}}^\Decisionproblem$ and $x_{\text{\Turingmachinehalt}} \in \Language_{\text{no}}^\Decisionproblem$ otherwise. For every $\beta \in \bN$, let $\texttt{Compress}_{\beta}^{\Decisionproblem}$ be the compression algorithm guaranteed by the weakly compressible condition given in~\Cref{def:weaklycompressibleproblem}. Since we assume $\Language_{\text{yes}}^\Decisionproblem$ and $\Language_{\text{no}}^\Decisionproblem$ are non-empty in the preliminary, let $x_{\text{yes}} \in \Language_{\text{yes}}^\Decisionproblem$ and $x_{\text{no}} \in \Language_{\text{no}}^\Decisionproblem$ be an arbitrary element in these two sets.
	
	Let $\alpha, C_0 \in \bN$ be two constants to be specified later in the proof. We construct the following uniform game sequence $\texttt{Seq}_{\Decisionproblem,\text{\Turingmachinehalt}}$ based on \Turingmachinehalt, as shown below. 
	
	\vspace{10pt}
	\IncMargin{1em}
	\begin{algorithm}[H]
		\DontPrintSemicolon
		
		\textbf{Input}: Integer $n$. 
		
		Run \Turingmachinehalt for $n$ steps. If \Turingmachinehalt halts in the given steps, \textbf{return} $x_{\text{yes}}$.
		
		Compute the description of $\langle \texttt{Seq}_{\Decisionproblem,\text{\Turingmachinehalt}} \rangle$.
		
		Compute $\langle \texttt{Seq}_{\Decisionproblem,\text{\Turingmachinehalt}}(C_0) \rangle$, the Turing machine which is hardcoded into computing $\texttt{Seq}_{\Decisionproblem,\text{\Turingmachinehalt}}(C_0)$.
		
		Simulate line 6-7 for $\max\{0, n - C_0\}$ steps, if line 6-7 halts in the given steps, \textbf{return} $x_{\text{no}}$.
		
		\Indp Run the Turing machine $\langle \texttt{Seq}_{\Decisionproblem,\text{\Turingmachinehalt}}(C_0) \rangle$ until  $\texttt{Seq}_{\Decisionproblem,\text{\Turingmachinehalt}}(C_0)$ is computed. 
		
		Run $\coREalgoproof$ with $\texttt{Seq}_{\Decisionproblem,\text{\Turingmachinehalt}}(C_0)$ as input. 
		
		\Indm Otherwise, apply $\texttt{Compress}_{\alpha}^{\Decisionproblem}$ with the input $\langle \texttt{Seq}_{\Decisionproblem,\text{\Turingmachinehalt}} \rangle$ to obtain the description for $\langle \texttt{Seq}_{\Decisionproblem,\text{\Turingmachinehalt}}^{\text{comp}} \rangle$.
		
		Compute and \textbf{return} $\texttt{Seq}_{\Decisionproblem,\text{\Turingmachinehalt}}^{\text{comp}}(n+1)$
		
		\caption{The description for $\texttt{Seq}_{\Decisionproblem,\text{\Turingmachinehalt}}$.}
		\label{pseu:REcompressibleproof}
	\end{algorithm}\DecMargin{1em}
	\vspace{10pt} 
		
	 We point out that the length of the source code $|\langle \UniformseqREproof \rangle|$ only depends on the Turing machine \Turingmachinehalt. Similar to the proof given in~\Cref{sec:introcompression}, in line 3, we use Kleene's Recursion Theorem to allow $\langle \UniformseqREproof \rangle$ to perform computation on its own source code. This step can be performed in $O(\poly(|\langle \UniformseqREproof \rangle|) = O(\poly(|\text{\Turingmachinehalt}|))$ time. We also point out that $\UniformseqREproof$ is a valid uniform problem instance (i.e. its output range is in $\Decisionproblem$), since it can only terminate in line 2 and 5 (or else the output is $x_{\text{yes}}$ and $x_{\text{no}}$ respectively, which are both in $\Decisionproblem$), and in line 9 (which is in $\Decisionproblem$ by point 2 of~\Cref{def:weaklycompressibleproblem}). We begin by deriving the runtime for $\UniformseqREproof$ on input $n \in \bN$ by examining~\Cref{pseu:REcompressibleproof} line by line:
	\begin{itemize}
		\item For line 2, simulating \Turingmachinehalt for n steps takes time 
		\begin{equation*}
			\poly(|\text{\Turingmachinehalt}|, n).
		\end{equation*}
		\item For line 3, by Kleene's Recursion theorem, computing the description of $\langle \UniformseqREproof \rangle$ takes time
		\begin{equation*}
			\poly(|\text{\Turingmachinehalt}|, |\langle \UniformseqREproof \rangle|, n) = \poly(|\text{\Turingmachinehalt}|, n).
		\end{equation*} 
		\item For line 4, the Turing machine $\langle \texttt{Seq}_{\Decisionproblem,\text{\Turingmachinehalt}}(C_0) \rangle$ is essentially $\langle \texttt{Seq}_{\Decisionproblem,\text{\Turingmachinehalt}} \rangle$, but replacing every instance of $n$ occurring after line 1 with $C_0$. Since $C_0$ is a constant, this takes time 
		\begin{equation*}
			O(\poly(|\langle \UniformseqREproof \rangle|, C_0)).
		\end{equation*}
		\item For line 5-7, simulating line 6-7 for $\max\{0, n - C_0\}$ steps takes time 
		\begin{equation*}
			O(\poly(|\cG_{C_0}|, C_0 ,n)) = O(\poly(n,C_0, |\text{\Turingmachinehalt}|)).
		\end{equation*} 
		\item For line 8 and 9, applying $\texttt{Compress}_{\alpha}^{\Decisionproblem}$ on the input $\langle \UniformseqREproof \rangle$ and computing $\texttt{Seq}_{\Decisionproblem,\text{\Turingmachinehalt}}^{\text{comp}}(n+1)$ takes time
		\begin{equation*}
			O(\poly(\alpha, \log(n)^{\poly{(\alpha)}}, |\langle \UniformseqREproof \rangle|))
		\end{equation*}
		by condition 1 and 3 in~\Cref{def:weaklycompressibleproblem}.
	\end{itemize}
	Although many parameters appear in the runtime analysis, the only variable which is not set to a constant is $n$! By combining the runtime analysis above, we have
	\begin{equation*}
		\TIME_{\UniformseqREproof}(n) = O(\poly(n, \log(n)^{\poly{(\alpha)}}, \alpha, |\text{\Turingmachinehalt}|, C_0 )).
	\end{equation*}
	Since $C_0$ does not depend exponentially on $\alpha$ in the above equation, there exists a choice for $\alpha, n_0 \in \bN$ such that by setting $C_0 = g(n_0, \textbf{f}(\alpha))$, where $g$ is the polynomial used to define $\gamma$ and $\textbf{f}$ is the polynomial used to define $n_0^{\Compressgame}$ in~\Cref{def:compressibleproblem}, we have
	\begin{equation*}
		\TIME_{\UniformseqREproof}(n) \leq n^{\alpha},
	\end{equation*}
	for all $n \geq n_0$. Fix $\alpha$, $C_0$ and $n_0$ as the constant which satisfies the property above. By the completeness/soundness condition of~\Cref{def:compressibleproblem}, for all $n \geq n_0$
	\begin{itemize}
		\item  $\UniformseqREproof \in\Language_{\text{yes}}^{\Decisionproblem} \Longrightarrow \UniformseqREproof^{\text{Comp}}(n) \in\Language_{\text{yes}}^{\Decisionproblem}$, 
		\item $\UniformseqREproof(n) \in\Language_{\text{no}}^{\Decisionproblem} \Longrightarrow \UniformseqREproof(n) \in\Language_{\text{no}}^{\Decisionproblem}$.
	\end{itemize}
	
	We argue that $x_{\text{\Turingmachinehalt}} = \UniformseqREproof(C_0) \in \Decisionproblem$ is the instance needed for the proof (i.e. $x_{\text{\Turingmachinehalt}} \in \Language_{\text{yes}}^\Decisionproblem$ if \Turingmachinehalt halts, and  $x_{\text{\Turingmachinehalt}} \in \Language_{\text{no}}^\Decisionproblem$ otherwise). 
	
	We first show that $\langle \UniformseqREproof \rangle$ can never terminate at line 5 of~\Cref{pseu:REcompressibleproof}. Consider $\UniformseqREproof$ with input $n < C_0$; since line 5 does not perform any operation by definition, it also can never terminate in this spot. Similarly, if \Turingmachinehalt halts in time $T$, for any input $n > T$,~\Cref{pseu:REcompressibleproof} terminates in line 2 before reaching line 5.  
	
	
	Hence, suppose for a contradiction that $\UniformseqREproof(n)$ terminates at line 5 for some input $n \geq C_0$, and suppose that \Turingmachinehalt does not halt on step $n$, and without loss of generality assume that $n$ is the smallest natural number such that $\UniformseqREproof(n)$ terminates at line 5. By definition, this means $\UniformseqREproof(n) = x_{\text{no}} \in \Language_{\text{no}}^\Decisionproblem$.
	
	Since $\langle \texttt{Seq}_{\Decisionproblem,\text{\Turingmachinehalt}}(C_0) \rangle$ is a terminating program, this implies that for $\UniformseqREproof(n)$ to terminate at line 5, $\coREalgoproof(\texttt{Seq}_{\Decisionproblem,\text{\Turingmachinehalt}}(C_0))$ also would terminate. By the definition of $\coREalgoproof$, we have $\texttt{Seq}_{\Decisionproblem,\text{\Turingmachinehalt}}(C_0) \not\in \Language_{\text{no}}^\Decisionproblem$. Since $\langle \UniformseqREproof \rangle$ is a valid uniform sequence, we have $\texttt{Seq}_{\Decisionproblem,\text{\Turingmachinehalt}}(C_0) \in \Language_{\text{yes}}^\Decisionproblem$.
	
	Now consider $\UniformseqREproof(n-1)$, since $n$ is the smallest natural number such that $\UniformseqREproof(n)$ terminates at line 5, and the Turing machine \Turingmachinehalt does not halt in $n-1$ steps. $\UniformseqREproof(n-1)$, by default terminates on line 9 of~\Cref{pseu:REcompressibleproof}. This means that $\UniformseqREproof(n-1) = \UniformseqREproof(n)^{\text{comp}} \in  \Language_{\text{no}}^\Decisionproblem$ by the weakly compressible condition. By an inductive argument, since $n > C_0$, this implies that $\UniformseqREproof(C_0) \in \Language_{\text{no}}^\Decisionproblem$, contradicting the fact that $\UniformseqREproof(C_0) \in \Language_{\text{yes}}^\Decisionproblem$. Thus, $\langle \UniformseqREproof \rangle$ cannot halt on line 5 of~\Cref{pseu:REcompressibleproof}.
	
	
	We first focus on the case where \Turingmachinehalt terminates in $T$ steps. If $T \leq C_0$, then $\UniformseqREproof(T) = x_{\text{yes}} \in \Language_{\text{yes}}^\Decisionproblem$ by line 1 of~\Cref{pseu:REcompressibleproof}. Hence, assume $C_0 < T$. By line 2 of~\Cref{pseu:REcompressibleproof}, $\UniformseqREproof(T) = x_{\text{yes}} \in \Language_{\text{yes}}^\Decisionproblem$. Now, consider $\UniformseqREproof(T-1)$, since it cannot terminate at line 2 and 5 of~\Cref{pseu:REcompressibleproof}, this implies that the Turing machine $\langle \UniformseqREproof(T-1) \rangle$ will terminate on line 9 of~\Cref{pseu:REcompressibleproof}. Hence we have $\UniformseqREproof(T-1) = \UniformseqREproof^{\text{comp}}(T) \in \Language_{\text{yes}}^\Decisionproblem$ by the weakly compressible condition. Again, by an inductive argument, we have $\UniformseqREproof(T-1) \in \Language_{\text{yes}}^\Decisionproblem$, which completes the proof for the case where \Turingmachinehalt halts in finite time.
	
	Now, suppose \Turingmachinehalt does not halt. By the above claim, $\coREalgoproof$ also does not terminate when given the input $\UniformseqREproof(C_0)$ (or else $\UniformseqREproof$ terminates at line 5 of~\Cref{pseu:REcompressibleproof}, deriving a contradiction). By the definition of $\coREalgoproof$, this implies that $\UniformseqREproof(C_0) \in \Language_{\text{no}}^\Decisionproblem$. Hence showing that $\Decisionproblem$ is $\RE$-complete. This shows (3) implies (1) in~\Cref{thm:compressRE}, which completes the proof for~\Cref{thm:compressRE}. 
\end{proof}

We remark that in the above proof, since $C_0$ is a constant, computing $\UniformseqREproof(C_0)$ takes $O\left(\poly(|\text{\Turingmachinehalt}|)\right)$ time. This shows that the reduction from the Halting problem to $\Decisionproblem$ is a polynomial-time reduction. 

Interestingly, for the argument given above, independent of whether \Turingmachinehalt halts or not, we can never observe $\texttt{Seq}_{\Decisionproblem,\text{\Turingmachinehalt}}$ halting at line 5. Thus, one might be tempted to remove line 5-7 from~\Cref{pseu:REcompressibleproof} on the proof above. However, without these three lines, we cannot argue that $x_{\text{\Turingmachinehalt}} \in \Language_{\text{no}}^\Decisionproblem$ whenever \Turingmachinehalt does not halt. Interestingly, although $x_{\text{no}}$ is never actually returned, it still plays a critical role in the above argument. We highlight this as an open problem: can one remove line 5, or remove the ``no instances" (and retain line 5 of~\Cref{pseu:REcompressibleproof}) and still repeat the same argument for the proof above. 

As pointed out before, every compressible decision problem is also a weakly compressible problem. In this paper, we show the converse assuming that $\Decisionproblem \in \RE$ or $\Decisionproblem \in \coRE$. However, it is an interesting problem if the converse is true in general. The issue is that given a description of a uniform instance $\langle \texttt{Seq}_{\Decisionproblem} \rangle$, there is no algorithm which can determine the smallest $\alpha \in \bN$ such that $\TIME_{\texttt{Seq}_{\Decisionproblem}} = O(n^{\alpha})$ (or whether $\TIME_{\texttt{Seq}_{\Decisionproblem}} = O(\poly(n))$ at all). Hence, there is no clear way to construct a universal $\texttt{Compress}^{\Decisionproblem}$ algorithm given the $\texttt{Compress}^{\Decisionproblem}_{\alpha}$ algorithm. 


In the remainder of this paper, we show a specific set of $\MIP^*$/$\MIPco$ protocols, \textit{conditional linear samplable games}, form a decision problem that is weakly compressible. On a high level, the conditional linear samplable games have a specifically tailored question distribution known as \textit{conditional linear distribution} which makes defining a $\texttt{Compress}$ map possible. We give the definition for the conditional linear distribution in~\Cref{sec:condlineardistribution}, then we give a definition for a conditional linear samplable game in~\Cref{sec:MIPcompressibility}. 


\end{CJK*}
\section{Conditionally linear distribution} \label{sec:condlineardistribution}

In order to define the set of $\MIP^*$/$\MIPco$ protocols which are weakly compressible, we need to define a specific question distribution known as the \text{conditionally linear distribution}, and prove some lemmas related to it. We remark that this is the same set of distributions used in~\cite{jiMIPRE2022a} to show the $\RE$-completeness of $\MIP^*$. 

\subsection{Conditionally linear functions and conditionally linear distribution}  \label{sec:condlinearfun}
We start this subsection by first introducing the notion of the conditionally linear function, a key building block for the conditionally linear distribution.
\begin{definition}[Conditionally linear function] \label{def:condlinfun}
	Let $\seedlength \in \bN$ be an odd integer and $m, k \in \bN$, and let $V$ be a canonical basis subspace of $\bF_{2^{\seedlength}}^m$. We say that the function $\text{\gls{CondLinearfun}}: V \rightarrow V$ is a $k$-th level conditionally linear function over $V$ if $\decidertypedfunction$ can be defined using the following construction: 
	\begin{itemize}
		\item There exists a disjoint partition of subspaces $\{V_h\}_{h \in [k]}$ of $V$, where each $V_h$ is a canonical basis subspace, which we refer to as ``registers" for the function $\decidertypedfunction$.
		\item There exists a single linear function $\decidertypedfunction_{0,0}: V_{0} \rightarrow V_0$, this is referred to as the zeroth-level linear function of $\decidertypedfunction$. 
		\item For $0 < j < k$, the function $\decidertypedfunction_{j}: V_{\leq j} \rightarrow V_{\leq j}$ is defined recursively as follows:
		\begin{itemize}
			\item There exists a collection of linear functions $\{\text{\gls{CondLinearfundetail}}: V_j \rightarrow V_j\}_{s \in V_{< j}}$, which is referred to as the $j$-th level linear functions. 
			\item For every input $s \in V_{\leq j}$, write $s$ as $s = s_{j} + s_{< j}$ for $s_j \in V_j$ and $s_{< j} \in V_{< j}$, then
			\begin{equation*}
				\decidertypedfunction_{j} := \decidertypedfunction_{j-1}(s_{< j}) + \decidertypedfunction_{j,  \decidertypedfunction_{j-1}(s_{< j})}(s_j)
			\end{equation*} 
		\end{itemize}
		\item Finally, we have $\decidertypedfunction = \decidertypedfunction_{k-1}$.
	\end{itemize}     
\end{definition}

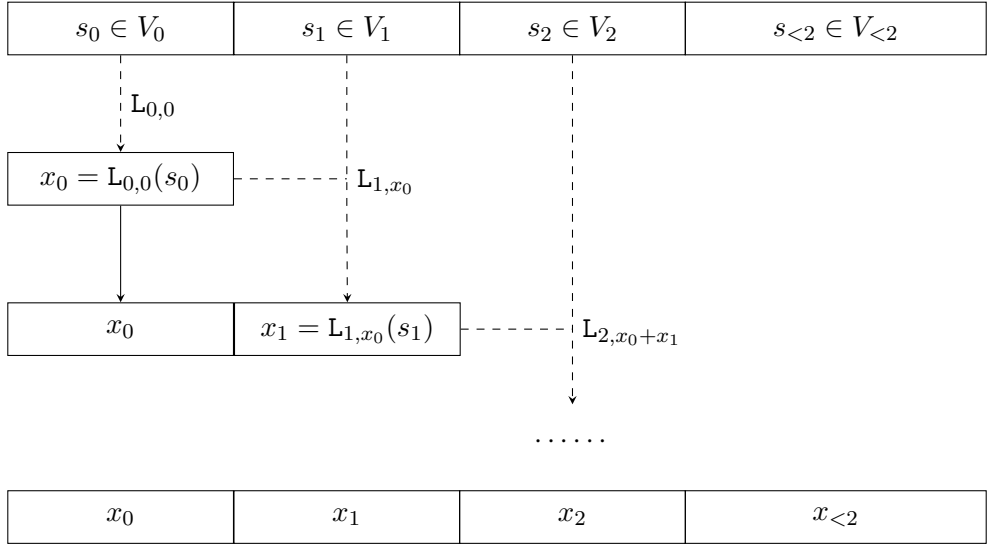
\begin{figure}[!b]
	\centering
	\begin{tikzpicture}
		\node[draw, minimum width=3cm, minimum height=0.7cm, anchor=east] at (0,0) (xv1) {$s_0 \in V_0$};
		\node[draw, minimum width=3cm, minimum height=0.7cm] at (1.5,0) (xv3) {$s_1 \in V_1$};
		\node[draw, minimum width=3cm, minimum height=0.7cm] at (4.5,0) (xv4) {$s_2 \in V_2$};
		\node[draw, minimum width=4cm, minimum height=0.7cm] at (8,0) {$s_{<2} \in V_{< 2}$};
		\node[draw, minimum width=3cm, minimum height=0.7cm, anchor=east] at (0,-2) (xl1) {$x_0 = \decidertypedfunction_{0,0}(s_0)$};

		\node[draw, minimum width=3cm, minimum height=0.7cm, anchor=east] at (0,-4) (xl4) {$x_0$};
		\node[draw, minimum width=3cm, minimum height=0.7cm, anchor=west] at (0,-4) (xv2) {$x_1 = \decidertypedfunction_{1,x_0}(s_1)$};
		
		\draw[dashed, -stealth, to path={|- (\tikztotarget)}] (xv3.south) -- ++(0,-2) node[above right] {$\decidertypedfunction_{1,x_0}$} -- (xv2.north);
		\draw[dashed,-stealth] (xv1.south) -- ++(0,-1) node[above right] {$\decidertypedfunction_{0,0}$} -- (xl1.north);
		\draw[dashed, to path={|- (\tikztotarget)}] (xl1.east) edge (1.5,-2);
		\draw[-stealth] (xl1.south) -- (xl4.north);

		\draw[dashed, -stealth, to path={|- (\tikztotarget)}] (xv4.south) -- ++(0,-4) node[above right] {$\decidertypedfunction_{2,x_0+x_1}$} -- (4.5,-5);
		\draw[dashed,to path={|- (\tikztotarget)}] (xv2.east)  -- (4.5,-4);
		
		\node[right] at (3.75,-5.5) {$\;\cdots\cdots$};
		
		\node[draw, minimum width=3cm, minimum height=0.7cm, anchor=east] at (0,-6.5)  {$x_0$};
		\node[draw, minimum width=3cm, minimum height=0.7cm] at (1.5,-6.5)  {$x_1$};
		\node[draw, minimum width=3cm, minimum height=0.7cm] at (4.5,-6.5)  {$x_2$};
		\node[draw, minimum width=4cm, minimum height=0.7cm] at (8,-6.5) {$x_{<2}$};
		
	\end{tikzpicture}
	\caption{A diagram representation of the conditionally linear function $\decidertypedfunction$. 
	}\label{fig:cl-functions}
\end{figure}

We refer to~\Cref{fig:cl-functions} for more intuition about conditionally linear functions. In this paper, we abbreviate conditionally linear as CL. In this paper, we also define CL functions $\decidertypedfunction$ with $\seedlength$ being an even integer. When this occurs, we assume that the range of $\decidertypedfunction$, $V$ , contains an additional canonical basis (i.e. $\decidertypedfunction: \bF_{2^{\seedlength+1}}^m \rightarrow \bF_{2^{\seedlength+1}}^m$), and the additional canonical basis created are merged into the register $V_0$ and lies in the kernel space of $L_{0,0}$. We give a simple example of a CL function below. 
\begin{example}
	Consider the finite field $\bF_{2}^3$ with basis $(e_0, e_1, e_2)$, and the function 
	\begin{equation*}
		\decidertypedfunction(x_0, x_1, x_2) = (x_0, x_0 \cdot x_1 + (1+x_0)\cdot x_2, 0) = (x_0, x_0 \cdot x_1 + x_1 \cdot x_2 + x_2, 0 ).
	\end{equation*}
	The function $\decidertypedfunction$ is a second level CL function with registers $V_0 = \text{span}((1,0,0))$ and $V_1 = \text{span}((0,1,0), \\ (0,0,1))$. $\decidertypedfunction$ can be defined as the following: the zeroth level linear function for $\decidertypedfunction$ can be taken as
	\begin{equation*}
		\decidertypedfunction_{0,0}(x_0,0,0) = (x_0,0,0),
	\end{equation*}
	and two first level linear functions for $\decidertypedfunction$ can be specified by
	\begin{equation*}
		\decidertypedfunction_{1,0}(0,x_1,x_2) = (0,x_1,0), \, \quad  \text{and} \quad \, 	\decidertypedfunction_{1,1}(0,x_1,x_2) = (0,x_1,0).
	\end{equation*}
\end{example}
 Since $\{0\} \subseteq V \subseteq \bF_{2^\seedlength}^m$, any $k$-th level CL function is also a $k'$-th level CL function for $k' \geq k$. Since the definition for a CL function is recursive, for each $j \in [k]$ and $s \in V_j$, one can define a $k -j$ level CL function from $V_{\geq j} \rightarrow L_{\geq j}$ by setting the initial linear function to be $\decidertypedfunction_{j, s}$. Similarly, we can combine a $k_1$-th level CL function and a collection of $k_2$-th level CL functions with the same register into a $(k_1 + k_2)$-level CL function through the series composition. We make this transformation more precise below. We remark that this is the composition defined in~\cite[Lemma 4.7]{jiMIPRE2022a}.

\begin{definition}[Series composition of CL functions] \label{defn:seriescomposCLfn}
		Let $V\subseteq \bF_{2^\seedlength}^m$ be a canonical basis subspace and let $V = V^1 \oplus V^2$ be a disjoint partition where both $V^1$ and $V^2$ are canonical basis subspaces. Let $\mathtt{M}: V^1 \rightarrow V^1$ be a $k_1$-th level CL function with registers $\{V_j^1\}_{j \in [k_1]}$, and let $\{ \mathtt{N}^x:  V^2 \rightarrow  V^2 \}_{x \in V^1}$ be a collection of $k_2$-th level CL functions which share the same registers $\{V^2_j\}_{j \in [k_2]}$.	Define the \textit{series composition} between $\mathtt{M}$ and $\{ \mathtt{N}^x \}_{x \in V^1}$ to be a  $(k_1 + k_2)$-level CL function $\decidertypedfunction: V \rightarrow V$ , defined as follows:
		\begin{itemize}
			\item The function has registers $\{V_0, \cdots V_{n+m} \}$, where $V_j = V^1_j$ for  $0 \leq j < k_1$ and $V_j = V^2_{j-k_1}$ for $k_1 \leq j < k_1+k_2$. 
			\item For $0 \leq j < k_1$ and for each $h \in V_{<j}$, we have 
			\begin{equation*}
				\decidertypedfunction_{j, h} = \mathtt{M}_{j, h}
			\end{equation*}
			\item For $k_1 \leq j < k_1+k_2$, we define 
			\begin{equation*}
				\decidertypedfunction_{j, h_v + h_w} = \mathtt{N}_{j-m, h_w}^{h_v},
			\end{equation*}
			where $h_v \in V$ and $h_w \in W$.
		\end{itemize}
\end{definition} 

In other words,  for every input $s \in V$, write $s = v_1+v_2$, where $v_1 \in V^1$ and $v_2 \in V^2$. $\decidertypedfunction$ first applies the $k_1$-th level CL function $\mathtt{M}$ to $v_1$, the $V^1$ component for $V$. Based on the output of $\mathtt{M}(v_1)$, $\decidertypedfunction$ will then apply the $k_2$-th level CL function $\mathtt{N}^{\mathtt{M}(v_1)}$ to $v_2$, the $V^2$ component within $V$. We can also combine two CL functions in parallel as described below, we remark that this is the CL function constructed in~\cite[Lemma 4.9]{jiMIPRE2022a}.
\begin{definition}[Parallel composition of CL functions] \label{defn:parallelcomposCLfn}
		Let $V \subseteq \bF_{2^{\seedlength}}^m$ be a canonical basis subspace and let $V = V^1 \oplus V^2$  be a disjoint partition where both $V^1$ and $V^2$ are canonical basis subspaces. Let $\mathtt{M}: V^1 \rightarrow V^1$ and $\mathtt{N}: V^2 \rightarrow V^2$ be a  $k$-th level CL function with registers $\{V_j^1\}_{j \in [k]} \subseteq V^1$ and  $\{V_j^2\}_{j \in [k]} \subseteq V^2$ respectively. Define the \textit{parallel composition} between $\mathtt{M}$ and $\mathtt{N}$ to be a level $k$ CL function $\decidertypedfunction: V \rightarrow V$, defined as follows:
	\begin{itemize}
		\item The function has registers $\{V_j = V_j^1 \oplus V_j^2\}$. 
		\item For every $j \in [k]$ and $h_{< j} \in V_{< j}$, write $h_{< j} = h_{< j}^1 + h_{<j}^2$ for $h_{< j}^1 \in V_{< j}^{1}$ and $h_{< j}^2 \in V_{< j}^2$ (resp. $h_j^2 \in V_j^2$). Define $\decidertypedfunction_{j, h_{< j}}: V_j \rightarrow V_j$ to be
		\begin{equation*}
			\decidertypedfunction_{j, h_{< j}} = \mathtt{M}_{j, h_{< j}^1} +  \mathtt{N}_{j, h_{< j}^2} 
		\end{equation*}
	\end{itemize}
\end{definition} 

We introduce the notion of a CL distribution below. In this paper, we consider mainly games with a CL distribution as the question distribution. 
\begin{definition}[Conditionally linear distribution] \label{def:CLdistribution}
	A distribution $\mu$ is a $(k, m, \seedlength)$ CL distribution if there exist two $k$-th level CL function $\text{\gls{CondLineardist}}: \bF_{2^{\seedlength}}^m \rightarrow \bF_{2^{\seedlength}}^m$, $P \in \{A,B\}$ with the same register $\{V_j\}_{j\in [k]}$ such that $\mu$ can be sampled in the following way.
	\begin{enumerate}
		\item Uniformly sample $s \in \bF_{2^{\seedlength}}^m$. 
		\item Give $\decidertypedfunction^A(s)$ to Alice, and $\decidertypedfunction^B(s)$ to Bob.
	\end{enumerate}
	Given a sample $(x,y) \sim \mu$,  we refer to the $s \in \bF_{2^{\seedlength}}^m$ as the ``seed" for the given sample if $(x,y) = (\decidertypedfunction^{A}(s), \decidertypedfunction^{B}(s))$.
\end{definition}
We refer to a non-local game $\cG= (\cX, \cA, \mu, D)$ as CL samplable if the sampling procedure is a CL distribution, and as a $k$-th level CL samplable game if furthermore the question distribution is a $k$-th level CL distribution. 

For any game $\cG$, if $\mu$ is a $(k, m, \seedlength)$ CL distribution, then we can write $\cX = \bF_{2^{\seedlength}}^m = \{0,1\}^{p \cdot k}$ by mapping elements of   $\bF_{2^{\seedlength}}^m$ into its canonical representation. We remark that in~\cite{jiMIPRE2022a} each of the $\decidertypedfunction$ does not necessarily need to share the same register. However, we choose to present it this way for simplicity of notation, and the introspection procedure in~\Cref{sec:introspection} would still work even if $\decidertypedfunction^A$ and $\decidertypedfunction^B$ are defined using different registers. Finally, we have the following lemma stating that any $r$-fold parallel repetition of a CL samplable game is still CL samplable, and the proof follows trivially from applying the parallel composition given in~\Cref{defn:parallelcomposCLfn}. 
\begin{lemma} \label{lem:parallelrefCLfunction}
	Let $r \in \bN$, and let $\cG = (\cX, \cA, \mu, D)$ be a CL samplable game via a $(k,m,p)$-CL distribution. Then, $\cG^{\otimes r} = (\cX^{r}, \cA^r, \mu^{\otimes r}, D^{\otimes r})$ is samplable via a $(k,r \cdot m,p)$ CL distribution. 
\end{lemma}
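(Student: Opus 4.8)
The plan is to prove \Cref{lem:parallelrefCLfunction} by unwinding the definitions: an $r$-fold parallel repetition of a game $\cG$ with question set $\cX = \bF_{2^p}^m$ has question set $\cX^r = (\bF_{2^p}^m)^r \cong \bF_{2^p}^{rm}$, and the question distribution $\mu^{\otimes r}$ samples $r$ independent seeds $s^{(1)}, \dots, s^{(r)} \in \bF_{2^p}^m$, hands $(\decidertypedfunction^A(s^{(1)}), \dots, \decidertypedfunction^A(s^{(r)}))$ to Alice and the analogous tuple to Bob. So the natural candidate is: the single seed for the repeated game is $s = (s^{(1)}, \dots, s^{(r)}) \in \bF_{2^p}^{rm}$, sampled uniformly (which is exactly the product of $r$ uniform samples on $\bF_{2^p}^m$), and the CL function for the repeated game should be the ``$r$-fold parallel composition'' of $\decidertypedfunction^P$ with itself, for each $P \in \{A,B\}$.

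First I would set up the ambient space: view $\bF_{2^p}^{rm}$ as the canonical basis subspace obtained as the direct sum $W^1 \oplus \cdots \oplus W^r$ where each $W^\ell$ is a copy of $\bF_{2^p}^m$ sitting on the $\ell$-th block of coordinates — each $W^\ell$ is indeed a canonical basis subspace. Then I would invoke \Cref{defn:parallelcomposCLfn} iteratively: the parallel composition of two $k$-th level CL functions over complementary canonical basis subspaces is again a $k$-th level CL function, so by induction on $r$ the $r$-fold parallel composition $\decidertypedfunction^{P, \otimes r} := \decidertypedfunction^P \boxplus \cdots \boxplus \decidertypedfunction^P$ (one copy acting on each $W^\ell$) is a $k$-th level CL function on $\bF_{2^p}^{rm}$, with registers $V_j = V_j^{(1)} \oplus \cdots \oplus V_j^{(r)}$ where $\{V_j^{(\ell)}\}_{j \in [k]}$ is the (copy of the) register partition of $\decidertypedfunction^P$ on the $\ell$-th block. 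Crucially, since both $\decidertypedfunction^A$ and $\decidertypedfunction^B$ are assumed to share the same register $\{V_j\}_{j \in [k]}$ in \Cref{def:CLdistribution}, their $r$-fold parallel compositions also share the same register $\{V_j^{(1)} \oplus \cdots \oplus V_j^{(r)}\}_{j \in [k]}$, so the pair $(\decidertypedfunction^{A,\otimes r}, \decidertypedfunction^{B,\otimes r})$ satisfies the hypotheses of \Cref{def:CLdistribution} with parameters $(k, rm, p)$.

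Next I would verify that sampling from this $(k, rm, p)$ CL distribution reproduces $\mu^{\otimes r}$. By \Cref{def:CLdistribution} the distribution samples $s = (s^{(1)}, \dots, s^{(r)}) \in \bF_{2^p}^{rm}$ uniformly and outputs $(\decidertypedfunction^{A,\otimes r}(s), \decidertypedfunction^{B,\otimes r}(s))$. Uniformity on $\bF_{2^p}^{rm}$ is the same as $r$ independent uniform samples on $\bF_{2^p}^m$, so the seeds are distributed as in $\mu^{\otimes r}$; and by the defining property of parallel composition in \Cref{defn:parallelcomposCLfn}, $\decidertypedfunction^{P,\otimes r}$ acts block-diagonally, i.e. $\decidertypedfunction^{P,\otimes r}(s^{(1)}, \dots, s^{(r)}) = (\decidertypedfunction^P(s^{(1)}), \dots, \decidertypedfunction^P(s^{(r)}))$ — here one has to check that the ``$h_{<j}$'' conditioning in \Cref{defn:parallelcomposCLfn} indeed decouples across blocks, which it does because the register decomposition is itself block-diagonal. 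Hence the output pair is exactly a sample $((x^{(1)}, \dots, x^{(r)}), (y^{(1)}, \dots, y^{(r)}))$ with each $(x^{(\ell)}, y^{(\ell)}) \sim \mu$ independently, which is the definition of $\mu^{\otimes r}$.

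There is no real obstacle here — the lemma is essentially bookkeeping. The one point requiring mild care is that \Cref{defn:parallelcomposCLfn} is stated only for a two-way disjoint partition $V = V^1 \oplus V^2$, so strictly speaking I would phrase the $r$-fold composition as an $r$-step induction (at step $\ell$, parallel-compose the already-built CL function on $W^1 \oplus \cdots \oplus W^{\ell-1}$ with $\decidertypedfunction^P$ on $W^\ell$), checking at each step that the partition into canonical basis subspaces is preserved and that the two functions $\decidertypedfunction^A, \decidertypedfunction^B$ stay register-synchronized; then observe that the block-diagonal action and the register structure claimed above follow by unrolling the induction. I would note that $\cA^r$ and $D^{\otimes r}$ play no role in the argument since CL samplability is a property of $\mu$ alone, so the proof genuinely is ``follows trivially from applying the parallel composition'' as the excerpt states, and I would keep the write-up to a few lines.
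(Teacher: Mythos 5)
Your proposal is correct and follows the same approach the paper indicates: iterate the parallel composition of Definition~\ref{defn:parallelcomposCLfn} $r$ times (by induction, since that definition is stated for a two-way split), verify that the composed function acts block-diagonally so the resulting CL distribution reproduces $\mu^{\otimes r}$, and note that the shared registers of $\decidertypedfunction^A, \decidertypedfunction^B$ carry over. The paper itself gives no more detail than ``follows trivially from applying the parallel composition,'' so your write-up is a faithful (and slightly more careful) expansion of the intended argument.
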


\subsection{Typed conditionally linear distribution}
In this subsection, we introduce a more general notion of CL distributions, which we call the \textit{typed} CL distribution in this paper. We also show that any game with a typed CL distribution as its sampling distribution can be modified into a game which is CL samplable with only a polynomial decay on the success rate. We remark that the typed CL distribution defined in this subsection is essentially the same as~\cite[Section 4, Section 6]{jiMIPRE2022a}, but with some minor tweaks to accommodate synchronicity condition in the context of non-local games. 

Let $(\decidertypedtype, \decidertypedquestionpair)$ be an undirected graph with at least one edge in $\decidertypedquestionpair$. We represent the elements of $\decidertypedtype$ as elements of $[|\decidertypedtype|]$ and we represent the edges of the graph as $(v^0, v^1) \in \decidertypedtype^2$ with $v^0 = v^1$ representing a self-edge in the graph. We introduce the notion of a \textit{Typed CL distribution below}. 


\begin{definition}[Typed conditionally linear distribution] \label{def:typedCLsampler}
	Let $(\decidertypedtype, \decidertypedquestionpair)$ be a typed graph with $\decidertypedtype = \{0,1\}^{\typelength}$ and let $\{\decidertypedfunction^{v}: V \rightarrow V\}_{v \in \decidertypedtype}$ be a collection of $k$-th level CL functions of size $\seedlength$ over a canonical basis subspace $V \subseteq \bF_{2^{\seedlength}}^m$, where all of the $\decidertypedfunction^{v}$ share the same registers $\{V_j\}_{j \in [k]}$. A distribution $\mu$ is a $(\decidertypedtype, \decidertypedquestionpair,\{\decidertypedfunction^{v}\})$-typed distribution if $\mu$ can be sampled in the following manner:
	\begin{enumerate}
		\item Uniformly sample $(v_0, v_1) \in \decidertypedtype^2$ and perform rejection sampling until $(v_0, v_1) \in \decidertypedquestionpair$. 
		\item Uniformly sample $s \in \bF_{2^{\seedlength}}$ and $b \in \{0,1\}$. 
		\item Compute $x_0 = \decidertypedfunction^{v_0}(s)$ and $x_1 = \decidertypedfunction^{v_1}(s)$
		\item Return the pair $\left((v_b,x_b),  (v_{1-b},x_{1-b})\right)$, as the sample outcome. 
	\end{enumerate}
\end{definition}


We remark in the above sampling procedure, the self-edges are sampled with twice the weight in comparison to other edges. Let $\cG = (\mu, V, \cX, \cA)$ be a synchronous game such that $\mu$ is a $(\decidertypedtype, \decidertypedquestionpair,\{\decidertypedfunction^{v}\})$-typed distribution. We refer to the set $\decidertypedtype$ as the ``question label" and the set $\decidertypedquestionpair$ as the ``question pair" for the game $\cG$. Since the question label is included as an output in the sampling procedure, the synchronous question pair for $\cG$ corresponds to the self-edges in the graph $(\decidertypedtype, \decidertypedquestionpair)$. For the remainder of the paper when discussing games with typed CL distribution as the sampling distribution, we also assume that $\frac{|\decidertypedtype|}{\seedlength}$ is always an integer in this paper by implicitly padding $\decidertypedtype$ with extra vertices which only contains self-loops. Since we usually associate $|\decidertypedtype| = O(\seedlength)$ in this paper, this assumption does not change the complexity of $\decidertypedtype$. 


Intuitively, typed CL distributions are a generalization of CL distributions, where instead of having two CL functions, there could potentially be $|\decidertypedtype|$ different CL functions used for sampling. We give a method of taking a synchronous game with typed CL distribution and simulating it with a synchronous game using a normal CL distribution. Given a graph $(\decidertypedtype, \decidertypedquestionpair)$ and a vertex $v \in \decidertypedtype$, we define the \textit{neighbour indicator} for $v$ to be the vector $\text{\gls{Neightindvec}}\in \{0,1\}^{|\decidertypedtype}|$ such that
\begin{equation*}
	\text{neigh}_{\decidertypedquestionpair}(v_0)_{v_1} := \begin{cases}
		1 &\text{if } \{v_0, v_1\} \in \decidertypedquestionpair \\
		0 &\text{otherwise}
	\end{cases}.
\end{equation*} 

The following transformation shows that games with a typed-CL distribution as the sampling procedure can always be simulated with a (normal) CL samplable game. We remark that this construction is similar to the one given in~\cite[Section 6.2]{jiMIPRE2022a}.
 

\begin{definition}[Detyped conditionally linear distribution] \label{def:detypeCLsampler}
	Let $\seedlength, m \in \bN$ with $\seedlength$ an odd positive integer. Let $(\decidertypedtype, \decidertypedquestionpair)$ be a graph, $\{\decidertypedfunction^v: V \rightarrow V\}_{v \in \decidertypedtype}$ be a collection of $k$-th CL function with registers $\{V_j\}_{j \in [k]}$ for some canonical basis subspace $V \subseteq \bF_{2^{\seedlength}}^{m}$ and let $\mu$ be a $(\decidertypedtype, \decidertypedquestionpair, \{\decidertypedfunction^v \}_{v \in \decidertypedtype})$-typed distribution. Let $b  = \frac{|\decidertypedtype|}{\seedlength}$ and $\typelength = \ceil{\frac{\log_2(|\decidertypedtype|)}{\seedlength}}$. 
	
	We define the detyped transformation for $\mu$, denoted as $\mu^{\detypesynchead}$, to be a $(k+2, m+ 4\cdot b + 2\cdot \typelength, \seedlength)$ CL distribution. Where the second level CL function $\decidertypedfunction^A, \decidertypedfunction^B: \bF_{2^\seedlength}^{4\cdot b + 2\cdot \typelength} \oplus V \subseteq \bF_{2^\seedlength}^{ 4\cdot b + 2\cdot \typelength + m}$. 
	\begin{itemize}
		\item $\decidertypedfunction^A$ is defined as a series composition between $\decidertypedfunction^{A,1}$ and $\{\decidertypedfunction^{A,2,s} \}_{s \in S_1}$ as per~\Cref{defn:seriescomposCLfn}. Where $\decidertypedfunction^{A,1}$ is a second level CL samplable function acting on $V^1 = \bF_{2^\seedlength}^{2\cdot \typelength + 4\cdot b}$, and $\{\decidertypedfunction^{A,2,s} \}_{s \in S_1}$ is a collection of $k$-th level CL functions acting on $V^2 = V$. We represent elements of $V^1$ under the canonical representation (i.e. as elements of $\{0,1\}^{2 \cdot \seedlength (\typelength+ 2 \cdot  b)}$) in this definition formulation. We give the details for each level of $\decidertypedfunction^A$ below:
		\begin{itemize}
			\item The first second level CL function $\decidertypedfunction^{A,1}$ acts on two registers, $V^1_0 = \{0,1\}^{ 2 \cdot \seedlength \cdot(\typelength + b)} = \{0,1\}^{2b\cdot \seedlength + 4 \cdot |\decidertypedtype| }$ and $V^1_1 = \{0,1\}^{ 2 \cdot \seedlength \cdot b} = \{0,1\}^{2 \cdot |\decidertypedtype|}$. 
			\item We define the zeroth level linear function for $\decidertypedfunction^{A,1}_{0.0}$ as follows: For all  $x \in V^1_0$ as $(x_0, x_1, x_2,x_3)$, where $x_0, x_2 \in \{0,1\}^{\seedlength \cdot \typelength}$ and $x_1, x_3 \in \{0,1\}^{|\decidertypedtype|}$, then $\decidertypedfunction^{A,1}_{0,0}$
			\begin{equation} \label{eq:detype1}
				\decidertypedfunction^{A,1}_{0,0} (x_0, x_1, x_2,x_3) = (x_0, x_1,0,0)
			\end{equation}
			for all elements $x \in V^1$. 
			\item Fix $s_0 \in \{0,1\}^{\seedlength \cdot \typelength}$ and $s_1 \in \{0,1\}^{|\decidertypedtype|}$, and let $v = \binaryinv{(\pi_{\leq \ceil{\log_2(|\decidertypedtype|)}}(s_0)}$ (i.e. extract the first $\ceil{\log_2(|\decidertypedtype|)}$ bits of $s_0$ and treat it as an integer). For all $x \in V^1_1$, parse $x$ as $(x_4, x_5)$, where $x_4, x_5 \in  \{0,1\}^{|\decidertypedtype|}$. Whenever $s_0 \in [|\decidertypedtype|] = \decidertypedtype$ and $s_1 = \text{neigh}_{\decidertypedquestionpair}(v)$, we define the first level linear function for $\decidertypedfunction^{A,1}$ as 
			\begin{equation} \label{eq:detype2}
				\decidertypedfunction^{A,1}_{1,(s_0, s_1, s_2,s_3)} (x_4,x_5) = (x_4, (x_5)_v),
			\end{equation}
			where $(x_5)_v$ zeros out all entries of $x_5$ except for the $v$th entry. Otherwise 
			\begin{equation*}
				\decidertypedfunction^{A,1}_{1,(s_0, s_1, s_2,s_3)} (x_4,x_5) = 0.
			\end{equation*}
			\item  The collection of $k$-th level CL functions $\{\decidertypedfunction^{A,2,s}\}_{s \in S_1}$ is defined to be the following: Parse $s \in V^1$ as $(s_0, s_1, s_2, s_3,s_4,s_5)$, and define $v$ the same way as the above clause. We define 
			\begin{equation} \label{eq:detype3}
				\decidertypedfunction^{A,2,s}(x) = \decidertypedfunction^v
			\end{equation}
			whenever $v \in \decidertypedtype$,  $s_1 = s_4 = \text{neigh}_{\decidertypedquestionpair}(v)$ and $(s_5)_v =1$. Otherwise we set $\decidertypedfunction^{A,2,s}(x) = 0$
		\end{itemize}
		\item $\decidertypedfunction^B$ is defined mostly similar to $\decidertypedfunction^A$, except we replace the equation \eqref{eq:detype1}, \eqref{eq:detype2} and \eqref{eq:detype3} by
		\begin{align*}
			\decidertypedfunction^{B,1}_{0,0} (x_0, x_1, x_2,x_3) &= (x_2, x_3,0,0), \\
			\decidertypedfunction^{B,1}_{1,(s_0, s_1, s_2,s_3)} (x_4,x_5) &= (x_5, (x_4)_v), \\
			\decidertypedfunction^{B,2,s}(x) &= \decidertypedfunction^v
		\end{align*}
	\end{itemize}
\end{definition}
We also give the notion of a ``non-trivial seed" for a detyped CL distribution below. Using the same notation as~\Cref{def:detypeCLsampler}, for $s = s^1 \oplus s^2 \in V^1 \oplus  V^2$. Parse $s^1 = (s_0,s_1, s_2, s_3, s_4, s_5)$, where $s_0, s_2 \in \{0,1\}^{\seedlength \cdot \typelength}$ and $s_1, s_3, s_4, s_5 \in \{0,1\}^{|\decidertypedtype|}$. Furthermore, parse $s_0$ into an element $v_0$ as per described in the definition of $\decidertypedfunction^{A,1}$, and parse $s_2$ into $v_1$ in the same way. We call $s$ a non-trivial seed for the CL distribution  $\mu^{\detypesynchead}$ if the following holds
\begin{enumerate}
	\item $v_0, v_1 \in \decidertypedtype$ and $(v_0, v_1) \in \decidertypedquestionpair$
	\item $s_1 = s_4 = \text{neigh}_{\decidertypedquestionpair}(v_0)$ and $s_3= s_5 = \text{neigh}_{\decidertypedquestionpair}(v_1)$
\end{enumerate}
otherwise, we refer to $s$ as a trivial seed. 

The detyping procedure might seem convoluted at first. Intuitively, $(s_0, s_1, s_4)$ in $s^1$ as given in the above definition dictates which vertices $\decidertypedfunction^{A,1}$ samples and $(s_2, s_3, s_5)$ dictates which vertices $\decidertypedfunction^{B,1}$ samples. If we perform a rejection sampling on the set of non-trivial seeds, we see that this is equivalent to the original typed CL distribution since both $s_1 = s_4 = \text{neigh}_{\decidertypedquestionpair}(v_0)$ and $s_3= s_5 = \text{neigh}_{\decidertypedquestionpair}(v_1)$ occurs with the same probability given a fixed $(v_0, v_1) \in \decidertypedquestionpair$. For a detyped sampler $\mu^{\detypesynchead}$, in the event that a non-trivial seed is being chosen, the resulting output $(x,y)$ has the following form:
\begin{align}
	\nonumber x &= (v, \text{neigh}_{\decidertypedquestionpair}(v), 0,0,\text{neigh}_{\decidertypedquestionpair}(v), (\text{neigh}_{\decidertypedquestionpair}(u))_{v}, \decidertypedfunction^v(s^2)) \\
	y &= (u, \text{neigh}_{\decidertypedquestionpair}(u), 0,0,\text{neigh}_{\decidertypedquestionpair}(u), (\text{neigh}_{\decidertypedquestionpair}(v))_{u}, \decidertypedfunction^u(s^2)) \label{eq:parsetypedgame}
\end{align}
for some $u,v \in \decidertypedtype^2$. Since the seed is non-trivial, $(u,v) \in \decidertypedquestionpair$ and hence $(\text{neigh}_{\decidertypedquestionpair}(u))_{v}$ and $(\text{neigh}_{\decidertypedquestionpair}(v))_{u}$ will always be $1$. By~\Cref{def:detypeCLsampler}, the output from $\mu^{\detypesynchead}$ can only be parsed in the above form iff $s$ is initially chosen to be a non-trivial seed. We have the following lemma lower bounding the set of non-trivial seeds in a detyped transformation. 
\begin{lemma}[Percentage of non-trivial seeds in a detyped CL distribution] \label{lem:nontrivialseedCLdistribution}
	Let $\mu$ be a $(\decidertypedtype, \decidertypedquestionpair, \{\decidertypedfunction^v \}_{v \in \decidertypedtype})$-typed distribution with $|\decidertypedtype| =  t$ and $\{\decidertypedfunction^v: V \rightarrow V\}_{v \in \decidertypedtype}$ for some subspace $V  = \bF_{2^{\seedlength}}^{m}$ and let $\mu^{\detypesynchead}$ be the corresponding $(k+2, m+ 4\cdot b + 2\cdot \typelength, \seedlength)$ detyped CL distribution as defined in~\Cref{def:detypeCLsampler}. For every $s$ sampled uniformly random from $\bF_{2^{\seedlength}}^{4\cdot b + 2\cdot \typelength} \oplus V$, $s$ is a non-trivial seed for $\mu^{\detypesynchead}$ with probability at least 
	\begin{equation*}
		\frac{1}{4t^2 \cdot 16^{t}}.
	\end{equation*}
\end{lemma}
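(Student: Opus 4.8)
The plan is to simply count, step by step, the probability that each of the constraints defining a non-trivial seed is satisfied when $s = s^1 \oplus s^2$ is drawn uniformly from $\bF_{2^{\seedlength}}^{4b + 2\typelength} \oplus V$. Recall from Definition~\ref{def:detypeCLsampler} that $s^1$ is parsed as $(s_0, s_1, s_2, s_3, s_4, s_5)$ with $s_0, s_2 \in \{0,1\}^{\seedlength \cdot \typelength}$ and $s_1, s_3, s_4, s_5 \in \{0,1\}^{|\decidertypedtype|}$, where $|\decidertypedtype| = t$. The key observation is that $s^2 \in V$ is completely unconstrained by the non-triviality conditions, so it contributes a factor of $1$ and can be ignored; the whole estimate reduces to counting good $s^1$'s.

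First I would handle the ``vertex'' coordinates $s_0$ and $s_2$. From $s_0$ we extract $v_0 = \binaryinv{\pi_{\leq \ceil{\log_2 t}}(s_0)}$, and similarly $v_1$ from $s_2$. The condition $v_0 \in \decidertypedtype = [t]$ means the $\ceil{\log_2 t}$-bit prefix of $s_0$, read as an integer, lies in $[t]$; since $t \leq 2^{\ceil{\log_2 t}} < 2t$, this happens with probability $> 1/2$, and the remaining bits of $s_0$ (there are $\seedlength \cdot \typelength - \ceil{\log_2 t} \geq 0$ of them) are free. The same holds for $v_1$. Then, conditioned on $v_0, v_1 \in \decidertypedtype$ being uniform-ish over $[t]$ — more carefully, each value in $[t]$ is hit with probability at least $2^{-\ceil{\log_2 t}} > \tfrac{1}{2t}$ — the probability that $(v_0, v_1) \in \decidertypedquestionpair$ is at least $\tfrac{1}{4t^2}$ times the number of edges (counting self-loops as in the sampling convention), which is at least $\tfrac{1}{4t^2}$ since $\decidertypedquestionpair$ has at least one edge. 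Combining, the pair $(s_0, s_2)$ yields an admissible $(v_0,v_1) \in \decidertypedquestionpair$ with probability at least $\tfrac{1}{4t^2}$. One should be a little careful here to lower-bound $\Pr[(v_0,v_1)\in\decidertypedquestionpair]$ directly rather than conditioning, to avoid the factor-of-$2$ bookkeeping from the self-loop weighting; either way the bound $\tfrac{1}{4t^2}$ is safe.

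Next, conditioned on $v_0$ and $v_1$ being fixed admissible vertices, I would bound the probability that the four ``neighbour indicator'' coordinates match: $s_1 = s_4 = \text{neigh}_{\decidertypedquestionpair}(v_0)$ and $s_3 = s_5 = \text{neigh}_{\decidertypedquestionpair}(v_1)$. Each of $s_1, s_3, s_4, s_5$ is a uniformly random element of $\{0,1\}^t$ independent of $s_0, s_2$, so each equality holds with probability exactly $2^{-t}$, and they are independent, giving a combined factor $2^{-4t} = 16^{-t}$. Multiplying the two contributions gives
\begin{equation*}
	\Pr[s \text{ non-trivial}] \geq \frac{1}{4t^2} \cdot \frac{1}{16^{t}} = \frac{1}{4t^2 \cdot 16^{t}},
\end{equation*}
as claimed. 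The main obstacle — though it is really just a matter of care rather than difficulty — is getting the vertex-sampling count right: one must track that $v_0$ is extracted from only the first $\ceil{\log_2 t}$ bits of a longer block $s_0$, handle the edge case where $2^{\ceil{\log_2 t}} \neq t$ so the extracted integer can fall outside $[t]$, and respect the convention from Definition~\ref{def:typedCLsampler} that self-edges carry double weight in the induced distribution on $\decidertypedquestionpair$ (here it only helps, so a crude ``at least one edge'' bound suffices). Everything else is a routine product of independent uniform-coordinate probabilities.
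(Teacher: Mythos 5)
Your proposal is correct and follows essentially the same two-stage counting argument as the paper's own proof: first bound the probability that $(s_0, s_2)$ decode to an edge of $\decidertypedquestionpair$ by $\tfrac{1}{4t^2}$, then multiply by $\bigl(2^{-t}\bigr)^4 = 16^{-t}$ for the four independent neighbour-indicator equalities. Your handling of the vertex step is marginally cleaner — you bound $\Pr[(v_0,v_1) \in \decidertypedquestionpair]$ unconditionally via per-pair probabilities $> 1/(4t^2)$ summed over edges, whereas the paper splits it into $\Pr[v_0, v_1 \in \decidertypedtype] \geq 1/4$ followed by a conditional edge-hitting bound of $1/t^2$ — but the bookkeeping and the final bound are identical.
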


\begin{proof}
	We use the same notation as the one given in~\Cref{def:detypeCLsampler}. We first point out that for $s = s^1 \oplus s^2 \in V^1 \oplus  V^2$, only $s^1$ dictates whether $s$ is a non-trivial seed. Hence consider $s^1$ uniformly sampled from $V^1$ and write $s^1 = (s_0,s_1, s_2, s_3, s_4, s_5)$ into the parsing as defined as in~\Cref{def:detypeCLsampler}. 
	
	We first lower bound the probability that $s^1$ satisfies the first clause of being a non-trivial seed. If we take the first $\ceil{\log_2(t)}$ bit $s_0 \in \{0,1\}^{\typelength \cdot \seedlength}$ and convert it back to integer through the map $\binaryinv{\cdot}$, then with probability at least $\frac{1}{2}$ we obtain some $v_0 \in \decidertypedtype$. Similarly, with probability $\frac{1}{2}$, we can parse $s_2$ into some $v_1 \in \decidertypedtype $. Since we assume there is at least one edge in $\decidertypedquestionpair$, the probability that $(v_0, v_1) \in \decidertypedquestionpair$ is at least $\frac{1}{n^2}$ given that $(v_0, v_1) \in [|t|] \times [|t|]$. Hence $s^1$ satisfies the first clause with probability at least $\frac{1}{4t^2}$. 
	
	For the second clause, for any given $v \in \decidertypedtype$, since there is only one unique string in $\{0,1\}^{|\decidertypedtype}|$ which is equal to $\text{neigh}_{\decidertypedquestionpair}(v)$. Given $v_0 \in \decidertypedtype$, $s_1$ and $s_3$ will have probability $\left(\frac{1}{2^t}\right)^2$ to be equal to $\text{neigh}_{\decidertypedquestionpair}(v_0)$. Hence $s^1$ satisfies the second clause with probability $\left(\frac{1}{2^t}\right)^4 = \frac{1}{16^t}$ given the first clause. Hence $s^1$ has a probability $\frac{1}{4t^2}\cdot \frac{1}{16^t} $ of being a non-trivial seed. 
\end{proof}

Given a synchronous game with a typed CL distribution as the input distribution, we define the following transformation which replaces the typed CL distribution with its detyped counterpart below. 

\begin{definition}[Detyped conditionally linear game]
	Let $\seedlength, m \in \bN$ with $\seedlength$ being an odd positive integer. Let $(\decidertypedtype, \decidertypedquestionpair)$ with $\decidertypedtype = \{0,1\}^{\typelength}$ and let $\{\decidertypedfunction^v: V \rightarrow V\}_{v \in \decidertypedtype}$ be a $k$-th level CL function for some canonical basis subspace $V \subseteq \bF_{2^{\seedlength}}^{m}$. Let $\cG = (\cX, \cA, \mu, V)$ be a synchronous game with $\mu$ being a $(\decidertypedtype, \decidertypedquestionpair, \{\decidertypedfunction^v \}_{v \in \decidertypedtype})$-typed distribution as defined in~\Cref{def:typedCLsampler}. We define the detyped and synchronization transformation $\cG^{\detypesynchead} = (\cX^{\detypesynchead}, \cA, \mu^{\detypesynchead}, V^{\detypesynchead}) $ for $\cG$ as follows:
	\begin{itemize}
		\item The distribution $\mu^{\detypesynchead}$ is the $(k+2, m+ 4\cdot b + 2\cdot \typelength, \seedlength)$  CL distribution given by~\Cref{def:detypeCLsampler}. 
		\item For the question pair $(x,y) \in \bF_{2^{\seedlength}}^{m+ 4\cdot b + 2\cdot \typelength}$, the verification $V^{\detypesynchead}$ is given as follows:
		\begin{itemize}
			\item If $s$ is a non-trivial seed, parse $(x,y)$ as per ~\Cref{eq:parsetypedgame}, then 
			\begin{equation*}
				V^{\detypesynchead}\left(x, y,a,b \right) = V\left((v^0, L^{v^0}(s)), (v^1, L^{v^1}(s)),a,b \right)
			\end{equation*}
			in other words, the same as the original game. 
			\item Otherwise, $V^{\detypesynchead}\left(x,y,a,b \right) = \delta_{a,b}$.
		\end{itemize}
	\end{itemize}
\end{definition}
We see that the above transformation also preserves synchronicity for a given game. One might wonder the reason for such a roundabout way to defining the detyping procedure, since instead, one can sample two arbitrary vertices from $\decidertypedtype$ and  perform rejection sampling on the case where these two vertices are connected in $\decidertypedquestionpair$.  As seen in the lemma below, the main reason for the roundabout way is that it allows the transformation to also preserves any perfect oracularizable strategies after the transformation. 

To be more precise, let $\cG$ be a non-local game which uses some typed CL sampler $\mu$. If we were to replace $\mu$ with the ``sampling two vertices, and perform rejection sampling" approach for $\cG$, any oracularizable strategy might no longer be oracularizable because the measurement operator used between ``two non-connected vertices" might not commute. Using the detyping procedure listed above, at least one of the provers can deduce whether $s$ is the trivial seed, and thus, can adjust their measurement operator accordingly. To this end, we have the following lemma which shows how much the completeness/soundness condition changes for a detyped and synchronization transformation. We remark that the proof of this lemma is similar to~\cite[Lemma 6.18]{jiMIPRE2022a}. 


\begin{lemma}[Preservation of completeness/soundness of the detyped and synchronization game] \label{lem:detypeandsync}
	Let $\cG = (\cX, \cA, \mu, V)$ be a synchronous game with $\mu$ being a $(\decidertypedtype, \decidertypedquestionpair, \{\decidertypedfunction^v \}_{v \in \decidertypedtype})$-typed distribution. For model $t \in \{*, co\}$, then 
	\begin{itemize}
		\item \textbf{(Completeness)} If there exists a perfect oracularizable synchronous strategies for $\cG$ in model $t$, then there exist a perfect oracularizable synchronous strategies for $\cG^{\detypesynchead}$ in model $t$. 
		\item  \textbf{(Soundness)} If $\omega^t(\cG) > 1 - \eps$, then 
		\begin{equation*}
			\omega^{t}(\cG^{\detypesynchead}) > 1 - \frac{\eps}{\left(4|\decidertypedtype|^2 \cdot 16^{|\decidertypedtype|}\right)} . 
		\end{equation*}
	\end{itemize}
\end{lemma}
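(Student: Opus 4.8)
The plan is to handle completeness and soundness separately, in each case working through the rejection-sampling picture established by \Cref{lem:nontrivialseedCLdistribution}. For completeness, suppose $\strategy = (\cL^2(\alicealg,\tau), \sigma\ket{\tau}, \{A_a^{(v,x)}\})$ is a perfect oracularizable synchronous strategy for $\cG$ (using a symmetric presentation, which is available by \Cref{lem:structsync} since the value is $1$ and hence by \Cref{thm:syncvaluepreserve} the synchronous value is $1$). I would define a strategy for $\cG^{\detypesynchead}$ as follows. Upon receiving a question $x \in \bF_{2^\seedlength}^{m + 4b + 2\typelength}$, a prover parses $x$ as in \eqref{eq:parsetypedgame}; the crucial point is that the first few registers of $x$ encode $v$, $\text{neigh}_{\decidertypedquestionpair}(v)$, and the indicator $(\text{neigh}_{\decidertypedquestionpair}(u))_v$, and the last register encodes $\decidertypedfunction^v(s^2)$. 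From this data a prover can recover the original-game question $(v, \decidertypedfunction^v(s))$ whenever the seed was non-trivial, and can also locally detect whether the seed was trivial. If the seed is non-trivial, each prover applies $A^{(v,\decidertypedfunction^v(s))}$; if it is trivial, the game condition is $\delta_{a,b}$ and the provers play any fixed synchronous deterministic strategy (e.g. always answer $0$). This wins the detyped game with probability $1$. I then need to check two structural properties are inherited: synchronicity is immediate because on $x=y$ either the seed parsing agrees and both provers apply the same measurement (synchronicity of $\strategy$), or the seed is trivial and both answer $0$; oracularizability requires checking $[A_a^{x}, B_b^{y}] = 0$ whenever $\mu^{\detypesynchead}(x,y) > 0$, which follows from the oracularizability of $\strategy$ precisely in the non-trivial-seed case (the original questions $(v,\cdot),(u,\cdot)$ form an edge in $\decidertypedquestionpair$), and in the trivial-seed case from the fact that the fallback strategy uses commuting (indeed scalar-valued on $\ket\tau$, or just classical deterministic) operators. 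This last point is exactly where the detyping construction, as opposed to naive rejection sampling, is used: a prover can always tell locally whether it is in the trivial branch, so the commutation never breaks.

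For soundness, I would argue contrapositively on the ``error'' quantity. Let $\strategy$ be any strategy for $\cG^{\detypesynchead}$ in model $t$ with value $\omega^{t}(\cG^{\detypesynchead}, \strategy) = 1 - \eps'$. Write the seed $s$ as sampled uniformly, and let $E$ be the event that $s$ is a non-trivial seed; by \Cref{lem:nontrivialseedCLdistribution}, $\Pr[E] \geq \tfrac{1}{4|\decidertypedtype|^2 \cdot 16^{|\decidertypedtype|}}$. Conditioned on $E$, the induced distribution on $(x,y)$, after the provers strip off the type-bookkeeping registers, is exactly $\mu$ (up to the factor-of-two reweighting of self-edges, which is harmless because $\cG$ is itself sampled that way in \Cref{def:typedCLsampler}), and the predicate $V^{\detypesynchead}$ restricted to this event equals $V$ applied to the recovered original questions. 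Hence $\strategy$ conditioned on $E$, together with the prover map ``parse $x$, feed recovered question into measurement,'' yields a strategy $\strategy'$ for $\cG$ in model $t$ with
\begin{equation*}
	\omega^{t}(\cG, \strategy') \;=\; \Pr[\text{win} \mid E].
\end{equation*}
Since the total failure probability is $\eps'$ and $\Pr[E] \geq \tfrac{1}{4|\decidertypedtype|^2 \cdot 16^{|\decidertypedtype|}}$, the conditional failure probability satisfies $\Pr[\text{fail} \mid E] \leq \eps' / \Pr[E] \leq \eps' \cdot 4|\decidertypedtype|^2 \cdot 16^{|\decidertypedtype|}$, so $\omega^t(\cG, \strategy') \geq 1 - \eps'\cdot 4|\decidertypedtype|^2 \cdot 16^{|\decidertypedtype|}$. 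Taking the supremum over $\strategy$ and contraposing: if $\omega^t(\cG) > 1 - \eps$, then $\omega^t(\cG^{\detypesynchead}) > 1 - \tfrac{\eps}{4|\decidertypedtype|^2 \cdot 16^{|\decidertypedtype|}}$, which is the claimed bound. One subtlety to be careful about here: in the commuting-operator model the ``conditioning on $E$'' step must be done at the level of the correlation / probabilities rather than by literally conditioning a quantum state, but since $E$ is a classical event depending only on the verifier's seed (not on any measurement outcome), this is just a statement about the question distribution and causes no difficulty — the same measurement operators $\{A_a^x\}, \{B_b^y\}$ of $\strategy$ are reused verbatim, composed with the classical parsing map, so the resulting $\strategy'$ is a legitimate model-$t$ strategy.

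The main obstacle I anticipate is not the probabilistic bookkeeping but the bookkeeping around \emph{oracularizability} in the completeness direction. I need to verify that on every pair $(x,y)$ with $\mu^{\detypesynchead}(x,y)>0$ — including pairs arising from \emph{trivial} seeds — the constructed Alice and Bob operators commute, and that the parsing is genuinely local (each prover recovers what it needs from its own question only, never from the other's). The whole point of the elaborate register layout in \Cref{def:detypeCLsampler} (the duplicated neighbour-indicator registers $s_1, s_4$ for Alice and $s_3, s_5$ for Bob, and the ``extracted bit'' $(s_5)_v$ resp. $(s_4)_v$) is to make exactly this work, so the proof will mostly consist of carefully unwinding \Cref{eq:parsetypedgame} to confirm each prover can (i) decide locally whether the seed was non-trivial and (ii) reconstruct its original-game question in that case, and then quoting \cite[Lemma 6.18]{jiMIPRE2022a} for the structure of the argument while flagging the one modification (tracking the synchronicity/self-edge conventions) that differs from the reference. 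I would also double-check that series composition (\Cref{defn:seriescomposCLfn}) correctly realizes the claimed levels so that $\mu^{\detypesynchead}$ is honestly a $(k+2, m+4b+2\typelength, \seedlength)$ CL distribution, since the soundness reduction tacitly uses that $\cG^{\detypesynchead}$ is a genuine CL-samplable synchronous game.
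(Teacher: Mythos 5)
Your argument is essentially the same as the paper's for both directions. For soundness, you take a strategy $\strategy$ for $\cG^{\detypesynchead}$ of value $1-\eps'$, condition on the non-trivial-seed event (probability $\geq 1/K$ by the preceding lemma, $K = 4|\decidertypedtype|^2\cdot 16^{|\decidertypedtype|}$), strip the bookkeeping registers and obtain a strategy for $\cG$ of value $\geq 1-K\eps'$; the closing ``contraposing'' is phrased the same loose way the paper phrases it --- what the argument actually yields, and what is used downstream in question reduction, is $\omega^t(\cG) \leq 1-\eps \Rightarrow \omega^t(\cG^{\detypesynchead}) \leq 1-\eps/K$. The completeness construction (parse $x$, apply $\tilde{A}$ when it parses, otherwise answer a fixed $\star$) is also exactly what the paper does.

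The genuine gap --- present in the paper's completeness paragraph as well, but which you turn into an explicit false claim --- is the assertion that a prover ``can also locally detect whether the seed was trivial.'' It cannot. After $\decidertypedfunction^{A}$ is applied, Alice's question exposes only $(s_0, s_1, s_4, (s_5)_{v_0})$; Bob's exposes only $(s_2, s_3, s_5, (s_4)_{v_1})$; non-triviality is a joint condition on all of $s_0,\ldots,s_5$. For a uniform seed with $v_0\in\decidertypedtype$, $s_1 = s_4 = \text{neigh}_{\decidertypedquestionpair}(v_0)$, $(s_5)_{v_0}=1$, but $s_3 \neq s_5$ (positive probability), Alice's local parse succeeds so she applies $\tilde{A}^{(v_0,\cdot)}$, Bob's fails so he answers $\star$, the seed is trivial, and $V^{\detypesynchead}$ as written demands $\delta_{a,b}$: they lose. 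So the strategy is not perfect for $\cG^{\detypesynchead}$ with the trivial-seed predicate literally as stated. The fix is a small amendment to the predicate (and is what the paper's prose ``the only non-trivial question pair in this scenario is the synchronicity question pair'' appears to assume): on trivial-seed pairs, accept unconditionally when $x\neq y$ and impose $\delta_{a,b}$ only when $x=y$; this is still synchronous, is all the soundness reduction uses, and makes every trivial-seed pair that must pass $\delta_{a,b}$ one where both provers' local parses agree. Your remaining checks --- fallback operators commute so oracularizability holds on all sampled pairs, and conditioned on a non-trivial seed the recovered question distribution is exactly $\mu$ --- are correct and match the paper.
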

\begin{proof}
	Fix $t \in \{*, co\}$, and assume that the quantum strategy performed below is defined using model $t$. In the proof below, for $v  \in \decidertypedtype$, we define
	\begin{equation} \label{eq:proofdetypeeq1}
		\text{view}(v) = (\binary{v}, \text{neigh}_{\decidertypedquestionpair}(v), 0,0,\text{neigh}_{\decidertypedquestionpair}(v),e_v),
	\end{equation}
	where $\binary{v}$ above is only taken over the first $\ceil{\log_2(|\decidertypedtype|)}$ bits. By definition, for $(x,y) \sim \mu^{\detypesynchead}$, $(x,y) = \left((\text{view}(v_0), \decidertypedfunction^{v_0}(s^2), (\text{view}(v_1),\decidertypedfunction^{v_1}(s^2))\right)$ for some $(v_0, v_1) \in \decidertypedquestionpair$ and $s^2 \in \bF_{2^{ \seedlength}}^m$ iff $\mu^{\detypesynchead}$ is sampled using a non-trivial seed in the sampling procedure. 
	
	\paragraph{Completeness. }Let $\strategy = (\alicealg, \{\tilde{A}^x_a\}, \cH ,\tracialstate)$ be a perfect oracularizable synchronous strategy for $\cG$. We define an oracularizable synchronous strategy for $\cG^{\detypesynchead}$ as follows: Given a question label $x \in \cX$, if there exists some $v \in \decidertypedtype$ and $s^2 \in \bF_{2^{ \seedlength}}^m$ such that $x= (\text{view}(v), \decidertypedfunction^{v}(s^2)) $, then set $A^x_a = \tilde{A}^{(v, \decidertypedfunction^v(s^2))}_a$. If $x$ cannot be parsed in the above format, the prover always returns some predetermined fixed element $\star \in \cA$. This ensures that $A^x_\star = \cI$ and $A^x_a = 0$ for all $a \in \cA \setminus \{\star\})$. 
	
	Restricted to the question set where $(x,y)$ are parsed correctly, we see that $A^x_a$ is the same as $\tilde{A}^{(v, \decidertypedfunction^v(s^2))}_a$ with the same decider function. This implies that $A^x_a$ is a perfect oracularizable synchronous strategy when restricted to this case. Otherwise, since $A^x_a$ returns the same answer, and the only non-trivial question pair in this scenario is the synchronicity question pair. This strategy passes with perfect accuracy. This means that $A^x_a$ remains a perfect oracularizable synchronous strategy in the case where at least one of $x,y$ cannot be parsed correctly, hence showing that the strategy given above is a perfect oracularizable strategy for $\cG^{\detypesynchead}$. 
	
	\paragraph{Soundness. }Let $\strategy = (\alicealg, \{A^x_a\}, \{(B^y_b)^{op}\} ,\tracialstate)$ be a tracially embeddable strategy for $\cG^{\detypesynchead}$ with a success rate of $1 - \eps$. By~\Cref{lem:detypeandsync}, with probability $\frac{1}{4|\decidertypedtype|^2 \cdot 16^{|\decidertypedtype|}}$, the output from $\mu^{\text{\detypesynchead}}$ would be from a non-trivial seed. Conditioning on this case, the distribution $\mu^{\detypesynchead}$ is exactly the same as $\mu$. We define the strategy for $\cG$ with the same measurement state/algebra as $\strategy$, but with the measurement operator replaced by $\hat{A}^{(v, \decidertypedfunction^{v}(s^2))}_{a} = A^{(\text{view}(v), \decidertypedfunction^{v}(s^2))}$. This strategy will succeed at $\cG$ with probability at least $1 - 4|\decidertypedtype|^2 \cdot 16^{|\decidertypedtype|} \cdot \eps$. This implies if $\omega^t(\cG) > 1 - \eps$, then  
	\begin{equation*}
		\omega^{t}(\cG^{\detypesynchead}) > 1 - \frac{\eps}{\left(4|\decidertypedtype|^2 \cdot 16^{|\decidertypedtype|}\right)},
	\end{equation*}
 which completes the claim. 
\end{proof}       

In this paper, $|\decidertypedtype|$ is typically taken to be the complexity bound of $O(\log(n))$ for some integer $n$ or some constant. Hence the increase in soundness shown in the above lemma can still be ``reset" using the parallel repetition presented in~\Cref{sec:parallelrep}.  

\subsection{The quantum low-individual degree test} \label{sec:lowinddegtest}

In this subsection, we recall the quantum low-individual degree test from~\cite{jiQuantumSoundnessTesting2022}, and show that it is CL samplable. This test is an important subroutine used within the answer reduction protocol presented in~\Cref{sec:PCPanswerreduction}. We start this subsection by giving some high-level intuition about this test. 

The quantum low-individual degree test is first introduced in~\cite{jiQuantumSoundnessTesting2022}, and it is based on the classical low-degree test given in~\cite{babaiNondeterministicExponentialTime1991a}. The classical low-individual degree test is used in some of the earlier work on the PCP theorem~\cite{aroraProbabilisticCheckingProofs1998,aroraProofVerificationHardness1998}, and the quantum low-individual degree is used in a similar way in the answer reduction protocol in this paper. 

The quantum low-individual degree test is parametrized by $(p,m,d) \in \bN^3$, where $m = 2^c$ for some constant $c$. Intuitively, the goal of the quantum low-individual degree test is to force two entangled provers to prove to the verifier that they both share the same global $m$-variant polynomial over $2^p$ with individual degree of at most $d$. To give a better intuition on how the quantum low-individual-degree test works, we first give a brief description of its classical counterpart. 
\begin{figure}[!htbp]
	\centering
	\begin{gamespec}
		\setlength{\tabcolsep}{1em}
		The quantum low-individual degree test is parametrize by $(p, m, d) \in \bN^3$ where $p,m$ are both an odd integer. Perform the following test with probability $\frac{1}{8}$ each:
		\begin{itemize}
			\item \textbf{Axis parallel line test.} The verifier uniformly samples $s = (s_0, \cdots, s_{m-1}) \in \bF_{2^p}^m$ and $j \in [m]$. Let $\textbf{l}_j$ be the $j$th axis-parallel line given in~\eqref{eq:axisparallelline}. Recall from~\Cref{def:canorepline} that $\Canoline{\textbf{l}_j}$ is the canonical representation of a line. 
			\begin{itemize}
				\item The verifier sends $(\text{Point},s)$ to one of the provers, and receive $a \in \bF_{m}$ as a response.
				\item The verifier sends $(\text{Aline},\Canoline{\textbf{l}_j})$ to the other prover, and receive a degree $d$ polynomial $\textbf{f}: \bF_{2^p} \rightarrow \bF_{2^p}$ as a response.  
			\end{itemize}
			The verifier accepts if $\textbf{f}(s) = a$.
			\item \textbf{Diagonal line test.} The verifier uniformly samples $s = (s_0, \cdots, s_{m-1}) \in \bF_{2^p}^m$, $j \in [m]$ and $v \sim \bF_{2^p}^j$. Extend $v$ as an element of $\bF_{2^p}^m$ by appending $0$ on the last $m-j$ coordinates. Define the line $\textbf{d}_{j,v} = \{s + x \cdot v: x \in \bF_{2^p}\}$ to be the line passing through $s$ in the direction of $v$.  
			\begin{itemize}
				\item The verifier sends $(\text{Point},s)$ to one of the provers, and receive $a \in \bF_{m}$ as a response.
				\item The verifier sends $(\text{Dline},\Canoline{\textbf{d}_{j,v}})$ to the other prover, and receive a degree $d \cdot m$ polynomial $\textbf{g}: \bF_{2^p} \rightarrow \bF_{2^p}$ as a response.  
			\end{itemize}
			The verifier accepts if $\textbf{g}(s) = a$.
		\end{itemize}
		Perform the following test with probability $\frac{1}{4}$ each:
		\begin{itemize}
			\item \textbf{Point consistency test.} The verifier uniformly samples $s \in \bF_{2^p}^m$. The verifier sends $(\text{Point},s)$ to both provers, and receive $(a,b) \in \bF_{2^p}^2$. The verifier accepts if $f(s) = a$.
			\item \textbf{Axis parallel line consistency test.} The verifier uniformly samples $s \in \bF_{2^p}^m$ and $j \in [m]$. Let $\textbf{l}_j$ be the axis parallel line define in the ``Axis parallel line test". The verifier sends $\Canoline{\textbf{l}_j}$ to both provers, and receive two degree $c$ polynomial $\textbf{f}^A, \textbf{f}^B:  \bF_{2^p} \rightarrow \bF_{2^p}$. The verifier accepts if $\textbf{f}^A = \textbf{f}^B$.
			\item \textbf{Diagonal line line consistency test.} The verifier uniformly samples $s \in \bF_{2^p}^m$, $j \in [m]$ and $v \sim \bF_{2^p}^j$. Let $\textbf{d}_{j,v}$ be the line define in the ``Diagonal line test". The verifier sends $\Canoline{\textbf{d}_{j,v}}$ to both provers, and receive two degree $d \cdot m$ polynomial $\textbf{g}^A, \textbf{g}^B:  \bF_{2^p} \rightarrow \bF_{2^p}$. The verifier accepts if $\textbf{g}^A = \textbf{g}^B$.
		\end{itemize} 
		\vspace{1em}
	\end{gamespec}
	\caption{The sampling/decision procedure for the $(p,m,d)$-quantum low-individual degree test, the only change is the distribution of the synchronicity test, which only changes the constant in~\Cref{thm:soundnessQLDT} (see~\cite[Section 4.1]{jiQuantumSoundnessTesting2022} for more details).}
	\label{fig:LDTdescription}
\end{figure}

Suppose the two provers agree on a global $m$-variant low-individual degree polynomial $\textbf{g}: \bF_{2^p}^m \rightarrow \bF_{2^p}$ with individual degree of at most $d$. To demonstrate to the verifier that they both share the same polynomial, both provers can simply send $g$ to the verifier. However, since $\textbf{g}$ can have at most $(d+1)^{m}$ monomials, this means that the prover needs to send a messages with potential length $O((d+1)^{m} \cdot p)$-bits which is inefficient if $m$ is large. Instead, the verifier can send one of the prover $u \in \bF_{2^p}^m$ and ask him to evaluate the polynomial $\textbf{g}$ on $u$ and return the outcome $\textbf{g}(u) \in \bF_{2^p}$. The verifier then gives the other prover a random ``parallel axis line" $\textbf{l}: \bF_{2^p} \rightarrow\bF_{2^p}$ that pass through $u$, where for $i \in [m]$, an axis parallel line passing through $u$ is defined as 
\begin{equation} \label{eq:axisparallelline}
	\textbf{l} = \{ u + x \cdot e_i | x \in \bF_{2^p} \}.
\end{equation}
The prover receiving the axis parallel line is expected to return the polynomial $\textbf{g}_\textbf{l}: \bF_{2^p} \rightarrow\bF_{2^p}$, where $\textbf{g}_\textbf{l}$ is the $m \cdot d$-th degree polynomial corresponding to $\textbf{g}$ restricted to the range of $\textbf{l}$, or
\begin{equation*}
	\textbf{g}_\textbf{l}(x) = \textbf{g}(u + x \cdot e_i). 
\end{equation*}

Intuitively, this is asking the prover to evaluate $\textbf{g}$ on \emph{all} of $\textbf{l}$. If the two provers share the same low-individual degree polynomial in the beginning of the protocol, then $\textbf{g}_\textbf{l}(0)$ would be consistent with $\textbf{g}(u)$. If, on the other hand, the two provers do not share the same low-individual degree polynomial, then by~\Cref{lem:Schwartz_Zipple}, for two different $m$-variant low-individual degree polynomials $\textbf{g}$ and $\textbf{g}'$ and $u \in \textbf{l}$, $\textbf{g}(u) = \textbf{g}'(u)$ occurs with probability at most $\frac{d}{q}$. This means that the restricted polynomial $\textbf{g}_\textbf{l}$ generated by the axis parallel line prover is unlikely to agree with the prover with who is expected to evaluate $\textbf{g}$ somewhere on $\textbf{l}$. This protocol allows the verifier to check the consistency of a global low-individual degree polynomial with message size $O(d \cdot k)$, significantly more efficient than the previous protocol. 

In order to make sure the above protocol remains quantum sound (i.e. the provers will have a low success rate if they do not share the same low-individual degree test polynomial)~\cite{jiQuantumSoundnessTesting2022} added two additional questions types. The first is the ``diagonal line test", where, one of the provers still receives a random point $u \in \bF_{2^p}^m$ and similarly is expected to return $\textbf{g}(u)$ in the ideal case. The other prover is given a random ``diagonal line" intersecting with $u$, where we define the notion of a diagonal line below, and is expected to return an $m \cdot d$-th degree polynomial similarly as to above. This addition is mostly used to enforce a commutation relationship for the proof of soundness for the quantum low-degree test. Secondly, to ensure synchronicity, a ``consistency test" is added, where the two provers are given the same question (which can be a line, a point, or a diagonal line) and they are expected to output the same answer. We formally give the definition for a quantum low-individual degree test on~\Cref{fig:LDTdescription}

 We remark that in our description, the verifier sends the canonical representation of the line. This is equivalent to the original formulation where the verifier sends a set containing all points in the line to the prover. Both formulations are designed to hide the point $u$ (for the ``point" player) when sending the line. As seen in the description from the classical low-degree test, if both provers share an $m$-variate polynomial $\textbf{h}$ over $\bF_{2^p}$ with individual degree of at most $d$. The provers can simply plug their respective input into $\textbf{h}$ to obtain a perfect (classical) strategy. We recall the following soundness result related to the quantum low-individual degree test:
\begin{theorem}[Quantum soundness for the quantum low-individual degree test, Theorem 4.1 of~\cite{jiMIPRE2022a}] \label{thm:soundnessQLDT}
	There exist a universal constant $1 \geq c_{\text{LD},1}$ and $0 < c_{\text{LD},2} \leq 1$  and a function 
	\begin{equation*}
		\text{\gls{SoundnessLDT}} =  c_{\text{LD},1} (dm)^{c_{\text{LD},1}} (\eps^{c_{\text{LD},2}} + 2^{-c_{\text{LD},2}p} + 2^{-c_{\text{LD},2}md})
	\end{equation*}
	such that the following holds. Let $\cG^{\text{LD}}$ be the $(p,m,d)$-low-individual degree test with $q = 2^p$, and let $\strategy =( \cL^2(\alicealg, \tau), \ket{\tau},\{ A_{a}^x \} )$ be a synchronous strategy for $\cG^{\text{LD}}$ which succeed with probability $1- \eps$. There exist a set of PVM $\{G_{\textbf{g}}\} \subseteq \alicealg'$ with outcome labelled by $m$-variant polynomials $\textbf{g} \in \Idpoly(p,m,d)$, such that 
	\begin{equation*} 
		\Ex_{s \sim \bF_{2^p}^m} \sum_{\textbf{g} \in \Idpoly(p,m,d)} \braket{\tau| A^{(\text{point}, s)}_{\textbf{g}(s)} \cdot  G_{\textbf{g}} |\tau} \geq 1 - \mathbf{\eta}_{\text{LD}}(p,m,d,\eps).
	\end{equation*}
\end{theorem}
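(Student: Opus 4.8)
The statement to prove is \Cref{thm:soundnessQLDT}, the quantum soundness of the $(p,m,d)$ low-individual degree test for synchronous strategies. Since the excerpt explicitly attributes this to \cite{jiQuantumSoundnessTesting2022} (and in turn \cite{jiMIPRE2022a}), my plan would be to \emph{cite and adapt} rather than reprove from scratch; the only genuine work is checking that the argument, originally phrased in the finite-dimensional tensor-product or general-synchronous-$C^*$ setting, goes through verbatim when the strategy lives in a tracial von Neumann algebra in standard form. So the plan is: (1) invoke the original soundness theorem of \cite{jiQuantumSoundnessTesting2022} as a black box for synchronous strategies; (2) translate its conclusion into the tracially-embeddable language of \Cref{def:Tracialemd} using the dictionary in \Cref{fig:tracialemdtofd}; (3) verify the auxiliary ingredients used in that proof all have the stated analogues here.

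The substance of the original proof, which I would reproduce at the level of a sketch, proceeds through the following chain. First, combine the three self-test sub-tests (axis-parallel line, diagonal line, and the three consistency tests) to extract from the accepting strategy a set of ``line'' measurements and a ``point'' measurement that are pairwise approximately consistent via the Schwartz–Zippel lemma \Cref{lem:Schwartz_Zipple}: two distinct individual-degree-$d$ polynomials agree on a random point with probability $\le md/2^p$, so approximate agreement of the restricted-polynomial answer with the point answer on random points forces the line answers to be ``mostly'' a consistent system. Second, use the commutation structure enforced by the diagonal-line test together with the orthogonalization lemma \Cref{lem:orthogonalizationlemma} to clean the approximate POVMs into genuine PVMs on the algebra; in the tracial setting one replaces $\ket\psi$ by $\ket\tau$ (the synchronous strategy of \Cref{lem:structsync} is realized on the tracial vector), so the Hilbert–Schmidt-norm version of orthogonalization applies directly. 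Third, ``glue'' the consistent line data into a single global measurement $\{G_{\mathbf g}\}$ with outcomes in $\Idpoly(p,m,d)$ living in the commutant $\alicealg'$ — this is the standard pasting-together-of-lines argument (restriction-to-a-point recovers the global polynomial because axis-parallel lines through every point cover $\bF_{2^p}^m$), and the $C^*$-algebraic version needs only that the weak-$*$-limit / double-commutant of the approximate global operators stays in $\alicealg'$, which holds since $\alicealg'$ is a von Neumann algebra. The final error bound $\eta_{\text{LD}}(p,m,d,\eps) = c_{\text{LD},1}(dm)^{c_{\text{LD},1}}(\eps^{c_{\text{LD},2}} + 2^{-c_{\text{LD},2}p} + 2^{-c_{\text{LD},2}md})$ is then just the accumulated error of these steps, identical to the original since every constant in the chain is dimension-independent.

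The one bookkeeping point worth flagging: our \Cref{fig:LDTdescription} uses a slightly different distribution for the synchronicity (consistency) tests than \cite{jiQuantumSoundnessTesting2022} — the footnote on the figure already notes this only changes a constant. I would address this by remarking that the consistency tests enter the soundness argument only through providing a synchronicity bound $\deltasync \le O(\eps)$ and an approximate-symmetry bound via \Cref{cor:orthogonalizationlemmasync}; reweighting the three consistency sub-tests by a constant factor rescales $\eps$ by at most that constant, which is absorbed into $c_{\text{LD},1}$. Likewise the substitution of the canonical line representation $\Canoline{\cdot}$ for the set-of-points representation is purely cosmetic — both hide the point $u$ from the line prover — and does not affect the analysis.

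The main obstacle is honestly not any single deep step but rather the \emph{infinite-dimensionality} of the commuting-operator model: the original pasting argument and the extraction of $\{G_{\mathbf g}\}$ were written for finite-dimensional or for general synchronous strategies, and one must be careful that every operator built by averaging or by taking limits over the (finitely many, since $\Idpoly(p,m,d)$ is finite) polynomial outcomes remains bounded and stays inside the correct algebra. This is exactly the kind of translation that the tracially-embeddable framework of \cite{linTracialEmbeddableStrategies2024} was designed to make routine — density matrices become positive elements $\sigma \in \alicealg^+$, the observable-switching trick $A\ket\tau = A^{op}\ket\tau$ replaces the finite-dimensional transpose identity, and the orthogonalization lemma holds in Hilbert–Schmidt norm — so I expect the verification to be a matter of diligence rather than new ideas, and the clean statement is precisely the one recorded in \Cref{thm:soundnessQLDT}.
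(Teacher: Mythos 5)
Your proposal is correct and matches the paper's approach: the paper simply imports the soundness theorem of~\cite{jiQuantumSoundnessTesting2022} via the observation (made in the surrounding text) that individual-degree-$d$ polynomials over $\bF_{2^p}^m$ form a tensor code $\mathfrak C^{\otimes m}$, and then cites~\cite[Corollary 4.4]{linTracialEmbeddableStrategies2024} for the statement that the conclusion holds for tracially embeddable strategies. Your sketch of the internal chain of the original argument (Schwartz--Zippel, orthogonalization in Hilbert--Schmidt norm, line-pasting into a global PVM in $\alicealg'$) and your remark about the reweighted consistency tests being absorbed into $c_{\text{LD},1}$ are more detailed than what the paper actually writes, but the route is the same.
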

In other words, if the provers succeed on the $(p,m,c)$-low-individual degree test with high probability. Then the provers, in essence, are secretly sampling a low-individual degree polynomial which are then used as a part of the strategy. Although~\cite[Theorem 4.1]{jiQuantumSoundnessTesting2022} is originally proven for tensor codes, as shown in~\Cref{sec:fieldintro}, the set of $m$-variant polynomial over $\bF_{2^p}^m$ with low-individual degree of at most $c$ is a tensor code $\mathfrak{C}^{\otimes m}$ for a $[2^p, c, c]_{\bF_{2^p}}$ linear code $\mathfrak{C}$, and hence the same statement can be directly applied to the quantum low-individual degree test. We remark that as shown in~\cite[Corollary 4.4]{linTracialEmbeddableStrategies2024}, the above theorem actually applies for general tracially embeddable strategies.


\begin{figure}[t!] 
	\centering
	\begin{tikzpicture}[scale=.8]
		
		\tikzset{type/.style args={[#1]#2}{
				draw,circle,fill,scale=0.25,
				label={[font=\scriptsize, label distance=1pt]#1:#2}
		}}

		\draw (0,1) coordinate (Axis-line) node[type={[225]$(\text{Aline})$}] {};
		\draw (1.5,0) coordinate (Point) node[type={[270]$(\text{Point})$}] {};
		\draw (3,1) coordinate (D-line) node[type={[315]$(\text{Dline})$}] {};
		
		\draw (Axis-line) -- (Point);
		\draw (Point) -- (D-line);
	\end{tikzpicture}
	
	\caption{The typed graph for the quantum low-individual degree test}
	\label{fig:LDtesttype}
\end{figure}

For the remainder of this subsection, we show that the quantum low-individual degree test can be sampled via a typed CL distribution, and hence can be converted to a game with a CL distribution as the input distribution via~\Cref{lem:detypeandsync}. 

\begin{lemma}[The quantum low-individual degree test can be sample via a CL distribution] \label{lem:CLsamplableLDT}
	Let $\cG = (\cX, \cA, \mu, V)$ be a $(p,m,d)$-quantum low-individual degree test, then there exists a game $\cG' = (\cX', \cA, \mu', V')$ which is $(5, 9+ m' + 2 \cdot m, p)$ CL samplable where $m' = \ceil{\frac{\log(m)}{p}}$, and $\cX \subseteq \cX'$ such that the following holds: For any $t \in \{ *, co\}$ and $\eps >  0$. Any synchronous, oracularizable strategy $\strategy$ in model $t$ such that $\omega(\cG',\strategy) \geq 1 - \eps$. $\strategy$, when restricted on the question pair from $\cG$, $\omega(\cG, \strategy) \geq 1 - c \eps$ for some constant $c$. 
\end{lemma}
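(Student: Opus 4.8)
\textbf{Proof plan for \Cref{lem:CLsamplableLDT}.}

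The plan is to exhibit the quantum low-individual degree test as (essentially) a game with a \emph{typed} CL distribution, where the type graph is the three-vertex path graph of \Cref{fig:LDtesttype} with vertices $(\text{Point})$, $(\text{Aline})$, $(\text{Dline})$ and edges $\{(\text{Aline}),(\text{Point})\}$, $\{(\text{Point}),(\text{Dline})\}$ (plus self-loops for synchronicity). For each type $v$ I would specify a CL function $\decidertypedfunction^v$ over a common register collection so that applying the detyped-and-synchronization transformation of \Cref{def:detypeCLsampler} and the subsequent \Cref{def:detypeCLsampler}-based game transformation yields the desired $(5, 9 + m' + 2m, p)$ CL samplable game $\cG'$. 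The key observation making this possible is that the verifier in \Cref{fig:LDTdescription} first samples a uniform point $s \in \bF_{2^p}^m$ and a line through $s$, and sends to the point-player the seed $s$ itself and to the line-player only the canonical representation $\Canoline{\cdot}$ of the line, which by \Cref{def:canorepline} is $(v, \Nullline{v}{u})$ — a \emph{linear image} of the underlying seed data. This is exactly the structure a CL function computes.

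First I would set up the seed space. The seed carries: the point $s \in \bF_{2^p}^m$ ($m$ registers over $\bF_{2^p}$), the direction $v \in \bF_{2^p}^m$ for the line ($m$ more registers), a coordinate index $j \in [m]$ (encoded in $m' = \ceil{\log(m)/p}$ registers), and a constant number of extra auxiliary registers (the ``$9$'') used to select which of the sub-tests is run (the $\tfrac18,\tfrac18,\tfrac14,\tfrac14,\tfrac14$ branching) and, in the diagonal case, to truncate $v$ to its first $j$ coordinates — this truncation is the map $\pi^m_{\leq j}$ from \eqref{eq:defzeromapfield}, which for fixed $j$ is linear, so after conditioning on the register holding $j$ it fits the conditional-linear template of \Cref{def:condlinfun}. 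I would then define $\decidertypedfunction^{(\text{Point})}$ to be (on the appropriate branch) the identity-like map that forwards $s$, $\decidertypedfunction^{(\text{Aline})}$ to compute $\Canoline{\textbf{l}_j}=(e_j, \Nullline{e_j}{s})$, and $\decidertypedfunction^{(\text{Dline})}$ to compute $\Canoline{\textbf{d}_{j,v}}=(\pi^m_{\leq j}(v), \Nullline{\pi^m_{\leq j}(v)}{s})$, all conditioned first on the branch-selector registers and then on the register encoding $j$. Each of these is built by composing constantly many conditional-linear layers — intuitively one layer to read the test-selector, one to read $j$, one to apply the (now-linear) axis or truncation map, one to apply $\Nullline{\cdot}{\cdot}$ — so each $\decidertypedfunction^v$ is a CL function of level bounded by a small constant; I would count carefully and arrange the registers so the resulting detyped function has exactly $5$ levels (recall the detyping transformation of \Cref{def:detypeCLsampler} adds $2$ to the level, so the typed functions should be $3$rd level) and $9 + m' + 2m$ registers of size $p$. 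The decider $V'$ for the non-trivial-seed case is just the decision predicate of \Cref{fig:LDTdescription} reconstructed from the parsed question (it can recompute $s$ from the seed data for the point-player and read off the line parameters otherwise), and $\delta_{a,b}$ on trivial seeds, exactly as in the detyped-game definition. Since $\decidertypedtype$ has only $3$ vertices (a constant), \Cref{lem:nontrivialseedCLdistribution} gives a constant lower bound $\tfrac{1}{4\cdot 3^2\cdot 16^3}$ on the non-trivial-seed probability, so \Cref{lem:detypeandsync} transfers completeness of perfect oracularizable synchronous strategies and degrades soundness only by this constant factor $c = 4|\decidertypedtype|^2 16^{|\decidertypedtype|}$; conditioned on a non-trivial seed the induced distribution on $\cX$ is exactly $\mu$, which gives $\omega(\cG,\strategy)\ge 1-c\eps$ from $\omega(\cG',\strategy)\ge 1-\eps$ and yields $\cX\subseteq\cX'$ via the parsing of \eqref{eq:parsetypedgame}.

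The main obstacle I expect is the bookkeeping of the conditional-linear \emph{levels and registers}: the three type-functions must share a common register partition (as required by \Cref{def:typedCLsampler}), yet they read different pieces of the seed in different orders (the point-player essentially reads everything at level $0$, while the line-players must first learn $j$ before the $v$-truncation or the $e_j$-selection becomes linear), so I would need to pad with trivial zero-maps (as the convention after \Cref{def:condlinfun} allows for even $\seedlength$, and analogously for alignment) to make all $\decidertypedfunction^v$ genuinely $3$rd level over one register collection while still hitting the target register count $9+m'+2m$ after detyping. A secondary point to verify is that $\Nullline{v}{u}$ is linear in $u$ \emph{for fixed $v$} — which it is by construction in \Cref{def:canorepline} (the decomposition $u = u_v + u_v^C$ along $\mathrm{span}(v)$ and its canonical complement is linear once $v$ is fixed) — so the ``apply $\Nullline{\cdot}{\cdot}$'' layer is legitimately a conditional-linear layer conditioned on the register holding $v$ (resp.\ $\pi^m_{\leq j}(v)$). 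None of this is deep, but it is exactly the kind of place where an off-by-one in the level or register count would break the statement, so I would do it explicitly. Everything else — the soundness transfer and the preservation of oracularizable synchronous strategies — is immediate from \Cref{lem:detypeandsync} and \Cref{lem:nontrivialseedCLdistribution} once the typed CL structure is in place.
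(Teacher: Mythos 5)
Your proposal is essentially identical to the paper's proof: express the LDT as a game with a typed CL distribution over the three-vertex path graph (Point--Aline--Dline), realize each $\decidertypedfunction^v$ as a $3$rd-level CL function over a shared register partition (padding with trivial zero-maps where a type reads nothing at a given level), exploit that $\Nullline{v}{\cdot}$ is linear once $v$ is fixed, and then invoke \Cref{lem:detypeandsync} with the constant type count to obtain the $(5,\,9+m'+2m,\,p)$-CL-samplable game and the constant-factor soundness transfer. Two small cautions, both of which your promised bookkeeping would catch: the extra ``$9$'' registers come from the overhead $4b+2\typelength$ introduced by the detyping transformation of \Cref{def:detypeCLsampler} (vertex- and neighbour-indicator encoding), not from auxiliary test-branch-selection registers --- that branching is already handled by sampling a vertex pair from $\decidertypedquestionpair$; and a CL function cannot emit the constant $e_j$ as an output coordinate (a nonzero constant is not linear), so the function should forward $s_0$ verbatim and let both prover and decider reconstruct $e_j$ from it, which is exactly what the paper's $\decidertypedfunction^{\text{Dline}}(s_0,s_1,s_2)=(s_0,0,\Nullline{e_{\binaryinv{\hat j}}}{s_2})$ does.
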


\begin{proof}
We first show that the quantum low-individual degree test can be sampled via a typed CL distribution. Let $(p,m,d)$ be the parameter specified and let $q = 2^p$. For simplification, we first assume $\frac{\log(m)}{p} \in \bN$ (and hence $m' \in \bN$). The typed graph associated with the quantum low-individual degree test can be specified by the types stated in~\Cref{fig:LDtesttype}. The CL function maps $\bF_{2^p}^{m' + 2 \cdot m} \rightarrow \bF_{2^p}^{m' + 2 \cdot m}$, where we write $\bF_{2^p}^{m' + 2 \cdot m} = \bF_{2^p}^{m'} \oplus \bF_{2^p}^m \oplus \bF_{2^p}^m  =  V_0 \oplus V_1 \oplus V_2$. The CL functions is a level $3$-CL function with register $\{V_0, V_1, V_2\}$. For the description below, we assume the function have the input $s = (s_0, s_1, s_2) \in V_0 \oplus V_1 \oplus V_2$. We define the collection of CL functions $\{\decidertypedfunction^v\}$ as the following

\begin{itemize}
	\item  Define $\decidertypedfunction^{\text{Point}}$ to be the third level CL function
	\begin{equation*}
		\decidertypedfunction^{\text{Point}}(s_0, s_1, s_2) = (0,0,s_2).
	\end{equation*}
	In other words, $\decidertypedfunction^{\text{Point}}$ projects the input $s$ onto $V_2$, and $s_2$ corresponds to the point in the ``point question" for the quantum low-individual degree test. $\decidertypedfunction^{\text{Point}}$ can be realized as a third level CL function with registers $\{V_i\}_{i \in [3]}$ in the following way: We define
	\begin{equation*}
		\decidertypedfunction_{0,0}^{\text{Point}}(s_0,0,0) = (0,0,0), \qquad 	\decidertypedfunction_{1,x_0}^{\text{Point}}(0,s_1,0) = (0,0,0), \qquad \decidertypedfunction_{2,x_0 + x_1}^{\text{Point}}(0,0,s_2) = (0,0,s_2),
	\end{equation*}
	for all $x_0 \in  V_0$ and $x_1 \in V_1$. 
	\item Define $\decidertypedfunction^{\text{Dline}}$ to be the third level CL function. For any input $(s_0, s_1, s_2) \in \bF_{2^p}^{m' + 2 \cdot m}$, let $\hat{j} = \kappa(s_0)$. The function $\decidertypedfunction^{\text{Dline}}$ is defined as
	\begin{equation*}
		\decidertypedfunction^{\text{Dline}}(s_0, s_1, s_2) = (s_0,0,\Nullline{e_{\binaryinv{\hat{j}}}}{s_2} ), 
	\end{equation*}
	where $\text{Null}^{\text{LN}}$ is the function used in~\Cref{def:canorepline} to defined the canonical representation of a line. In the example above, $(e_{\binaryinv{\hat{j}}},\Nullline{e_{\binaryinv{\hat{j}}}}{s_2}$ defines an axis parallel line thought the $\binaryinv{\hat{j}} = j$th axis through the point $s_2$. $\decidertypedfunction^{\text{Dine}}$ can be realized as a third level CL function with registers $\{V_i\}_{i \in [3]}$ in the following way: We define
	\begin{equation*}
		\decidertypedfunction_{0,0}^{\text{Dline}}(s_0,0,0) = (s_0,0,0), \qquad \decidertypedfunction_{1,s_0}^{\text{Dline}}(0,s_1,0) = (0,0,0)
	\end{equation*}
	for all $x_0 \in V_0$. For the second level, for all $x_0 \in V_0$ , $x_1 \in V_1$, we define the second level linear function for $\decidertypedfunction^{\text{Dline}}$ as
	\begin{equation*}
		\decidertypedfunction_{2, x_0 + x_1}^{\text{Dline}}(0,0, s_2) = (0,0,\Nullline{e_{\binaryinv{x_0}}}{s_2}).
	\end{equation*}
	\item Let $\hat{j}$ be the same as the definition above. Recall that $\pi_{\leq j}^m$ refers to the zero-out map for the basis element $e_1 \cdots e_m$ in $\bF_{2^p}^m$ and let $v = \pi_{\leq \binaryinv{\hat{j}}}^m(s_0)$.  Define $\decidertypedfunction^{\text{Aline}}$ to be the third level CL function
	\begin{equation*}
		\decidertypedfunction^{\text{Aline}}(s_0, s_1, s_2) = (s_0 , v , \Nullline{v}{s_2}).
	\end{equation*}
	 In this case, $(v,\Nullline{v}{s_2})$ corresponds to the diagonal line $D_{\binaryinv{s_0}, v}$ which passes through $s_2$, and with the last $m - \binaryinv{\hat{j}}$ coordinates being zero. $\decidertypedfunction^{\text{Aline}}$ can be realized as a third level CL function with registers $\{V_i\}_{i \in [3]}$ in the following way: We define
	\begin{equation*}
		\decidertypedfunction_{0,0}^{\text{Aline}}(s_0,0,0) = (s_0,0,0),
	\end{equation*}
	to be the $0$th level linear function for $\decidertypedfunction^{\text{Aline}}$. Since each $\pi_{\leq j}^m$ is a linear function, we define the first level linear function as 
	\begin{equation*}
		\decidertypedfunction_{1,x_0}^{\text{Aline}}(0,s_1,0) = (0,\pi_{\binaryinv{x_0}}(s_1),0),
	\end{equation*}
	for all $x_0 \in V_0$. We define the second level linear function for $\decidertypedfunction$ as
	\begin{equation*}
		\decidertypedfunction_{2, x_0 + x_1}^{\text{Aline}}(0,0, s_2) = (0,0,\Nullline{x_1}{s_2})
	\end{equation*}
	for all $x_0 \in V_0$ and $x_1 \in V_1$.
\end{itemize}
This shows that $\cG$ is typed CL samplable. In the case where $\frac{\log(m)}{p} \not \in \bN$, we can simply treat the space $\bF_{2^p}^{m'}$ as a $\{0,1\}^{p \cdot m'}$ bit string using the canonical representation, and only use the first $\log(m)$ bits to define the CL function. Finally, we use the detyped transformation and apply~\Cref{lem:detypeandsync} to complete the proof of this lemma. 
\end{proof}

When discussing quantum low-individual degree test in this paper, we refers to the version which is CL samplable given by the above lemma. This version of the quantum low-individual degree test still retains the soundness property from~\Cref{thm:soundnessQLDT} (by changing the $a$ in the aforementioned theorem to the constant $ (4 * 4^2 + 16^4)^b \cdot a = (65600)^b \cdot a$). 

We remark that the only time we take advantage of the structure of $\bF_{2^p}^m$ (instead of treating it as a bit string of $\{0,1\}^{pm}$) when using the CL distribution is to sample a diagonal line intersecting the point $s$ in the quantum low-individual degree test. As mentioned previously, the diagonal line test was not used in the classical low-degree test, and the only purpose conceptually is to enforce a single commutation relationship within the proof of quantum soundness (see \cite[Lemma 6.1]{jiQuantumSoundnessTesting2022} for more details). It would be interesting to see if the quantum soundness of the quantum low-individual degree test still holds without the diagonal line test, as this would allow us to show the compression theorem for a simpler class of question distribution. This also shows that only Pauli $X$ and $Z$ measurements on fixed EPR pairs are sufficient for the gap compression theorem.

\begin{CJK*}{UTF8}{gbsn}
\section{Interactive proof systems and the gap compression theorem} \label{sec:MIPcompressibility}
In this section, we formally define the notion of a conditional linear verifier, which is the set of possible $\MIP^*$/$\MIPco$ protocols that we can show to be weakly compressible. We start this section by formally defining the notion of $\MIP^*$ and $\MIPco$ below.

\subsection{Interactive proof systems with entanglement}
In this section, we define the complexity classes $\MIP^*$ and $\MIPco$ more formally. Recall from the introduction that $\MIP$ stands for \textit{multi-prover interactive proof system}, the class of languages $\Language$ decidable by a probabilistic polynomial-time classical verifier when given (classical) interacting with computationally unbounded and non-communicating provers (i.e. the provers cannot talk to each other). In this model, the verifier can interact with multiple provers and may interact with the provers through multiple rounds of interactions. The verifier might adapt his questions based on the previous round of interactions and may leverage the lack of communication between the provers to ``cross-interrogate" them. If $z \in \Language$, the verifier can formulate an interaction such that the prover can provide enough evidence to convince the verifier to accept with probability $1$. On the other hand, if $z \not\in \Language$, the verifier can also formulate an interaction which ensures that the provers can only convince the verifier with probability at most $\frac{1}{2}$ to accept the given instance\footnote{The original formulation is $\geq \frac{2}{3}$ if $x \in \Language$ and $\leq \frac{1}{3}$ otherwise. However, we remark this is equivalent to the formulation given due to sequential repetition.}). As shown in~\cite{babaiNondeterministicExponentialTime1991a}, the computational power of $\MIP$ is equivalent to $\NEXP$, and this can be achieved with just a one-round interaction with two provers. 

In this paper, we consider two variants of $\MIP$ where the provers are still non-communicating, but share entanglement among them. We denote the variant where the provers share the tensor product model of entanglement as $\MIP^*$ and the commuting operator model of entanglement as $\MIPco$. In this paper, we focus on the variant of $\MIP^*$ and $\MIPco$ with two provers and one-round of interactions since this is sufficient for proving lower bounds. The two provers one-round $\MIP^*$ protocol (resp. $\MIPco)$) is denoted as $\MIP^*(2,1)$ (resp. $\MIPco(2,1)$) in the literature, and for simplicity of notation, unless otherwise specified, we drop $(2,1)$ when discussing $\MIP^*(2,1)$ (resp. $\MIPco(2,1)$). In this paper, we also work with $\MIP$ with completeness 1 and soundness $\frac{1}{2}$, meaning that there exists a verifier behaviour such that if $z \in \Language$, then the verifier accepts with probability 1, and if $z \not\in \Language$, then the verifier accepts with probability at most $\frac{1}{2}$. The completeness 1, soundness $\frac{1}{2}$ $\MIP^*$ protocol (resp. $\MIPco)$) is denoted as $\MIP^*_{1, \frac{1}{2}}$ (resp. $\MIPco_{1, \frac{1}{2}}$) in the literature, and similarly, we drop the subscript for the simplicity of notation. For a more general definition on $\MIP^*$, we refer the readers to~\cite[Section 6.1]{vidickQuantumProofs2016}. We formally give the definitions for \gls{MIPstar} and \gls{MIPco} used in this paper below. 
\begin{definition}[Multi-prover proof system with entanglement]  \label{def:MIPdef}
	Let $t \in \{*, co\}$. A language $\mathtt{L}$ is in $\MIP^t$ if there exist a pair of probabilistic (potentially multi-input) Turing machines $(\samplerTM, \deciderTM)$ such that $\TIME_{\samplerTM}(z) = \TIME_{\deciderTM}(z)  = O(\poly(|z|))$ for all $z \in \{0,1\}^*$. Furthermore, there exists an infinite sequence of games $\cG_{z} = (\cX_z, \cA_z, \mu_z, D_z)$ indexed by $z \in \{0,1\}^n$ and two increasing polynomial functions $x(n), a(n): \bN \rightarrow \bN$ with $\cX_z = \{0,1\}^{x(|z|)}$ and $\cA_z = \{0,1\}^{a(|z|)}$, such that
	\begin{itemize}
		\item \textbf{(Uniformity)} $\samplerTM(z, \text{sample})$ outputs a sample from the distribution $\mu_z$, and $\deciderTM(z, x, y, a, b) = D_z(x,y,a,b)$ for all $(x,y,a,b) \in \cX_z^2 \times \cA_z^2$.  
		\item \textbf{(Completeness)} If $z \in \cL$, then $\omega^t(\cG_z) = 1$. 
		\item \textbf{(Soundness)} If $z \not\in \cL$, then $\omega^t(\cG_z) \leq \frac{1}{2}$. 
	\end{itemize}
\end{definition}
The above definition is similar to the one given in~\cite[Definition 5.29]{jiMIPRE2022a}. Intuitively, the pair of Turing machines $(\samplerTM, \deciderTM)$ completely specifies the behaviour for the verifier. In comparison to the standard definition of an interactive proof system, we allow the sampler $\samplerTM$ to perform additional computation steps. This allows the verifier to extract additional information about the sampling distribution $\mu_z$. This would be useful in defining a \texttt{Compression} algorithm (as per~\Cref{def:weaklycompressibleproblem}). 

We remark that given the pair of Turing machines $(\samplerTM, \deciderTM)$, it is hard to extract the exact description of $\cG_{z} = (\cX_z, \cA_z, \mu_z, D_z)$ for a particular instance $z \in \{0,1\}^*$ by the definition above. However, as seen in the next subsection, we can hardcode $(\samplerTM, \deciderTM)$ to run other computational procedures in a way such that the description for $\cG_{z} = (\cX_z, \cA_z, \mu_z, D_z)$ can be properly extracted.  As observed in~\cite{cleveConsequencesLimitsNonlocal2004}, for $t \in \{*, co\}$, the complexity class $\MIP^{t}$ is complete with respect to the following decision problem. 
\begin{definition}[$(1,\frac{1}{2})$ non-local game value problem]  \label{def:entangledgameproblem}
	For $t \in  \{*, co\}$, the $(1,\frac{1}{2})$ $t$ non-local game value problem is a decision problem defined by the following two sets. 
	\begin{itemize}
		\item $\Language^{\MIP^t}_{\text{yes}} = \{\langle \cG\rangle | \, \omega^t(\cG) = 1\} $. 
		\item $\Language^{\MIP^t}_{\text{no}} = \{\langle \cG \rangle |\,  \omega^t(\cG) \leq \frac{1}{2}\} $. 
	\end{itemize}
\end{definition}
For clarity, we refer to the $(1,\frac{1}{2})$ $*$ non-local game value problem as the $(1,\frac{1}{2})$ tensor product value problem, and the $(1,\frac{1}{2})$ $co$ non-local game value problem as the $(1,\frac{1}{2})$ commuting operator value problem. Finally, we wish to give a notion of a ``uniform problem instance" for interactive proof systems. 

\begin{definition}[Uniform verifier sequence] \label{def:genseq}
	Let $t \in \{*, co\}$, and let $\cG_{n} = (\cX_n, \cA_n, \mu_n, D_n)$ be a sequence of games. A verifier sequence $\verifiersequence = (\samplerTM, \deciderTM)$ is a pair of Turing machines such that $\samplerTM(n, \text{sample})$ outputs a sample from the distribution $\mu_n$, and $\deciderTM(z, x, y, a, b) = D_n(x,y,a,b)$ for all $(x,y,a,b) \in \cX_n^2 \times \cA_n^2$.  Furthermore, we say that $\verifiersequence$ runs in $O(\mathbf{f}(n))$ time if
	\begin{equation*}
		\TIME_{\samplerTM} = \TIME_{\deciderTM} = O(\mathbf{f}(n)). 
	\end{equation*} 
\end{definition}

In the above definition, the runtime for $\verifiersequence$ might initially seem different from the runtime defined for a uniform problem sequence used in~\Cref{sec:Compression}. However, to see the similarity, one should intuitively think of $(\samplerTM, \deciderTM)$ as a Turing machine which can be used to generate a description of the sequence non-local games $\cG_n$ in the above definition. Since we do not make any assumption on the implementation on the Turing machine $ (\samplerTM, \deciderTM)$, any argument made in~\Cref{sec:Compression} still applies to the above definition. 

Audiences with no prior background in complexity might be confused about the reason for representing a sequence of non-local games as a uniform Turing Machine instead of the description of the game itself. By representing a sequence of games as uniform Turing Machines, one can convert the computation step of deciding whether the verifier accepts into an instance of a 3-SAT formula via the well-known Cook\text{-}Levin encoding. This is crucial for initiating the Answer reduction step of the compression procedure. 


\subsection{Conditionally Linear verifier} \label{sec:CLverifier}

Before introducing the gap compression theorem for non-local games, we first define the type of games that can be shown to be weakly compressible. Recall from~\Cref{sec:condlinearfun} that a CL samplable game is a non-local game with CL distribution as the sampling procedure for the game. 

For $t \in  \{*, co\}$ and constant $k \in \bN$, we define a synchronous $k$-th level CL samplable $\MIP^t$ ($k\text{-}\CLMIP^t$) as the complexity class $\MIP^t$ except restricted to synchronous games which are also $k$-th level CL samplable. This complexity class is complete with respect to the following decision problem.  
\begin{itemize}
	\item $\Language^{k\text{-}\CLMIP^t}_{\text{yes}} = \{\langle \cG\rangle | \, \omega^t(\cG) = 1, \, \cG\text{ is a synchronous $k$-th level CL samplable game}\} $, 
	\item $\Language^{k\text{-}\CLMIP^t}_{\text{no}} = \{\langle \cG \rangle | \, \omega^t(\cG) \leq \frac{1}{2}, \, \cG\text{ is a synchronous $k$-th level CL samplable game}\}$. 
\end{itemize}
Showing that $k\text{-}\CLMIP^{co}$ (resp. $k\text{-}\CLMIP^*$) being $\coRE$-complete (resp. $\RE$-complete) implies that $\coRE \subseteq \MIPco$ (resp. $\RE \subseteq \MIP^*$ ). 

We are now ready to describe a notion of a Conditionally Linear verifier. Intuitively, one can think of the CL verifier as a more structured version of a uniform verifier for synchronous $k$-th level CL samplable games. We formally introduce the notion of a CL verifier below.

\begin{definition}[Conditionally Linear verifier] \label{def:CLverifier}
	Let $\text{\gls{CLlevelfn}}, \text{\gls{CLcardfn}}, \text{\gls{Cfieldfn}}: \bN \rightarrow \bN$, where the range of $\mathbf{p}(n)$ maps integers to odd integers. Let $\gamesequence = \{\cG_n = (\cX_n, \cA_n, \mu_n, D_n)\}_{n \in \bN}$ be an infinite sequence of games indexed by $n \in \bN$. Each $\mu_n$ is a $(\mathbf{k}(n), \mathbf{m}(n), \mathbf{p}(n))$ CL distribution defined over two $\mathbf{k}(n)$-level CL functions $\decidertypedfunction^{0,n}, \decidertypedfunction^{1,n}$ with registers $\{ V^n_j \}_{j \in [\mathbf{k}(n)]}$ as defined in~\Cref{def:CLdistribution}, and $\cA_n = \{0,1\}^*$ (in this case, $\cX_n = \bF_{2^{\mathbf{p}(n)}}^{\mathbf{m}(n)}= \{0,1\}^{\mathbf{p}(n) \cdot \mathbf{m}(n)}$ by definition).
	
	A $\mathbf{k}(n)$ level CL verifier \gls{Verifiersequence} is a tuple $(\text{\gls{Samplersequence}}, \text{\gls{Decidersequence}})$, where $\samplerTM_{\verifiersequence}$ is a five-input Turing machine,  and $\deciderTM_{\verifiersequence}$ is a six-input Turing machine, such that, for all $n \in \bN$
	\begin{itemize}
		\item $\samplerTM_{\verifiersequence}(n, \text{Parameter}) = (\mathbf{k}(n), \mathbf{m}(n) , \mathbf{p}(n))$. 
		\item $\samplerTM_{\verifiersequence}(n, \text{Divide},s) = (s_0, \cdots s_{\mathbf{k}(n)-1})$, for all $s \in V^n$, where $s_j \in V_j^n$ and $\sum_{j \in [\mathbf{k}(n)]} s_j = s$.
		\item $\samplerTM_{\verifiersequence}(n, \text{Function},p,j,s,x) = \decidertypedfunction_{j, s}^{p,n}(x)$, for all $j \in [\mathbf{k}(n)]$, $s \in V_{< j}^n$,  $x \in V_j^n$, and $p\in \{0,1\}$. Where recall $\{\decidertypedfunction_{j, s}^{p,n}\}_{s \in V^n_{< j}}$ are the $j$th level linear function for $\decidertypedfunction^{p,n}$ (where we associate $\decidertypedfunction^{0,n} = \decidertypedfunction^{A,n}$ and $\decidertypedfunction^{1,n} = \decidertypedfunction^{B,n} $ ).
		\item $\deciderTM_\verifiersequence(n,x,y,a,b)= D_n(x,y,a,b)$ for all $(x,y) \in \cX_n^2$ and $(a,b) \in \cA_n^2$
	\end{itemize}
	If $\mathbf{k}(n) = k$ for some constant $k \in \bN$, then we simply call $\verifiersequence$ a $k$-th level CL verifier. We say that $\verifiersequence$ has a sampling complexity of $O(\mathbf{f}(n))$ if 
	\begin{equation*}
		\mathbf{k}(n) \cdot \mathbf{m}(n) \cdot \mathbf{p}(n) = \TIME_{\parameterTM_{\verifiersequence}}(n)= \TIME_{\samplerTM_{\verifiersequence}}(n) = O(\mathbf{f}(n)),
	\end{equation*}
	and we say that $\verifiersequence$ has a verification complexity of $O(\mathbf{g}(n))$ if $\TIME_{\deciderTM_{\verifiersequence}} = O(\mathbf{g}(n))$.
\end{definition}

We use the term CL verifier as a shorthand for any $\mathbf{k}(n)$ verifier. We refer to a CL verifier $\verifiersequence$ as a \textit{$k$-th level synchronous Conditionally Linear verifier} if every game $\cG_n = \{ \cX_n, \cA_n, \mu_n, D_n\}$ generated by $\verifiersequence$ is a $k$-th level CL samplable synchronous game. To abuse notation, we write $\samplerTM_{\verifiersequence}(n, \text{Parameter}) \leq O(\mathbf{f}(n))$ as a shorthand for $\mathbf{m}(n) \cdot \mathbf{p}(n) \leq O(\mathbf{f}(n))$, where $\mathbf{m}(n)$ and $\mathbf{p}(n)$ are the output for  $\samplerTM_{\verifiersequence}(n, \text{Parameter})$. In the above definition, we refer to $\samplerTM_{\verifiersequence}$ as a CL sampler and $\deciderTM_\verifiersequence$ as a CL decider, and we drop the subscript $\verifiersequence$ if the underlying game sequence is clear from context. We remark that although in the above definition, the answer set $\cA_n = \{0,1\}^*$ is an infinite set. However, since the decider $\deciderTM$ is always assumed to be time-bounded, $\deciderTM$ can read at most $\TIME_{\deciderTM}(n)$ bits; thus $\cA_n$ is, in practice, always a finite set. 

Although the above definition does not explicitly give the computational procedure $\samplerTM_{\verifiersequence}(n, \text{sample})$ akin to~\Cref{def:genseq}, the computation procedure can be implemented in the following manner:
\begin{enumerate}
	\item The verifier first runs $\samplerTM_{\verifiersequence}(n, \text{Parameter})$ to obtain $\mathbf{k}(n)$ and $\mathbf{m}(n) \cdot \mathbf{p}(n)$, then the verifier samples a random seed $s \in \{0,1\}^{\mathbf{m}(n) \cdot \mathbf{p}(n)}$. 
	\item The verifier then runs $\samplerTM_{\verifiersequence}(n, \text{Divide},s) = (s_1 \cdots , s_{\mathbf{k}(n)})$ to partition $s$ into linear components within $\{V_j^n\}_{j\in [\mathbf{k}(n)]}$.
	\item For $p \in \{0,1\}$, the verifier performs the following: 
	\begin{enumerate}
		\item The verifier first computes $x_0^p =  \samplerTM_\decidertypedfunction(n, \text{Function},p,0,0, s_0)$, the output for the $0$-th linear function for $\decidertypedfunction^{p,n}$.
		\item For $1 \leq j < \mathbf{k}(n)$, the verifier computes $x_j^p =  \samplerTM_\decidertypedfunction(n, \text{Function},p,j,x_{j-1}^p, s_j)$, the output for the $j$th linear function for $\decidertypedfunction^{p,n}$.
		\item Finally, the verifier computes $x^p = \sum_{j} x_j^i $ by adding up all the components together.
	\end{enumerate}
	\item The verifier returns $(x^0, x^1)$, as the question pair. 
\end{enumerate}
Step 1 takes $O(\log(\mathbf{p}(n)) + \log(\mathbf{m}(n))) = O(\mathbf{f}(n))$ time, Step 2 and 3 take $O(\mathbf{k}(n) \cdot \mathbf{m}(n) \cdot \mathbf{p}(n)) = O(\mathbf{f}(n))$ time by~\Cref{lem:compfinitefield}. This implies that $\samplerTM_{\verifiersequence}(n, \text{sample})$ runs in time $O(\mathbf{f}(n))$, consistent with the definition given in~\Cref{def:genseq}. This shows that a Conditionally Linear verifier is a specific instance of a uniform verifier sequence for CL samplable games.

\subsection{Compression theorem for interactive proof system}
We introduce the gap-compression theorem for non-local games and show how the theorem can be used to lower-bound the complexity of $\MIP^*$ and $\MIPco$ in this section. 

\begin{theorem}[Gap compression for non-local games] \label{thm:gappedcompression}
	For all constants $\alpha,k \in \bN$, there exists an algorithm $\texttt{Gapcompress}_{\alpha, k}$ that takes the input a pair of Turing machines $(\samplerTM, \deciderTM)$. $\texttt{Gapcompress}_{\alpha, k}$ outputs a tuple of Turing machines $\verifiersequence^{\Compressgame} = (\samplerTM^{\Compressgame}, \deciderTM^{\Compressgame})$ such that the following holds:
	
	There exists an integer $\gamma = O(\text{poly}(\alpha, k)) $ such that, for models $t \in \{*, co\}$
	\begin{enumerate}
		\item (Runtime): $\TIME_{\texttt{Gapcompress}_{\alpha, k}}(\samplerTM, \deciderTM) = O(\poly(\alpha), |\samplerTM|, |\deciderTM|)$. 
		\item (Independence of the sampler): The Turing machine $\samplerTM^{\Compressgame}$ only depends on the parameter $\alpha$ and $k$ and is a sampler for a synchronous 7-th level CL verifier. The Turing machine $\deciderTM^{\Compressgame}$ is a decider for a synchronous 7-th level CL verifier. $\verifiersequence^{\Compressgame}$ is a synchronous 7-th level CL verifier sequence for an infinite sequence of games $\gamesequence^{\Compressgame} = \{\cG_n^{\Compressgame} \}_{n \in \bN}$. 
		\item (Complexity bounds for the output) $\verifiersequence^{\Compressgame}$ has sample complexity and verification complexity of $O(\log(n)^\gamma)$.
	\end{enumerate}
	Furthermore, if the input $\verifiersequence = (\samplerTM, \deciderTM)$ is a synchronous $k'$th level CL verifier for the infinite sequence of synchronous games $\gamesequence = \{\cG_n\}_{n \in \bN}$ for some constant $k' < k$, and there exists a constant $n_0 \in \bN$ such that for all $n \geq n_0$
	\begin{equation} \label{eq:gapcompressinputcondition}
		\max\left\{ \TIME_{\samplerTM}, \TIME_{\deciderTM}\right\} \leq n^\alpha. 
	\end{equation}
	Then there exists a constant $n_0^{\Compressgame} = \poly(\gamma, n_0)$ such that for all $n \geq n_0^{\Compressgame}$
	\begin{itemize}
		\item (Completeness) If there exists a perfect oracularizable strategy for $\cG_n$ in model $t$, then there exists a perfect oracularizable strategy for $\cG^{\Compressgame}_n$ in model $t$. 
		\item (Soundness) 
		\begin{equation*}
			\omega^t(\cG_n) \leq \frac{1}{2} \Longrightarrow \omega^t(\cG^{\Compressgame}_n) \leq \frac{1}{2}
		\end{equation*}
	\end{itemize}
\end{theorem}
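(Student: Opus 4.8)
\textbf{Proof plan for the gap compression theorem (\Cref{thm:gappedcompression}).}

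The plan is to assemble $\texttt{Gapcompress}_{\alpha,k}$ as the composition of the three subroutines advertised in the technical overview — \emph{question reduction}, \emph{answer reduction}, and \emph{parallel repetition} — each of which is itself a transformation on (CL) verifiers, and to track the completeness/soundness behaviour and the time bounds through the composition. First I would set up the bookkeeping: a verifier transformation takes $(\samplerTM,\deciderTM)$ to a new pair, is itself computable in time $\poly(|\samplerTM|,|\deciderTM|,\alpha,k)$, and sends a synchronous $k'$-level CL verifier of complexity $O(n^\alpha)$ to another synchronous CL verifier of controlled level and controlled complexity. The key point for clause (2) — independence of the sampler — is that after question reduction the sampler is a Pauli-basis-test-plus-introspection sampler whose description depends only on $\alpha$ and $k$ (it samples from a fixed universal CL distribution whose dimensions are functions of $\alpha,k$ and $\log n$), and answer reduction and the anchored parallel repetition only further modify the sampler in an $\alpha,k$-dependent but input-independent way; so I would prove a lemma that each subroutine preserves ``sampler depends only on $\alpha,k$'' while allowing the decider to depend on the input $\deciderTM$.

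Next I would invoke the individual subroutine theorems in order. \textbf{Question reduction} (\Cref{sec:introspection}): using the CL-samplability of the Pauli basis test and the introspection protocol, plus the detyping/synchronization transformation of \Cref{def:detypeCLsampler} and \Cref{lem:detypeandsync} to keep everything CL samplable and synchronous, I would reduce the sampler complexity from $O(n^\alpha)$ to $O(\polylog n)$ while keeping the verification complexity $O(\poly n)$; the soundness loss here is $O(1)$ (independent of game size, by the de la Salle simplification \cite{delasalleSpectralGapStability2022} and \Cref{thm:soundnessPaulibasis}), and completeness is preserved with an oracularizable strategy because the honest provers just measure Pauli $Z$ on EPR pairs. \textbf{Answer reduction} (\Cref{sec:PCPanswerreduction}): apply oracularization (\Cref{sec:Oracularization}, which is where the oracularizable-strategy hypothesis in the completeness clause gets used and reproduced), then the tailor-made classical PCP on $\deciderTM_n$, then the quantum low-individual-degree test, which by \Cref{lem:CLsamplableLDT} is again CL samplable and by \Cref{thm:soundnessQLDT} (and its tracial version, \cite[Corollary 4.4]{linTracialEmbeddableStrategies2024}) is quantum sound for tracially embeddable strategies; this brings the verification complexity down to $O(\polylog n)$ while keeping the sampler at $O(\polylog n)$, with again an $O(1)$ soundness loss. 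At this stage the verifier $\verifiersequence'$ runs in $O(\polylog n)$ time, is synchronous and CL samplable of some constant level, perfect oracularizable strategies are preserved, and $\omega^t(\cG_n)\le \tfrac12 \Rightarrow \omega^t(\cG_n') \le 1 - 1/c$ for a universal constant $c$ (by combining the two $O(1)$ soundness losses and absorbing the $\delta$-synchronous rounding of \Cref{thm:Rounding} and \Cref{lem:balancedperfecttoalmostsync}). \textbf{Parallel repetition} (\Cref{sec:parallelrep}): apply the synchronicity-preserving anchoring transformation and then $r = \Theta(\log\log n)$-fold parallel repetition; by \Cref{lem:parallelrefCLfunction} the result is still CL samplable (and the level increases by a bounded amount), by \Cref{thm:syncvaluepreserve} completeness and oracularizability survive, and the commuting-operator anchored parallel repetition theorem of \Cref{sec:parallelrepappendix} drives the soundness from $1 - 1/c$ below $\tfrac12$; the runtime grows only by the $O(\log\log n)$ repetition factor, hence stays $O(\polylog n) = O(\log(n)^\gamma)$ for a suitable $\gamma = \poly(\alpha,k)$.

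Finally I would verify the three numbered clauses and the ``for all $n \ge n_0^{\Compressgame}$'' quantifier: clause (1) follows because each subroutine is a fixed algorithm run once on $(\samplerTM,\deciderTM)$; clause (2) follows from the sampler-independence lemma above together with the observation that all three subroutines output synchronous CL verifiers and the total level is bounded by an absolute constant ($7$ after unfolding the compositions and the $+2$ from each detyping); clause (3) is the runtime tracking just described, with $\gamma$ chosen as the (polynomial in $\alpha,k$) exponent that dominates all the $\polylog$ factors. The constant $n_0^{\Compressgame} = \poly(\gamma,n_0)$ enters because the hypothesis $\TIME_{\samplerTM},\TIME_{\deciderTM} \le n^\alpha$ is only assumed for $n \ge n_0$, and each subroutine's quantitative guarantees (e.g. the soundness gap staying a fixed constant, the repetition count sufficing) kick in only once $n$ is large enough relative to the accumulated polynomial overhead; I would simply propagate these thresholds through the composition. \textbf{The main obstacle} I expect is not any single subroutine but the commuting-operator soundness of answer reduction: the quantum low-individual-degree soundness of \cite{jiQuantumSoundnessTesting2022} and the oracularization argument of \cite{jiMIPRE2022a} are stated for finite-dimensional (or synchronous finite-dimensional) strategies, so the real work — carried out via the tracially embeddable strategies framework of \cite{linTracialEmbeddableStrategies2024}, the tracial rounding lemma \Cref{thm:Rounding}/\Cref{cor:orthogonalizationlemmasync}, and the observable-switching / relative-entropy tools summarized in \Cref{sec:introMIPstartoMIPco} — is to check that every soundness step goes through verbatim for $C_{qc}^{\Tr}$ and hence, by density (\Cref{thm:tracialembedding}), for $\omega^{co}$. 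Completeness in the $co$ case is comparatively easy since a tensor-product oracularizable strategy is automatically a commuting-operator one.
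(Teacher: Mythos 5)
Your proposal follows the paper's own decomposition — assemble $\texttt{Gapcompress}_{\alpha,k}$ as question reduction, then answer reduction, then anchored parallel repetition, invoking the three corresponding propositions and tracking sampler-independence, level, and runtime through the composition — so the high-level route is exactly right, and the identification of the commuting-operator soundness of answer reduction as the place where the tracially embeddable framework does the real work matches the paper's emphasis.

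There is, however, a quantitative error in your accounting of the soundness gap that is worth flagging because it changes the role of parallel repetition. You assert that after question and answer reduction one has $\omega^t(\cG_n)\le\tfrac12 \Rightarrow \omega^t(\cG_n')\le 1-1/c$ for a \emph{universal constant} $c$, and then propose $r=\Theta(\log\log n)$ repetitions. This is not what happens. Question reduction does produce a constant soundness degradation (up to $\exp(k)$, which is a constant since $k$ is fixed); but the answer reduction step uses a PCPP over a field of size $2^p$ with $m^{\mathrm{PCPP}}=\Theta(\poly(\alpha,\log n))$ variables, and the quantum (simultaneous) low-individual-degree soundness carries a prefactor of order $(kdm)^{c_{\mathrm{SLD},1}}=\polylog(n)$ in $\mathbf{\eta}_{\mathrm{SLD}}$. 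Consequently the soundness gap after answer reduction is not $\Omega(1)$ but $\Omega(1/\polylog(n))$, which is what \Cref{prop:AnswerReduction} records with $\mathbf{s}^{\Answerred}_\alpha(\eps,n)=O(\polylog(n),\poly(\eps))$. This in turn forces the anchored parallel repetition count to be $\mathbf{r}(n)=O(\poly(\alpha,k)\cdot\polylog(n))$, not $\log\log n$: plugging into the anchored bound $\tfrac{16}{\mathbf{s}(n)}\exp\!\bigl(-c^{\mathrm{para}}\mathbf{s}(n)^{17}\mathbf{r}(n)/\log(|\cA|+1)\bigr)$ with $\mathbf{s}(n)=1/\polylog(n)$ and $\log|\cA|=\polylog(n)$ gives $\mathbf{r}(n)=\polylog(n)$. (Your own numbers are also internally inconsistent: if the gap really were a fixed constant, $O(1)$ repetitions would already suffice, so $\log\log n$ would be neither necessary nor enough by your own reasoning.) The error is harmless for the final runtime bound — $\polylog(n)$ repetitions of a $\polylog(n)$-size game is still $\polylog(n)$, so clause (3) survives — but the soundness bookkeeping in your plan as written is not correct and would need to be repaired as above before the parallel repetition parameter $\mathbf{r}(n)$ can be chosen.

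One smaller note: the statement that the final level is $7$ "after the $+2$ from each detyping" misattributes where the levels come from. In the paper the accounting is: question reduction outputs a $3$rd-level sampler; answer reduction outputs a $\max\{k'+2,6\}$th-level sampler (where $k'=3$ here, so $6$); and the anchoring/parallel-repetition transformation adds exactly $+1$, giving $7$. Detyping does contribute $+2$ internally, but it is folded into the stated output levels of the individual subroutines rather than accumulating multiplicatively across them.
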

We give a proof for~\Cref{thm:gappedcompression} in~\Cref{sec:proofcompressiongame}. By clause 2 of the above theorem, for any $\alpha, k$, the output for the algorithm $\texttt{Gapcompress}_{\alpha, k}$ will always be a synchronous 7-th level CL verifier (even if the input $(\samplerTM, \deciderTM)$ might not be a valid CL verifier). Hence, as a corollary,~\Cref{thm:gappedcompression} shows the following
\begin{corollary} \label{cor:KMIPcomp}
	For $t \in \{*, co\}$ and constant $k \in \bN$ such that $k \geq 7$, the complexity class $k\text{-}\CLMIP^t$ is weakly compressible.
\end{corollary}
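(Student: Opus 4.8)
The plan is to obtain \cor{KMIPcomp} as an essentially syntactic consequence of the gap compression theorem (\thm{gappedcompression}), by wrapping $\texttt{Gapcompress}$ in format-conversion code that mediates between a ``uniform problem instance'' $\texttt{Seq}_{\Decisionproblem}$ for the decision problem underlying $k\text{-}\CLMIP^t$ and a CL verifier $\verifiersequence=(\samplerTM,\deciderTM)$ as in \defn{CLverifier}. Concretely, for each $\alpha$ I would define $\texttt{Compress}^{\Decisionproblem}_\alpha$ to: (i) read $\langle\texttt{Seq}_{\Decisionproblem}\rangle$ and manufacture from it a CL verifier $\verifiersequence$; (ii) run $\texttt{Gapcompress}_{\alpha',k}$ on $\verifiersequence$, where $\alpha'=O(\poly(\alpha))$ absorbs the polynomial overhead of step (i); and (iii) repackage the output $\verifiersequence^{\Compressgame}=(\samplerTM^{\Compressgame},\deciderTM^{\Compressgame})$ as a uniform instance $\texttt{Seq}^{\Compressgame}_{\Decisionproblem}$.

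\textbf{The format translation (step i and iii).} A uniform instance $\texttt{Seq}_{\Decisionproblem}\colon\bN\to\Decisionproblem$ outputs, on input $n$, a description $\langle\cG_n\rangle$ of a synchronous $k$-th level CL samplable game; I take the convention that such a description either is, or is read in constant time as, a pair of Turing machines answering the sampler/decider queries of \defn{CLverifier}. Then I build $\verifiersequence=(\samplerTM,\deciderTM)$ whose machines, on any input mentioning $n$, first simulate $\texttt{Seq}_{\Decisionproblem}(n)$ to recover $\langle\cG_n\rangle$ and then relay the appropriate query; this adds only polynomial overhead, so $|\samplerTM|,|\deciderTM|=O(|\langle\texttt{Seq}_{\Decisionproblem}\rangle|)$ and, whenever $\TIME_{\texttt{Seq}_{\Decisionproblem}}\le n^\alpha$ for $n\ge n_0$, one gets $\TIME_{\samplerTM},\TIME_{\deciderTM}\le n^{\alpha'}$ for $n$ past a polynomially larger threshold, i.e.\ hypothesis \eq{gapcompressinputcondition} holds for $\alpha'$. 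Conversely I let $\texttt{Seq}^{\Compressgame}_{\Decisionproblem}(n)$ output the description of $\cG^{\Compressgame}_n$ encoded simply as the pair $(\samplerTM^{\Compressgame},\deciderTM^{\Compressgame})$ with $n$ hard-coded; since $\samplerTM^{\Compressgame}$ depends only on $\alpha,k$ and $|\deciderTM^{\Compressgame}|=O(\poly(\alpha,|\samplerTM|,|\deciderTM|))$ are constants in $n$, producing this description costs only $O(\log n)$ time.

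\textbf{Matching the clauses of \defn{weaklycompressibleproblem}.} I would check each item against \thm{gappedcompression} with $\gamma=O(\poly(\alpha',k))=O(\poly(\alpha))$: the runtime bound on $\texttt{Compress}^{\Decisionproblem}_\alpha$ follows from clause~1 of the theorem together with $|\samplerTM|,|\deciderTM|=O(|\langle\texttt{Seq}_{\Decisionproblem}\rangle|)$; consistency of the output follows from clause~2, because $\verifiersequence^{\Compressgame}$ is always a synchronous $7$-th level CL verifier and a $7$-th level CL samplable game is in particular $k$-th level CL samplable since $k\ge 7$ (a $j$-th level CL function is $j'$-th level for every $j'\ge j$, as noted after \defn{condlinfun}) -- so the output is always a well-typed element of the decision problem, irrespective of the input; the output-complexity bound is the $O(\log n)=O(\polylog(n)^\gamma)$ estimate above. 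Finally, for $n\ge n_0^{\Compressgame}=\poly(\gamma,n_0)$ the soundness clause of the theorem gives $\omega^t(\cG_n)\le\tfrac12\Rightarrow\omega^t(\cG^{\Compressgame}_n)\le\tfrac12$, i.e.\ $\texttt{Seq}_{\Decisionproblem}(n)\in\Language_{\text{no}}^{\Decisionproblem}\Rightarrow\texttt{Seq}^{\Compressgame}_{\Decisionproblem}(n)\in\Language_{\text{no}}^{\Decisionproblem}$.

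\textbf{The main obstacle: completeness.} The delicate point is the completeness direction: $\texttt{Seq}_{\Decisionproblem}(n)\in\Language_{\text{yes}}^{\Decisionproblem}$ only says $\omega^t(\cG_n)=1$, whereas \thm{gappedcompression} transfers perfection to $\cG^{\Compressgame}_n$ only when $\cG_n$ admits a \emph{perfect oracularizable} strategy. So the crux is to pass, for synchronous CL samplable games, from a perfect value to a perfect oracularizable strategy -- where I would invoke the structural theory: \thm{syncvaluepreserve} upgrades $\omega^t(\cG_n)=1$ to $\omega^t_s(\cG_n)=1$, \lem{structsync} realizes this by a projective symmetric synchronous strategy, and the oracularization analysis of \sec{Oracularization} is then used to extract a perfect oracularizable strategy; chaining with the completeness clause of \thm{gappedcompression} and the fact that a perfect oracularizable strategy witnesses $\omega^t(\cG^{\Compressgame}_n)=1$ closes the loop. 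I expect this reconciliation between the ``$\omega^t=1$'' definition of yes-instances and the ``perfect oracularizable'' hypothesis to be the only genuinely non-bookkeeping ingredient; the remaining work -- the two format translations and the verification that the polynomial overhead keeps \eq{gapcompressinputcondition} satisfied and the constants in the right polynomial shape -- is routine.
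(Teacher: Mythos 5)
Your format-conversion scaffolding (steps (i)–(iii), matching the runtime, consistency, and output-complexity clauses of \defn{weaklycompressibleproblem} to clauses 1–3 of \thm{gappedcompression}) is sound and matches the paper's one-line derivation of the corollary. The problem is in the ``main obstacle'' paragraph, where you have correctly located the only non-trivial step but proposed a resolution that does not work.

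You claim that after \thm{syncvaluepreserve} and \lem{structsync} produce a perfect projective symmetric synchronous strategy, ``the oracularization analysis of \sec{Oracularization} is then used to extract a perfect oracularizable strategy.'' There is no such extraction anywhere in the paper. \lem{oratransformation} has the \emph{opposite} logical shape: its completeness clause takes a perfect \emph{oracularizable} strategy for $\cG$ as \emph{hypothesis} and produces one for $\cG^{\text{Ora}}$, and its soundness clause produces an ordinary (not oracularizable) strategy for $\cG$ from an approximately-perfect strategy for $\cG^{\text{Ora}}$. Neither direction converts a perfect synchronous strategy into an oracularizable one, and indeed no such implication can hold in general. A perfect projective synchronous strategy $\{A^x_a\}$ for a synchronous game does satisfy $A^x_a A^y_b = 0$ whenever $\mu(x,y)>0$ and $D(x,y,a,b)=0$ (this follows from faithfulness of $\tau$), but this only forces $A^x_a$ to commute with the \emph{sum} $\sum_{b:D(x,y,a,b)=1}A^y_b$, not with each individual $A^y_b$; as soon as a question pair $(x,y)$ has some answer $a$ for Alice that is compatible with several Bob-answers, commutativity of $A^x_a$ with each $A^y_b$ is an extra constraint that perfection alone does not impose. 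So ``$\omega^t(\cG_n)=1$'' is genuinely weaker than ``$\cG_n$ admits a perfect oracularizable strategy,'' and the completeness clause of \thm{gappedcompression} simply does not trigger under the hypothesis $\texttt{Seq}_{\Decisionproblem}(n)\in\Language_{\text{yes}}^{\Decisionproblem}$.

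You should be aware that the paper's own proof of \cor{KMIPcomp} is a one-sentence remark that also does not address this point explicitly. What actually makes the downstream argument (\thm{compressRE}/\thm{compresscosRE}, \cref{pseu:verifiersequencemipco,pseu:verifiersequencemipstar}) go through is that the completeness clause of weak compressibility is only ever invoked along the chain $\UniformseqREproof(T),\UniformseqREproof(T-1),\dots$, which is anchored at the trivial game $\cG^{\text{accept}}$; that game has an obvious perfect oracularizable strategy, and the completeness clause of \thm{gappedcompression} then \emph{preserves} the perfect-oracularizable property down the chain, so every ``yes'' game encountered carries one. If you want to repair your argument, the honest fix is either to restrict the ``yes'' set of the decision problem to games admitting a perfect oracularizable strategy (and observe that this still suffices for the reduction, since $\cG^{\text{accept}}$ is in it and the property is preserved), or to state explicitly that the completeness clause is being established only for uniform instances whose yes-games carry perfect oracularizable strategies, which is all \thm{compressRE} needs. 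Invoking \sec{Oracularization} does not close the gap.
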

We remark that since the soundness gap in~\Cref{thm:gappedcompression} is controlled by the parallel repetition theorem, the soundness condition can actually be proven for any soundness value $c \in (0,1)$ (instead of $\frac{1}{2}$). Before we continue, we first give the definition for a trivial synchronous accepting/rejecting game for both $\MIP^*$ and $\MIPco$ below.  
\begin{definition}[Accepting/Rejecting game] \label{def:syncrejgame}
	We define the synchronous accepting game to be the game $\text{\gls{acceptgames}} = (\cX, \cA, \mu, D^{\text{accept}})$, where $\cX= \{\star\}$ and $\cA = \{0,1\}^*$, 
	\begin{equation*}
		D^{\text{accept}}(\star,\star, a,b) = \delta_{a,0} \delta_{b,0}
	\end{equation*}
	for all $a,b \in \{0,1\}^*$ (i.e. the prover automatically wins if both provers return 0). We define the synchronous \textit{rejecting} game to be the game $\text{\gls{rejectgames}} = (\cX, \cA, \mu, D^{\text{reject}})$, where $\cX= \{0,1\}$, $\mu(1,0) =\mu(0,1) =  \frac{1}{3}$, $\mu(0,0)  = \mu(1,1) = \frac{1}{6}$, and $\cA = \{0,1\}^*$, 
	\begin{equation*}
		D^{\text{reject}}(x,y, 0,0) = \delta_{x,y},
	\end{equation*}
	and $D^{\text{reject}}(x,y, 0,0) = 0$ for all other $a,b \in \{0,1\}^*$.
\end{definition}

For model $t \in \{*, co\}$, we have $\omega^{t}(D^{\text{accept}}) =1$ and $\omega^{t}(D^{\text{reject}}) = \frac{1}{3}$. Both games are trivially samplable via a CL distribution for any level and always computable in constant time. Intuitively, this is the trivial synchronous game which is in the yes/no case for the $(1,1/2)$ tensor product/commuting operator value problem defined in~\Cref{def:entangledgameproblem} for both $t \in \{*, co\}$. This shows that for $t \in  \{*, co\}$, both $\Language^{k\text{-}\CLMIP^t}_{\text{yes}}$ and $\Language^{k\text{-}\CLMIP^t}_{\text{no}}$ are non-empty. Based on~\Cref{cor:KMIPcomp} we show the main theorem of this paper. 

\subsubsection{$\MIPco=\coRE$} \label{sec:MIPcocoREproof}

In order to show the main theorem of this paper, we first show the following lemma. 
\begin{lemma} \label{lem:MIPcoincoRE}
	$\MIPco \in \coRE$
\end{lemma}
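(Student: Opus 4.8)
The plan is to show $\MIPco \in \coRE$ by exhibiting an algorithm that halts on any game $\cG$ with $\omega^{co}(\cG) \le \tfrac12$ and runs forever when $\omega^{co}(\cG) = 1$; this is exactly the NPA (Navascu\'es--Pironio--Ac\'in) hierarchy \cite{navascuesConvergentHierarchySemidefinite2008b}. First I would recall that for a non-local game $\cG = (\cX^2,\cA^2,\mu,D)$, the commuting operator value $\omega^{co}(\cG)$ can be written as a supremum of a linear functional over the set of quantum commuting correlations $\cC_{qc}(\cX,\cA)$, and that by \Cref{thm:tracialembedding} this equals the supremum over $\cC^{\Tr}_{qc}(\cX,\cA)$ as well. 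The key structural fact is that membership of a correlation in $\cC_{qc}$ (equivalently, the existence of a commuting operator strategy realizing given correlation values) is captured by an infinite hierarchy of semidefinite feasibility conditions: at level $\ell$ one introduces a moment matrix indexed by all products of at most $\ell$ of the measurement operators $\{A^x_a\}, \{B^y_b\}$, imposes positive semidefiniteness, the linear relations coming from the algebra (projectivity/orthogonality of each PVM, completeness $\sum_a A^x_a = \cI$), and crucially the commutation relations $[A^x_a, B^y_b] = 0$. Maximizing the game functional over the level-$\ell$ moment matrix gives an upper bound $\omega_\ell(\cG) \ge \omega^{co}(\cG)$, and these bounds are computable to any desired precision by an SDP solver in finite time (the moment matrix at each level has finite, explicitly bounded dimension).

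Next I would invoke the convergence theorem of \cite{navascuesConvergentHierarchySemidefinite2008b} (together with the fact that the commuting operator value is attained, since $\cC_{qc}$ is closed): $\lim_{\ell \to \infty} \omega_\ell(\cG) = \omega^{co}(\cG)$. Given this, the $\coRE$ algorithm is: for $\ell = 1, 2, 3, \dots$, compute a rational upper bound $\omega_\ell(\cG) + 2^{-\ell}$ on $\omega_\ell(\cG)$ via the SDP; if this bound drops below $\tfrac12$, halt and output ``no''. By monotone convergence, if $\omega^{co}(\cG) \le \tfrac12$ then for some finite $\ell$ we have $\omega_\ell(\cG) + 2^{-\ell} < \tfrac12$ (note that if $\omega^{co}(\cG)$ equals exactly $\tfrac12$ one must instead detect $\omega^{co}(\cG) \le \tfrac12$; this is handled by the fact that in \Cref{def:entangledgameproblem} the ``no'' case is the strict-or-equal set, and the usual fix is to run the hierarchy with a shrinking margin — actually the cleaner route is to note $\MIPco$'s ``no'' instances satisfy $\omega^{co} \le \tfrac12$ and ``yes'' instances satisfy $\omega^{co}=1$, so any threshold strictly between $\tfrac12$ and $1$, e.g. $\tfrac34$, suffices: halt and reject once $\omega_\ell(\cG) < \tfrac34$). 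Conversely, if $\omega^{co}(\cG) = 1$, then every $\omega_\ell(\cG) \ge 1 > \tfrac34$, so the algorithm never halts. Appending an infinite loop on the ``yes'' branch is unnecessary since the algorithm already never halts there. This establishes that the $(1,\tfrac12)$ commuting operator value problem is in $\coRE$, and since $\MIPco$ is complete for this problem, $\MIPco \in \coRE$.

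I expect the main obstacle to be bookkeeping rather than conceptual: carefully stating the NPA hierarchy in the moment-matrix formulation adapted to the game setting (as opposed to the correlation-polytope setting), verifying that each level is a genuine SDP of finite size decidable to arbitrary precision in finite time (feasibility of SDPs with rational data up to error $\eps$ is decidable, e.g. via the ellipsoid method), and correctly citing that the hierarchy's limit is the commuting operator value — this last point relies on a GNS-type argument that a consistent infinite family of moment matrices reconstructs an actual commuting operator strategy, which is precisely \cite[Theorem 11 or similar]{navascuesConvergentHierarchySemidefinite2008b}. A minor subtlety worth a remark is that one does not even need the tracially embeddable refinement here — plain $\cC_{qc}$ suffices — but it costs nothing to note consistency with \Cref{thm:tracialembedding}. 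Since this lemma is the ``easy'' containment and is entirely standard, I would keep the write-up short, stating the hierarchy, citing convergence, and spelling out the halting/non-halting dichotomy in one paragraph.
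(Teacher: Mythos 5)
Your proposal is correct and takes essentially the same approach as the paper: run the NPA hierarchy to generate convergent upper bounds on $\omega^{co}(\cG)$ and halt once a bound drops below a fixed threshold separating the yes and no cases. The paper's $\texttt{Searchfromabove}_{1}$ uses threshold $1$ (which sidesteps the boundary subtlety you flagged entirely, since the NPA values are upper bounds and hence stay $\ge 1$ on yes instances); your choice of $\tfrac34$ works equally well.
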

The above lemma also shows that $\Language^{k\text{-}\CLMIP^{co}}_{\text{yes}} \subseteq \Language^{\MIPco}_{\text{yes}} \in \coRE$. To show this lemma, we recall the well-known NPA Hierarchy algorithm developed in~\cite{navascuesConvergentHierarchySemidefinite2008b}. We summarize the functionality of the NPA Hierarchy below. 
\begin{theorem}[Output for the NPA Hierarchy~\cite{navascuesConvergentHierarchySemidefinite2008b}] \label{thm:NPA}
	For any integer $n \in \bN$ and $\cG = (\cX, \cA, \mu, D)$, there exists a terminating algorithm $\texttt{NPAHierarchy}$ such that $\texttt{NPAHierarchy}(\cG, n) = \eps_n \in \bR$ with 
	\begin{equation*}
		\lim_{n \rightarrow\infty} \eps_n = \omega^{co}(\cG)
	\end{equation*}
\end{theorem}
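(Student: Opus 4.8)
The plan is to exhibit $\texttt{NPAHierarchy}(\cG,\cdot)$ as the sequence of (approximate) optimal values of a hierarchy of finite semidefinite programs — the Navascués--Pironio--Acín hierarchy — and to verify three things: each level is a finite SDP whose optimum can be computed to arbitrary precision in finitely many steps (so the algorithm terminates); the values are monotone non-increasing and bounded below by $\omega^{co}(\cG)$; and the limit of the values is exactly $\omega^{co}(\cG)$.

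First I would set up the level-$n$ relaxation. Fix $\cX,\cA$ and consider the free $*$-algebra $\mathcal{A}$ generated by symbols $\{A^x_a, B^y_b\}$ modulo the relations encoding a projective commuting operator strategy: $(A^x_a)^*=A^x_a$, $(A^x_a)^2=A^x_a$ (and likewise for $B^y_b$), $\sum_a A^x_a=\sum_b B^y_b=\cI$ for every $x,y$, and $[A^x_a,B^y_b]=0$ for all $x,y,a,b$; restricting to projective measurements costs nothing for the commuting operator value. Let $\mathcal{W}_n$ be the finite set of reduced words of length at most $n$ in these generators. A level-$n$ moment matrix is a Hermitian positive semidefinite matrix $M$ indexed by $\mathcal{W}_n\times\mathcal{W}_n$ with $M_{\mathbf 1,\mathbf 1}=1$ (empty word) and the linear identifications forced by the relations whenever $w^*w'$ reduces in $\mathcal{A}$ (so $M_{w,w'}=M_{u,u'}$ when $w^*w'=u^*u'$ as reduced words, and $M_{w,w'}=0$ when the reduction is $0$). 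For $n\ge 2$ the objective $\sum_{x,y}\mu(x,y)\sum_{a,b}D(x,y,a,b)\,M_{\mathbf 1,\,A^x_a B^y_b}$ is a well-defined linear functional of $M$, and I define $\omega_n:=\max M$-feasible of this objective. This is a finite SDP (variables and constraints polynomial in $|\mathcal{W}_n|$) whose feasible set is compact, so $\omega_n$ is attained, and by the ellipsoid method one can compute a rational within $2^{-n}$ of $\omega_n$ in finite time; that rational is the output $\eps_n$ of $\texttt{NPAHierarchy}(\cG,n)$, and termination is immediate.

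Next come the two easy directions. Given any projective commuting operator strategy $\strategy=(\cH,\ket\psi,\{A^x_a\},\{B^y_b\})$, the matrix $M^{\strategy}_{w,w'}:=\braket{\psi|w(\strategy)^*w'(\strategy)|\psi}$ is feasible at every level and has objective value exactly $\omega(\cG,\strategy)$; taking the supremum over $\strategy$ gives $\omega_n\ge\omega^{co}(\cG)$ for all $n$. Monotonicity $\omega_{n+1}\le\omega_n$ holds because a feasible level-$(n+1)$ moment matrix restricts to a feasible level-$n$ one with the same objective. Hence $\{\omega_n\}$ is non-increasing and bounded below by $\omega^{co}(\cG)$, so it converges to some $\omega_\infty\ge\omega^{co}(\cG)$, and since $|\eps_n-\omega_n|\le 2^{-n}$ the outputs $\eps_n$ converge to the same $\omega_\infty$.

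The main obstacle is the reverse inequality $\omega_\infty\le\omega^{co}(\cG)$, the GNS round-trip. For each $n$ pick a feasible $M^{(n)}$ with objective at least $\omega_n-2^{-n}$. All entries of every $M^{(n)}$ lie in the closed unit disc of $\bC$ (each generator is a projection, so $\|w(\strategy)\|\le 1$; equivalently this follows from positivity of the $2\times2$ principal minors of $M^{(n)}$), so by a diagonal/compactness argument there is a subsequence along which every entry converges; since every fixed reduced word eventually indexes $M^{(n)}$, the limits assemble into a single infinite moment matrix $M^{(\infty)}$ that is positive semidefinite on each finite principal submatrix and still satisfies all the algebra relations. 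Define $\phi:\mathcal{A}\to\bC$ by $\phi(w^*w'):=M^{(\infty)}_{w,w'}$ extended linearly; positivity of the finite moment matrices makes $\phi$ a state on $\mathcal{A}$. Applying the GNS construction (\cite{kadisonFundamentalsTheoryOperator1997}, as recalled in \Cref{sec:introVNA}) produces a Hilbert space $\cH_\phi$, a unit vector $\ket\phi$, and a $*$-representation $\pi$ of $\mathcal{A}$ on $\cH_\phi$; since $\pi$ is a $*$-homomorphism it carries each defining relation over, so $\{\pi(A^x_a)\},\{\pi(B^y_b)\}$ are projective measurements with $[\pi(A^x_a),\pi(B^y_b)]=0$ — a projective commuting operator strategy. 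Its value equals $\lim_n(\text{objective of }M^{(n)})=\omega_\infty$ along the chosen subsequence, hence $\omega^{co}(\cG)\ge\omega_\infty$. Combining with the earlier inequality gives $\omega_\infty=\omega^{co}(\cG)$, i.e. $\lim_{n\to\infty}\eps_n=\omega^{co}(\cG)$, as required. The delicate points to be careful about in the full write-up are the boundedness of the representation on the relevant words, the well-definedness of $\phi$ on the quotient $\mathcal{A}$ (all relations must pass to the limit, including the completeness relations), and the justification that the objective of $M^{(n)}$ converges along the subsequence — but the last is immediate once the finitely many entries $M^{(n)}_{\mathbf 1,A^x_aB^y_b}$ converge.
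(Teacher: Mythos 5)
The paper does not prove this statement at all: \Cref{thm:NPA} is stated purely as a citation to the NPA paper, and the surrounding text only uses it as a black box to build $\texttt{Searchfromabove}_{\delta}$. So there is nothing in the paper to compare against; what you have written is, in substance, the standard convergence proof from the cited reference (finite SDP relaxations via moment matrices, weak optimization to precision $2^{-n}$ for termination, monotonicity plus feasibility of genuine strategies for $\omega_n\ge\omega^{co}(\cG)$, and a diagonal-limit-plus-GNS round trip for the converse), and the approach is correct. The one place where your sketch is slightly off as written is the boundedness claim: positivity of the $2\times2$ principal minors only gives $|M_{w,w'}|^2\le M_{w,w}M_{w',w'}$, and the bound $M_{w,w}\le 1$ is not an explicit constraint of the SDP. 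It has to be derived from the relations, e.g.\ for $w=A^x_a u$ one writes $M_{u,u}-M_{w,w}=\sum_{a'\ne a}M_{A^x_{a'}u,\,A^x_{a'}u}\ge 0$ using idempotence and completeness, and then inducts down to $M_{\mathbf 1,\mathbf 1}=1$; the same inequality is what makes the GNS representation act by bounded operators (your first ``delicate point''), since the free $*$-algebra modulo these relations is not a priori a $C^*$-algebra and the Kadison--Ringrose GNS theorem you invoke is stated for $C^*$-algebras. With that repair the argument goes through.
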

Based on~\Cref{thm:NPA}, for $\delta \in [0,1]$ we define the $\texttt{Searchfromabove}_{\delta}$ algorithm below. 

\vspace{10pt}
\IncMargin{1em}
\begin{algorithm}[H]
	\DontPrintSemicolon
	
	\textbf{Input}: $\cG = (\cX, \cA, \mu, D)$
	
	If the input $\cG$ is not a valid description of the game, terminate and \textbf{return} ``Error"; 
	
	Set $n = 1$
	
	\While{True}{
		
		Compute $\eps_n = \texttt{NPAHierarchy}(\cG, n) $

		\uIf{$\eps_n < \delta$}{
			\textbf{Return } False ; 
		}

		n = n+1 ;
		
	}	
	\caption{The description for \gls{Searchfromabove}.}
	\label{pseu:searchfromabove}
\end{algorithm}\DecMargin{1em}
\vspace{10pt} 
$\texttt{Searchfromabove}_{\delta}$ gives a systematic way to generate a sequence of upper bounds for $\omega^{co}(\cG)$ to check whether $\omega^{co}(\cG) \leq \delta$. By definition, the algorithm runs forever if $\omega^{co}(\cG) \geq \delta$. Hence the algorithm $\texttt{Searchfromabove}_{1}$ directly implies~\Cref{lem:MIPcoincoRE}.

By combining~\Cref{cor:KMIPcomp}, ~\Cref{thm:compresscosRE} and the fact that $\Language^{k\text{-}\CLMIP^{co}}_{\text{yes}} \in \coRE$  this shows that $k\text{-}\CLMIP^{co} = \coRE$ and hence $\coRE \in \MIPco$. This, along with~\Cref{lem:MIPcoincoRE}, shows the main theorem of this paper. 
\begin{corollary} \label{cor:MIPcocoRE}
	$\MIPco=\coRE$
\end{corollary}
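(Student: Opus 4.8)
\textbf{Proof proposal for \Cref{cor:MIPcocoRE}.}

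The plan is to assemble \Cref{cor:MIPcocoRE} from three ingredients already made available in the excerpt: (i) the upper bound $\MIPco \in \coRE$ from \Cref{lem:MIPcoincoRE}, which follows from running $\texttt{Searchfromabove}_{1}$ (\Cref{pseu:searchfromabove}) on top of the NPA hierarchy (\Cref{thm:NPA}); (ii) the weak compressibility of $k\text{-}\CLMIP^{co}$ for $k \geq 7$, which is \Cref{cor:KMIPcomp}, itself an immediate corollary of the gap compression theorem \Cref{thm:gappedcompression}; and (iii) the abstract equivalence \Cref{thm:compresscosRE}, which says that for any decision problem $\Decisionproblem$ with $\Language_{\text{yes}}^{\Decisionproblem} \in \coRE$, being weakly compressible is equivalent to being $\coRE$-complete. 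First I would fix $k = 7$ and take $\Decisionproblem = k\text{-}\CLMIP^{co}$, the decision problem whose yes/no sets are $\Language^{k\text{-}\CLMIP^{co}}_{\text{yes}}$ and $\Language^{k\text{-}\CLMIP^{co}}_{\text{no}}$. I would note that both sets are non-empty, using the synchronous accepting/rejecting games of \Cref{def:syncrejgame} (with $\omega^{co}(D^{\text{accept}}) = 1$ and $\omega^{co}(D^{\text{reject}}) = \tfrac13 \leq \tfrac12$, and both being CL samplable at every level and constant-time computable), so that $\Decisionproblem$ is a genuine decision problem in the sense required by \Cref{thm:compresscosRE}.

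Next I would verify the hypothesis $\Language_{\text{yes}}^{\Decisionproblem} \in \coRE$. This holds because $\Language^{k\text{-}\CLMIP^{co}}_{\text{yes}} \subseteq \Language^{\MIPco}_{\text{yes}}$ and $\MIPco \in \coRE$ by \Cref{lem:MIPcoincoRE}; more precisely, on input a (description of a) synchronous $7$-th level CL samplable game $\cG$, the algorithm $\texttt{Searchfromabove}_{1}$ halts iff $\omega^{co}(\cG) < 1$, i.e.\ iff $\cG \in \Language^{k\text{-}\CLMIP^{co}}_{\text{no}}$, which witnesses $\Language_{\text{yes}}^{\Decisionproblem} \in \coRE$. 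Then I would invoke \Cref{cor:KMIPcomp} to conclude that $\Decisionproblem = 7\text{-}\CLMIP^{co}$ is weakly compressible, and apply \Cref{thm:compresscosRE} to deduce that $\Decisionproblem$ is $\coRE$-complete. In particular $\coRE \leq_p 7\text{-}\CLMIP^{co} \leq_p \MIPco$, so $\coRE \subseteq \MIPco$. Combining this with the reverse inclusion $\MIPco \subseteq \coRE$ of \Cref{lem:MIPcoincoRE} gives $\MIPco = \coRE$, which is \Cref{cor:MIPcocoRE}.

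I do not expect any genuine obstacle in this final assembly step itself — it is a short chain of citations — since all the hard work is done in \Cref{thm:gappedcompression} (the gap compression theorem, whose proof occupies later sections) and in \Cref{thm:compresscosRE} (proved in the excerpt by reduction from the halting problem via \Cref{pseu:REcompressibleproof}). The one point that deserves a sentence of care is making sure the reduction chain is polynomial-time and that the classes match up correctly: \Cref{thm:compresscosRE} yields a polynomial-time reduction from the non-halting problem to $\Decisionproblem$, and \Cref{def:CLverifier} plus the remark after \Cref{thm:gappedcompression} guarantee the output of $\texttt{Gapcompress}_{\alpha,7}$ is always a synchronous $7$-th level CL verifier, so the instance produced genuinely lies in $k\text{-}\CLMIP^{co}$; composing with the trivial inclusion $k\text{-}\CLMIP^{co} \leq_p \MIPco$ (a CL samplable synchronous game is in particular a non-local game) closes the argument. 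If anything is subtle it is purely bookkeeping: confirming that the ``weakly compressible'' hypothesis of \Cref{cor:KMIPcomp} is stated for the same decision problem (restricted to synchronous $7$-th level CL samplable games) whose yes-set we showed lies in $\coRE$, so that \Cref{thm:compresscosRE} applies verbatim.
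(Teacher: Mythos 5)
Your proposal is correct and follows exactly the same route as the paper: upper bound $\MIPco \subseteq \coRE$ via the NPA hierarchy (\Cref{lem:MIPcoincoRE}), weak compressibility of $k\text{-}\CLMIP^{co}$ from the gap compression theorem (\Cref{cor:KMIPcomp}), and the abstract equivalence \Cref{thm:compresscosRE} to conclude $\coRE$-completeness of $7\text{-}\CLMIP^{co}$, hence $\coRE \subseteq \MIPco$. Your additional bookkeeping (non-emptiness of the yes/no sets via $\cG^{\text{accept}}/\cG^{\text{reject}}$, and tracing the polynomial-time reduction chain) is a reasonable expansion of the paper's more terse assembly; the one slightly loose phrase is ``halts iff $\omega^{co}(\cG) < 1$, i.e.\ iff $\cG \in \Language^{k\text{-}\CLMIP^{co}}_{\text{no}}$'' — halting on $\omega^{co}(\cG) < 1$ is strictly more than halting on the no-set $\omega^{co}(\cG) \leq \tfrac12$, but since $\Language_{\text{no}} \subseteq \{0,1\}^* \setminus \Language_{\text{yes}}$ this is harmless for establishing $\Language_{\text{yes}} \in \coRE$.
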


We remark that if we specifically tailored~\Cref{pseu:REcompressibleproof} specifically for $ k\text{-}\CLMIP^{co}$, we have the following pseudocode, where in the pseudocode below, $\cG_{C_0}$ is represented as~\Cref{pseu:verifiersequencemipco} with $C_0$ being hard coded in. \Cref{pseu:verifiersequencemipco} is also a more refined version of~\cite[Pseudocode 4]{mousaviNonlocalGamesCompression2022}. 

\begin{figure}[!t]
	\vspace{10pt}
	\IncMargin{1em}
	\begin{algorithm}[H]
		\DontPrintSemicolon
		
		\textbf{Input}: Integer $n$. 
		
		Run \Turingmachinehalt for $n$ steps. If \Turingmachinehalt halts in the given steps, return $\cG^{\text{reject}}$.
		
		Compute the description of $\verifiersequence$.
		
		Compute the description of $\cG_{C_0} = \verifiersequence(C_0)$, the $C_0$th game of the CL verifier $\verifiersequence$.
		
		Simulate $\texttt{Searchfromabove}_1$ specified in~\Cref{pseu:searchfromabove} with $\cG_{C_0}$  as the input for $\max\{0, n - C_0\}$ steps. If $\texttt{Searchfromabove}_1$ halts in line 6 (of~\Cref{pseu:searchfromabove}) in the given steps, return $\cG^{\text{accept}}$. 
		
		Apply $\texttt{Gapcompress}_{\alpha, 7}$ on the verifier $\verifiersequence$ to obtain $\verifiersequence^{\text{comp}}$.
		
		Compute $\cG^{\text{comp}}_{n+1} = \verifiersequence^{\text{comp}}(n+1)$ and execute the game $\cG^{\text{comp}}_{n+1}$ with the two provers.
		
		\caption{The description for $\verifiersequence$ which can be used to show that $\coRE \subseteq k\text{-}\CLMIP^{co}$. \Turingmachinehalt is the instance of the halting problem for the reduction.}
		\label{pseu:verifiersequencemipco}
	\end{algorithm}\DecMargin{1em}
	\vspace{10pt} 
\end{figure}

\subsubsection{$\MIP^*=\RE$} \label{sec:MIPREproof}

Similarly to the previous section, we need to first show the following lemma. 
\begin{lemma} \label{lem:MIPstartbound}
	$\MIP^* \in \RE$
\end{lemma}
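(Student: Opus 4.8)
The plan is to establish $\MIP^* \in \RE$ by exhibiting a semi-decision procedure for the $(1,\tfrac12)$ tensor product value problem that halts precisely on the yes-instances. The standard tool here is the \texttt{Searchfrombelow} algorithm, which is the tensor-product analogue of the \texttt{Searchfromabove} procedure given in~\Cref{pseu:searchfromabove}: rather than generating a decreasing sequence of upper bounds via the NPA hierarchy, it generates an \emph{increasing} sequence of lower bounds on $\omega^*(\cG)$ by enumerating finite-dimensional strategies of growing dimension (say, iterating over a dense countable net of strategies in $C_q^d$ for $d = 1, 2, 3, \dots$) and recording the best value achieved so far. Since $C_q = \bigcup_{d \in \bN^+} C_q^d$, the supremum of these lower bounds converges to $\omega^*(\cG)$.

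First I would recall (or state as a pseudocode box, mirroring \texttt{Searchfromabove}) the procedure $\texttt{Searchfrombelow}_\delta$: on input $\cG = (\cX, \cA, \mu, D)$, check that $\cG$ is a valid game description (return ``Error'' otherwise), then for $d = 1, 2, \dots$ enumerate a suitably fine finite mesh of $d$-dimensional tensor product strategies, compute the value $\omega(\cG, \strategy)$ of each (a finite computation, since the strategy, the state, and the game data are all finite and we evaluate~\eqref{eq:quantumvaluecorrelation}), and halt and return \textsc{True} as soon as some strategy achieves value $> 1 - \delta$ (or $> \delta$, depending on the exact threshold wanted). The key correctness observations are: (i) if $\omega^*(\cG) = 1$ then for every $\eps > 0$ there is a finite-dimensional strategy of value $> 1 - \eps$, so for the choice $\delta$ corresponding to the gap the search eventually succeeds and halts; (ii) if $\omega^*(\cG) \le \tfrac12$ then no strategy — finite-dimensional or otherwise — ever achieves value exceeding $\tfrac12$, so the search runs forever. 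Instantiating with the threshold that separates $\Language^{\MIP^*}_{\text{yes}}$ from $\Language^{\MIP^*}_{\text{no}}$ gives an algorithm that halts exactly on yes-instances, which is precisely the statement $\MIP^* \in \RE$ (equivalently, $\Language^{\MIP^*}_{\text{yes}} \in \RE$ and $\Language^{\MIP^*}_{\text{no}} \in \coRE$).

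The only mildly delicate point — and the main obstacle, such as it is — is making the enumeration of finite-dimensional strategies genuinely effective: one must fix a computable enumeration of rational (or algebraic) approximations to POVM elements and state vectors, argue that evaluating the game value on such approximate strategies up to additive error $\eps$ is a finite computation, and argue that a strategy of true value $> 1 - \tfrac{\eps}{2}$ is witnessed, up to error $\eps$, by some strategy in the mesh. This is entirely routine — it is the same compactness/discretization argument underlying the original lower-bound half of $\MIP^* = \RE$ in~\cite{jiMIPRE2022a} — so I would state it briefly and cite the relevant prior work rather than reprove it. I would close by noting that, exactly as for \texttt{Searchfromabove}, the runtime of $\texttt{Searchfrombelow}$ is astronomically inefficient and is used here purely as a semi-decision procedure; it is this algorithm, combined with~\Cref{cor:KMIPcomp} and~\Cref{thm:compressRE}, that will yield $k\text{-}\CLMIP^* = \RE$ and hence the alternative proof of $\MIP^* = \RE$ in the next subsection.
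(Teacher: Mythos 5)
Your proposal is correct and follows essentially the same route as the paper: both define a $\texttt{Searchfrombelow}$ procedure that enumerates increasingly fine, increasingly high-dimensional finite meshes of tensor-product correlations, computes the game value against each, and halts once some mesh point exceeds the threshold $\tfrac12$; the yes-case correctness follows from $C_q = \bigcup_d C_q^d$ plus a discretization/compactness step (which the paper isolates as a cited fact, \Cref{fact:disCorrelations}, exactly as you propose to do), and the no-case non-termination is immediate since no strategy can ever exceed $\omega^*(\cG)$.
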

By a similar reason as above, the above lemma shows that $\Language^{k\text{-}\CLMIP^*}_{\text{no}} \in \coRE$. We show the above lemma by recalling a well-known fact about the quantum tensor correlations in the literature.
\begin{fact}[Discretization of quantum tensor correlations] \label{fact:disCorrelations}
	For any integer $n \in \bN$ and $\eps > 0$, there exists a terminating algorithm to search over $C_q^n$ to generate a finite subset $C_{\eps}^n \subseteq C_q^n$ such that for all correlations $C \in C_q^n$, there exists a correlation $C' \in C_{\eps}^n$ such that 
	\begin{equation*}
		\sum_{x,y,a,b} |C_{x,y,a,b} - C'_{x,y,a,b} | \leq \eps.
	\end{equation*}
\end{fact}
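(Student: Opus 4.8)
\textbf{Proof plan for \factlink{disCorrelations}.}

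The plan is to fix $n$ and $\eps>0$ and produce, via brute-force rational search, a finite $\eps$-net of $C_q^n$ in the $\ell_1$ norm on $[0,1]^{|\cX|^2\cdot|\cA|^2}$. The starting point is the observation that an element of $C_q^n$ is determined by finitely many real parameters: a unit vector $\ket{\psi}\in\bC^n\otimes\bC^n$ and, for each $x\in\cX$, a POVM $\{A^x_a\}_{a\in\cA}\subseteq\cM_n(\bC)$ (and likewise $\{B^y_b\}\subseteq\cM_n(\bC)$), with $C_{x,y,a,b}=\braket{\psi|A^x_a\otimes B^y_b|\psi}$. First I would bound the modulus of continuity of the map
\[
\Phi:(\ket{\psi},\{A^x_a\},\{B^y_b\})\longmapsto \{\braket{\psi|A^x_a\otimes B^y_b|\psi}\}_{x,y,a,b}
\]
when the domain is restricted to its natural compact set (unit vectors, and POVM elements of operator norm at most $1$): a routine triangle-inequality/Cauchy-Schwarz estimate shows that a perturbation of each parameter by $\eta$ in, say, Frobenius norm changes each output coordinate by $O(n\cdot\eta)$, hence changes the $\ell_1$ distance of the output by $O(|\cX|^2|\cA|^2\cdot n\cdot\eta)$. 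So it suffices to take $\eta = c\eps/(|\cX|^2|\cA|^2 n)$ for a suitable constant $c$.

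Next I would replace the continuum of parameters by a rational grid of spacing comparable to $\eta$. Concretely, enumerate all tuples whose entries are Gaussian rationals $p/q+ i\,p'/q$ with $|p|,|p'|\le q$ and $q=\lceil 1/\eta\rceil$; this is a finite, explicitly enumerable set. For each such tuple, check (this is a decidable arithmetic condition over the rationals) whether the candidate $A^x_a,B^y_b$ are positive semidefinite and sum over $a$ (resp.\ $b$) to $\cI_n$, and whether $\ket{\psi}$ has norm $1$ — or, to avoid the measure-zero issue that exact rational points may miss the constraint set, check instead whether the tuple is within $\eta$ of a genuine POVM tuple (again a decidable condition, e.g.\ via rational arithmetic on the eigenvalue characterization, or by a small semidefinite feasibility test with rational data). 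For every tuple that passes, compute $\Phi$ of it exactly in rational arithmetic and collect the resulting correlation (rounding its coordinates into $[0,1]$ if the slack pushed them slightly outside). Call the finite collection so obtained $C^n_\eps$.

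Finally I would verify the net property. Given $C\in C_q^n$, realized by $(\ket{\psi},\{A^x_a\},\{B^y_b\})$, round each parameter to the nearest grid point; the rounded tuple lies within $\eta$ of a genuine POVM tuple, hence (by the construction of $C^n_\eps$) contributes some $C'\in C^n_\eps$ with $\|\Phi(\text{rounded})-C'\|_1$ small, and by the continuity bound $\|C-\Phi(\text{rounded})\|_1\le\eps/2$, say, so $\|C-C'\|_1\le\eps$ after choosing the constants appropriately. The algorithm terminates because the grid is finite and every per-tuple test (PSD-ness, normalization, evaluation of $\Phi$) is a terminating rational computation. I do not expect any real obstacle here — the only mild subtlety, which the plan addresses, is that one must search an $\eta$-neighborhood of the constraint manifold rather than the manifold itself, since a finite rational grid need not contain exact POVMs; but an $\eta$-fattening costs only another $O(n\eta)$ in the final $\ell_1$ estimate and is absorbed into the choice of $\eta$. (One could alternatively invoke quantifier elimination over the reals to make the whole search uniformly decidable, but the explicit grid argument is more elementary and self-contained.)
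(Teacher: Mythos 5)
Your proof sketch follows the standard brute-force discretization strategy, and the paper gives no proof of this Fact to compare against (it is asserted as folklore), so the comparison question is moot; the continuity estimate and rational grid search are the right ingredients. However there is a genuine gap precisely at the step you flag as the ``only mild subtlety.'' The Fact asserts $C_\eps^n \subseteq C_q^n$, not merely that $C_\eps^n$ is $\eps$-close to $C_q^n$ in Hausdorff distance. Your construction applies $\Phi$ to rational tuples that are only within $\eta$ of genuine POVMs and to vectors only approximately of unit norm, and the resulting rational vectors generally fail to lie in $C_q^n$ at all --- they need not even be conditional probability distributions, since $\sum_{a,b}\braket{v|A^x_a\otimes B^y_b|v}$ differs from $1$ unless $\sum_a A^x_a=\cI$ and $\|v\|=1$ exactly, and rounding coordinates into $[0,1]$ does not restore the affine constraints. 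The $O(n\eta)$ fattening bound you invoke controls $\ell_1$ drift but does not restore membership. And the membership matters: the paper uses this Fact to conclude that $\texttt{Searchfrombelow}_{\delta}$ never halts when $\omega^*(\cG)\le\delta$, which is exactly what makes $\Language_{\text{no}}^{\MIP^*}\in\coRE$, a hypothesis of~\Cref{thm:compressRE}. If some $C'\in C_\eps^n$ lay strictly outside $C_q^n$, its value against the game decider could exceed $\omega^*(\cG)$ and the algorithm would falsely accept a no-instance.

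The fix is small and preserves your elementary flavor. Restrict the grid to \emph{exact} rational POVMs: fix one outcome $a_0$, grid $A^x_a$ for $a\ne a_0$, set $A^x_{a_0}=\cI-\sum_{a\ne a_0}A^x_a$, and keep only grids where every $A^x_a\succeq 0$, a decidable condition over the Gaussian rationals (e.g.\ via the characteristic polynomial and Sturm sequences, or by checking that all principal minors are nonnegative). For the state, grid $v\in(\bQ+i\bQ)^{n^2}$ with $\|v\|^2\in[1/2,2]$ (a rational test) and define $C'_{x,y,a,b}=\braket{v|A^x_a\otimes B^y_b|v}/\|v\|^2$, which is a rational number realized by the genuine unit vector $v/\|v\|$ together with the exact POVMs, hence lies in $C_q^n$. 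Density is standard: every POVM is a limit of interior POVMs in which all $A_a\succ 0$, interior-ness is an open condition so rational grid points exist nearby, and your continuity bound then closes the argument.
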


Based on~\Cref{fact:disCorrelations}, for $\delta \in [0,1]$, we define the algorithm $\texttt{searchfrombelow}_{\delta}$ below. 

\vspace{10pt}
\IncMargin{1em}
\begin{algorithm}[H]
	\DontPrintSemicolon
	
	\textbf{Input}: $\cG = (\cX, \cA, \mu, D)$
	
	If the input $\cG$ is not a valid description of the game, terminate and \textbf{return} ``Error" ; 
	
	Set $n = 1$
	
	\While{True}{
		
		Compute the finite subset $C_{1/n}^n$ defined by~\Cref{fact:disCorrelations}. 
		
		\For{$C' \in C_{\frac{1}{n}}^n$}{
			
			Compute $\eps_n = \Ex_{(x,y) \sim \mu} \sum_{(a,b) \in \cA^2} C'_{x,y,a,b} D(x,y,a,b)$
			
			\uIf{$\eps_n > \delta$}{
				\textbf{Return} True ; 
			}
			
		}
		
		n = n+1 ;
		
	}	
	\caption{The description for \gls{Searchfrombelow} for $\eps \in [0,1)$.}
	\label{pseu:searchfrombelow}
\end{algorithm}\DecMargin{1em}
\vspace{10pt} 

Intuitively, $\texttt{searchfrombelow}_{\delta}$ gives a systematic way to search through the correlation set $C_q$ to find a strategy which validates that $\omega^*(\cG) > \delta$ (which might be impossible depending on $\cG$).

We are now ready to show~\Cref{lem:MIPstartbound}.
\begin{proof}
	To show that $\MIP^*\subseteq \RE$, we need to show that the algorithm above halts in the Yes case for the non-local game value problem. Hence, let $\cG$ be a game such that $\omega^*(\cG) = 1$ and consider the algorithm $\texttt{searchfrombelow}_{0.5}$ running on $\cG$. Combining the definition $\omega^*$ and $C_q = \bigcup_{n \in \bN^{+}} C_q^n$, we see that there must exist some $n \in \bN$ and $C \in C_q^n$ such that 
	\begin{equation*}
		\Ex_{(x,y) \sim \mu} \sum_{(a,b) \in \cA^2} C_{x,y,a,b} D(x,y,a,b) \geq 0.75. 
	\end{equation*}
	Hence by the definition, there exists a constant $C' \in C_{\frac{1}{n}}^n$ such that 
	\begin{equation*}
		\Ex_{(x,y) \sim \mu} \sum_{(a,b) \in \cA^2} (C_{x,y,a,b} - C'_{x,y,a,b})  D(x,y,a,b) \leq \sum_{x,y,a,b} |C_{x,y,a,b} - C'_{x,y,a,b} | \leq \frac{1}{n},
	\end{equation*}
	where the inequality follows since $\mu(x,y),D(x,y,a,b) \leq 1$. By combining the two inequalities
	\begin{equation*}
		\Ex_{(x,y) \sim \mu} \sum_{(a,b) \in \cA^2} C'_{x,y,a,b} D(x,y,a,b) > 0.5.
	\end{equation*}
	This completes the proof of the lemma. 
\end{proof}
By a similar proof as the above lemma, we see that $\texttt{searchfrombelow}_{\eps}(\cG)$ terminates iff $\omega^*(\cG) > \eps$. We remark that $\texttt{searchfrombelow}_{\eps}$ does not work for the set of commuting operator correlations, as there is no way to discretize the set of commuting operator strategies based on the dimension of the Hilbert space. 

The algorithm $\texttt{searchfrombelow}_{0.5}$ is precisely the algorithm needed to show~\Cref{lem:MIPcoincoRE}. By combining~\Cref{cor:KMIPcomp}, ~\Cref{thm:compressRE} and the fact that $\Language^{k\text{-}\CLMIP^*}_{\text{no}} \in \coRE$, this shows that $k\text{-}\CLMIP^* = \RE$ and hence $\RE \in \MIP^*$. This, along with~\Cref{lem:MIPstartbound}, shows the main theorem of this paper. 
\begin{corollary} \label{cor:MIPstarRE}
	$\MIP^*=\RE$
\end{corollary}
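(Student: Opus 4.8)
The plan is to prove Corollary~\ref{cor:MIPstarRE}, i.e.\ $\MIP^*=\RE$, by the same two-sided strategy already used for $\MIPco=\coRE$ in Section~\ref{sec:MIPcocoREproof}, but now applying the $\RE$-flavoured compression criterion (\Cref{thm:compressRE}) in place of the $\coRE$-flavoured one. The two ingredients are: (i) $\MIP^*\subseteq\RE$, which is \Cref{lem:MIPstartbound} and has already been proved above via the $\texttt{searchfrombelow}_{0.5}$ algorithm and \Cref{fact:disCorrelations}; and (ii) $\RE\subseteq\MIP^*$, which follows by showing that the decision problem $k\text{-}\CLMIP^*$ is $\RE$-complete for some fixed $k\geq 7$.

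For the inclusion $\RE\subseteq\MIP^*$ I would argue as follows. Fix $k=7$. By \Cref{cor:KMIPcomp}, the complexity class $k\text{-}\CLMIP^*$ is weakly compressible, i.e.\ the decision problem $\Decisionproblem=(\Language^{k\text{-}\CLMIP^*}_{\text{yes}},\Language^{k\text{-}\CLMIP^*}_{\text{no}})$ admits the family of algorithms $\texttt{Gapcompress}_{\alpha,7}$ guaranteed by \Cref{thm:gappedcompression}, which by clause~2 always output a synchronous $7$-th level CL verifier (hence a valid uniform instance of $\Decisionproblem$), and by clauses~1 and~3 satisfy the runtime and complexity bounds in \Cref{def:weaklycompressibleproblem}, with the completeness/soundness clauses following from the completeness/soundness of \Cref{thm:gappedcompression} (the completeness clause of \Cref{thm:gappedcompression} is stated in terms of preservation of a perfect \emph{oracularizable} strategy, but since the output $\cG_n^{\Compressgame}$ is again a CL verifier of bounded level, one can iterate, and the existence of a perfect oracularizable strategy is implied by $\omega^t(\cG_n)=1$ together with the structure of the constructions; more simply, \Cref{thm:gappedcompression} already guarantees that if $\omega^t(\cG_n)=1$ then $\cG_n^{\Compressgame}$ has a perfect oracularizable strategy, so in particular $\omega^t(\cG_n^{\Compressgame})=1$, which is exactly the ``completeness'' half of \Cref{def:weaklycompressibleproblem} for $\Decisionproblem$). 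Next, \Cref{lem:MIPstartbound} shows $\MIP^*\in\RE$, and since $\Language^{k\text{-}\CLMIP^*}_{\text{no}}\subseteq\Language^{\MIP^*}_{\text{no}}$ and the rejecting/accepting games of \Cref{def:syncrejgame} witness non-triviality, we get $\Language^{k\text{-}\CLMIP^*}_{\text{no}}\in\coRE$ (by running $\texttt{searchfrombelow}_{0.5}$ and flipping accept/reject as in the proof of \Cref{thm:compressRE}, or directly: $\Language^{k\text{-}\CLMIP^*}_{\text{yes}}\in\RE$ hence $\Language^{k\text{-}\CLMIP^*}_{\text{no}}$, its complement within the promise, lies in $\coRE$). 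Now \Cref{thm:compressRE} applies with this $\Decisionproblem$: since $\Language_{\text{no}}^{\Decisionproblem}\in\coRE$ and $\Decisionproblem$ is weakly compressible, $\Decisionproblem$ is $\RE$-complete. Concretely, the reduction is the one built in \Cref{pseu:REcompressibleproof} specialised to $k\text{-}\CLMIP^*$: given a halting instance \Turingmachinehalt, one constructs the uniform CL verifier $\verifiersequence$ that returns $\cG^{\text{accept}}$ if \Turingmachinehalt halts within $n$ steps, runs $\texttt{searchfrombelow}_{0.5}$ on $\cG_{C_0}$ for $\max\{0,n-C_0\}$ steps and returns $\cG^{\text{reject}}$ if it halts, and otherwise applies $\texttt{Gapcompress}_{\alpha,7}$ to $\verifiersequence$ and plays $\verifiersequence^{\text{comp}}(n+1)$; this is the mirror image of \Cref{pseu:verifiersequencemipco} with the roles of halting/non-halting and accept/reject swapped. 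The self-reference is handled by Kleene's recursion theorem exactly as in the proof of \Cref{thm:compressRE}, and the same runtime bookkeeping shows there is a choice of $\alpha,C_0,n_0$ making $\TIME_{\verifiersequence}\leq n^\alpha$, so the completeness/soundness clauses of weak compressibility kick in. The resulting instance is $x_{\text{\Turingmachinehalt}}=\verifiersequence(C_0)$, computable in $\poly(|\text{\Turingmachinehalt}|)$ time, and satisfies $\omega^*(\cG_{x_{\text{\Turingmachinehalt}}})=1$ if \Turingmachinehalt halts and $\omega^*(\cG_{x_{\text{\Turingmachinehalt}}})\leq\frac12$ otherwise, so $x_{\text{\Turingmachinehalt}}\in\Language^{k\text{-}\CLMIP^*}_{\text{yes}}$ iff \Turingmachinehalt halts. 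Hence $\RE\leq_p k\text{-}\CLMIP^*\leq_p\MIP^*$.

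Combining the two inclusions gives $\MIP^*=\RE$. The only subtlety to check carefully — and the step I expect to be the main obstacle — is that all the hard analytic work (question reduction via the simplified Pauli basis test, answer reduction via the PCP and quantum low-individual-degree test, anchored parallel repetition, and crucially the preservation of \emph{both} $\omega^*$ and $\omega^{co}$ and of oracularizability) has been pushed into \Cref{thm:gappedcompression}; here we only need its $t=*$ instance, which is \emph{weaker} than the full statement, so for $\MIP^*=\RE$ specifically nothing in the compression proof beyond the classical/finite-dimensional tracial analysis is needed. Thus, modulo \Cref{thm:gappedcompression}, the present corollary is a purely formal consequence of \Cref{cor:KMIPcomp}, \Cref{thm:compressRE}, and \Cref{lem:MIPstartbound}, and the proof is a short wrapper around \Cref{pseu:REcompressibleproof}. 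I note that, exactly as emphasised in the introduction, this argument makes no use of any entanglement lower bound in the compression theorem — the only ``external'' input on the $\MIP^*$ side is $\MIP^*\in\RE$, which is trivial.

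\medskip

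\noindent\textit{Proof sketch (to be expanded).} Fix $k=7$ and let $\Decisionproblem$ be the decision problem $(\Language^{k\text{-}\CLMIP^*}_{\text{yes}},\Language^{k\text{-}\CLMIP^*}_{\text{no}})$. By \Cref{lem:MIPstartbound}, $\Language^{k\text{-}\CLMIP^*}_{\text{yes}}\in\RE$, and since the accepting and rejecting games of \Cref{def:syncrejgame} show both sides of the promise are non-empty, we obtain $\Language^{k\text{-}\CLMIP^*}_{\text{no}}\in\coRE$. By \Cref{cor:KMIPcomp}, $\Decisionproblem$ is weakly compressible. Applying \Cref{thm:compressRE} (with $\Language_{\text{no}}^{\Decisionproblem}\in\coRE$), $\Decisionproblem$ is $\RE$-complete; the reduction from the halting problem is \Cref{pseu:REcompressibleproof} instantiated with the family $\{\texttt{Gapcompress}_{\alpha,7}\}_\alpha$ in place of $\{\texttt{Compress}_\alpha^{\Decisionproblem}\}$ and with $\texttt{searchfrombelow}_{0.5}$ (\Cref{pseu:searchfrombelow}) playing the role of $\coREalgoproof$ for $\Language_{\text{no}}^{\Decisionproblem}$, yielding the verifier sequence of \Cref{pseu:verifiersequencemipco} with halting/non-halting and accept/reject swapped. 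Therefore $\RE\subseteq k\text{-}\CLMIP^*\subseteq\MIP^*$. Together with $\MIP^*\subseteq\RE$ from \Cref{lem:MIPstartbound}, this gives $\MIP^*=\RE$.
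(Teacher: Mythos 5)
Your proposal matches the paper's own proof almost verbatim: $\MIP^*\subseteq\RE$ via \Cref{lem:MIPstartbound} (the $\texttt{Searchfrombelow}_{0.5}$ algorithm), and $\RE\subseteq\MIP^*$ by combining \Cref{cor:KMIPcomp}, \Cref{thm:compressRE}, and $\Language^{k\text{-}\CLMIP^*}_{\text{no}}\in\coRE$, with \Cref{pseu:verifiersequencemipstar} being exactly the accept/reject-swapped mirror of \Cref{pseu:verifiersequencemipco} you describe. One small caution: your parenthetical claims that \Cref{thm:gappedcompression} "already guarantees that if $\omega^t(\cG_n)=1$ then $\cG_n^{\Compressgame}$ has a perfect oracularizable strategy," but the theorem's completeness clause is conditioned on $\cG_n$ itself admitting a perfect \emph{oracularizable} strategy, not merely on $\omega^t(\cG_n)=1$ — the identification of these two in the recursion is an implicit (and standard) bootstrapping that the paper also leaves tacit in \Cref{cor:KMIPcomp}, so it is not a gap you introduced, but the hypothesis should be quoted as stated rather than strengthened.
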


In a similar vein as~\Cref{cor:MIPcocoRE}, we remark that~\Cref{cor:MIPstarRE} can be proven using~\Cref{pseu:verifiersequencemipstar} below. 

\vspace{10pt}
\IncMargin{1em}
\begin{algorithm}[H]
	\DontPrintSemicolon
	
	\textbf{Input}: Integer $n$. 
	
	Run \Turingmachinehalt for $n$ steps. If \Turingmachinehalt halts in the given steps, \textbf{return} $\cG^{\text{accept}}$.
	
	Compute the description of $\verifiersequence$.
	
	Compute the description of $\cG_{C_0} = \verifiersequence(C_0)$, the $C_0$th game of the CL verifier $\verifiersequence$.
	
	Simulate $\texttt{Searchfrombelow}_{0.5}$ specified in~\Cref{pseu:searchfrombelow} with $\cG_{C_0}$ as the input for $\max\{0, n - C_0\}$ steps. If $\texttt{Searchfrombelow}_{0.5}$ halts in line 6 (of~\Cref{pseu:searchfrombelow}) in the given steps, \textbf{return} $\cG^{\text{reject}}$. 
	
	Apply $\texttt{Gapcompress}_{\alpha, 7}$ on the verifier $\verifiersequence$ to obtain $\verifiersequence^{\text{comp}}$.
	
	Compute and \textbf{return}$\cG^{\text{comp}}_{n+1} = \verifiersequence^{\text{comp}}(n+1)$.
	
	\caption{The description for $\verifiersequence$ which generates the game sequences $\{\cG_n\}_{n \in \bN}$ for the proof of $\RE \subseteq \MIP^*$. \Turingmachinehalt is the instance of the halting problem for the reduction.}
	\label{pseu:verifiersequencemipstar}
\end{algorithm}\DecMargin{1em}
\vspace{10pt} 

\subsubsection{Finding an explicit separation between $C_{q}$ and $C_{qc}$ is $\RE$-complete} \label{sec:compproofoutline}

Finally, we give the last application of~\Cref{thm:gappedcompression}, which is to show that finding a Bell test between the tensor product model and commuting operator model is $\RE$-complete. In order to show this, we first give a formal definition of this problem using a decision problem. 
\begin{definition}[The $\delta$-Bell test separation decision problem]
	Given a constant $\delta \in (0,1)$, the $\delta$-Bell test separation decision problem $\Decisionproblem_{\delta\text{-bell}}$ is defined by the following two sets. 
	\begin{itemize}
		\item $\Language^{\Decisionproblem_{\delta\text{-bell}}}_{\text{yes}} = \{\langle \cG\rangle | \, \omega^{*} (\cG) = \omega^{co}(\cG) \} $. 
		\item $\Language^{\Decisionproblem_{\delta\text{-bell}}}_{\text{no}} = \{\langle \cG \rangle |\, |\omega^{*} (\cG) - \omega^{co}(\cG)| \geq \delta \} $. 
	\end{itemize}
\end{definition} 

The above problem is already known to be $\RE$-hard prior to this work using the algorithm $\texttt{Searchsamevalue}_{\delta}$, which we give in~\Cref{pseu:searchsamevalue} below for completeness. As shown in~\cite[Theorem 12.10]{jiMIPRE2022a} the set $\Language^{\Decisionproblem_{\frac{1}{2}\text{-bell}}}_{\text{no}}$ is non-empty, as there exists a (synchronous $12$-th level CL samplable game) such that $ |\omega^{*} (\cG) - \omega^{co}(\cG)| \geq \frac{1}{2}$. Given this, we have the following theorem for the complexity of the $\frac{1}{2}$-Bell test separation problem.
\begin{theorem} \label{thm:samevalueREcomplete}
	The $\frac{1}{2}$-Bell test separation problem is $\RE$-complete
\end{theorem}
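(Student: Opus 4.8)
\textbf{Proof proposal for Theorem~\ref{thm:samevalueREcomplete}.}

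The plan is to combine the compression machinery of~\Cref{thm:gappedcompression} with the weakly compressible framework of~\Cref{thm:compressRE}, exactly in the same way that~\Cref{cor:MIPcocoRE} and~\Cref{cor:MIPstarRE} were derived, but applied now to the decision problem $\Decisionproblem_{\frac12\text{-bell}}$ restricted to synchronous $k$-th level CL samplable games. First I would verify that the restricted problem is in $\RE$: this is the algorithm $\texttt{Searchsamevalue}_{\delta}$ mentioned in the text, which runs $\texttt{Searchfrombelow}$ and $\texttt{NPAHierarchy}$ (i.e. $\texttt{Searchfromabove}$) in parallel on $\cG$, halting and accepting if at some stage a tensor-product strategy is found whose value is within the current precision of the running NPA upper bound (witnessing $\omega^*(\cG) = \omega^{co}(\cG)$). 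Since $\omega^*(\cG) \le \omega^{co}(\cG)$ always, if the two values are equal this procedure terminates; if $|\omega^*(\cG) - \omega^{co}(\cG)| \ge \tfrac12$ it runs forever. This shows $\Language^{\Decisionproblem_{\frac12\text{-bell}}}_{\text{no}} \in \coRE$ (in fact the whole problem is in $\RE$), so the hypothesis of~\Cref{thm:compressRE} is met once we restrict to CL samplable games.

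Next I would check non-triviality: $\Language^{\Decisionproblem_{\frac12\text{-bell}}}_{\text{yes}}$ contains the trivial games $\cG^{\text{accept}}$ and $\cG^{\text{reject}}$ (both have $\omega^* = \omega^{co}$, namely $1$ and $\tfrac13$), and these are synchronous CL samplable at every level and computable in constant time; $\Language^{\Decisionproblem_{\frac12\text{-bell}}}_{\text{no}}$ is non-empty by~\cite[Theorem 12.10]{jiMIPRE2022a} (rescaled/repeated as needed to land at level $7$ or above, using~\Cref{lem:parallelrefCLfunction} and the detyping construction of~\Cref{sec:condlineardistribution} to bring the construction into the conditionally linear verifier form, at the cost of only polynomial changes). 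The crucial point is the one already flagged in the introduction right after the statement: the $\texttt{Gapcompress}_{\alpha,k}$ algorithm of~\Cref{thm:gappedcompression} preserves \emph{both} $\omega^*$ and $\omega^{co}$ — its completeness/soundness clauses hold simultaneously for $t \in \{*, co\}$. Consequently, if a synchronous CL verifier $\verifiersequence$ produces games with $\omega^*(\cG_n) = \omega^{co}(\cG_n)$, then after compression $\omega^*(\cG_n^{\Compressgame}) = \omega^{co}(\cG_n^{\Compressgame})$ as well, since the compressed game has value $1$ iff the original does and, more generally, the value-$1$/value-$\le\tfrac12$ dichotomy is preserved in each model — and by running $\tfrac12$-fold parallel repetition enough times (as remarked after~\Cref{thm:gappedcompression}, the soundness constant is free) the $\tfrac12$ gap in the ``no'' case is also preserved. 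This is precisely the statement that the restricted $\delta$-Bell separation problem is \emph{weakly compressible}: for each $\alpha$, take $\texttt{Compress}_{\alpha}^{\Decisionproblem} = \texttt{Gapcompress}_{\alpha,7}$ and observe that the output is always a synchronous $7$-th level CL verifier whose games have $\omega^* = \omega^{co}$ whenever the input games do, and $|\omega^* - \omega^{co}| \ge \tfrac12$ whenever the input games do.

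With weak compressibility in hand, I would invoke~\Cref{thm:compressRE} (applied to the CL-restricted problem, which is in $\RE$ hence $\Language_{\text{no}}^{\Decisionproblem} \in \coRE$) to conclude the restricted problem is $\RE$-complete, and therefore $\Decisionproblem_{\frac12\text{-bell}}$ itself is $\RE$-hard; combined with $\Decisionproblem_{\frac12\text{-bell}} \in \RE$ via $\texttt{Searchsamevalue}$, it is $\RE$-complete. For concreteness one can spell out the reduction via a pseudocode identical in shape to~\Cref{pseu:verifiersequencemipco}/\ref{pseu:verifiersequencemipstar}: on input an integer $n$, run \Turingmachinehalt for $n$ steps and return $\cG^{\text{accept}}$ if it halts; otherwise self-reference to build $\verifiersequence$, compute $\cG_{C_0}$, simulate $\texttt{Searchsamevalue}_{1/2}$ on $\cG_{C_0}$ for $\max\{0,n-C_0\}$ steps and return the hard instance of~\cite[Theorem 12.10]{jiMIPRE2022a} (which has $|\omega^*-\omega^{co}|\ge\tfrac12$) if it halts, and otherwise return $\texttt{Gapcompress}_{\alpha,7}(\verifiersequence)$ evaluated at $n+1$; the output $\cG_{\text{\Turingmachinehalt}} = \verifiersequence(C_0)$ then has $\omega^* = \omega^{co}$ iff \Turingmachinehalt halts.

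I expect the main obstacle to be bookkeeping rather than conceptual: one must check that the separating game of~\cite[Theorem 12.10]{jiMIPRE2022a} — or an equivalent reformulation — can be presented as (or padded to) a synchronous $7$-th level conditionally linear samplable game while retaining the $|\omega^* - \omega^{co}| \ge \tfrac12$ bound, since~\Cref{thm:gappedcompression} and~\Cref{thm:compressRE} are stated for that restricted class and for synchronous oracularizable strategies; this requires passing through the synchronization/detyping transformations of~\Cref{sec:condlineardistribution} and verifying that the completeness clause (existence of a \emph{perfect oracularizable} strategy in the relevant model) is what one actually needs, and that in the ``separated'' case the relevant side still has value $\le \tfrac12$. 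A secondary (and genuinely routine) point is verifying that $\texttt{Searchsamevalue}$ indeed halts exactly on the yes-instances and can be wrapped into an infinite loop otherwise, so that it plays the role of the $\coRE$-algorithm $\texttt{coREalgo}_{\Language_{\text{no}}}$ required by the proof of~\Cref{thm:compressRE}.
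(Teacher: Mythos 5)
Your route is essentially the paper's: restrict to CL-samplable games, show weak compressibility via $\texttt{Gapcompress}$, supply the $\RE$ algorithm $\texttt{Searchsamevalue}$, invoke Theorem~\ref{thm:compressRE}, and spell out a pseudocode in the style of Pseudocode~\ref{pseu:verifiersequencemipstar}. However, the justification of weak compressibility has a genuine gap. You write that ``if a synchronous CL verifier produces games with $\omega^*(\cG_n) = \omega^{co}(\cG_n)$, then after compression $\omega^*(\cG_n^{\Compressgame}) = \omega^{co}(\cG_n^{\Compressgame})$ as well,'' and more generally that $\texttt{Gapcompress}$ ``preserves both $\omega^*$ and $\omega^{co}$.'' That is not what Theorem~\ref{thm:gappedcompression} gives you. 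The compression theorem constrains the compressed values only in two regimes: $\omega^t(\cG_n) = 1$ (completeness, which moreover is conditioned on a perfect \emph{oracularizable} strategy) and $\omega^t(\cG_n) \le \tfrac12$ (soundness). For a game with, say, $\omega^*(\cG_n) = \omega^{co}(\cG_n) = 0.7$, both compressed values are unconstrained and could separate; so a ``yes'' instance of $\Decisionproblem_{\frac12\text{-bell}}$ can be mapped to a ``no'' instance. Taking $\texttt{Compress}_\alpha^{\Decisionproblem} = \texttt{Gapcompress}_{\alpha,7}$ therefore does \emph{not} witness weak compressibility of $\Decisionproblem_{\frac12\text{-bell}}$ restricted to CL verifiers — the completeness clause of Definition~\ref{def:weaklycompressibleproblem} fails.

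The paper closes this gap by shrinking the promise further before applying the framework: the yes-set is restricted to games with $\omega^* = \omega^{co} = 1$ (both models admitting a perfect oracularizable strategy), and the no-set to games with $\omega^{co} = 1$ and $\omega^* \le \tfrac12$. On this restricted promise, $\texttt{Gapcompress}$ genuinely preserves membership: the $t=co$ completeness clause keeps $\omega^{co} = 1$ on both sides, and the $t=*$ soundness clause keeps $\omega^* \le \tfrac12$ in the no-case. Since the restricted problem is a subproblem of $\Decisionproblem_{\frac12\text{-bell}}$, its $\RE$-hardness transfers upward. Your concrete pseudocode only ever emits $\cG^{\text{accept}}$, the hard instance of~\cite[Theorem 12.10]{jiMIPRE2022a}, or a compressed version of these, so the games it produces do lie in this restricted set and the reduction can be salvaged; but as written you need to make the restricted promise explicit, otherwise the ``weakly compressible'' hypothesis you feed into Theorem~\ref{thm:compressRE} is simply false. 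A secondary point worth flagging, which the paper also leaves implicit: you need to check that $\omega^{co}(\cG) = 1$ for the games in your uniform sequence actually yields a perfect \emph{oracularizable} strategy in the $co$ model (not merely a perfect synchronous one), since that is what the completeness clause of Theorem~\ref{thm:gappedcompression} requires in order to propagate $\omega^{co} = 1$ through the compression.
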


The above theorem follows from realizing that~\Cref{thm:gappedcompression} also shows that the $\frac{1}{2}$-Bell test separation problem is weakly compressible for synchronous games which are $7$-th level CL samplable with the ``yes" case consisting of games $\cG$ such that $\omega^*(\cG) = \omega^{co}(\cG) =1$, and the ``no"  case consisting of games $\cG$ such that $\omega^{co}(\cG) =1$ and $\omega^*(\cG) \leq \frac{1}{2}$ (where the $\cG$ in both cases are synchronous $12$-th level CL sample games). Since the argument follows similarly as the one given in~\Cref{sec:MIPREproof}, we do not give the details here. We remark that the above proof can be easily changed to any $\delta \in (0,1)$ since the soundness condition from \Cref{thm:gappedcompression} can be changed to hold for any $\delta \in (0,1)$. 
 
 \vspace{10pt}
 \IncMargin{1em}
 \begin{algorithm}[H]
 	\DontPrintSemicolon
 	
 	\textbf{Input}: $\cG = (\cX, \cA, \mu, D)$
 	
 	If the input $\cG$ is not a valid description of the game, terminate and \textbf{return} ``Error"; 
 	
 	Set $n = 1, \eps_1^{\text{lower}}  = 0$
 	
 	\While{True}{
 		
 		Compute the finite subset $C_{1/n}^n$ defined by~\Cref{fact:disCorrelations}.

 		\For{$C' \in C_{\frac{1}{n}}^n$}{
 			
 			Compute $\eps' = \Ex_{(x,y) \sim \mu} \sum_{(a,b) \in \cA^2} C'_{x,y,a,b} D(x,y,a,b)$
 			
 			\uIf{$\eps' > \eps_n^{\text{lower}}$}{
 				$\eps_n^{\text{lower}}  = \eps'$  ; 
 			}
 		}

 		Compute $\eps_n^{\text{upper}} = \texttt{NPAHierarchy}(\cG, n) $

 		\uIf{$|\eps_n^{\text{upper}} - \eps_n^{\text{lower}}| \leq \min{(\delta - \frac{1}{n}, 0)}$}{
 			\textbf{Return} True ; 
 		}

 		$n = n+1$;
 		
 		$\eps_n^{\text{lower}} = \eps_{n-1}^{\text{lower}}$;
 		
 	}	
 	\caption{The description for $\texttt{Searchsamevalue}_{\delta}$.}
 	\label{pseu:searchsamevalue}
 \end{algorithm}\DecMargin{1em}
 \vspace{10pt}

We remark that instead of using~\Cref{thm:compressRE}, ~\Cref{thm:samevalueREcomplete} can also be proven by considering the following uniform verifier sequence defined in~\Cref{pseu:searchsamevalue}. By using a similar argument as the proof of~\Cref{thm:compressRE}, one can infer that $\omega^*(\cG_{C_0}) < 0.5$ by using line 5 of~\Cref{pseu:samevalueREcompletealt} and $\omega^*(\cG_{C_0}) = 1$ by using line 6 of~\Cref{pseu:samevalueREcompletealt}. 

 \begin{algorithm}[H]
	\DontPrintSemicolon
	
	\textbf{Input}: Integer $n$. 
	
	Run \Turingmachinehalt for $n$ steps. If \Turingmachinehalt halts in the given steps, \textbf{return} $\cG^{\text{accept}}$.
	
	Compute the description of $\verifiersequence$.
	
	Compute the description of $\cG_{C_0} = \verifiersequence(C_0)$, the $C_0$th game of $\verifiersequence$.
	
	Simulate $\texttt{Searchfrombelow}_{0.5}$ specified in~\Cref{pseu:searchfrombelow} with $\cG_{C_0}$  as the input for $\max\{0, n - C_0\}$ steps. If $\texttt{Searchfrombelow}_{0.5}$ halts in line 6 (of~\Cref{pseu:searchfrombelow}) in the given steps. Return $\cG^{\text{accept}}$. 
	
	Simulate $\texttt{Searchfromabove}_1$ specified in~\Cref{pseu:searchfromabove} with $\cG_{C_0}$  as the input for $\max\{0, n - C_0\}$ steps. If $\texttt{Searchfromabove}_1$ halts in line 6 (of~\Cref{pseu:searchfromabove}) in the given steps, \textbf{return} $\cG^{\text{reject}}$. 
	
	Apply $\texttt{Gapcompress}_{\alpha, 7}$ on the verifier $\verifiersequence$ to obtain $\verifiersequence^{\text{comp}}$.
	
	Compute and \textbf{return}$\cG^{\text{comp}}_{n+1} = \verifiersequence^{\text{comp}}(n+1)$ and $\cG^{\text{comp}}_{n+1}$ with the two provers.
	
	\caption{An alternative game sequence for the proof of~\Cref{thm:samevalueREcomplete}. \Turingmachinehalt is the instance of the halting problem for the reduction.}
	\label{pseu:samevalueREcompletealt}
\end{algorithm}\DecMargin{1em}
\vspace{10pt}

  \Cref{thm:samevalueREcomplete} implies that it is impossible for \textbf{any} computer program to systematically find a Bell test to separate the quantum tensor product model from the quantum commuting operator model! However, if we have prior knowledge about whether a Turing machine halts (for example, the Turing machine that arises from Pseudocode which contains an infinite loop), we could construct a bell experiment that realizes such a separation using~\Cref{thm:samevalueREcomplete}, giving infinitely many bell experiments to test the separation between the tensor product model and the commuting operator model. Unfortunately, since these constructions rely on complexity techniques, this also implies that any experimental setup generated by using~\Cref{thm:samevalueREcomplete} would be impractical for experimental usage~\footnote{For reference, the explicit separation proven by~\cite{jiMIPRE2022a} has an estimated question size and answer size of about $10^{20}$. This is completely impractical experimentally, as each question requires a different measurement configuration, and each answer requires a precise measurement setting. We expect any separation generated by~\Cref{thm:samevalueREcomplete} to have similar, if not higher, question size and answer size as techniques used are similar to the ones used by~\cite{jiMIPRE2022a}.}. Thus, it would be an interesting open question whether we can show, using techniques from the operator algebra community, an example of a bell experiment which separates between the tensor product model and the commuting operator model with a more reasonable question/answer size.

\subsection{Proof of~\Cref{thm:gappedcompression}} \label{sec:proofcompressiongame}

In this subsection, we give a proof for~\Cref{thm:gappedcompression} assuming some important propositions. Similar to~\cite[Theorem 12.1]{jiMIPRE2022a}, the proof of~\Cref{thm:gappedcompression} relies on three components: question reduction, answer reduction, and parallel repetition, which we state below. The first proposition is question reduction. This proposition states the existence of an algorithm which takes a $k$-th CL verifier and outputs a $3$rd level CL verifier with $O(\polylog(n))$ sample complexity without drastically increasing the verification complexity and the soundness condition.

\begin{proposition}[Question Reduction] \label{prop:QuestionReduction}
	For all constants $\alpha,k \in \bN$, there exists a polynomial time algorithm $\texttt{QuestionReduction}_{\alpha, k}$ that takes, as input, a pair of Turing machines $(\samplerTM, \deciderTM)$ and outputs a tuple of Turing machines $(\samplerTM^{\QuestionRed}, \deciderTM^{\QuestionRed})$ such that the following holds:
	
	For models $t \in \{*, co\}$, $\texttt{QuestionReduction}_{\alpha, k}$ outputs a pair of Turing machines $(\samplerTM^{\QuestionRed}, \deciderTM^{\QuestionRed})$ which is a fourth-level CL-verifier $\verifiersequence^{\QuestionRed}$ for an infinite sequence of games $\gamesequence^{\QuestionRed} = \{\cG_n^{\QuestionRed} \}_{n \in \bN}$ with the following properties: There exists an integer $\gamma^{\QuestionRed} = O(\text{poly}(\alpha)) $ such that
	\begin{enumerate}
		\item (Computation time): $\TIME_{\texttt{QuestionReduction}_{\alpha, k}} (\poly(\alpha), |\samplerTM|, |\deciderTM|)$. 
		\item (Synchronicity) The game sequence $\verifiersequence^{\QuestionRed}$ is a $3$-rd level synchronous CL verifier. 
		\item (Complexity bounds for the output):
		\begin{itemize}
			\item  $\samplerTM^{\QuestionRed}(n, \text{Parameter}) \leq \max\left\{\log^{\gamma^{\QuestionRed}}(n), C^{\text{trivial}}\right\}$, 
			\item  $\TIME_{ \samplerTM^{\QuestionRed} }(n) \leq \max\left\{\log^{\gamma^{\QuestionRed}}(n),C^{\text{trivial}}\right\} $,
			\item  $\TIME_{ \deciderTM^{\QuestionRed}} \leq \max\left\{n^{\gamma^{\QuestionRed}},C^{\text{trivial}}\right\}$,
		\end{itemize}
		for some universal constant $C^{\text{trivial}}$. 
		\item (Independence of the sampler) $\samplerTM^{\QuestionRed}$ is a $3$-th level CL sampler which only depends on the parameter $\alpha$ and $k$ and does not depend on $|\deciderTM|$, $|\deciderTM^{\QuestionRed}| =  O(\poly( \alpha, k))$. 
	\end{enumerate}
	Furthermore if the input $\verifiersequence = (\samplerTM, \deciderTM)$ is a $k'$th CL verifier for the infinite sequence of synchronous game $\gamesequence = \{\cG_n\}_{n \in \bN}$ for some constant $k' \in \bN$ such that $k' < k$, and there exists some constant $n_0 \in \bN$ such that for all $n \geq n_0$
	\begin{equation*}
		\max\left\{ \TIME_{\samplerTM}, \TIME_{\deciderTM}\right\} \leq n^\alpha. 
	\end{equation*}
	Then there exists some constant $n_0^{\QuestionRed} = \poly(\gamma^{\QuestionRed}, n_0)$ with $n_0 \leq n_0^{\QuestionRed}$ such that for all $n \geq n_0^{\QuestionRed}$
	\begin{enumerate}
		\item (Completeness) If there exists a perfect oracularizable synchronous strategy for $\cG_n$ in model $t$, then there exists a perfect oracularizable synchronous strategy for $\cG^{\QuestionRed}_n$ in model $t$. 
		\item (Soundness) There exists some universal function $\mathbf{s}^{\QuestionRed}_{\alpha}$ which depends on $k$ and $\eps$ with $\mathbf{s}^{\QuestionRed}_{\alpha} = O(\exp(k), \poly(\eps))$ such that, for any polynomial $\eps: \bN \rightarrow [0,1]$ 
		\begin{equation*}
			\omega^t(\cG_n) \leq 1 - \eps(n) \Longrightarrow \omega^t(\cG^{\QuestionRed}_n) \leq 1 - \mathbf{s}^{\QuestionRed}_{\alpha}(k, \eps(n)).
		\end{equation*}
	\end{enumerate}
\end{proposition}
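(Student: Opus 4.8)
The plan is to follow the two-test structure for question reduction from the $\MIP^*=\RE$ programme --- the \emph{Pauli basis test} combined with the \emph{introspection protocol} --- but carried out entirely within the tracially embeddable strategies framework so that the same construction works for $t\in\{*,co\}$ simultaneously. First I would set up the $\texttt{QuestionReduction}_{\alpha,k}$ algorithm: given $(\samplerTM,\deciderTM)$, it constructs a new CL verifier $\verifiersequence^{\QuestionRed}$ whose sampler is a fixed $3$rd-level CL sampler (depending only on $\alpha,k$) that, on input $n$, (i) samples EPR-pair registers of dimension scaling with $\log^{\gamma^{\QuestionRed}}(n)$ for the Pauli basis test of~\Cref{thm:soundnessPaulibasis}, and (ii) uses the introspection layer so that an honest prover, by performing generalized Pauli $Z$-measurements on shared EPR pairs (via~\Cref{lem:PaulitogeneralPauli} and the commutation relations~\eq{genpaulicomm}, \eq{genpauliaddmult}), effectively self-samples a seed $s$ and computes $(\decidertypedfunction^{A,n}(s),\decidertypedfunction^{B,n}(s))$ --- exactly the CL-distribution sampling of~\Cref{def:CLdistribution}. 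The reason the CL structure of the input matters is that the introspected question $x=\decidertypedfunction^{A,n}(s)$ is a sequence of linear-image measurements $\rho^{Z,p}_{[\decidertypedfunction^{A,n}_{j,\cdot}|x_j]}$, and the $j$th-level linear functions $\decidertypedfunction^{A,n}_{j,s}$ are queryable from $\samplerTM$ in $\polylog$ time, so the new decider $\deciderTM^{\QuestionRed}$ can ``undo'' the introspection by recomputing the seed and calling the old $\deciderTM$; since $\TIME_{\samplerTM},\TIME_{\deciderTM}\le n^\alpha$, this gives $\TIME_{\deciderTM^{\QuestionRed}}\le n^{\gamma^{\QuestionRed}}$ while $\TIME_{\samplerTM^{\QuestionRed}}$ drops to $\polylog(n)$. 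The detyping of~\Cref{def:detypeCLsampler} and~\Cref{lem:detypeandsync} is then applied to fold the (constantly-many) question \emph{types} of the combined test into a genuine CL distribution, costing only a constant factor in soundness and (crucially) preserving perfect oracularizable synchronous strategies.

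For \textbf{completeness}: given a perfect oracularizable synchronous strategy $\strategy$ for $\cG_n$, I would build the strategy for $\cG^{\QuestionRed}_n$ in which the provers hold the EPR pairs demanded by the Pauli basis test, answer all Pauli-test questions honestly (perfect by construction of the generalized Pauli measurements), and on an introspection question measure out the seed $s$, \emph{locally} recompute $(\decidertypedfunction^{A,n}(s),\decidertypedfunction^{B,n}(s))$, and apply $\strategy$'s measurement for that question pair. Oracularizability of $\strategy$, together with the commutation~\Cref{lem:linePaulicommutation} (applied to the linear maps defining $\decidertypedfunction^{A,n},\decidertypedfunction^{B,n}$, whose kernels nest by the CL-function recursion of~\Cref{def:condlinfun}), ensures the combined strategy is again oracularizable and synchronous; the self-testing guarantee of the Pauli basis test is perfect in the honest case. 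Here the detyped-game completeness clause of~\Cref{lem:detypeandsync} is what lets the final detyped verifier retain a perfect oracularizable strategy.

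For \textbf{soundness}: suppose $\omega^t(\cG^{\QuestionRed}_n)\ge 1-\delta$. Using~\Cref{lem:balancedperfecttoalmostsync} and~\Cref{cor:orthogonalizationlemmasync} I would pass to a symmetric, projective, approximately-synchronous tracially embeddable strategy (legitimate because the combined test is a balanced synchronous game and $\cC_{qc}^{\Tr}$ is dense in $\cC_{qc}$ by~\Cref{thm:tracialembedding}). Then I invoke the rigidity of the Pauli basis test,~\Cref{thm:soundnessPaulibasis}, to extract generalized Pauli observables $\rho^{X,p},\rho^{Z,p}$ from the prover's operators up to error $\poly(\delta)$ that is \emph{independent of $n$} (this is precisely where the de~la~Salle simplification~\cite{delasalleSpectralGapStability2022} buys us an $n$-independent loss instead of the $\poly(n)$ loss in~\cite{jiMIPRE2022a}); then, walking down the $k$ levels of the CL function and using the distance-combination~\Cref{lem:combmeasure} and closeness/distance conversions~\Cref{lem:closetodistance} at each level, I show the introspected measurements are $O(\exp(k)\cdot\poly(\delta))$-close to honestly sampling $(\decidertypedfunction^{A,n}(s),\decidertypedfunction^{B,n}(s))$ and then applying some strategy for $\cG_n$; this yields a strategy for $\cG_n$ of value $\ge 1-\mathbf{s}^{\QuestionRed}_\alpha(k,\delta)$ with $\mathbf{s}^{\QuestionRed}_\alpha=O(\exp(k),\poly(\delta))$, which contrapositively gives the stated implication (after composing with~\Cref{lem:detypeandsync}'s constant-factor soundness loss). \textbf{The main obstacle} I expect is exactly this last ``level-by-level descent'': accumulating the introspection error across all $k$ CL-levels while keeping the dependence on $\eps$ polynomial and the dependence on $k$ no worse than exponential, and simultaneously verifying that the commutation relations needed for oracularizability survive each level --- this requires carefully tracking which linear-map kernels nest (via~\Cref{lem:linePaulicommutation}) and handling the even-$p$ padding convention of~\Cref{def:condlinfun}. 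Everything else (runtime bookkeeping, the $C^{\text{trivial}}$ clamp for small $n$, independence of $\samplerTM^{\QuestionRed}$ from $\deciderTM$) is routine once the honest EPR-sampling construction is pinned down.
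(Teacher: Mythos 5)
Your plan follows the paper's proof of~\Cref{prop:QuestionReduction} essentially step for step: the Pauli basis test of~\Cref{thm:soundnessPaulibasis} with de~la~Salle's $n$-independent soundness, the introspection protocol where honest provers perform data-processed generalized Pauli $Z$-measurements $\rho^{p,Z}_{[\decidertypedfunction^P\,|\,x]}$ (rather than literally measuring the full seed and post-processing, which would fail the Hide tests but is otherwise the same idea), the detyping of~\Cref{lem:detypeandsync}, the balanced-to-symmetric conversion via~\Cref{lem:balancedperfecttoalmostsync} and~\Cref{cor:orthogonalizationlemmasync}, and the level-by-level inductive descent through the $k$ CL registers. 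The one tool you leave implicit in that descent is the Pauli twirl decomposition~(\Cref{lem:PauliTwirl}), which is the specific engine of the paper's~\Cref{claim:QRconstructingstrategy} for peeling off one CL level per inductive step and is precisely what produces the $\exp(k)$ (via a $1/2^j$-power) soundness dependence you correctly anticipate.
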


Question Reduction relies on self-testing techniques used in~\cite{natarajanTwoplayerEntangledGames2018,griloSimpleProtocolVerifiable2020a}, which are unique to non-local games with entangled provers. We prove~\Cref{prop:QuestionReduction} in~\Cref{sec:introspection}. We remark that in comparison to the question procedure given in~\cite{jiMIPRE2022a}, there is no dependency on the parameter $n$ and $\lambda$ for soundness since the EPR tester does not use the low-degree test, and we refer to~\Cref{sec:Paulibasis} for more details. The second proposition is Answer reduction, which gives an algorithm which takes synchronous a CL verifier with $O(\poly(n))$ verification complexity and outputs a balanced synchronous CL verifier with $O(\polylog(n))$ verification complexity which does not increase the sample complexity and increases the soundness only by $\polylog(n)$.  

\begin{proposition}[Answer Reduction] \label{prop:AnswerReduction}
	For all constants $(\alpha,k) \in \bN$ there exists a polynomial time algorithm $\texttt{AnswerReduction}_{\alpha, k}$ that takes, as input, a pair of Turing machines $(\samplerTM, \deciderTM)$ and outputs a tuple of Turing machines $(\samplerTM^{\Answerred}, \deciderTM^{\Answerred})$ such that the following holds:
	
	For models $t \in \{*, co\}$, $\texttt{AnswerReduction}_{\alpha}$ outputs a pair of Turing machines $(\samplerTM^{\Answerred}, \deciderTM^{\Answerred})$,  which defines a synchronous CL-verifier $\verifiersequence^{\Answerred}$ for an infinite sequence of games $\gamesequence^{\Answerred} = \{\cG_n^{\Answerred} \}_{n \in \bN}$ with the following properties: There exists an integer $\gamma^{\Answerred} = O(\text{poly}(\alpha))$ such that
	
	\begin{enumerate}
		\item (Runtime): $\texttt{AnswerReduction}_{\alpha}$ has runtime
		\begin{equation*}
			\TIME_{\texttt{AnswerReduction}_{\alpha}}  (\poly(\alpha), |\samplerTM|, |\deciderTM|). 
		\end{equation*}
		\item (Dependency for $\samplerTM^{\Answerred}$) The Turing machine $\samplerTM^{\Answerred}$ only depends on the input $\samplerTM$ and $\alpha$. 
	\end{enumerate}
	
	Furthermore, if the input $\verifiersequence = (\samplerTM, \deciderTM)$ is a $k$-th level synchronous CL verifier for the infinite sequence of synchronous game $\gamesequence = \{\cG_n\}_{n \in \bN}$ for some constant $k \in \bN$, and there exists a constant $n_0 \in \bN$ such that for all $n \geq n_0$, we have
	\begin{itemize}
		\item  $|\langle\deciderTM\rangle| =  O(\poly( \alpha, k))$.
		\item  $\samplerTM(n, \text{Parameter}) \leq \log^{\alpha}(n)$,
		\item  $\TIME_{ \samplerTM }(n) \leq \log^{\alpha}(n)$,
		\item  $\TIME_{ \deciderTM} \leq n^{\alpha}$.
	\end{itemize}
	Then there exists an integer $n_0^{\Answerred} = \poly(\gamma^{\Answerred}, n_0)$ with $n_0 \leq n_0^{\Answerred}$ such that for all $n \geq n_0^{\Answerred}$
	\begin{enumerate}
		\item (Complexity bounds for the output): 
		\begin{itemize}
			\item  $\TIME_{ \samplerTM^{\Answerred} }(n) \leq \max\left\{\log^{\gamma^{\Answerred}}(n), C^{\text{trivial}}\right\}$,
			\item  $\TIME_{ \deciderTM^{\Answerred}}(n) \leq \max\left\{\log^{\gamma^{\Answerred}}(n), C^{\text{trivial}}\right\}$,
		\end{itemize}
		for some universal constant $C^{\text{trivial}}$. 
		\item (Level for the CL sampler) The Turing machine $\samplerTM^{\Answerred}$ is a $\max\{k+2, 6\}$th-level CL sampler.
		\item (Completeness) If there exists a perfect oracularizable strategy for $\cG_n$ in model $t$, then there exists a perfect oracularizable strategy for $\cG^{\Answerred}_n$ in model $t$. 
		\item (Soundness) There exists a universal function $\mathbf{s}^{\Answerred}_{\alpha}$ which depends on $n$ and $\eps$ with $O(\mathbf{s}^{\Answerred}_{\alpha}) = O(\polylog(n), \poly(\eps))$ such that, for any polynomial $\mathbf{\eps}: \bN \rightarrow [0,1]$ 
		\begin{equation} \label{eq:ARsoundnessclause}
			\omega^t(\cG_n) \leq 1 - \mathbf{\eps}(n) \Longrightarrow \omega^t(\cG^{\Answerred}_n) \leq 1 -\mathbf{s}^{\Answerred}_{\alpha}(\mathbf{\eps}(n), n)
		\end{equation}
	\end{enumerate}
\end{proposition}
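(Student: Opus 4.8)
\textbf{Proof proposal for Proposition~\ref{prop:AnswerReduction} (Answer Reduction).}

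The plan is to follow the tailor-made PCP construction of~\cite{jiMIPRE2022a} (their Chapters 9--11) as a black box, and carry out the three conceptual stages in order: (i) \emph{oracularization}, which converts the synchronous CL verifier $\verifiersequence$ into a verifier where one prover holds a question pair and is asked to answer both questions while the other prover is asked only one question, with a consistency check to tie the two together; (ii) \emph{Cook--Levin / arithmetization}, where the computation of $\deciderTM_n(x,y,a,b)$ (which takes $O(n^\alpha)$ steps) is encoded as a $3$-SAT instance and then as a constraint on low-individual-degree polynomials over a field $\bF_{2^p}$ with $p = O(\log(n))$ and $m = O(1)$ variables, chosen so that the answer lengths drop from $\poly(n)$ to $\polylog(n)$; and (iii) running the \emph{quantum low-individual degree test} of~\Cref{sec:lowinddegtest} (which is CL samplable by~\Cref{lem:CLsamplableLDT}) on the provers' claimed polynomial encodings, together with the point-versus-computation checks that verify the arithmetized decider predicate is satisfied. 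The composition of these stages is what the algorithm $\texttt{AnswerReduction}_{\alpha,k}$ computes; it manipulates $\langle\deciderTM\rangle$ in a black-box way (via the universal Turing machine and Cook--Levin), so its runtime is $\poly(\alpha,|\samplerTM|,|\deciderTM|)$, giving clause~1, and the output sampler depends only on $\samplerTM$ and $\alpha$ (the LDT sampler is fixed, the oracularization only reshuffles the question distribution of $\samplerTM$), giving clause~2. The CL level bookkeeping is routine: oracularization and the detyping of the LDT cost a constant number of extra registers, and composing with the original $k$-level distribution yields a $\max\{k+2,6\}$-level CL sampler, which is clause~6; the $\polylog(n)$ complexity bounds for $\samplerTM^{\Answerred}$ and $\deciderTM^{\Answerred}$ in clause~5 follow since $p,m$ and the LDT parameters are all $\polylog(n)$ and the new decider only checks polynomial evaluations and the arithmetized predicate at $O(1)$ points.

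For completeness (clause~3), I would start from a perfect oracularizable strategy for $\cG_n$ and show it lifts to a perfect oracularizable strategy for $\cG_n^{\Answerred}$. The oracularization transformation preserves perfect strategies \emph{precisely because} the input strategy is oracularizable (commuting on question pairs with $\mu(x,y)>0$): the prover who receives the pair $(x,y)$ can jointly measure $A_a^x$ and $A_b^y$ (they commute), obtaining a consistent $(a,b)$, so the consistency check with the single-question prover passes with probability $1$. Then, since $D_n(x,y,a,b)=1$ on the support, the honest polynomial encodings of the accepting computation transcript of $\deciderTM_n$ form a perfect classical proof, which the provers encode as low-individual-degree polynomials; a pair of honest provers simply evaluate these fixed polynomials at their respective points, giving a perfect strategy for the LDT and the arithmetized checks. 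One must check the resulting strategy is again oracularizable and synchronous, which holds because all the new question pairs are either LDT question pairs (whose honest strategy is classical, hence commuting) or the oracularization consistency/synchronicity pairs (commuting by construction), so the output is a perfect oracularizable strategy --- this is what the next stage (parallel repetition) and ultimately~\Cref{thm:gappedcompression} need.

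The soundness argument (clause~4, equation~\eqref{eq:ARsoundnessclause}) is the main obstacle and where I expect to spend most effort. Suppose $\omega^t(\cG_n^{\Answerred}) \geq 1-\delta$. Working with tracially embeddable strategies (via~\Cref{thm:tracialembedding}), I would first use~\Cref{lem:balancedperfecttoalmostsync} to pass to a symmetric, projective, approximately-synchronous strategy at the cost of $\delta^{1/4}$ (the game is $c$-balanced by construction with $c$ a universal constant), then apply the rounding theorem~\Cref{thm:Rounding} to get an actual synchronous strategy at cost $O(\delta^{1/8})$, so that the quantum soundness of the LDT (\Cref{thm:soundnessQLDT}, in the tracially-embeddable form~\cite[Corollary 4.4]{linTracialEmbeddableStrategies2024}) applies: there is a PVM $\{G_{\mathbf g}\}$ over low-individual-degree polynomials $\mathbf g$ such that the provers' point answers agree with $\mathbf g(s)$ up to error $\eta_{\mathrm{LD}}(p,m,d,\delta)$. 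The remaining work --- the genuinely delicate part, but one that~\cite{jiMIPRE2022a} already carries out --- is to show that this ``hidden polynomial'' $\mathbf g$ must in fact encode an accepting transcript of $\deciderTM_n$ for the sampled question/answer pair, using the arithmetized-predicate checks, the self-consistency between the introspected questions (from oracularization) and the polynomial evaluations, and~\Cref{lem:linePaulicommutation}/\Cref{lem:combmeasure}-style combination arguments to glue the various measurements; decoding then yields a strategy for the original $\cG_n$ succeeding with probability $1-O(\polylog(n)\cdot\poly(\delta))$. Contrapositively this gives $\omega^t(\cG_n)\leq 1-\eps(n)\implies \omega^t(\cG_n^{\Answerred})\leq 1-\mathbf{s}^{\Answerred}_\alpha(\eps(n),n)$ with $\mathbf{s}^{\Answerred}_\alpha = O(\polylog(n),\poly(\eps))$ as required. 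Throughout, the only place the commuting-operator case ($t=co$) differs from~\cite{jiMIPRE2022a} is in replacing finite-dimensional density-matrix and observable-switching manipulations by their tracial von Neumann algebra analogues (\Cref{fig:tracialemdtofd}), which is exactly the content of the tracially embeddable strategies framework and causes no new obstructions; I would verify each invocation of a finite-dimensional fact in~\cite{jiMIPRE2022a}'s answer-reduction soundness proof has a matching statement available in this tracial setting.
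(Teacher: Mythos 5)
Your high-level plan---oracularize, arithmetize the decider via a PCP of proximity, and verify the committed polynomials with a CL-samplable quantum low-individual-degree test, then run the soundness argument through the rounding theorem and the LDT soundness before invoking the oracularization soundness lemma---matches the paper's route through~\Cref{sec:Oracularization},~\Cref{thm:classicalPCPP} and~\Cref{sec:Answerreductionproof}. But there are two concrete errors that would break the construction as you have described it.

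First, your parameter choice $m=O(1)$ variables over $\bF_{2^p}$ with $p=O(\log n)$ does not work. The provers' answers in $\cG_n$ have length up to $n^\alpha$, and a polynomial in $m$ variables with individual degree $d$ over $\bF_{2^p}$ can encode at most $(d+1)^m p$ bits. With $m=O(1)$ you are forced into $d = n^{\Omega(1)}$, so the axis-parallel line answers in the LDT (degree-$d$ univariate polynomials) already have length $n^{\Omega(1)}$, and the decider must do degree-$n^{\Omega(1)}$ field arithmetic---there is no answer-length reduction. The paper takes $m^{\text{ans}} = O(\poly(\alpha,\log n))$ and $p = O(\polylog(\alpha,\log n))$ with bounded individual degree, which is what makes $(d+1)^{m^{\text{ans}}} p \geq n^\alpha$ compatible with $\polylog(n)$ answer lengths.

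Second, you invoke only the plain $(p,m,d)$-quantum low-individual-degree test, but the oracularized prover must simultaneously commit to the five answer/witness polynomials $\overline{\textbf{g}}_a, \overline{\textbf{g}}_b, \overline{\textbf{g}}_{w_0}, \overline{\textbf{g}}_{w_1}, \overline{\textbf{g}}_{w_2}$, the product polynomial $\textbf{g}_{\deciderTM}^{\text{Full}}$, and the $m^{\text{PCPP}}$ zero polynomials $\{\textbf{c}_i\}$ from~\Cref{lem:polyonzero}. These $6+m^{\text{PCPP}}$ polynomials must all be self-tested against the \emph{same} point question, or the decider cannot compare their evaluations in the ``Formula check'' and ``Zero on subcube test check'' of~\Cref{pseu:validatePCPP}. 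Running the plain LDT in parallel would require fresh independent question pairs and would destroy the consistency the PCPP soundness theorem needs. This is precisely what the simultaneous quantum low-individual-degree test (\Cref{fig:simuLDT},~\Cref{lem:simuQLID}) is for, and its soundness is a nontrivial extension of~\Cref{thm:soundnessQLDT} (it needs~\Cref{prop:almostexactlinear} to argue the hidden high-dimensional polynomial is exactly linear in the auxiliary coordinates). You should also make explicit the zero-on-subcube decomposition~\Cref{lem:polyonzero}, without which a verifier that evaluates the arithmetized decider at $O(1)$ points only certifies $\textbf{g}_{\deciderTM}^{\text{Full}}(s)=0$ on a vanishing fraction of the Boolean cube.

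One minor point: the paper's soundness proof goes directly from the balancedness of $\cG_n^{\Answerred}$ to~\Cref{thm:Rounding}, without routing through~\Cref{lem:balancedperfecttoalmostsync}; your extra step is harmless but redundant.
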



We remark that the universal constant $C^{\text{trivial}}$ comes from $\cG^{\text{reject}}$. The answer Reduction  procedure we use is identical to the one used in~\cite[Chapter 10]{jiMIPRE2022a}, which is a modification of the PCP of proximity based on techniques from~\cite{babaiNondeterministicExponentialTime1991a}. We further remark that due to the technique used, $\texttt{AnswerReduction}_{\alpha}$ actually works for \textbf{all} $\MIP$, $\MIP^*$, and $\MIP^{co}$. We prove~\Cref{prop:AnswerReduction} in~\Cref{sec:PCPanswerreduction}. The third step is the parallel repetition theorem for anchored games. 

\begin{proposition}[Parallel repetition] \label{prop:Parallelrepetition}
	For all constants $\alpha \in \bN$, function $\mathbf{s}(n): \bN \rightarrow [0,1]$ with $O(\mathbf{s}(n)) = O(\polylog(n))$. Then there exists a function $\mathbf{r}(n): \bN\rightarrow \bN$ with $\mathbf{r}(n) = O(\alpha ,\mathbf{s}(n))$ and a polynomial time algorithm $\texttt{Parallelrep}_{\alpha, \textbf{s}(n)}$ that takes, as input, a pair of Turing machines $(\samplerTM, \deciderTM)$ and outputs a tuple of Turing machines $(\samplerTM^{\Pararep}, \deciderTM^{\Pararep})$ with $\TIME_{\texttt{Parallelrep}_{\alpha, \mathbf{s}(n)}} (|\samplerTM|, |\deciderTM|) = O(\polylog(n))$ such that the following holds
	\begin{enumerate}
		\item (Independence of the sampler) $\samplerTM^{\Pararep}$ only depends on $\samplerTM$ and the polynomial functions $\mathbf{s}(n)$. 
		\item (Complexity bounds for the output):
		\begin{itemize}
			\item $\TIME_{\samplerTM^{\Pararep} }(n) \leq O(\poly(\mathbf{r}(n), log^{\alpha}(n)))$,
			\item $\TIME_{\deciderTM^{\Pararep} }(n) \leq O(\poly(\mathbf{r}(n), log^{\alpha}(n)))$,
		\end{itemize} 
	\end{enumerate}
	Furthermore, if the input $\verifiersequence = (\samplerTM, \deciderTM)$ is a $k$-th level synchronous CL verifier for the infinite sequence of synchronous games $\gamesequence = \{\cG_n\}_{n \in \bN}$ for some constant $k \in \bN$, and there exists a constant $n_0 \in \bN$ such that for all $n \geq n_0$, we have
	\begin{equation*}
		\samplerTM(n, \text{Parameter}), \TIME_{ \samplerTM }(n),\TIME_{ \deciderTM} \leq \log^{\alpha}(n). 
	\end{equation*}
	Then for all $n \geq n_0$
	\begin{enumerate}
		\item (Parameter) $\samplerTM^{\Answerred}$ is a ($k+1$)-th level CL sampler.
		\item (Completeness) If there exists a perfect oracularizable strategy for $\cG_n$ in model $t$, then there exists a perfect oracularizable strategy for $\cG^{\Pararep}_n$ in model $t$. 
		\item (Soundness) 
		\begin{equation*}
			\omega^t(\cG_n) \leq 1 - \mathbf{s}(n) \Longrightarrow \omega^t(\cG^{\Pararep}_n) \leq \frac{1}{2}
		\end{equation*}
	\end{enumerate}
\end{proposition}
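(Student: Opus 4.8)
# Proof Proposal for Proposition (Parallel Repetition)

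The plan is to combine the anchoring transformation with an anchored parallel repetition theorem for the commuting operator model, while being careful to preserve synchronicity, the conditionally linear structure of the question distribution, and oracularizability of perfect strategies. The statement has three moving parts: (i) constructing the algorithm $\texttt{Parallelrep}_{\alpha,\mathbf{s}(n)}$ from the $\samplerTM,\deciderTM$ of the input verifier, (ii) verifying the runtime/complexity and CL-level bookkeeping, and (iii) proving the completeness/soundness clauses. I would organize the proof around these three in that order.

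First I would define the transformation. Given a synchronous CL verifier $\verifiersequence=(\samplerTM,\deciderTM)$ for $\{\cG_n\}$, the algorithm computes the target repetition count $\mathbf{r}(n)$ as a function of $\alpha$ and $\mathbf{s}(n)$ chosen so that the anchored parallel repetition bound drives the value below $\tfrac12$; concretely $\mathbf{r}(n)=\Theta\!\left(\tfrac{\log(\text{answer-size})}{\mathbf{s}(n)^c}\right)=O(\poly(\alpha,1/\mathbf{s}(n)))$ for the universal constant $c$ coming from the anchored parallel repetition theorem (the answer/question sizes are $\polylog(n)$-bounded since $\TIME_{\samplerTM},\TIME_{\deciderTM}\le\log^\alpha n$, so $\mathbf{r}(n)=O(\polylog(n))$ and in particular $\mathbf{r}(n)=O(\alpha,\mathbf{s}(n))$ in the loose sense used in the statement). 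Then I apply a synchronicity-preserving anchoring transformation to $\cG_n$ — this is the modification of the anchoring of~\cite{bavarianAnchoredParallelRepetition2021} alluded to in the introduction, where the anchor question is added in a way compatible with the self-edge/synchronous-pair structure — and then take the $\mathbf{r}(n)$-fold parallel repetition of the anchored game. The sampler $\samplerTM^{\Pararep}$ is built by: running $\samplerTM$ to extract the CL data, producing the anchored CL functions (anchoring adds a bounded number of registers and one extra CL level, as in the detyping-style constructions of Section~\ref{sec:condlineardistribution}), and then applying the parallel composition of CL functions from \Cref{defn:parallelcomposCLfn} (equivalently \Cref{lem:parallelrefCLfunction}) $\mathbf{r}(n)$ times. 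The decider $\deciderTM^{\Pararep}$ runs $\deciderTM$ on each of the $\mathbf{r}(n)$ coordinates of the anchored game and accepts iff all accept (with the anchor-coordinate convention). Since anchoring adds only one CL level and parallel repetition preserves the level by \Cref{lem:parallelrefCLfunction}, $\samplerTM^{\Pararep}$ is a $(k+1)$-th level CL sampler, giving clause 1 of the "furthermore" part. The independence clause (clause 1 of the first list) holds because anchoring and parallel composition of the CL functions only touch $\samplerTM$ and the parameters $\alpha,\mathbf{s}(n)$, never $\deciderTM$. The runtime bounds follow by inspection: $\texttt{Parallelrep}$ itself runs in $\poly(|\samplerTM|,|\deciderTM|)$ time plus the time to write down $\mathbf{r}(n)$-fold composed CL data, which is $O(\polylog(n))$; and $\TIME_{\samplerTM^{\Pararep}},\TIME_{\deciderTM^{\Pararep}}=O(\poly(\mathbf{r}(n),\log^\alpha n))$ as claimed.

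Next, completeness. If $\cG_n$ has a perfect oracularizable synchronous strategy $\strategy$ in model $t$, then the anchored game has one: on anchor questions the provers answer with a fixed symbol (which is trivially consistent and commuting), and on non-anchor questions they play $\strategy$; oracularizability is inherited on the edges where it held before, and the anchor edges commute trivially. For the parallel-repeated game, the provers use the $\mathbf{r}(n)$-fold "tensor power" of the anchored strategy — in the tracially embeddable / commuting operator language this is the strategy on $(\alicealg,\tau)^{\otimes \mathbf{r}(n)}$ with measurement operators that are products over coordinates; this is again perfect, synchronous (a product of synchronous strategies is synchronous), and oracularizable (the product of commuting operators over independent coordinates still commutes whenever it did coordinatewise), giving clause 2.

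Finally, soundness, which is the main obstacle. Here I need the anchored parallel repetition theorem for the commuting operator model — the analogue of the main theorem of~\cite{bavarianAnchoredParallelRepetition2021}, proven in \Cref{sec:parallelrepappendix} of this paper via the tracially embeddable strategies framework. The key point, and the technical crux, is that all the information-theoretic machinery used in~\cite{bavarianAnchoredParallelRepetition2021} — conditional von Neumann / Araki relative entropy, chain rules, Pinsker-type inequalities, and an Uhlmann-theorem analogue for "rounding" a near-optimal strategy on a subset of coordinates into a strategy for a single copy — must be established for tracially embeddable strategies, where the relevant "density matrices" are the positive elements $\sigma\in\alicealg^+$ and traces are the faithful normal tracial state $\tau$. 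Modulo that framework, the argument is: suppose $\omega^t(\cG_n)\le 1-\mathbf{s}(n)$; then the anchored game has value $\le 1-\Omega(\mathbf{s}(n))$ (anchoring changes the value by a bounded multiplicative constant, since the anchor question is sampled with constant probability — the synchronicity-preserving variant is designed exactly so this still holds); then the anchored parallel repetition theorem gives $\omega^t\big((\cG_n^{\mathrm{anch}})^{\otimes \mathbf{r}(n)}\big)\le (1-\Omega(\mathbf{s}(n)))^{\Omega(\mathbf{r}(n))}\le \tfrac12$ by the choice of $\mathbf{r}(n)$. One subtlety to handle carefully: the commuting operator value is a supremum not attained in general and $C_{qc}$ need not be closed, so the argument must be run with tracially embeddable strategies (dense in $C_{qc}$ by \Cref{thm:tracialembedding}) and an $\eps$-net/limiting argument, so that the entropic quantities are all well-defined on genuine tracial von Neumann algebras; this is precisely why working inside the tracially embeddable class, rather than with arbitrary commuting operator strategies, is essential throughout. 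I would present the reduction from this Proposition to the anchored parallel repetition theorem in full here, and defer the theorem's own proof to \Cref{sec:parallelrepappendix}.
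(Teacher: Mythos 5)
Your proposal takes essentially the same route as the paper's proof: define $\mathbf{r}(n)$ from the anchored parallel repetition bound, build $\samplerTM^{\Pararep}$ by anchoring (one extra CL level) and then parallel composition via \Cref{lem:parallelrefCLfunction}, build $\deciderTM^{\Pararep}$ by coordinatewise simulation of $\deciderTM$, take tensor powers on $\cL^2(\alicealg,\tau)^{\otimes \mathbf{r}(n)}$ for completeness, and invoke \Cref{thm:anchoringparallelrep} for soundness, using $\TIME_{\deciderTM}\le\log^{\alpha}(n)$ to control the $\log|\cA|$ factor. This all matches the paper.

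One factual slip worth flagging: you write that ``$C_{qc}$ need not be closed,'' offered as the reason one must work with tracially embeddable strategies. In fact $C_{qc}$ \emph{is} closed (it is $C_q$ that fails to be closed, which is the content of $\overline{C_q}\subsetneq C_{qc}$). The actual reason for passing to tracially embeddable strategies is not a closedness issue but that (i) they come with a tracial von Neumann algebra in standard form, which is what makes the Araki relative entropy, Pinsker, and Uhlmann machinery of~\Cref{sec:parallelrepappendix} available, and (ii) they are dense in $C_{qc}$ by~\Cref{thm:tracialembedding}, so no value is lost by restricting to them. Also, you phrase soundness as first bounding the value of the anchored game and then applying a generic repetition theorem to it; as stated, \Cref{thm:anchoringparallelrep} takes $\cG_n$ with $\omega^t(\cG_n)\le 1-\mathbf{s}(n)$ directly as input and bounds $\omega^t(\cG_{n,\bot}^{\otimes r})$, so that intermediate step is unnecessary (though it would not lead you astray). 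Neither point affects the correctness of the overall argument.
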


The anchored parallel repetition theorem is proven for the tensor product model in~\cite{bavarianAnchoredParallelRepetition2021} and the commuting operator model in~\Cref{sec:parallelrepappendix}. We remark that the theorem above actually works for all verifiers as well. We show that the anchor transformation and parallel repetition of a $k$-th level synchronous CL verifier becomes a $k+1$-th level synchronous CL verifier in~\Cref{sec:parallelrep}. We are now ready to give a proof for~\Cref{thm:gappedcompression} below, which is just applying the three propositions above in sequence. 

\begin{proof}
	Given constant $(\alpha, k) \in \bN$, we specify the pseudocode for $\texttt{Gapcompress}_{\alpha, k}$ as follows
	
	\vspace{10pt}
	\IncMargin{1em}
	\begin{algorithm}[H]
		\DontPrintSemicolon
		
		\textbf{Input}: Turing Machines $(\samplerTM, \deciderTM)$. 
		
		Compute $\verifiersequence^{\QuestionRed} = (\samplerTM^{\QuestionRed}, \deciderTM^{\QuestionRed}) = \texttt{QuestionReduction}_{\alpha, k}(\samplerTM, \deciderTM)$.
		
		Compute $\alpha^{\QuestionRed} = \mathbf{s}^{\QuestionRed}(\alpha)$, where $ \mathbf{s}^{\QuestionRed}$ is the function used to define $\gamma^{\QuestionRed}$ in~\Cref{prop:QuestionReduction}. 
		
		Compute $\verifiersequence^{\Answerred} =  (\samplerTM^{\Answerred}, \deciderTM^{\Answerred}) = \texttt{AnswerReduction}_{\alpha^{\QuestionRed}}(\samplerTM^{\QuestionRed}, \deciderTM^{\QuestionRed})$.
		
		Compute $\alpha^{\Answerred} = \mathbf{s}^{\Answerred} (\alpha^{\QuestionRed})$, where $ \mathbf{s}^{\Answerred}$ is the function used to define $\gamma^{\Answerred}$ in~\Cref{prop:AnswerReduction}. 
		
		Compute the description of the function $\mathbf{s}(n) = \mathbf{s}_{\alpha^{\Answerred} }^{\Answerred} ( \mathbf{s}_{\alpha}^{\QuestionRed}( \frac{1}{2},n), n)$. Where $\mathbf{s}_{\alpha^{\Answerred} }^{\Answerred}$ (resp. $\mathbf{s}_{\alpha}^{\QuestionRed}$) is the function used in the soundness condition in~\Cref{prop:AnswerReduction} (resp.~\Cref{prop:QuestionReduction}).
		
		\textbf{Return} $\verifiersequence^{\Compressgame} =  (\samplerTM^{\Compressgame}, \deciderTM^{\Compressgame}) =  \texttt{Parallelrep}_{\alpha^{\Answerred}, \mathbf{s}(n)} (\samplerTM^{\Answerred}, \deciderTM^{\Answerred})$.
		
		\caption{The description for $\texttt{Gapcompress}_{\alpha, k}$.}
		\label{pseu:Gapcompress}
	\end{algorithm}\DecMargin{1em}
	\vspace{10pt}

	\begin{table}[!htb]
		\centering
		\footnotesize
		\vspace{1em}
		\begin{tabularx}{\textwidth}{c c c c c}
			\hline
			& \multicolumn{2}{c}{Time Complexity} &      &  \\
			\cmidrule{2-4}
			Verifier	 & Sampler & Decider   &  Level & Soundness\\
			\hline
			$\verifiersequence^{\QuestionRed}$ &  $\leq \log^{O(\poly(\alpha))}(n)$  & $\leq n^{O(\poly(\alpha))}$ & 3  &  $\omega^{t} (\cG_n) \leq \frac{1}{2} \rightarrow \omega^{t}(\cG_n^{\QuestionRed}) \leq 1 - \mathbf{s}_{\alpha}^{\QuestionRed}( \frac{1}{2},n)$ \\
			$\verifiersequence^{\Answerred}$ &  $\leq \log^{O(\poly(\alpha))}(n)$  &  $\leq \log^{O(\poly(\alpha))}(n)$  & 6  &  $\omega^{t}(\cG_n) \leq \frac{1}{2} \rightarrow \omega^{t}(\cG_n^{\Answerred}) \leq 1 - \mathbf{s}_{\alpha^{\Answerred} }^{\Answerred} (  \mathbf{s}_{\alpha}^{\QuestionRed}( \frac{1}{2},n), n)$ \\
			$\verifiersequence^{\Compressgame}$ & $\leq \log^{O(\poly(\alpha))}(n)$  & $\leq \log^{O(\poly(\alpha))}(n)$ & 7  &  $\omega^{t}(\cG_n) \leq \frac{1}{2} \rightarrow \omega^{t}(\cG_n^{\Compressgame}) \leq \frac{1}{2}$ \\
			\hline
		\end{tabularx}
		\caption{The time complexity for the 3 CL verifiers listed in~\Cref{pseu:Gapcompress} and the soundness statement for the verifier sequences assuming the input $(\samplerTM, \deciderTM)$ is a synchronous $k'$th-level CL verifier for $k' \leq k$. We remark that only the last column is dependent upon the input $(\samplerTM, \deciderTM)$ being a synchronous CL verifier with the appropriate runtime condition.}
		\label{fig:paramscompress}
	\end{table}
	
	For simplicity, we listed how the parameter changes throughout~\Cref{pseu:Gapcompress} in~\Cref{fig:paramscompress}. We verify each of the clause in~\Cref{thm:gappedcompression} below:
	\begin{itemize}
		\item (Level of the CL verifier and the output being a synchronous game): Since the output for $\texttt{AnswerReduction}_{\alpha^{\QuestionRed}}$ is always a synchronous CL verifier, and $\texttt{Parallelrep}_{\alpha^{\Answerred},\mathbf{s}(n)}$ retains synchronicity for a CL verifier. The output for $\texttt{Gapcompress}_{\alpha, k}$ will always be a synchronous game sequence. The level for the output of $\texttt{Gapcompress}_{\alpha, k}$ are tracked in~\Cref{fig:paramscompress}.
		\item (Computation time) We first see that both $\alpha^{\Answerred}$ and $\mathbf{s}(n)$ in~\Cref{pseu:Gapcompress} are polynomial functions of $\alpha$, independent from the input $(\samplerTM, \deciderTM)$. Hence, both steps 2 and 4 can be computed in $O(\poly(\alpha))$ time (or hardcoded into the description of $\texttt{Gapcompress}_{\alpha, k}$). By the similar reasoning, we have $\TIME_{\texttt{QuestionReduction}_{\alpha, k}} = \TIME_{\texttt{AnswerReduction}_{\alpha^{\QuestionRed}}} = \TIME_{\texttt{Parallelrep}_{\alpha^{\Answerred}, \mathbf{s}(n)}} = O(\poly(\alpha), |\samplerTM|, |\deciderTM|)$. Hence, $\TIME_{\texttt{Gapcompress}_{\alpha, k}} (\poly(\alpha), |\samplerTM|, |\deciderTM|)$. 
		\item (Independence of the sampler) By~\Cref{prop:Parallelrepetition}, $\samplerTM^{\Compressgame}$ only depends on the polynomial $s(n)$ (which itself only depends on $\alpha$) and $\samplerTM^{\Answerred}$. $\samplerTM^{\Answerred}$ depends, in addition to the parameter $\alpha$, on both $\samplerTM^{\QuestionRed}$, and only on $|\deciderTM^{\QuestionRed}|$ by definition. Since $|\deciderTM^{\QuestionRed}| =O(\poly(\alpha, k))$ given any $\deciderTM$ as input for $\texttt{QuestionReduction}_{\alpha, k}$. $\samplerTM^{\Compressgame}$ only depends on the parameters $\alpha$ and $k$, as claimed. 
		\item (Complexity bounds for the output) This follows from the complexity parameter, which we kept track of in~\Cref{fig:paramscompress}, and the complexity bound for the outputs of $\texttt{AnswerReduction}_{\alpha^{\QuestionRed}}$, which does not depend on the input.
	\end{itemize}
	
	Now, assume the input $\verifiersequence = (\samplerTM, \deciderTM)$ is a synchronous $k'$th level CL verifier for the infinite sequence of synchronous game $\gamesequence = \{\cG_n\}_{n \in \bN}$ for some constant $k' \in \bN$ with $k' < k$, and some constant $n_0 \in \bN$ which satisfies~\eqref{eq:gapcompressinputcondition}. Take $n_0^{\QuestionRed}$ be the constant guaranteed by~\Cref{prop:QuestionReduction} and take $n_0^{\Compressgame}$ to be the constant $n_0^{\Answerred}$ guaranteed by~\Cref{prop:AnswerReduction} (where in this case, $n_0^{\Answerred} = \poly(n_0^{\QuestionRed}, \alpha^{\QuestionRed})$ ). Fix $n > n_0^{\Compressgame}$ and let $t \in \{*, co\}$.
	
	
	\begin{itemize}
		\item (Completeness) Since the game sequence $\verifiersequence$ is synchronous, any perfect strategy for $\cG_n$ is also a synchronous strategy. Since $\texttt{QuestionReduction}_{\alpha, k}$, $\texttt{AnswerReduction}_{\alpha^{\QuestionRed}}$ and $\texttt{Parallelrep}_{\alpha^{\Answerred}, \mathbf{s}(n)}$ preserve the existence of a perfect oracularizable strategy for $\cG_n$ in model $t$,  $\texttt{Gapcompress}_{\alpha, k}$ also preserves the existence of a perfect oracularizable strategy for $\cG_n$ in model $t$.
		\item (Soundness) Assume that $\omega^{t}(\cG_n) \leq \frac{1}{2}$; By the soundness property of $\texttt{QuestionReduction}_{\alpha, k}$, the previous condition implies that $\omega^{t}(\cG_n^{\QuestionRed}) \leq 1 -  \mathbf{s}_{\alpha}^{\QuestionRed}( \frac{1}{2},n)$, which, by the soundness property of $\texttt{AnswerReduction}_{\alpha^{\QuestionRed}}$, implies that $\omega^{t}(\cG_n^{\QuestionRed}) \leq 1 - \mathbf{s}_{\alpha^{\Answerred} }^{\Answerred} (  \mathbf{s}_{\alpha}^{\QuestionRed}( \frac{1}{2},n), n)$. The completeness condition follows from the soundness condition of $\texttt{Parallelrep}_{\alpha^{\Answerred}, \mathbf{s}(n)}$.
	\end{itemize}
\end{proof}

\end{CJK*}
\section{Question Reduction}~\label{sec:introspection}
In this section, we give a proof for~\Cref{prop:QuestionReduction} by showing an algorithm that takes as input a synchronous CL verifier and transforms it into another synchronous CL verifier with a lower sampling complexity. Intuitively, this is done by asking both provers to sample their own question pairs for the $n$th game, and play the game based on the question pair they sampled. This procedure might first seem counter-intuitive, since the provers can always pre-select a question pair before the game  rather than sampling it honestly during the interaction. 

Roughly speaking, the verifier takes advantage of the entanglement shared between the provers to force them to sample a ``fresh" question pair for the given game. By leveraging self-testing techniques, the verifier can force the provers to make certain measurements on their entangled resources, thereby generating a question pair for the original game. 

The question reduction protocol consists of two components. The first component is the $n$-Pauli basis test. This is a subroutine that forces the provers to perform either an all $X$ or all $Z$ Pauli measurements on $n^{\alpha}$ EPR pairs, and we present this protocol in~\Cref{sec:Paulibasis}. The second component is the introspection test, where the verifier forces the provers to perform a specific set of measurements on the $n^{\alpha}$ EPR pairs, whereby the measurement outcomes are precisely the input distribution for the original game. 

\subsection{The magic square game}

\begin{figure}[!b]
	\begin{center}
		\begin{tabular}{|c|c|c|}
			\hline
			$x_1$ & $x_2$ & $x_3$ \\
			\hline 
			$x_4$ & $x_5$ & $x_6$ \\
			\hline
			$x_7$ & $x_8$ & $x_9$ \\
			\hline
		\end{tabular} 		\qquad
		\begin{tabular}{|c|c|c|}
			\hline
			$\cI_2 \otimes \rho^{Z}$ & $\rho^{Z} \otimes \cI_2$ & $\rho^{Z} \otimes \rho^{Z}$ \\
			\hline 
			$\rho^{X} \otimes \cI_2$ & $\cI_2 \otimes \rho^{Z}$ & $\rho^{X} \otimes \rho^{X}$ \\
			\hline
			$\rho^{X} \otimes \rho^{Z}$ & 	$\rho^{Z} \otimes \rho^{X}$ & $ - \left(\rho^{X}  \rho^{Z} \right) \otimes  \left(\rho^{X}  \rho^{Z} \right)$ \\
			\hline
		\end{tabular}
		
	\end{center}
	\caption{Left: The description for the magic square game, where each row and column corresponds to an equation. Right: A oracularizable perfect strategy for the magic square game.}
	\label{fig:magicsquare}
\end{figure}
We first introduce a key subroutine for the Pauli basis test, the Mermin-Peres magic square game~\cite{merminSimpleUnifiedForm1990, peresIncompatibleResultsQuantum1990}, in this subsection. We use the BCS formulation of this game as presented in~\cite{cleveCharacterizationBinaryConstraint2014}, where the game is defined by six equations and nine variables over $\bF_2$, where the variables are arranged on a three-by-three grid as presented in~\Cref{fig:magicsquare}. Every row and column in~\Cref{fig:magicsquare} corresponds to a constraint that multiplies to $1$, except for the last column, where the constraint multiplies to $-1$ instead. In this game, the referee randomly samples a constraint and a variable in the constraint and sends the constraint to one of the provers and the variable to the other prover. The prover must then respond with an assignment for their given constraint or variable. The provers win the game if their assignments are consistent with each other. If one of the provers is given an equation as the question, their assignment must also satisfy the constraint for the given equation. We also modify the magic square to be synchronous, meaning the verifier additionally samples a constraint or a variable with constant probability and sends it to both provers and expects the same answer in return.

The magic square game admits a perfect synchronous oracularizable strategy for both quantum models by using the measurement operator defined in~\Cref{fig:magicsquare}. In this strategy, the provers initially prepare two copies $\MEstate{2}$. In the event that the prover receives a variable, he measures the observable as displayed in the grid and returns the resulting eigenvalue (which is either $1$ or $-1$) as the assignment to the variable. If the prover receives an equation, he measures the observables for all three variables, returning the eigenvalue for each of the observables as the corresponding assignment for each of the variables. Since each observable on the grid commutes with all other observables that share a row or column with it, the order of measurement does not matter for the constraint question, and their measurement commutes on all possible question pairs, which implies that this perfect strategy is oracularizable.

\subsection{The Pauli basis test} \label{sec:Paulibasis}
\jqnote{TODO: You need to change all the reference for~\cite{linTracialEmbeddableStrategies2024} to the correct ArXiv version once this is done.}

In this section, we recall the Pauli basis test. In this paper, we use the version of the Pauli basis test based on the elegant simplification given in~\cite{delasalleSpectralGapStability2022} and presented it similarly to~\cite[Section 6]{linTracialEmbeddableStrategies2024}. In comparison to the original Pauli basis test given in~\cite[Section 7]{jiMIPRE2022a}, this version does not rely on the low-individual degree test. Since this is a simple adaptation of the Pauli basis test, we present the protocol as is, and instead refer the reader to either~\cite{delasalleSpectralGapStability2022} or~\cite[Section 6]{linTracialEmbeddableStrategies2024} for more intuition.

\jqnote{TODO, rewrite this, or move this to ~\cite{linTracialEmbeddableStrategies2024}}
Intuitively, the goal of the Pauli basis test is to force two honest provers to prepare $n$ copies of EPR pairs between them and measure either $(\rho^X)^{\tensor n}$ or $(\rho^Z)^{\tensor n}$ on their half of the EPR pair. For $n \in \bN$, we define the $n$ qubit Pauli basis test as the $\mu$-dependent Pauli basis test defined in~\cite[Section 6.3]{linTracialEmbeddableStrategies2024}, where $\mu$ is the uniform distribution over a subset $S^{\text{Paulibasis}}_n \subseteq \{0,1\}^n$ such that the spectral gap of $\mu$ is a constant.

We remark that the subset suggested in~\cite[Theorem 1.3]{delasalleSpectralGapStability2022} cannot be used directly in this context since it cannot be uniformly generated by a single circuit. Instead, recall in~\cite[Example 1.2]{delasalleSpectralGapStability2022}, given any $[k, n, d]$ binary linear code $C$, the code space for $C$, $S_{C} \subset \bF_{2}^n$, is a subset such that the uniform distribution of $S_C$ has a spectral gap of $\frac{2k}{d}$. Intuitively, any good binary linear code would lead to an efficient EPR tester. For any $n \in \bN$, consider the Justesen code~\cite{justesenClassConstructiveAsymptotically1972} with $R = \frac{\log(n)}{n}$, by definition this is a code with dimension $k = \lfloor \log(n) \rfloor$, length $n$ and distance $d \geq  0.11(n -  \log(n))$. Let $S^{\text{PB}}_n \subseteq\{0,1\}^n = \bF_2^n$ be the code space of the Justesen code mentioned above. Since the encoding map for the Justesen code can be implemented in $O(\poly(n))$ time, there exists an encoding map $\mathbf{\pi}^{\text{PB}}: \bN \times \{0,1\}^* \rightarrow \{0,1\}^*$ such that $\mathbf{\pi}^{\text{PB}} (n, s)$ is a bijection map that maps elements from $\{0,1\}^{\lfloor \log(n) \rfloor}$ to $S^{\text{PB}}_n$, with $\TIME_{\mathbf{\pi}^{\text{PB}}} (n) = \poly(n)$. Furthermore, the uniform distribution on $S^{\text{PB}}_n$ has a spectral gap of $\frac{2k}{d} \leq \frac{\log(n)}{0.11(n -  \log(n))} \leq 30$ for all $n$. Hence, it is sufficient to take $S^{\text{Paulibasis}}_n = S^{\text{PB}}_n$ in this context.

We present the sample/decision procedure for the $n$-Pauli basis test in~\Cref{fig:PauliBasis}, and provide a diagram representation for the input distribution in~\Cref{fig:PauliBasisgraph}.  We recall the following rigidity theorem about the $n$ qubit Pauli basis test.

\begin{figure}[!b]
	\centering
	\begin{tikzpicture}[scale=.8]
		
		\tikzset{type/.style args={[#1]#2}{
				draw,circle,fill,scale=0.25,
				label={[font=\scriptsize, label distance=1pt]#1:#2}
		}}
		
		\foreach \i in {1,...,6} \draw (0,8-9/8*\i)
		coordinate (Constraint-\i)
		node[type={[180]$\text{Constrant}_\i$}] {};
		
		\foreach \i in {1,...,9} \draw (2.5,9-\i)
		coordinate (Variable-\i)
		node[type={[330]$\text{Variable}_\i$}] {};
		

		\draw (5.5,8) coordinate (Cord-X) node[type={[45]$(\text{Coordinate},X)$}] {};
		\draw (5.5,4) coordinate (Cord-Z) node[type={[315]$(\text{Coordinate},Z)$}] {};
		
		\foreach \i in {1,...,3} \foreach \j in {1,...,3}
		\pgfmathsetmacro{\k}{(\i-1)*3+\j}
		\draw[blue] (Constraint-\i) -- (Variable-\k);
		
		\foreach \i in {4,...,6} \foreach \j in {1,...,3}
		\pgfmathsetmacro{\k}{\i-3+(\j-1)*3}
		\draw[blue] (Constraint-\i) -- (Variable-\k);
		\draw[blue] (Variable-1) -- (Cord-X);
		\draw[blue] (Variable-5) -- (Cord-Z);
		
		\draw (8,8) coordinate (Pauli-X) node[type={[0]$(\text{Pauli}, X)$}] {};
		\draw (8,4) coordinate (Pauli-Z) node[type={[0]$(\text{Pauli},Z)$}] {};
		\draw (8,6) coordinate (Commutation) node[type={[0]$\text{Commutation}$}] {};
		\draw (7,7) coordinate (Commutation-X) node[type={[0]$(\text{Commutation}, X)$}] {};
		\draw (7,5) coordinate (Commutation-Z) node[type={[0]$(\text{Commutation}, Z)$}] {};
		
		\foreach \from \to in {Cord-X/Commutation-X, Cord-Z/Commutation-Z,Commutation-Z/Commutation, Commutation-X/Commutation}
		\draw[red] (\from) -- (\to);
		\foreach \from \to in {Cord-X/Pauli-X, Cord-Z/Pauli-Z}
		\draw (\from) -- (\to);
	\end{tikzpicture}
	
	\caption{The typed graph $(\decidertypedtype^{\text{PB}}, \decidertypedquestionpair^{\text{PB}})$ for the $n$ qubit Pauli basis test, where each of the vertices above also contains a self-loop (this is a black edge). Similar to the definition of the CL distribution, each vertices in the graph represents a potential question label and each edge represents a potential question pair. Each of the edges are colour coded in order to better explain the game procedure and we refer to~\Cref{fig:PauliBasis} for more details. }
	\label{fig:PauliBasisgraph}
\end{figure}

\begin{figure}[!htbp]
	\centering
	\begin{gamespec}
		\setlength{\tabcolsep}{1em}
		\begin{tabularx}{\textwidth}{ l   l   X   }
			\toprule
			Question label & Question content & Answer format \\
			\midrule
			(Pauli,$W$) &  & $ t_{W} \in \{0,1\}^n$ \\
			(Coordinate, $W$)& $u_W \in S^{\text{PB}}_n$& $ t_{\text{(Pauli, W)}} \in \{0,1\}^{n}$ \\
			(Commutation, $W$)& $u_W \in S^{\text{PB}}_n$& $t_W  \in \{0,1\}$ \\
			Commutation& $(u_X,u_Z)  \in S^{\text{PB}}_n \times S^{\text{PB}}_n$& $(t_{X}', t_{Z}') \in \{0,1\}^{2n}$ \\
			$\text{Variable}_i$& $(u_X,u_Z) \in S^{\text{PB}}_n \times S^{\text{PB}}_n$& $t_{\text{var}} \in \{0,1\}$ \\
			$\text{Constraint}_i$& $(u_X,u_Z)\in S^{\text{PB}}_n \times S^{\text{PB}}_n$& $t_{\text{cons}} \in \{0,1\}^{3}$ \\
			\bottomrule
			\multicolumn{3}{c}{Figure: Q and A format for the $n$ qubit Pauli basis test, where $W\in \{X,Z\}$. }
		\end{tabularx}
		\begin{center}
			\textbf{Sampling procedure}
		\end{center}
		\begin{enumerate}
			\item Sample $(u_X,u_Z) \in S^{\text{PB}}_n \times S^{\text{PB}}_n$ uniformly at random. 
			\item Uniformly samples $(n_0, n_1) \in \decidertypedtype^{\text{PB}} \times \decidertypedtype^{\text{Paulibasis}}$, where $(\decidertypedtype^{\text{PB}},  \decidertypedquestionpair^{\text{PB}})$ is the graph in~\Cref{fig:PauliBasisgraph}, and perform rejection sampling until $(n_0, n_1) \in \decidertypedquestionpair^{\text{PB}}$. 
			\item Send the question label and question content corresponding to $n_0$ to one of the provers, and send the question content corresponding to the $n_1$ to the other prover.
		\end{enumerate}
		\begin{center}
			\textbf{Verification procedure}
			
		\end{center}
		\begin{itemize}
			\item (Self-loop): The provers win iff they output the same answer. 
			\item (Pauli, $W$) \textbf{--}  (Coordinate, $W$): Alice and Bob win iff $t_{W}|_{u_W} = t_{\text{(Pauli, W)}}|_{u_W}$. 
			\item  If $(n_0, n_1)$ are a red edge and $u_x \cdot u_z = 1$, the provers win if for the question label (Commutation, $W$) and (Commutation), the prover answers $0$, otherwise
			\begin{itemize}
				\item  (Coordinate, $W$)  \color{red} \textbf{--}   \color{black}  (Commutation, $W$): The provers win iff $u_W \cdot t_{\text{Pauli},W} = t_W$. 
				\item  (Commutation)  \color{red} \textbf{--}   \color{black}  (Commutation, $W$): The provers win iff $t_W = t_W'$. 
			\end{itemize}
			\item If $(n_0, n_1)$ are a blue edge and $u_x \cdot u_z = 0$, the provers wins if for the question label (Constraint, $i$) and (Variable, $j$), the prover answers $0$, otherwise:
			\begin{itemize}
				\item (Variable $1$)  \color{blue} \textbf{--}   \color{black}  (Coordinate, $X$): The provers iff $u_X \cdot t_X = t_{\text{var}}$.
				\item (Variable $5$)  \color{blue} \textbf{--}   \color{black}  (Coordinate, $Z$): The provers iff $u_Z \cdot t_Z = t_{\text{var}}$.
				\item (Constraint)  \color{blue} \textbf{--}  \color{black}   (Variable): The provers win iff $t_{\text{cos}}$ is consistent with the constraint in the magic square game, and $t_{\text{var}}$ is consistent with the assignment of $v_i$ within $t_{\text{cos}}$. 
			\end{itemize}
		\end{itemize}
		\vspace{1em}
	\end{gamespec}
	\caption{The description for the $n$ qubit Pauli basis test. Where $W \in \{X, Z\}$ in the decision procedure. }
	\label{fig:PauliBasis}
\end{figure}

\jqnote{TODO: I think this proof can be simplify for synchronous strategies, also to projectors are not trivial}
\begin{theorem}[Rigidity for the $n$ qubit Pauli basis test] \label{thm:soundnessPaulibasis}
	Let $\cG^{\text{PB}}_n$ be the $n$ qubit Pauli basis test and let $\strategy = (\cL^2(\alicealg, \tau), \sigma \ket{\tau}, \{P_a^x \}_{x \in \cX})$ be a projective, tracially embeddable strategy such that $\omega(\cG^{\text{PB}}_n, \strategy) \geq 1 - \eps$. 
	There exist two isometries $V_{A}: \cL^2(\alicealg, \tau) \rightarrow  \cL^2(\alicealg, \tau) \tensor \bC^{2^{2n}}$ and $V_{B}: \cL^2(\alicealg, \tau) \rightarrow \cL^2(\alicealg, \tau) \tensor \bC^{2^{2n}}$ with $( V_B \tensor \cI_{2^{2n}}) V_{A} =  V_{A} (V_B \tensor \bI_{2^{2n}})$ and a state $\ket{\text{Aux}} \in \cL^2(\alicealg, \tau) \otimes \bC^{2^{2n}}$ such that
	\begin{equation*} 
		\left\| \left(  V_B \tensor \cI_{2^{2n}}  \right) V_{A} \left(\sigma\ket{\tau}\right)  -\ket{\text{Aux}}  \ket{\text{ME}_2}^{\tensor n} \right\|^2 \leq O\left(\poly(\eps)\right),
	\end{equation*}
	and for all $W \in \{X, Z\}$ and $u \in \bF_{2^{n}}$
	\begin{align*} 
		&|  ((V_{A} P_s^{(\text{Pauli}, W)}  V_{A}^*)_{\alicealg A_1 A_2} \tensor (\cI_{2^{2n}} )_{B_1 B_2} - \\
		&\quad (\cI_{\cH})_{\alicealg} \otimes (\rho^{W}_s)_{A_1} \tensor (\cI_{2^{n}})_{A_2} \otimes  (\cI_{2^{2n}} )_{B_1 B_2} ) \ket{\text{Aux}}_{\alicealg A_2 B_2}  \MEstate{2}^{\tensor n}_{A_1 B_1}  |^2 \leq O\left(\poly(\eps )\right). \\
		&|  ((V_{B} (P_s^{(\text{Pauli}, W)})^{\text{op}}  V_{B}^*)_{\alicealg B_1 B_2} \tensor (\cI_{2^{2n}} )_{A_1 A_2} - \\
		&\quad (\cI_{\cH})_{\alicealg} \otimes (\rho^{W}_s)_{B_1} \tensor (\cI_{2^{n}})_{B_2} \otimes  (\cI_{2^{2n}} )_{A_1 A_2} ) \ket{\text{Aux}}_{\alicealg A_2 B_2}  \MEstate{2}^{\tensor n}_{A_1 B_1}  |^2 \leq O\left(\poly(\eps )\right). 
	\end{align*}

	where the subscript $\alicealg$, $A_1$, $B_1$, $A_2$, $B_2$ and $\alicealg$ denotes the registers on which each operator acts (listed for clarity). 
\end{theorem}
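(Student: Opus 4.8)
\textbf{Proof plan for \Cref{thm:soundnessPaulibasis}.}

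The plan is to reduce the rigidity of the full $n$-qubit Pauli basis test to the rigidity of its constituent gadget tests, following the structure of \cite[Section 7]{jiMIPRE2022a} and \cite[Section 6]{linTracialEmbeddableStrategies2024}, but carried out entirely within the tracially embeddable framework so that no finite-dimensionality is needed. First I would set up notation: from the projective strategy $\strategy$ extract, for each $W \in \{X,Z\}$, the binary observables $\tau^W(u) := \sum_{q} (-1)^{q \cdot u} P_q^{(\text{Pauli},W)}$ associated to the ``Pauli'' questions, and note that since the strategy is synchronous these observables on Alice's side are $\deltasync$-consistent with their opposite counterparts on Bob's side, via \eqref{eq:distancemeasuresyncstrategy}. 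The first main step is a \emph{linearity/consistency} analysis: the (Coordinate,$W$)--(Commutation,$W$) and Commutation--(Commutation,$W$) edges of the typed graph (the red edges of \Cref{fig:PauliBasisgraph}) force, up to $O(\poly(\eps))$ error in Hilbert--Schmidt norm, that the observables $\tau^W(u)$ are approximately multiplicative, $\tau^W(u)\tau^W(v) \approx \tau^W(u+v)$, using the structure of the chosen subset $S^{\text{PB}}_n$ and the fact that its uniform distribution has a constant spectral gap (so the approximate-homomorphism-on-a-spanning-set argument of de la Salle applies and promotes to the full group $\bF_2^n$). The second main step uses the magic-square edges (the blue edges) together with the (Variable$_1$)--(Coordinate,$X$) and (Variable$_5$)--(Coordinate,$Z$) edges and the perfect oracularizable strategy for the magic square game to force the \emph{anticommutation} relation $\tau^X(u)\tau^Z(v) \approx (-1)^{u\cdot v}\tau^Z(v)\tau^X(u)$ up to $O(\poly(\eps))$; this is exactly where the magic square game is needed, since it is the gadget that certifies a single pair of anticommuting observables, and the ``Coordinate'' questions bootstrap this to all coordinates simultaneously.

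Once I have approximate versions of the generalized Pauli relations \eqref{eq:genpauliaddmult} and \eqref{eq:genpaulicomm} for the family $\{\tau^X(u), \tau^Z(u)\}_{u}$ acting on $\cL^2(\alicealg,\tau)$, together with their opposite-algebra images on Bob's side, the third step is a \emph{representation-theoretic stabilization}: I would invoke the stability theorem for the Pauli group / Weyl relations (the tracial-von-Neumann-algebra analogue, available since we work with $\sigma\ket{\tau}$ and the Hilbert--Schmidt norm — this is the content that \cite{linTracialEmbeddableStrategies2024} ports from \cite{delasalleSpectralGapStability2022} and the orthogonalization lemma \Cref{lem:orthogonalizationlemma}) to conclude that there is an isometry $V_A : \cL^2(\alicealg,\tau) \to \cL^2(\alicealg,\tau)\otimes \bC^{2^{2n}}$ conjugating $\tau^W(u)$ into $\cI \otimes \rho^{W}(u)_{A_1}\otimes \cI_{A_2}$ up to $O(\poly(\eps))$, and similarly $V_B$ on Bob's side from the opposite observables. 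The commutation $(V_B \otimes \cI)V_A = V_A(V_B\otimes \cI)$ follows because Alice's observables lie in $\alicealg$ and Bob's in $\alicealg'$, so the two dilations act on genuinely independent ancilla registers; I would need to be careful to build $V_A$ and $V_B$ so that their ancilla spaces $A_1A_2$ and $B_1B_2$ are introduced coherently, which is the standard ``joint dilation'' bookkeeping. Finally, the state $\ket{\text{Aux}}$ is defined as the residual state obtained by pulling $\MEstate 2^{\otimes n}$ out of $(V_B\otimes\cI)V_A(\sigma\ket{\tau})$; the consistency (Pauli,$W$)--(Coordinate,$W$) checks then guarantee that Alice's and Bob's $\rho^W$-measurements agree on the shared EPR register, which is precisely what forces the $\MEstate 2^{\otimes n}$ factor to appear and gives the first displayed inequality, and then the observable approximations give the two displayed operator inequalities by a direct triangle-inequality computation (converting between the observable form $\rho^W(u)$ via \eqref{eq:genpaulitomeasure} and the projector form $\rho^W_s$ using \Cref{lem:closetodistance} and data processing).

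The main obstacle I anticipate is the representation-stabilization step in the tracial setting: in finite dimensions one diagonalizes and counts multiplicities, but over a general tracial von Neumann algebra one must instead argue that the $C^*$-algebra generated by the approximate Pauli observables is, up to small error and after compressing by a projection close to the identity, isomorphic to $\cM_{2^n}(\bC) \otimes (\text{commutant piece})$ — this is where \Cref{lem:orthogonalizationlemma} and the fact that we measure errors in $\|\cdot\|_{2,\tau}$ (so that ``$\sigma\ket{\tau}$ is separating'' can be exploited) do the heavy lifting, and one must track how the error degrades under the composition of the linearity step, the anticommutation step, and the stabilization step so that the final bound is still $O(\poly(\eps))$. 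A secondary technical point is the uniformity requirement on $S^{\text{PB}}_n$: I would use the Justesen-code construction already fixed in the text so that the constant spectral gap (bounded by $30$ for all $n$) makes all the implied constants in the $O(\poly(\eps))$ bounds genuinely independent of $n$, which is the key improvement over \cite{jiMIPRE2022a} and is what makes the soundness statement in \Cref{prop:QuestionReduction} free of $n$-dependence.
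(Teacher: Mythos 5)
Your overall strategy — extract a family of binary observables from the Pauli questions, certify (anti)commutation from the game, invoke a Gowers–Hatami / spectral-gap stability theorem for the Pauli group valid over a tracial von Neumann algebra, and then do joint-dilation bookkeeping to build $V_A, V_B$ commuting because they act on ancilla registers adjoined on opposite sides — is indeed the approach behind this theorem. You should be aware, though, that the paper itself does not prove this theorem from scratch: it cites \cite[Theorem~6.4]{linTracialEmbeddableStrategies2024} (the $\mu$-dependent Pauli basis test for tracially embeddable strategies, which is the port of de la Salle's argument to that setting) and only remarks on the minor modifications needed to match the exact sampling distribution used here. So your sketch is at a much finer granularity than the paper provides, but it is aligned with the cited reference rather than opposed to it.

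There is, however, a concrete misattribution in your ``first main step.'' The observables $\tau^W(u) = \sum_q (-1)^{q\cdot u} P^{(\text{Pauli},W)}_q$ are the Fourier coefficients of a \emph{single} projective measurement, so the within-basis multiplicativity $\tau^W(u)\tau^W(v) = \tau^W(u+v)$ is exact by construction — it needs no certification, and in particular the red (Commutation) edges have nothing to do with it. What the red edges certify is the cross-basis commutation $\tau^X(u_X)\tau^Z(u_Z) \approx \tau^Z(u_Z)\tau^X(u_X)$ when $u_X\cdot u_Z = 0$, complementing the anticommutation certified by the blue (magic-square) edges when $u_X\cdot u_Z = 1$. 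Correspondingly, the role of the spectral gap of the uniform distribution on $S^{\text{PB}}_n$ is \emph{not} to promote an approximate homomorphism to a genuine one (which would only make sense if multiplicativity were approximate), but to promote the (anti)commutation relations — certified only for $u_X, u_Z \in S^{\text{PB}}_n$ — to all of $\bF_2^n$, which is what de la Salle's stability theorem is doing in this context. If you do not correct this, the ``first step'' of your proof does no work, and the argument as written never establishes the commutation relations outside the code $S^{\text{PB}}_n$, which would leave a real gap before the stabilization step can be invoked. Once the attribution is fixed, the rest of your outline (anticommutation from magic square, representation stabilization via the orthogonalization lemma in $\|\cdot\|_{2,\tau}$, joint dilation, EPR extraction from self-loop consistency, and the $n$-independence coming from the constant spectral gap of the Justesen code space) is consistent with the cited proof.
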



The above rigidity follows from~\cite[Theorem 6.4]{linTracialEmbeddableStrategies2024} by defining $\mu$ as the uniform distribution of $S^{\text{PB}}_n$ as per the discussion above. We remark that in comparison to the $\mu$-dependent Pauli basis test defined in~\cite[Figure 3]{linTracialEmbeddableStrategies2024}, the sampling procedure for the $n$ qubit Pauli basis test is changed so that it is easier to show that the input distribution is samplable via a typed CL distribution. Although the anti-commutation test (the red vertices in~\Cref{fig:PauliBasisgraph}) within the $n$ qubit Pauli basis test is nine times more likely to occur than the commutation test (the blue vertices), the ratio between the likelihood of the two tests is still a constant. Furthermore, the question types (Variable 1), (Variable, 5), (Commutation, X), and (Commutation, Z) are added to ensure that there exists a perfect oracularizable strategy for the test. ~\Cref{thm:soundnessPaulibasis} still follows by modifying the inequality of the proof for~\cite[Lemma B.1]{linTracialEmbeddableStrategies2024} with a larger constant in the case where $u \cdot v = 0$, and changing equation (67) and (70) to incorporate the extra question labels added. 



In some sense, we can view~\Cref{thm:soundnessPaulibasis} as the ``soundness" condition about the Pauli basis test, since a $1 - \eps$ approximate strategy guarantees an approximate version of ``Pauli $X$ or Pauli $Z$ measurements on $n$-EPR pairs". In the following theorem, we show the ``completeness" and the ``runtime" condition related to the $n$ qubit Pauli basis test.

\begin{theorem}[Properties of the $n$ qubit Pauli basis test] \label{thm:completenessPaulibasis}
	Let $\cG^{\text{PB}}_n$ be the $n$ qubit Pauli basis test
	\begin{enumerate}
		\item (Computation time): $\cG^{\text{PB}}_n$ is samplable via a $(\decidertypedtype^{\text{PB}}, \decidertypedquestionpair^{\text{PB}}, \{\decidertypedfunction^{v}\}_{v \in \decidertypedtype^{\text{PB}}} )$ typed CL distribution, where each $\decidertypedtype^{\text{PB}}: \bF_2^{2 \cdot s^{\text{SB}}_n} \rightarrow  \bF_2^{2 \cdot s^{\text{SB}}_n}$ is a first level CL function, where $s^{\text{SB}}_n = \lfloor \log(n) \rfloor$.  $\cG^{\text{PB}}_n$ has a decision complexity of $\poly(n)$. 
		\item (Completeness): There exists a perfect finite-dimensional symmetric oracularizable strategy $\strategy^{\text{PB}} = (\bC^{2^n} \otimes \bC^{2^n}, \MEstate{2}^{\otimes n} \otimes \MEstate{2}, \{P_a^x \})$ such that for all $s \in \{0,1\}^n$ and $W \in \{X,Z\}$
		\begin{equation*}
			 P_{s}^{(\text{Pauli}, W)} =  \rho^W_{s} \otimes \cI_2
		\end{equation*}
	\end{enumerate}
\end{theorem}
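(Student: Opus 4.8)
\textbf{Proof plan for Theorem~\ref{thm:completenessPaulibasis}.}

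The plan is to verify the two clauses separately, as they are essentially bookkeeping lemmas about the protocol of \Cref{fig:PauliBasis}, most of which has already been set up in \Cref{sec:Paulibasis}.

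For the \emph{computation time} clause, I would first recall from the discussion preceding the theorem that $S^{\text{PB}}_n$ is the code space of a Justesen code of length $n$ with encoding map $\mathbf{\pi}^{\text{PB}}(n,\cdot)$ computable in $\poly(n)$ time. The sampling procedure in \Cref{fig:PauliBasis} draws $(u_X,u_Z)$ uniformly from $S^{\text{PB}}_n\times S^{\text{PB}}_n$, which by the bijection $\mathbf{\pi}^{\text{PB}}$ is the same as drawing a seed in $\{0,1\}^{2\lfloor\log n\rfloor}$ and applying a linear encoding map; then it performs rejection sampling over the fixed typed graph $(\decidertypedtype^{\text{PB}},\decidertypedquestionpair^{\text{PB}})$ of \Cref{fig:PauliBasisgraph}. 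Since $|\decidertypedtype^{\text{PB}}|$ is a constant, this is exactly the shape of a $(\decidertypedtype^{\text{PB}},\decidertypedquestionpair^{\text{PB}},\{\decidertypedfunction^v\})$-typed CL distribution in the sense of \Cref{def:typedCLsampler}, where on the seed space $\bF_2^{2\lfloor\log n\rfloor}$ each $\decidertypedfunction^v$ is a first-level CL function: for the question labels that receive the full pair $(u_X,u_Z)$ we take $\decidertypedfunction^v=\mathrm{id}$ (on the two registers carrying the $X$-coordinate encoding and the $Z$-coordinate encoding), and for the labels that receive only $u_X$ (resp.\ only $u_Z$) we take the projection onto the corresponding register, and for the labels (Pauli,$W$) we take the zero map (no question content). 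So each $\decidertypedfunction^{\text{PB}}$ acts as a linear map $\bF_2^{2\cdot s^{\text{PB}}_n}\to\bF_2^{2\cdot s^{\text{PB}}_n}$ with $s^{\text{PB}}_n=\lfloor\log n\rfloor$, composed afterwards with the $\poly(n)$-time linear encoder of the Justesen code applied inside the decider. For the decision complexity, I would note that the verification procedure only computes inner products $u_W\cdot t_W\bmod 2$, equality checks of $n$-bit strings, and a constant-size magic-square constraint check, together with one application of the Justesen encoder to map the seed to $S^{\text{PB}}_n$; each of these is $\poly(n)$, so the decision complexity is $\poly(n)$. One then invokes \Cref{lem:detypeandsync} (or just leaves the game in typed form, since the proposition asks only for CL-\emph{samplability} via the typed distribution) to conclude.

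For the \emph{completeness} clause, I would construct the symmetric oracularizable strategy $\strategy^{\text{PB}}$ explicitly. The provers share $n$ copies of $\MEstate{2}$ (indexed by the $n$ coordinates) together with two extra copies of $\MEstate{2}$ used to play the magic square game, matching the state $\MEstate{2}^{\otimes n}\otimes\MEstate{2}$ written in the statement (I would double-check the exact count of auxiliary EPR pairs against the magic-square strategy of \Cref{fig:magicsquare}, which uses two qubit pairs, and adjust the tensor power if the statement intends $\MEstate{2}^{\otimes(n+2)}$). On question (Pauli,$W$) the prover measures $\rho^W$ on each of the $n$ coordinate EPR halves, i.e.\ the PVM $\{\rho^W_s\otimes\cI_2\}_{s\in\{0,1\}^n}$, giving $P_s^{(\text{Pauli},W)}=\rho^W_s\otimes\cI_2$ as required; on (Coordinate,$W$) with content $u_W$ the prover does the same measurement and reports $t_{\text{(Pauli},W)}$; on (Commutation,$W$) the prover measures the single observable $\rho^W(u_W)=\sum_s(-1)^{s\cdot u_W}\rho^W_s$ (which it may instead compute as the parity $u_W\cdot t$ of a $\rho^W$ coordinate measurement $t$, so that consistency with the coordinate label is automatic); on the Variable/Constraint labels it plays the fixed oracularizable magic-square strategy of \Cref{fig:magicsquare} on the two auxiliary pairs, with (Variable~1)/(Variable~5) cross-linked to the $X$/$Z$ coordinate parities via the commutation relation $\rho^X(a)\rho^Z(b)=(-1)^{\Tr(ab)}\rho^Z(b)\rho^X(a)$ of \eqref{eq:genpaulicomm} specialized to $p=1$. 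I would then check each edge of \Cref{fig:PauliBasisgraph}: the Pauli--Coordinate edge passes since both report the restriction of the same measurement outcome; the red edges pass by the twisted commutation relation exactly when $u_X\cdot u_Z=1$ (and the clause "answer $0$" covers $u_X\cdot u_Z=0$); the blue edges pass by the magic-square constraint satisfaction together with the eigenvalue bookkeeping; and all self-loops pass trivially since the strategy is deterministic-on-measurement-outcomes and symmetric. Finally I would verify oracularizability: all the observables involved — coordinate-wise $\rho^W$ measurements, the single observables $\rho^W(u_W)$, and the magic-square observables restricted to a common row/column — pairwise commute on every question pair that appears with positive probability in $\mu$, so $[A^x_a,B^y_b]=0$ in the sense of \Cref{def:Oracularizablestrategies}, and symmetry $A^x_a=B^x_a$ holds because both provers run the identical algorithm.

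The main obstacle I anticipate is not conceptual but careful edge-by-edge bookkeeping in the completeness proof: making sure the auxiliary magic-square register is wired to the coordinate measurements so that the cross-consistency checks (Variable~1)--(Coordinate,$X$) and (Variable~5)--(Coordinate,$Z$) are satisfied simultaneously with the internal magic-square constraints, and getting the count of auxiliary EPR pairs to match the state displayed in the statement. This is exactly the kind of check that \Cref{thm:soundnessPaulibasis} was stated to be compatible with (the remark after that theorem already notes that equations (67) and (70) of~\cite{linTracialEmbeddableStrategies2024} were modified precisely to accommodate these extra question labels), so I would lean on that reference for the soundness side and only do the completeness verification in full here.
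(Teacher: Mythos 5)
Your plan matches the paper's proof in both structure and substance: for computation time you identify that the seed lives in $\bF_2^{2\lfloor\log n\rfloor}$, that each $\decidertypedfunction^v$ is a zero map, a coordinate projection, or the identity, and that the decider pays $\poly(n)$ time to apply the Justesen encoder $\mathbf{\pi}^{\text{PB}}$; for completeness you construct the honest Pauli strategy and wire the magic square to the anti-commuting pair $\rho^X(u_X),\rho^Z(u_Z)$. Two small corrections to your bookkeeping, both of which you half-flag yourself.

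First, the red/blue polarity is backwards: per Figure~\ref{fig:PauliBasis}, the red (commutation) edges are non-trivial precisely when $u_X\cdot u_Z=0$ (that is when $\rho^X(u_X)$ and $\rho^Z(u_Z)$ actually commute and the provers can demonstrate a joint eigenvalue), and the blue (magic square / anti-commutation) edges are non-trivial when $u_X\cdot u_Z=1$; the ``answer $0$'' escape clause covers the opposite parity on each colour, not the one you wrote.

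Second, on the auxiliary EPR count: the paper uses $n+1$ EPR pairs, not $n+2$. The magic square is not played on two fresh auxiliary pairs as in Figure~\ref{fig:magicsquare}; instead, the anti-commuting pair $\rho^X(u_X),\rho^Z(u_Z)$ on the $n$-qubit system already supplies one of the two qubit-like anti-commuting pairs, and a single additional EPR pair supplies the other, invoking \cite[Theorem 7.11]{jiMIPRE2022a} to build the oracularizable strategy on $\bC^{2^{n+1}}\otimes\bC^{2^{n+1}}$ with state $\MEstate{2}^{\otimes n}\otimes\MEstate{2}$. (The $\bC^{2^n}$ in the theorem statement is a typo for $\bC^{2^{n+1}}$.) With those two fixes your edge-by-edge verification goes through exactly as in the paper.
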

\begin{proof}
	For the remainder of this proof, we denote $W \in \{X,Z\}$.	Fix an integer $n\in \bN$. We first show that the $n$ qubit Pauli basis test is samplable via a typed CL distribution. Identify $\bF_2$ with $\{0,1\}$, and define each $\decidertypedfunction^v$ as follows: For every input $(s_{X}, s_{Z}) \in \{0,1\}^{s^{\text{SB}}_n} \times \{0,1\}^{s^{\text{SB}}_n}$, we define each of the linear function accordingly:
	\begin{itemize}
		\item  For $v \in \{ (\text{Pauli}, W), (\text{Coordinate}, W), (\text{Commutation}, W)\}$, 
		\begin{align*}
			\decidertypedfunction^{(\text{Pauli}, W)}(s_{X}, s_{Z}) &=  (0,0), \\
			\decidertypedfunction^{ (\text{Coordinate}, X)}(s_{X}, s_{Z}) = \decidertypedfunction^{ (\text{Commutation}, X)}(s_{X}, s_{Z}) &= (s_{X}, 0), \\
			\decidertypedfunction^{ (\text{Coordinate}, Z)}(s_{X}, s_{Z})  =\decidertypedfunction^{ (\text{Commutation}, Z)}(s_{X}, s_{Z}) &= (0, s_{Z}).
		\end{align*}
		\item Otherwise, $\decidertypedfunction^{v}$ is the identity function, or 
		\begin{equation*}
			\decidertypedfunction^{v}(s_{X}, s_{Z}) =  (s_{X}, s_{Z}). 
		\end{equation*}
	\end{itemize}
	
	For the decision process, given input $(u_{X}, u_{Z}) \in  S^{\text{PB}}_n \times S^{\text{PB}}_n$ and the output listed in the ``Answer format" from~\Cref{fig:PauliBasis}, the decision process for all the edges can be decided in $O(\poly(n))$ time since it involves either computing an inner product between elements of $\{0,1\}^n$ or some form of consistency test. Furthermore, since the map $\mathbf{\pi}^{\text{PB}}$ is computable uniformly in $O(\poly(n))$ time, the decider can compute each of the $(u_{X}, u_{Z})$ from $(s_{X}, s_{Z})$ sampled above. 
	For the ``completeness" condition, we define the synchronous strategy $\strategy^{\text{PB}}$ over the Hilbert space $\bC^{2^n} \otimes \bC^{2^n}$ as follows: The joint state used between the two provers is $n$ EPR pairs, or $\MEstate{2}^{\otimes n} \otimes \MEstate{2}$. For $(u_X, u_Z) \in S^{\text{PB}}_n \times S^{\text{PB}}_n$, define
	\begin{equation*}
		\rho^{W}(u_W)_0 = \sum_{b \cdot u_W = 0} \rho^{W}_b, \qquad  \rho^{W}(u_W)_0 = \sum_{b \cdot u_W = 1}\rho^{W}_b,
	\end{equation*}
	and we see that
	\begin{equation*}
			\rho^{W}(u_W) = \rho^{W}(u_W)_0  - 	\rho^{W}(u_W)_1. 
	\end{equation*}
	We define the measurement operator $\{P^x_a\}$ for the symmetric strategy $\strategy^{\text{PB}}$ as
	\begin{align*}
		&P^{(\text{Pauli}, W)} = \rho^{W}_{t_W} \otimes \cI_2 &\text{ for all } t_W \in \{0,1\}^n, \\
		&P^{(\text{Coordinate}, W), u_{W}}_{ t_{\text{(Pauli, W)}}} = \sum_{ t_{\text{(Pauli, W)}}|_{u_{W}} = b}  \rho^{W}_b \otimes \cI_2  &\text{ for all }  t_{\text{(Pauli, W)}} \in \{0,1\}^n. \\
		&P^{(\text{Commutation, W}), (u_W) }_{t_W} = \rho^{W}(u_W)_{t_W} \otimes \cI_2 &\text{ for }  u_X \cdot u_Z = 0, t_W \in \{0,1\}\\
		&P^{(\text{Commutation}), (u_X, u_Z) }_{(t_W', t_Z')} = \left(\rho^{X}(u_X)_{t_X'}\right)\left( \rho^{Z}(u_Z)_{t_Z'}\right) \otimes \cI_2&\text{ for }  u_X \cdot u_Z = 0, t_W' \in \{0,1\}\\
		&P^{(\text{Commutation}), (u_X, u_Z) }_{a} = \begin{cases}
			\cI_{2^{n+1}} & \text{if } a= 0\\
			0 & \text{otherwise } \\
		\end{cases} & \text{ for } u_X \cdot u_Z = 1. \\
	& P^{(\text{Commutation, W}), (u_W) }_{a}  = \begin{cases}
		\cI_{2^{n+1}} & \text{if } a= 0\\
		0 & \text{otherwise } \\
	\end{cases} & \text{ for } u_X \cdot u_Z = 1. 
	\end{align*}
	For the magic square game, given $(u_X, u_Z)$ such that $u_x \cdot u_z = 1$, by \eqref{eq:genpaulicomm}, the observables $\rho^{X}(u_X)$ and  $\rho^{Z}(u_Z)$ anti-commutes. By applying~\cite[Theorem 7.11]{jiMIPRE2022a}, there exists a symmetric projective oracularizable strategy $\strategy^{\text{MS}} = (\bC^{2^{n+1}} \otimes \bC^{2^{n+1}}, \MEstate{2}^{\otimes n} \otimes \MEstate{2}, \{M_a^x \})$ such that for $b \in \{0,1\}$, 
	\begin{equation*}
		M_b^{(\text{Variable},1)} = \rho^{W}(u_W)_n \otimes \cI_2, \qquad M_b^{(\text{Variable},5)} = \rho^{W}(u_W)_n  \otimes \cI_2.
	\end{equation*}
	For $i \in \{2,3,4,6,7,8,9\}$, and $j \in \{1 ,\cdots, 6\}$,set 
	\begin{equation*}
		P_b^{(\text{Variable},i), (u_x, u_z)} = M_b^{(\text{Variable},i)}, \qquad P_b^{(\text{Constraint},j), (u_x, u_z)} = M_b^{(\text{Constraint},j)}. 
	\end{equation*}
	In the event that  $u_x \cdot u_z = 0$, set 
	\begin{equation*}
		P_b^{(\text{Variable},i), (u_x, u_z)} = P_b^{(\text{Constraint},j), (u_x, u_z)} = \begin{cases}
			\cI_{2^{n+1}} & \text{if } b= 0\\
			0 & \text{otherwise } \\
		\end{cases}. 
	\end{equation*}
	Since all measurement operators $\strategy^{\text{PB}}$ are defined within $\bigotimes_{i = 0}^{n+1} \cM_2(\bC)$, for all $a \in \cA$
	\begin{equation*}
		\left(P_{a} \otimes \cI_{\mathbf{B}} \right) \MEstate{2}^{\otimes n}_{\mathbf{AB}} \otimes \MEstate{2}_{\mathbf{AB}}  = \left( \cI \otimes P_{a}^{T} \right) \MEstate{2}^{\otimes n}_{\mathbf{AB}} \otimes \MEstate{2}_{\mathbf{AB}}
	\end{equation*}
	 for all measurements within $\strategy^{\text{PB}}$. We now verify that $\strategy^{\text{PB}}$ is indeed an oracularizable and perfect strategy by considering all possible question pairs in $\cG^{\text{PB}}_n$. 
	\begin{itemize}
		\item (Self-loop) Since all the measurements are projective, this is trivially true. 
		\item (Pauli, $W$) \textbf{--}  (Coordinate, $W$)  This follows since both question labels require a measurement in the $W$ basis. 
		\item  If $u_x \cdot u_z = 0$
		\begin{itemize}
			\item The red-edge question pairs are trivially perfect/oracularizable, since one of the measurement operators is always the identity. 
			\item  (Coordinate, $W$)  \color{red} \textbf{--}   \color{black} (Commutation, $W$): This follows since both question labels require a measurement in the $W$ basis. 
			\item  (Commutation)  \color{red} \textbf{--}   \color{black}  (Commutation, $W$): By~\eqref{eq:genpaulicomm}, $\rho^{X}(u_X)$ commutes with $\rho^{Z}(u_Z)$. Hence, $\{ \rho^{X}(u_X)_i\}_{i \in \{0,1\}}$ pairwise commute with $\{ \rho^{Z}(u_Z)_i\}_{i \in \{0,1\}}$, and the statement follows accordingly. 
		\end{itemize}
		\item If $u_x \cdot u_z = 1$
		\begin{itemize}
			\item The blue edge question pairs are trivially perfect/oracularizable, since one of the measurement operators is always the identity. 
			\item (Variable $1$)  \color{blue} \textbf{--}   \color{black}  (Coordinate, $X$):  This follows since both question labels require a measurement in the $X$ basis. 
			\item (Variable $5$)  \color{blue} \textbf{--}   \color{black}  (Coordinate, $Z$):  This follows since both question labels require a measurement in the $Z$ basis. 
			\item (Constraint)  \color{blue} \textbf{--}  \color{black}   (Variable): This follows by~\cite[Theorem 7.11]{jiMIPRE2022a}.
		\end{itemize}
		This shows that $\strategy^{\text{PB}}$ is a symmetric oracularizable strategy.
	\end{itemize}
\end{proof}


\subsection{The Introspection protocol} \label{sec:introspectionprotocol}

In this subsection, we present the introspection protocol for a game that is CL samplable, which provides the algorithm  $\texttt{QuestionReduction}_{\alpha, k}$ required in~\Cref{prop:QuestionReduction}. For $\alpha, n \in \bN$, and a CL samplable game $\cG = (\cX, \cA, \mu, D)$, where the question distribution $\mu$ is a $(k,m,p)$ CL distribution with $m \cdot p \leq n^{\alpha}$, we define the sampling procedure for the $(\cG,\alpha, k)$-introspection protocol in~\Cref{fig:Introspectionsample}, the verification procedure in~\Cref{fig:Introspectionverification}, and a typed graph for the input distribution in~\Cref{fig:Introspectiongraph}. We remark that the introspection protocol is almost the same as the one presented in~\cite[Figure 10]{jiMIPRE2022a}, with minor adjustments for clarity. 


\begin{figure}[!t]
	\centering
	\begin{tikzpicture}[scale=.8]
		
		\tikzset{type/.style args={[#1]#2}{
				draw,circle,fill,scale=0.25,
				label={[font=\scriptsize, label distance=1pt]#1:#2}
		}}
		
		\foreach \i in {1,...,6} \draw (0,8-9/8*\i)
		coordinate (Constraint-\i)
		node[type={[180]$\text{Constrant}_\i$}] {};
		
		\foreach \i in {1,...,9} \draw (1.5,9-\i)
		coordinate (Variable-\i)
		node[type={[330]$\text{Variable}_\i$}] {};
		

		\draw (4,8) coordinate (Cord-X) node[type={[45]$(\text{Coord},X)$}] {};
		\draw (4,4) coordinate (Cord-Z) node[type={[315]$(\text{Coord},Z)$}] {};

		\foreach \i in {4,...,6} \foreach \j in {1,...,3}
		\pgfmathsetmacro{\k}{\i-3+(\j-1)*3}
		\draw[blue] (Constraint-\i) -- (Variable-\k);
		\draw[blue] (Variable-1) -- (Cord-X);
		\draw[blue] (Variable-5) -- (Cord-Z);
		
		\draw (7,8) coordinate (Pauli-X) node[type={[90]$(\text{Pauli}, X)$}] {};
		\draw (7,4) coordinate (Pauli-Z) node[type={[270]$(\text{Pauli},Z)$}] {};
		\draw (5,6) coordinate (Commutation) node[type={[0]$\text{Comm}$}] {};
		\draw (4.5,7) coordinate (Commutation-X) node[type={[0]$(\text{Comm}, X)$}] {};
		\draw (4.5,5) coordinate (Commutation-Z) node[type={[0]$(\text{Comm}, Z)$}] {};
		
		\foreach \from \to in {Cord-X/Commutation-X, Cord-Z/Commutation-Z,Commutation-Z/Commutation, Commutation-X/Commutation}
		\draw[blue] (\from) -- (\to);
		\foreach \from \to in {Cord-X/Pauli-X, Cord-Z/Pauli-Z}
		\draw[blue] (\from) -- (\to);
		\foreach \i in {1,...,3} \foreach \j in {1,...,3}
		\pgfmathsetmacro{\k}{(\i-1)*3+\j}
		\draw[blue] (Constraint-\i) -- (Variable-\k);
		
		\draw (8,8) coordinate (GPauli-X) node[type={[0]$(\text{Gen Pauli}, X)$}] {};
		\draw (8,4) coordinate (GPauli-Z) node[type={[0]$(\text{Gen Pauli},Z)$}] {};
		\foreach \from \to in {GPauli-X/Pauli-X, GPauli-Z/Pauli-Z}
		\draw(\from) -- (\to);

		\draw (8.5,8.7) coordinate (Hide-1-A) node[type={[90]$(\text{Hide}_0, \decidertypedfunction^{A})$}] {};
		\draw (10.5,8.7) coordinate (Hide-2-A) node[type={[90]$(\text{Hide}_1, \decidertypedfunction^{A})$}] {};
		\draw (11.8,8.7) node {$\cdots$};
		\draw (13,8.7) coordinate (Hide-k-A) node[type={[90]$(\text{Hide}_{k-1}, \decidertypedfunction^{A})$}] {};
		\draw (15,8.7) coordinate (Read-A) node[type={[80]$(\text{Read},  \decidertypedfunction^{A})$}] {};
		
		\draw (8.5,7.3) coordinate (Hide-1-B) node[type={[270]$(\text{Hide}_0, \decidertypedfunction^{B})$}] {};
		\draw (10.5,7.3) coordinate (Hide-2-B) node[type={[270]$(\text{Hide}_1, \decidertypedfunction^{B})$}] {};
		\draw (11.8,7.3) node {$\cdots$};
		\draw (13,7.3) coordinate (Hide-k-B) node[type={[270]$(\text{Hide}_{k-1}, \decidertypedfunction^{B})$}] {};
		\draw (15,7.3) coordinate (Read-B) node[type={[80]$(\text{Read},  \decidertypedfunction^{B})$}] {};

		\draw (16,6) coordinate (Intro-B) node[type={[180]$(\text{Intro},  \decidertypedfunction^{B})$}] {};
		\draw (18,6) coordinate (Intro-A) node[type={[215]$(\text{Intro},  \decidertypedfunction^{A})$}] {};
		
		\draw (11.8,3) coordinate (Sample-A) node[type={[270]$(\text{Sample},  \decidertypedfunction^{A})$}] {};
		\draw (11.8,5) coordinate (Sample-B) node[type={[90]$(\text{Sample},  \decidertypedfunction^{B})$}] {};
		
		\foreach \from/\to in {
			Hide-2-A/Hide-1-A,
			Hide-k-A/Read-A}
		\draw[green] (\from) -- (\to);

		\foreach \from/\to in {
			Hide-2-B/Hide-1-B,
			Hide-k-B/Read-B}
		\draw[orange] (\from) -- (\to);
		
		\draw[purple] (Intro-A) -- (Intro-B);
		
		\draw[green] (GPauli-X) to [out=60,in=180] (Hide-1-A);
		\draw[orange] (GPauli-X) to [out=300,in=180] (Hide-1-B);

		\draw[orange] (GPauli-Z) to [out=45,in=180] (Sample-B);
		\draw[green] (GPauli-Z) to [out=315,in=180] (Sample-A);
		
		\draw[green] (Sample-A) to [out=0,in=270] (Intro-A);
		\draw[orange] (Sample-B) to [out=0,in=225] (Intro-B);

		\draw[green] (Read-A) to [out=0,in=90] (Intro-A);
		\draw[orange] (Read-B) to [out=0,in=135] (Intro-B);
		
	\end{tikzpicture}
	
	\caption{The typed graph $(\decidertypedtype^{\text{Intro}}_k, \decidertypedquestionpair^{\text{Intro}}_k)$ for a $(n^\alpha, k, \cG)$-introspection protocol, where each of the vertices above also contains a self-loop (which is a black edge). The purple edges are intuitively the question pair in which the provers are asked to sample honestly from the question distribution, and play the original game. The orange and green edges are questions which are designed to make sure that the provers perform the correct measurement such that $ \decidertypedfunction^{A}(s)$ and $ \decidertypedfunction^{B}(s)$ can be sampled correctly. The blue edges correspond to question pairs from the $n^{\alpha}$-Pauli basis test. We remark that the typed graph above only depends on the parameter $k$, and functions $\decidertypedfunction^{A}, \decidertypedfunction^{B}$ within the question label are presented for clarity.  }
	\label{fig:Introspectiongraph}
\end{figure}

\begin{figure}[t!]
	\centering
	\footnotesize
	\begin{gamespec}
		\setlength{\tabcolsep}{1em}
		\begin{tabularx}{\textwidth}{ l   l   X   }
			\toprule
			Question label & Question content & Answer format \\
			\midrule
			$n^{\alpha}$-PB question labels&  \multicolumn{2}{l}{See~\Cref{fig:PauliBasis}} \\
			$(\text{Gen Pauli}, W)$&  & $s_{W}  \in  \bF_{2^p}^m$\\
			$(\text{Hide}_0, \decidertypedfunction^{P})$&  & $(t_{\leq 0}^{\bot}, r_{> 0})  \in  V_{0} \times V_{> 0}$\\
			$(\text{Hide}_i, \decidertypedfunction^{P})$, $i \in [k] \setminus 0$&  & $(t_{< i}^{\text{Line}},t_{\leq i}^{\bot}, r_{> i})  \in  V_{< i} \times V_{\leq i} \times V_{> i}$\\
			$(\text{Read},  \decidertypedfunction^{P})$&  & $(t_{\text{Read}, P}^{\bot},t_{\text{Read}, P}^{\text{Line}}, a_{\text{Read}, P})  \in \bF_{2^p}^m \times \bF_{2^p}^m \times \cA$\\
			$(\text{Sample},  \decidertypedfunction^{P})$&  & $(s_{\text{Sample}}, a_{\text{Sample}, P})  \in \bF_{2^p}^m \times \cA$\\
			$(\text{Intro},  \decidertypedfunction^{P})$&  & $(x_P, a_P)  \in \bF_{2^p}^m \times \cA$\\
			\bottomrule
			\multicolumn{3}{c}{Figure: Q and A format for the $(\cG,n^\alpha, k)$-introspection protocol, with $W \in \{X,Z\}$ and $P \in\{A,B\}$. }
		\end{tabularx}
		
		\begin{center}
			\textbf{Sampling procedure}
		\end{center}
		\begin{enumerate}
			\item Sample $(u_X,u_Z) \in S^{\text{PB}}_{n^{\alpha}}\times S^{\text{PB}}_{n^{\alpha}}$,  $(n_0, n_1) \in \decidertypedtype^{\text{Intro}}_k \times \decidertypedtype^{\text{Intro}}_k$,  where $(\decidertypedtype^{\text{Intro}}_k, \decidertypedquestionpair^{\text{Intro}}_k)$ is defined in~\Cref{fig:Introspectiongraph}, and perform rejection sampling until $(n_0, n_1) \in \decidertypedquestionpair^{\text{Intro}}_k$. 
			\item Send the question label and question content corresponding to $n_0$ to one of the provers, and send the question content corresponding to the $n_1$ to the other prover.
		\end{enumerate}
	\end{gamespec}
		\caption{\footnotesize The description for the sampling procedure for the $(\cG,n^\alpha, k)$-introspection protocol. Where $\cG = (\cX, \cA, \mu, D)$, such that the distribution $\mu$ is a $(k,m,p)$ CL distribution with $m \cdot p \leq n^{\alpha}$. The CL distribution is define by two CL functions $\decidertypedfunction^A$ and $\decidertypedfunction^{B}$ with registers $\{V_i\}_{i \in [k]}$ with $\bF_{2^p}^m = \bigcup_{i \in [k]}V_i$. }
	\label{fig:Introspectionsample}
\end{figure}

\begin{figure}[b!]
	\centering
	\footnotesize
	\begin{gamespec}
		\setlength{\tabcolsep}{1em}
		\begin{center}
			\textbf{Verification procedure}
		\end{center}
		\textbf{Synchronicity/EPR test.}
		\begin{itemize}
			\item (Self-loop): The provers win iff they output the same answer. 
			\item ($n^{\alpha}$-PB question labels): If $(n_0, n_1)$ is a blue edge, refer to~\Cref{fig:PauliBasis}. 
			\item (Pauli, $W$) \textbf{--}  (Gen Pauli, $W$): Let $\pi_{p \cdot m}(t_W)$ be the first $p \cdot m$ bits of $t_W$, the provers win iff the canonical representation of $s_W$ is equal to $\pi_{p \cdot m}(t_W)$.
		\end{itemize}
		\textbf{Hiding test. } We identify $t_{< 0}^{\text{Line}} = 0 \in \bF_{2^p}$. 
		\begin{itemize}
			\item (Gen Pauli, $X$) \textbf{--} $(\text{Hide}_0, \decidertypedfunction^{P})$: Write $s_{X} =  (s_{X})_{0} +(s_{X})_{0}^{C} + (s_{X})_{> 0} \in \ker{\decidertypedfunction^{P}_{0,0}} \oplus \ker{\decidertypedfunction^{P}_{0,0}}^{C} \oplus V_{> 0}$ (where the canonical complement is defined over $V_0$), the prover wins iff $(s_{X})_{0}^{C} = t_{\leq 0}^{\bot}$ and $(s_{X})_{> 0} =  r_{> 0}$.  
			\item $(\text{Hide}_{i}, \decidertypedfunction^{P})$ \textbf{--}  $(\text{Hide}_{i+1}, \decidertypedfunction^{P})$ for $i \in [k]$: Write
			\begin{align*}
				t_{\leq i+1 }^{\bot} &= 	\tilde{t}_{\leq i }^{\bot} + \tilde{t}_{ i +1}^{\bot} \in V_{\leq i} \oplus V_{i+1}, \quad t_{< i+1 }^{\text{Line}} = 	\tilde{t}_{< i}^{\text{Line}} + \tilde{t}_{ i}^{\text{Line}} \in V_{< i} \oplus V_{i}, \\
				r_{> i } &= 	\bar{r}_{i} + \bar{r}_{i}^{C} + \bar{r}_{>i+1} \in \ker{\left(\decidertypedfunction^{P}_{i+1,t_{< i +1}^{\text{Line}} }\right)} \oplus  \ker{\left(\decidertypedfunction^{P}_{i+1,t_{< i +1}^{\text{Line}} }\right)}^{C}  \oplus V_{> i+1}, 
			\end{align*}
			In the notation above, the bar above the variable indicates that the element is decomposed from $(\text{Hide}_{i})$ and the tilde above the variable refers to elements from $(\text{Hide}_{i+1})$. The complement for $\ker{\left(\decidertypedfunction^{P}_{i+1,t_{< i +1}^{\text{Line}} }\right)}^{C}$ is over the subspace $V_i$.
			
			The provers win iff
			\begin{equation*}
				\tilde{t}_{\leq i }^{\bot} = t_{\leq i }^{\bot},\quad  \bar{r}_{i}^{C} = \tilde{t}_{i +1 }^{\bot}, \quad \tilde{t}_{< i }^{\text{Line}} = t_{< i }^{\text{Line}}, \quad \bar{r}_{>i+1} = r_{>i+1}.
			\end{equation*} 
			\item $(\text{Hide}_{k-1}, \decidertypedfunction^{P})$ \textbf{--}   $(\text{Read},  \decidertypedfunction^{P})$: The provers win iff $t_{\text{Read}, P} = t_{k-1}$, and  $t_{\text{Read}, P}^{\bot} = t_{k-1}^{\bot}$.
			\item  $(\text{Read},  \decidertypedfunction^{P})$ \textbf{--}  $(\text{Intro},  \decidertypedfunction^{P})$: The provers win iff $t_{\text{Read}, P}^{\text{Line}} = x_P$, and $a_{\text{Read}, P} = a_P$.
		\end{itemize}
		\textbf{Sampling test. }
		\begin{itemize}
			\item (Gen Pauli, $Z$) \textbf{--} $(\text{Sample}, \decidertypedfunction^{P})$: The provers win iff $s_Z = s_{\text{Sample}}$. 
			\item $(\text{Sample}, \decidertypedfunction^{P})$ \textbf{--} $(\text{Intro},  \decidertypedfunction^{P})$: The provers win iff $\decidertypedfunction^{P}(s_{\text{Sample}}) = x_P$ and  $a_{\text{Sample}, P} = a_P$.
		\end{itemize}
		\textbf{Introspection of $\cG$ }
		\begin{itemize}
			\item $(\text{Intro},  \decidertypedfunction^{A})$   \color{purple} \textbf{--}   \color{black} $(\text{Intro},  \decidertypedfunction^{B})$: The provers win iff $D(x_A,x_B,a_A, a_B) = 1$. 
		\end{itemize}
		\vspace{1em}
	\end{gamespec}
	\caption{ The description for the verification procedure for the $(\cG,n^\alpha, k)$-introspection protocol. }
	\label{fig:Introspectionverification}
\end{figure}

We give a simple example to illustrate the introspection protocol. Suppose $\cG = (\cX, \cA, \mu, D)$ is a synchronous game, where $\mu$ is a $(1, m, 3)$ CL distribution defined by the CL functions
\begin{equation}\label{eq:introexample}
	\decidertypedfunction^{A}(s_0, s_1, s_2) = (s_0+ s_1,0,0) \, \quad \text{and} \quad \, 	\decidertypedfunction^{B}(s_0, s_1, s_2)  = (s_1 + s_2, 0,0), 
\end{equation}
for all $(s_0, s_1, s_2) \in \bF_2^3$. The introspection protocol first forces the provers to prepare three copies of the $\MEstate{2}$ by using the $3$-Pauli basis test as a subroutine. In the ideal scenario, the verifier wants the prover (Alice) who receives the question arising from $\decidertypedfunction^{A}$ to perform a Pauli $Z$ measurement on the first 2 copies of $\MEstate{2}^{\otimes 3}$ in order to sample two random bits $(s_0^A, s_1^A) \sim \{0,1\}^2$, compute $\decidertypedfunction^{A}(s_0^A, s_1^A,0)$ to obtain her question for $\cG$ and play the game accordingly. Intuitively, this ``samples" the first two bits, $s_0$ and $s_1$, which is the minimum amount of information Alice needs to compute $\decidertypedfunction^{A}$. The other prover, Bob, should perform a Pauli $Z$ measurement on the last two copies $\MEstate{2}^{\otimes 3}$ in order to sample $(s_1^B, s_2^B) \sim \{0,1\}^2$, and calculate $\decidertypedfunction^{B}(0, s_1^B, s_2^B)$ to obtain his half of the question pair and output his answer accordingly. By the properties of entanglement, if Alice and Bob have performed the procedure properly, $s_1^A = s_1^V$, and hence the question distribution sampled by the two provers is precisely the same as $\mu$. In the introspection game, the verifier wants the provers to perform this ``ideal scenario" when given the $(\text{Intro}, \decidertypedfunction^{P})$ question label in~\Cref{fig:Introspectiongraph}.

To enforce honesty from the provers, the verifier cross-references the measurements made by the provers with those made in the $(\text{Pauli}, W)$ question pair for $W \in \{X,Z\}$ (which, recall, forces the provers to perform an all $X$ or all $Z$ measurement on all $\MEstate{2}$ states by the properties of the $n^{\alpha}$ Pauli basis test). In particular, the verifier wants to make sure the provers perform the following task correctly:
\begin{enumerate}
	\item The provers should only measure the register they need in order to compute the function $\decidertypedfunction^{P}$. On the example given in~\Cref{eq:introexample}, the prover receiving the question which arises from $\decidertypedfunction^{A}$ should \textbf{only} perform the Pauli $Z$ on the first $2$ copies of $\MEstate{2}^{\otimes 3}$, and not measure the last copy (i.e. the kernel of $\decidertypedfunction^{A}$). 
	\item After sampling the bits required to compute the function $\decidertypedfunction^{P}$, the provers have to correctly apply the function (instead of using some pre-prepared question pair). 
\end{enumerate}
To ensure the first task is performed correctly, the verifier cross-references the $(\text{Intro}, \decidertypedfunction^{P})$ question with the question label $(\text{Read},  \decidertypedfunction^{P})$, in which the provers, in addition to performing the Pauli $Z$ measurement, also require the provers to make a Pauli $X$ measurement on the kernel space of $\decidertypedfunction^{A}$, and are expected to output the same answer as the $(\text{Intro}, \decidertypedfunction^{P})$ question. On the example above, since Alice, given the $(\text{Intro}, \decidertypedfunction^{A})$ question label can only perform an $X$ or $Z$ measurement on the third qubit, her answer must not depend on the measurement outcome for the third qubit. The $(\text{Read},  \decidertypedfunction^{P})$ question label is then cross-referenced with the $(\text{Pauli}, X)$ question from the $3$-qubit Pauli basis test to ensure consistency for the $X$ measurement on the third qubit.
To ensure the second task, the verifier cross-references the $(\text{Intro}, \decidertypedfunction^{P})$ question with the question label $(\text{Sample}, \decidertypedfunction^{P})$, in which the provers are expected to sample the entirety of the seed $s$ by performing Pauli $Z$ measurements on \textbf{all} of their $\MEstate{2}$ bits, compute the corresponding question $\decidertypedfunction^{P}(s)$ and generate the corresponding answer. The $(\text{Sample}, \decidertypedfunction^{P})$  question label is cross-referenced with the $(\text{Pauli}, Z)$ question to ensure consistency. 

In general, there are two additional problems. If the CL function $\decidertypedfunction^{P}$ is a level $k$ CL function, then the ``Read" question cannot be cross-checked with the $(\text{Pauli}, X)$ question, since the kernel space of the linear function for each level depends on the computation step from the previous level. Intuitively, the behaviour of the prover for the ``Read" question is enforced by a series of ``Hide" questions, each designed to enforce the ``honest measurement" for the Read question for one level. Since a CL distribution is defined using two CL functions which map subspaces of $\bF_{2^{\seedlength}}^m$ rather than of $\bF_2^m$, generalized Pauli measurements are needed for the introspection protocol. In combination with~\Cref{lem:PaulitogeneralPauli}, we see that for any $\seedlength \in \bN$, the $\seedlength \cdot n$ qubit Pauli basis test can also serve as a rigidity test for generalized Pauli measurement over $\MEstate{\seedlength}$, and we use the $(\text{Gen Pauli}, W)$ to convert between these two types of self-test.

We have the following theorem regarding the $(\cG,n^\alpha, k)$-introspection protocol. We remark that in comparison to~\cite[Theorem 8.3]{jiMIPRE2022a}, there is no longer dependency on $\alpha$ (the variable $R$ or $n \alpha$ in~\cite{jiMIPRE2022a}). This is due to our EPR tester (the $n$ qubit Pauli basis test) not using the low-degree test as a part of the subroutine. 

\begin{theorem}[Properties of the $(\cG,n^\alpha, k)$-introspection protocol] \label{thm:Introspectgame}
	Let $n, \alpha \in \bN$, and let $\cG = (\cX, \cA, \mu, D)$ be a $k$-th CL samplable game where the question distribution $\mu$ is a $(k,m,p)$ CL distribution with $m \cdot p \leq n^{\alpha}$. Let $\cG^{\text{intro}} = (\cX^{\text{intro}}, \cA^{\text{intro}}, \mu^{\text{intro}}, D^{\text{intro}})$ be the (typed) $(\cG,\alpha, k)$-introspection protocol specified in~\Cref{fig:Introspectionsample} and~\Cref{fig:Introspectionverification}. For $t \in \{*, co\}$, the following holds:
	\begin{itemize}
		\item (Sample complexity): $\cG^{\text{intro}}$ is samplable via a $(\decidertypedtype^{\text{Intro}}_k, \decidertypedquestionpair^{\text{Intro}}_k, \{\decidertypedfunction^{v}\}_{v \in \decidertypedtype^{\text{Intro}}} )$ typed CL distribution, where each $\decidertypedtype^{\text{Intro}}: \bF_2^{2 \cdot \alpha \cdot \lfloor \log(n) \rfloor}$ is a first level CL function. Furthermore, the question distribution only depends on the parameter $n^{\alpha}$ and k.
		\item (Completeness): If there exists a perfect oracularizable strategy for $\cG$ in model $t$, then there exists a perfect oracularizable strategy for $\cG^{\text{intro}}$ in model $t$. 
		\item (Soundness): There exists a polynomial $\mathbf{P}^{\text{Intro}} (\eps, k, \alpha)$ such that
		\begin{equation*}
			 \omega^t(\cG) \leq 1- \eps \Longrightarrow 	\omega^t(\cG^{\text{Intro}}) \leq 1 - \mathbf{P}^{\text{Intro}}(\eps, \exp(k)). 
		\end{equation*}
	\end{itemize}
\end{theorem}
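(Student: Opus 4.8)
\textbf{Proof plan for \Cref{thm:Introspectgame}.}

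The plan is to verify the three claims in turn, following the structure of \cite[Section 8]{jiMIPRE2022a} but taking care that every step goes through for tracially embeddable strategies (so that it applies to both models $t \in \{*, co\}$), and that the improved $n$-qubit Pauli basis test of \Cref{sec:Paulibasis} is substituted in place of the low-individual-degree-based EPR tester. First I would establish the \emph{sample complexity} claim. The game $\cG^{\text{intro}}$ is sampled according to the typed graph $(\decidertypedtype^{\text{Intro}}_k, \decidertypedquestionpair^{\text{Intro}}_k)$ of \Cref{fig:Introspectiongraph}. The blue edges are precisely the $n^{\alpha}$-qubit Pauli basis test, which by \Cref{thm:completenessPaulibasis}(1) is typed-CL-samplable with first-level CL functions over $\bF_2^{2 \lfloor \log(n^\alpha)\rfloor} = \bF_2^{2\alpha\lfloor\log n\rfloor}$; the generalized-Pauli conversion vertices $(\text{Gen Pauli}, W)$ just read a linear number of bits of $s_W$; the $(\text{Hide}_i,\decidertypedfunction^P)$, $(\text{Read},\decidertypedfunction^P)$, $(\text{Sample},\decidertypedfunction^P)$ and $(\text{Intro},\decidertypedfunction^P)$ vertices use only the identity linear function on the shared seed register (the nontrivial work is pushed into the decider). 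Hence every $\decidertypedfunction^v$ is a first-level CL function over $\bF_2^{2\alpha\lfloor\log n\rfloor}$, and since the graph has $O(k)$ vertices with structure depending only on $k$, the typed-CL distribution depends only on $n^\alpha$ and $k$. The decider complexity is $\poly(m,p,|\langle D\rangle|) = \poly(n^\alpha)$ since it must run $D$ once on the introspected question/answer pair and perform Gaussian elimination to decompose seeds along the register partitions of $\decidertypedfunction^A,\decidertypedfunction^B$; then one applies \Cref{lem:detypeandsync} to convert to an honest (untyped) CL-samplable game, which only multiplies the soundness gap by a factor depending on $|\decidertypedtype^{\text{Intro}}_k| = O(k)$.

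Next I would prove \emph{completeness}. Given a perfect oracularizable strategy $\strategy$ for $\cG$ in model $t$, combine the perfect symmetric oracularizable strategy $\strategy^{\text{PB}}$ for the $n^{\alpha}$-qubit Pauli basis test from \Cref{thm:completenessPaulibasis}(2) (which on the $(\text{Pauli},W)$ label measures $\rho^W_s \otimes \cI_2$ on $\MEstate{2}^{\otimes n^\alpha}\otimes\MEstate{2}$) with the honest introspecting strategy: on $(\text{Intro},\decidertypedfunction^P)$, the prover uses \Cref{lem:PaulitogeneralPauli} to reinterpret the EPR register as $\MEstate{p}^{\otimes m}$, measures Pauli $Z$ on exactly the registers $V_j$ needed to evaluate $\decidertypedfunction^P$, computes $x_P = \decidertypedfunction^P(s)$, and then applies $\strategy$'s measurement $A^{x_P}_{a_P}$. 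On $(\text{Read},\decidertypedfunction^P)$ the prover additionally measures Pauli $X$ on $\ker(\decidertypedfunction^P_{j,\cdot})^C$-components; on $(\text{Sample},\decidertypedfunction^P)$ it measures $Z$ everywhere; on $(\text{Hide}_i,\decidertypedfunction^P)$ it measures $Z$ on the relevant prefix and $X$ on the rest. One checks each edge of $\decidertypedquestionpair^{\text{Intro}}_k$: the blue edges are perfect by $\strategy^{\text{PB}}$; the hiding-test, sampling-test, and $(\text{Read})$--$(\text{Intro})$ edges are perfect because the data-processed $Z$-measurements are consistent; the purple $(\text{Intro},\decidertypedfunction^A)$--$(\text{Intro},\decidertypedfunction^B)$ edge is perfect because the measured seed has the same marginal distribution as $\mu$ and $\strategy$ is perfect for $\cG$. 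Oracularizability is preserved because all $Z$-measurements and $X$-measurements on disjoint EPR registers commute, the magic-square observables commute along rows/columns, and $\strategy$ being oracularizable means $[A^{x_A}_{a_A}, B^{x_B}_{b_B}]=0$ whenever $\mu(x_A,x_B)>0$ — which covers exactly the purple edge.

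The hard part is \emph{soundness}, and this is where I would spend most of the effort. Starting from a strategy $\strategy$ with $\omega^t(\cG^{\text{Intro}},\strategy)\geq 1-\eps$, first use \Cref{lem:balancedperfecttoalmostsync} / \Cref{cor:orthogonalizationlemmasync} to pass to a symmetric projective tracially embeddable strategy losing only $\poly(\eps)$, so \Cref{thm:soundnessPaulibasis} applies: there are isometries $V_A, V_B$ and an auxiliary state $\ket{\text{Aux}}$ under which $P^{(\text{Pauli},W)}_s$ is $O(\poly(\eps))$-close to the ideal $\rho^W_s$ acting on $\MEstate{2}^{\otimes n^\alpha}$. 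Using \Cref{lem:PaulitogeneralPauli} to regroup into $\MEstate{p}^{\otimes m}$, this gives approximate generalized-Pauli $Z$ and $X$ observables. Then one propagates consistency through the game graph: the $(\text{Pauli},W)$--$(\text{Gen Pauli},W)$ edge forces the $(\text{Gen Pauli},W)$ answers to be (approximately) the ideal generalized-Pauli measurement outcomes; the chain of hiding edges $(\text{Hide}_0)$--$\cdots$--$(\text{Hide}_{k-1})$--$(\text{Read})$ forces, inductively over the $k$ levels, that the prover's $(\text{Read},\decidertypedfunction^P)$ answer $t_{\text{Read},P}^{\text{Line}}$ equals $\decidertypedfunction^P$ applied to the sampled seed \emph{and} that the answer $a_{\text{Read},P}$ does not depend on the $X$-measured (kernel) portion — this is the most delicate part, since each level's kernel decomposition depends on the previous level's outcome, so the argument must use \Cref{lem:linePaulicommutation} to justify the relevant commutations at each level and the error accumulates geometrically, producing the $\exp(k)$ dependence; the sampling edge $(\text{Gen Pauli},Z)$--$(\text{Sample})$--$(\text{Intro})$ forces $x_P = \decidertypedfunction^P(s)$ for the fully $Z$-measured seed. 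Combining, the pair $(x_A, x_B)$ output on the purple $(\text{Intro},\decidertypedfunction^A)$--$(\text{Intro},\decidertypedfunction^B)$ edge is $O(\poly(\eps)\exp(k))$-close in distribution to $\mu = (\decidertypedfunction^A(s),\decidertypedfunction^B(s))$ with $s$ uniform, and the answers $(a_A,a_B)$ then define a strategy for $\cG$; since the $(\text{Intro})$-$(\text{Intro})$ edge is won with probability $\geq 1-\eps$ and the induced question distribution is close to $\mu$, this strategy wins $\cG$ with probability $\geq 1 - \mathbf{P}^{\text{Intro}}(\eps,\exp(k))$ for a suitable polynomial, contradicting $\omega^t(\cG)\leq 1-\eps$ unless $\eps$ is small, which gives the stated implication. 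Throughout, each measurement-distance manipulation (triangle inequality, data processing, combination of measurements) is justified by \Cref{lem:closetodistance}, \Cref{lem:combmeasure}, and the data-processing fact, all of which hold for tracially embeddable strategies; the one genuinely new bookkeeping relative to \cite{jiMIPRE2022a} is confirming that the absence of the low-degree test in the EPR tester removes the $\poly(n)$-in-$R$ loss, leaving a soundness polynomial depending only on $\eps$, $k$ (exponentially), and $\alpha$.
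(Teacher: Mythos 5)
Your proposal follows essentially the same structure and uses the same key tools as the paper's proof: the typed CL distribution for sample complexity, the perfect Pauli-basis strategy composed with honest data-processed $Z$/$X$ measurements for completeness (as in \Cref{thm:completenessPaulibasis} and \Cref{fig:PerfectstrategyIntro}), and for soundness the passage to a symmetric projective strategy via \Cref{cor:orthogonalizationlemmasync}, \Cref{thm:soundnessPaulibasis} rigidity, \Cref{lem:PaulitogeneralPauli} regrouping, the hide/read/sample consistency chain, and the Pauli-twirl decomposition (\Cref{lem:PauliTwirl}) driving the inductive $\exp(k)$-loss argument of \Cref{claim:QRconstructingstrategy}. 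One small correction: \Cref{lem:detypeandsync} plays no role in this theorem's proof — the statement only asserts \emph{typed} CL samplability, and detyping is applied separately in the proof of \Cref{prop:QuestionReduction}.
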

\begin{proof}
	Fix $n, \alpha, k  \in \bN$, model  $t \in \{*, co\}$. Let $\cG = (\cX, \cA, \mu,D)$ be a game that satisfies the description for~\Cref{thm:Introspectgame}, and let $\cG^{\text{Intro}} = (\cX^{\text{Intro}} , \cA^{\text{Intro}} , \mu^{\text{Intro}} ,D^{\text{Intro}} )$ be the (typed)-introspection protocol. Let $(\decidertypedtype^{\text{Intro}}_k, \decidertypedquestionpair^{\text{Intro}}_k)$ be the typed graph as given in~\Cref{fig:Introspectiongraph}. Since $\cG$ is a $k$-th CL samplable game with the input distribution $\mu$ being a $(k,m,p)$ CL distribution, by~\Cref{def:CLdistribution}, there exist two CL functions $\decidertypedfunction^{A}, \decidertypedfunction^{B}: \bF_{2^p}^m \rightarrow \bF_{2^p}^m$ over registers $\{V_j\}_{j \in [k]}$ which can be used to sample from $\mu$. Furthermore, recall from the preliminaries that $U_{2 \rightarrow p}$ is the unitary map given in~\Cref{lem:PaulitogeneralPauli}, and we write $U_{2 \rightarrow p}^m$ as a unitary acting on $\bC^{2^{n^{\alpha}}}$ defined by $U_{2 \rightarrow p}^{\otimes m} \otimes \cI_{2^{n^{\alpha} - m \cdot p}}$.
	
	
	For the ``sample complexity" clause in the theorem, since the ``question content" specified in~\Cref{fig:Introspectionsample} are empty except for the question labels from the $n^{\alpha}$-Pauli basis game, in which the corresponding CL functions are already specified in the proof of~\Cref{thm:completenessPaulibasis}; the distribution $\mu^{\text{intro}}$ is samplable via a $(\decidertypedtype^{\text{Intro}}_k, \decidertypedquestionpair^{\text{Intro}}_k, \{\decidertypedfunction^{v}\}_{v \in \decidertypedtype^{\text{Intro}}} )$ typed CL distribution as specified by the theorem statement. The ``furthermore" part follows from~\Cref{fig:Introspectionsample} depends only on $n^{\alpha}$ for the Pauli basis test and $k$ for the number of ``Hide" question labels. This concludes the proof for the ``Sample complexity" clause of the theorem. 
	
	\jqnote{You need to divide completeness and the soundness proof into it's own subsection.}
	
	For the ``completeness" clause in the theorem, let $\strategy$ be a perfect oracularizable strategy in model $t$ for $\cG$. Since $\cG$ is synchronous, $\strategy$ is synchronous and hence by~\Cref{lem:structsync}, we can write $\strategy = (\cL^2(\alicealg, \tau), \ket{\tau}, \{A_a^x \})$ as a projective synchronous strategy.
	
	
	Before defining the perfect strategy for $\cG^{\text{Intro}}$, we first introduce some notations for some data processed Pauli measurements which is used as a part of the perfect strategy. For $P \in \{A,B\}$, we define the data processing measurement for the CL function $\decidertypedfunction^{P}$ as 
	\begin{equation*}
		\rho^{p, Z}_{[\decidertypedfunction^{P}|x]} = \sum_{a \in V| \decidertypedfunction^{P}(a) = x} \rho^{p, Z}_{a}, 
	\end{equation*}
	for all $x \in V$. By the definition of a CL function given in~\Cref{def:CLverifier}, the measurement above is equivalent to the following: The prover first performs the data processing measurement $\{\rho^{p, Z}_{[\decidertypedfunction^{P}_{0,0}|x_0]}\}_{x_0 \in V_0}$ to sample some $x_0 \in V_0$. Then, for $1 \leq i < k$, the prover performs the measurement 
	\begin{equation}\label{eq:introcompdata}
		\left\{\rho^{p, Z}_{[\decidertypedfunction^{P}_{i,x_{< i}}|x_i]} \right\}_{x_i \in V_i}
	\end{equation}
	to obtain measurement outcome $x_i$ and compute $x_{< i+1} = x_i + x_{< i}$. The final measurement outcome is $x =  x_{< k}$. For $j \in [k]$, we define the measurement operator 
	\begin{equation} \label{eq:introcomp1}
		\left\{\rho^{p, Z}_{[\decidertypedfunction^{P}_{< j-1}|x_{< j}]}\right\}_{x_{< j} \in V_{\leq j}}
	\end{equation}
	similarly to $\rho^{p, Z}_{[\decidertypedfunction^{P}|x]}$, except that only the first $j$ measurements from~\eqref{eq:introcompdata} are performed. 
	
	Recall from~\Cref{sec:fieldintro} that, for a linear map $\decidertypedfunction$, we use $\decidertypedfunction^{\bot}$ to denote the linear map that projects onto $\left(\ker{(\decidertypedfunction)}^{\bot}\right)^{C}$. By~\Cref{lem:linePaulicommutation}, for a fixed $x_{< j} \in V_{< j}$, the measurement operator $\left\{\rho^{p, Z}_{[\decidertypedfunction^{P}_{i,x_{< i}}|x_i]} \right\}_{x_i \in V_i}$ pairwise commute with the measurement operator $\left\{\rho^{p, X}_{[\left(\decidertypedfunction^{P}\right)^{\bot}_{i,x_{< i}}|x_i^{\bot}]} \right\}_{x_i^{\bot} \in V_i}$. 

	We define a perfect symmetric oracularizable strategy $\strategy^{\text{Intro}} = (\bC^{2^{n^\alpha +1} } \otimes \bC^{2^{n^{\alpha} +1}} \otimes \cH, \MEstate{2}^{\otimes (n^\alpha +1)}\otimes \ket{\tau}, \{ M^x_a\} )$ for $\cG^{\text{Intro}}$ as follows: For the question labels $v \in  \decidertypedtype^{\text{Intro}}_k$ which intersects a blue edge as specified in~\Cref{fig:Introspectiongraph}. We define the measurement operator $M^x_a$
	\begin{align*}
		M^v_a = P_a^b \otimes \cI_{A}
	\end{align*}
	where $\strategy^{\text{PB}} = (\bC^{2^{n^{\alpha} +1}} \otimes \bC^{2^{n^{\alpha} +1}} , \MEstate{2}^{\otimes (n^{\alpha} +1)}, \{P_a^x \})$ is the perfect oracularizable strategy for the $n^{\alpha}$-Pauli basis test guaranteed by~\Cref{thm:completenessPaulibasis}. Notably, for $t_W \in \{0,1\}^{n^{\alpha}}$
	\begin{equation*}
		M^{(\text{Pauli}, W)}_{t_W} = \rho_{t_W}^{W} \otimes \cI_2 \otimes \cI_{\alicealg}.
	\end{equation*}
	
	Let $\cI_R =  \cI_{2^{n^{\alpha} - p \cdot m +1}}$. We define the measurement for the rest of the question label as follows:
	\begin{align*}
			M^{(\text{Gen Pauli}, W)}_{t_W} &= U_{2 \rightarrow p}^m(\rho^{p, W}_{s_W}) (U_{2 \rightarrow p}^m)^* \otimes \cI_{R} \otimes \cI_{\alicealg} &\text{ for all } s_W \in \bF_{2^p}^m, \\
			M^{(\text{Sample}, \decidertypedfunction^{P})}_{(s, a)} &= U_{2 \rightarrow p}^m(\rho^{p, Z}_{s})(U_{2 \rightarrow p}^m)^* \otimes  \cI_{R} \otimes A^{\decidertypedfunction^{P}(s)}_{a } &\text{ for all } s \in \bF_{2^p}^m, a \in \cA, \\
			M^{(\text{Intro}, \decidertypedfunction^{P})}_{(x, a)} &= U_{2 \rightarrow p}^m(\rho^{p, Z}_{[\decidertypedfunction^{P}|x]})(U_{2 \rightarrow p}^m)^* \otimes  \cI_{R} \otimes A^{x}_{a } &\text{ for all } s \in \bF_{2^p}^m, a \in \cA, 
	\end{align*}
	We define the measurement operator for $M^{(\text{Read}, \decidertypedfunction^{P})}_{t, t^{\bot},a}$ as follows: The prover first performs the measurement $U_{2 \rightarrow p}^m \left(\rho^{p, W}_{[\decidertypedfunction^{P}|t]}\right)(U_{2 \rightarrow p}^m)^*\otimes  \cI_{R} \otimes A^{x}_{a}$ (where the procedure for performing $\rho^{p, W}_{[\decidertypedfunction^{P}|t]}$ is defined in~\eqref{eq:introcompdata}) and samples $(t,a) \in \cX \times \cA$. Then the prover performs the measurement 
	\begin{equation*}
		\left\{U_{2 \rightarrow p}^m \left(\rho^{p, X}_{[\left(\decidertypedfunction^{P}_x\right)|x^{\bot}]}\right)(U_{2 \rightarrow p}^m)^* \otimes\cI_{R} \otimes \cI_{\alicealg} \right\}_{t^{\bot} \in V}.
	\end{equation*}
	Since these measurements commute, the measurement $M^{(\text{Read}, \decidertypedfunction^{P})}_{t, t^{\bot},a}$, defined as the product of the two measurements described, is a well-defined measurement. 
	
	For $i \in [k]$, the measurement operator for $M^{(\text{Hide}_{i}, \decidertypedfunction^{P})}_{t_{< i}, t^{\bot}_{\leq i},r_{> i}}$ is defined in a similar way. The prover first performs the measurement 
	\begin{equation*} 
		\left\{ U_{2 \rightarrow p}^m  \left(\rho^{p, Z}_{[\decidertypedfunction^{P}_{< j-1}|t_{< i}]} \right) (U_{2 \rightarrow p}^m)^* \otimes\cI_{R} \otimes \cI_{\alicealg} \right\}_{t_{< i} \in V_{\leq j}}
	\end{equation*}
	to sample $t_{< i} \in V_{< j}$. Then performs the measurement 
	\begin{equation*} 
		\left\{U_{2 \rightarrow p}^m  \left(\rho^{p, X}_{r_{> i}} \right) (U_{2 \rightarrow p}^m)^* \otimes\cI_{R} \otimes \cI_{\alicealg} \right\}_{r_{> i} \in V_{> j}},
	\end{equation*}
 	 to sample $r_{> i} \in V_{> j}$. We remark that by the comment after~\Cref{lem:PaulitogeneralPauli}, these two measurements commute. Lastly, the prover performs the measurement
 	 \begin{equation} 
 	 	\left\{  U_{2 \rightarrow p}^m \left(\rho^{p, X}_{\left[ \left(\decidertypedfunction^{P}_{\leq i, x_{< j}}\right)^{\bot}|x_{\leq i}^{\bot}\right]} \right)(U_{2 \rightarrow p}^m)^* \otimes\cI_{R} \otimes \cI_{\alicealg}   \right\}_{x_{\leq i}^{\bot} \in V_{\leq i}}
 	 \end{equation}
   	to sample $t^{\bot}_{\leq i} \in V_{\leq j}$. This measurement commutes with the first measurement as proven above and commutes with the second measurement because both are generalized Pauli $X$ measurements. Hence $M^{(\text{Hide}_{i}, \decidertypedfunction^{P})}_{t_{< i}, t^{\bot}_{\leq i},r_{> i}}$ defined as the product of the above three measurements is a well-defined measurement. For clarity, we write all the measurement operator on the table below. 
	\begin{table}[!htb]
		\centering
		\footnotesize
      \renewcommand{\arraystretch}{1.8}
	\begin{tabularx}{\textwidth}{c c c c c c c}
			\hline
		Label	& $V_0$ &  $V_1$ &$V_2$ & $\cdots$ &  $V_{k-1}$ & $\alicealg$   \\
			\hline
		$(\text{Gen Pauli}, X)$ & $(s_{W})_0 \sim \sigma^{X}$ & $(s_{W})_1 \sim\sigma^{X}$ & $(s_{W})_2 \sim\sigma^{X}$ &  &$(s_{W})_{k-1} \sim\sigma^{X}$ & $\cI_{\alicealg}$ \\
		\hline
		$(\text{Hide}_{0}, \decidertypedfunction^{P})$ & $t_0^{\bot} \sim \sigma^{X}_{(\decidertypedfunction^P_{0})^{\bot}}$ & $r_1 \sim\sigma^{X}$ & $r_2 \sim\sigma^{X}$ &  &$r_{k-1} \sim\sigma^{X}$ & $\cI_{\alicealg}$ \\
		 \hline
		\multirow{2}{*}{$(\text{Hide}_{1}, \decidertypedfunction^{P})$} &$t_0^{\text{Line}} \sim \sigma^{Z}_{\decidertypedfunction^P_0}$ & \multirow{2}{*}{$t_1^{\bot} \sim \sigma^{X}_{(\decidertypedfunction^P_1)^{\bot}}$} & \multirow{2}{*}{$r_2 \sim \sigma^{X}$} &   $\cdots$  & \multirow{2}{*}{$r_{k-1} \sim \sigma^{X}$} & \multirow{2}{*}{$\cI_{\alicealg}$} \\
		&$t_0^{\bot} \sim \sigma^{X}_{(\decidertypedfunction^P_{0})^{\bot}}$ &  &  &   $\cdots$  & & \\
		\hline
		\multirow{2}{*}{$(\text{Hide}_{2}, \decidertypedfunction^{P})$} &$t_0^{\text{Line}} \sim \sigma^{Z}_{\decidertypedfunction^P_0}$ & $t_1^{\text{Line}} \sim\sigma^{Z}_{\decidertypedfunction^P_1}$ & \multirow{2}{*}{$t_2^{\bot} \sim \sigma^{X}_{(\decidertypedfunction^P_{2})^{\bot}}$} &   $\cdots$  & \multirow{2}{*}{$r_{k-1} \sim \sigma^{X}$} & \multirow{2}{*}{$\cI_{\alicealg}$} \\
		&$t_0^{\bot} \sim \sigma^{X}_{(\decidertypedfunction^P_{0})^{\bot}}$ & $t_1^{\bot} \sim \sigma^{X}_{(\decidertypedfunction^P_{1})^{\bot}}$ & $ $ &   $\cdots$  & & \\
		\hline
		\multicolumn{7}{c}{$\cdots$} \\
	 \hline
		\multirow{2}{*}{$(\text{Hide}_{k-1}, \decidertypedfunction^{P})$} &$t_0^{\text{Line}} \sim \sigma^{Z}_{\decidertypedfunction^P_0}$ & $t_1^{\text{Line}} \sim\sigma^{Z}_{\decidertypedfunction^P_1}$ & $t_2^{\text{Line}} \sim\sigma^{Z}_{\decidertypedfunction^P_2}$ &   $\cdots$  & \multirow{2}{*}{$t_{k-1}^{\bot} \sim \sigma^{X}_{(\decidertypedfunction^P_{k-1})^{\bot}}$} & \multirow{2}{*}{$\cI_{\alicealg}$} \\
		&$t_0^{\bot} \sim \sigma^{X}_{(\decidertypedfunction^P_{0})^{\bot}}$ & $t_1^{\bot} \sim \sigma^{X}_{(\decidertypedfunction^P_{1})^{\bot}}$ & $t_2^{\bot} \sim\sigma^{X}_{(\decidertypedfunction^P_{2})^{\bot}}$ &   $\cdots$  & & \\
		\hline
		\multirow{2}{*}{$(\text{Read}, \decidertypedfunction^{P})$} &$(t^{\text{Line}})_0 \sim \sigma^{Z}_{\decidertypedfunction^P_0}$ & $(t^{\text{Line}})_1 \sim\sigma^{Z}_{\decidertypedfunction^P_1}$ & $(t^{\text{Line}})_2 \sim\sigma^{Z}_{\decidertypedfunction^P_2}$ &   $\cdots$  &$(t^{\text{Line}})_{k-1} \sim\sigma^Z_{\decidertypedfunction^P_{k-1}}$ & \multirow{2}{*}{$a \sim A^{t^{\text{Line}}}_{a}$} \\
		 &$(t^{\bot})_0 \sim \sigma^{X}_{(\decidertypedfunction^P_{0})^{\bot}}$ & $(t^{\bot})_1 \sim \sigma^{X}_{(\decidertypedfunction^P_{1})^{\bot}}$ & $(t^{\bot})_2 \sim\sigma^{X}_{(\decidertypedfunction^P_{2})^{\bot}}$ &   $\cdots$  &$(t^{\bot})_{k-1} \sim \sigma^{X}_{(\decidertypedfunction^P_{k-1})^{\bot}}$ & \\
		 
		\hline
		$(\text{Intro}, \decidertypedfunction^{P})$  & $t_0 \sim \sigma^{Z}_{\decidertypedfunction^P_0}$ & $t_1 \sim\sigma^{Z}_{\decidertypedfunction^P_1}$ & $t_2 \sim\sigma^{Z}_{\decidertypedfunction^P_2}$ &   $\cdots$  &$t_{k-1} \sim\sigma^Z_{\decidertypedfunction^P_{k-1}}$ & $a \sim A^{t}_{a}$ \\
		\hline
		$(\text{Sample}, \decidertypedfunction^{P})$ & $s_0 \sim \sigma^{Z}$ & $s_1 \sim\sigma^{Z}$ & $s_2 \sim\sigma^{Z}$ &  $\cdots$ &$s_{k-1} \sim\sigma^Z$ & $a \sim A^{\decidertypedfunction^{P}(s)}_{a}$ \\
		\hline
		$(\text{Gen Pauli}, Z)$ & $(s_{Z})_0 \sim \sigma^{Z}$ & $(s_{Z})_1 \sim\sigma^{Z}$ & $(s_{Z})_2 \sim\sigma^{Z}$ & $\cdots$  &$(s_Z)_{k-1} \sim\sigma^Z$ & $\cI_{\alicealg}$ \\
		\hline
		\end{tabularx}
		\caption{Summary of the measurement operator $M^v_a$, and as well as the output being sampled from each measurement operator. The notation $x \sim M$ are the variable $x$ sampled from the measurement operator. For $i \in [k]$, the measurement $\sigma^{Z}_{\decidertypedfunction^P_i}$ (resp.$\sigma^{X}_{(\decidertypedfunction^P_{i})^{\bot}}$) above are shorthand for $\sigma^{Z}_{[\decidertypedfunction^P_{i, t_{<i}}|x]}$ (resp. $\sigma^{X}_{[(\decidertypedfunction^P_{i, t_{<i}})^{\bot}| x]}$) (where the $v_i$ depends on the previous measurement outcome). We also omit the conjugation by $ U_{2 \rightarrow p}^m$ for clarity. }
		\label{fig:PerfectstrategyIntro}
	\end{table}

	First, the measurement $M^v_a$ are projective, as given any $v \in \decidertypedtype^{\text{Intro}}_k$, the measurements $M^v_a$ are defined by products of projective measurements which all commute with each other. Since $M^v_a \in \bigotimes_{i \in [n^{\alpha}+1]} \cM_2(\bC) \otimes \alicealg$, we have
	\begin{equation*}
 		\left(M^x_a \otimes \cI_{2^{n^{\alpha} +1}} \right) \MEstate{2}^{n^{\alpha} +1} \ket{\tau} = 	\left(\cI_{2^{n^{\alpha} +1}} \otimes (M^x_a)^{op} \right) \MEstate{2}^{n^{\alpha} +1} \ket{\tau} 
	\end{equation*} 
	where $M^x_a$ above is acting on one registers of the entangled state $\MEstate{2}^{n^{\alpha} +1}$ and the state $\ket{\tau}$. Thus, the strategy $\decidertypedtype^{\text{Intro}}_k$ succeeds with probability $1$ on the consistency equations. By~\Cref{thm:soundnessPaulibasis}, the strategy $\strategy^{\text{Intro}}_k$ is perfect and oracularizable when restricted to the question pair restricted to the blue edge (i.e. the $n^{\alpha}$-Pauli basis test) within~\Cref{fig:Introspectiongraph}. By~\Cref{lem:PaulitogeneralPauli}, the strategy $\strategy^{\text{Intro}}_k$ is perfect and oracularizable when restricted to the question pair $(\text{Pauli}, W)$ -- $(\text{Gen Pauli}, W)$. When restricted to the question pair $(\text{Intro}, \decidertypedfunction^{A})$ -- $(\text{Intro}, \decidertypedfunction^{B})$, by construction, the question pair $(t_A,t_B)$ sampled by the measurement operator of $\strategy^{\text{Intro}}$ precisely corresponds to the question distribution $\mu$, as $(t_A,t_B) = (\decidertypedfunction^{A}(s), \decidertypedfunction^{B}(s))$ for some $s \in V$. Since $\strategy$ is a perfect oracularizable strategy for the game $\cG$, $\strategy^{\text{Intro}}$ is also a perfect oracularizable strategy when restricted to the ``Intro" question pair. It is straightforward to verify that $\strategy^{\text{Intro}}$ remains a perfect and oracularizable strategy for the remainder question pairs by the table above, concluding the proof for ``completeness" part of the theorem. 
	
	For ``soundness", suppose that $\omega^t(\cG^{\text{Intro}}) > 1 - \eps$, we wish to show that $\omega^t(\cG) > 1 - O(\poly(\exp{(k)}, \eps))$. Let $\strategy = (\cL^2(\alicealg, \tau), \sigma \ket{\tau}, \{ A_{a}^x \}, \{ (B_{a}^x)^{op} \})$	be a tracially embeddable strategy in model $t$ with $\omega(\cG^{\text{Intro}}, \strategy) > 1 -\eps$. We start the proof by first showing the following claim, this is an analogue of~\cite[Lemma 8.19]{jiMIPRE2022a}
	\begin{lemma}
		There is a symmetric strategy
		\begin{equation*}
			\strategy' =  (\bC^{2^{n^{\alpha}}}_{A_1} \otimes \bC^{2^{n^{\alpha}}}_{B_1} \otimes \bC^{2^{n^{\alpha}}}_{A_2} \otimes \bC^{2^{n^{\alpha}}}_{B_2} \otimes  \cL^2(\alicealg, \tau), \ket{\text{ME}_2}^{\otimes n^{\alpha}}_{A_1 B_1} \otimes \ket{\text{Aux}}_{A_2 B_2 \alicealg},\{\hat{P^x_a} \})
		\end{equation*}
		for some vector state $\ket{\text{aux}}_{A_2 B_2 \alicealg} \in \bC^{2^{n^{\alpha}}}_{A_2} \otimes \bC^{2^{n^{\alpha}}}_{B_2} \otimes  \cL^2(\alicealg, \tau)$. Furthermore, $\omega(\cG^{\text{Intro}}, \strategy') > 1 - \poly(n,\eps)$ , and for $W \in \{X,Z\}$. 
		\begin{equation}
			P^{(\text{Pauli, W})}_{u} = \rho^{W}_u
		\end{equation}
	\end{lemma}
	
	\begin{proof}
		Consider $\strategy$ when restricted to the $n^{\alpha}$-Pauli basis test (the blue vertices in~\Cref{fig:Introspectiongraph}). Since by sampling a random question pair, there is a $O(k)$ probability that a question from the Pauli basis test is selected. This implies that $\strategy$ succeeds on the $n^{\alpha}$-Pauli basis test with probability at least $1 - O(k \cdot\eps)$. By~\Cref{thm:soundnessPaulibasis}, there exist two isometries $V_{A}, V_{B}$ with $(V_B \tensor \cI_{2^{2 \cdot n^{\alpha}}}) V_{A} =  V_{A} (V_B \tensor \bI_{2^{2 \cdot n^{\alpha}}})$; a state $\ket{\text{Aux}}_{A_2 B_2 \alicealg}  \in \bC^{2^{2n^\alpha}} \otimes \cL^2(\alicealg, \tau)$ such that	\begin{equation} \label{eq:introsoundness1}
			\left\| \left(  V_B \tensor \cI_{2^{2n^{\alpha}}}  \right) V_{A} (\sigma \ket{\tau})  -\ket{\text{ME}_2}^{\tensor n^{\alpha}} \ket{\text{Aux}}   \right\|^2 \leq O\left(\poly(k, \eps)\right),
		\end{equation}
		and for all $W \in \{X, Z\}$ and $u \in \bF_{2^{n^\alpha}}$
		\begin{align} 
			\nonumber &\|  ((V_{A} A^{\text{Pauli, W}}_u  V_{A}^*)_{ A_1 A_2 \alicealg} \tensor (\cI_{2^{2n^{\alpha}}} )_{B_1 B_2} - \\
			\label{eq:introsoundnessrig1} &\quad  (\rho^{W}_u)_{A_1} \tensor (\cI_{2^{n^{\alpha}}})_{A_2} \otimes  (\cI_{2^{2n^{\alpha}}} )_{B_1 B_2}  \otimes (\cI_{\cH})_{\alicealg})  \ket{\text{Aux}}_{\alicealg A_2 B_2}  \MEstate{2}^{\tensor n}_{A_1 B_1}  \|^2 \leq O\left(\poly(\eps )\right), \\
			\nonumber &\|  ( \left(V_{B} ( B^{(\text{Pauli, W})}_u)^{\text{op}}  V_{B}^*\right)_{\alicealg B_1 B_2} \tensor (\cI_{2^{2n^{\alpha}}} )_{A_1 A_2} - \\
			\label{eq:introsoundnessrig2} &\quad  (\rho^{W}_u)_{B_1} \tensor (\cI_{2^{n^\alpha}})_{B_2} \otimes  (\cI_{2^{2n^\alpha}} )_{A_1 A_2} \otimes (\cI_{\cH})_{\alicealg} ) \ket{\text{Aux}}_{\alicealg A_2 B_2}  \MEstate{2}^{\tensor n^\alpha}_{A_1 B_1}  \|^2 \leq O\left(\poly(\eps )\right).
		\end{align}
		For each $(x, a) \in \cX^{\text{Intro}} \times \cA^{\text{Intro}}$, we define $\hat{A_a^x} = V_{A} A^{x}_a V_{A}^*$, and likewise $\hat{B_a^x} = V_B B^{x}_a V_B^*$. Define $\strategy_1$ as a strategy which uses the state $\ket{\text{EPR}}^{\tensor n^{\alpha}}_{A_1 B_1} \ket{\text{Aux}}_{A_2 B_2 \alicealg}$ and the measurement operator $\hat{A_a^x}$ and $\hat{B_a^x}$ for all questions instead. By~\eqref{eq:introsoundness1}, $\strategy_1$ succeeds in $\cG^{\text{Intro}}$ with probability $1 - O(\poly(k, \eps))$. By the description given in~\Cref{fig:PerfectstrategyIntro}.  Since $\cG^{\text{Intro}}$ is $O(k)$-balance, this implies that $\strategy_1$ is $O(\poly(k,\eps))$-synchronous; hence by~\Cref{cor:orthogonalizationlemmasync}, there exists a projective, symmetric strategy
		\begin{equation*}
			\strategy_2 = (\bC^{2^{n^{\alpha}}}_{A_1} \otimes \bC^{2^{n^{\alpha}}}_{B_1} \otimes \bC^{2^{n^{\alpha}}}_{A_2} \otimes \bC^{2^{n^{\alpha}}}_{B_2} \otimes  \cL^2(\alicealg, \tau), \ket{\text{EPR}}^{\tensor n^{\alpha}}_{A_1 B_1} \ket{\text{Aux}}_{A_2 B_2 \alicealg},  \{\hat{P^x_a}\})
		\end{equation*}
		such that $\hat{A_a^x} \approx_{O(\delta)} \hat{P^x_a}$ with $\omega(\strategy_2, \cG^{\text{Intro}}) > 1 - O(\poly(n))$. 
		
		Define $\strategy'$ as the same measurement operator as $\strategy_2 $, except for the question label (Pauli, W) where instead the Pauli measurements $(\rho^{W}_u)_{A_1}$ (resp. $(\rho^{W}_u)_{B_1}$) are used instead. The lemma then follows by combining~\Cref{eq:introsoundnessrig1} and $\hat{A_a^x} \approx_{O(\delta)} \hat{P^a_x}$. 
	\end{proof}

	We wish to transform the underlying state of $\strategy'$ in a way that the underlying entangled state is $\ket{\text{ME}_{2^p}}^{\otimes m}$ instead, consider the strategy $\strategy''$, which is defined on the same Hilbert space as $\strategy'$ except that the under lying state is 
	\begin{equation*}
	 \left((U_{2 \rightarrow p}^m)_{A_1} \otimes (U_{2 \rightarrow p}^m)_{B_1} \otimes \cI_{A_2 B_2 \alicealg} \right)\ket{\text{ME}_2}^{\otimes n^{\alpha}}_{A_1 B_1} \otimes \ket{\text{Aux}}_{A_2 B_2 \alicealg}
	\end{equation*}
	and the measurement operator is defined as $P_a^x = (U_{2 \rightarrow p}^m \otimes \cI_{2^{n^{\alpha}}} \otimes \cI_{\alicealg})^* \hat{P_a^x} (U_{2 \rightarrow p}^m \otimes \cI_{2^{n^{\alpha}}} \otimes \cI_{\alicealg})$. Since $U_{2 \rightarrow p}^m$ is a unitary, $\omega(\strategy', \cG^{\text{Intro}}) = \omega(\strategy'', \cG^{\text{Intro}})$. By~\Cref{lem:PaulitogeneralPauli}, we can rewrite the state in $\strategy''$ as 
	\begin{equation*}
		\left(\ket{\text{ME}_{2^p}}^{\otimes m} \otimes \ket{\text{ME}_{2}}^{\otimes n^{\alpha} - p \cdot m} \right)_{A_1 B_1}  \otimes \ket{\text{Aux}}_{A_2 B_2 \alicealg}
	\end{equation*}
	and $P^{\text{\text{Pauli}, W}}_{a}  = (\rho^{p, W}_{\kappa^{-1}(\pi_{\leq p \cdot k}(a))})_{A_1}\otimes (\rho^{W}_{\pi_{> p \cdot k}(a)} \otimes \cI)_{\alicealg}$ with $(P^{\text{Pauli, W}}_{a})^{op} = (\rho^{p, W}_{\kappa^{-1}(\pi_{\leq p \cdot k}(a))})_{B_1}\otimes (\rho^{W}_{\pi_{> p \cdot k}(a)} )_{\alicealg}$. Since the question pair (Pauli, W) \textbf{--} (Gen Pauli, W) occurs with probability $O(\frac{1}{k})$, 
	\begin{equation*}
		(\rho^{p, W}_{s}\otimes \rho^{W}_{\pi_{> p \cdot k}(a)})  \simeq_{O(\poly(k, \eps))} P^{(\text{Gen Pauli, W})}_s, 
	\end{equation*}
	and since $\rho^{W}_{\pi_{> p \cdot k}(a)})$ is a set of PVM, by summing over $\rho^{W}$ and apply~\Cref{lem:closetodistance}, 
	\begin{equation*}
		\rho^{p, W}_{s}  \approx_{O(\poly(k, \eps))} P^{(\text{Gen Pauli, W})}_s. 
	\end{equation*}
	For simplicity of notation, we rewrite $\strategy'$ as follows. We shrink the registers $A_1$ and $B_1$ to include only the first $\ket{\text{ME}_{2^p}}^{\otimes m}$ pair, and combine the remaining parts of $A_1$ and $B_1$, as well as the registers $A_2$ and $B_2$, into the ``$\alicealg$" infinite-dimensional register. Hence, we can write
	\begin{equation*}
		\strategy'' =  \left(\bC^{2^{p\cdot m}}_{A_1} \otimes \bC^{2^{p \cdot m}}_{B_1} \otimes  \cL^2(\bC^{2^{4n^{\alpha} - 2p \cdot m}} \otimes  \alicealg, \Tr\otimes \tau), \ket{\text{ME}_{2^p}}^{\otimes m}_{A_1 B_1} \otimes \ket{\text{Aux}}_{\alicealg}, \{ P_{a}^x \} \right),
	\end{equation*}

	The remainder of the proof proceeds similarly as~\cite[Section 8.4.3]{jiMIPRE2022a}, except we use the notation from~\Cref{fig:tracialemdtofd} to translate the proof from the finite-dimensional setting to the tracially embeddable strategies setting. The full proof is provided in~\Cref{sec:QRsoundness} for completeness. 
\end{proof}

\subsection{Proof of~\Cref{prop:QuestionReduction}} \label{sec:QRproof}

In this subsection, we give a proof for~\Cref{prop:QuestionReduction}.
\begin{proof}
	Fix the constant $\alpha,k \in \bN$. We define the algorithm $\texttt{QuestionReduction}_{\alpha, k}$ as follows. Given a pair of Turing machine $(\samplerTM, \deciderTM)$, we first describe a sequence of typed samplable games $\cG_{n}^{\text{Intro}}$, then we use~\Cref{lem:detypeandsync} to convert $\cG_{n}^{\text{Intro}}$ into a CL samplable game as desired. Hence, fix some input $(\samplerTM, \deciderTM)$ and integer $n$, we define $\cG_{n}^{\text{Intro}}$ as the following:
	
	The game $\cG_{n}^{\text{Intro}}$ has the sampling procedure for the $(\cG, n^{\alpha}, k)$-introspection game as pre given~\Cref{fig:Introspectiongraph} for any arbitrary game $\cG$. By~\Cref{thm:Introspectgame}, the input distribution is independent of the game $\cG$. For the decision process, given $(v_0, v_1) \in \decidertypedquestionpair^{\text{Intro}}_k$, $u_x, u_z \in \{0,1\}^{\alpha \lceil\log(n) \rceil}$, the question label that the verifier sends to the two provers. Let $a,b \in \{0,1\}^*$ be the answer that the verifier receives the answers based on the question label. The verifier computes the following: If at any point in the computation process, $|a|, |b| \geq 3 \cdot n^{\alpha}$ (since each input have at most 3 item of length at most $n^{\alpha}$), or the computation step for running $\samplerTM$, $\deciderTM$ either returns an invalid output or runs for time more than $n^{\alpha}$ steps, the verifier terminates and returns $0$ (i.e. the verifier rejects). The verifier first computes $(k_n, m_n , p_n) = \samplerTM(n, \text{parameter})$, and rejects if $k_n > k$ and $m_n \cdot p_n > n^{\alpha}$. 
	
	Based on the vertices $(v_0, v_1) \in \decidertypedquestionpair^{\text{Intro}}_k$, the verifier first divides the answer associate with each question label into the format given in~\Cref{fig:Introspectionsample}. Then the verifier does the following based on $(v_0, v_1)$:
	\begin{itemize}
		\item ($n^{\alpha}$-Pauli Basis): The verifier accepts according to the rules described in~\Cref{fig:PauliBasis}, this can be done uniformly in time $O(\poly(n))$ by~\Cref{thm:completenessPaulibasis}. 
		\item (Self-loop): The verifier accepts iff $a = b$; otherwise reject. This can be done in $O(n^{\alpha})$ time by the terminating assumption above. 
		\item (Pauli, $W$) \textbf{--}  (Gen Pauli, $W$): The verifier accepts iff $|b| = p_n \cdot m_n$ and the first $p_n \cdot m_n$ bits of $a$ are equal to $b$ (where recall, the elements $b \in \bF_{2^p}^n$ is represented using the canonical representation in this paper). 
		\item (Gen Pauli, $X$) \textbf{--} $(\text{Hide}_0, \decidertypedfunction^{P})$: The verifier first uses $\samplerTM(n, \text{Function}, \vec{1})$ to compute the canonical basis which spans $V_0^n$ (where $\vec{1} \in \{0,1\}^{m(n) \cdot p_n}$ is the all $1$ vector). For each $\hat{e}_j$, the canonical basis which spans $V_0$, compute $\samplerTM(n, \text{Function}, P, 0, 0, \hat{e}_i)$, and run the standard Gaussian elimination to find the description of the subspace $\ker{\decidertypedfunction^{P, n}_{0,0}}$ and $\ker{\decidertypedfunction^{P, n}_{0,0}}^{\bot}$. Finally, parse $s_{X} =  (s_{X})_{0} +(s_{X})_{0}^{\bot} + (s_{X})_{> 0}^n \in \ker{\decidertypedfunction^{P, n}_{0,0}} \oplus\ker{\decidertypedfunction^{P, n}_{0,0}}^{\bot} \oplus V_{> 0}$. The verifier accepts iff the answers from the provers are in the correct subspace according to~\Cref{fig:Introspectionverification} (i.e. $ t_{\leq 0}^{\bot} \in V_0^n$) and $(s_{X})_{0}^{\bot} = t_{\leq 0}^{\bot}$ and $(s_{X})_{> 0} =  r_{> 0}$. 
		\item $(\text{Hide}_{i}, \decidertypedfunction^{P})$ \textbf{--}  $(\text{Hide}_{i+1}, \decidertypedfunction^{P})$ for $i \in [k]$: If $i \geq k_n$, treat this as a consistency check. Otherwise using the same technique as above, compute the description for $V_{< i}^n$, $V_i^n$ and $V_{> i+1}$. Parse 
		\begin{align*}
			t_{\leq i+1 }^{\bot} &= 	\tilde{t}_{\leq i }^{\bot} + \tilde{t}_{ i +1}^{\bot} \in V_{\leq i}^n \oplus V_{i+1}^n, \quad t_{< i+1 }^{\text{Line}} = 	\tilde{t}_{< i}^{\text{Line}} + \tilde{t}_{ i}^{\text{Line}} \in V_{< i}^n \oplus V_{i}^n,
		\end{align*}
		and use a similar computation step to compute the description for $\ker{\left(\decidertypedfunction^{P, n}_{i+1,t_{< i +1}^{\text{Line}} }\right)}$ and $\ker{\left(\decidertypedfunction^{P, n}_{i+1,t_{< i +1}^{\text{Line}} }\right)}^{\bot}$. Then the verifier parse
		\begin{align*}
			r_{> i } &= 	\bar{r}_{i} + \bar{r}_{i}^{\bot} + \bar{r}_{>i+1} \in \ker{\left(\decidertypedfunction^{P, n}_{i+1,t_{< i +1}^{\text{Line}} }\right)} \oplus  \ker{\left(\decidertypedfunction^{P,n}_{i+1,t_{< i +1}^{\text{Line}} }\right)}^{\bot}  \oplus V_{> i+1}^n, 
		\end{align*}
		The verifier accepts iff the answers from the provers are in the correct subspace 
		\begin{align*}
			\tilde{t}_{\leq i }^{\bot} = t_{\leq i }^{\bot},\quad  \bar{r}_{i}^{\bot} = \tilde{t}_{i +1 }^{\bot}, \quad \tilde{t}_{< i }^{\text{Line}} = \tilde{t}_{< i }^{\text{Line}}, \quad \bar{r}_{>i+1} = r_{>i+1}.
		\end{align*} 
		\item $(\text{Hide}_{k-1}, \decidertypedfunction^{P})$ \textbf{--}   $(\text{Read},  \decidertypedfunction^{P})$: The verifier accepts iff the answers from the provers are in the correct subspace, $t_{\text{Read}, P} = t_{k-1}$, and $t_{\text{Read}, P}^{\bot} = t_{k-1}^{\bot}$.
		\item  $(\text{Read},  \decidertypedfunction^{P})$ \textbf{--}  $(\text{Intro},  \decidertypedfunction^{P})$: The verifier accepts iff $t_{\text{Read}, P}^{\text{Line}} = x_P$, and $a_{\text{Read}, P} = a_P$.
		\item (Gen Pauli, $Z$) \textbf{--} $(\text{Sample}, \decidertypedfunction^{P})$: The verifier accepts iff $s_Z = s_{\text{Sample}}$. 
		\item $(\text{Sample}, \decidertypedfunction^{P})$ \textbf{--} $(\text{Intro},  \decidertypedfunction^{P})$: The verifier computes $\decidertypedfunction^{P, n}(s_{\text{Sample}})$ by using the algorithm provided in~\Cref{sec:CLverifier} with $\samplerTM$. The verifier accepts iff the output provided by $\decidertypedfunction^{P}(s_{\text{Sample}})$ is equal to $x_P$ and  $a_{\text{Sample}, P} = a_P$.
		\item $(\text{Intro},  \decidertypedfunction^{A})$   \textbf{--}  $(\text{Intro},  \decidertypedfunction^{B})$: The verifier accepts iff $\deciderTM(n, x_A,x_B,a_A, a_B) = 1$. 
	\end{itemize}
	The above procedure uniformly defines a $(\cG_n, k, p)$-introspection game for $n \geq n_0$ assuming $(\samplerTM, \deciderTM)$ are valid Turing machines as given in the second part of~\Cref{prop:QuestionReduction}.  By~\Cref{lem:compfinitefield} and the hardcoded computation bound on $\samplerTM$ and $\deciderTM$, the above procedure can be computed in $O(\poly(n,k))$ time. 
	
	Since $\cG_{n}^{\text{Intro}}$ is typed samplable and $|\decidertypedquestionpair^{\text{Intro}}_k| = 30 + 2k$, by applying~\Cref{lem:CLsamplableLDT} to each $\cG_n^{\text{Intro}}$, we obtain a sequence of $(3, \alpha \lceil\log(n) \rceil + C^{\text{detype}} ,2)$ CL samplable games $\cG^{\QuestionRed}_n$, where $C$ is some constant that depends linearly on $k$. Since the detyping procedure given in~\Cref{def:detypeCLsampler} only adds extra string parsing and synchronization checks to the sampling and decision procedure. Each $\cG^{\QuestionRed}_n$ can still be sampled in $O(\poly(\log(n), k))$ time and verified in $O(\poly(n))$ time. Pick $\gamma^{\QuestionRed} \in \bN$ to be sufficiently large so that $\cG^{\QuestionRed}_n$ can be sampled in $O(\log^{\gamma^{\QuestionRed}}(n))$ time, $k \cdot \alpha \cdot \lceil\log(n) \rceil + C^{\text{detype}} = O(\log^{\gamma^{\QuestionRed}}(n))$ and $\cG^{\QuestionRed}_n$ can be decided in $O(\poly(n))$ time. 
	
	Define $(\samplerTM^{\QuestionRed}, \deciderTM^{\QuestionRed})$ in the following way: let $n_0^{\text{Run}}$ be the constant such that for all $n > n_0^{\text{Run}}$, $\cG^{\QuestionRed}_n$ can be sampled in fewer than $\log^{\gamma^{\QuestionRed}}(n)$ steps (no big-O notation here!) and decided in time fewer than $n^{\gamma^{\QuestionRed}}(n)$ steps, and furthermore $k \cdot \alpha \cdot \lceil\log(n) \rceil + C^{\text{detype}} \leq \log^{\gamma^{\QuestionRed}}(n)$. For all $n < n_0^{\text{Run}}$, $(\samplerTM^{\QuestionRed}(n), \deciderTM^{\QuestionRed}(n))$ returns an encoding of the rejecting game $\cG^{\text{reject}}$ defined in~\Cref{def:syncrejgame}, otherwise return an encoding for $\cG^{\QuestionRed}_n$. 
	
	We now verify all the properties listed in~\Cref{prop:QuestionReduction}.
		\begin{enumerate}
		\item (Computation time): This follows since the description of $(\samplerTM^{\QuestionRed}, \deciderTM^{\QuestionRed})$ only depends on the description of $(\samplerTM, \deciderTM)$ and some fixed constant. 
		\item (Synchronicity): This follows from the fact that $\cG^{\text{Intro}}_n$ is always synchronous. 
		\item (Complexity bounds for the output): Let $C^{\text{trivial}}$ be the constant such that $\cG^{\text{reject}}$ can be both sampled and decided in time $C^{\text{trivial}}$. This follows from the definition of $(\samplerTM^{\QuestionRed}, \deciderTM^{\QuestionRed})$. 
	\item (Independency) This follows since both  $\cG_{n}^{\text{Intro}}$ and the detyping procedure can be defined uniformly for all verifier sequences $(\samplerTM, \deciderTM)$, and the sampling procedure for $\cG_{n}^{\text{Intro}}$ does not depend on $\deciderTM$
\end{enumerate}
Let $\verifiersequence = (\samplerTM, \deciderTM)$ and $n_0 \in \bN$ be as pre described in the theorem. Let $n_0^{\QuestionRed} = \max\{ n_0^{\text{Run}}, n_0\}$, which both depend on $n^{\lambda}$, and $n_0^{\text{Run}}$ depends on the constant $k$. For all $n \geq n_0^{\QuestionRed}$:
\begin{enumerate}
\item (Completeness): This follows from~\Cref{thm:Introspectgame} and~\Cref{lem:detypeandsync}. 
\item (Soundness): By~\Cref{thm:Introspectgame}: There exists a polynomial $\mathbf{s}^{\text{Intro}}_{\alpha}$ such that 
\begin{equation*}
	\omega^*(\cG_n) \leq  1- \eps(n) \Longrightarrow \omega^{t}_s(\cG_n^{\text{Intro}}) \leq 1- \mathbf{s}^{\text{Intro}}_{\alpha}(k,  \eps(n))
\end{equation*}
Let $\mathbf{s}^{\QuestionRed}_{\alpha}(k,  \eps(n)) = \frac{\mathbf{s}^{\text{Intro}}_{\alpha}(k,  \eps(n))}{4(30+2k)^2 \cdot 16^{(30+2k)}}$. By~\Cref{lem:detypeandsync}
\begin{equation*}
	\omega^*(\cG_n) \leq  1- \mathbf{s}^{\text{Intro}}_{\alpha}(k) \Longrightarrow \omega^{t}_s(\cG_n^{\text{Intro}}) \leq  1- \mathbf{s}^{\QuestionRed}_{\alpha}(\exp(k),  \eps(n)).
\end{equation*}
\end{enumerate}
This concludes the proof for~\Cref{prop:QuestionReduction}.
\end{proof}


\section{Answer reduction} \label{sec:PCPanswerreduction}

In this section, we give a proof for~\Cref{prop:AnswerReduction}. The goal of the answer reduction protocol is to transform a synchronous CL verifier into another synchronous CL verifier with a more efficient verification complexity. We remark that the transformation used in this section is the same as the one given in~\cite[Section 10]{jiMIPRE2022a}. We give some intuition for the answer reduction transformation below.

Recall that from the previous section that, after applying the question reduction transformation, for the $n$th game of the CL verifier, the verifier only has to sample a logarithmic-size question pair $(x,y)$ in $O(\polylog(n))$ time. However, the verifier still has to receive polynomial-size answers $(a,b)$ from Alice and Bob, and then computes $\deciderTM(n,x,y,a,b)$ in $O(\poly(n))$ time to decide whether to accept the given instance. 

On a high level, the goal of the answer reduction protocol is to let the verifier delegate the task of computing $\deciderTM(n,x,y,a,b)$ to the provers. Of course, since the provers are by definition dishonest, the verifier cannot simply give this task to the provers. One important observation about an interactive protocol which makes the answer reduction protocol possible is that the verifier actually does not care how the computation step is being performed, \textit{he only cares whether $\deciderTM(n,x,y,a,b)$ outputs 1 at the end of the computation step}! Hence, the goal for the verifier is to design a protocol in which the provers can somehow output ``sufficient evidence" to show that they have, indeed, run the computation step of  $\deciderTM(n,x,y,a,b)$ honestly. 

Fortunately, the verifier can already use a probabilistically checkable proof (PCP), a common tool in the computer science literature~\cite{aroraProofVerificationHardness1998}. Roughly speaking, let $\texttt{TM}$ be a two-input Turing machine which runs in $O(\exp(n))$ time and $x \in \{0,1\}^*$ be a string with $|x| = O(\polylog(n))$. Existing PCP in the computer science literature allows a polylogarithmic-timed verifier, with the help of two (computationally unbounded) provers, to verify that there exists a string $a \in \{0,1\}^*$ with $|a| = O(\poly(n))$ such that $\texttt{TM}(x,a) = 1$. This construction can be easily modified to hold for a pair of strings $(x,y)$, both polylogarithmic-sized as the initial input, and a pair of polynomial-size strings $(a,b)$. We remark in this case, since $|a|$ is exponential in size, a polynomial-time verifier cannot process the entire string $a$ even if he receives it from the prover! However, there are several challenges with directly using a PCP construction within the answer reduction-procedure, which we list below: 
\begin{enumerate}
	\item For a prover to compute the given PCP instance, it needs both question labels $(x,y)$. This is a problem in the non-local game setting, since each prover is expected to receive only its own question label. 
	\item The PCP construction only checks whether there exists an answer pair $(a,b)$ which causes the verifier to accept. In this case, the verifier also needs to check that each answer within the answer pair $(a,b)$ depends only on its corresponding question label ($x$ or $y$); i.e. the prover cannot generate the answer $a$ based on both the question labels $(x,y)$. This is a bigger problem for the verifier than it might at first appear, since, as previously mentioned, the verifier does not have the runtime to even process the answer labels $a$ and $b$. 
	\item Lastly, the PCP construction must also be a CL sampleable game in order to be used as a part of the proof for the gap compression theorem. 
\end{enumerate} 

In order to address the first problem, before applying the PCP procedure, a transformation known as \emph{oracularization} is first applied. This transformation was first introduced in~\cite[Section 9]{jiMIPRE2022a}, and is also part of the answer reduction procedure in the ``gapless compression" introduced in~\cite{mousaviNonlocalGamesCompression2022}. The goal of this transformation is to give both provers the two question labels $(x,y)$ and force them to generate the same answer pair $(a,b)$ in such a way that the answer label $a$ (resp. $b$) only depends on the question label $x$ (resp. $y$). To ensure consistency, the provers will sometimes give one prover only one of the two question labels in order to perform a consistency check with the other prover who receives both question labels. 

Unfortunately, as pointed out by point two above, since the verifier cannot process the entire question label $a$ and $b$, the consistency check mentioned is also not as straightforward as it might have initially seemed. To keep the verification complexity low, the provers are expected to encode the answers $a$ and $b$ as a low-individual degree polynomial using the Generalized Reed-Muller code introduced in~\Cref{sec:RMcode}. In this case, the consistency test for the verifier becomes verifying that the two provers share the same low-individual degree polynomial, which can be done through the quantum low-individual degree test given in~\Cref{sec:lowinddegtest}. 

To tackle the third problem, ~\cite{natarajanNEEXPMIP2019a} uses a special type of PCP known as a probabilistically checkable proof of proximity (PCPP), which allows one to check whether a specific string $a$ satisfies $\texttt{TM}(x,a)=1$ (rather than merely asserting the existence of such a string using a standard PCP). In this paper, we use the tailor-made PCPP protocol constructed in~\cite[Section 10]{jiMIPRE2022a}, which reduces the proof checking task to an instance of the \emph{simultaneous quantum individual low-degree test}, where the simultaneous quantum individual low-degree test (SLDT), in essence, is a parallel repeated version of the quantum individual low-degree test, which is designed to test if the provers share multiple low-individual degree polynomials.  As shown later in this section, since the SLDT has the same sampling procedure as a regular quantum individual low-degree test, this solves the last problem listed above by~\Cref{lem:CLsamplableLDT}. 

We organize this section as follows. In~\Cref{sec:Oracularization}, we recall the oracularization transformation mentioned above from~\cite[Section 9]{jiMIPRE2022a} and show that the completeness/soundness properties from the tensor product model also hold for the commuting operator model. In~\Cref{sec:simuQILD}, we formally define the notion of a simultaneous quantum low-individual degree test, and show that a similar soundness  property also holds for the commuting operator model. In~\Cref{sec:Answerreductionproof}, we give a summary of result of the PCPP construction from~\cite[Section 10]{jiMIPRE2022a}, and state and prove the protocol that shows~\Cref{prop:AnswerReduction}.



\begin{figure}[!b]
	\centering
	\small
	\begin{gamespec}
		\setlength{\tabcolsep}{1em}
		\begin{tabularx}{\textwidth}{ l   l   X   }
			\toprule
			Question label & Question content & Answer format \\
			\midrule
			(Prover, A)&  $x \in \cX$& $a_A \in \cA$\\
			(Prover, B)&  $y \in \cX$& $b_B \in \cA$\\
			(Oracularization) & $(x,y) \in \cX^2$ & $(a,b) \in \cA$ \\
			\bottomrule
			\multicolumn{3}{c}{Figure: Q and A format for the oracularization transformation for $\cG = (\cX, \cA, \mu, D)$ }
		\end{tabularx}
		
		\begin{center}
			\textbf{Sampling procedure}
		\end{center}
		\begin{enumerate}
			\item Sample $(x,y) \sim \mu$, and $(n_0, n_1) \in \{ \text{(Prover, A)}, \text{(Prover, B)},  \text{	(Oracularization)} \}^2$. 
			\item Send the question label and question content corresponding to $n_0$ to one of the provers, and send the question content corresponding to the $n_1$ to the other prover.
		\end{enumerate}
		\begin{center}
			\textbf{Verification procedure}
		\end{center}
		\begin{enumerate}
			\item (Oracularization) \textbf{--} (Oracularization): The provers win iff they output the same answer and $D(x,y,a,b) = 1$. 
			\item (Prover, P) \textbf{--} (Prover, P) for $\text{P} \in \{\text{A},\text{B}\}$: The provers win iff they output the same answer. 
			\item (Prover, A) \textbf{--} (Oracularization): The provers win iff $D(x,y,a,b) = 1$ and $a = a_A$. 
			\item (Prover, B) \textbf{--} (Oracularization): The provers win iff $D(x,y,a,b) = 1$ and $a = b_A$. 
			\item (Prover, A) \textbf{--} (Prover, B): The provers win automatically. 
		\end{enumerate}
		On any other input, the prover win automatically. 
		\vspace{1em}
	\end{gamespec}
	\caption{ The description for the oracularization transformation \gls{zoracgame} for the game $\cG = (\cX, \cA,\mu, D)$. }
	\label{fig:Oracularization}
\end{figure}

\subsection{Oracularization} \label{sec:Oracularization}
\jqnote{Skeleton draft comes first}
In this subsection, we recall the oracularization transformation used in~\cite[Section 9]{jiMIPRE2022a}, and show that the appropriate completeness/soundness conditions also hold for the commuting operator model. Given a non-local game $\cG = (\cX, \cA, \mu , D)$, we define the corresponding oracularization transformation in~\Cref{fig:Oracularization}.

We have the following lemma regarding the oracularization transformation for the game $\cG$. We remark that the ``soundness" condition for the below lemma also preserves the normal (non-synchronous) value of the game. 

\jqnote{Remark about~\cite{mousaviNonlocalGamesCompression2022}, remember, the math works, but you cannot computationally perform that because it is hard to get trivial questions from $\samplerTM$. Update, I think keep this as optional, as a ``to investigate if you still have the sanity left" for the thesis.}

\begin{lemma}[Properties related to the oracularization transformation] \label{lem:oratransformation}
	Let $\cG = (\cX, \cA, \mu, D)$ be a non-local game, and let $\cG^{\text{Ora}}$ be the oracularization transformation for the game $\cG$, then for $t \in \{*, co\}$, the following holds
	\begin{itemize}
		\item (Completeness): If there exists a perfect oracularizable strategy for $\cG$ in model $t$, then there exists a perfect oracularizable strategy for $\cG^{\text{Ora}}$ in model $t$. 
		\item (Soundness): There exists a polynomial $\mathbf{P}^{\text{Ora}} (\eps)$ such that
		\begin{equation*}
			\omega^t(\cG^{\text{Ora}}) \geq 1 - \eps \Longrightarrow \omega^t(\cG) \geq 1 - \mathbf{P}^{\text{Ora}}(\eps). 
		\end{equation*}
		\item (Sample complexity) If $\cG$ is samplable via a $(k,m,p)$ CL distribution, then $\cG^{\text{Ora}}$ is samplable via a $(k+1, m+4,p)$ CL distribution. 
	\end{itemize}
\end{lemma}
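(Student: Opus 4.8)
\textbf{Proof plan for Lemma~\ref{lem:oratransformation}.}

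The plan is to verify the three clauses of the lemma essentially by following the corresponding argument in~\cite[Section 9]{jiMIPRE2022a}, while making sure that every step that invokes finite-dimensionality is replaced by its tracially embeddable analogue via the dictionary in~\Cref{fig:tracialemdtofd}. Fix a model $t \in \{*, co\}$. For the \emph{sample complexity} clause, one simply inspects the sampling procedure in~\Cref{fig:Oracularization}: the oracularization picks $(x,y) \sim \mu$ together with an independent pair $(n_0, n_1)$ of types drawn from the three-element set $\{(\text{Prover}, A), (\text{Prover}, B), (\text{Oracularization})\}$, performs a synchronicity-style rejection among the allowed type pairs, and appends the type labels to the output. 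Since a $(k,m,p)$ CL distribution samples the seed $s \in \bF_{2^p}^m$ and emits $(\decidertypedfunction^A(s), \decidertypedfunction^B(s))$, prepending a level that chooses and routes the three types together with a copy of both $x$ and $y$ (which costs a constant number $4$ of extra canonical-basis registers of size $p$: two to hold $v_0, v_1 \in \{0,1\}^{\le 2}$ padded to $\bF_{2^p}$ and two to duplicate the needed seed components so that the oracularization type can reconstruct both $x$ and $y$) is exactly a series composition as in~\Cref{defn:seriescomposCLfn} of a first-level CL function with the original $k$-level one. This gives a $(k+1, m+4, p)$ CL distribution; the bookkeeping here is routine and mirrors~\cite[Lemma 9.x]{jiMIPRE2022a}.

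For \emph{completeness}, start from a perfect oracularizable strategy $\strategy = (\cL^2(\alicealg, \tau), \sigma\ket{\tau}, \{A_a^x\}, \{(B_b^y)^{op}\})$ for $\cG$; by~\Cref{lem:structsync} (since $\cG$ is synchronous) we may take $\strategy$ projective and symmetric, so $A_a^x = B_a^x$. Define the strategy for $\cG^{\text{Ora}}$ by: on type $(\text{Prover}, A)$ with content $x$, measure $\{A_a^x\}$; on type $(\text{Prover}, B)$ with content $y$, measure $\{A_b^y\}$; on type $(\text{Oracularization})$ with content $(x,y)$, first measure $\{A_a^x\}$ and then $\{A_b^y\}$ (in that fixed order), returning the pair $(a,b)$. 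The key point is that oracularizability of $\strategy$ — i.e.\ $[A_a^x, B_b^y] = 0$ whenever $\mu(x,y) > 0$, and since $\strategy$ is symmetric this reads $[A_a^x, A_b^y] = 0$ — makes the composite measurement on the oracularization type well-defined and projective, makes the order of the two sub-measurements irrelevant, and makes the new strategy again oracularizable on every allowed type pair (one checks the five verification cases of~\Cref{fig:Oracularization} one by one: the two "consistency" cases and the "(Prover,A)--(Prover,B)" case are immediate, while "(Prover, P)--(Oracularization)" uses that the oracularization player's $P$-component measurement literally equals the $(\text{Prover}, P)$ player's measurement). Perfectness is inherited because $D(x,y,a,b) = 1$ with certainty for $\strategy$. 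This is the step where the tracial framework is used transparently: all the algebraic identities take place inside $\alicealg$ and the state $\sigma\ket{\tau}$, exactly as the finite-dimensional proof takes place inside $\cM_n(\bC)$ acting on $\MEstate{n}$.

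For \emph{soundness}, suppose $\strategy$ is a tracially embeddable strategy with $\omega(\cG^{\text{Ora}}, \strategy) \ge 1 - \eps$. Since $\cG^{\text{Ora}}$ is $O(1)$-balanced (each of the nine type pairs occurs with constant probability), by~\Cref{lem:balancedperfecttoalmostsync} we may pass, at the cost of an $O(\poly(\eps))$ loss, to a symmetric projective strategy; write its measurements as $\{C_a^x\}$ on $(\text{Prover},A)$-content $x$, $\{C_b^y\}$ on $(\text{Prover},B)$-content $y$, and $\{O_{a,b}^{x,y}\}$ on $(\text{Oracularization})$-content $(x,y)$. The "(Prover,A)--(Oracularization)" and "(Prover,B)--(Oracularization)" tests, combined with the data-processing fact and~\Cref{lem:closetodistance}, give that the marginals of $O^{x,y}$ are $O(\poly(\eps))$-consistent with $C^x$ and $C^y$ respectively, over the state $\sigma\ket{\tau}$ and the distribution $\mu$; the "(Oracularization)--(Oracularization)" test gives $\Ex_{(x,y)\sim\mu}\sum_{a,b} D(x,y,a,b)\,\braket{\tau|\sigma O_{a,b}^{x,y} (O_{a,b}^{x,y})^{op}\sigma|\tau} \ge 1 - O(\eps)$. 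Defining the strategy for $\cG$ to play $\{C^x\}$ (resp.\ its opposite for the second prover), one substitutes $C^x \approx O^{x,y}_{[\text{marg }a]}$ and $C^y \approx O^{x,y}_{[\text{marg }b]}$ into the acceptance probability and applies~\Cref{lem:combmeasure} to swap measurements while controlling the error, concluding $\omega^t(\cG, \strategy') \ge 1 - \mathbf{P}^{\text{Ora}}(\eps)$ for a universal polynomial $\mathbf{P}^{\text{Ora}}$. Because the argument never discretizes over Hilbert space dimension and uses only the measurement-distance lemmas and the consistency/closeness machinery already set up in the quantum preliminaries — all of which hold verbatim for tracially embeddable strategies — the bound applies uniformly to $t = *$ and $t = co$, and by~\Cref{thm:tracialembedding} it controls the genuine commuting operator value.

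The main obstacle I expect is not any single inequality but rather the careful tracking of oracularizability through the soundness direction: the extracted symmetric strategy for $\cG$ must be shown to be oracularizable on $\cG$'s own question pairs (since later steps, e.g.\ the PCP/answer-reduction pipeline, consume that property), and recovering this from the approximate commutation relations coming from the "(Prover, P)--(Oracularization)" tests requires an extra rounding-of-POVMs argument via~\Cref{lem:orthogonalizationlemma}. I would handle this exactly as in~\cite[Section 9]{jiMIPRE2022a}, noting that the orthogonalization lemma is stated there in the von Neumann algebra generality we need, so the only real work is substituting the notation and double-checking that each use of "$\otimes \cI$" in the finite-dimensional source is legitimately an application of the opposite-algebra map.
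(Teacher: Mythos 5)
Your plan for all three clauses follows the same route as the paper: completeness by taking a projective symmetric oracularizable strategy and defining the Oracularization measurement as the product $P^x_a P^y_b$ (well-defined and projective precisely because of the commutation in Definition~\ref{def:Oracularizablestrategies}); soundness by passing to a symmetric projective strategy via Lemma~\ref{lem:balancedperfecttoalmostsync}, using the cross-checks and Lemmas~\ref{lem:closetodistance} and~\ref{lem:combmeasure} to show the marginals of the Oracularization measurement agree with the single-prover measurements and then substituting into the acceptance probability; and sample complexity by a series composition that prepends a type-selection level. On all of these the dictionary in Figure~\ref{fig:tracialemdtofd} is indeed all that is needed, and your assessment that the argument ports verbatim is correct.

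However, the ``main obstacle'' you flag at the end is not actually an obstacle, because you have misread what the soundness clause asserts. The soundness statement only says $\omega^t(\cG^{\text{Ora}}) \ge 1 - \eps \Rightarrow \omega^t(\cG) \ge 1 - \mathbf{P}^{\text{Ora}}(\eps)$; it does \emph{not} claim the extracted strategy for $\cG$ is oracularizable, and the paper's soundness proof never tries to show this. Oracularizability is only carried forward through the \emph{completeness} direction (where the honest prover's strategy is under your control), and this is exactly how the downstream Propositions~\ref{prop:QuestionReduction} and~\ref{prop:AnswerReduction} consume it — their soundness clauses, too, only assert value bounds. Trying to round the extracted POVMs to an approximately-commuting family via Lemma~\ref{lem:orthogonalizationlemma}, as you propose, would be wasted effort and in general there is no reason such a rounding should exist, since an arbitrary near-optimal strategy for $\cG$ need not be close to any oracularizable one.

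A smaller point: your picture of where the four extra $\bF_{2^p}$-coordinates go in the sample-complexity clause is slightly off. In the paper's construction, four \emph{bits} (two per prover, padded out to four $\bF_{2^p}$-registers to fit the CL-function format) encode the two provers' type labels, and the second-level CL function conditioned on the ``Oracularization'' type is simply the identity on the original seed space $V_{>0}$, so that the Oracularization prover receives the raw seed $s$ and reconstructs $(\decidertypedfunction^A(s), \decidertypedfunction^B(s))$ locally. There are no separate ``duplication'' registers. This does not affect the headline $(k+1, m+4, p)$ bound, but it matters if you were to write the CL functions out explicitly.
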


\begin{proof}
Let $\cG$ and $\cG^{\text{Ora}}$ be the non-local game as specified in the lemma and fix $t \in \{*, co\}$. We also shorten the question label ``(Oracularization)" to ``(Ora)" in this proof (and in the remainder of this paper) for convenience. 

For the ``completeness" property in the lemma statement, let $\strategy$ be a perfect oracularizable strategy for the game $\cG$. By~\Cref{thm:syncvaluepreserve}, $\strategy$ is synchronous and hence by~\Cref{lem:structsync} can also be assumed to be a projective strategy defined by $\strategy = (\cL^2(\alicealg, \tau), \ket{\tau}, \{P_a^x \})$. 

We construct a perfect (synchronous) oracularizable strategy on the Hilbert space $\cL^2(\alicealg, \tau)$ as follows: for all $(x,y,a,b) \in \cX^2 \times \cA^2$, define the synchronous strategy $\strategy^{\text{Ora}} = (\cL^2(\alicealg, \tau), \ket{\tau}, \{M_a^x \})$ for the game $\cG^{\text{Ora}}$ as follows:
\begin{align*}
	M_{a_A}^{\text{(Prover, A)}, x} = P^x_{a_A}, \qquad 	M_{b_B}^{\text{(Prover, B)}, y} = P^y_{b_B}, \qquad 	M_{(a,b)}^{\text{(Ora)}, (x,y)} = P^x_a P^y_b. 
\end{align*}

We first show that $\strategy^{\text{Ora}}$ is projective. Since $\strategy$ is a projective strategy, both $M_{a_A}^{\text{(Prover, A)}, x}$ and $M_{b_B}^{\text{(Prover, B)}, y}$ are projective. By the definition of an oracularizable strategy~\Cref{def:Oracularizablestrategies}, for $(x,y)$ with $\mu(x,y) > 0$, $[P^y_b,  P^x_a] = 0$. Since the question label $\text{(Oracularization)}, (x,y)$ only occurs whenever $\mu(x,y) > 0$, this shows that the measurement operator $M_{(a,b)}^{\text{(Oracularization)}, (x,y)}$ is projective; hence $\strategy^{\text{Ora}}$ is a projective strategy. 

The fact that $[P^y_b,  P^x_a] = 0$ whenever $\mu(x,y) > 0$ also implies that the set of measurement operators $\{M_{a}^{\text{(Prover, A)}, x}\}_{a \in \cA} \cup  \{M_{b}^{\text{(Prover, B)}, y}\}_{b \in \cA} \cup \{ M_{(a,b)}^{\text{(Oracularization)}, (x,y)} \}_{(a,b) \in \cA^2} \subseteq \alicealg$ pairwise commute with each other whenever $\mu(x,y) > 0$. This shows that $\strategy^{\text{Ora}}$ is oracularizable.

To show that $\strategy^{\text{Ora}}$ is perfect, we verify each of the possible question pairs below:
\begin{itemize}
	\item (Same label): This follows because the strategy $\strategy^{\text{Ora}}$ is projective and uses the tracial state $\ket{\tau}$ as a part of the strategy. 
	\item (Prover, A) \textbf{--} (Ora): Given $(x,y) \in \cX$ with $\mu(x,y) > 0$, and $a, a_A, b \in \cA$, the probability of outputting $((a,b), a_A)$ given the question pair $((\text{(Prover, A)}, x))$ is
	\begin{align*}
		\braket{\tau|M_{(a,b)}^{\text{(Ora)}, (x,y)} (M_{a_A}^{\text{(Prover, A)}, x})^{op}|\tau} = \braket{\tau|( P^x_a P^y_b) (P^x_{a_A})^{op}|\tau} = \braket{\tau|(  P^y_b P^x_a) P^x_{a_A}|\tau}
	\end{align*}
	where the third equality follows from $A\ket{\tau} = A^{op} \ket{\tau}$. Since $\strategy$ is a projective, perfect strategy, the resulting answer $((a,b), a_A)$ must satisfy $D(x,y,a,b) = 1$ and $a = a_A$. 
	\item (Prover, B) \textbf{--} (Ora): This follows from the same argument as above. 
\end{itemize}
This shows the completeness clause in the lemma.

For the ``soundness" property in the lemma statement, suppose that $\omega^t(\cG^{\text{Ora}}) > 1 - \eps$. Let $\strategy' =  (\cL^2(\alicealg, \tau), \sigma \ket{\tau},\{ A_{a}^{v, x} \},\{  \left(B_{a}^{V, y}\right) \})$ be a tracially embeddable strategy for $\cG^{\text{Ora}}$ in model $t$ such that $\omega(\cG^{\text{Ora}}, \strategy) > 1 - \eps$. By the sampling procedure as specified in~\Cref{fig:Oracularization}, the game $\cG^{\text{Ora}}$ is $\frac{1}{3}$-balanced. Hence, by~\Cref{lem:balancedperfecttoalmostsync}, there exist a symmetric strategy $\strategy^{\text{sym}} = ( \cL^2(\alicealg, \tau), \sigma \ket{\tau},\{ P_{a}^{V, y} \})$ such that 
\begin{equation} \label{eq:oracproofeq1}
	\omega(\cG^{\text{Ora}}, \strategy^{\text{sym}}) > 1 - \eps - \left(3 \eps \right)^{\frac{1}{4}} = 1- O(\poly(\eps)). 
\end{equation} 
We wish to argue that the strategy $\strategy = ( \cL^2(\alicealg, \tau), \sigma \ket{\tau},\{ P^{\text{(Prover, A)},x}_a \},\{  P^{\text{(Prover, B)},y}_b \} )$ for the game $\cG$ satisfies $\omega(\cG, \strategy) > 1 - O(\poly(\eps))$. For simplicity of notation, we write $P^{\text{(Prover, Q)},x}_a$ as $P^{\text{Q},x}_a$ for $\text{Q} \in \{\text{A},\text{B}\}$. Since the question label pair (Ora) \text{--} (Prover A) and  (Ora) \text{--} (Prover B) are selected with probability $1/9$,  we have 
\begin{align*}
	P^{\text{(Ora)},(x,y)}_{a,b} \simeq_{O(\poly(\eps))} ( P^{\text{A},x}_a)^{op}, \qquad P^{\text{(Ora)},(x,y)}_{a,b} \simeq_{O(\poly(\eps))} ( P^{\text{B},y}_b)^{op}, 
\end{align*}
over the distribution $(x,y) \sim \mu$. Since $P$ is projective, by~\Cref{lem:closetodistance}
\begin{equation}
	P^{\text{(Ora)},(x,y)}_{a,b} \approx_{O(\poly(\eps))} ( P^{\text{A},x}_a)^{op}, \qquad P^{\text{(Ora)},(x,y)}_{a,b} \approx_{O(\poly(\eps))} ( P^{\text{B},y}_b)^{op}.  \label{eq:oracproofeq2}
\end{equation}
Since the question label pair (Prover A) \text{--} (Prover A) is also selected with probability $1/9$, 
\begin{equation} 
	P^{\text{A},x}_a \approx_{O(\poly(\eps))}( P^{\text{A},x}_a)^{op}.  \label{eq:oracproofeq3}
\end{equation}
Since $\{ P \}$ is a projective strategy, $P \leq \cI$. Hence
\begin{align}
	\notag P^{\text{(Ora)},(x,y)}_{a,b} = (P^{\text{(Ora)},(x,y)}_{a,b})^2  &\approx_{O(\poly(\eps))} 	P^{\text{(Ora)},(x,y)}_{a,b} ( P^{\text{B},y}_b)^{op}\\
	\notag	&\approx_{O(\poly(\eps))} ( P^{\text{A},x}_a)^{op} ( P^{\text{B},y}_b)^{op}, \\
	&\approx_{O(\poly(\eps))} P^{\text{A},x}_a( P^{\text{B},y}_b)^{op}, \label{eq:oracproofeq4}
\end{align}
where the first approximation follows from~\Cref{lem:combmeasure} with $C$ being $\{ P^{\text{(Ora)},(x,y)}_{a,b} \}$, and treating the set $\cC$ as the singleton set, and the second and third approximation follows from~\Cref{lem:combmeasure} with $C$ being $\{ ( P^{\text{B},y}_b)^{op}\}$ in conjunction with \eqref{eq:oracproofeq2} and \eqref{eq:oracproofeq3} respectively. Since $P$ is projective and $P^{\text{A},x}_a$ commutes with $(P^{\text{B},y}_b)^{op}$, we have
\begin{equation*}
	P^{\text{A},x}_a( P^{\text{B},y}_b)^{op}= \left(P^{\text{A},x}_a( P^{\text{B},y}_b \right)^{op})^2.
\end{equation*}
Hence
\allowdisplaybreaks{
	\begin{align}
		\notag&\Ex_{(x,y) \sim \mu} \sum_{a,b} |\braket{\tau| \sigma P^{\text{(Ora)},(x,y)}_{a,b} \sigma |\tau} - \braket{\tau|\sigma \left(P^{\text{A},x}_a( P^{\text{B},y}_b)^{op}\right)  \sigma|\tau} | \\
		\notag&\leq	\Ex_{(x,y) \sim \mu} \sum_{a,b} |\braket{\tau| \sigma P^{\text{(Ora)},(x,y)}_{a,b} \left( P^{\text{(Ora)},(x,y)}_{a,b} - P^{\text{A},x}_a( P^{\text{B},y}_b)^{op}\right)\sigma|\tau} | + \\
		\notag&\qquad \Ex_{(x,y) \sim \mu} \sum_{a,b} |\braket{\tau| \sigma  \left( P^{\text{(Ora)},(x,y)}_{a,b} - P^{\text{A},x}_a( P^{\text{B},y}_b)^{op}\right) P^{\text{A},x}_a( P^{\text{B},y}_b)^{op}\sigma|\tau} | \\
		\notag&\leq\sqrt{\Ex_{(x,y) \sim \mu}\sum_{a,b} \braket{\tau| \sigma P^{\text{(Ora)},(x,y)}_{a,b} \sigma|\tau} } \sqrt{\Ex_{(x,y) \sim \mu}\sum_{a,b} |\braket{\tau| \sigma \left( P^{\text{(Ora)},(x,y)}_{a,b} - P^{\text{A},x}_a( P^{\text{B},y}_b)^{op}\right)^2 \sigma|\tau} | }  + \\
		\notag	&\qquad \sqrt{\Ex_{(x,y) \sim \mu}\sum_{a,b} |\braket{\tau| \sigma \left( P^{\text{(Ora)},(x,y)}_{a,b} - P^{\text{A},x}_a( P^{\text{B},y}_b)^{op}\right)^2 \sigma|\tau} | } \cdot \sqrt{\Ex_{(x,y) \sim \mu}\sum_{a,b} \braket{\tau|\sigma  P^{\text{A},x}_a( P^{\text{B},y}_b)^{op} \sigma|\tau} }\\
		\label{eq:oracproofeq6}&= 2  \sqrt{\Ex_{(x,y) \sim \mu} \sum_{a,b}| \left( P^{\text{(Ora)},(x,y)}_{a,b} -  P^{\text{A},x}_a( P^{\text{B},y}_b)^{op} \right) \sigma \ket{\tau}|^2} = O(\poly(\eps)),
	\end{align}
}
where the second line follows from the triangle inequality and the third line follows from Cauchy-Schwartz. The last line follows from $\{ P^{\text{(Ora)},(x,y)}_{a,b}\}$, $\{ P^{\text{A},x}_a\}$ and $\{( P^{\text{B},y}_b)^{op}\}$ being three sets of POVMs and~\eqref{eq:oracproofeq4}. 

Finally, since the question label (Ora) is selected with probability $1/3$ by the first prover, by~\eqref{eq:oracproofeq1}, with expectation over the distribution $\mu$, the measurement $\braket{\tau|\sigma \sigma P^{\text{(Ora)},(x,y)}_{a,b} \sigma \tau}$ will produce an answer $(a,b)$ such that $D(x,y,a,b)$ with probability at least $1 - O(\poly(\eps))$ (or else the players will lose $\cG^{\text{Ora}}$). This, in conjunction with~\eqref{eq:oracproofeq6}, shows the ``soundness" clause in the lemma.

For the ``sample complexity" property in the lemma statement, assume $\mu$, the input distribution for $\cG$, is samplable via a $(k,m,p)$ CL distribution. Let $\decidertypedfunction^A$ and $\decidertypedfunction^B$ be the two $(k,m,p)$ conditional linear functions used to define $\mu$. We show that $\cG^{\text{Ora}}$ is $(k+1, m+4,p)$ CL samplable by using the series composition of two sets of CL functions (given in~\Cref{defn:seriescomposCLfn}). Let $V_0 = \bF_{2^p}^{4}$ and write $V$, and $V_{>0} = \bF_{2^p}^{m}$, we define the two $(k+1, m+4,p)$ conditional linear functions $\decidertypedfunction^{A, \text{Ora}}$ and $\decidertypedfunction^{A, \text{Ora}}$ as follows:

For simplicity of notation, we assume elements in $V_0$ are represented under the canonical representation (i.e. as elements of $\{0,1\}^{4 \cdot p}$). Write every elements $s \in V_0$ as $s = (s_0, s_1, s_2, s_3, s_4)$, where $s_i \in \{0,1\}$ for $i \in [4]$ and $s_4 \in \{0,1\}^{4 \cdot (p-1)}$. Define 

\begin{equation*}
	\decidertypedfunction^{A, \text{Ora}}_{0,0}(s_0, s_1, s_2, s_3, s_4) = (s_0, s_1, \vec{0}),  \, \quad \,	\decidertypedfunction^{B, \text{Ora}}_{0,0}(s_0, s_1, s_2, s_3, s_4) = (s_2, s_3, \vec{0}),
\end{equation*}
where $\vec{0}$ is the all $0$ string in $\{0,1\}^{4 \cdot p -2}$, we define
\begin{align*}
		\decidertypedfunction^{A, \text{Ora}}_{>0, (0,0, \vec{0})} = \decidertypedfunction^{B, \text{Ora}}_{>0, (0,0, \vec{0})} = 	\decidertypedfunction^{A} \, \quad \,	\decidertypedfunction^{A, \text{Ora}}_{>0, (0,1, \vec{0})} = \decidertypedfunction^{B, \text{Ora}}_{>0, (0,1, \vec{0})} = 	\decidertypedfunction^{B} \, \quad \, 	\decidertypedfunction^{A, \text{Ora}}_{>0, (1,0, \vec{0})} = \decidertypedfunction^{B, \text{Ora}}_{>0, (1,0, \vec{0})} = \Id,
\end{align*}
where $\Id$ is the identity function on $V_{>0}$. Intuitively, $(0,0)$ label corresponds to the question label ``Prover A"; the $(0,1)$ label corresponds to ``Prover B"; and the $(1,0)$ label corresponds to ``Oracularization". If the prover receives the label  ``Oracularization", we give the entire seed to that prover in order for them to compute the question label $(x,y)$ themselves. We remark that we can treat the label $(1,1)$ as a free win for the provers, which occurs only with a constant probability; this raises the soundness condition by only a constant factor. This completes the argument for the ``sample complexity" claim. 
\end{proof}

We remark that throughout the paper, the ``completeness" condition for almost all transformations of games always has a requirement that preserves perfect \emph{oracularizable} strategies. The only use for this requirement in this paper is to show the ``completeness" condition for the oracularization transformation to hold. 
 
\subsection{The simultaneous quantum low-individual degree test} \label{sec:simuQILD}

\begin{figure}[!b]
	\centering
	\begin{gamespec}
		\setlength{\tabcolsep}{1em}
		\footnotesize
		\begin{tabularx}{\textwidth}{ l   l   X   }
			\toprule
			Question label & Question content & Answer format \\
			\midrule
			(Point)&  $s \in \bF_{2^p}^m$& $(a_0, \cdots, a_{k-1}) \in  \bF_{2^p}^k$\\
			(Dline)&  $(j, s_{\text{Dline}}) \in [m] \times \bF_{2^p}^m$& $k$ degree $d$ polynomial, $ \textbf{f}_i: \bF_{2^p} \rightarrow  \bF_{2^p}$, $i \in [k]$, where each $\textbf{f}_i$ is encoded as $\bF_{2^p}^d$ \\
			(Aline)&  $(j,v, s_{\text{Aline}}) \in [m] \times \bF_{2^p}^{2m}$ &  $k$ degree $dm$ polynomial, $\textbf{g}_i: \bF_{2^p} \rightarrow  \bF_{2^p}$, $i \in [k]$, where each $\textbf{g}_i$ is encoded as $\bF_{2^p}^{dm}$ \\
			\bottomrule
			\multicolumn{3}{c}{Figure: Q and A format for the $(p,m,d,k)$-simultaneous quantum low-individual degree test.}
		\end{tabularx}
		\normalsize
		\begin{center}
			\textbf{Sampling procedure}
		\end{center}
		The sampling procedure for the $(p,m,d,k)$-simultaneous quantum low-individual degree test is the same as the sampling procedure for the$(p,m,d)$ quantum low-individual degree test given in the proof for~\Cref{lem:CLsamplableLDT}
		\begin{center}
			\textbf{Verification procedure}
		\end{center}
		\begin{enumerate}
			\item (Same question label): The provers win iff they output the same answer. 
			\item (Point) \textbf{--} (DLine): The provers win iff $\textbf{f}_i(s) = a_i$ for all $i \in [k]$. 
			\item (Point) \textbf{--} (ALine): The provers win iff $\textbf{g}_i(s) = a_i$ for all $i \in [k]$. 
		\end{enumerate}
		On any other input, the prover wins automatically. 
		\vspace{1em}
	\end{gamespec}
	\caption{The description for the $(p,m,d,k)$-simultaneous quantum low-degree test. }
	\label{fig:simuLDT}
\end{figure}

We describe the simultaneous quantum low-individual degree test~\cite[Figure 3]{jiMIPRE2022a} below, which as we will see in the next section, is the key subroutine for the PCPP protocol. Intuitively, the $(p,m,d,k)$-simultaneous quantum low-individual degree test is a generalization of the $(p,m,d)$ quantum low-individual degree test defined in~\Cref{sec:lowinddegtest}, where the goal is to test whether the two provers agree on $k$ global $m$-variant low-individual degree polynomial $\textbf{g}: \bF_q^m \rightarrow \bF_q$ with individual degree of at most $d$. We define the $(p,m,d,k)$-simultaneous quantum low-individual degree test in~\Cref{fig:simuLDT}. We have the following lemma regarding the $(p,m,d,k)$-simultaneous quantum low-individual degree test.

\begin{lemma}[Properties of the $(p,m,d,k)$-simultaneous quantum low-individual degree test]  \label{lem:simuQLID}
	Let $p,m,d,k \in \bN$, and let $\cG^{\text{SLD}} = (\cX^{\text{SLD}}, \cA^{\text{SLD}}, \mu^{\text{SLD}}, D^{\text{SLD}})$ be the $(p,m,d,k)$-simultaneous quantum low-individual degree test specified in~\Cref{fig:simuLDT}, then the following holds:
	\begin{itemize}
		\item (Sample complexity): $\cG^{\text{SLD}}$ is samplable via a $(5, 9+ m' + 2\cdot m, p)$ typed CL distribution, where $m' = \ceil{\frac{\log(m)}{p}}$. 
		\item (Verification complexity): There exists a polynomial time Turing machine $\deciderTM^{\text{SLD}}$ which implements $D^{\text{SLD}}$ and runs in $O(\poly(p,m,d,k))$ time.
		\item (Soundness): There exists a universal constant $1 \geq c_{\text{SLD},1}$ and $0 < c_{\text{SLD},2} \leq 1$  and a function 
		\begin{equation*}
			\text{\gls{SoundnessPLDT}} =  c_{\text{SLD},1} (kdm)^{c_{\text{SLD},1}} (\eps^{c_{\text{SLD},2}} + 2^{-c_{\text{SLD},2}p} + 2^{-c_{\text{SLD},2}md})
		\end{equation*}
		such that the following holds. Let $\strategy =( \cL^2(\alicealg, \tau), \sigma \ket{\tau},\{ A_{a}^x \} )$ be a synchronous strategy for $\cG^{\text{SLD}}$ which succeed with probability $1- \eps$. There exist a set of PVM $\{G_{(\textbf{g}_0, \cdots \textbf{g}_{k-1})}\} \subseteq \alicealg'$ with outcome labelled by $k$ (potentially the same) $\textbf{g}_i\in \Idpoly(p,m,d)$ such that 
		\begin{align*} 
			&\Ex_{s \sim \bF_{q}^m} \sum_{\textbf{g}_0, \cdots \textbf{g}_{k-1} \in \Idpoly(p,m,d)} \braket{\tau| A^{(\text{point}, s)}_{(\textbf{g}_0(s), \cdots, \textbf{g}_{k-1}(s))} G_{(\textbf{g}_0, \cdots \textbf{g}_{k-1})} |\tau} \geq 1 - \mathbf{\eta}_{\text{SLD}}(p,m,d,k, \eps). 
		\end{align*}
	\end{itemize}
\end{lemma}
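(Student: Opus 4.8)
The plan is to reduce everything back to the single-polynomial case, i.e. to \Cref{thm:soundnessQLDT} and its tracially-embeddable extension \cite[Corollary 4.4]{linTracialEmbeddableStrategies2024}, rather than reprove a soundness analysis from scratch. The key observation is that the $(p,m,d,k)$-simultaneous test is, as a sampling distribution, \emph{identical} to the $(p,m,d)$-quantum low-individual degree test; only the answer format (a $k$-tuple of polynomials/field elements instead of one) and the decision predicate (which checks all $k$ coordinates) differ. So the first step is to record the three bullets in order. For \textbf{sample complexity}, I would simply invoke the construction in the proof of \Cref{lem:CLsamplableLDT}: the typed CL distribution built there for the $(p,m,d)$-test depends only on $(p,m,d)$ through the line/point structure, so $\cG^{\text{SLD}}$ is samplable via the same $(\decidertypedtype^{\text{PB}},\dots)$-type distribution, giving the $(5, 9+m'+2m, p)$ CL distribution with $m'=\ceil{\log(m)/p}$ after detyping. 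For \textbf{verification complexity}, the predicate is $k$ independent evaluations of degree-$d$ (resp.\ degree-$dm$) univariate polynomials over $\bF_{2^p}$ at a point, plus a consistency check, each doable in $\poly(p,m,d)$ time by \Cref{lem:compfinitefield}, hence $O(\poly(p,m,d,k))$ overall.

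The substance is the \textbf{soundness} bullet. Here the plan is a ``batching'' argument. Given a synchronous strategy $\strategy$ for $\cG^{\text{SLD}}$ succeeding with probability $1-\eps$, I would first use \Cref{lem:structsync} to take $\strategy$ projective and symmetric, $\strategy=(\cL^2(\alicealg,\tau),\ket{\tau},\{A^x_a\})$. For each $i\in[k]$, define data-processed measurements by post-composing the answer with the $i$-th coordinate projection: on question $(\text{point},s)$ take $A^{(\text{point},s)}_{[\pi_i|a_i]}$, on $(\text{Aline},\ell)$ and $(\text{Dline},\ell)$ take the analogous coordinate-$i$ restrictions of the tuple of polynomials. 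Because the $\cG^{\text{SLD}}$ predicate accepts only when \emph{all} $k$ coordinates agree, for each fixed $i$ the single-polynomial test $\cG^{\text{LD}}$ is won with probability at least $1-\eps$ by this data-processed strategy (losing one coordinate can only be a subset of losing the whole tuple). Apply \Cref{thm:soundnessQLDT} / \cite[Corollary 4.4]{linTracialEmbeddableStrategies2024} to coordinate $i$ to extract a PVM $\{G^{(i)}_{\textbf{g}}\}\subseteq\alicealg'$ with
\begin{equation*}
\Ex_{s}\sum_{\textbf{g}\in\Idpoly(p,m,d)}\braket{\tau|A^{(\text{point},s)}_{[\pi_i|\textbf{g}(s)]}\,G^{(i)}_{\textbf{g}}|\tau}\geq 1-\mathbf{\eta}_{\text{LD}}(p,m,d,\eps).
\end{equation*}

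The remaining step is to combine the $k$ single-coordinate PVMs into one joint PVM $\{G_{(\textbf{g}_0,\dots,\textbf{g}_{k-1})}\}$ on the commutant and bound the combined error. Here there is a genuine subtlety, which I expect to be the main obstacle: the $\{G^{(i)}_{\textbf{g}}\}$ for different $i$ live in $\alicealg'$ but are not a priori compatible (need not commute), so I cannot just take their product. The standard fix — used in \cite{jiMIPRE2022a} for exactly this simultaneous test — is to first orthogonalize/round each $G^{(i)}$ against the \emph{Point} measurement $A^{(\text{point},s)}$: since $A^{(\text{point},s)}_{[\pi_i|\textbf{g}(s)]}$ is $\mathbf{\eta}_{\text{LD}}$-consistent with $G^{(i)}_{\textbf{g}}$, and the Point measurement's $k$ coordinate-measurements \emph{do} form a single joint PVM (they are data-processings of one PVM $\{A^{(\text{point},s)}_a\}$), one can replace each $G^{(i)}_{\textbf{g}}$ by $\sum_{a:\pi_i(a)=\textbf{g}(s)}$-type pullbacks, or more cleanly take the joint PVM $G_{(\textbf{g}_0,\dots,\textbf{g}_{k-1})}$ to be the one ``canonically induced'' by $A^{(\text{point},\cdot)}$ in the commutant via the $\MIP^*=\RE$-style consistency-to-equality lemma, and then a triangle-inequality/Cauchy–Schwarz chain over the $k$ coordinates gives
\begin{equation*}
\Ex_{s}\sum_{\textbf{g}_0,\dots,\textbf{g}_{k-1}}\braket{\tau|A^{(\text{point},s)}_{(\textbf{g}_0(s),\dots,\textbf{g}_{k-1}(s))}\,G_{(\textbf{g}_0,\dots,\textbf{g}_{k-1})}|\tau}\geq 1-O\!\big(k\cdot\mathbf{\eta}_{\text{LD}}(p,m,d,\eps)\big),
\end{equation*}
which is absorbed into $\mathbf{\eta}_{\text{SLD}}(p,m,d,k,\eps)=c_{\text{SLD},1}(kdm)^{c_{\text{SLD},1}}(\eps^{c_{\text{SLD},2}}+2^{-c_{\text{SLD},2}p}+2^{-c_{\text{SLD},2}md})$ by choosing the constants appropriately (the factor $k$ from the union bound and the $(dm)^{c}$ from $\mathbf{\eta}_{\text{LD}}$ combine into $(kdm)^{c_{\text{SLD},1}}$). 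Throughout, I would lean on the translation chart in \Cref{fig:tracialemdtofd} so that each step of the finite-dimensional argument from \cite[Section 4]{jiQuantumSoundnessTesting2022}/\cite[Section 10]{jiMIPRE2022a} transfers verbatim to the tracially embeddable setting, exactly as was already done for \Cref{thm:soundnessQLDT} via \cite[Corollary 4.4]{linTracialEmbeddableStrategies2024}; the only new content is the bookkeeping of the $k$-fold union bound and the compatibility of the commutant PVMs, which is why I flag that as the one place deserving care rather than a black-box citation.
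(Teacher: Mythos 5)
Your treatment of the sample-complexity and verification-complexity bullets matches the paper, and your per-coordinate data-processing observation is sound: projecting a winning $k$-tuple strategy onto coordinate $i$ does give a synchronous strategy for the $(p,m,d)$-test succeeding with probability at least $1-\eps$, so $k$ applications of \Cref{thm:soundnessQLDT} do produce $k$ PVMs $\{G^{(i)}_{\textbf{g}}\}\subseteq\alicealg'$.

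However, for the soundness bullet you take a genuinely different route from the paper, and the step you flag as "the one place deserving care" is in fact an unclosed gap, not a black-boxable detail. The $k$ PVMs $\{G^{(i)}\}$ live in $\alicealg'$ but need not commute, so neither a naive product nor a nested conjugation yields a \emph{projective} measurement indexed by $k$-tuples $(\textbf{g}_0,\dots,\textbf{g}_{k-1})$. "Orthogonalize against the Point measurement" and "canonically induce via a consistency-to-equality lemma" are not arguments — they gesture at a construction without producing one, and producing one is exactly the hard part. In particular, if you try to force projectivity by orthogonalization after the fact, you have to re-verify simultaneous consistency with the Point measurement across all $k$ coordinates, which is the statement you are trying to prove, not a tool you can invoke.

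The paper sidesteps this entirely by the Natarajan--Wright linear-combination trick (\cite[Theorem 4.43]{natarajanNEEXPMIP2019a}). Given $\strategy$ for the simultaneous test, it builds a strategy for a \emph{single} $(p,m+k,d)$-test: the point question is $(s^1,s^2)\in\bF_{2^p}^m\times\bF_{2^p}^k$, and the answer is $\combinefun_a(s^2)=\sum_i a_i s^2_i$ where $a$ is the $k$-tuple output on $(\text{Point},s^1)$; line questions are handled by case analysis on whether the varying coordinate lies in the $m$-block or the $k$-block. One application of \Cref{thm:soundnessQLDT} then yields a single PVM $\{H_{\textbf{f}}\}$ over $(m+k)$-variate polynomials, and \Cref{prop:almostexactlinear} (a Schwartz--Zippel argument) shows the output of $H$ is, with probability $1-O(md/2^p)$, exactly linear in the $k$ auxiliary variables, i.e.\ of the form $\textbf{f}(x,y)=\sum_i y_i\textbf{g}_i(x)$ — from which the desired PVM $G_{(\textbf{g}_0,\dots,\textbf{g}_{k-1})}$ falls out by relabeling (with a default assignment on the small non-linear remainder). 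This buys you a genuine PVM for free, because you only ever invoked the single-polynomial extraction once; the non-commutativity problem never arises. If you want to salvage your per-coordinate approach, you would need a concrete orthogonalization lemma that takes $k$ mutually non-commuting PVMs each $\eta$-consistent with the Point measurement and produces one joint PVM $O(\poly(k)\cdot\eta)$-consistent with it — this is plausible but is an additional lemma you'd have to prove, whereas the paper's route avoids needing it.
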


Although the simultaneous quantum low-degree test is a more sophisticated version of the quantum low-individual degree test. Its sampling procedure is exactly the same as the quantum low-individual degree test, and its verification procedure is essentially repeating the verification procedure for the quantum low-individual degree test $k$ times. Hence, the ``sample complexity" and the ``decision complexity" follows trivially from~\Cref{lem:CLsamplableLDT} and~\Cref{lem:compfinitefield} respectively. 

To show the ``soundness" clause for the above lemma, we recall the following condition from~\cite{natarajanNEEXPMIP2019a}.
\begin{definition}[Exactly linear functions, Definition 3.17 of~\cite{natarajanNEEXPMIP2019a}]
	Let $m, p \geq 0$. A function $\textbf{f}: \bF_{2^p}^m \times \bF_{2^p}^t \rightarrow \bF_{2^p}$ is exactly linear in $y$ if it can be written as 
	\begin{equation*}
		 \textbf{f}(x,y) = y_1 \cdot \textbf{f}_1(x)   +  y_2 \cdot \textbf{f}_2(x)   \cdots y_{k-1} \cdot \textbf{f}_{k-1}(x), 
	\end{equation*} 
	for a set of functions $\{ \textbf{f}_i \}_{i \in [k]}$, and we call a function $\textbf{g}: \bF_{2^p}^k \rightarrow \bF_{2^p}$ to be exactly linear if 
	\begin{equation*}
		\textbf{g}(x) = \sum_{i \in [k]} c_i \cdot x_i, 
	\end{equation*} 
	for some $c_i \in \bF_{2^p}$, $i \in [k]$. 
\end{definition}
We recall the following proposition about almost exactly linear individual degree $d$ polynomials.  
\begin{proposition}[Almost exactly linear individual degree polynomials, Proposition 3.18 of~\cite{natarajanNEEXPMIP2019a}] \label{prop:almostexactlinear}
	Suppose that $\textbf{f }(x,y):  \bF_{2^p}^m \times \bF_{2^p}^k \rightarrow \bF_{2^p}$ is a polynomial with individual degree of at most $d$ which is not exactly linear in $y$. Then the probability that, given a uniformly random $z \sim \bF_{2^p}^m$, the probability that the polynomial $\textbf{f}_{z}(y): \bF_{2^p}^k \rightarrow \bF_{2^p}$, $\textbf{f}_{z}(y) = \textbf{f}(z,y)$, being exactly linear is at most $ \frac{md}{2^p}$. 
\end{proposition}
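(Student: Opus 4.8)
The plan is to deduce the result directly from the Schwartz--Zippel lemma (\Cref{lem:Schwartz_Zipple}), applied to a single ``offending'' coefficient of $\textbf{f}$. First I would regard $\textbf{f}$ as a polynomial in the variables $y=(y_1,\ldots,y_k)$ whose coefficients lie in $\bF_{2^p}[x_1,\ldots,x_m]$: writing $y^\alpha := y_1^{\alpha_1}\cdots y_k^{\alpha_k}$ for a multi-index $\alpha\in\{0,\ldots,d\}^k$, we have
\[
	\textbf{f}(x,y) \;=\; \sum_{\alpha\in\{0,\ldots,d\}^k} c_\alpha(x)\, y^\alpha ,
\]
where each $c_\alpha$ is an $m$-variate polynomial over $\bF_{2^p}$ of individual degree at most $d$ (this uses that $\textbf{f}$ has individual degree at most $d$ in each $y_j$; I comment on well-definedness below). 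In this language, ``$\textbf{f}$ is exactly linear in $y$'' says precisely that $c_\alpha\equiv 0$ for every $\alpha$ outside the set $\{e_1,\ldots,e_k\}$ of standard unit vectors --- in particular the $y$-constant term $c_0$ vanishes and no $y_iy_j$ or $y_i^2$ term survives.

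Since by hypothesis $\textbf{f}$ is \emph{not} exactly linear in $y$, there is a fixed multi-index $\alpha^\ast\notin\{e_1,\ldots,e_k\}$ with $c_{\alpha^\ast}$ not the zero polynomial. Now fix $z\in\bF_{2^p}^m$ and consider $\textbf{f}_z(y)=\textbf{f}(z,y)=\sum_\alpha c_\alpha(z)\, y^\alpha$. For $\textbf{f}_z$ to be exactly linear in $y$ it is necessary that $c_\alpha(z)=0$ for all $\alpha\notin\{e_1,\ldots,e_k\}$, and in particular that $c_{\alpha^\ast}(z)=0$. Hence
\[
	\Pr_{z\sim\bF_{2^p}^m}\bigl[\textbf{f}_z \text{ is exactly linear in } y\bigr] \;\le\; \Pr_{z\sim\bF_{2^p}^m}\bigl[c_{\alpha^\ast}(z)=0\bigr] \;\le\; \frac{md}{2^p},
\]
where the last inequality is \Cref{lem:Schwartz_Zipple} applied to the nonzero $m$-variate polynomial $c_{\alpha^\ast}$, which has individual degree at most $d$. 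This is exactly the asserted bound.

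The only genuine point of care is that over the finite field $\bF_{2^p}$ a polynomial \emph{function} need not determine its coefficient polynomials; the expansion above, and hence the $c_\alpha$, are canonical only once one restricts to individual degrees strictly below $2^p$. In the PCP-of-proximity application $d$ is always taken far below $2^p$ (the same regime in which the tests of \Cref{sec:lowinddegtest} are meaningful), so this costs nothing --- or, equivalently, one simply treats $\textbf{f}$ as a formal polynomial throughout, in which case there is nothing to check. I do not expect any real obstacle here: once the monomial expansion in $y$ is in place, the statement is an immediate corollary of Schwartz--Zippel for the coefficient of a single offending $y$-monomial.
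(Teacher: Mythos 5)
Your proof is correct, and it is the natural argument for this fact. The paper does not reproduce a proof here—it simply cites \cite{natarajanNEEXPMIP2019a}, Proposition~3.18—but your reduction to Schwartz--Zippel via a single offending coefficient $c_{\alpha^\ast}$ is exactly the standard route, and I see no gap. The one point worth noting, which you already flag, is the function-versus-formal-polynomial issue: for $f_z$ to be ``exactly linear'' as a function, you need that vanishing of $c_\alpha(z)$ for all non-unit $\alpha$ is a \emph{necessary} condition, which holds because $f_z$ has individual degree at most $d<2^p$ and so its low-individual-degree representation (and hence its coefficients) is uniquely determined by its values; in the parameter regime where the quantum low-individual-degree test is deployed this is automatic.
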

We are now ready to give a proof for the ``soundness clause" for~\Cref{lem:simuQLID}, we remark that the proof below is a slightly modified version of~\cite[Theorem 4.43]{natarajanNEEXPMIP2019a} to account for the difference between the quantum low-degree test and the quantum low-individual degree test. 
\newcommand{\combinefun}{\text{combine}}
\begin{proof}

	Fix constant $p, m ,d, k \in \bN$. Let $\cG^{\text{SLD}}$ be the $(p,m,d,k)$-simultaneous quantum low-individual degree test, and let $\strategy = (\cL^2(\alicealg, \tau), \ket{\tau}, \{A_a^x \})$ be a tracially embeddable strategy for $\cG^{\text{SLD}}$ with $\omega(\cG^{\text{SLD}}, \strategy) \geq 1 - \eps$. We wish to show that $\strategy$ can be used as a part of a strategy for the $(p,m+k,d)$ quantum low-individual degree test. For simplicity of notation, for a set of functions $\{\textbf{f}_i: \bF_{2^p}^m \rightarrow \bF_{2^p}\}_{i \in [k]}$,  we write $\combinefun_{\textbf{f}}(x,y): \bF_{2^p}^m \times \bF_{2^p}^k \rightarrow  \bF_{2^p}$ as the function 
	\begin{equation}
		\combinefun_{\textbf{f}} (x,y) = x_0 \textbf{f}_0(y) + \cdots  x_{k-1} \textbf{f}_{k-1}(y).
	\end{equation}
	In the definition above $\textbf{f}_i$ could be potentially set as a constant (i.e. $\textbf{f}_i (x) = c_i$ for some $c_i \in \bF_{2_p}$), and for $a \in \bF_{2^p}^k$, we write $\combinefun_{a} (y): \bF_{2^p}^k \rightarrow  \bF_{2^p}$ as the function $\combinefun_{a} (y) = \sum_{i \in [k]} a_i \cdot y_i$. For $o \in [k]$ define $\vec{1}_0 \in \bF_{2^p}^o$ to be the vector with all coordinates being $1 \in  \bF_{2^p}$. We define a strategy for an instance of $(p,m+k,d)$-quantum low-individual degree test depending on $\strategy$ as follows: We specify the provers' behaviour based on the question label below
	\begin{enumerate}
		\item (Point) Given the question content $(s^1,s^2) \in \bF_{2^p}^m \times \bF_{2^p}^k$, the prover first performs the strategy $\strategy$ using the question label ``(Point)" and the question content $s^1$ and obtain points $ a = (a_0, \cdots, a_{k-1})$. The prover then returns $\combinefun_{a} (s^2)$ as the answer. 
		\item (DLine) Given the question content $(j,s_{\text{Dline}})$, write $j = (j^1, j^2) \in  \bF_{2^p}^m \times \bF_{2^p}^k$, and let $l$ be the axis on which the axis-parallel line is defined. The prover does the following depending on $l$
		\begin{itemize}
			\item If $l \in [m]$, then the prover performs the strategy $\strategy$ using the question label ``(DLine)" and the question content $(j^1,s_{\text{Dline}})$ and obtain $k$ degree-$d$ polynomials $(\textbf{f}_0, \cdots, (\textbf{f}_{k-1})$. The prover then returns the degree $d$ polynomial $\combinefun_{\textbf{f}} (j^2)$ as the answer. 
			\item  Otherwise, the prover first performs the strategy $\strategy$ using the question label ``(Point)" and the question content $j^1$ and obtain points $ a = (a_0, \cdots, a_{k-1})$. The prover then returns the degree $1$ polynomial $\combinefun_{a} (j^2_0, \cdots j^2_{l-m-1}, x,  j^2_{l-m+1}, j^2_{m+k -1} )$ as their answer. 
		\end{itemize}
		\item (ALine)  Given the question content $(j,s_{\text{Aline}})$, write $j = (j^1, j^2) \in  \bF_{2^p}^m \times \bF_{2^p}^k$, and let $l$ be the coordinates in which the diagonal line is defined. The prover does the following depending on $j$
		\begin{itemize}
			\item If $l \in [m]$, then the prover performs the strategy $\strategy$ using the question label ``(ALine)" and the question content $(j^1,s_{\text{Aline}})$ and obtain $k$ degree $d \cdot m$ polynomials $(\textbf{g}_0, \cdots, (\textbf{g}_{k-1})$. The prover then returns the degree $d \cdot m+1$ polynomial $\combinefun_{\textbf{g}} (x \cdot \vec{1}_k))$ as the answer. 
			\item  Otherwise, the prover performs the strategy $\strategy$ using the question label ``(Point)" and the question content $j^1$ and obtain points $ a = (a_0, \cdots, a_{k-1})$. The prover then returns the degree $1$ polynomial $\combinefun_{a} (j^2_0, \cdots, j^2_{l-m-1}, x \cdot \vec{1}_{m+k - l} )$ as their answer. 
		\end{itemize}
	\end{enumerate}
	Since $\strategy$ succeeds in $\cG^{\text{SLD}}$ with probability at least $1 - \poly(\eps)$. This implies that, given the same question label and content, the probability that the provers give the same answer is at least $1 - \poly(\eps)$, as well as given the label pair ``(point)" and ``(DLine)" (resp. ``(ALine)"), the answer pair satisfies $\textbf{f}_i = a_i$ (resp.  $\textbf{g}_i = a_i$). Using this, one can see that the above strategy for the $(p,m+k,d)$-quantum low-individual degree test succeeds with probability at least  $1 - \poly(\eps)$. For $(s^1,s^2) \in \bF_{2^p}^m \times \bF_{2^p}^k$ and $\nu \in \bF_{2^p}$, we define the measurement $A^{(\text{point}, s^1)}_{[\combinefun_{a}(s^2)|\nu]}$ to be the data processing measurement in which the prover applies the function $\textbf{f}(a_0, \cdots, a_{k-1}) = \combinefun_{a}(s^2)$ to the measurement outcome of $P^{(\text{point}, s^1)}_{a_0, \cdots, a_{k-1}}$. By~\Cref{thm:soundnessQLDT}, there exists a measurement $\{H_{\textbf{f}}\} \subseteq \alicealg'$ with outcomes $\textbf{f}$ being $m+k$-variate polynomial with individual degree of at most $d$ such that
	\begin{equation} \label{eq:sldt-1}
		\Ex_{(s^1, s^2) \sim \bF_{2^p}^m \times \bF_{2^p}^k} \sum_{\textbf{f} \in \text{IdPoly}(p,m+1,d)} \braket{\tau| A^{(\text{point}, s^1)}_{[\combinefun_{a}(s^2)| \textbf{f}(s^1, s^2)]} \cdot  H_{\textbf{f}} \,  |\tau} \geq 1 - \mathbf{\eta}_{\text{LD}}(p,m+k,d,\eps),
	\end{equation}
	Now we wish to show that the measurement outcome $\textbf{f}(x,y)$ from $\{H_{\textbf{f}}\}$ is exactly linear in $y$ with high probability. Fix a $\textit{g}$ such that the measurement outcome from $\textbf{f}$ is not exactly linear. For a fixed $s^{1} \in \bF_{2^p}^m$ and a fixed measurement outcome $a = (a_0, \cdots, a_{k-1})$ from $P^{(\text{point}, s^1)}_{(a_0, \cdots, a_{k-1})}$, by construction, the function $\combinefun_{a}(y):\bF_{2^p}^k \rightarrow \bF_{2^p}$ is exactly linear. However, by~\Cref{prop:almostexactlinear} the probability that $\textbf{f}_{s^{1}}(y) = \textbf{f}(s^{1},y)$ is exactly linear is at most $\frac{md}{2^p}$. As a result, for a uniformly chosen $s_1 \sim \bF_{2^p}$, the probability that $\combinefun_{a} = \textbf{f}_{s^{1}}$ is at most $\frac{md}{2^p}$. Hence by~\Cref{lem:Schwartz_Zipple}, for $(s^1, s^2) \sim \bF_{2^p}^m \times \bF_{2^p}^k$, the probability that $\combinefun_{a}(s^2) = \textbf{f}(s^1, s^2)$ is at most $\frac{md}{2^p}\cdot \frac{kd}{2^p}$ regardless of the measurement outcome for $a$. Combining the above fact with~\Cref{eq:sldt-1}, we see that the output of $\{H_{\textbf{f}}\}$ is exactly linear with probability at least $1 - \mathbf{\eta}_{\text{LD}}(p,m+k,d,\eps) - \left(\frac{md}{2^p}\right)^2 - \frac{m \cdot k d}{2^{2p}}$. 
	
	Define the measurement $\{G_{\textbf{g}_0, \cdots, \textbf{g}_{k-1}} \} \subseteq \alicealg'$ with the outcome set the same as $\{H\}$ as follows: The prover first measures according to $\{H\}$ to receive a polynomial $\textbf{f}(x,y)$. If $\textbf{f}(x,y)$ is exactly linear in $y$, it can be written as $\textbf{f}(x,y) = \sum_{k} y_i \textbf{g}_i(x)$, such that each $\textbf{g}_i$ is an $m$-variant polynomial with individual degree at most $d$. In this case, $G$ outputs the polynomials $\{\textbf{g}_i\}_{i \in [k]}$ as the output. If the measurement output from $\{H\}$ is not exactly linear, $G$ simply outputs $k$ random $m$-variate polynomials with individual degree $d$,  $\{\textbf{g}_i\}_{i \in [k]}$. Since $\combinefun_{\textbf{g}}$ is equal to $\textbf{g}$ whenever $\textbf{g}$ is exactly linear by definition, by replacing $H$ with $G$ on~\Cref{eq:sldt-1}, we see that 
	\begin{equation} \label{eq:sldt-2}
	 A^{(\text{point}, s^1)}_{[\combinefun_{a}(s^2)| \textbf{f}(s^1, s^2)]} \simeq_{\mathbf{\eta}_{\text{LD}}(p,m+k,d,\eps) +  \frac{m \cdot k d}{2^{2p}}}  G_{\combinefun_{\textbf{g}} =  \textbf{f}},
	\end{equation}
	where $\simeq$ is with respect to the distribution $\Ex_{(s^1, s^2) \sim \bF_{2^p}^m \times \bF_{2^p}^k}$ and the state $\ket{\tau}$. Now, for any fixed $s^1$ and $a = (a_i)_{i \in [k]}$, if $\textbf{g}_i(s^1) \neq a_i$ for any $i$, then the polynomial $\combinefun_{\textbf{g}}(s^1,y)$ and $\combinefun_{a}(y)$ are not equal and hence again by~\Cref{lem:Schwartz_Zipple}, the probability is at most $\frac{1}{2^p}$ (since both $\combinefun_{\textbf{g}}(s^1,y)$ and $\combinefun_{a}(y)$ are a multi-linear function). This implies that 
	\begin{equation*}
		A^{(\text{point}, s^1)}_{a_0, \cdots, a_{k-1}} \simeq_{\mathbf{\eta}_{\text{LD}}(p,m+k,d,\eps) +  \frac{m \cdot k d}{2^{2p}} + \frac{1}{2^p} }  G_{\combinefun_{\textbf{g}} =  \textbf{f}}.
	\end{equation*}
	Hence, the ``soundness" condition follows by setting 
	\begin{equation*}
		\mathbf{\eta}_{\text{SLD}}(p,m,d,k, \eps) = \eta_{\text{LD}}(p,m+k,d,\eps) +  \frac{m \cdot k d}{2^{2p}} + \frac{1}{2^p}.
	\end{equation*}
\end{proof}
We remark that since the above proof does not rely on the strategy $\strategy$ being synchronous, and thus the ``commuting operator soundness" condition in~\Cref{lem:simuQLID} also holds for general strategy (by replacing~\Cref{thm:soundnessQLDT} above with~\cite[Corollary 4.4]{linTracialEmbeddableStrategies2024}).


\subsection{Time bounded classical PCPP}


We recall the following PCPP protocol given in~\cite[Section 10]{jiMIPRE2022a} (which is a modification of the work by~\cite{harshaRobustPCPsProximity2004}). Since we do not modify the construction from~\cite{jiMIPRE2022a}, we do not go through the details of this construction in this paper, and instead refer to the original reference for more details. Recall from the preliminary that given a string $a \in \{0,1\}^n$, one can encode $a$ into a low-individual-degree $\log(n)$-variate polynomial with a low-individual degree of $2$ using the generalize Reed-Muller encoding given in~\eqref{eq:REM}. The following theorem is the main result of~\cite{jiMIPRE2022a} section 10.1-10.5.

\begin{theorem}[Time bounded decider PCPP] \label{thm:classicalPCPP}
Let $\deciderTM$ be a decider for a CL verifier $\verifiersequence$, and  $\alpha \in \bN$. There exist two Turing machines $(\mathtt{PCPParameter}_{\alpha}, \mathtt{ComputePCP}_{\alpha})$. $\mathtt{PCPParameter}_{\alpha}$ takes, as input $n \in \bN$ and outputs a tuple of parameters $(m^{\text{ans}},m^{\text{PCPP}}, g, p)$ such that the following holds:
\jqnote{Since x,y has padded length up to $\log^{\alpha}(n)$, the parameters are actually independent of $(x,y)$}
\begin{itemize}
	\item $\TIME_{\mathtt{PCPParameter}_{\alpha}} = O(\poly(\alpha, \log(n)))$
	\item $m^{\text{ans}}, g = O(\poly(\alpha, \log(n)))$ 
	\item $m^{\text{PCPP}} = 5 \cdot m^{\text{ans}} + 5 + g$, where g is padded such that $m^{\text{PCPP}}$ is of the form $2^i$ for some integer $i$. 
	\item Let $1 \geq c_{\text{SLD},1}$ and $0 < c_{\text{SLD},2} \leq 1$ be the universal constant defined within~\Cref{lem:simuQLID}. The field size $p$ is chosen to be the smallest integer which satisfies the following:
	\begin{enumerate}
		\item $p \geq \frac{\alpha c_{\text{SLD},2} + 3 c_{\text{SLD},1} ) \cdot \log(g)}{c_{\text{SLD},2}}$,
		\item $\frac{(2 + 5p) \cdot m^{\text{ans}}}{2^p} < \frac{1}{2}$,
		\item $\frac{p m^{\text{PCPP}}}{2^p} \leq s^{-  c_{\text{SLD},2} \alpha}$,
		\item $2^p$ is divisible by $m^{\text{PCPP}}$. 
	\end{enumerate}
	By the choice of parameters, one can check that $p = O(\polylog(\alpha, \log(n)))$ 
\end{itemize}
The Turing machine $\mathtt{ComputePCPP}_{\alpha}$ takes, as input, another Turing machine $\langle \deciderTM \rangle$, a natural number $n \in \bN$ and a pair of strings $(x,y)$ such that $|x|, |y| \leq \log^{\alpha}(n)$, and outputs a description of a $m^{\text{PCPP}}$-variate polynomial $\textbf{g}_{\deciderTM} \in \Idpoly(p, m^{\text{PCPP}},p)$. $\mathtt{ComputePCPP}_{\alpha}$ has the following properties:
\begin{itemize}
	\item (Time complexity): $\mathtt{ComputePCPP}_{\alpha}$ takes time $O(\poly(\alpha, \log^{\alpha}(n), | \langle \deciderTM \rangle |))$. 
	\item (The complexity of the PCPP formula): The description of $\textbf{g}_{\deciderTM}$ can be represented using \\ $O(\poly(\alpha, \log(n))$ bits, and evaluating $\textbf{g}_{\deciderTM}$ at a single point takes time $O(\poly(\alpha, \log(n))$. 
\end{itemize}
Furthermore, if $|\deciderTM| = \poly(\alpha)$ and there exist two strings $(a,b) \in \{0,1\}^{n^{\alpha}}$ such that $D(n,x,y,a,b) = 1$ with
\begin{equation*}
	\TIME_{ \deciderTM}(n,x,y,a,b) \leq n^{\alpha}. 
\end{equation*}
Then there exist five polynomials $\overline{\textbf{g}}_a, \overline{\textbf{g}}_b,\overline{\textbf{g}}_{w_0}, \overline{\textbf{g}}_{w_1},\overline{\textbf{g}}_{w_2} \in \Idpoly(p, m^{\text{ans}},p)$ such that the following holds:
\begin{itemize}
	\item $\overline{\textbf{g}}_a$ is the Reed-Muller encoding of $\text{enc}_{\Gamma}(a)$, where $\text{enc}_{\Gamma}$ is the encoding map for the ``padded  version" of $a$ which maps any string with length at most $n^{\alpha}$ to a string with $2^{ m^{\text{ans}}}$. The padded encoding map is given in~\cite[Proposition 10.19]{jiMIPRE2022a}. The Reed-Muller encoding is defined over $\bF_{2^p}^{m^{\text{PCPP}}}$. 
	\item $\overline{\textbf{g}}_b$ is the Reed-Muller encoding of $\text{enc}_{\Gamma}(b)$, defined similarly as $\overline{\textbf{g}}_a$. 
	\item  For every $s \in \bF_{2^p}^{m^{\text{PCPP}}}$, partition $s =(s_0, \cdots, s_4, b_0, \cdots, b_4, z)$, where for $i \in [5]$, $s_i \in \bF_{2^p}^{m^{\text{ans}}}$,  $b_i \in \bF_{2^p}$ and $z \in \bF_{2^p}^s$. Define the polynomial $\textbf{g}_{\deciderTM}^{\text{Full}}  \in \Idpoly(p,m^{\text{PCPP}},d)$ as the polynomial with individual degrees at most $32$ as 
	\begin{equation} \label{eq:PCPPpolycondition}
		\textbf{g}_{\deciderTM}^{\text{Full}}(s) = \textbf{g}_{\deciderTM}(s) \cdot (\overline{\textbf{g}}_a(s_0) - b_0) \cdot (\overline{\textbf{g}}_b(s_1) - b_1)  \cdot (\overline{\textbf{g}}_{w_0}(s_2) - b_2) \cdot (\overline{\textbf{g}}_{w_1}(s_3) - b_3) (\overline{\textbf{g}}_{w_2}(s_4) - b_4). 
	\end{equation}
	Moreover, for every $s \in \{0,1\}^{m^{\text{PCPP}}} \subseteq \bF_{2^p}^{m^{\text{PCPP}}}$, $\textbf{g}_{\deciderTM}^{\text{Full}}(s)= 0$. 
\end{itemize}
\end{theorem}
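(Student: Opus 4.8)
The plan is to invoke the probabilistically-checkable-proof-of-proximity construction of \cite[Sections~10.1--10.5]{jiMIPRE2022a} essentially verbatim; since we do not alter any of the polynomial constructions, the work reduces to (i) recalling how that construction yields the pair $(\mathtt{PCPParameter}_{\alpha},\mathtt{ComputePCPP}_{\alpha})$, and (ii) checking that the explicit parameter regime listed above is compatible with the constraints imposed there. First I would recall the Cook--Levin route: the decider $\deciderTM$, run on an input $(n,x,y,a,b)$ with $\TIME_{\deciderTM}(n,x,y,a,b)\le n^{\alpha}$, has its accepting computation transcript rewritten as a $3$-SAT-style arithmetic constraint of size $\poly(n^{\alpha},|\langle\deciderTM\rangle|)$, which after the standard arithmetization of \cite{babaiNondeterministicExponentialTime1991a,harshaRobustPCPsProximity2004} becomes a family of low-individual-degree polynomials over $\bF_{2^{p}}$. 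Because $x$ and $y$ are padded to length $\log^{\alpha}(n)$ and $|\langle\deciderTM\rangle| = \poly(\alpha)$, every combinatorial parameter of the formula --- the number $2^{m^{\mathrm{ans}}}$ of variables carrying each answer block, the number $g$ of auxiliary variables, and the total dimension $m^{\mathrm{PCPP}} = 5m^{\mathrm{ans}} + 5 + g$ --- depends on $n$ alone and is $\poly(\alpha,\log n)$, which yields the claimed bounds on $\mathtt{PCPParameter}_{\alpha}$ and on the description length of its output.

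Next I would spell out what $\mathtt{ComputePCPP}_{\alpha}$ produces, namely the ``constraint polynomial'' $\textbf{g}_{\deciderTM}\in\Idpoly(p,m^{\mathrm{PCPP}},p)$, whose vanishing on the Boolean cube $\{0,1\}^{m^{\mathrm{PCPP}}}$ arithmetically certifies that the transcript is a valid accepting run of $\deciderTM$, and which also carries the consistency checks tying five designated blocks of variables to the Reed--Muller encodings of $a$, $b$ and three witness strings $w_0,w_1,w_2$ read off the transcript. The completeness (the ``Furthermore'' clause) is then handled by extracting the honest witness $(w_0,w_1,w_2)$ from a short accepting run, forming $\overline{\textbf{g}}_a,\overline{\textbf{g}}_b,\overline{\textbf{g}}_{w_0},\overline{\textbf{g}}_{w_1},\overline{\textbf{g}}_{w_2}\in\Idpoly(p,m^{\mathrm{ans}},p)$ as the Reed--Muller encodings via~\eqref{eq:REM} of the padded strings of \cite[Proposition~10.19]{jiMIPRE2022a}, and checking directly that the combined polynomial $\textbf{g}_{\deciderTM}^{\mathrm{Full}}$ of~\eqref{eq:PCPPpolycondition} vanishes on $\{0,1\}^{m^{\mathrm{PCPP}}}$: the product factors $(\overline{\textbf{g}}_a(s_0)-b_0),\dots,(\overline{\textbf{g}}_{w_2}(s_4)-b_4)$ force any Boolean point $s$ either into the ``assignment block'' on which $\textbf{g}_{\deciderTM}$ already vanishes by construction, or onto one of the Reed--Muller-codeword checks, which the honest encodings satisfy exactly. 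This is precisely the zero-on-the-cube statement of \cite[Section~10.5]{jiMIPRE2022a}.

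The time bounds for $\mathtt{ComputePCPP}_{\alpha}$ follow from the formula having size $\poly(n^{\alpha},|\langle\deciderTM\rangle|)$, from low-degree extension and arithmetization over $\bF_{2^{p}}$ being polynomial-time in the formula size and $2^{p}$ --- here \Cref{lem:compfinitefield} supplies the field arithmetic --- and from the fact that the construction represents $\textbf{g}_{\deciderTM}$ by a succinct $\poly(\alpha,\log n)$-size arithmetic description rather than a coefficient list, so it is pointwise evaluable in $\poly(\alpha,\log n)$ time. The one place that needs genuine care is the item-by-item check that the least $p$ obeying conditions~(1)--(4) is $O(\polylog(\alpha,\log n))$ and that this choice drives the downstream soundness error of \Cref{lem:simuQLID} --- which scales like $(kdm)^{c_{\mathrm{SLD},1}}(\eps^{c_{\mathrm{SLD},2}} + 2^{-c_{\mathrm{SLD},2}p} + 2^{-c_{\mathrm{SLD},2}md})$ with $m = m^{\mathrm{PCPP}}$ and $d$ bounded by an absolute constant in the relevant bundled test --- down to $\polylog(n)\cdot\poly(\eps)$; condition~(1) is engineered exactly so that $2^{-c_{\mathrm{SLD},2}p}$ dominates the target $g^{-\alpha}$, and conditions~(2)--(4) are the divisibility and Schwartz--Zippel margins needed for the generalized Reed--Muller code and for \Cref{lem:CLsamplableLDT}. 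This bookkeeping is the main (and only) obstacle, and it is identical to the computation in \cite[Sections~10.5--10.6]{jiMIPRE2022a}; no new soundness argument is required, since the passage to tracially embeddable strategies in the commuting-operator setting is absorbed entirely into the later invocation of \Cref{lem:simuQLID} within the answer-reduction protocol of \Cref{sec:PCPanswerreduction}.
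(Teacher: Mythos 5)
Your proposal is essentially identical to what the paper does: \Cref{thm:classicalPCPP} is stated as a black-box import of the construction from \cite[Sections~10.1--10.5]{jiMIPRE2022a}, and the paper itself offers no independent proof beyond that citation together with the surrounding discussion. Your sketch correctly recapitulates the ingredients and the parameter bookkeeping of that construction; the one small slip is calling the Cook--Levin arithmetization ``$3$-SAT-style,'' when in fact it produces a $5$-SAT constraint (hence the five blocks $s_0,\dots,s_4$ and the five encodings $\overline{\textbf{g}}_a,\overline{\textbf{g}}_b,\overline{\textbf{g}}_{w_0},\overline{\textbf{g}}_{w_1},\overline{\textbf{g}}_{w_2}$ appearing in $\textbf{g}_{\deciderTM}^{\text{Full}}$).
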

We remark that the parameter requirement for $(m^{\text{ans}}, g, p)$ given in the above theorem follows according to~\cite[Definition 10.22]{jiMIPRE2022a}, and indeed the individual degree for each polynomials in the above definition is at most $d$. Intuitively, $\textbf{g}_a$ and $\textbf{g}_b$ in the above theorem encodes the answers given by the provers for the game $\cG$, and $\overline{\textbf{g}}_{w_0}$, $\overline{\textbf{g}}_{w_1}$, $\overline{\textbf{g}}_{w_2}$ is an encoding for the 3-SAT instances from the computation steps of $\deciderTM$ via the well known Cook-Levin encoding. In the theorem above, a $\polylog$ time bounded verifier can only compute a description of the polynomial $\textbf{g}_{\deciderTM}$ given $\deciderTM$ and the question pair $(x,y)$, and can only evaluate $O(\polylog(n))$ points from  $\textbf{g}_{\deciderTM}$. The verifier must somehow query the potentially dishonest provers for the existence of $\textbf{g}_a, \textbf{g}_b,\textbf{g}_{w_0}, \textbf{g}_{w_1},\textbf{g}_{w_2}$ to confirm the existence of such answer pair $(a,b)$ such that $\deciderTM(n,x,y,a,b)=1$. 

On a high level, in the PCPP protocol, the verifier can compute the low-individual degree polynomial $\textbf{g}_{\deciderTM}$ since he has access to the description of the decider $\deciderTM$. Then, the verifier can, with the help of the prover, verify the existence of the five low-individual degree polynomials guaranteed by the above theorem, and as well as the resulting $\textbf{g}_{\deciderTM}^{\text{Full}}$ is $0$ on all the points within the ``0/1 subcube:. Verifying the existence of the five low-individual degree is easy via the simultaneous quantum low-individual degree test given on the last subsection. The verifier can use the following lemma to check that $\textbf{g}_{\deciderTM}^{\text{Full}}$ is zero on the subcube $\{0,1\}^{m^{\text{PCPP}}}$. 
\begin{lemma}[Polynomial basis of zero functions, Proposition 10.21 of~\cite{jiMIPRE2022a}] \label{lem:polyonzero}
	Let $m,p \in \bN$ and let $\mathbf{f} \in \Idpoly(p,m,d)$. Suppose $\mathbf{f}(s) = 0$ for all $s \in \{0,1\}^{m}\subseteq \bF_{2^p}^{m}$. Then there exist $m$ polynomials $\{\textbf{c}_i \}_{i \in [m]}$, each $\textbf{c}_i \in \Idpoly(p,m,d)$, and for all $(x_0, \cdots, x_m) \in \bF_{2^p}^m$
	\begin{equation*}
		\mathbf{f}(x_0, \cdots, x_m) = \sum_{i \in [m]} \textbf{c}_i(x_0, \cdots, x_m) \cdot \mathbf{zero}(x_i)
	\end{equation*}
	where $\mathbf{zero}: \bF_{2^p} \rightarrow \bF_{2^p}$ is the polynomial $x(1-x)$. 
\end{lemma}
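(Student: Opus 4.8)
The plan is to prove this by a multivariate polynomial division argument, reducing one variable at a time and then invoking uniqueness of multilinear representatives. First I would record the single–variable fact: working over the field $\bF_{2^p}$, the polynomial $\mathbf{zero}(x) = x(1-x) = x^2 + x$ (recall $-1 = 1$ in characteristic $2$) is monic of degree $2$, so every univariate $g(x)$ over $\bF_{2^p}$ of degree $D$ admits a division $g = q\cdot \mathbf{zero} + r$ with $\deg_x q \le D-2$ and $\deg_x r \le 1$. Moreover, running the long division explicitly shows that every coefficient of $q$ and of $r$ is an $\bF_{2^p}$-linear combination of the coefficients of $g$; so when this division is applied to a polynomial in the variable $x_i$ whose coefficients lie in the polynomial ring in the remaining $m-1$ variables, the individual degrees in those remaining variables are never increased.

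Next I would iterate this over the variables $x_0,\dots,x_{m-1}$. Starting from $\mathbf{f} \in \Idpoly(p,m,d)$, divide by $\mathbf{zero}(x_0)$ to obtain $\mathbf{f} = \mathbf{c}_0\cdot\mathbf{zero}(x_0) + r_0$, where $r_0$ has individual degree $\le 1$ in $x_0$ and $\le d$ in the other variables, and $\mathbf{c}_0 \in \Idpoly(p,m,d)$ (its degree in $x_0$ dropped by $2$, the others are unchanged). Then divide $r_0$ by $\mathbf{zero}(x_1)$, and so on. Since each step leaves the individual degrees in the already–processed variables at $\le 1$ and does not inflate the others, after $m$ steps I get
\[
\mathbf{f} = \sum_{i \in [m]} \mathbf{c}_i\cdot\mathbf{zero}(x_i) + \mathbf{r}, \qquad \mathbf{c}_i \in \Idpoly(p,m,d),
\]
where $\mathbf{r}$ has individual degree $\le 1$ in every variable, i.e.\ $\mathbf{r}$ is multilinear.

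To finish I would argue $\mathbf{r} \equiv 0$. Each $\mathbf{zero}(x_i)$ vanishes on all of $\{0,1\}^m \subseteq \bF_{2^p}^m$ — it vanishes at $x_i = 0$ and, because $1+1 = 0$, also at $x_i = 1$ — and $\mathbf{f}$ vanishes on $\{0,1\}^m$ by hypothesis, so $\mathbf{r}$ vanishes on $\{0,1\}^m$. A multilinear polynomial over any field which vanishes on the whole Boolean cube is identically zero: this is a short induction on $m$, writing $\mathbf{r} = x_{m-1}\mathbf{r}_1 + \mathbf{r}_0$ with $\mathbf{r}_0,\mathbf{r}_1$ multilinear in $x_0,\dots,x_{m-2}$, then specializing $x_{m-1} = 0$ to kill $\mathbf{r}_0$ and $x_{m-1} = 1$ to kill $\mathbf{r}_1$. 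Hence $\mathbf{f} = \sum_{i\in[m]}\mathbf{c}_i\cdot\mathbf{zero}(x_i)$ with each $\mathbf{c}_i \in \Idpoly(p,m,d)$, as claimed.

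I do not expect any real obstacle here beyond bookkeeping. The only two points that need genuine care are (i) verifying that the long–division step does not inflate the individual degrees in the untouched variables, which is what keeps the final $\mathbf{c}_i$ inside $\Idpoly(p,m,d)$, and (ii) the characteristic–$2$ remark that $\mathbf{zero}$ still vanishes at $x = 1$ so that the cube $\{0,1\}^m$ is exactly the common zero set of the $\mathbf{zero}(x_i)$. Everything else is the standard "ideal of the Boolean cube" computation, and the degree bound $d$ is inherited automatically from the divisions.
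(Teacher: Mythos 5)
Your proof is correct: iterated univariate division by $\mathbf{zero}(x_i) = x_i(1-x_i)$ reduces $\mathbf{f}$ modulo the ideal $(x_0^2 - x_0, \dots, x_{m-1}^2 - x_{m-1})$ to a multilinear remainder that vanishes on $\{0,1\}^m$ and is therefore identically zero, with the bookkeeping you describe keeping each quotient inside $\Idpoly(p,m,d)$. The paper simply cites this as Proposition 10.21 of Ji et al.\ without reproducing a proof, and your argument is the standard ``ideal of the Boolean cube'' computation used there.

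One cosmetic remark: the fact that $\mathbf{zero}$ vanishes at both $0$ and $1$ is immediate from the factored form $x(1-x)$ and holds over any field; the characteristic-$2$ identity $x(1-x) = x^2 + x$ is only needed to see that $\mathbf{zero}$ is monic (over a general field the leading coefficient is $-1$, still a unit, so the division argument goes through anyway).
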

Hence, instead of checking $\textbf{g}_{\deciderTM}^{\text{Full}}$ directly, the verifier can ask the provers to compute each $\textbf{c}_i$ guaranteed by the above theorem, and check their existence again via the simultaneous quantum low-individual degree test. 

We now give a summary for the verification procedure for a verifier given a decider $\deciderTM$ and the integer $n$ assuming the provers are honest. We first define $\textbf{g}_a, \textbf{g}_b,\textbf{g}_{w_0}, \textbf{g}_{w_1},\textbf{g}_{w_2}: \bF_{2^{p}}^{ m^{\text{PCPP}}} \rightarrow \bF_{2^{p}}$ as follows: For every $s \in \bF_{2^p}^{m^{\text{PCPP}}}$, partition $s =(s_0, \cdots, s_4, b_0, \cdots, b_4, z)$, where for $i \in [5]$, $s_i \in  \bF_{2^p}^{m^{\text{ans}}}$,  $b_i \in  \bF_{2^p}$ and $z \in  \bF_{2^p}^s$.
\begin{align}
	\nonumber &\textbf{g}_a(s_0, \cdots, s_4,b, w) = \overline{\textbf{g}}_a(s_0), \qquad \textbf{g}_{b}(s_0, \cdots, s_4,b, w) = \overline{\textbf{g}}_b(s_1), \qquad \textbf{g}_{w_0}(s_0, \cdots, s_4,b, w) = \overline{\textbf{g}}_{w_0}(s_2)\\
	&\textbf{g}_{w_1}(s_0, \cdots, s_4,b, w) = \overline{\textbf{g}}_{w_1}(s_3),\qquad \textbf{g}_{w_2}(s_0, \cdots, s_4,b, w) = \overline{\textbf{g}}_{w_2}(s_4), \label{eq:PCPpolypartition} 
\end{align}
Note for $v \in \{a,b,  w_0, w_1, w_2\}$, $\textbf{g}_v$ and $\overline{\textbf{g}}_v$ are almost identical, except that $\textbf{g}_v$ is a low-degree polynomial over $m^{\text{ans}}$ and $\overline{\textbf{g}}_v$ is a low-degree polynomial over $m^{\text{PCPP}}$. We can also rewrite~\eqref{eq:PCPPpolycondition} without the need to partition the input $s$, as follows: 
\begin{equation} \label{eq:PCPPpolyconditionalt}
	\textbf{g}_{\deciderTM}^{\text{Full}}(s) = \textbf{g}_{\deciderTM}(s) \cdot (\textbf{g}_a(s) - b_0) \cdot (\textbf{g}_b(s) - b_1)  \cdot (\textbf{g}_{w_0}(s) - b_2) \cdot (\textbf{g}_{w_1}(s) - b_3) (\textbf{g}_{w_2}(s) -  b_4). 
\end{equation}
The answer reduction is a combination between the oracularization transformation, however, instead of computing the decision procedure given in~\Cref{fig:Oracularization}, the encoding from~\Cref{thm:classicalPCPP} are used instead to verify that the provers indeed generate the correct answers for the given question pair. To be more precise, assuming that both provers are honest, the verifier performs the following on the answer reduction protocol with the provers.


\begin{enumerate}
	\item  The verifier first samples the question pair $(x,y)$ and $(n_A, n_B)$ according to the sampling procedure given in~\Cref{thm:classicalPCPP}, and send the question pair normally to the two provers. 
	\item Upon receiving the question label and the question pair, the prover computes the following:
	\begin{itemize}
		\item If the question label is (Prover, A), the prover generate the  answer $a$ for the question $x$ for $\cG$, and then generate the polynomial $\overline{\textbf{g}}_a$ by using the encoding map $\text{enc}_{\Gamma}$~\Cref{thm:classicalPCPP}. 
		\item Similarly, if the question label is (Prover, B), the prover generates the answer $b$ for the question $y$ for $\cG$, and then generate the polynomial $\overline{\textbf{g}}_b$. 
		\item If the question label is (Oracularization), the prover generates the answer pair $(a,b)$ for the question pair $(x,y)$ for $\cG$ (from which $a$ only depends on $x$ and $b$ only depends on $y$). Then, the prover computes the following 
		\begin{itemize}
			\item The polynomial $\overline{\textbf{g}}_a$, $\overline{\textbf{g}}_b$ similarly as above.
			\item $\textbf{g}_{\deciderTM}(x)$ base on $\deciderTM$, and the question pair $(x,y)$ given by the verifier. 
			\item The $m^{\text{ans}}$-variate polynomial $\overline{\textbf{g}}_{w_0}, \overline{\textbf{g}}_{w_1},\overline{\textbf{g}}_{w_2}$ and the $m^{\text{PCPP}}$-variate polynomial $\textbf{g}_{\deciderTM}^{\text{Full}}(x)$ guarantee by~\Cref{thm:classicalPCPP}. 
		\end{itemize}
	\end{itemize}
	\item Then, the verifier computes $(m^{\text{ans}},m^{\text{PCPP}}, g, p)= \mathtt{PCPParameter}_{\alpha}(n)$. The verifier also computes the description of $\textbf{g}_{\deciderTM}$, by running $\mathtt{ComputePCPP}_{\alpha}(\deciderTM, n, x, y)$. 
	\item If the question pair is (Prover, A) \textbf{--} (Oracularization) or  (Prover, A) \textbf{--} (Prover, A), the verifier uses the $(p, m^{\text{ans}}, p)$-quantum low-individual degree test to verify that the prover shares the same low degree polynomial $\overline{\textbf{g}}_a$. 
	\item If the question pair is (Prover, B) \textbf{--} (Oracularization) or  (Prover, B) \textbf{--} (Prover, B), the verifier uses the $(p, m^{\text{ans}}, p)$-quantum low-individual degree test to verify that the prover shares the same low degree polynomial $\overline{\textbf{g}}_b$. 
	\item If the question pair for both provers is (Oracularization), the verifier performs the following with some constant probability. 
	\begin{enumerate}
		\item (Low-individual degree test on assignments) The verifier arbitrarily picks $w \in \{w_0, w_1, w_2\}$, and uses the $(p, m^{\text{ans}}, p)$-quantum low-individual degree test to verify that the prover shares the same low degree polynomial $\overline{\textbf{g}}_w$. 
		\item (Simultaneous low-individual degree test) The verifier performs the $(p, m^{\text{ans}}, p, 6 + m^{\text{PCPP}})$-simultaneous low-individual degree test on the polynomials
		\begin{equation*}
			\textbf{g}_a, \textbf{g}_b,\textbf{g}_{w_0}, \textbf{g}_{w_1},\textbf{g}_{w_2}, \textbf{g}_{\deciderTM}^{\text{Full}},  \textbf{c}_0, \cdots,  \textbf{c}_{m^{PCPP} -1}
		\end{equation*} 
		where $c_0 \cdots c_{m^{PCPP} -1}$ are the polynomials guaranteed by~\Cref{lem:polyonzero} when applied to $\textbf{g}_{\deciderTM}^{\text{Full}}(x)$ to verify that the prover actually shares these polynomials.  
		\item (Evaluation test) The verifier samples $s \in \bF_{2^p}^{m^{PCPP}}$, and ask both provers to compute
		\begin{align*}
			(u_0, \cdots, u_4) = (\textbf{g}_a(s), \textbf{g}_b(s),\textbf{g}_{w_0}(s), \textbf{g}_{w_1}(s),\textbf{g}_{w_2}(s)), \, \qquad \, \gamma = \textbf{g}_{\deciderTM}^{\text{Full}}(s),  \\
			(\beta_0, \cdots, \beta_{m^{PCPP} -1}) =( \textbf{c}_0(s), \cdots,  \textbf{c}_{m^{PCPP} -1}(s)). 
		\end{align*}
		The verifier rejects under the following conditions.
		\begin{itemize}
			\item (Consistency check)  If the two provers output different values. 
			\item (Formula check)  Parse $s =(s_0, \cdots, s_4, b_0, \cdots, b_4, z)$, where for $i \in [5]$, $s_i \in \{0,1\}^{m^{\text{ans}}}$,  $b_i \in \{0,1\}$ and $z \in \{0,1\}^s$. The verifier rejects if
			\begin{equation*}
				\gamma \neq \textbf{g}_{\deciderTM}(s) \cdot (u_1 - b_0) \cdots (u_4 - b_4). 
			\end{equation*}
			\item (Zero on subcube test check) Parse $s = (s_0, \cdots, s_{m^{PCPP} -1})$, where each $s_i \in \bF_{2^p}$ for $i \in [m^{PCPP}]$. The verifier rejects if
			\begin{equation*}
				\gamma \neq \sum_{i \in[m^{PCPP}]} \beta_i \cdot \textbf{zero}(s_i). 
			\end{equation*}
		where $\textbf{zero}$ is the polynomial defined in~\Cref{lem:polyonzero}.
		\end{itemize}
	\end{enumerate}
\end{enumerate}

Since the input/output for the (evaluation test) on the prover side is the same as a ``point" question on the (simultaneous low-individual degree test), hence the evaluation test is attached as part of the consistency test when running the simultaneous low-individual degree test in the answer reduction protocol below. As mentioned in the beginning of the section, the ``oracularization" question pair can also be combined with the ``simultaneous low-individual degree test" question to make the above procedure as a one round interaction. 

To analyze the ``soundness" condition of the protocol, we have to ensure that if there is no valid answer pair $(a,b)$ with the appropriate length such that $\deciderTM(n,x,y,a,b) =1$, then the provers cannot generate valid polynomials $\textbf{g}_v$ for $v \in \{a,b,  w_0, w_1, w_2\}$ which can be used to trick the verifier in the above procedure. In order to state this more precisely, we need to first define the notion of a low-degree PCPP proof below. 

\begin{definition}[Low-degree PCPP proof, definition 10.23 of~\cite{jiMIPRE2022a}]
	Given $m^{\text{ans}},g, p \in \bN$, let $m^{\text{PCPP}}$ be as of the requirement given in~\Cref{thm:classicalPCPP}. A low-degree PCPP proof is a tuple $\mathbf{\Pi}_{m^{\text{ans}},g, p}$ of evaluation tables over polynomials 
	\begin{equation} \label{eq:PCPPallpolys}
		(\overline{\textbf{g}}_a, \overline{\textbf{g}}_b, \overline{\textbf{g}}_{w_0}, \overline{\textbf{g}}_{w_1},\overline{\textbf{g}}_{w_2}, \textbf{g}_{\deciderTM}^{\text{Full}},  \textbf{c}_0, \cdots,  \textbf{c}_{m^{PCPP} -1}),
	\end{equation}
	where $\overline{\textbf{g}}_a, \overline{\textbf{g}}_b, \overline{\textbf{g}}_{w_0} , \overline{\textbf{g}}_{w_1}, \overline{\textbf{g}}_{w_2} \in \Idpoly(p,m^{\text{ans}},p)$, and $\textbf{g}_{\deciderTM}^{\text{Full}},\textbf{c}_0, \cdots,  \textbf{c}_{m^{PCPP} -1} \in \Idpoly(p,m^{\text{PCPP}},p)$. The evaluation of $\mathbf{\Pi}_{m^{\text{ans}},g, p}$ at $s \in \{\bF_{2^p}^{m^{PCPP}}\}$ is given by 
	\begin{equation*}
		\textbf{eval}_s(\mathbf{\Pi}_{m^{\text{ans}},g, p}) = (\overline{\textbf{g}}_a(s_0), \overline{\textbf{g}}_b(s_1), \overline{\textbf{g}}_{w_0}(s_2), \overline{\textbf{g}}_{w_1}(s_3),\overline{\textbf{g}}_{w_2}(s_4), \textbf{g}_{\deciderTM}^{\text{Full}}(s),  \textbf{c}_0(s), \cdots,  \textbf{c}_{m^{PCPP} -1}(s)). 
	\end{equation*}
	where $s$ is parsed as  $s =(s_0, \cdots, s_4, b_0, \cdots, b_4, z)$. 
\end{definition}
Intuitively, a low-degree PCPP proof is a classical proof in which the provers can generate a proof for the above verification procedure. We now state the (classical) soundness result for the PCPP procedure given in~\Cref{thm:classicalPCPP}. 

\begin{theorem}[Classical soundness of the time bounded decider PCPP, theorem 10.25 of~\cite{jiMIPRE2022a}] \label{thm:classicalsoundnessPCPP}
	Let $n, \alpha \in \bN$, let $(\mathtt{ComputePCPP}_{\alpha}, \mathtt{PCPParameter}_{\alpha})$ be the two Turing machines given in~\Cref{thm:classicalPCPP}, and let
	\begin{itemize}
		\item $\deciderTM$ be a decider for a CL verifier $\verifiersequence$,
		\item $x,y$ be two strings of length at most $\log^{\alpha}(n)$,
		\item $(m^{\text{ans}},m^{\text{PCPP}}, g,p)$ be the outputs of $\mathtt{PCPParameter}_{\alpha}(n)$,
		\item $\textbf{g}_{\deciderTM}$ be the outputs from $\mathtt{ComputePCPP}_{\alpha}(\deciderTM,n, x,y)$. 
	\end{itemize}
	Then the following holds:
	\begin{itemize}
		\item (Completeness): If there exist two strings $a,b \in \{0,1\}^{n^{\alpha}}$ such that $D(n,x,y,a,b) = 1$ with $\TIME_{\deciderTM}(n,x,y,a,b) \leq n^{\alpha}$, then there exists a low-degree PCPP proof $\mathbf{\Pi}_{m^{\text{ans}},g, p}$ such that for all $s \in \{\bF_{2^p}^{m^{PCPP}}\}$
		\begin{equation*}
			\mathtt{ValidatePCPP}(\textbf{g}_{\deciderTM}, m^{\text{ans}},g, p, s, \textbf{eval}_s(\mathbf{\Pi}_{m^{\text{ans}},g, p})) = 1
		\end{equation*}
		where the function $\mathtt{ValidatePCPP}$ is as specified in~\Cref{pseu:validatePCPP}. 
		\item (Soundness): If there exists a low-degree PCPP proof $\mathbf{\Pi}_{m^{\text{ans}},g, p}$ such that 
		\begin{equation}
			\Pr_{s \sim \{\bF_{2^p}^{m^{PCPP}} \}}[	\mathtt{ValidatePCPP}(\textbf{g}_{\deciderTM}, m^{\text{ans}},g, p, s, \textbf{eval}_s(\mathbf{\Pi}_{m^{\text{ans}},g, p})) = 1] > \frac{1}{2}.
		\end{equation}
		Then there exist two strings $a,b \in \{0,1\}^{n^{\alpha}}$ such that $\overline{\textbf{g}_a} =  \text{enc}_{\Gamma}(a)$ and $\overline{\textbf{g}_b} =  \text{enc}_{\Gamma}(b)$, where $\overline{\textbf{g}_a}, \overline{\textbf{g}_b}$ are two low-individual degree polynomials used to define $\mathbf{\Pi}_{m^{\text{ans}},g, p}$ and $D(n,x,y,a,b) = 1$ with 
		\begin{equation*}
			\TIME_{\deciderTM}(n,x,y,a,b) \leq n^{\alpha}. 
		\end{equation*}
		
	\end{itemize}
\end{theorem}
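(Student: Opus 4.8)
The plan is to follow the construction of~\cite[Section 10]{jiMIPRE2022a} (which itself refines the PCPP of~\cite{harshaRobustPCPsProximity2004} and the arithmetization technique of~\cite{babaiNondeterministicExponentialTime1991a}); what I sketch below is how the two clauses of \Cref{thm:classicalsoundnessPCPP} are extracted from that machinery. The starting point is a Cook--Levin style reduction: the statement ``there exist $a,b \in \{0,1\}^{n^\alpha}$ with $\deciderTM(n,x,y,a,b)=1$ running in at most $n^\alpha$ steps'' is turned into a constant-width constraint satisfaction instance $\Phi_{n,x,y}$ over $\poly(n^\alpha)$ variables, grouped into the \emph{answer bits} $a,b$ and three \emph{tableau blocks} $w_0,w_1,w_2$ encoding the full computation transcript of $\deciderTM$; its clauses are then arithmetized over $\bF_{2^p}$ so that satisfaction of all of $\Phi_{n,x,y}$ is equivalent to the single polynomial $\textbf{g}^{\text{Full}}_{\deciderTM}$ — defined by~\eqref{eq:PCPPpolyconditionalt} from the verifier-computable ``decider polynomial'' $\textbf{g}_{\deciderTM}$ output by $\mathtt{ComputePCPP}_\alpha$ and from the Reed--Muller tables $\overline{\textbf{g}}_a,\overline{\textbf{g}}_b,\overline{\textbf{g}}_{w_0},\overline{\textbf{g}}_{w_1},\overline{\textbf{g}}_{w_2}$ — vanishing identically on the Boolean subcube $\{0,1\}^{m^{\text{PCPP}}}\subseteq\bF_{2^p}^{m^{\text{PCPP}}}$. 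The predicate $\mathtt{ValidatePCPP}$ of \Cref{pseu:validatePCPP} is exactly the conjunction of the pointwise ``formula check'' (the $\textbf{g}^{\text{Full}}_{\deciderTM}$-coordinate of $\textbf{eval}_s(\Pi)$ equals the right-hand side of~\eqref{eq:PCPPpolyconditionalt} at $s$, using the other coordinates of $\textbf{eval}_s(\Pi)$ and the value $\textbf{g}_{\deciderTM}(s)$) and the ``zero-on-subcube check'' (that coordinate equals $\sum_i \textbf{c}_i(s)\,\textbf{zero}(s_i)$, using the $\textbf{c}_i$-coordinates), together with the structural consistency checks on $\overline{\textbf{g}}_a,\overline{\textbf{g}}_b,\ldots$. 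Low-degree testing of each individual table is the separate responsibility of the quantum low-degree subroutines in the surrounding answer-reduction protocol, not of $\mathtt{ValidatePCPP}$.

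For \textbf{completeness}: given $a,b\in\{0,1\}^{n^\alpha}$ with $\deciderTM(n,x,y,a,b)=1$ in $\le n^\alpha$ steps, I would build $\Pi$ explicitly. Take $\overline{\textbf{g}}_a,\overline{\textbf{g}}_b$ to be the Reed--Muller encodings (as in~\eqref{eq:REM}) of the padded strings $\text{enc}_\Gamma(a),\text{enc}_\Gamma(b)$ guaranteed by \Cref{thm:classicalPCPP}; take $\overline{\textbf{g}}_{w_0},\overline{\textbf{g}}_{w_1},\overline{\textbf{g}}_{w_2}$ to be the individual-degree-$\le p$ low-degree extensions of the satisfying assignment to the tableau variables of $\Phi_{n,x,y}$; define $\textbf{g}^{\text{Full}}_{\deciderTM}$ by~\eqref{eq:PCPPpolycondition}. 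Since the assignment satisfies $\Phi_{n,x,y}$, the arithmetization forces $\textbf{g}^{\text{Full}}_{\deciderTM}$ to vanish on $\{0,1\}^{m^{\text{PCPP}}}$, so \Cref{lem:polyonzero} supplies $\textbf{c}_0,\ldots,\textbf{c}_{m^{\text{PCPP}}-1}\in\Idpoly(p,m^{\text{PCPP}},p)$ with $\textbf{g}^{\text{Full}}_{\deciderTM}(x)=\sum_i \textbf{c}_i(x)\,\textbf{zero}(x_i)$ as a polynomial identity. Then both identities checked by $\mathtt{ValidatePCPP}$ hold at \emph{every} $s$, and the structural consistency checks hold by construction, so $\mathtt{ValidatePCPP}(\textbf{g}_{\deciderTM},m^{\text{ans}},g,p,s,\textbf{eval}_s(\Pi))=1$ for all $s$.

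For \textbf{soundness}, the main obstacle: suppose a low-degree proof $\Pi$ passes with probability $>\tfrac12$ over uniform $s$. The key step is to promote the two pointwise checks to polynomial identities. The ``formula check'' asserts that two polynomials in $s$ of individual degree $O(p)$ over $m^{\text{PCPP}}$ variables agree on $>\tfrac12$ of $\bF_{2^p}^{m^{\text{PCPP}}}$; by \Cref{lem:Schwartz_Zipple} two distinct such polynomials collide with probability $\le \frac{O(p\,m^{\text{PCPP}})}{2^p}$, and the parameter choices in \Cref{thm:classicalPCPP} (the inequalities $\frac{(2+5p)m^{\text{ans}}}{2^p}<\tfrac12$ and $\frac{p\,m^{\text{PCPP}}}{2^p}$ tiny) make this bound $<\tfrac12$, so the identity~\eqref{eq:PCPPpolyconditionalt} holds as a polynomial identity; likewise the ``zero-on-subcube check'' forces $\textbf{g}^{\text{Full}}_{\deciderTM}(x)=\sum_i \textbf{c}_i(x)\,\textbf{zero}(x_i)$, whence $\textbf{g}^{\text{Full}}_{\deciderTM}$ vanishes on all of $\{0,1\}^{m^{\text{PCPP}}}$. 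Unwinding the arithmetization, this vanishing is — by the design of $\textbf{g}_{\deciderTM}$ and the product structure~\eqref{eq:PCPPpolycondition} — equivalent to the restriction of $(\overline{\textbf{g}}_a,\overline{\textbf{g}}_b,\overline{\textbf{g}}_{w_0},\overline{\textbf{g}}_{w_1},\overline{\textbf{g}}_{w_2})$ to Boolean inputs being a satisfying assignment of $\Phi_{n,x,y}$. In particular the clauses of $\Phi_{n,x,y}$ that read off the answer tapes pin $\overline{\textbf{g}}_a,\overline{\textbf{g}}_b$ to take Boolean values on $\{0,1\}^{m^{\text{ans}}}$, so $\overline{\textbf{g}}_a=\text{enc}_\Gamma(a)$'s Reed--Muller encoding and $\overline{\textbf{g}}_b=\text{enc}_\Gamma(b)$'s for some $a,b\in\{0,1\}^{n^\alpha}$, while the transition and acceptance clauses certify $\deciderTM(n,x,y,a,b)=1$ with $\TIME_{\deciderTM}(n,x,y,a,b)\le n^\alpha$. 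The delicate bookkeeping I expect to cost the most — verifying that the chosen $(m^{\text{ans}},m^{\text{PCPP}},g,p)$ satisfy all four inequalities of \Cref{thm:classicalPCPP} with slack, and that the padded encoding $\text{enc}_\Gamma$ is compatible with the Cook--Levin clause set so that the Reed--Muller-codeword conclusion for $\overline{\textbf{g}}_a,\overline{\textbf{g}}_b$ genuinely falls out — is offset by the fact that the conclusion is an \emph{exact} polynomial identity, so no error accumulates and the $1/2$ completeness--soundness gap is preserved verbatim.
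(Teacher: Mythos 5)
The paper itself gives no proof of \Cref{thm:classicalsoundnessPCPP}: it cites it verbatim as Theorem~10.25 of~\cite{jiMIPRE2022a} and follows the statement only with a paragraph of informal intuition. So your proposal is a reconstruction of the external argument rather than a re-derivation of something in this paper, and I will judge it against what is known about Section~10 of~\cite{jiMIPRE2022a}.

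Your overall skeleton is right and matches the approach of~\cite{jiMIPRE2022a}: (i) Cook--Levin/arithmetization turns ``$\exists a,b$ with $\deciderTM(n,x,y,a,b)=1$ in $\le n^{\alpha}$ steps'' into the vanishing of the product polynomial $\textbf{g}^{\text{Full}}_{\deciderTM}$ on the Boolean subcube; (ii) completeness is the explicit construction via \Cref{lem:polyonzero}; (iii) soundness uses \Cref{lem:Schwartz_Zipple} with the parameter inequalities of \Cref{thm:classicalPCPP} to upgrade pass probability $>\tfrac12$ on uniformly random $s$ into two \emph{exact} polynomial identities, from which the witness is extracted. Two places in your write-up, however, misattribute where work gets done and would mislead if used as a template.

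First, you describe $\mathtt{ValidatePCPP}$ as performing, ``together with'' the two pointwise checks, ``structural consistency checks on $\overline{\textbf{g}}_a,\overline{\textbf{g}}_b,\ldots$''. It does not: \Cref{pseu:validatePCPP} contains only the parse step, the formula check, and the zero-on-subcube check. Any cross-coordinate consistency among the coordinates of $\textbf{eval}_s(\Pi)$ is enforced either by the definition of $\textbf{eval}_s$ itself (all coordinates are evaluations of the fixed tuple of polynomials at the same $s$, with $\overline{\textbf{g}}_a$ fed only $s_0$, $\overline{\textbf{g}}_b$ only $s_1$, etc.), or externally by the SLDT machinery in the answer-reduction protocol; the theorem's hypotheses already guarantee a globally defined low-degree $\Pi$, so nothing of this sort needs to be re-proved inside the classical soundness argument. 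Stating these as checks inside $\mathtt{ValidatePCPP}$ would lead you to ``prove'' soundness of a different (stronger) verifier than the one in the statement.

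Second, your last step (``the clauses of $\Phi_{n,x,y}$ that read off the answer tapes pin $\overline{\textbf{g}}_a,\overline{\textbf{g}}_b$ to take Boolean values on $\{0,1\}^{m^{\text{ans}}}$, so they are Reed--Muller codewords'') skips the actual content of the unwinding. Vanishing of $\textbf{g}^{\text{Full}}_{\deciderTM}$ on $\{0,1\}^{m^{\text{PCPP}}}$ says, for each Boolean $s=(s_0,\ldots,s_4,b_0,\ldots,b_4,z)$, that either $\textbf{g}_{\deciderTM}(s)=0$ (the indexed clause is absent) or some factor $\overline{\textbf{g}}_v(s_j)-b_j$ vanishes (the clause is satisfied). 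For this to yield a valid Boolean assignment to $\Phi_{n,x,y}$ one needs the CSP (and therefore $\textbf{g}_{\deciderTM}$) to carry explicit Booleanity clauses of the form $z_i(z_i-1)=0$; without them, a non-Boolean-valued low-degree table could evade every product-vanishing constraint. This is where the ``padded'' encoding $\text{enc}_\Gamma$ and the specific CSP design of~\cite{jiMIPRE2022a} earn their keep, and it is exactly the ``delicate bookkeeping'' you flag at the end --- you are right that it is where the real cost sits, but the flag should be attached to this Booleanity issue (and to certifying that the accepting-tableau clauses force $\TIME_{\deciderTM}\le n^{\alpha}$), not to verifying the numeric inequalities on $(m^{\text{ans}},m^{\text{PCPP}},g,p)$, which are chosen upfront in \Cref{thm:classicalPCPP} precisely so that the Schwartz--Zippel threshold lands strictly below $\tfrac12$.
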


 \begin{algorithm}[H]
	\DontPrintSemicolon
	
	\textbf{Input}: Polynomial $(\textbf{g}_{\deciderTM})$, parameter $m^{\text{ans}},g, p$, PCPP view $s, \Xi$
	
	Compute $m^{\text{PCPP}} = 5\cdot m^{\text{ans}} + 5 +g$.
	
	Parse $\Xi = (u_0, \cdots, u_n, \gamma, \beta_0, \cdots, \beta_{m^{\text{PCPP}}-1})$ where each variable is in $\{0,1\}^{m^{\text{PCPP}}}$, return 0 if this cannot be done.
	
	Parse $s =(s_0, \cdots, s_4, b_0, \cdots, b_4, z)$, where for $i \in [5]$, $s_i \in \{0,1\}^{m^{\text{ans}}}$,  $b_i \in \{0,1\}$ and $z \in \{0,1\}^s$, return 0 if this cannot be done.
	
	If $\gamma \neq \textbf{g}_{\deciderTM}(s) \cdot (u_1 - b_0) \cdots (u_4 - b_4)$, return 0.
	
	If $\gamma \neq \sum_{i \in[m^{PCPP}]} \beta_i \cdot \textbf{zero}(s)$, return 0
	
	Return 1 if all the clause above fails. 
	
	\caption{$\mathtt{ValidatePCPP}(\textbf{g}_{\deciderTM}, m^{\text{ans}},g, p, s, \Xi)$, the validation algorithm for the PCPP procedure.}
	\label{pseu:validatePCPP}
\end{algorithm}

In the above theorem, the completeness clause follows directly from~\Cref{thm:classicalPCPP}. However, the soundness clause is stated differently from the quantum soundness used in this paper because the definition of the PCP differs from that of the classical $\MIP$ (see~\cite{aroraProofVerificationHardness1998}). The soundness theorem essentially states the following: Imagine that the verifier can give a point $s$ to one of the provers, Alice, and ask her to evaluate $s$ on all the polynomials given in~\eqref{eq:PCPPallpolys} and output the answer. Alice has to evaluate the polynomials honestly, but she does not necessarily have to generate the given polynomials according to the procedure given by the PCPP. She can instead fix \emph{any} low-individual degree polynomials and evaluate the point $s$ on them. What the above theorem essentially states is that, unless $\TIME_{\deciderTM}$ can be satisfied (by some $a$, $b$), Alice cannot construct \textbf{any} sets of polynomials that can cause the verifier to accept using the procedure given in~\Cref{pseu:validatePCPP} with high probability. 


\subsection{The answer reduction transformation } \label{sec:Answerreductionproof}

Using the PCPP procedure given in the last subsection, we give a proof of~\Cref{prop:AnswerReduction} below. Since the transformation is essentially the same as the transformation given in~\cite[Section 10]{jiMIPRE2022a} (with a few lemmas related to the tensor product model being swapped for lemmas related to the commuting operator model instead).

\begin{figure}[!htbp]
	\centering
	\scriptsize
	\begin{gamespec}
		\setlength{\tabcolsep}{1em}
		\begin{tabularx}{\textwidth}{ l   l  l   X   }
			\toprule
			Orac Q  & SLDT Q  & Question content & Answer format \\
							\hline
								\rule{0pt}{3ex}   
			\multirow{3}{*}{(Prover, A)} & (Point) & $x \in \cX$,  $s^A \in \bF_{2^p}^{m^{\text{ans}}}$ & $u_A =\bF_q$\\
			& (Dline) &  $x \in \cX$,  $(j, s^{A}_{D}) \in [m^{\text{ans}}] \times \bF_{2^p}^{m^{\text{ans}}}$ &  $\textbf{f}_A^{\text{D}}: \bF_{2^p} \rightarrow  \bF_{2^p}$\\
			 & (Aline) &  $x \in \cX$,  $(j,v, s^{A}_{A}) \in  [m^{\text{ans}}] \times \bF_{2^p}^{2m^{\text{ans}}}$ & $\textbf{f}_A^{\text{A}}: \bF_{2^p} \rightarrow  \bF_{2^p}$\\
			 \hline
			 	 \rule{0pt}{3ex}   
			\multirow{3}{*}{(Prover, B)} & (Point)& $y \in \cX$,  $s^B \in \bF_{2^p}^{m^{\text{ans}}}$  & $u_B =\bF_q$\\
			 & (Dline) &  $y \in \cX$,  $(j,  s^{B}_{D}) \in [m^{\text{ans}}] \times \bF_{2^p}^{m^{\text{ans}}}$ &  $\textbf{f}_B^{\text{D}}: \bF_{2^p} \rightarrow  \bF_{2^p}$ \\
			 & (Aline) &  $y \in \cX$,  $(j,v, s^{B}_A) \in [m^{\text{ans}}] \times \bF_{2^p}^{2m^{\text{ans}}}$ & $\textbf{f}_B^{\text{D}}: \bF_{2^p} \rightarrow  \bF_{2^p}$ \\
			 \hline
			 \rule{0pt}{3ex}   
			  & (Point) & $(x,y) \in \cX^2$, $s^{w_o} \in \bF_{2^p}^{m^{\text{ans}}}$ & $u_{w_o} \in \bF_{2^p}$\\
			 ($\text{Ora}_o$) & (Dline)  &   $(x,y) \in \cX^2$,  $(j, s_{\text{D}}^{w_o}) \in [m^{\text{ans}}] \times \bF_{2^p}^{m}$ & $\textbf{f}_{w_o}^{\text{D}}: \bF_{2^p} \rightarrow \bF_{2^p}$\\
			 \quad $o \in [3]$ & (Aline)   &  $(x,y) \in \cX^2$,  $(j,v, s_{\text{A}}^{w_o}) \in [m^{\text{ans}}] \times \bF_{2^p}^{2m}$ & $\textbf{f}_{w_o}^{\text{D}}: \bF_{2^p} \rightarrow \bF_{2^p}$\\
			 \hline 
			 \rule{0pt}{3ex}   
			  \multirow{5}{*}{(\text{Ora})}& (Point) & $(x,y) \in \cX^2$,  $s \in \bF_{2^p}^{m}$ & $(u_0, \cdots, u_4, \gamma, \beta_0 \cdots, \beta_{m-1}) \in \bF_{2^p}$\\
			& (Dline) &   $(x,y) \in \cX^2$,  $(j, s^{\text{D}}) \in [m] \times \bF_{2^p}^{m}$ & $(\textbf{f}_{U_0}^{\text{D}}, \cdots, \textbf{f}_{U_4}^{\text{D}}, \textbf{f}_{\Gamma}^{\text{D}}, \textbf{f}_{B_0}^{\text{D}} \cdots,\textbf{f}_{B_{m-1}}^{\text{D}} ) $\\
			&  &   & ${\textbf{f}_{v}^{\text{D}}:\bF_{2^p} \rightarrow \bF_{2^p}}$\\
			 & (Aline)  &  $(x,y) \in \cX^2$,  $(j,v, s^{\text{A}}) \in [m] \times \bF_{2^p}^{2m}$ & $(\textbf{f}_{U_0}^{\text{A}}, \cdots, \textbf{f}_{U_4}^{\text{A}}, \textbf{f}_{\Gamma}^{\text{A}}, \textbf{f}_{B_0}^{\text{A}} \cdots,\textbf{f}_{B_{m-1}}^{\text{A}} )$ \\
			 &  &   & ${\textbf{f}_{v}^{\text{D}}:\bF_{2^p} \rightarrow \bF_{2^p}}$\\
			\bottomrule
			\multicolumn{4}{c}{Figure: Q and A format for the answer reduction protocol $\texttt{AnswerReduction}_{\alpha}$ applied to a CL verifer $(\samplerTM, \deciderTM)$. }
		\end{tabularx}
		\footnotesize
		\begin{center}
			\textbf{Sampling procedure}
		\end{center}
		If at any point during the sampling procedure given below, the following happens:
		\begin{itemize}
			\item There is an error in running the subroutine $\samplerTM$ or $\mathtt{PCPParameter}_{\alpha}(n)$. 
			\item Or if the runtime exceed $\log^{\gamma^{\Answerred}}(n)$ (for some $\gamma^{\Answerred}$ picked later in the proof of~\Cref{prop:AnswerReduction}). 
		\end{itemize}
		Then halt the sampling procedure and instead sample a question pair from $\cG_{\text{reject}}$, the rejecting game from~\Cref{def:syncrejgame}. 
		\begin{enumerate}
			\item Given the input $n$, sample a question pair $(x,y) \sim \mu_n$ by using $\samplerTM(n, \cdot)$. Then compute the parameter $(m^{\text{ans}}, m^{\text{PCPP}}, g,p) = \mathtt{PCPParameter}_{\alpha}(n)$ and set $m = m^{\text{PCPP}}$. 
			\item Uniformly $(n_{O,0}), n_{O, 1} \in \{ \text{(Prover, A)}, \text{(Prover, B)},  \text{(Ora)} \}^2$, if $\text{(Ora)}$ is sampled for $n_{O,i}$ for $i \in \{0,1\}$, uniformly sample another element $\{\text{(Ora)}_0, \text{(Ora)}_1,\text{(Ora)}_2, \text{(Ora)}\}$ and set $n_{O,i}$ to be that element. Sample $n_{L,0}, n_{L, 1} \sim \{(\text{Point}, \text{DLine}, \text{ALine})\}$ and sample $s \sim \{0,1\}^m$.
			\item Parse $s =(s_0, \cdots, s_4, b_0, \cdots, b_4, z)$, where for $i \in [5]$, $s_i \in \{0,1\}^{m^{\text{ans}}}$,  $b_i \in \{0,1\}$ and $z \in \{0,1\}^s$. 
			\item For $i \in \{0,1\}$ corresponds to the two provers
			\begin{enumerate}
				\item If $n_{O,i} =  \text{(Prover, A)}$, sample a question for the $(p, m^{\text{ans}},p)$-quantum low-individual degree test according to the label of $n_{L,i}$, where the point question uses $s_A = s_0$. Send $(\text{(Prover, A)}, x)$, as well as the question label for the quantum low-individual degree test to prover $i$.
				\item If $n_{O,i} =  \text{(Prover, B)}$, sample a question for the $(p, m^{\text{ans}},p)$-quantum low-individual degree test according to the label of $n_{L,i}$, where the point question uses $s_B = s_1$. Send $(\text{(Prover, B)}, y)$, as well as the question label for the quantum low-individual degree test to prover $i$.
				\item If $n_{O,i} =  \text{(Ora)}_o$ for $o \in [3]$, sample a question for the $(p, m^{\text{ans}},p)$-quantum low-individual degree test according to the label of $n_{L,i}$, where the point question uses $s_{w_o} = s_{o+2}$. Send $(\text{(Ora)}, (x,y))$, as well as the question label for the quantum low-individual degree test to prover $i$.
				\item Otherwise, sample a question for the $(p, m,p, 6+m)$-simultaneous quantum low-individual degree test according to the label of $n_{L,i}$, where the point question uses $s$. Send $(\text{(Ora)}, (x,y))$, as well as the question label for the simultaneous quantum low-individual degree test to prover $i$.
			\end{enumerate}
		\end{enumerate}
		
	\end{gamespec}
	\caption{\footnotesize The description for $\samplerTM^{\Answerred}$ for the answer reduction protocol $\texttt{AnswerReduction}_{\alpha}$ applied to a CL verifer $(\samplerTM, \deciderTM)$.  }
	\label{fig:Answerredsample}
\end{figure}

\begin{figure}[!htbp]
	\centering
	\small
		\begin{gamespec}

			\begin{center}
				\textbf{Verification procedure}
			\end{center}
			If at any point during the sampling procedure given below, the following happens:
		\begin{itemize}
			\item There is an error in running the subroutine $\deciderTM$, $\mathtt{PCPParameter}_{\alpha}$ or $\mathtt{ComputePCPP}_{\alpha}$.
			\item Or if the runtime exceed $\log^{\gamma^{\Answerred}}(n)$ (for some $\gamma^{\Answerred}$ picked later in the proof of~\Cref{prop:AnswerReduction}). 
		\end{itemize}
		Halt the decision procedure and immediately outputs $0$. 
		
	\textbf{Preprocessing steps. }
	\begin{enumerate}
		\item Compute the description of the polynomial $\textbf{g}_{\deciderTM} = \mathtt{ComputePCPP}_{\alpha}(\deciderTM, n,x,y)$ from~\Cref{thm:classicalPCPP}. 
		\item Compute the parameter $(m^{\text{ans}}, m^{\text{PCPP}}, g,p) = \mathtt{PCPParameter}_{\alpha}(n)$ and set $m = m^{\text{PCPP}}$. 
	\end{enumerate}
	The verifier then perform the following series of check in sequences: \\
	\textbf{Low-individual degree check. }
	\begin{enumerate}
		\item If $n_{O,0} = n_{O,1} = \text{(Prover, P)}$ for $\text{P} \in \{\text{A}, \text{B}\}$ or $n_{O,0} = n_{O,1} = (\text{Ora}_o)$ for $o \in [3]$, return $0$ if they loses on the $(p, m^{\text{ans}},p)$-quantum low-individual degree test. 
		\item If $n_{O,0} = n_{O,1} = \text{(Ora)}$, return $0$ if they loses on the $(p, m,p, 6+m)$-simultaneous quantum low-individual degree test instances. 
	\end{enumerate}
	\textbf{Prover consistency check. } For $i \in \{0,1\}$
	\begin{enumerate}
			\item If $(n_{O,i}, n_{L,i}) =(n_{O,1-i}, n_{L,1-i}) $, return $0$ if the answer given by both provers are inconsistent with each other. 
		\item If $(n_{O,i}, n_{L,i})= (\text{(Prover, A)},\text{(Point)}) $ and  $(n_{O,1-i}, n_{L,1-i})= (\text{(Ora)},\text{(Point)}) $, return $0$ if $u_A \neq u_0$.
		\item  If $(n_{O,i}, n_{L,i})= (\text{(Prover, B)},\text{(Point)}) $ and  $(n_{O,1-i}, n_{L,1-i})= (\text{(Ora)},\text{(Point)}) $, return $0$ if $u_B \neq u_1$.
		\item  If $(n_{O,i}, n_{L,i})= (\text{Ora}_o,\text{(Point)})$ for $o \in [3]$ and  $(n_{O,1-i}, n_{L,1-i})= (\text{(Ora)},\text{(Point)}) $, return $0$ if $u_{w_o} \neq u_{o+2}$.
	\end{enumerate}
	\textbf{PCPP proof check. } For $i \in \{0,1\}$
	\begin{enumerate}
		\item If $(n_{O,i}, n_{L,i})= (\text{(Ora)},\text{(Point)}) $, write $s =(s_0, \cdots, s_4, b_0, \cdots, b_4, z)$. Return $0$ if either
		\begin{enumerate}
			\item $\gamma \neq \textbf{g}_{\deciderTM}(s) \cdot (u_1 - b_0) \cdots (u_4 - b_4)$. 
			\item $\gamma \neq \sum_{i \in[m^{PCPP}]} \beta_i \cdot \textbf{zero}(s_i)$
		\end{enumerate}
	\end{enumerate}
	Return $1$ if none of the test above fails.  
	\vspace{1em}
			
	\end{gamespec}
		\caption{The description for $\deciderTM^{\Answerred}$ for the answer reduction protocol $\texttt{AnswerReduction}_{\alpha}$ applied to a CL verifer $(\samplerTM, \deciderTM)$. }
	\label{fig:Answerredver}
\end{figure}
\begin{proof} 

Fix the constants $\alpha,k \in \bN$. We define the algorithm $\texttt{AnswerReduction}_{\alpha, k}$ as follows: Given a pair of Turing machine $(\samplerTM, \deciderTM)$, we specify the sampling procedure $\samplerTM^{\Answerred}$ in~\Cref{fig:Answerredsample} and the decision procedure $\deciderTM^{\Answerred}$ by~\Cref{fig:Answerredver}. We denote $\cG_n$ to be the $n$-th game generated by the original game sequence (from the input $(\samplerTM, \deciderTM)$) and $\cG_n^{\Answerred} = (\cX_n^{\Answerred}, \cA_n^{\Answerred}, \mu_n^{\Answerred}, D_n^{\Answerred})$ be the $n$-th game from the answer reduction transformation.

The ``Runtime"  clause of~\Cref{prop:AnswerReduction} follows from the description of~\Cref{fig:Answerredsample} and~\Cref{fig:Answerredver}, where computing the description of $\langle\samplerTM^{\Answerred} \rangle$ and  $\langle\deciderTM^{\Answerred} \rangle$ does not involve computing specific instances of $(\samplerTM, \deciderTM)$ (and hence have no dependency on $n$). The ``Dependency for $\samplerTM^{\Answerred}$" follows from the description of~\Cref{fig:Answerredsample} (i.e. it only depends on $\samplerTM$ and the constant $\alpha$). 

Now, assume the input for $\texttt{AnswerReduction}_{\alpha, k}$ is a game sequence $\verifiersequence = (\samplerTM, \deciderTM)$ with the property given in the ``furthermore" part of~\Cref{prop:AnswerReduction}. To show the ``complexity bound for the output" part, we analyze the complexity of $\samplerTM^{\Answerred}$ and $\deciderTM^{\Answerred}$ without the ``runtime exceeding $\log^{\gamma^{\Answerred}}(n)$ clause from~\Cref{fig:Answerredsample} and~\Cref{fig:Answerredver}. We start with $\samplerTM^{\Answerred}$ where, by analyzing each step of the description given in~\Cref{fig:Answerredsample}, we incur the following:
\begin{enumerate}
	\item Sampling the question pair $(x,y)$ from the Turing machine $\samplerTM$ takes time $\log^{\alpha}(n)$ by definition. By~\Cref{thm:classicalPCPP} computing $\mathtt{PCPParameter}_{\alpha}(n)$ takes time $O(\poly(\alpha, \log(n)))$. This implies that step 1 takes step $O(\poly(\alpha, \log^{\alpha}(n)))$. 
	\item Sampling the ``oracularization" label takes constant time. By the parameter choice guaranteed by $m^{\text{ans}}$ and $\textbf{g}$, $m = O(\poly(\alpha, \log(n)))$ which means sampling $s$ takes time $O(\poly(\alpha, \log(n)))$, giving step 2 of a runtime of $O(\poly(\alpha, \log^{\alpha}(n)))$ steps. 
	\item Step 3 takes time $O(\poly(\alpha, \log^{\alpha}(n)))$ time due to the size of $m$.
	\item Sampling the question label for the quantum low-individual degree test/simultaneous quantum low-individual degree test takes time $O(m) = O(\poly(\alpha, \log^{\alpha}(n)))$ time. 
\end{enumerate}
Hence, since $\alpha$ is chosen to be a constant in the beginning, the runtime for $\samplerTM^{\Answerred}$, assuming it does not stop by the termination clause, takes time $O(\polylog(n))$ time. Now we analyze the runtime for $\deciderTM^{\Answerred}$. By studying each line of the description given in~\Cref{fig:Answerredver}, we incur the following:
\begin{itemize}
	\item\textbf{Preprocessing steps. } Computing $\mathtt{ComputePCPP}_{\alpha}(\deciderTM, n,x,y)$ takes time $O(\poly(\alpha, \log(n)))$ and computing $\mathtt{PCPParameter}_{\alpha}(n)$ takes time $O(\poly(\alpha, \log(n), |\langle \deciderTM \rangle |)) = O(\poly(\alpha, \log(n), k))$. 
	\item \textbf{Low-individual degree check. } Verifying the quantum low-individual degree test/simultaneous quantum low-individual degree test in this instance takes time $O(\poly(\alpha, \log(n)))$ by the choices of parameters and~\Cref{lem:compfinitefield}. 
	\item \textbf{Prover consistency check. } Comparing equality of outputs takes time $O(\poly(\alpha, \log(n)))$ again by the choices of parameters. 
	\item \textbf{PCPP proof check. } Evaluating the low-individual degree polynomial and comparing the corresponding output takes time $O(\poly(\alpha, \log(n)))$ again by the choices of parameters. 
\end{itemize}
Again, since $\alpha$ and $k$ are chosen to be constants, the runtime for $\deciderTM^{\Answerred}$, assuming it does not stop by the termination clause, takes time $O(\polylog(n))$. Hence, pick $\gamma^{\Answerred} \in \bN$ to be the minimum constant such that the following holds: 
\begin{equation}
	\TIME_{\samplerTM^{\Answerred}} (n) \leq O(\log^{\gamma^{\Answerred}}(n)), \, \qquad \, \TIME_{\deciderTM^{\Answerred}} (n) \leq O(\log^{\gamma^{\Answerred}}(n)). 
\end{equation}
and let $C^{\text{trivial}}$ be the constant such that $\cG^{\text{reject}}$ can be both sampled and decided in time $C^{\text{trivial}}$. Pick $n_0^{\text{AR}} \in \bN$ be the smallest integer such that 
\begin{equation}
		\TIME_{\samplerTM^{\Answerred}}(n_0^{\text{AR}}) \leq \log^{\gamma^{\Answerred}}(n_0^{\text{AR}}), \, \qquad \, \TIME_{\deciderTM^{\Answerred}} (n_0^{\text{AR}}) \leq O(\log^{\gamma^{\Answerred}}(n_0^{\text{AR}})), 
\end{equation}
and the ``complexity bound" clause follows from the definition of the big O notation. 

For the ``level clause", we see that the input distribution $\mu_n^{\Answerred}$ is essentially an instance of the oracularization transformation of $\cG_n$ and an instance of the quantum low-individual degree test/simultaneous quantum low-individual degree test depending on the oracularization label. By combining the ``sample complexity" clause on both~\Cref{lem:oratransformation} and~\Cref{lem:simuQLID}, and using the series composition of CL functions given by~\Cref{defn:seriescomposCLfn} (in this case, only the first label for oracularization is being used to control which of the low-individual degree/simultaneous quantum low-individual degree test is being performed), we obtain a $\max\{ k+2, 5+1\}$-CL distribution that samples $\mu_n^{\Answerred}$. This shows the `level clause" of the proposition. 

For the ``completeness clause", we describe the perfect oracularizable strategy for $\cG^{\Answerred}$ as follows:
\begin{enumerate}
	\item Upon receiving the questions, the provers first perform the perfect strategy for $\cG_n^{\text{Ora}}$, the oracularized version of $\cG_n$, guaranteed by~\Cref{lem:oratransformation} using the oracularization label and the question pair for the original game as the question label for $\cG_n^{\text{Ora}}$.
	\item For $P \in \{A,B\}$, if the oracularization label is ``(Label P)", the prover computes the following:
	\begin{enumerate}
		\item Generate the corresponding low-individual degree polynomial $\textbf{g}_{p}$ according to~\Cref{thm:classicalPCPP}. 
		\item Perform the quantum low-individual degree test using the SLDT Q label and the low-degree test question content as the question label, and using $\textbf{g}_{p}$ as the ``shared polynomial" between the two provers. We remark that this step is classical according to the procedure described in~\Cref{sec:lowinddegtest}. 
	\end{enumerate}
	\item For $o \in [3]$, if the oracularization label is ``$\text{Ora}_o$", the prover computes the following:
	\begin{enumerate}
		\item Compute the polynomials described in~\Cref{thm:classicalPCPP} using the question labels $(x,y)$, and the answer obtained $(a,b)$. 
		\item Perform the quantum low-individual degree test using the SLDT Q label and the low-degree test question content as the question label, and using $\textbf{g}_{w_o}$ as the ``shared polynomial" between the provers.
	\end{enumerate}
	\item If the oracularization label is ``$\text{Ora}$", the prover computes the following:
	\begin{enumerate}
		\item Compute the polynomials described in~\Cref{thm:classicalPCPP} using the question labels $(x,y)$, and the answer obtained $(a,b)$, also computes the polynomials described in~\Cref{thm:classicalPCPP}, as well as the ``zero polynomials" $\textbf{c}_i$, $i \in [m]$ generated by~\Cref{lem:polyonzero} applied to the polynomial $\textbf{g}_{\deciderTM}^{\text{Full}}$. 
		\item Perform the simultaneous quantum low-individual degree test using the SLDT Q label and the low-degree test question content as the question label, and using all the polynomials computed in the previous step as the ``shared polynomials" between the provers. 
	\end{enumerate}
\end{enumerate}
From a measurement perspective, this is essentially just an instance of $\cG_n^{\text{Ora}}$ for the provers. Hence, the completeness clause follows from the completeness clause of~\Cref{lem:oratransformation}. 

The remainder of the proof proceeds similarly to~\cite[Section 10.7]{jiMIPRE2022a}, except we use the notation from~\Cref{fig:tracialemdtofd} to translate the proof from the finite-dimensional setting to the tracially embeddable strategies setting. The proof is provided in~\Cref{sec:ARsoundness} for completeness. 

\end{proof}

\section{Parallel repetition} \label{sec:parallelrep}
In this section, we give a proof for~\Cref{prop:Parallelrepetition} by showing that both the anchoring transformation and the parallel repetitions map a $k$-th level synchronous CL verifier to a $k+1$-th level synchronous CL verifier. Recall from~\Cref{sec:prenonlocalgames}, given a non-local $\cG$ and $r \in \bN$, we use $\cG^{\otimes r}$ to denote the r-fold parallel repetition of $\cG$. We define the anchoring transformation for a game as follows:
\begin{definition}[Anchoring transformation] \label{def:anchortransformation}
	Given a game  $\cG = (\cX, \cA, \mu, D)$, we define anchored transformation $\text{\gls{anchorgame}} = (\cX_{\bot} = \cX \cup \{\bot\}, \cA_{\bot} =\cA \cup \{\bot\}, \mu^{\bot}, D^{\bot})$ for the game $\cG$, where the question distribution $\mu_{\bot}$ is defined as 
	\begin{equation*}
		\mu^{\bot}(x,y) = \begin{cases}
			\frac{1}{4} \mu(x,y) & \text{if } (x, y) \in \cX^2 \\
			\frac{1}{4} \mu_x(x) & \text{if } y = \bot \\
			\frac{1}{4} \mu_y(y) & \text{if } x = \bot \\
			\frac{1}{4} & \text{if } x = \bot, y = \bot \\
		\end{cases},
	\end{equation*}
	where recall $\mu_x$ and $\mu_y$ are the marginal distribution for $\mu$ on both provers' sides respectively. The evaluation $D^{\bot}$ is defined as 
\begin{equation*}
	D^{\bot}(x,y,a,b) = \begin{cases}
		D(x,y,a,b) & \text{if } (x, y) \in \cX^2 \\
		1 & \text{if } (x = a = \bot, y \in \cX) \lor (y = b = \bot, x \in \cX) \lor (x = y = a = b = \bot)  \\
		0 & \text{otherwise}
	\end{cases}.
\end{equation*} 
\end{definition}
We remark that the above definition is a modification for the anchoring transformation from~\cite{bavarianAnchoredParallelRepetition2021} with $\alpha$ taken to be $\frac{1}{2}$, and instead of giving the provers a free win, we now expect that the provers both need to output an anchoring symbol in order to win the game. This ensures that every synchronous game remains synchronous after the transformation. As shown in~\cite{jiMIPRE2022a}, this does not change the strong parallel repetition guarantee by~\cite{bavarianAnchoredParallelRepetition2021}. We see that the anchoring transformation only changes the value of a game by a constant factor through the following lemma. 

\begin{lemma}[Preservation of value of the anchoring transformation]
	Let $t \in \{*, co\}$, and let $\cG$ be a non-local game. If $\omega^{t} (\cG) \geq 1- \epsilon$ for some $\eps \in [0,1]$, then $\omega^{t} (\Ganchor) = 1- \frac{1 - \epsilon}{4}$.  
\end{lemma}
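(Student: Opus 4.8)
The plan is to prove the two directions of the (in)equality separately, using the explicit form of the question distribution $\mu^{\bot}$ from \Cref{def:anchortransformation}.

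First, for the lower bound $\omega^t(\Ganchor) \geq 1 - \frac{1-\eps}{4}$: given a strategy $\strategy$ for $\cG$ achieving value $\geq 1 - \eps$ in model $t$, I would lift it to a strategy $\strategy^{\bot}$ for $\Ganchor$ as follows. On question $\bot$, the prover deterministically answers $\bot$ (using a measurement with a single outcome $\bot$, i.e.\ the identity operator); on a question $x \in \cX$, the prover plays exactly as in $\strategy$. Then I would compute $\omega(\Ganchor, \strategy^{\bot})$ by splitting according to the four cases in $\mu^{\bot}$: the case $(x,y) \in \cX^2$ contributes $\frac14 \omega(\cG, \strategy) \geq \frac14(1-\eps)$; the cases $y = \bot$, $x = \bot$, and $x = y = \bot$ each contribute their full weight $\frac14$ since the relevant prover answers $\bot$ and $D^{\bot}$ accepts by definition. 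Summing gives $\omega(\Ganchor, \strategy^{\bot}) \geq \frac14(1-\eps) + \frac34 = 1 - \frac{1-\eps}{4}$. Taking the supremum over $\strategy$ yields $\omega^t(\Ganchor) \geq 1 - \frac{1-\eps}{4}$ (with a limiting argument if $\omega^t(\cG)$ is not attained, since $C_q$ need not be closed, though $C_{qc}$ is).

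Second, for the upper bound $\omega^t(\Ganchor) \leq 1 - \frac{1-\eps}{4}$: I would take any strategy $\strategy^{\bot} = (\cH, \{A^x_a\}, \{B^y_b\}, \ket{\psi})$ for $\Ganchor$ and restrict its measurement operators to the sub-outcome-set $\cA$ on sub-question-set $\cX$, renormalizing to POVMs if needed (or better, keep the $\bot$ outcome and just observe that it contributes nothing to $D$ on $\cX^2$ pairs). This induces a strategy $\strategy$ on $\cG$ with $\omega(\cG, \strategy) = \sum_{(x,y)\in\cX^2} \mu(x,y) \sum_{(a,b)} D(x,y,a,b)\, C_{x,y,a,b}$. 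Since $\omega^t(\cG) \leq 1 - \eps$ — wait, the hypothesis is only $\omega^t(\cG) \geq 1 - \eps$, not an upper bound, so the displayed conclusion with ``$=$'' must actually be read as an equality that follows by combining the lower bound just proved with a matching upper bound that uses $\omega^t(\cG) = 1 - \eps$ exactly, or the lemma intends $\omega^t(\cG) = 1-\eps$. I would therefore state the argument assuming $\omega^t(\cG) = 1 - \eps$ (which is the natural reading and matches the ``$=$'' in the conclusion): the contribution of the $(x,y)\in\cX^2$ branch is at most $\frac14 \omega^t(\cG) = \frac14(1-\eps)$, and the other three branches contribute at most $\frac34$ in total, so $\omega(\Ganchor, \strategy^{\bot}) \leq \frac14(1-\eps) + \frac34 = 1 - \frac{1-\eps}{4}$; taking the supremum gives $\omega^t(\Ganchor) \leq 1 - \frac{1-\eps}{4}$.

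The main obstacle is essentially bookkeeping rather than conceptual: one has to be careful with the measurement-operator manipulations in the commuting operator model (adding a dummy $\bot$ outcome and ensuring the commutation relations $[A^x_a, B^y_b] = 0$ are preserved, which they trivially are since we only add identity-type operators), and one has to handle the non-closedness of $C_q$ in the $t = *$ case by passing to sups/limits. I would also double-check the intended reading of the statement (hypothesis $\omega^t(\cG) \geq 1 - \eps$ versus the ``$=$'' conclusion) and, if the paper really means the one-sided hypothesis, weaken the conclusion to ``$\geq 1 - \frac{1-\eps}{4}$'' which is all that the parallel repetition application in \Cref{prop:Parallelrepetition} needs; either way the computation above delivers it.
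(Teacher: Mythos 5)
Your proof is correct and is the same approach the paper uses (the paper gives only a one-line sketch -- lift the $\cG$-strategy by answering $\bot$ on the anchor question -- and leaves the matching upper bound, which you correctly supply by restricting any $\Ganchor$-strategy to $\cX$-questions, noting that $\bot$-answers never help on $\cX^2$ pairs and so the restriction's value is at most $\omega^t(\cG)$). You are also right that the statement as written mixes a one-sided hypothesis with an equality conclusion; the paper's own remark immediately after the lemma (``if $\omega^t(\cG)=\eps$, then $\omega^t(\cG_\bot)=\frac34+\frac{\eps}{4}$'') confirms the intended reading is the exact identity, and, as you note, only the lower-bound direction is actually used downstream in the parallel-repetition proposition.
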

The above lemma follows trivially by formulating a strategy that answers the anchoring symbol $\dummy$ when $\dummy$ is given as the question. It is not hard to see that, for $t \in \{*,co\}$, if there exists a perfect oracularizable synchronous strategy for $\cG$ in model $t$, then there exists a perfect oracularizable synchronous strategy for $\cG_{\bot}$ in model $t$ by adding the measurement operator $A^{\bot}_{0} = \cI_{\alicealg}$ to the existing perfect strategy for $\cG$. Furthermore, if $\omega^{t}(\cG) = \eps$, then $\omega^{t}(\cG_{\bot}) = \frac{3}{4} + \frac{\eps}{4}$. We recall the parallel repetition theorem for the anchoring transformation.
\begin{theorem}[Anchored Parallel repetition theorem] \label{thm:anchoringparallelrep}
	There exists a universal constant $c^{\text{para}}$ such that, for any model $t \in \{*,co\}$, non-local games $\cG = (\cX, \cA, \mu, D)$ with $\omega^{t} \leq 1 - \eps$, and $r \in \bN$. Then
	\begin{equation} \label{eq:mainparallel}
		\omega^{t}(\cG_{\bot}^{\otimes r}) \leq \frac{16}{\eps} \cdot \exp{\left(\frac{- c^{\text{para}} \eps^{17} r }{\log(|\cA | + 1)}\right)}.
	\end{equation}
\end{theorem}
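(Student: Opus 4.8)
\textbf{Proof plan for the anchored parallel repetition theorem (\Cref{thm:anchoringparallelrep}).}

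The plan is to adapt the information-theoretic parallel repetition proof of Bavarian, Vidick and Yuen~\cite{bavarianAnchoredParallelRepetition2021} to the commuting operator model, working throughout with tracially embeddable strategies (\Cref{def:Tracialemd}) so that the finite-dimensional toolkit — reduced density operators, relative entropy, Uhlmann's theorem — has a genuine analogue. First I would set up the basic objects: fix a commuting operator strategy $\strategy$ for $\cG_{\bot}^{\otimes r}$ achieving success probability $\val^{t}(\cG_\bot^{\otimes r})$, and by \Cref{thm:tracialembedding} it suffices to bound the value over tracially embeddable strategies, so write $\strategy = (\cL^2(\alicealg,\tau), \sigma\ket{\tau}, \{A^{\vx}_{\va}\}, \{(B^{\vy}_{\vb})^{op}\})$. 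Condition on the event $W_{S}$ that the provers win on a subset $S \subseteq [r]$ of coordinates; the goal, as in the classical and tensor-product proofs, is to show that if $\Pr[W_{[r]}]$ is not already exponentially small then conditioning on winning on a random large subset $S$ lets one extract a single-copy strategy for $\cG$ (via the anchor coordinate, which is where the $\alpha = \tfrac14$ anchoring is used) winning with probability noticeably above $1-\eps$, contradicting $\val^t(\cG) \le 1-\eps$.

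The heart of the argument is the chain of information-theoretic estimates. I would (i) define the post-measurement states $\ket{\psi_{S}}$ obtained after the verifier's questions on coordinates outside $S$ are fixed and the provers have answered, together with the shared randomness register; (ii) prove a commuting-operator analogue of the key lemma that the states on the two provers' ``question + answer'' registers are close, in the appropriate norm, to a product across the anchor — this is exactly where the $x = \bot$ / $y = \bot$ branches of $\mu^{\bot}$ in \Cref{def:anchortransformation} do their work, since querying the anchor on one side decouples that side's register; (iii) bound the sum over coordinates $j$ of the relative entropy (Araki's relative entropy for states on a von Neumann algebra~\cite{arakiRelativeEntropyStates1977}, which \Cref{sec:qinfo} provides the needed properties of) between the conditioned single-coordinate state and the ``honest'' state, using that the total relative entropy telescopes and is controlled by $\log(1/\Pr[W_S])$; and (iv) invoke a commuting-operator Uhlmann/``quantum Raz'' step to convert this small relative entropy into an actual local isometry that turns the conditioned strategy into a bona fide single-copy strategy for $\cG$ with answer set of size $|\cA|+1$. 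Pinsker's inequality converts relative entropy to trace distance, and an averaging argument over $j \in [r]$ and over which subset $S$ we conditioned on yields, for a good coordinate $j$, a single-copy strategy winning $\cG$ with probability $\ge 1-\eps + \Omega(\eps^{c})$ for the relevant constant $c$ — the precise exponent $17$ and the constant $16/\eps$ in~\eqref{eq:mainparallel} come from chasing these estimates, exactly as in~\cite{bavarianAnchoredParallelRepetition2021}; I would not re-derive the numerology but cite it, checking only that each step survives the passage to the infinite-dimensional tracial setting.

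The main obstacle I anticipate is step (iv), the Uhlmann-type rounding: in the tensor-product proof one freely uses that a purification lives in a tensor factor and that two purifications of nearby density matrices are related by a local unitary on an ancilla. In the commuting operator / tracial setting the ``purification'' is the GNS vector $\ket{\tau}$ and the role of ``Bob's side'' is played by the commutant $\alicealg' = \alicealg^{op}$; one must check that the relevant states are vector states implemented by positive elements $\sigma \in \alicealg$ (as in \Cref{def:Tracialemd}), that Araki's relative entropy obeys the monotonicity and chain-rule properties used, and that the ``local isometry'' extracted acts only within $\alicealg$ (resp. $\alicealg'$) so that the resulting single-copy strategy is still a legitimate commuting operator strategy. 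I expect this to go through because the tracial structure gives exactly the ``observable switching trick'' $A\ket{\tau} = A^{op}\ket{\tau}$ and the Hilbert–Schmidt geometry (see \Cref{fig:tracialemdtofd}) that mirror the finite-dimensional manipulations; the detailed development of these information-theoretic analogues is deferred to \Cref{sec:parallelrepappendix}, where the full proof is given. The remaining bookkeeping — that the anchored game $\cG_\bot$ is synchronous when $\cG$ is, and that the anchor symbol only adds one to the answer alphabet — is routine from \Cref{def:anchortransformation}.
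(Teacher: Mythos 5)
Your plan follows the paper's proof essentially verbatim: pass to tracially embeddable strategies via \Cref{thm:tracialembedding}, use Araki relative entropy on the tracial von Neumann algebra, condition on winning a critical subset $C$, exploit dependency-breaking variables coming from the anchor, and round via an Araki--Powers--St{\o}rmer/Uhlmann step producing a unitary in $\alicealg$ (resp.\ $\alicealg' = \alicealg^{op}$) so the extracted single-copy strategy is still a commuting operator strategy --- which is exactly what \Cref{sec:parallelrepappendix} carries out, and the obstacle you single out (Uhlmann in the infinite-dimensional tracial setting) is indeed the nontrivial new ingredient. One small slip: the anchoring in \Cref{def:anchortransformation} corresponds to $\alpha = \tfrac12$ in the Bavarian--Vidick--Yuen parameterization, not $\alpha = \tfrac14$ (you likely conflated it with the noise parameter $\eta_{\text{Anchor}} = \tfrac14$ used in the dependency-breaking variables), but this does not affect the soundness of the plan.
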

We give a proof for the above theorem in~\Cref{sec:parallelrepappendix}. Before giving a proof for~\Cref{prop:Parallelrepetition}, we first show the following lemma. 

\begin{lemma}[Properties related to the parallel repeated anchoring transformation ] \label{lem:anchortrans}
	Let $r \in \bN$, and let $\cG = (\cX, \cA, \mu, D)$ be a CL-samplable game where the question distribution $\mu$ is a $(k,m,p)$. Then, $\cG^{\otimes r}_{\bot}$, the r-fold parallel repetition of the anchor transformation of $\cG$, is samplable via a $(k+1,r \cdot (m+2),p)$ CL distribution. 
\end{lemma}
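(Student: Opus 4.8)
The plan is to build the required CL distribution for $\cG^{\otimes r}_{\bot}$ in two stages, corresponding to the two transformations involved: first the anchoring, then the $r$-fold parallel repetition. The key observation is that anchoring a $(k,m,p)$ CL-samplable game only requires adding a single extra register of small dimension to accommodate the ``choose whether to send $\dummy$'' decision, and that parallel repetition of CL distributions is already handled by \Cref{lem:parallelrefCLfunction} via the parallel composition of \Cref{defn:parallelcomposCLfn}.

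First I would show that $\Ganchor$ (the anchored game, without parallel repetition) is samplable via a $(k+1, m+2, p)$ CL distribution. Recall from \Cref{def:anchortransformation} that the anchored distribution $\mu^{\bot}$ is, up to the constant $\tfrac14$ weights, a mixture of four cases: send $(x,y)$ both honestly, send $(\dummy, y)$, send $(x, \dummy)$, or send $(\dummy,\dummy)$. The idea is to use a fresh pair of ``coin'' coordinates in a new register $V_0' = \bF_{2^p}^2$ placed \emph{before} the registers of the original CL functions, and form the series composition (\Cref{defn:seriescomposCLfn}) of a first-level CL function on $V_0'$ with the collection of original CL functions $\{\decidertypedfunction^{A}, \decidertypedfunction^{B}\}$. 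Concretely, parse an element of $V_0' = \bF_{2^p}^2$ under its canonical representation and extract two bits $c^A, c^B$; the zeroth-level linear function for prover $A$'s side keeps only $c^A$ (zeroing the rest), and then on the remaining register $V = \bF_{2^p}^m$ one sets $\decidertypedfunction^{A, \text{anchor}}_{> 0, c^A = 0} = \decidertypedfunction^A$ while $\decidertypedfunction^{A, \text{anchor}}_{> 0, c^A = 1} = 0$ (the zero map, which we interpret as producing the anchor question $\dummy$ — i.e.\ we identify $0 \in V$ with the padded encoding of $\dummy$, or alternatively reserve one extra coordinate as an anchor flag). Symmetrically for prover $B$ using $c^B$. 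Since each of $c^A, c^B$ is uniform in $\bF_{2^p}$ and we only care about a single bit, after rejection/padding the four cases occur with the correct marginal probabilities (up to the constant rescaling already present in \Cref{def:anchortransformation}, which can be absorbed exactly as in the detyping/oracularization arguments where constant-weight free-win question pairs are inserted). Because $c^A$ and $c^B$ are independent coordinates, the anchor decision for Alice is hidden from Bob and vice versa, which is exactly what is needed; and note $\dim V_0' = 2$ as a vector space over $\bF_{2^p}$, so the register count goes from $m$ to $m+2$ and the level from $k$ to $k+1$.

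Then I would apply \Cref{lem:parallelrefCLfunction} to the anchored game $\Ganchor$, which is $(k+1, m+2, p)$ CL-samplable by the previous step. That lemma states that the $r$-fold parallel repetition of a $(k', m', p)$ CL-samplable game is $(k', r \cdot m', p)$ CL-samplable — its proof is just the parallel composition of \Cref{defn:parallelcomposCLfn} applied $r$ times, and crucially the parallel composition does \emph{not} increase the level. Applying this with $k' = k+1$ and $m' = m+2$ gives that $\cG^{\otimes r}_{\bot}$ is samplable via a $(k+1, r\cdot(m+2), p)$ CL distribution, which is exactly the statement of \Cref{lem:anchortrans}. I should also remark that $\cG^{\otimes r}_{\bot}$ and $(\cG_{\bot})^{\otimes r}$ coincide as games, and that since parallel composition preserves the register structure, synchronicity is preserved throughout (as already noted after \Cref{def:anchortransformation} for anchoring and in \Cref{sec:prenonlocalgames} for parallel repetition).

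The main obstacle, and the only real subtlety, is the bookkeeping around the constant $\tfrac14$ weights in \Cref{def:anchortransformation}: a CL distribution as defined in \Cref{def:CLdistribution} samples the seed $s$ \emph{uniformly}, so to realize a non-uniform mixture I must either invoke rejection sampling on a slightly larger coin space (choosing the number of coin coordinates so that the four cases land with ratios matching $\mu(x,y) : \mu_x(x) : \mu_y(y) : 1$ after conditioning) or, more cleanly, note that since all four weights are equal to $\tfrac14$, a single uniformly random pair of bits $(c^A, c^B) \in \{0,1\}^2$ already produces the four cases with probability $\tfrac14$ each — so no rejection sampling is needed at all, and the $\tfrac14 \mu(x,y)$-type weights come for free from composing with the uniform seed sampling of the underlying $\mu$. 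With that observation the proof is essentially immediate, and I would keep it to a short paragraph citing \Cref{defn:seriescomposCLfn}, \Cref{defn:parallelcomposCLfn}, and \Cref{lem:parallelrefCLfunction}.
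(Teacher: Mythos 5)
Your proof is correct and takes essentially the same route as the paper's: introduce a $\bF_{2^p}^2$ "coin" register via a series composition, extract one bit of it per prover to decide whether to output $\decidertypedfunction^P$ or the zero map, then invoke \Cref{lem:parallelrefCLfunction} (parallel composition) to go from $\cG_{\bot}$ being $(k+1,m+2,p)$-samplable to $\cG_{\bot}^{\otimes r}$ being $(k+1,r(m+2),p)$-samplable. One small point: the paper resolves the "how does the decider recognize $\dummy$" question by having the zeroth-level function retain the full coin coordinate $(s_0,0)$ resp.\ $(s_1,0)$ in the output, so the coin itself acts as the anchor flag; your suggestion to "identify $0 \in V$ with the padded encoding of $\dummy$" would be ambiguous if $0$ is already a legal question, whereas your alternative (a reserved flag coordinate) is what the paper effectively does.
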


\begin{proof}
	Let $\decidertypedfunction^{A}$ and $\decidertypedfunction^{B}$ be the CL functions acting on $V \subseteq \bF_{2^p}^m$, which define the distribution $\mu$. We first show that $\cG_{\bot}$ is samplable via a $(k+1,m+2,p)$ distribution. Define $\decidertypedfunction^{A}_{\bot}$, $\decidertypedfunction^{B}_{\bot}$ as the following:
	$\decidertypedfunction^A_{\bot}$ is defined as a series composition (\Cref{defn:seriescomposCLfn}) of two CL functions, where $V^1 = \bF_{2^p}^2$, and $V_2 = V$. For $(s_0, s_1) \in \bF_{2^p}^2$, define the zeroth level CL function for $\decidertypedfunction^A_{\bot}$ is defined as
	\begin{equation*}
		(\decidertypedfunction^A_{\bot})_{0,0} (s_0, s_1) = (s_0, 0)
	\end{equation*}
	and the collection of level $k$ CL function to be $\{(\decidertypedfunction^A_{\bot})^{(s_0, s_1)}\}_{(s_0, s_1) \in \bF_{2^p}^2}$ to be
	\begin{equation*}
		(\decidertypedfunction^A_{\bot})^{(s_0, s_1)} = \decidertypedfunction^A
	\end{equation*}
	if the first bit of $\kappa(s_0)$ is 0, where recall $\kappa(s_0) \in \{0,1\}^p$ is the canonical representation of $s_0$ and $(\decidertypedfunction^A_{\bot})^{(s_0, s_1)} = 0$, i.e. the linear function which maps all elements in $\bF^{m}_{2^p}$ to 0, otherwise. Intuitively, the first bit of $\kappa(s_0)$ being 1 indicates that the anchoring question is sampled as Alice's question. $\decidertypedfunction^B_{\bot}$ is defined similarly to $\decidertypedfunction^A_{\bot}$ except the zeroth level CL function is defined as
	\begin{equation*}
		(\decidertypedfunction^B_{\bot})_{0,0} (s_0, s_1) = (s_1, 0). 
	\end{equation*}
	We see that the CL distribution defined by $\decidertypedfunction^A_{\bot}$ and $\decidertypedfunction^B_{\bot}$ precisely samples $\mu_{\bot}$. The proof of the lemma then follows from applying \Cref{lem:parallelrefCLfunction} on $\cG_{\bot}$. 
\end{proof}

We are now ready to show~\Cref{prop:Parallelrepetition}. 

\begin{proof}
	Fix the constant $\alpha$ and a function $\mathbf{s}(n): \bN \rightarrow [0,1]$, with $O(\mathbf{s}(n)) = O(\polylog(n))$. Let $\mathbf{r}(n)$ be a function such that 
	\begin{equation} \label{eq:parallelrepparameter}
		 \frac{16}{\mathbf{s}(n)} \cdot \exp{\left(\frac{- c^{\text{para}} \mathbf{s}(n)^{17} \mathbf{r}(n) }{\alpha \log(n)}\right)} \leq \frac{1}{2}
	\end{equation}
	for all $n \in \bN$. Since $O(\mathbf{s}(n)) = O(\polylog(n))$, $\mathbf{r}(n)$ can be taken to be $ O(\polylog(n))$. 
	
	Fix an input $(\samplerTM, \deciderTM)$ and integer $n$. We describe the corresponding output $(\samplerTM^{\Pararep}, \deciderTM^{\Pararep})$ for $\texttt{Parallelrep}_{\alpha, \textbf{s}(n)}$ below: If at any point in the computation process, the computation step for running $\samplerTM$ and $\deciderTM$ either returns an invalid output or runs for time more than $n^{\alpha}$, return $0$ (i.e. an invalid input). The Turing machine $\samplerTM^{\Pararep}$ is defined as the following: $\samplerTM^{\Pararep}$ reads the first input $n$ and computes $(\mathbf{k}(n), \mathbf{m}(n) , \mathbf{p}(n)) = \samplerTM(n, \text{parameter})$. 
	
	\begin{itemize}
		\item $\samplerTM^{\Pararep}(n, \text{parameter}) = (\mathbf{k}(n),  \mathbf{r}(n) \cdot (\mathbf{m}(n) +2), \mathbf{p}(n))$. 
		\item On input $(n, \text{Divide}, s)$,  $\samplerTM^{\Pararep}$ does the following: 
		\begin{enumerate} 
			\item Parses $s = (s_0, \cdots, s_{\mathbf{r}(n)-1})$, where each $s_i \in \{0,1\}^{ (\mathbf{m}(n) +2) \cdot \mathbf{p}(n)}$ (automatically returns 0 if the input does not match). 
			\item For each $i \in [\mathbf{r}(n)]$, parse $s_i = (s_{i,0}, s_{i, > 0})$, where $s_{i,0} \in \{0,1\}^{2 \cdot \mathbf{p}(n)}$ and $s_{i, > 0} \in \{0,1\}^{\mathbf{p}(n) \cdot \mathbf{m}(n)}$. Furthermore, parse
			\begin{equation*}
			 	(s_{i,1}, \cdots, s_{i,\mathbf{k}(n)}) = \samplerTM(n, \text{Divide},s_{i, > 0}). 
			\end{equation*}
			\item For $j \in [\mathbf{k}(n) +1]$, let $t_j = (s_{0,j}, \cdots, s_{\mathbf{r}(n)-1,j})$, and return $(t_0, \cdots, t_\mathbf{k}(n))$ as the output.
		\end{enumerate}
		\item On input $(n, \text{Function}, p,j,s,x)$, $\samplerTM^{\Pararep}$ does the following:
		\begin{enumerate} 
			\item Parses $s = (s_0, \cdots s_{\mathbf{r}(n)-1})$, and $x =  (x_0, \cdots x_{\mathbf{r}(n)-1})$., where each $s_i$ (resp. $x_i$) have the same number of bits with each other (automatically returns 0 if the input does not match). 
			\item If $j = 0$, apply the zeroth level linear function as specified in the proof of~\Cref{lem:anchortrans} to each of the $x_i$ to each $i \in [\mathbf{r}(n)]$ and return the concatenated output. 
			\item Otherwise, for each $i \in [\mathbf{r}(n)]$, parse $s_i = (s_{i,0}, s_{i, > 0})$ where $s_{i, 0} \in \{0,1\}^{2 \cdot \mathbf{p}(n)}$
			\begin{itemize}
				\item If the first bit of $s_{i, 0}$ is $0$ (i.e. the normal question), compute $t_i = \samplerTM(n, \text{Function}, p,j-1,s_{i, > 0},x_i)$.
				\item Otherwise (i.e. anchoring question), let $t_i = 0$ be the zero string that has the same length as $x_i$. 
			\end{itemize}
			Return the concatenated output of all the $t_i$. 
		\end{enumerate}
	\end{itemize}
	
	The Turing machine $\deciderTM^{\Pararep}$ is defined as the following:  $\deciderTM^{\Pararep}$ reads the first input n, and computes $(\mathbf{k}(n), \mathbf{m}(n) , \mathbf{p}(n)) = \samplerTM(n, \text{parameter})$. 
	\begin{enumerate}
		\item  On input $(n,x,y,a,b)$, parse
		\begin{align*}
			&x = (x_0, \cdots, x_{\mathbf{r}(n) -1 } ),  &y = (y_0, \cdots, y_{\mathbf{r}(n)-1} ), \quad &a = (a_0, \cdots, a_{\mathbf{r}(n) -1 } )  &b = (b_0, \cdots, b_{\mathbf{r}(n) -1 } ),
		\end{align*}
		where $x_i, y_i \in \{0,1\}^{ (\mathbf{m}(n) +2), \mathbf{p}(n)}$ for all $i \in [\mathbf{r}(n)]$ and each of the $a_i$, $b_i$ have the same number of bits. Output $0$ (i.e. automatically reject) if this cannot be done. 
		\item For each $i \in [\mathbf{r}(n)]$, parse $x_i = (x_{i,0}, x_{i,> 0})$,  $y_i = (y_{i,0}, y_{i,> 0})$ where $x_{i,0}, y_{i,0} \in \bF_{2^{p}}^2$. 
		\begin{itemize}
			\item If the first bit of $x_{i,0}$ and $y_{i,0}$ are both $0$ (i.e. the normal question), compute $t_i = \deciderTM(n, x_{i,> 0},  y_{i,> 0}, a_i, b_i)$.
			\item Otherwise, $t_i = 1$ iff whenever $x_{i}$ (resp. $y_{i}$) is the anchoring question (i.e. the first bit of $x_{i,0}$ (resp. $y_{i}$) is zero), then $a_i$ (resp. $b_i$) is the zero string. $t_i = 0$ if the above condition is not met.  
		\end{itemize}
		\item Return $\bigwedge_{i \in [\mathbf{r}(n)]} t_i$. 
	\end{enumerate}
	As seen from the description above, both $(\samplerTM^{\Pararep}, \deciderTM^{\Pararep})$ can be described by using  $(\samplerTM, \deciderTM)$  as a black box, and $\samplerTM^{\Pararep}$ depends only on $\samplerTM$ and 
	\begin{itemize}
		\item $\TIME_{\samplerTM^{\Pararep} }(n) \leq O(\poly(\mathbf{r}(n), log^{\alpha}(n)))$,
		\item $\TIME_{\deciderTM^{\Pararep} }(n) \leq O(\poly(\mathbf{r}(n), log^{\alpha}(n)))$.
	\end{itemize} 
	If $(\samplerTM, \deciderTM)$ is a synchronous $k$-th level CL sampler for an infinite sequence of synchronous games $\{\cG_n = (\cX_n,\cA_n, \mu_n, D_n )\}_{n \in \bN}$ for some constant $k \in \bN$,with some constant $n_0 \in \bN$ such that for all $n \geq n_0$,
	\begin{equation*}
		\samplerTM(n, \text{Parameter}), \TIME_{ \samplerTM }(n),\TIME_{ \deciderTM} \leq \log^{\alpha}(n). 
	\end{equation*}
	Then $(\samplerTM^{\Pararep}, \deciderTM^{\Pararep})$ is a $(k+1)$-th level CL sampler for an infinite sequence of synchronous games $\{\cG_{n, \bot}^{\otimes \mathbf{r}(n)}\}_{n \in \bN}$, where $\cG_{n, \bot}^{\otimes \mathbf{r}(n)}$ is the $\mathbf{r}(n)$-fold parallel repetition of the anchoring transformation for $\cG_n$.
	
	For completeness, note if $\cG$ admits a perfect (synchronous) oracularizable strategy $\strategy = ( \cL^2(\alicealg, \tau), \\ \ket{\tau}, \{ A_{a}^x \} )$. Then the $r$-fold parallel repetition game $\cG^{\otimes r}$ admits a perfect oracularizable strategy defined on $\cL^2(\alicealg, \tau)^{\otimes r}$ because the provers can simply performing r independent instances of $\strategy$ for each question labels on the parallel repeated game. Hence, combining with the remark after~\Cref{def:anchortransformation}, if there exists a perfect oracularizable strategy for $\cG_n$, then there exists a perfect oracularizable strategy for $\cG_{n, \bot}^{\otimes \mathbf{r}(n)}$.
	
	For soundness, since $\TIME_{ \deciderTM} \leq \log^{\alpha}(n)$ for $n > n_0$, this implies that $|\cA| = \log^{\alpha}(n)$, and hence combining~\eqref{eq:parallelrepparameter} and~\Cref{thm:anchoringparallelrep} shows the soundness condition. This completes the proof of~\Cref{prop:Parallelrepetition}.

\end{proof}

\begin{CJK*}{UTF8}{gbsn}
\end{CJK*}

\printbibliography
\newpage
\appendix

\section{A parallel repetition theorem for the commuting operator model} \label{sec:parallelrepappendix}
The result from this appendix originates from a earlier joint collaboration between William Slofstra and Henry Yuen. The goal of this appendix is to show~\Cref{thm:anchoringparallelrep}. The anchored parallel repetition was originally shown in~\cite{bavarianAnchoredParallelRepetition2021} in the tensor product model. The goal of this appendix is to define some quantum informatics tools for Tracially embeddable strategies, and show how this can be used to show a parallel repetition for the commuting operator model. We remark that outside of the quantum informatics tools, the proof for the anchored parallel repetition theorem in the commuting operator model is near identical to the tensor product case, and we choose to include a version of this proof for completeness.  

Recall from~\Cref{sec:prenonlocalgames}, given a non-local games $\cG$, we define the $r$-fold parallel repetition of a game $\cG^{\otimes r} = (\cX^{r}, \cA^r, \mu^{r}, D^r)$ as the game with the following question distribution and validation function
\begin{itemize}
	\item $\mu^n((x_0, \cdots, x_{r-1}), (y_0, \cdots y_{r-1})) = \prod_{i = 0}^{r} \mu(x_i, y_i)$.
	\item  $\D^r\left((x_0, \cdots, x_{r-1}), (y_0, \cdots y_{r-1}), (a_0, \cdots , a_{r-1}),(b_0, \cdots , b_{r-1})\right) =\prod_{i = 0}^{r} D(x_i,. y_i, a_i,b_i)$
\end{itemize}
Furthermore, recall from~\Cref{def:anchortransformation} that given a game $\cG = (\cX, \cA, \mu, D)$, we use $\cG_{\bot} = (\cX_{\bot}, \cA_{\bot} , \mu^{\bot}, D^{\bot})$ to denote the anchoring transformation. We restate\Cref{thm:anchoringparallelrep} below for convenience.
\begin{theorem}[Anchored Parallel repetition theorem] 
	There exist a universal constant $c^{\text{para}}$ such that, for any models $t \in \{*,co\}$, non-local games $\cG = (\cX, \cA, \mu, D)$ with $\omega^{t} \leq 1 - \eps$, and $r \in \bN$, then
	\begin{equation*}
		\omega^{t}(\cG_{\bot}^{\otimes r}) \leq \frac{16}{\eps} \cdot \exp{\left(\frac{- c^{\text{para}} \eps^{17} r }{\log(|\cA | + 1)}\right)}.
	\end{equation*}
\end{theorem}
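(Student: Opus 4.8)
The plan is to follow the information-theoretic approach of \cite{bavarianAnchoredParallelRepetition2021}, carefully lifting each of its steps from the tensor-product setting to the commuting-operator setting by working throughout with tracially embeddable strategies (justified by \Cref{thm:tracialembedding}, which guarantees this loses nothing in value). First I would fix a tracially embeddable strategy $\strategy$ for $\cG_\bot^{\otimes r}$ realized on $(\cL^2(\alicealg,\tau), \sigma\ket{\tau})$ with Alice's measurements $\{A^{\vx}_{\va}\}\subseteq\alicealg$ and Bob's $\{(B^{\vy}_{\vb})^{op}\}\subseteq\alicealg'$, and assume for contradiction that its winning probability exceeds the claimed bound. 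As in \cite{bavarianAnchoredParallelRepetition2021}, the game value of the $r$-fold repetition can be written as a product over coordinates of conditional success probabilities; the crux is that if the overall value is too large, then there is a large subset $C\subseteq[r]$ of coordinates on which the provers succeed conditionally, and one derives a contradiction by exhibiting a single-copy strategy for $\cG$ (equivalently $\cG_\bot$) with value $> 1-\eps$ for some coordinate $i\notin C$. The anchoring structure is what makes the conditioning event (``the provers won on all coordinates in $C$'') have non-negligible probability and, crucially, makes the post-measurement states on the two sides close, because the $\bot$-question on one side is independent of the real question on the other.

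The technical heart is to reconstruct the required quantum-information primitives inside a tracial von Neumann algebra. Concretely I would: (i) define, for a classical--quantum state indexed by the repeated questions and answers, a relative-entropy / mutual-information quantity using Araki's relative entropy \cite{arakiRelativeEntropyStates1977} for states on $\alicealg$ (this is exactly the ingredient flagged in the introduction, point 2 of \Cref{sec:introMIPstartoMIPco}); (ii) prove the chain rule and a Pinsker-type inequality bounding trace distance of post-conditioning states by the square root of the mutual information, using normality and the GNS/standard-form machinery of \Cref{sec:introVNA}; and (iii) establish an Uhlmann-type theorem, i.e. that two tracially embeddable purifications of nearby reduced states are related by an approximate local unitary (partial isometry) on the commutant side. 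With these in hand, the averaging argument of \cite{bavarianAnchoredParallelRepetition2021} goes through: one picks a coordinate $i\notin C$ with small conditional mutual information, conditions on the $C$-success event together with the $\bot$-anchor on coordinate $i$, applies the tracial Uhlmann theorem to align Alice's and Bob's side states by an operator that does not touch the $i$-th question register, and thereby extracts measurements $\tilde A^{x_i}_{a_i}$, $\tilde B^{y_i}_{b_i}$ forming a commuting-operator strategy for $\cG$ that wins with probability $>1-\eps$, contradicting $\omega^t(\cG)\le 1-\eps$. The quantitative bookkeeping — tracking the $\eps^{17}$ exponent and the $\log(|\cA|+1)$ denominator through the chain rule and the anchoring overhead — is identical to \cite{bavarianAnchoredParallelRepetition2021} and I would simply import it.

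The main obstacle I anticipate is step (ii)--(iii): the Uhlmann theorem and the ``state-splitting'' / measurement-transfer lemmas in \cite{bavarianAnchoredParallelRepetition2021} are stated for finite-dimensional tensor-product states where one freely takes partial traces, Schmidt decompositions, and local unitaries on a tensor factor. In the commuting-operator world there is no tensor factorization: ``Alice's marginal'' is the restriction of the state to $\alicealg$, ``Bob's side'' is the commutant $\alicealg'=\alicealg^{op}$, and the purification/cyclic-separating structure of the standard form $(\chi_\tau,\cL^2(\alicealg,\tau),\ket{\tau})$ must replace Schmidt decomposition. The key realization that makes this work is that for a tracially embeddable strategy the state is $\sigma\ket{\tau}$ with $\sigma\in\alicealg^+$, so $\sigma$ plays the role of the (square root of the) reduced density matrix and the observable-switching identity $a\ket{\tau}=a^{op}\ket{\tau}$ (the last row of \Cref{fig:tracialemdtofd}) substitutes for transposing operators across the tensor cut; one then proves the Uhlmann statement by a polar-decomposition argument in $\alicealg$ applied to $\sigma$. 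A secondary nuisance is that the post-conditioning ``states'' (after conditioning on winning coordinates in $C$) need not be of the form $\sigma'\ket{\tau}$ with $\sigma'\in\alicealg$ a priori — they are obtained by applying projections — so I would need a lemma that any vector in $\cL^2(\alicealg,\tau)$ is approximated by $\sigma'\ket{\tau}$ for a positive $\sigma'\in\alicealg$ (which holds since $\alicealg\ket{\tau}$ is dense), and then argue all estimates are stable under this approximation. Once these structural lemmas are in place, the rest of the proof is a transcription of \cite{bavarianAnchoredParallelRepetition2021} with $\Tr(\cdot)\mapsto\tau(\cdot)$ and tensor-factor language replaced by algebra/commutant language.
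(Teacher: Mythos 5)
Your plan matches the paper's approach almost exactly: work with tracially embeddable strategies, develop Araki relative entropy, chain rule, Pinsker, and quantum Raz's lemma for tracial von Neumann algebras, prove a tracial Uhlmann theorem, and then transcribe the BVY dependency-breaking/conditioning argument with algebra/commutant replacing the tensor factorization. The high-level steps and the quantitative bookkeeping are all as in the paper.

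There is, however, one concrete error in the way you handle the Uhlmann step. You propose to deal with post-measurement vectors that are not of the form $\sigma'\ket{\tau}$ with $\sigma'\in\alicealg^+$ by ``approximating any vector in $\cL^2(\alicealg,\tau)$ by $\sigma'\ket{\tau}$ for a positive $\sigma'\in\alicealg$, which holds since $\alicealg\ket{\tau}$ is dense.'' That density statement does not give you what you want: $\overline{\alicealg\ket{\tau}}=\cH$ means every vector is a limit of $a\ket{\tau}$ with $a\in\alicealg$, but the closure of $\alicealg^+\ket{\tau}$ is the canonical positive cone $\cH^+_{\ket{\tau}}$, a proper self-dual convex cone that is very far from all of $\cH$ (for instance $-\sigma\ket{\tau}$ is not in it for nonzero $\sigma\in\alicealg^+$). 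If you tried to literally approximate the conditioned vectors $\ket{\wt\Phi_{\cdot,\cdot}}$ by positive-cone vectors, the scheme would fail. The correct statement — and the one the paper proves and uses as \Cref{prop:positive_cone_polar_decomposition} — is that every unit vector $\ket{\psi}\in\cL^2(\alicealg,\tau)$ equals $U\ket{\psi^+}$ \emph{exactly}, for a unique $\ket{\psi^+}$ in the positive cone and some unitary $U\in\alicealg'$. The partial-isometry version is Araki's; the promotion of the partial isometry to a unitary is exactly where traciality enters, via the equivalence of projections with equal trace (\Cref{prop:projectorequ}). With this in hand, Uhlmann (\Cref{prop:uhlmann}) follows directly from the Araki--Powers--St{\o}rmer inequality and self-duality of the cone, and the conditioned vectors $\ket{\wt\Phi_{\cdot,\cdot}}$ can be fed into it without any approximation argument. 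Your polar-decomposition-plus-observable-switching intuition (take $a=\sqrt{aa^*}\,w$, use $w\ket{\tau}=w^{op}\ket{\tau}$, and pull $w^{op}\in\alicealg'$ past $\sqrt{aa^*}$) does work for vectors already of the form $a\ket{\tau}$ and is a fine way to see why the statement is plausible; the issue is only that you need the exact, unitary version, not a density/approximation argument, to make the estimates close.
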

As mentioned above, the main bottleneck for extending a parallel repetition theorem to the commuting operator model is the lack for quantum information-theoretic tools. The informational-theoretic tools used in many parallel repetition theorem (see e.g.~\cite{jainParallelRepetitionTheorem2014,yuenParallelRepetitionTheorem2016}) are define for finite-dimensional strategies, and does not necessarily translate to the infinite-dimensional setting. In this appendix, we also assume that all $\cG = (\cX, \cA, \mu, V)$ is an anchored game, or a game which arises from the anchor transformation given in~\Cref{def:anchortransformation}. 

We organize the appendix as follows: in the first subsection, we introduce the formulation of relative entropy base on the work of~\cite{arakiRelativeEntropyStates1977} between two normal state acting on tracial von Neuman algebras. In the second subsection, we introduce the parallel repetition theorem and the main step for showing the parallel repetition theorem. In the third subsection, we show an analogue of~\cite[Proposition 5.1]{bavarianAnchoredParallelRepetition2021} for the commuting operator model. We remark that although the proof is based on~\cite{bavarianAnchoredParallelRepetition2021}, some notation differs from the original parallel repetition paper for the sake of clarity or consistency with the rest of the paper.

\subsection{Quantum information theory for tracial von Neumann algebras}\label{sec:qinfo}

In this subsection, we give a brief introduction for relative entropy defined on (tracial) von Neumann under the standard representation given in~\cite{arakiRelativeEntropyStates1977}. 

\subsubsection{Additional background on von Neumann algebra}

We start this subsection by first recalling some additional von Neumann algebra needed for this appendix. Let $\alicealg_1 \subseteq B(\cH_1)$ and $\alicealg_2 \subseteq B(\cH_2)$ be two von Neumann algebras. The von Neumann algebra tensor product $\alicealg_1 \otimes \alicealg_2$ is the weak operator closure of the span of $\{a \otimes b : a \in \alicealg_1, b \in \alicealg_2\}$ in $B(\cH_1 \otimes \cH_2)$. For two tracial von Neumann algebra $(\alicealg_1, \tau_1)$ and $(\alicealg_2, \tau_2)$, the von Neumann algebra $\alicealg_1 \otimes \alicealg_2$ remains a tracial von Neumann algebra with the trace being $\tau_1 \otimes \tau_2$. For a state \gls{tensorstate} on $\alicealg_1 \tensor \alicealg_2$, we let $\psi^{\alicealg_1}$ denote the restriction of $\psi^{\alicealg_1 \alicealg_2}$ to $\alicealg_1 = \alicealg_1 \otimes \cI$, so $\psi^{\alicealg_1}(a) = \psi^{\alicealg_1 \alicealg_2}(a \otimes \cI)$ for all $a \in \alicealg_1$. 

We recall the following lemma from~\cite{linTracialEmbeddableStrategies2024} about the existence of left and right inverse. 
\begin{lemma}[Existence of a left and right inverse]\ \label{lem:inversetrick1}
	Let $A, B \in \alicealg^{+}$ such that $B \leq A$, then there exists some element $R \in \alicealg$ with $R^* R \leq \cI$ and $A^{\frac{1}{2}} R  = B$. Furthermore, there exists some element $L \in \alicealg$ with $L^* L \leq \cI$ and $L A^{\frac{1}{2}} = B$.
\end{lemma}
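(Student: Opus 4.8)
\textbf{Proof plan for Lemma~\ref{lem:inversetrick1} (existence of a left and right inverse).}

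The plan is to reduce this to the familiar finite-dimensional fact that if $0 \le B \le A$ then $B = A^{1/2} R$ for a contraction $R$, and to make this work in the von Neumann algebra setting using the Hilbert--Schmidt (GNS) structure. First I would put $(\alicealg, \tau)$ in standard form $(\chi_\tau, \cL^2(\alicealg,\tau), \ket{\tau})$, so every $\sigma \in \alicealg$ corresponds to the vector $\sigma\ket{\tau}$, and the left regular representation acts by $a(\sigma\ket{\tau}) = (a\sigma)\ket{\tau}$. The key inequality $B \le A$ in $\alicealg$ gives, for the right regular (commutant) representation, $\|B\sigma\ket{\tau}\|^2 \le \|A\sigma\ket{\tau}\|^2$ is not quite what is needed; rather I want $B^2 \le A$ type estimates, so let me instead work directly with the operator inequality $B \le A$ and the bounded operator $A^{1/2}$ on $\cL^2(\alicealg,\tau)$.

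The main step: since $B \le A$, we have $B = B^{1/2}B^{1/2} \le A$, hence $B^{1/2} A^{-1} B^{1/2} \le \cI$ holds formally, but $A$ need not be invertible, so I would proceed by approximation. For $\eps > 0$ set $A_\eps = A + \eps\cI$, which is invertible in $\alicealg$, and define $R_\eps = A_\eps^{-1/2} B \in \alicealg$. A direct computation shows $R_\eps^* R_\eps = B A_\eps^{-1} B \le B A^{-1}_\eps A A^{-1}_\eps B$... actually the cleanest route is: $R_\eps^* R_\eps = B A_\eps^{-1} B \le \|A_\eps^{-1/2} B A_\eps^{-1/2}\| \cdot B \le$ (using $B \le A \le A_\eps$) ... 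I would verify $A_\eps^{-1/2} B A_\eps^{-1/2} \le A_\eps^{-1/2} A_\eps A_\eps^{-1/2} = \cI$, giving $\|A_\eps^{-1/2} B A_\eps^{-1/2}\| \le 1$, and then $R_\eps^* R_\eps = B A_\eps^{-1} B = B^{1/2}(B^{1/2}A_\eps^{-1}B^{1/2})B^{1/2} \le B^{1/2}(A_\eps^{1/2}A_\eps^{-1}A_\eps^{1/2})B^{1/2}$ — this needs care, so I would instead use the standard operator-monotonicity argument: $B^{1/2} A_\eps^{-1} B^{1/2} \le \cI$ since $B \le A_\eps$, hence $R_\eps^*R_\eps = B^{1/2}(B^{1/2}A_\eps^{-1}B^{1/2})B^{1/2}$... no. Let me just take $R_\eps := A_\eps^{-1/2} B$; then $R_\eps^* R_\eps = B A_\eps^{-1} B$. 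Write $C_\eps = A_\eps^{-1/2} B A_\eps^{-1/2} \le \cI$ (from $B \le A_\eps$), so $B A_\eps^{-1} B = A_\eps^{1/2} C_\eps^2 A_\eps^{1/2} \le A_\eps^{1/2} C_\eps A_\eps^{1/2} = B \le A_\eps$; this is not yet $\le \cI$. The correct normalization is $R_\eps := A_\eps^{-1/2} B^{1/2}$ composed appropriately — I'd actually want $A^{1/2}R = B$, so $R$ should be roughly $A^{-1/2}B$, and $R^*R = BA^{-1}B$ which is bounded by $\|B\| \cdot \|A^{-1/2}BA^{-1/2}\|$, not by $\cI$ in general. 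So the statement as phrased forces $R^*R \le \cI$, which suggests the intended reading uses $B \le A$ together with a $\|B\| \le 1$-type bound, or the proof uses polar-decomposition-style weak limits. The cleanest rigorous path: note $A^{1/2} R_\eps = A^{1/2} A_\eps^{-1/2} B \to B$ in Hilbert--Schmidt norm as $\eps \to 0$ (since $A^{1/2}A_\eps^{-1/2} \to $ the projection onto $\overline{\mathrm{ran}\,A}$ strongly, and $B$ maps into that range because $B \le A$). The family $\{R_\eps\}$ need not be uniformly bounded in operator norm, but $R_\eps^* R_\eps = BA_\eps^{-1}B$ and $B \le A$ give $R_\eps\ket\tau$ bounded in $\cL^2$; extract a weak-$*$ limit point $R$ in $\alicealg$ of a suitable bounded truncation, check $R^*R \le \cI$ and $A^{1/2}R = B$, using that the truncation $f_\delta(A)$ with $f_\delta$ supported on $[\delta,\infty)$ satisfies $f_\delta(A)B \to B$.

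The hard part will be handling the non-invertibility of $A$ cleanly — specifically, controlling the limit $\eps \to 0$ so that the candidate $R$ genuinely lies in $\alicealg$ (not just in $B(\cL^2)$) and satisfies the contractivity $R^*R \le \cI$ rather than merely boundedness. I expect the resolution is to decompose using spectral projections of $A$: on $\overline{\mathrm{ran}\,A}$ the inverse is well-defined as an (unbounded but densely defined) affiliated operator, and $B \le A$ forces $\mathrm{ran}\,B \subseteq \overline{\mathrm{ran}\,A}$, so $A^{-1/2}B$ extends to a bounded operator on that subspace with the right norm bound; then set $R = 0$ on $(\mathrm{ran}\,A)^\perp$. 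The left-inverse statement follows by applying the right-inverse statement in the opposite algebra $\alicealg^{op}$ (using the $*$-anti-isomorphism $op$), or equivalently by taking adjoints of the relation for $A, B$ replaced by their transposes; since $\tau$ is tracial, $B \le A$ is preserved under $op$, so $L^{op}(A^{op})^{1/2} = B^{op}$ with $L^{op}(L^{op})^* \le \cI$ translates back to $LA^{1/2} = B$ with $L^*L \le \cI$. I would cite the original proof in~\cite{linTracialEmbeddableStrategies2024} for the routine verifications and only spell out the spectral-projection step here.
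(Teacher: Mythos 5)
The paper does not prove this lemma; it is quoted from \cite{linTracialEmbeddableStrategies2024} and used as a black box, so there is no internal proof to compare against. Your plan is essentially the standard Douglas-factorization argument, which is the right approach, but you should resolve the issue you flag instead of leaving it hanging. You are right that $R^*R \le \cI$ cannot follow from $B \le A$ alone when the conclusion is $A^{1/2}R = B$: take $A = B = 4\cI$, which forces $R = 2\cI$, so $R^*R = 4\cI \not\le \cI$. The statement as transcribed is therefore false. Either the intended conclusion is $A^{1/2}R = B^{1/2}$ (Douglas' lemma, and the form every downstream application actually uses — e.g.\ to produce $B = A^{1/2}\widehat{B}A^{1/2}$ with $\widehat{B} = RR^* \in [0,\cI]$ in the parallel-repetition appendix, and to write $(C^{\dummyx})^{1/2}(A^{\dummyx})^{1/2} = \eta_{\text{Anchor}}^{1/2}(A^{\dummy})^{1/2}$), or the hypothesis $B \le \cI$ is implicit, in which case the bound $R^*R = BA^{-1}B \le B \le \cI$ goes through. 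Decide which repair you are proving rather than saying it ``suggests the intended reading''.

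On the construction: commit to $R_\eps := (A+\eps\cI)^{-1/2}B^{1/2}$ (not $A_\eps^{-1/2}B$) and the contractivity becomes a one-liner with no detour through $\|B\|$, since $R_\eps R_\eps^* = (A+\eps\cI)^{-1/2}B(A+\eps\cI)^{-1/2} \le (A+\eps\cI)^{-1/2}(A+\eps\cI)(A+\eps\cI)^{-1/2} = \cI$, so $\|R_\eps\|\le 1$. Take a weak-$*$ cluster point $R$ of $\{R_\eps\}$ in the closed unit ball of $\alicealg$; then $R^*R \le \cI$ automatically, $A^{1/2}R_\eps \to PB^{1/2} = B^{1/2}$ strongly (where $P$ is the support projection of $A$, and $PB^{1/2}=B^{1/2}$ because $B\le A$ forces $\ker A \subseteq \ker B$), and weak-$*$ continuity of left multiplication by $A^{1/2}$ gives $A^{1/2}R = B^{1/2}$. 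There is no need for the truncations $f_\delta(A)$ you gesture at — the uniform bound is already in hand. Finally, drop the opposite-algebra detour for the ``furthermore'': set $L := R^*$. Then $LA^{1/2} = (A^{1/2}R)^* = B^{1/2}$, and $L^*L = RR^*$ has the same operator norm as $R^*R$, hence is $\le \cI$.
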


We recall the following theorem about projectors in a tracial von Neumann algebra. 
\begin{proposition}[Corollary 2.8 of~\cite{takesakiTheoryOperatorAlgebras2001}]  \label{prop:projectorequ}
	Let $(\alicealg, \tau)$ be a tracial von Neumann algebra and $P, Q \in \alicealg$ be two projectors, then the following two conditions are equivalent: 
	\begin{itemize}
		\item $\tau(P) = \tau(Q)$.
		\item $P$ is equivalent to $Q$.
	\end{itemize}
\end{proposition}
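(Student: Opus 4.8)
\textbf{Proof proposal for \Cref{prop:projectorequ} (the final statement, which is a restatement of a standard result).}

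The plan is to establish the two implications separately, and the direction that requires real work is ``$\tau(P) = \tau(Q) \implies P \sim Q$''. For the easy direction, suppose $P \sim Q$, so there exists a partial isometry $V \in \alicealg$ with $VV^* = P$ and $V^*V = Q$. Then by the tracial property of $\tau$, $\tau(P) = \tau(VV^*) = \tau(V^*V) = \tau(Q)$, which closes this direction immediately.

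For the hard direction I would invoke the comparison theory of projections in a finite von Neumann algebra. The key structural fact is that in any von Neumann algebra, given two projections $P, Q$, there exists a central projection $Z$ such that $ZP \precsim ZQ$ and $(\cI - Z)Q \precsim (\cI - Z)P$ (where $\precsim$ denotes Murray--von Neumann subequivalence). First I would reduce to the case where $\alicealg$ is a factor, or alternatively work directly with the central projection $Z$: on the ``slice'' $Z\alicealg$ we have $ZP \precsim ZQ$, meaning there is a subprojection $Q' \le ZQ$ with $ZP \sim Q'$; then $\tau$ restricted appropriately gives $\tau(ZP) = \tau(Q') \le \tau(ZQ)$. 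Combining with the symmetric inequality on $(\cI - Z)\alicealg$ and the hypothesis $\tau(P) = \tau(Q)$ forces $\tau(Q' ) = \tau(ZQ)$, hence $Q' = ZQ$ (since $Q' \le ZQ$ and $\tau$ is faithful, a subprojection with equal trace must coincide — here I'd use that $\tau(ZQ - Q') = 0$ and faithfulness of $\tau$). This yields $ZP \sim ZQ$; symmetrically $(\cI - Z)P \sim (\cI - Z)Q$, and patching the two partial isometries together (they act on orthogonal central slices) gives $P \sim Q$.

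The main obstacle is having the comparison theorem for projections available — this is where all the genuine content of the finite-von-Neumann-algebra theory enters. If \Cref{lem:inversetrick1} and the surrounding machinery in the excerpt are enough to bootstrap a partial isometry directly, an alternative route would be: from $\tau(P) = \tau(Q)$, first handle the case $Q \le P$ (then $\tau(P - Q) = 0$ forces $P = Q$ by faithfulness, trivial), and in general use that $P \vee Q - Q \sim P - P \wedge Q$ (the Kaplansky parallelogram law) together with trace additivity to reduce everything to subprojection comparisons; but this still implicitly needs the parallelogram law, which is itself a comparison-type result. Since the statement is cited as a corollary in \cite{takesakiTheoryOperatorAlgebras2001}, I would in practice just cite it, but the self-contained argument above via the central-projection comparison theorem plus faithfulness of $\tau$ is the cleanest path. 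I do not expect any delicate estimates — once comparison theory is in hand, faithfulness of the trace does all the remaining work.
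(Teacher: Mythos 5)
Your easy direction is fine, but the hard direction cannot be closed because the proposition as written is actually \emph{false} for a general tracial von Neumann algebra. Take $\alicealg=\cM_2(\bC)\oplus\cM_2(\bC)$ with $\tau=\tfrac14\Tr\oplus\tfrac14\Tr$, and set $P=E_{11}\oplus 0$, $Q=0\oplus E_{11}$. Then $\tau(P)=\tau(Q)=\tfrac14$, yet $P\not\sim Q$: any $V=V_1\oplus V_2\in\alicealg$ with $VV^*=P$ forces $V_2=0$, and then $V^*V=V_1^*V_1\oplus 0\ne Q$. The specific step in your argument that collapses is ``combining with the symmetric inequality \ldots{} forces $\tau(Q')=\tau(ZQ)$'': writing $a=\tau(ZP)\le b=\tau(ZQ)$ and $d=\tau((\cI-Z)Q)\le c=\tau((\cI-Z)P)$, the hypothesis $\tau(P)=\tau(Q)$ gives $a+c=b+d$ and hence only $b-a=c-d\ge 0$, which need not vanish; in the example, with $Z=0\oplus\cI_2$, one has $a=d=0$ and $b=c=\tfrac14$. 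The scalar trace simply does not see the central decomposition, so the comparison-theorem slicing cannot be collapsed by it.

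What Takesaki's Corollary V.2.8 actually asserts is that $P\sim Q$ iff $T(P)=T(Q)$ for $T$ the \emph{center-valued} trace; the scalar-trace version is correct only when the center is trivial (a $\mathrm{II}_1$ factor or matrix algebra), which is exactly the regime where your argument closes without any central-slicing bookkeeping (``$P\precsim Q$ or $Q\precsim P$, then faithfulness of $\tau$ kills the leftover''). For what it is worth, the single place the paper invokes this proposition, the construction of $W$ in \Cref{prop:positive_cone_polar_decomposition}, is unaffected: there the two projections are $\cI-VV^*$ and $\cI-V^*V$ with $VV^*\sim V^*V$ already witnessed by $V$, and the implication ``$P_0\sim Q_0 \Rightarrow \cI-P_0\sim\cI-Q_0$'' holds in every finite von Neumann algebra by the comparison theorem plus finiteness, with no trace argument at all. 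So the downstream construction is sound, but the proposition as stated should either be restated with the center-valued trace or replaced by that weaker complement statement.
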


\subsubsection{Tomita-Takesaki construction and the positive cone}

In order to discuss the relative entropy construction, we need to first give a brief summary to the Tomita-Takesaki construction~\cite{takesakiTomitaTheoryModular1970a}. Let $(\alicealg, \ket{\tau})$ be a tracial von Neumann algebra in standard form, and let $\mathfrak{S}_{\ket{\tau}}: \cH \rightarrow \cH$ be the antilinear (and potentially unbounded) map defined by 
\begin{equation} \label{eq:TT_map}
	\mathfrak{S}_{\ket{\tau}}: a \ket{\tau} \rightarrow a^* \ket{\tau}, a \in \alicealg.
\end{equation}
Since $\ket{\tau}$ is cyclic, the above map is well-defined on the dense subset $\alicealg \ket{\tau}$ of $\cH$. By taking the polar decomposition, we can write $\mathfrak{S}_{\ket{\tau}}$ as
\begin{equation} \label{eq:TT_map_2}
	\mathfrak{S}_{\ket{\tau}} =  \mathfrak{J}_{\ket{\tau}}\mathbf{\Delta}_{\ket{\tau}}^{\frac{1}{2}},
\end{equation}
for some antilinear isometry $\mathfrak{J}_{\ket{\tau}}$, known as the \textit{modular conjugation}, and some positive operator $\mathbf{\Delta}_{\ket{\tau}}$, known as the \textit{modular operator}. Remarkably the modular operator can be used to construct the commutant of $\alicealg$ algebraically, as $\mathfrak{J} \alicealg \mathfrak{J} = \alicealg'$. Using the modular operator, define the \textit{canonical positive cone} associated with $(\alicealg, \ket{\tau})$ as
\begin{equation} \label{eq:positive_cone}
	\text{\gls{Postcone}} = \overline{\{\mathbf{\Delta}_{\ket{\cyclicsepvector}}^{\frac{1}{4}} A \ket{\cyclicsepvector}, A \in \alicealg^{+} \}},
\end{equation}
where the closure is in the weak topology~\cite{arakiPropertiesModularConjugation1974b}. This is a pointed closed self-dual convex cone~\cite[Theorem 4 part 1]{arakiPropertiesModularConjugation1974b}. The following proposition gives a bijection between states on the von Neumann algebra $\alicealg$ and vectors on the canonical positive cone. 
\begin{proposition}[Theorem 6 of~\cite{arakiPropertiesModularConjugation1974b}] \label{prop:state_on_positive_cone}
	For every positive normal linear functional $\psi$ on a von Neumann algebra $\alicealg \subseteq \bofh$, there exists a unique $\ket{\psi} \in \positivecone$ such that $\psi(a) = \braket{\psi|a|\psi}$.
\end{proposition}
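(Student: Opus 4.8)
The plan is to exploit that in the setting at hand the von Neumann algebra $(\alicealg,\ket{\tau})$ carries a \emph{tracial} cyclic separating vector, which collapses the Tomita--Takesaki machinery of \eqref{eq:TT_map}--\eqref{eq:positive_cone} to something concrete. First I would observe that traciality makes $\mathfrak{S}_{\ket{\tau}}\colon a\ket{\tau}\mapsto a^*\ket{\tau}$ an (antilinear, densely defined) isometry, since $\|a^*\ket{\tau}\|^2=\tau(aa^*)=\tau(a^*a)=\|a\ket{\tau}\|^2$, hence a conjugation; so in the polar decomposition \eqref{eq:TT_map_2} one gets $\mathbf{\Delta}_{\ket{\tau}}=\cI$ and $\mathfrak{J}_{\ket{\tau}}$ equal to the extension of $\mathfrak{S}_{\ket{\tau}}$ to all of $\cL^2(\alicealg,\tau)$. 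Consequently the canonical cone \eqref{eq:positive_cone} simplifies to $\positivecone=\overline{\{A\ket{\tau}:A\in\alicealg^+\}}$. Identifying $\cL^2(\alicealg,\tau)$ with the noncommutative $L^2$-space $L^2(\alicealg,\tau)$ (with $\alicealg$ acting by left multiplication and $\ket{\tau}\leftrightarrow\cI$), a functional-calculus truncation $k\mapsto\min(k,n)$ shows that $\positivecone$ is precisely the positive part $L^2(\alicealg,\tau)_+$, which is a closed cone (its closedness is also visible from the self-duality of $\positivecone$ recorded above).

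For existence, I would start from the standard decomposition of a normal positive functional in its standard representation: $\psi(a)=\sum_n\langle\xi_n|a|\xi_n\rangle$ for some $\xi_n\in\cL^2(\alicealg,\tau)$ with $\sum_n\|\xi_n\|^2=\|\psi\|$. Using traciality, $\langle\xi_n|a|\xi_n\rangle=\tau(\xi_n^*a\xi_n)=\tau(a\,\xi_n\xi_n^*)$, and the partial sums of $\sum_n\xi_n\xi_n^*$ form an increasing sequence of positive $L^1$-elements with trace bounded by $\|\psi\|$, hence converge in $L^1$-norm to some $h\in L^1(\alicealg,\tau)_+$ with $\psi(a)=\tau(ha)$ for all $a$. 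I would then set $\ket{\psi}:=h^{1/2}\in L^2(\alicealg,\tau)_+=\positivecone$ and verify, again by traciality, that $\langle\psi|a|\psi\rangle=\tau(h^{1/2}ah^{1/2})=\tau(ah)=\psi(a)$.

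For uniqueness, suppose $\ket{\psi'}\in\positivecone$ also represents $\psi$. Since $\positivecone=L^2(\alicealg,\tau)_+$, $\ket{\psi'}$ is a positive element $k$, and the identity $\tau(kak)=\psi(a)=\tau(h^{1/2}ah^{1/2})$ for all $a\in\alicealg$ becomes $\tau(ak^2)=\tau(ah)$ for all $a$ by traciality. Nondegeneracy of the $\alicealg$--$L^1(\alicealg,\tau)$ duality pairing — equivalently, faithfulness of $\tau$ on $L^1(\alicealg,\tau)_+$ together with the fact that spectral projections of self-adjoint $L^1$-elements lie in $\alicealg$ — forces $k^2=h$, and uniqueness of positive square roots in $L^2$ then gives $k=h^{1/2}=\ket{\psi}$.

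The main obstacle is bookkeeping rather than conceptual: the argument leans on the noncommutative $L^p$-spaces $L^1(\alicealg,\tau)\cong\alicealg_*$ and $L^2(\alicealg,\tau)\cong\cL^2(\alicealg,\tau)$ and on their basic properties (identification of the predual with $L^1$, i.e.\ the noncommutative Radon--Nikodym theorem for traces; closedness of positive parts; nondegeneracy of the $L^\infty$--$L^1$ pairing; existence and uniqueness of positive square roots), none of which the present text sets up explicitly. If one does not wish to import this machinery (Nelson's noncommutative integration theory, or Takesaki Vol.~I--II), the cleanest alternative is simply to cite Araki's theorem directly, as is done above; I note that its proof in the \emph{general}, non-tracial case — where $\mathbf{\Delta}_{\ket{\tau}}$ is nontrivial and one must rotate a representing vector into $\positivecone$ via the closest-point projection together with a $\mathfrak{J}_{\ket{\tau}}$-invariance/fixed-point argument — is substantially harder, but is not needed anywhere in this paper.
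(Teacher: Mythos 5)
Your proof is correct in the tracial setting, but it is a genuinely different route from the paper's: the paper offers no argument at all and simply cites Araki's Theorem 6, which is proved for an arbitrary cyclic separating vector by a geometric argument in the cone (closest-point projection, $\mathfrak{J}_{\ket{\tau}}$-invariance, and the self-duality of $\positivecone$). You instead exploit that $\ket{\tau}$ is tracial, so that $\mathbf{\Delta}_{\ket{\tau}}=\cI$ and $\positivecone$ is the positive part of $L^2(\alicealg,\tau)$, and then the statement collapses to the noncommutative Radon--Nikodym theorem for a finite trace: existence of $\ket{\psi}=h^{1/2}$ from the $\sigma$-weak representation $\psi=\sum_n\omega_{\xi_n}$ and monotone convergence of $\sum_n\xi_n\xi_n^*$ in $L^1$ (the partial sums are increasing with $\|h_M-h_N\|_1=\tau(h_M)-\tau(h_N)$, hence Cauchy), and uniqueness from nondegeneracy of the $\alicealg$--$L^1$ pairing plus uniqueness of positive square roots. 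What your approach buys is a self-contained, concrete proof in exactly the generality the paper ever uses (every application of this proposition in the relative-entropy and parallel-repetition sections is for tracial algebras in standard form); what it costs is the importation of Nelson-style noncommutative $L^p$ machinery (identification of $\cL^2(\alicealg,\tau)$ with square-integrable affiliated operators, $\alicealg_*\cong L^1$, closedness of $L^2_+$) that the paper does not set up, and the loss of the non-tracial case that Araki's theorem covers. You correctly flag both limitations, so there is no gap to repair; if this argument were to replace the citation, the $L^p$ facts used would need to be stated with references.
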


For a state $\psi$, we will use $\ket{\psi}$ to denote the unique vector in the positive cone above.  For a state $\psi$ acting on $\alicealg$, we use $\supp^{\alicealg}{\psi}$  to denote the complement of the minimal projector $P \in \alicealg$ such that $\psi(P) = 0$. For any vector $\psi$, we have $\supp(\psi) \ket{\psi} = \ket{\psi}$. Intuitively, we can think of the corresponding vector in the positive cone as the ``purification" of the linear functional on $\alicealg$. Indeed, in the finite-dimensional case this vector is the familiar purification from quantum information theory:

\begin{example} \label{exam:StandardformFD}
	Let $\alicealg = \bM_n(\bC) \tensor \Id_n \subseteq \bM_{n^2}(\bC)$, and let $\ket{i}, i=0,\ldots,n-1$ denote the standard basis vectors for $\C^n$. Recall, the GNS representation using the trace $\Tr(\cdot)$ maps $\bM_n(\bC)$ to $\bM_n(\bC) \tensor \Id_n \in \bM_{n^2}(\bC)$, with the linear functional $\Tr(\cdot)$ gets mapped to the maximally entangled state $\ket{\tau} = \frac{1}{\sqrt{n}} \sum_{i=0}^n \ket{ii}$. In this case, we see that the vector state $\ket{\tau}$ is a cyclic and separating vector for $\alicealg$. As per~\Cref{eq:TT_map}, 
	\begin{equation*}
		S_{\ket{\tau}} \ket{i}\ket{j}  = S_{\ket{\tau}} (\ket{i}\bra{j} \tensor \cI) \ket{\tau} =  (\ket{j}\bra{i} \tensor \cI) \ket{\tau}= \ket{j}\ket{i},
	\end{equation*}
	and hence $S_{\ket{\tau}} = C \circ \sum_i \sum_j \frac{\lambda_i}{\lambda_j} \ket{j,i}\bra{i,j}$, where $C$ is conjugation in the standard basis. Taking the polar decomposition, the modular operator is
	\begin{equation*}
		\mathbf{\Delta}_{\ket{\tau}} = (S_{\ket{\tau}})^* S_{\ket{\tau}} = \sum_i \sum_j  \ket{i,j}\bra{i,j} = \cI_4,
	\end{equation*}
	and the positive cone associated with $\ket{\tau}$ is
	\begin{equation*}
		\positivecone =\left\{  (A \tensor \cI) \ket{\tau},  A \in \bM_{n}(\bC)^{+} \right\}.
	\end{equation*}
	If $\phi$ is a linear functional on $\bM_n(\bC)$, then there is a unique positive matrix $\sigma$ such that $\phi(A) = \Tr(A\sigma) = \Tr(\sigma^{1/2} A \sigma^{1/2})$, and hence
	\begin{equation}
		\phi(A) = \braket{\tau|\sigma^{\frac{1}{2}}A\sigma^{\frac{1}{2}} \tensor \cI |\tau}.
	\end{equation}
	Thus the vector in $\positivecone$ associated with $\phi$ is $(\sigma^{\frac{1}{2}} \tensor \cI)\ket{\tau}$, which is commonly used in quantum information as a purification of the density matrix $\sigma$.
\end{example}

We end this subsection by reviewing some properties related to the positive cone $\positivecone$. The following proposition shows that changing the cyclic and separating vector $\ket{\tau}$ only changes the positive cone by a unitary in the commutant:
\begin{proposition}[Theorem 7 part 6 of~\cite{arakiPropertiesModularConjugation1974b}] \label{prop:positive_cone_unitary}
	Let $\alicealg \subseteq \bofh$ be a tracial von Neumann algebra in standard form and let $\ket{\tau_1}$ and $\ket{\tau_2}$ be two cyclic and separating vectors for $\alicealg$. Then there is a unitary $U \in \alicealg'$ such that for all normal states 
	$\psi$ on $\alicealg$, if $\ket{\psi_1}$ and $\ket{\psi_2}$ are the vectors associated to $\psi$ in the positive cones of $\ket{\tau_1}$ and $\ket{\tau_2}$ respectively, then $\ket{\psi_1} = U \ket{\psi_2}$.
\end{proposition}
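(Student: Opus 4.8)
\textbf{Proof proposal for Proposition~\ref{prop:positive_cone_unitary}.}

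The plan is to reduce the statement to the corresponding statement about the modular conjugations $\mathfrak{J}_{\ket{\tau_1}}$ and $\mathfrak{J}_{\ket{\tau_2}}$, since the positive cone is built canonically out of the modular data. Because $(\alicealg,\tau)$ is tracial and in standard form, we are in an especially nice situation: the modular operator $\mathbf{\Delta}_{\ket{\tau}}$ associated with the trace vector is the identity, the modular conjugation $\mathfrak{J}_{\ket{\tau}}$ implements the canonical antiisomorphism $a \mapsto a^{op}$ between $\alicealg$ and $\alicealg'$, and the positive cone is just $\overline{\alicealg^+\ket{\tau}}$. For a general cyclic and separating vector $\ket{\tau'}$ the modular data is more complicated, but the key point is that $\mathfrak{J}_{\ket{\tau'}}$ and $\mathfrak{J}_{\ket{\tau}}$ differ by a unitary, and one can take this unitary inside $\alicealg'$. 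First I would recall (or cite directly from~\cite{arakiPropertiesModularConjugation1974b}) the fact that for two cyclic separating vectors of the same von Neumann algebra, the natural cones $\positivecone$ and $\mathcal{P}^{\natural}_{\ket{\tau'}}$ are related by a unitary $U\in\alicealg'$ with $U\,\positivecone = \mathcal{P}^{\natural}_{\ket{\tau'}}$, and moreover $U$ restricts to the identity of the order/convexity structure in the sense that the canonical vector representative of a normal state $\psi$ transforms covariantly.

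The concrete steps I would carry out are as follows. (1) Invoke Proposition~\ref{prop:state_on_positive_cone}: for each normal state $\psi$ on $\alicealg$ there is a unique $\ket{\psi_1}\in\positivecone$ with $\psi(a)=\braket{\psi_1|a|\psi_1}$, and likewise a unique $\ket{\psi_2}\in\mathcal{P}^{\natural}_{\ket{\tau_2}}$. (2) Observe that the map $\ket{\psi_1}\mapsto\ket{\psi_2}$ is well-defined on the (total) set of cone vectors arising from normal states, hence extends by linearity/density to a map on all of $\cH$; the nontrivial claim is that this extension is a unitary lying in $\alicealg'$. (3) To see this, use that $\alicealg\ket{\tau_1}$ and $\alicealg\ket{\tau_2}$ are both dense, that the vector representatives implement the same $\alicealg$-module structure (both give GNS-type representations of the identity representation of $\alicealg$), and that two such implementations of the standard representation on the same Hilbert space are intertwined by a unitary in the commutant — this is exactly the content of Theorem~7 of~\cite{arakiPropertiesModularConjugation1974b} applied to standard form. (4) Finally check covariance: since $U\in\alicealg'$ commutes with every $a\in\alicealg$, for the state $\psi$ we get $\braket{U\psi_2|a|U\psi_2} = \braket{\psi_2|a|\psi_2} = \psi(a) = \braket{\psi_1|a|\psi_1}$, and since both $U\ket{\psi_2}$ and $\ket{\psi_1}$ lie in the same positive cone $\positivecone$ (as $U$ maps $\mathcal{P}^{\natural}_{\ket{\tau_2}}$ onto $\positivecone$), uniqueness in Proposition~\ref{prop:state_on_positive_cone} forces $\ket{\psi_1}=U\ket{\psi_2}$, which is what we wanted after relabelling.

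The main obstacle I anticipate is bookkeeping around the direction of the unitary and the identification of which cone it maps into which — this is the kind of place where it is easy to get $U$ versus $U^*$ backwards, or to conflate $\mathfrak{J}$-twists with genuine commutant unitaries. In the tracial-standard-form setting this is cleaner than in the general case because $\mathbf{\Delta}_{\ket{\tau}}=\cI$ kills the one-parameter modular flow, so the only ambiguity is the modular conjugation, and the antilinear $\mathfrak{J}$'s compose to a linear unitary $\mathfrak{J}_{\ket{\tau_1}}\mathfrak{J}_{\ket{\tau_2}}$ which Araki's theorem places in $\alicealg'$. Since this proposition is quoted verbatim from~\cite[Theorem 7 part 6]{arakiPropertiesModularConjugation1974b}, the honest ``proof'' is really just to point out that the general Araki statement specializes to the form we have stated, so I would keep the argument short and defer the heavy modular-theory lifting to the citation rather than reproving the Tomita--Takesaki machinery.
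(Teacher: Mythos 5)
The paper gives no proof of this proposition; it is a direct citation to Araki, and your closing paragraph correctly identifies that the honest move is to defer to that reference. So at the level of what the paper actually does, you and the paper agree.

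The reconstruction you sketch in between, however, has genuine gaps and would not go through as written. The central problem is step (2): the assignment $\ket{\psi_1}\mapsto\ket{\psi_2}$ on cone vectors does \emph{not} ``extend by linearity'' in the naive sense, because the cone representation of states is not additive. If $\xi,\eta\in\positivecone$ represent normal positive functionals $\phi,\chi$, then $\xi+\eta$ is again a cone vector, but the functional it induces is $a\mapsto\phi(a)+\chi(a)+2\operatorname{Re}\braket{\xi|a|\eta}$, not $\phi+\chi$. Concretely, in the finite-dimensional model of Example~\ref{exam:StandardformFD} the cone representative of a density matrix $\rho$ is $(\rho^{1/2}\otimes\cI)\ket{\tau}$, and $\rho\mapsto\rho^{1/2}$ is not additive; so the map you write down is not visibly the restriction of any linear operator, and proving it is one \emph{is} the content of Araki's theorem. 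Your step (3) then invokes Theorem~7 of~\cite{arakiPropertiesModularConjugation1974b} to finish, which makes the sketch circular. Finally, your closing suggestion that the required unitary is $\mathfrak{J}_{\ket{\tau_1}}\mathfrak{J}_{\ket{\tau_2}}$ and that ``Araki's theorem places it in $\alicealg'$'' is not right: since each $\mathfrak{J}_{\ket{\tau_i}}$ exchanges $\alicealg$ with $\alicealg'$, the composite $\mathfrak{J}_{\ket{\tau_1}}\mathfrak{J}_{\ket{\tau_2}}$ \emph{normalizes} $\alicealg$, but normalizing $\alicealg$ is weaker than commuting with it, and in general this product does not lie in $\alicealg'$. (For cyclic separating vectors in the \emph{same} cone the modular conjugations actually coincide; for vectors in different cones they are related by conjugation by a commutant unitary, not by composition.) Since the paper treats the proposition as a black box, the remedy is simply to drop the speculative sketch and keep only your final sentence deferring to the citation.
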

The following proposition shows that every vector on $\cH$ can be related to some vector within $\positivecone$ via a partial isometry. 
\begin{proposition}\label{prop:positive_cone_polar_decomposition}
	Let $(\alicealg, \ket{\cyclicsepvector}) \subseteq \bofh$ be a tracial von Neumann algebra in standard form. For any $\ket{\psi} \in \cal{H}$, there exist a unitary $U \in \alicealg'$ and a unique $\ket{\psi^{+}} \in \positivecone$ such that $\ket{\psi} = U \ket{\psi^{+}}$.
\end{proposition}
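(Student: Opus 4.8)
The plan is to first pin down the vector $\ket{\psi^{+}}$ from the vector state induced by $\ket{\psi}$, and then produce the intertwining unitary using the finiteness of $\alicealg'$. First I would consider the positive normal linear functional $\omega$ on $\alicealg$ defined by $\omega(a) = \braket{\psi|a|\psi}$. By \Cref{prop:state_on_positive_cone} there is a unique vector $\ket{\psi^{+}} \in \positivecone$ with $\omega(a) = \braket{\psi^{+}|a|\psi^{+}}$ for all $a \in \alicealg$; this is the vector claimed in the statement. Its uniqueness among all valid decompositions is then essentially free: if $\ket{\psi} = U\ket{\xi}$ for some unitary $U \in \alicealg'$ and some $\ket{\xi} \in \positivecone$, then since $U$ commutes with $\alicealg$ one gets $\braket{\xi|a|\xi} = \braket{\psi|a|\psi} = \omega(a)$ for every $a \in \alicealg$, so $\ket{\xi}$ represents $\omega$ in the positive cone and \Cref{prop:state_on_positive_cone} forces $\ket{\xi} = \ket{\psi^{+}}$. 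The case $\ket{\psi} = 0$ is trivial (take $U = \cI$), so I will assume $\ket{\psi}\neq 0$.

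Next I would build a partial isometry in $\alicealg'$ relating $\ket{\psi^{+}}$ to $\ket{\psi}$. Because $\|a\ket{\psi^{+}}\|^2 = \omega(a^{*}a) = \|a\ket{\psi}\|^2$ for all $a \in \alicealg$, the assignment $a\ket{\psi^{+}} \mapsto a\ket{\psi}$ extends to an isometry from $\overline{\alicealg\ket{\psi^{+}}}$ onto $\overline{\alicealg\ket{\psi}}$; extending by $0$ on the orthogonal complement gives a partial isometry $w \in \alicealg'$ (it commutes with $\alicealg$ on the dense set $\alicealg\ket{\psi^{+}}$, hence everywhere, using that both cyclic subspaces and their complements are $\alicealg$-invariant) with $w^{*}w = p$, $ww^{*} = q$, and $w\ket{\psi^{+}} = \ket{\psi}$, where $p, q \in \alicealg'$ are the projections onto $\overline{\alicealg\ket{\psi^{+}}}$ and $\overline{\alicealg\ket{\psi}}$. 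Note $p\ket{\psi^{+}} = \ket{\psi^{+}}$. This is the standard fact that two vectors inducing the same vector state on a von Neumann algebra are intertwined by a partial isometry in the commutant, so I would only sketch it.

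The step I expect to require the most care — and where the tracial hypothesis is used — is upgrading $w$ to a unitary. Since $op : \alicealg \to \alicealg'$ is a $*$-anti-isomorphism (\cite{anantharamanIntroductionII1Factors2010}) and $(\alicealg,\tau)$ is tracial, I would check that $\alicealg'$ is again a tracial von Neumann algebra, with faithful normal tracial state $\tau' := \tau \circ op^{-1}$ (normality and faithfulness transfer through the anti-isomorphism, and traciality follows from anti-multiplicativity of $op^{-1}$ together with traciality of $\tau$). The relations $w^{*}w = p$, $ww^{*} = q$ show $p \sim q$ in $\alicealg'$, so \Cref{prop:projectorequ} applied to $(\alicealg',\tau')$ gives $\tau'(p) = \tau'(q)$, hence $\tau'(\cI - p) = \tau'(\cI - q)$, and applying \Cref{prop:projectorequ} again yields a partial isometry $v \in \alicealg'$ with $v^{*}v = \cI - p$ and $vv^{*} = \cI - q$. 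Setting $U := w + v \in \alicealg'$, the orthogonality of the initial spaces $p, \cI-p$ and of the final spaces $q, \cI-q$ gives $w^{*}v = v^{*}w = 0$, so $U^{*}U = p + (\cI - p) = \cI$ and $UU^{*} = q + (\cI - q) = \cI$, i.e. $U$ is unitary; and $U\ket{\psi^{+}} = w\ket{\psi^{+}} + v(\cI - p)\ket{\psi^{+}} = \ket{\psi} + 0 = \ket{\psi}$, using $v = v(\cI-p)$ and $p\ket{\psi^{+}} = \ket{\psi^{+}}$. Combined with the uniqueness observation above, this finishes the proof; the only genuinely delicate points are verifying that $\alicealg'$ inherits a faithful normal tracial state through $op$ and the orthogonality bookkeeping making $w + v$ a unitary.
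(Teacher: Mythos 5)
Your proof is correct and takes essentially the same route as the paper: pin down $\ket{\psi^+}$ in the positive cone, produce a partial isometry in $\alicealg'$ with $w\ket{\psi^+}=\ket{\psi}$ whose initial and final projections are the cyclic-subspace (i.e.\ support) projections, then use Proposition~\ref{prop:projectorequ} twice to equate traces of those projections and of their complements and assemble a unitary $U = w + v$. The only small differences are cosmetic: you build the partial isometry by hand from the isometry $a\ket{\psi^+}\mapsto a\ket{\psi}$ rather than citing~\cite[Theorem 7, part 5]{arakiPropertiesModularConjugation1974b}, and you are a little more careful than the paper in flagging that the trace applied to projections in $\alicealg'$ is $\tau'=\tau\circ op^{-1}$ rather than $\tau$ itself.
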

\begin{proof}
	By~\cite[Theorem 7 part 5]{arakiPropertiesModularConjugation1974b}, there exist a partial isometry $V \in \alicealg'$ and a unique $\ket{\psi^{+}} \in \positivecone$ such that $\ket{\psi} = V \ket{\psi^{+}}$, and
	\begin{equation*}
		VV^* = \supp^{\alicealg'}(\psi), \quad V^{*}V = \supp^{\alicealg'}(\psi^{+}).
	\end{equation*}
	We wish to extend $V$ into a Unitary. We first see that $\tau(VV^*) = \tau(V^*V)$, and hence $\tau(\cI - VV^*) = \tau(\I - V^*V)$. By~\Cref{prop:projectorequ}, there exist a partial isometry $W$ such that 
	\begin{equation*}
		WW^* = \cI - \supp^{\alicealg'}(\psi), \quad W^{*}W =  \cI - \supp^{\alicealg'}(\psi^{+}).
	\end{equation*}	
	Take $U = V + W$, we see that
	\begin{align*}
		U^* U &= (V^* + W^*)(V + W) =  \cI  + W^*V + V^* W \\
		&= \cI  + W^*  \left(\cI - \supp^{\alicealg'}(\psi) \right)\supp^{\alicealg'}(\psi)  V + V^*  \supp^{\alicealg'}(\psi^{+}) \left(\cI - \supp^{\alicealg'}(\psi^{+}) \right)  W \\
		&= \cI,
	\end{align*}
	and by a similar calculation, we have $U U^* = \cI$ showing that $U$ is indeed a unitary. Furthermore
	\begin{equation*}
		U \ket{\psi^{+}} = V \ket{\psi^{+}} + W \ket{\psi^{+}} = \ket{\psi}  + W\left( \cI- \cI - \supp^{\alicealg'}(\psi^{+}) \right) \ket{\psi^{+}} = \ket{\psi},
	\end{equation*}
	and hence the proposition follows. 
\end{proof}

The Araki-Powers-Stormer inequality relates the norm distance between states to the distance between vectors in the positive cone:
\begin{proposition}[Araki-Powers-Stormer inequality, Theorem 4 part 8 of~\cite{arakiPropertiesModularConjugation1974b}] \label{prop:Araki_Powers_Stormer}
	Let $\alicealg \subseteq \bofh$ be a tracial von Neumann algebra in standard form, and let $\ket{\psi_1}, \ket{\psi_2} \in \positivecone$. Then
	\begin{equation*}
		\|\psi_1 - \psi_2\|_{\alicealg} \geq \| \ket{\psi_1} - \ket{\psi_2}\|^2 . 
	\end{equation*}
\end{proposition}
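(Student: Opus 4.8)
\textbf{Plan for proving the Araki–Powers–Stormer inequality (Proposition~\ref{prop:Araki_Powers_Stormer}).}

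The statement we are asked to establish is the inequality $\|\psi_1 - \psi_2\|_{\alicealg} \geq \| \ket{\psi_1} - \ket{\psi_2}\|^2$ for two vectors in the canonical positive cone. Since the excerpt explicitly cites this as ``Theorem 4 part 8 of~\cite{arakiPropertiesModularConjugation1974b}'', the honest plan is simply to invoke the cited result; but to give a self-contained argument I would reconstruct the standard proof. The plan is to work in the standard form $(\chi_\tau, \cL^2(\alicealg,\tau), \ket{\tau})$ and exploit the fact that the positive cone is self-dual and that vectors in it correspond bijectively to normal positive linear functionals via $\psi_i(a) = \braket{\psi_i|a|\psi_i}$ (Proposition~\ref{prop:state_on_positive_cone}). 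The key object is the difference $\ket{\xi} := \ket{\psi_1} - \ket{\psi_2}$, and one wants to bound $\|\xi\|^2 = \braket{\psi_1|\psi_1} + \braket{\psi_2|\psi_2} - 2\mathrm{Re}\braket{\psi_1|\psi_2}$ from above by $\|\psi_1-\psi_2\|_{\alicealg}$.

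First I would recall that in the tracial setting the standard form is especially concrete: $\cL^2(\alicealg,\tau)$ is the GNS space, the modular conjugation $\mathfrak{J}$ sends $a\ket{\tau}$ to $a^*\ket{\tau}$, the modular operator is the identity, and the positive cone is $\overline{\{a\ket{\tau} : a \in \alicealg^+\}}$ (this simplification of~\eqref{eq:positive_cone} is visible in Example~\ref{exam:StandardformFD}). So each $\ket{\psi_i}$ is a limit of vectors $\sigma_i\ket{\tau}$ with $\sigma_i \in \alicealg^+$, and $\psi_i(a) = \tau(\sigma_i a \sigma_i) = \tau(\sigma_i^2 a)$ modulo the closure. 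The plan is then: (i) write $\ket{\psi_i} = h_i\ket{\tau}$ with $h_i = \sigma_i \in \alicealg^+$ (passing to the limit at the end), (ii) compute $\|\ket{\psi_1}-\ket{\psi_2}\|^2 = \tau((h_1-h_2)^2)$, (iii) compute $\|\psi_1-\psi_2\|_{\alicealg} = \sup\{ |\tau((h_1^2 - h_2^2) a)| : a \in \alicealg, \|a\|\le 1, a=a^*\}$ using the fact that in the tracial case the operator norm of the functional $a \mapsto \tau(b\,a)$ for self-adjoint $b$ equals $\|b\|_1 = \tau(|b|)$, and (iv) reduce the inequality $\tau((h_1-h_2)^2) \leq \tau(|h_1^2-h_2^2|)$ to the operator inequality $(h_1-h_2)^2 \leq |h_1^2 - h_2^2|$ for positive operators $h_1, h_2$ — or more precisely the trace version thereof. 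The operator-monotonicity input is the classical fact that for $h_1, h_2 \geq 0$ one has $|h_1 - h_2| \leq |h_1^2 - h_2^2|^{1/2}$ in an appropriate sense; the cleanest route at the trace level is to use $\tau((h_1-h_2)^2) = \tau((h_1-h_2)(h_1-h_2))$ and the Cauchy–Schwarz / Powers–Stormer lemma $\tau((h_1-h_2)^2) \le \tau(|h_1^2-h_2^2|)$, which for the tracial trace follows by functional calculus after diagonalizing $h_1-h_2$ and comparing $|\lambda_1-\lambda_2|^2$ with $|\lambda_1^2 - \lambda_2^2| = |\lambda_1-\lambda_2||\lambda_1+\lambda_2|$, using $|\lambda_1 - \lambda_2| \le |\lambda_1+\lambda_2|$ when both $\lambda_i \ge 0$. (One has to be careful that $h_1, h_2$ do not commute, so the literal eigenvalue argument needs to be replaced by the standard Powers–Stormer trace inequality, which holds in any tracial von Neumann algebra.)

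The main obstacle I anticipate is handling the non-commutativity of $h_1$ and $h_2$ rigorously: the scalar inequality $|x-y|^2 \le |x^2-y^2|$ for $x,y\ge0$ does not lift directly to an operator inequality, so one cannot conclude $(h_1-h_2)^2 \le |h_1^2-h_2^2|$ as operators. The correct statement is the \emph{trace} inequality $\tau((h_1-h_2)^2) \le \tau(|h_1^2 - h_2^2|)$, and proving this requires the original Powers–Stormer argument: write $h_1^2 - h_2^2 = h_1(h_1-h_2) + (h_1-h_2)h_2$, split into positive and negative parts, and use that $\tau(c\,h) \ge 0$ whenever $c, h \ge 0$ together with cyclicity of $\tau$. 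A secondary technical point is the passage from the dense set $\{h\ket{\tau}: h \in \alicealg^+\}$ to the full positive cone by a limiting argument, using norm-continuity of both sides; and a third is justifying the formula for $\|\psi_1 - \psi_2\|_\alicealg$ in terms of the $L^1$-norm $\tau(|h_1^2-h_2^2|)$, which is the tracial duality $\alicealg_* \cong L^1(\alicealg,\tau)$. Given that the paper only uses this proposition as a black box and cites Araki, the pragmatic choice would be to state the proof as ``this is~\cite[Theorem 4 part 8]{arakiPropertiesModularConjugation1974b}, specialized to the tracial case where the computation above makes it elementary,'' and include the Powers–Stormer trace inequality as the one genuinely needed lemma.
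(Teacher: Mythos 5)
The paper supplies no proof of this proposition --- it is stated and cited verbatim as Theorem~4 part~8 of Araki, with no argument given --- so there is nothing in the paper to compare your reconstruction against. Your tracial-case outline is correct: with $\ket{\psi_i}=h_i\ket{\tau}$, $h_i\in\alicealg^+$, the two sides become $\|h_1^2-h_2^2\|_1$ and $\|h_1-h_2\|_2^2 = \tau((h_1-h_2)^2)$, and the claim reduces to the Powers--Stormer trace inequality $\tau((h_1-h_2)^2)\leq\tau(|h_1^2-h_2^2|)$; you also correctly flag that the scalar inequality $|x-y|^2\leq|x^2-y^2|$ does not hold as an operator inequality, so one must run the trace argument (split $h_1^2-h_2^2 = h_1 S + S h_2$ with $S=h_1-h_2$, pair against the spectral projection of $S$, and use $\tau(c\,S_\pm)\geq 0$ for $c\geq 0$), and then pass from the dense set $\alicealg^+\ket{\tau}$ to the full cone by norm continuity.
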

We'll use the Araki-Powers-Stormer inequality in the following form:
\begin{proposition}	\label{prop:uhlmann}
	Let $(\alicealg, \ket{\cyclicsepvector})\subseteq \bofh$ be a von Neumann algebra in the standard form. For any unit vectors $\ket{\psi_1}, \ket{\psi_2} \in \cal{H}$, there is a unitary operator $U \in \alicealg'$ such that
	\begin{equation} \label{eq:uhlmann}
		\bra{\psi_1} U \ket{\psi_2} \geq 1 - \frac{1}{2} \| \psi_1 - \psi_2 \|_{\alicealg}.
	\end{equation}
\end{proposition}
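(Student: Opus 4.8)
\textbf{Proof proposal for Proposition~\ref{prop:uhlmann} (Uhlmann-type lemma).}

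The plan is to reduce the statement to the Araki--Powers--Stormer inequality (Proposition~\ref{prop:Araki_Powers_Stormer}) by first moving both vectors into the canonical positive cone $\positivecone$ via unitaries in the commutant. Concretely, given unit vectors $\ket{\psi_1}, \ket{\psi_2} \in \cH$, I would apply Proposition~\ref{prop:positive_cone_polar_decomposition} twice to obtain unitaries $U_1, U_2 \in \alicealg'$ and unique vectors $\ket{\psi_1^+}, \ket{\psi_2^+} \in \positivecone$ with $\ket{\psi_i} = U_i \ket{\psi_i^+}$ for $i = 1,2$. Since $U_1, U_2$ are unitaries in the commutant, $\ket{\psi_i^+}$ is again a unit vector, and crucially the state that $\ket{\psi_i}$ induces on $\alicealg$ is unchanged: $\braket{\psi_i| a |\psi_i} = \braket{\psi_i^+| U_i^* a U_i |\psi_i^+} = \braket{\psi_i^+| a |\psi_i^+}$ for all $a \in \alicealg$, because $U_i$ commutes with $a$. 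Hence the states $\psi_i$ on $\alicealg$ are represented by $\ket{\psi_i^+}$ in the positive cone, and $\ket{\psi_i^+}$ is exactly the vector associated to $\psi_i$ by Proposition~\ref{prop:state_on_positive_cone}; in particular $\| \psi_1 - \psi_2 \|_{\alicealg}$ is the same quantity whether computed from $\ket{\psi_i}$ or $\ket{\psi_i^+}$.

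Next I would apply the Araki--Powers--Stormer inequality to the pair $\ket{\psi_1^+}, \ket{\psi_2^+} \in \positivecone$, which gives
\begin{equation*}
	\| \ket{\psi_1^+} - \ket{\psi_2^+} \|^2 \leq \| \psi_1 - \psi_2 \|_{\alicealg}.
\end{equation*}
Expanding the left-hand side, $\| \ket{\psi_1^+} - \ket{\psi_2^+} \|^2 = 2 - 2 \operatorname{Re} \braket{\psi_1^+ | \psi_2^+}$, and since both vectors lie in the self-dual positive cone their inner product is real and nonnegative, so this equals $2 - 2\braket{\psi_1^+|\psi_2^+}$. Combining, $\braket{\psi_1^+ | \psi_2^+} \geq 1 - \tfrac{1}{2} \| \psi_1 - \psi_2 \|_{\alicealg}$. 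Finally I would set $U = U_1 U_2^*$, which is a unitary in $\alicealg'$, and compute
\begin{equation*}
	\bra{\psi_1} U \ket{\psi_2} = \bra{\psi_1^+} U_1^* U_1 U_2^* U_2 \ket{\psi_2^+} = \braket{\psi_1^+ | \psi_2^+} \geq 1 - \tfrac{1}{2} \| \psi_1 - \psi_2 \|_{\alicealg},
\end{equation*}
which is exactly~\eqref{eq:uhlmann}.

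I do not anticipate a serious obstacle here: the lemma is essentially a packaging of Proposition~\ref{prop:positive_cone_polar_decomposition} (which supplies the commutant unitaries bringing arbitrary vectors into the cone) together with Proposition~\ref{prop:Araki_Powers_Stormer}. The one point requiring a little care is checking that $\ket{\psi_i^+}$ genuinely represents the \emph{same} state $\psi_i$ that appears in $\| \psi_1 - \psi_2 \|_{\alicealg}$ — this is where the fact that $U_i \in \alicealg'$ is used, and it is what makes the norm distance on the right-hand side of~\eqref{eq:uhlmann} the correct one to feed into Araki--Powers--Stormer. A secondary minor point is justifying that the inner product of two positive-cone vectors is real and nonnegative, which follows from self-duality of $\positivecone$ (each vector lies in the dual cone of the other). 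Everything else is a routine unitary bookkeeping computation.
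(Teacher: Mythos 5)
Your proposal is correct and follows essentially the same route as the paper: move both vectors into the canonical positive cone via the commutant unitaries from Proposition~\ref{prop:positive_cone_polar_decomposition}, use self-duality to get nonnegativity of the cone inner product, and then apply the Araki--Powers--Stormer inequality with $U$ the product of the two commutant unitaries. If anything you are slightly more careful than the paper's text (you write the squared norm and get the unitary orientation $U = U_1 U_2^*$ right), and your explicit check that $\ket{\psi_i^+}$ represents the same state $\psi_i$ on $\alicealg$ is the right thing to spell out.
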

\begin{proof}
	By~\Cref{prop:positive_cone_polar_decomposition}, there are vectors $\ket{\psi_1^{+}}, \ket{\psi_2^{+}} \in \positivecone$ and unitaries $U_{\psi_2}, U_{\psi_1} \in \alicealg'$ such that $U_{\psi_i} \ket{\psi_i^{+}} = \ket{\psi_i}$, $i=1,2$. Since $\positivecone$ is self-dual, $\braket{\psi_1^+|\psi_2^+} \geq 0$, so 
	\begin{align*}
		\| \ket{\psi_1^+} - \ket{\psi_2^+} \| &= 2 - 2 \braket{\psi_1^+|\psi_2^+} = 2 - 2 \braket{\psi_1|U_{\psi_1}^* U_{\psi_2}|\psi_2},
	\end{align*}
	and hence the proposition follows from~\Cref{prop:Araki_Powers_Stormer} with $U = U_{\psi_1}^* U_{\psi_2}$.
\end{proof}
In the finite-dimensional case, we can take $V$ to be unitary in~\Cref{prop:uhlmann}. Indeed, let $\alicealg = \bM_n(\C) \otimes \cI \subseteq \bM_{n^2}(\C)$, and let $\ket{\psi_1}^{AB}$ and $\ket{\psi_2}^{AB}$ be two unit vectors in $\bC^{n^2}$ with reduced density matrices $\psi_i = \Tr_B(\ketbra{\psi_i}{\psi_i})$, $i=1,2$. By Uhlmann's theorem (see, e.g.~\cite[Theorem 9.2.1]{wildeQuantumInformationTheory2013}) there exist a unitary $U \in \bM_n(\bC)$ such that 
\begin{equation} \label{eq:ulhmannFD1}
	\braket{\psi_1| \cI \tensor U |\psi_2} = \| \sqrt{\psi_1}\sqrt{\psi_2} \|_1,
\end{equation}
where $\|\cdot\|_1$ is the matrix $1$-norm. By the 
Fuchs-van de Graaf inequality (see, e.g.~\cite[Theorem 9.3.1]{wildeQuantumInformationTheory2013}),
\begin{equation}\label{eq:ulhmannFD2}
	1 -  \| \sqrt{\psi_1}\sqrt{\psi_2} \|_1 \leq \frac{1}{2} ||\psi_1 - \psi_2||_1,
\end{equation}
and since 
\begin{equation*}
	\| \psi \|_1 = \sup\{ |\Tr(A^* \psi)\| : A \in  \bM_n(\bC) ,||A|| \leq 1 \} = \|\psi\|_{\bM_n(\C)},
\end{equation*}
\Cref{eq:uhlmann} follows from Equations \eqref{eq:ulhmannFD1} and \eqref{eq:ulhmannFD2}, with $V = \cI \otimes U \in \alicealg'$. 

\subsubsection{Relative Entropy}
In this subsection, we review the relative entropy between two positive normal linear functionals $\psi_1$ and $\psi_2$ on a tracial von Neumann algebra $(\alicealg, \ket{\tau})$ in standard form. We follow the construction in~\cite[Section 2]{arakiRelativeEntropyStates1977}. Let $\ket{\psi_1}$ and $\ket{\psi_2}$ be the vectors corresponding to $\psi_1$ and $\psi_2$ in the positive cone $\cH_{\ket{\tau}}^{+}$ of $\alicealg$. Similarly to~\Cref{eq:TT_map}, we define an anti-linear map $\mathfrak{S}_{\psi_1, \psi_2}^{\ket{\tau}} : \alicealg \ket{\psi_2}  \rightarrow \supp(\psi_1) \cH$ as
\begin{equation} \label{eq:Relative_modular_map}
	\mathfrak{S}_{\psi_1, \psi_2}^{\ket{\tau}}(a \ket{\psi_2}) =\supp(\psi_2) a^{*}\ket{\psi_1} \text{ for all } a \in \alicealg.
\end{equation}
Extend the above (potentially unbounded) map to $\alicealg \ket{\psi_2} \oplus (\alicealg \ket{\psi_2})^{\perp}$ by mapping $(\alicealg \ket{\psi_1})^{\perp}$ to zero. The extension is a well-defined antilinear operator whose domain of definition is dense in $\cH$. Hence, $\mathfrak{S}_{\psi_1, \psi_2}^{\ket{\tau}}$ has a polar decomposition
\begin{equation} 
	\mathfrak{S}_{\psi_1, \psi_2}^{\ket{\tau}} =  \mathfrak{S}_{\psi_1, \psi_2}^{\ket{\tau}}(\mathbf{\Delta}_{\psi_1, \psi_2}^{\ket{\tau}})^{\frac{1}{2}}
\end{equation}
where $\mathfrak{J}_{\psi_1, \psi_2}^{\ket{\tau}}$ is an antilinear isometry and $\mathbf{\Delta}_{\psi_1, \psi_2}^{\ket{\tau}} \in \alicealg$ is a positive operator. The relative entropy between $\psi_1$ and $\psi_2$ is defined to be 
\begin{equation} \label{eq:relative_ent}
	\D^{\ket{\tau}}(\psi_1 \| \psi_2) = \begin{cases}
		\int_{0}^{\infty} \log_2{\lambda} d \braket{\psi_1|E_{\lambda}|\psi_1} & \text{if } supp(\psi_1) \leq supp(\psi_2) \\
		+ \infty & \text{otherwise}
	\end{cases},
\end{equation}
where $\mathbf{\Delta}_{\psi_1, \psi_2}^{\ket{\tau}} = \int_{0}^{\infty} \lambda d E_{\lambda}^{\ket{\tau}}$ is the spectral decomposition of $\mathbf{\Delta}_{\psi_1, \psi_2}^{\ket{\tau}}$. 
By~\Cref{prop:positive_cone_unitary}, $\D^{\ket{\tau}}(\psi_1 \| \psi_2)$ is independent of $\ket{\tau}$, and thus we refer to this relative entropy as \gls{Relativeent}. To match the standard convention in quantum information, the order of arguments for $D$ is flipped from~\cite[Definition 3.1]{arakiRelativeEntropyStates1977}. We also use $\log = \log_2$ rather than $\log = \ln$, which changes our relative entropy by a factor of $\ln(2)$. Most propositions below are unchanged, but we do get a factor of $\ln(2)$ in~\Cref{prop:pinsker} below.

\begin{example} \label{exam:relativeentropy}
	Let $ \psi_1 (A) = \Tr(A \rho_1) $ and $\psi_2(A) = \Tr(A \rho_2)$ be two normal linear functionals on $\alicealg = \bM_n (\bC)$ such that 
	$supp(\psi_1) \leq supp(\psi_2)$. Recall from~\Cref{exam:StandardformFD} from that $\alicealg$ has standard form $\bM_n (\bC) \otimes \cI \subseteq \bM_{n^2} (\bC)$, and that $\ket{\tau} =  \frac{1}{\sqrt{n}} \sum_{i=1}^n\ket{ii}$ is a cyclic and separating vector for $\alicealg$. The vector corresponding to $\psi_i$ in the positive cone for $\ket{\tau}$ is $\ket{\psi_i} = 
	\rho_i^{\frac{1}{2}} \otimes \cI \ket{\tau}$, $i=1,2$. Hence
	\begin{equation*}
		S^{\ket{\tau}}_{\psi_1, \psi_2} (a \rho_2^{\frac{1}{2}}\tensor I) \ket{\tau} = ((\supp(\psi_2) a^{*}\rho_1^{\frac{1}{2}})\tensor  \cI) \ket{\tau} \text{ for all } a \in \alicealg,
	\end{equation*}
	and this implies that 
	\begin{equation*}
		S^{\ket{\tau}}_{\psi_1, \psi_2} (a \tensor I)  \ket{\tau} = ( \rho_2^{-\frac{1}{2}} a^* \rho_1^{\frac{1}{2}}\tensor \cI)   \ket{\tau}
		= S_{\ket{\tau}} \rho_1^{1/2} a \rho_2^{-1/2} \otimes \cI \ket{\tau}
		= S_{\ket{\tau}} (\rho_1^{1/2} \otimes (\rho_2^{-1/2})^T) (a \otimes \cI) \ket{\tau}
	\end{equation*}
	for all $a \in \alicealg$, where $S_{\ket{\tau}}$ is the modular operator for the vector $\ket{\tau}$, and $\rho_2^{-1/2}$ is the pseudoinverse of $\rho_2^{1/2}$. Since $S_{\ket{\tau}}^* S_{\ket{\tau}} = \cI$, the relative modular operator between $\psi_1$ and $\psi_2$ is
	\begin{equation*}
		\mathbf{\Delta}^{\ket{\tau}}_{\psi_1, \psi_2} = \rho_1 \tensor (\rho_2^{-1})^{T}.
	\end{equation*}
	Since both $\rho_2$ and $\rho_1$ are finite-dimensional matrices, the relative entropy is 
	\begin{align*}
		\D^{\ket{\tau}}(\psi_1 \| \psi_2) &= \braket{\psi_1|\log \mathbf{\Delta}_{\psi_1,\psi_2}^{\ket{\tau}}|\psi_1} 
		= \braket{\tau| (\rho_1^{\frac{1}{2}} \tensor \cI) \log(\rho_1 \otimes (\rho_2^{-1})^T) (\rho_1^{\frac{1}{2}} \tensor \cI)|\tau}
		= \Tr( \rho_1  \log(\rho_1) - \rho_1 \log(\rho_2)),
	\end{align*}
	which is the standard definition of von Neumann entropy. 
\end{example}

We remark that the definition of relative entropy holds for general von Neumann algebras in standard form. However, we will only focus on the case where $\alicealg$ is tracial, as we primarily work with tracially embeddable strategies in this paper. We refer to~\cite{arakiRelativeEntropyStates1977, ohyaQuantumEntropyIts2004} for more details. We use the following properties of relative entropy:
\begin{proposition}[Theorem 3.6 of~\cite{arakiRelativeEntropyStates1977}] \label{prop:properties_entropy}
	If $\psi_1, \psi_2$, and $\psi_3$ are positive normal linear functionals on a von Neumann algebra $\alicealg$, then:
	\begin{enumerate}
		\item If $\psi_1(\cI) = \psi_2(\cI)$, then $\D(\psi_1 \| \psi_2) \geq 0$, and $\D(\psi_1 \| \psi_2) = 0$ if and only if $\psi_1 = \psi_2$,
		\item $\D(\alpha \psi_1 \| \beta \psi_2) = \alpha \D(\psi_1 \| \psi_2) - \alpha \cdot \psi_1(\cI) \cdot  \log{(\frac{\beta}{\alpha})}$ for all $\alpha, \beta > 0$, and
		\item if $\psi_2 \leq \psi_3$ (meaning that $\psi_2(a) \leq \psi_3(a)$ for all $a \in \alicealg^{+}$),  then $\D(\psi_1\| \psi_3) \leq \D(\psi_1 \| \psi_2)$.
	\end{enumerate}
\end{proposition}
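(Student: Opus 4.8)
The plan is to reduce each of the three assertions to a statement about the relative modular operator $\mathbf{\Delta}_{\psi_1,\psi_2}$ introduced in~\eqref{eq:Relative_modular_map} and then invoke a few elementary operator inequalities; this is exactly the content of Theorem~3.6 of~\cite{arakiRelativeEntropyStates1977}, so one may either cite that result wholesale or reproduce the short arguments below, which specialize (via~\Cref{exam:relativeentropy}, with $\alicealg = \bM_n(\bC)$) to the familiar properties of the von Neumann relative entropy. Throughout I would use the identity $\D(\psi_1\|\psi_2) = \braket{\psi_1|\log\mathbf{\Delta}_{\psi_1,\psi_2}|\psi_1}$, valid whenever $\supp(\psi_1)\le\supp(\psi_2)$ (which is forced when the left side is finite), together with the bookkeeping fact that $\mathbf{\Delta}_{\psi_1,\psi_2}^{-1} = \mathbf{\Delta}_{\psi_2,\psi_1}$ on the relevant support and $\braket{\psi_1|\mathbf{\Delta}_{\psi_2,\psi_1}|\psi_1} = \psi_2(\supp(\psi_1)) \le \psi_2(\cI)$ — the abstract form of the computation $\tfrac1n\Tr(\rho_1^{1/2}\rho_2\rho_1^{-1/2}) = \tfrac1n\Tr(\rho_2)$ in~\Cref{exam:relativeentropy}, obtained in general since $\mathfrak{S}_{\psi_2,\psi_1}$ sends $a\ket{\psi_1}$ to $\supp(\psi_1)a^*\ket{\psi_2}$.

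For (1), I would upgrade the scalar inequality $\log t \ge 1 - t^{-1}$ (for $t>0$), via the functional calculus, to the operator inequality $\log\mathbf{\Delta}_{\psi_1,\psi_2} \ge \supp(\psi_1) - \mathbf{\Delta}_{\psi_1,\psi_2}^{-1}$; pairing with $\ket{\psi_1}$ and using the facts above gives $\D(\psi_1\|\psi_2) \ge \psi_1(\cI) - \psi_2(\cI) = 0$ under the hypothesis $\psi_1(\cI)=\psi_2(\cI)$. For the equality clause, since $t\mapsto t-1-\log t$ is strictly convex with a unique zero at $t=1$, equality forces the spectral measure of $\mathbf{\Delta}_{\psi_1,\psi_2}$ in the state $\ket{\psi_1}$ to be concentrated at $\{1\}$, hence $\mathbf{\Delta}_{\psi_1,\psi_2}\ket{\psi_1}=\ket{\psi_1}$; unwinding the polar decomposition of $\mathfrak{S}_{\psi_1,\psi_2}$ and using that both $\ket{\psi_1},\ket{\psi_2}$ lie in the canonical positive cone $\positivecone$ then yields $\ket{\psi_1}=\ket{\psi_2}$, i.e.\ $\psi_1=\psi_2$ by~\Cref{prop:state_on_positive_cone}.

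For (2), note that in the positive cone $\ket{\alpha\psi_1} = \sqrt{\alpha}\,\ket{\psi_1}$ and $\ket{\beta\psi_2} = \sqrt{\beta}\,\ket{\psi_2}$, so directly from its defining formula $\mathfrak{S}_{\alpha\psi_1,\beta\psi_2} = \sqrt{\alpha/\beta}\,\mathfrak{S}_{\psi_1,\psi_2}$; taking polar decompositions, the antiunitary parts coincide and $\mathbf{\Delta}_{\alpha\psi_1,\beta\psi_2} = (\alpha/\beta)\,\mathbf{\Delta}_{\psi_1,\psi_2}$. Substituting into the definition, $\D(\alpha\psi_1\|\beta\psi_2) = \alpha\braket{\psi_1|\log(\alpha/\beta)\supp(\psi_1) + \log\mathbf{\Delta}_{\psi_1,\psi_2}|\psi_1} = \alpha\D(\psi_1\|\psi_2) + \alpha\,\psi_1(\cI)\log(\alpha/\beta)$, which is the stated identity.

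For (3), the plan is to prove the operator inequality $\mathbf{\Delta}_{\psi_1,\psi_3} \le \mathbf{\Delta}_{\psi_1,\psi_2}$ whenever $\psi_2\le\psi_3$, and then apply operator monotonicity of $\log$ followed by pairing with $\ket{\psi_1}$ to conclude $\D(\psi_1\|\psi_3)\le\D(\psi_1\|\psi_2)$. In finite dimensions this inequality is immediate from $\rho_2\le\rho_3 \Rightarrow \rho_3^{-1}\le\rho_2^{-1}$ (antimonotonicity of inversion) and tensoring with $\rho_1$ on the other leg; in general it follows by comparing the closed positive quadratic forms $\xi\mapsto\|\mathfrak{S}_{\psi_1,\psi_j}\xi\|^2$ on a common core, using that $\psi_2\le\psi_3$ produces a contraction intertwining the relevant GNS data. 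I expect this last step — keeping track of the unbounded operators, their form domains, and the support projections — to be the main obstacle, and it is exactly the point at which one is tempted to, and may legitimately, simply cite~\cite{arakiRelativeEntropyStates1977, ohyaQuantumEntropyIts2004}; everything else is routine.
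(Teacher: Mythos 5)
The paper does not prove this proposition: it is stated verbatim as Theorem~3.6 of \cite{arakiRelativeEntropyStates1977}, consistent with how the rest of the relative-entropy toolkit in this section is imported by citation rather than re-derived. Your sketch is a correct reconstruction of the standard argument underlying Araki's theorem: reduce each claim to a statement about the relative modular operator, then for (1) pair the operator inequality $\log\mathbf{\Delta}_{\psi_1,\psi_2} \ge \supp(\psi_1) - \mathbf{\Delta}_{\psi_1,\psi_2}^{-1}$ against $\ket{\psi_1}$ and use $\braket{\psi_1|\mathbf{\Delta}_{\psi_1,\psi_2}^{-1}|\psi_1}\le\psi_2(\cI)$; for (2) read off $\mathfrak{S}_{\alpha\psi_1,\beta\psi_2}=\sqrt{\alpha/\beta}\,\mathfrak{S}_{\psi_1,\psi_2}$ from the defining formula and hence $\mathbf{\Delta}_{\alpha\psi_1,\beta\psi_2}=(\alpha/\beta)\mathbf{\Delta}_{\psi_1,\psi_2}$; and for (3) combine antitonicity of inversion with operator monotonicity of $\log$. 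You correctly single out the comparison $\mathbf{\Delta}_{\psi_1,\psi_3}\le\mathbf{\Delta}_{\psi_1,\psi_2}$ in (3) as the one step that genuinely requires care about quadratic forms and domains, and you conclude --- as the paper implicitly does --- that citing Araki there is legitimate. One cosmetic slip: in the equality case of (1), the nonnegative function whose unique zero at $t=1$ forces the spectral measure of $\mathbf{\Delta}_{\psi_1,\psi_2}$ in the state $\ket{\psi_1}$ to concentrate there is $g(t)=\log t - 1 + t^{-1}$, i.e.\ the one produced by the inequality $\log t \ge 1 - t^{-1}$ you invoked; you instead named $t-1-\log t$ and appealed to strict convexity, but $g$ is not convex on all of $(0,\infty)$, and in any case nonnegativity with a unique zero is all that is used, so the conclusion survives.
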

A linear map $\alpha : \alicealg_1 \to \alicealg_2$ between $C^*$-algebras is said to satisfy the Schwarz inequality if \begin{equation*}
	\alpha(a^* a) \geq \alpha(a)^* \alpha(a)
\end{equation*}
for all $a \in \alicealg_1$.

\begin{proposition}[Uhlmann monotonicity theorem, Theorem 5.3 of~\cite{ohyaQuantumEntropyIts2004}]\label{prop:Uhlmann_monotonicity_Schwarz}
	Let $\alpha: \alicealg_1 \rightarrow \alicealg_2$ be a unital map (meaning $\alpha(\cI_{\alicealg_1}) = \cI_{\alicealg_2}$) which satisfies the Schwarz inequality, and let $\psi_1^{\alicealg_i}, \psi_2^{\alicealg_i}$ be positive normal linear functionals on $\alicealg_i$, $i=1,2$, such that $\psi_1^{\alicealg_2} \circ \alpha \leq \psi_1^{\alicealg_1}$ and  $\psi_2^{\alicealg_2} \circ \alpha \leq \psi_2^{\alicealg_1}$. Then 
	\begin{equation*}
		\D(\psi_1^{\alicealg_1} \| \psi_2^{\alicealg_1}) \leq \D(\psi_1^{\alicealg_2} \| \psi_2^{\alicealg_2}).
	\end{equation*} 
\end{proposition}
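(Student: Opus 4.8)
The statement to prove is the Uhlmann monotonicity theorem (Proposition~\ref{prop:Uhlmann_monotonicity_Schwarz}), which asserts that relative entropy is monotone under unital Schwarz maps. Since the excerpt cites this as Theorem~5.3 of~\cite{ohyaQuantumEntropyIts2004}, the cleanest route is simply to invoke that reference, but a self-contained proof sketch in the tracial von Neumann algebra setting would proceed as follows.

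The plan is to reduce the statement to the monotonicity of the relative modular operator under the map $\alpha$. First I would set up the GNS-type picture: work in the standard form $(\chi_\tau, \cL^2(\alicealg_i,\tau_i), \ket{\tau_i})$ of each $\alicealg_i$, and let $\ket{\psi_1^{\alicealg_i}}, \ket{\psi_2^{\alicealg_i}}$ be the vectors in the positive cones $\positivecone$ associated to the given functionals via~\Cref{prop:state_on_positive_cone}. The hypotheses $\psi_j^{\alicealg_2}\circ\alpha \leq \psi_j^{\alicealg_1}$ mean that the map $V: a\ket{\psi_2^{\alicealg_2}} \mapsto \alpha(a)\ket{\psi_2^{\alicealg_1}}$ (defined on the dense subspace $\alicealg_2\ket{\psi_2^{\alicealg_2}}$) is a well-defined contraction: indeed
\begin{equation*}
	\|\alpha(a)\ket{\psi_2^{\alicealg_1}}\|^2 = \psi_2^{\alicealg_1}(\alpha(a)^*\alpha(a)) \leq \psi_2^{\alicealg_1}(\alpha(a^*a)) \leq \psi_2^{\alicealg_1}\circ\alpha(a^*a) \wait
\end{equation*}
— more carefully, one uses the Schwarz inequality $\alpha(a)^*\alpha(a)\leq \alpha(a^*a)$ together with the hypothesis to bound this by $\psi_2^{\alicealg_1}(a^*a) = \|a\ket{\psi_2^{\alicealg_2}}\|^2$ after transporting through $\alpha$; this is where the Schwarz property is essential. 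Extend $V$ by zero on the orthogonal complement to get a global contraction $V:\cL^2(\alicealg_2,\tau_2)\to\cL^2(\alicealg_1,\tau_1)$ with $V^*$ mapping $\alicealg_1\ket{\psi_2^{\alicealg_1}}$ appropriately and with the intertwining property $V\,\chi_{\tau_2}(a) = \chi_{\tau_1}(\alpha(a))\,V$ on the relevant domain.

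Next I would use this contraction to compare the relative modular operators. The key operator inequality is $V^* \mathbf{\Delta}_{\psi_1^{\alicealg_1},\psi_2^{\alicealg_1}} V \leq \mathbf{\Delta}_{\psi_1^{\alicealg_2},\psi_2^{\alicealg_2}}$, which follows by writing each relative modular operator via the quadratic form associated to the closed antilinear map $\mathfrak{S}_{\psi_1,\psi_2}$ from~\Cref{eq:Relative_modular_map} and checking that $V$ intertwines these forms in the direction needed (this is the technical heart — one must be careful about domains of the unbounded operators $\mathbf{\Delta}$, and the cleanest formulation uses the fact that $\braket{\xi|\mathbf{\Delta}_{\psi_1,\psi_2}|\xi}$ equals a supremum/resolvent expression that is manifestly monotone under composition with a contraction that intertwines $\mathfrak{S}$). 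Combined with the operator concavity and monotonicity of $t\mapsto \log t$ (so that $\log$ is operator-monotone and we may apply it to the inequality after a standard approximation argument, using also that $V\ket{\psi_1^{\alicealg_2}}=\ket{\psi_1^{\alicealg_1}}$ up to support projections), we obtain
\begin{equation*}
	\D(\psi_1^{\alicealg_1}\|\psi_2^{\alicealg_1}) = \braket{\psi_1^{\alicealg_1}|\log\mathbf{\Delta}_{\psi_1^{\alicealg_1},\psi_2^{\alicealg_1}}|\psi_1^{\alicealg_1}} = \braket{\psi_1^{\alicealg_2}|V^*(\log\mathbf{\Delta}_{\psi_1^{\alicealg_1},\psi_2^{\alicealg_1}})V|\psi_1^{\alicealg_2}} \leq \braket{\psi_1^{\alicealg_2}|\log\mathbf{\Delta}_{\psi_1^{\alicealg_2},\psi_2^{\alicealg_2}}|\psi_1^{\alicealg_2}} = \D(\psi_1^{\alicealg_2}\|\psi_2^{\alicealg_2}),
\end{equation*}
where the middle step uses the intertwining and the inequality step uses operator monotonicity of $\log$ applied to $V^*\mathbf{\Delta}_{\psi_1^{\alicealg_1},\psi_2^{\alicealg_1}}V\leq\mathbf{\Delta}_{\psi_1^{\alicealg_2},\psi_2^{\alicealg_2}}$ together with the fact that $\log$ is concave so $V^*\log(\cdot)V \leq \log(V^*(\cdot)V)$ for contractions $V$ (Jensen's operator inequality).

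I expect the main obstacle to be the careful treatment of unbounded operators: the relative modular operator $\mathbf{\Delta}_{\psi_1,\psi_2}$ is generally unbounded, $\log\mathbf{\Delta}$ is not even semibounded, and the expression $\braket{\psi_1|\log\mathbf{\Delta}_{\psi_1,\psi_2}|\psi_1}$ must be interpreted as the spectral integral in~\Cref{eq:relative_ent}. Rigorously pushing the contraction $V$ through these spectral integrals requires either the resolvent/quadratic-form machinery of~\cite{arakiRelativeEntropyStates1977} or a careful limiting argument truncating the spectrum. Given that this is a known theorem with a standard proof in the literature, in the paper itself I would simply cite~\cite{ohyaQuantumEntropyIts2004} (as the excerpt already does) and note that the tracial case is a special instance; the sketch above records the structure of that argument for the reader's benefit. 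The only point genuinely specific to our tracial setting is that the standard form and positive cone behave as described in~\Cref{sec:qinfo}, so that the vectors $\ket{\psi_j^{\alicealg_i}}$ and the intertwiner $V$ are available, but these were all established earlier in the excerpt.
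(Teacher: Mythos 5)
Your proposal correctly identifies that the paper gives no proof here and simply invokes Theorem~5.3 of Ohya--Petz, so the main point matches the paper exactly. The bonus sketch you append follows the standard Araki--Uhlmann line, but if you ever flesh it out note two slips: the intertwining contraction should read $V : a\ket{\psi_2^{\alicealg_1}} \mapsto \alpha(a)\ket{\psi_2^{\alicealg_2}}$ for $a \in \alicealg_1$ (as written, $\alpha(a)$ is ill-typed for $a \in \alicealg_2$, and the resulting modular-operator inequality has its subscripts swapped), and the Hansen--Pedersen Jensen inequality $V^* f(A) V \leq f(V^* A V)$ for a \emph{mere contraction} $V$ requires $f(0) \geq 0$, which $f = \log$ violates --- the classical proofs therefore route through the operator-convex $t \mapsto -\log t$ via the resolvent integral $-\log x = \int_0^\infty \bigl( (x+s)^{-1} - (1+s)^{-1} \bigr)\,ds$ rather than applying Jensen to $\log$ directly. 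None of this affects your conclusion, since the paper's own ``proof'' is the citation.
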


\begin{proposition}[Additivity under direct sums]\label{prop:direct_sum_entropy}
	Let $\alicealg_1 \subseteq \bB(\cH_1)$ and $\alicealg_2\subseteq \bB(\cH_2)$ be two von Neumann algebras in standard form. Let $\psi_1^{\alicealg_i}, \psi_2^{\alicealg_i}$ be  positive normal linear functionals on $\alicealg_i$, $i=1,2$. Then
	\begin{equation*}
		\D(\psi_1^{\alicealg_1} \oplus \psi_1^{\alicealg_2} \| \psi_2^{\alicealg_1} \oplus \psi_2^{\alicealg_2}) =  \D(\psi_1^{\alicealg_1} \| \psi_1^{\alicealg_2} ) + \D(\psi_2^{\alicealg_1} \| \psi_2^{\alicealg_2})
	\end{equation*}
	on $\alicealg_1 \oplus \alicealg_2 \subseteq B(H_1 \oplus H_2)$. 
\end{proposition}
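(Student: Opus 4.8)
\textbf{Proof proposal for Proposition~\ref{prop:direct_sum_entropy} (additivity under direct sums).}

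The plan is to reduce the claim to the defining formula~\eqref{eq:relative_ent} for relative entropy by understanding how the relative modular operator behaves under direct sums. First I would fix cyclic and separating tracial vectors $\ket{\tau_1}$ for $\alicealg_1$ and $\ket{\tau_2}$ for $\alicealg_2$; since relative entropy is independent of the choice of such vector (as noted after~\eqref{eq:relative_ent}, via~\Cref{prop:positive_cone_unitary}), I may compute $\D(\psi_1^{\alicealg_1} \oplus \psi_1^{\alicealg_2} \| \psi_2^{\alicealg_1} \oplus \psi_2^{\alicealg_2})$ using any convenient cyclic separating vector for $\alicealg_1 \oplus \alicealg_2$ acting on $\cH_1 \oplus \cH_2$. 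The natural choice is $\ket{\tau} = \ket{\tau_1} \oplus \ket{\tau_2}$, which is cyclic and separating for $\alicealg_1 \oplus \alicealg_2$ since each summand is. The vector in the positive cone $\cH^+_{\ket{\tau}}$ associated to $\psi_j^{\alicealg_1} \oplus \psi_j^{\alicealg_2}$ is then $\ket{\psi_j^{\alicealg_1}} \oplus \ket{\psi_j^{\alicealg_2}}$, where each summand lives in $\cH^+_{\ket{\tau_i}}$ — this should follow from the fact that the modular operator $\mathbf{\Delta}_{\ket{\tau}}$ decomposes as $\mathbf{\Delta}_{\ket{\tau_1}} \oplus \mathbf{\Delta}_{\ket{\tau_2}}$ (the Tomita operator $\mathfrak{S}_{\ket{\tau}}$ in~\eqref{eq:TT_map} preserves each summand $\alicealg_i\ket{\tau_i}$) together with the description~\eqref{eq:positive_cone} of the cone.

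The core step is to show that the relative Tomita operator decomposes as a direct sum: $\mathfrak{S}^{\ket{\tau}}_{\psi_1^{\alicealg_1}\oplus\psi_1^{\alicealg_2},\, \psi_2^{\alicealg_1}\oplus\psi_2^{\alicealg_2}} = \mathfrak{S}^{\ket{\tau_1}}_{\psi_1^{\alicealg_1},\psi_2^{\alicealg_1}} \oplus \mathfrak{S}^{\ket{\tau_2}}_{\psi_1^{\alicealg_2},\psi_2^{\alicealg_2}}$. This is essentially immediate from the definition~\eqref{eq:Relative_modular_map}: for $a = a_1 \oplus a_2 \in \alicealg_1 \oplus \alicealg_2$, one has $a(\ket{\psi_2^{\alicealg_1}} \oplus \ket{\psi_2^{\alicealg_2}}) = a_1\ket{\psi_2^{\alicealg_1}} \oplus a_2\ket{\psi_2^{\alicealg_2}}$, and the support projections, adjoints, and the extension-by-zero on the orthogonal complement all respect the direct sum decomposition $\cH_1 \oplus \cH_2$. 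Taking polar decompositions, the relative modular operator decomposes as $\mathbf{\Delta}^{\ket{\tau}}_{\cdot,\cdot} = \mathbf{\Delta}^{\ket{\tau_1}}_{\psi_1^{\alicealg_1},\psi_2^{\alicealg_1}} \oplus \mathbf{\Delta}^{\ket{\tau_2}}_{\psi_1^{\alicealg_2},\psi_2^{\alicealg_2}}$, hence its spectral measure $E_\lambda$ is the direct sum of the two spectral measures $E_\lambda^{(1)}, E_\lambda^{(2)}$. Then, using that $\ket{\psi_1^{\alicealg_1}} \oplus \ket{\psi_1^{\alicealg_2}}$ is orthogonal across the two summands,
\begin{equation*}
	\int_0^\infty \log_2 \lambda \, d\braket{\psi_1^{\alicealg_1}\oplus\psi_1^{\alicealg_2}|E_\lambda|\psi_1^{\alicealg_1}\oplus\psi_1^{\alicealg_2}} = \int_0^\infty \log_2\lambda\, d\braket{\psi_1^{\alicealg_1}|E_\lambda^{(1)}|\psi_1^{\alicealg_1}} + \int_0^\infty \log_2\lambda\, d\braket{\psi_1^{\alicealg_2}|E_\lambda^{(2)}|\psi_1^{\alicealg_2}},
\end{equation*}
which is exactly $\D(\psi_1^{\alicealg_1}\|\psi_2^{\alicealg_1}) + \D(\psi_1^{\alicealg_2}\|\psi_2^{\alicealg_2})$. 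One also needs to handle the support condition: $\supp(\psi_1^{\alicealg_1}\oplus\psi_1^{\alicealg_2}) \leq \supp(\psi_2^{\alicealg_1}\oplus\psi_2^{\alicealg_2})$ holds iff both $\supp(\psi_1^{\alicealg_i}) \leq \supp(\psi_2^{\alicealg_i})$ hold, so the $+\infty$ convention in~\eqref{eq:relative_ent} is consistent on both sides.

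I expect the main obstacle to be the careful bookkeeping of unbounded operators and their domains — verifying that the direct sum of the two relative Tomita operators really is the relative Tomita operator for the direct sum algebra (including that the closure/density of the domain is respected), and that the polar decomposition of a direct sum of closed antilinear operators is the direct sum of the polar decompositions. This is routine functional analysis but needs to be stated precisely. An alternative, possibly cleaner, route would be to invoke~\Cref{prop:Uhlmann_monotonicity_Schwarz} twice with the inclusion $\alicealg_i \hookrightarrow \alicealg_1 \oplus \alicealg_2$ (as a corner, which is not unital, so this needs the non-unital Schwarz-map version or a rescaling argument) to get $\geq$ in each direction, combined with the spectral computation above for $\leq$; but I think the direct computation via the decomposition of the modular operator is the most transparent and I would present that.
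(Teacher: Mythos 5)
Your approach is correct in outline, but it is genuinely different from the paper's: the paper simply cites a known result in the literature (Hiai, \emph{Quantum $f$-Divergences in von Neumann Algebras}, Proposition 2.3, applied with $f(t) = t\log t$), whereas you give a self-contained derivation from the Araki definition~\eqref{eq:relative_ent} by tracking the direct-sum structure through the relative Tomita operator, its polar decomposition, and the spectral integral. The citation route is shorter and avoids the unbounded-operator bookkeeping you flag at the end; your route is more transparent about what is actually happening (the cyclic separating vector, the positive cone, the relative modular operator, and the spectral measure all literally split as direct sums), and it makes the handling of the support/$+\infty$ convention explicit rather than implicit. Both are valid proofs of the statement.

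Two small remarks on your write-up. First, you speak of ``tracial'' cyclic separating vectors $\ket{\tau_i}$, but the proposition is stated for arbitrary von Neumann algebras in standard form; your argument never uses traciality, so you should drop that adjective (and cite Araki's Theorem 7 part 6 rather than the paper's~\Cref{prop:positive_cone_unitary}, which is phrased only for the tracial case). Second, the step ``the polar decomposition of a direct sum of closed antilinear operators is the direct sum of the polar decompositions'' deserves a one-line justification (it follows from uniqueness of the polar decomposition, since a direct sum of a partial isometry and a positive operator is again a partial isometry times a positive operator with the matching initial/final spaces), but you correctly identify this as routine. Your alternative via~\Cref{prop:Uhlmann_monotonicity_Schwarz} applied to the two non-unital corner embeddings would indeed need the rescaling trick you mention, and the direct computation is the cleaner path, as you say.
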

\begin{proof}
	See~\cite[Proposition 2.3]{hiaiQuantumFDivergencesNeumann2021a}  with $f(t) = t \log(t)$.
\end{proof}
The following proposition is an analogue of Pinsker's inequality:
\begin{proposition}[Theorem 5.5 of~\cite{ohyaQuantumEntropyIts2004}] \label{prop:pinsker} 
	If $\psi_1$ and $\psi_2$ are normal states on a von Neumann algebra $\alicealg$, then
	\begin{equation*}
		\|\psi_1 - \psi_2\|_{\alicealg}^2 \leq 2 \ln{(2)} \, \D(\psi_1 \| \psi_2).
	\end{equation*}
\end{proposition}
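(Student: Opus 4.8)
The plan is to reduce the statement to the classical two‑point Pinsker inequality by coarse‑graining along a single well‑chosen projection, controlling the loss of relative entropy with the Uhlmann monotonicity theorem (\Cref{prop:Uhlmann_monotonicity_Schwarz}). If $\supp^{\alicealg}(\psi_1)\not\le\supp^{\alicealg}(\psi_2)$ then $\D(\psi_1\|\psi_2)=+\infty$ and there is nothing to prove, so throughout one may assume $\supp^{\alicealg}(\psi_1)\le\supp^{\alicealg}(\psi_2)$.

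First I would identify the extremal element computing the dual norm $\|\psi_1-\psi_2\|_{\alicealg}$. The functional $\phi:=\psi_1-\psi_2$ is normal and Hermitian on the von Neumann algebra $\alicealg$, hence admits a Jordan decomposition $\phi=\phi_+-\phi_-$ into orthogonal positive normal functionals with $\|\phi\|_{\alicealg}=\phi_+(\cI)+\phi_-(\cI)$; the support projection $e\in\alicealg$ of $\phi_+$ satisfies $\phi_-(e)=0$ and $\phi(2e-\cI)=\phi_+(\cI)+\phi_-(\cI)$. Since $\psi_1(\cI)=\psi_2(\cI)=1$, this yields the clean identity
\begin{equation*}
\|\psi_1-\psi_2\|_{\alicealg} \;=\; (\psi_1-\psi_2)(2e-\cI) \;=\; 2\big(\psi_1(e)-\psi_2(e)\big) \;=\; 2\,|\psi_1(e)-\psi_2(e)| .
\end{equation*}
Writing $p_i:=\psi_i(e)$, this is $2|p_1-p_2|$, the $\ell_1$ distance of the two-point distributions $(p_1,1-p_1)$ and $(p_2,1-p_2)$.

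Next is the coarse-graining step. Consider the abelian algebra $\bC^2=\bC\oplus\bC$ and the unital $*$-homomorphism $\alpha\colon\bC^2\to\alicealg$, $\alpha(t_1,t_2)=t_1e+t_2(\cI-e)$; it is unital because $e$ and $\cI-e$ are complementary projections, and being a $*$-homomorphism it satisfies the Schwarz inequality. Taking $\psi_i^{\bC^2}:=\psi_i\circ\alpha$, which is exactly the distribution $(p_i,1-p_i)$, the hypotheses of \Cref{prop:Uhlmann_monotonicity_Schwarz} hold trivially ($\psi_i\circ\alpha\le\psi_i\circ\alpha$), so $\D\big((p_1,1-p_1)\,\big\|\,(p_2,1-p_2)\big)\le\D(\psi_1\|\psi_2)$, where the left side is the relative entropy of the abelian algebra $\bC^2$. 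That relative entropy is the classical Kullback--Leibler divergence $p_1\log\tfrac{p_1}{p_2}+(1-p_1)\log\tfrac{1-p_1}{1-p_2}$ (with $\log=\log_2$): this follows either from the diagonal case of \Cref{exam:relativeentropy}, or by combining additivity under direct sums (\Cref{prop:direct_sum_entropy}) with part (2) of \Cref{prop:properties_entropy}, which gives $\D(a\cdot 1\|b\cdot 1)=a\log(a/b)$ on the one-dimensional algebra $\bC$.

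Finally I would invoke the elementary binary Pinsker inequality $4(p_1-p_2)^2\le 2\ln 2\,\big(p_1\log\tfrac{p_1}{p_2}+(1-p_1)\log\tfrac{1-p_1}{1-p_2}\big)$, proved by a one-variable calculus argument (fixing $p_2$ and checking that $g(p)=p\log\tfrac{p}{p_2}+(1-p)\log\tfrac{1-p}{1-p_2}-\tfrac{2}{\ln 2}(p-p_2)^2$ satisfies $g(p_2)=0$, $g'(p_2)=0$, $g''\ge 0$). Chaining the three displays gives
\begin{equation*}
\|\psi_1-\psi_2\|_{\alicealg}^2 \;=\; 4(p_1-p_2)^2 \;\le\; 2\ln 2\,\D\big((p_1,1-p_1)\,\big\|\,(p_2,1-p_2)\big) \;\le\; 2\ln 2\,\D(\psi_1\|\psi_2),
\end{equation*}
which is the claim. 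The main obstacle is bookkeeping rather than conceptual: one must verify carefully that $\|\psi_1-\psi_2\|_{\alicealg}$ is genuinely attained at an element of the form $2e-\cI$ with $e$ a \emph{projection in $\alicealg$} (i.e., the Jordan decomposition and support projections of a normal Hermitian functional remain inside the von Neumann algebra), and that the direction and hypotheses of \Cref{prop:Uhlmann_monotonicity_Schwarz} are applied to the measurement map $\alpha$ in the correct order; once this is in place the reduction to the two-point distribution makes the inequality quantitative via the classical bound.
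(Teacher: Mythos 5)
The paper states this proposition as a citation to Ohya--Petz (Theorem 5.5) without giving a proof, and your argument is correct and is essentially the standard proof of that cited theorem: Jordan decomposition of $\psi_1-\psi_2$ to identify the optimal two-outcome measurement $e\in\alicealg$, Uhlmann monotonicity applied to the unital $*$-homomorphism $\bC^2\to\alicealg$ to reduce to the abelian case, and the binary Pinsker inequality with the constant bookkeeping for $\log=\log_2$. The only detail worth flagging is that the paper's definition of $\|\cdot\|_{\alicealg}$ (a supremum over positive elements) is stated loosely; your identity $\|\psi_1-\psi_2\|_{\alicealg}=2(\psi_1(e)-\psi_2(e))$ uses the standard dual norm on Hermitian functionals, which is the convention under which the stated inequality with constant $2\ln 2$ is the sharp one (and under any smaller norm it holds a fortiori).
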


\begin{proposition}
	\label{prop:relative_min_entropy_chain_rule2}
	Let $\phi, \psi$ and $\upsilon$ be normal positive linear functionals on a von Neumann algebra $\alicealg$, such that $\D(\phi \| \psi) \leq \lambda_1$ and $\psi \leq 2^{\lambda_2} \upsilon$ for two scalars $\lambda_1, \lambda_2 \geq 0$. Then $\D(\phi \| \upsilon) \leq \lambda_1 + \phi(\cI) \lambda_2$.
\end{proposition}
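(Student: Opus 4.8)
\textbf{Proof proposal for Proposition~\ref{prop:relative_min_entropy_chain_rule2}.}

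The plan is to reduce the statement to the three basic properties of relative entropy already recorded in~\Cref{prop:properties_entropy}. The hypothesis $\psi \leq 2^{\lambda_2}\upsilon$ together with part (3) of that proposition immediately gives $\D(\phi\|2^{\lambda_2}\upsilon) \leq \D(\phi\|\psi) \leq \lambda_1$, since $\psi \leq 2^{\lambda_2}\upsilon$ means $\psi(a)\le 2^{\lambda_2}\upsilon(a)$ for all $a\in\alicealg^+$. So the only remaining task is to express $\D(\phi\|2^{\lambda_2}\upsilon)$ in terms of $\D(\phi\|\upsilon)$ by peeling off the scalar $2^{\lambda_2}$.

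For that I would apply part (2) of~\Cref{prop:properties_entropy} with $\alpha = 1$ and $\beta = 2^{\lambda_2}$, which yields
\begin{equation*}
\D(\phi\|2^{\lambda_2}\upsilon) = \D(\phi\|\upsilon) - \phi(\cI)\log(2^{\lambda_2}) = \D(\phi\|\upsilon) - \phi(\cI)\,\lambda_2.
\end{equation*}
Combining this identity with the inequality from the previous paragraph gives $\D(\phi\|\upsilon) - \phi(\cI)\lambda_2 \leq \lambda_1$, i.e.\ $\D(\phi\|\upsilon) \leq \lambda_1 + \phi(\cI)\lambda_2$, which is exactly the claim.

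I do not anticipate a genuine obstacle here; the proposition is essentially a bookkeeping lemma about how the monotonicity and scaling properties of Araki relative entropy interact. The only point requiring a little care is checking that part (2) of~\Cref{prop:properties_entropy} is being invoked with the arguments in the right order (the paper flips the order of arguments relative to Araki's original convention, and uses $\log=\log_2$, so the scaling term is $\phi(\cI)\log_2(\beta/\alpha)$ rather than a natural-log version) — but with $\alpha=1$ this is transparent. One should also note the edge case $\D(\phi\|\psi)=+\infty$, where the statement is vacuous, and the case where $\supp(\phi)\not\le\supp(\upsilon)$, which cannot occur here since $\supp(\phi)\le\supp(\psi)\le\supp(\upsilon)$ follows from finiteness of $\D(\phi\|\psi)$ and $\psi\le 2^{\lambda_2}\upsilon$.
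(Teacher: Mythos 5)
Your proof is correct and uses exactly the same two ingredients — monotonicity (part 3) and scaling (part 2) of \Cref{prop:properties_entropy} — as the paper, merely rescaling $\upsilon$ by $2^{\lambda_2}$ where the paper rescales $\psi$ by $2^{-\lambda_2}$; this is an algebraically equivalent rearrangement of the same one-line argument.
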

\begin{proof} 
	By~\Cref{prop:properties_entropy}, $\D(\phi \| \upsilon) \leq \D(\phi \| 2^{-\lambda_2} \psi) = \D(\phi \| \psi) - \phi(\cI)\log{2^{- \lambda_2}} \leq \lambda_1 + \phi(\cI) \lambda_2$.
\end{proof}

\subsubsection{Mutual information}

Let $\alicealg_1$ and $\alicealg_2$ be two von Neumann algebras in standard form. If $\psi_i$ is a normal state on $\alicealg_i$, $i=1,2$, then there is a unique state $\psi_1 \otimes \psi_2$ on $\alicealg_1 \otimes \alicealg_2$ such that $\psi_1 \otimes \psi_2 (a \otimes b) = \psi_1(a) \psi_2(b)$ for all $a \in \alicealg_1$, $b \in \alicealg_2$ (indeed, if $\ket{\psi_i}$ is the vector in the
positive cone corresponding to $\psi_i$, then $\psi_1 \otimes \psi_2$ is the state corresponding to $\ket{\psi_1}\ket{\psi_2}$). This 
leads to a definition of the mutual information between two algebras:
\begin{definition}[Mutual information] \label{def:mutual_info_min}
	Suppose $\psi^{\alicealg_1 \alicealg_2}$ is a normal state on $\alicealg_1 \otimes \alicealg_2$. The \textit{mutual information}
	between $\alicealg_1$ and $\alicealg_2$ for the state $\psi^{\alicealg_1 \alicealg_2}$ is
	\begin{equation*}
		\text{\gls{Mutualinfo}} := \D(\psi^{\alicealg_1 \alicealg_2} \| \psi^{\alicealg_1}\otimes \psi^{\alicealg_2}).
	\end{equation*}
\end{definition}
If $\psi^{\alicealg_1 \alicealg_2 \alicealg_3}$ is a normal state on $\alicealg_1 \otimes \alicealg_2 \otimes \alicealg_3$, then we use the convention that $I(\alicealg_1 : \alicealg_2)_{\psi}$ denotes the relative entropy between $\alicealg_1$ and $\alicealg_2$ for the state $\psi^{\alicealg_1 \alicealg_2}$. 
\begin{example}
	Let $\alicealg_1 = \alicealg_2 = \bM_n (\bC)$ and let $\psi^{\alicealg_1 \alicealg_2}(a) = \Tr(\sigma^{\alicealg_1 \alicealg_2} a)$ be a state on $\alicealg_1 \tensor \alicealg_2$, where $\sigma^{\alicealg_1 \alicealg_2}$ is a density matrix in $\bM_{n^2}(\bC) $. For all $a \in \alicealg_1$, 
	\begin{equation*}
		\psi^{\alicealg_1}(a) = \Tr(\sigma^{\alicealg_1 \alicealg_2} (a \tensor \cI)) = \Tr( \sigma_1 a), 
	\end{equation*}
	where $\sigma^{\alicealg_1} := \Tr_{\alicealg_2}(\sigma^{\alicealg_1 \alicealg_2})$ is the partial trace. Similarly, $\psi^{\alicealg_1}(b) = \Tr((\sigma^{\alicealg_2} b)$ for all $b \in \alicealg_2$, where $\sigma^{\alicealg_2} := \Tr_{\alicealg_1} (\sigma^{\alicealg_1 \alicealg_2})$. If $a,b \in \bM_n(\bC)$, then 
	\begin{equation*}
		\psi_1 \otimes \psi_2 (a \otimes b) = \psi_1(a) \psi_2(b) = \Tr(\sigma^{\alicealg_1} a) \Tr(\sigma^{\alicealg_2} b) = \Tr((\sigma^{\alicealg_1} \otimes \sigma^{\alicealg_2})(a \otimes b)).
	\end{equation*}
	Hence $\rI(\alicealg_1 : \alicealg_2)_{\psi}$ is the relative entropy between the state with density matrix $\sigma^{\alicealg_1 \alicealg_2}$, and the state with density matrix $\sigma^{\alicealg_1} \otimes \sigma^{\alicealg_2}$. By~\Cref{exam:relativeentropy}, this is the usual mutual information between Alice and Bob's registers with state $\sigma^{\alicealg_1 \alicealg_2}$. 
\end{example}
Many of the properties of mutual information in finite dimensions extend to mutual information between von Neumann algebras. 
\begin{proposition}[Corollary 5.20 of~\cite{ohyaQuantumEntropyIts2004}]  \label{prop:entropy_tensor}
	Let $\alicealg_1$ and $\alicealg_2$ be two von Neumann algebras and let $\phi^{\alicealg_1 \alicealg_2}$ and $\psi^{\alicealg_1}\otimes \psi^{\alicealg_2}$ be two normal states on $\alicealg_1 \otimes \alicealg_2$. Then
	\begin{equation*}
		\D(\phi^{\alicealg_1 \alicealg_2} \| \psi^{\alicealg_1}\otimes \psi^{\alicealg_2}) = \D(\phi^{\alicealg_1} \| \psi^{\alicealg_1}) + \D( \phi^{\alicealg_1 \alicealg_2} \| \phi^{\alicealg_1} \otimes \psi^{\alicealg_2}).
	\end{equation*}
\end{proposition}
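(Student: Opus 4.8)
\textbf{Proof proposal for Proposition~\ref{prop:entropy_tensor}.}

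The statement is a chain rule for relative entropy: given normal states $\phi^{\alicealg_1\alicealg_2}$ and $\psi^{\alicealg_1}\otimes\psi^{\alicealg_2}$ on $\alicealg_1\otimes\alicealg_2$, one wants
\[
\D(\phi^{\alicealg_1 \alicealg_2} \,\|\, \psi^{\alicealg_1}\otimes \psi^{\alicealg_2}) = \D(\phi^{\alicealg_1} \,\|\, \psi^{\alicealg_1}) + \D( \phi^{\alicealg_1 \alicealg_2} \,\|\, \phi^{\alicealg_1} \otimes \psi^{\alicealg_2}).
\]
Since the excerpt explicitly cites this as ``Corollary 5.20 of~\cite{ohyaQuantumEntropyIts2004}'', the cleanest route is simply to invoke that reference directly; the plan below is for how one would actually establish it if a self-contained argument were desired, reusing only the tools already set up in this appendix.

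First I would reduce to the purely algebraic identity between relative modular operators. Represent everything in standard form: let $\ket{\phi}\in\cH^+_{\ket\tau}$ be the positive-cone vector of $\phi^{\alicealg_1\alicealg_2}$, and note that the product states $\psi^{\alicealg_1}\otimes\psi^{\alicealg_2}$ and $\phi^{\alicealg_1}\otimes\psi^{\alicealg_2}$ have positive-cone vectors that factor as $\ket{\psi^{\alicealg_1}}\ket{\psi^{\alicealg_2}}$ and $\ket{\phi^{\alicealg_1}}\ket{\psi^{\alicealg_2}}$ respectively (using that the positive cone of a tensor product von Neumann algebra is generated by products of positive-cone vectors of the factors, and that for product states the relative modular operator factors as $\mathbf{\Delta}_{\phi,\psi^{\alicealg_1}\otimes\psi^{\alicealg_2}} = \mathbf{\Delta}_{\phi^{\alicealg_1},\psi^{\alicealg_1}}\otimes\mathbf{\Delta}_{\phi^{\alicealg_2},\psi^{\alicealg_2}}$ when restricted appropriately). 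The key computation is that the relative modular operator associated to $\phi^{\alicealg_1\alicealg_2}$ against $\psi^{\alicealg_1}\otimes\psi^{\alicealg_2}$ differs from the one against $\phi^{\alicealg_1}\otimes\psi^{\alicealg_2}$ by a factor involving only the $\alicealg_1$-modular data, namely $\mathbf{\Delta}^{1/2}_{\phi^{\alicealg_1},\psi^{\alicealg_1}}$ acting on the first leg. Taking $\log_2$ and then the expectation $\braket{\phi|\,\cdot\,|\phi}$ splits the integral into the two summands; the first summand collapses to $\D(\phi^{\alicealg_1}\|\psi^{\alicealg_1})$ because the $\alicealg_1$-marginal of $\phi^{\alicealg_1\alicealg_2}$ in this expectation is exactly $\phi^{\alicealg_1}$, and the second is $\D(\phi^{\alicealg_1\alicealg_2}\|\phi^{\alicealg_1}\otimes\psi^{\alicealg_2})$ by definition.

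An alternative, more modular argument avoids the modular-operator manipulation by combining Propositions already available: one applies Proposition~\ref{prop:Uhlmann_monotonicity_Schwarz} (Uhlmann monotonicity) twice against suitable trace-preserving maps to get the inequality $\le$, and a complementary inequality from the other direction, pinning down equality. Concretely, the inclusion $\alicealg_1\hookrightarrow\alicealg_1\otimes\alicealg_2$, $a\mapsto a\otimes\cI$, and the compatibility of the relevant states with these inclusions give $\D(\phi^{\alicealg_1}\|\psi^{\alicealg_1})\le \D(\phi^{\alicealg_1\alicealg_2}\|\psi^{\alicealg_1}\otimes\psi^{\alicealg_2})$, and Proposition~\ref{prop:properties_entropy}(3) together with $\phi^{\alicealg_1}\otimes\psi^{\alicealg_2}\le$ (some multiple of) $\psi^{\alicealg_1}\otimes\psi^{\alicealg_2}$ handles bounding the remainder term. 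However, getting exact equality this way is delicate and essentially forces one back to the modular computation, so I would present the direct modular-operator argument as the main line.

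The main obstacle is the infinite-dimensional subtlety in the first paragraph: justifying that the positive-cone vector of a product state on $\alicealg_1\otimes\alicealg_2$ is the tensor of the positive-cone vectors of the marginals, and that the relative modular operator genuinely factors across the tensor decomposition (including support-projection bookkeeping when $\supp(\phi^{\alicealg_i})$ is a proper subprojection). In finite dimensions this is transparent from Example~\ref{exam:relativeentropy} — $\mathbf{\Delta}_{\phi,\psi^{\alicealg_1}\otimes\psi^{\alicealg_2}} = \phi^{\alicealg_1\alicealg_2}\otimes ((\sigma^{\psi^{\alicealg_1}}\otimes\sigma^{\psi^{\alicealg_2}})^{-1})^T$, which visibly separates — but the general tracial von Neumann case needs the standard structure theory of Tomita--Takesaki for tensor products. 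Since the excerpt has already committed to citing \cite{ohyaQuantumEntropyIts2004} for this statement, the honest and economical choice is to defer the technical heart to that reference and only record the finite-dimensional intuition via Example~\ref{exam:relativeentropy} plus the additivity statement in Proposition~\ref{prop:direct_sum_entropy}, rather than reproving the modular machinery from scratch.
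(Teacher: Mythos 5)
The paper provides no proof of this proposition — it is stated purely as a citation to Corollary~5.20 of Ohya--Petz, with no argument reproduced in the text. Your conclusion that the honest and economical choice is to defer to that reference is exactly what the paper does, so the proposal matches the paper's approach.
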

\begin{proposition}[Monotonicity]\label{prop:divergence_data_processing}
	If $\phi^{\alicealg_1 \alicealg_2}$ and  $\psi^{\alicealg_1 \alicealg_2}$ are two positive normal linear functionals on the von Neumann algebra $\alicealg_1 \otimes \alicealg_2$, then $\D(\phi^{\alicealg_1} \| \psi^{\alicealg_1} ) \leq \D(\phi^{\alicealg_1 \alicealg_2} \| \psi^{\alicealg_1 \alicealg_2} )$. As a result, if $\psi$ is a state on $\alicealg_1 \otimes \alicealg_2 \otimes \alicealg_3$ then $\rI(\alicealg_1 \otimes \alicealg_2 : \alicealg_3)_{\psi}
	\geq \rI(\alicealg_2 : \alicealg_3)_{\psi}$. 
\end{proposition}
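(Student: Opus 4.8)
\textbf{Proof plan for Proposition \ref{prop:divergence_data_processing} (monotonicity of relative entropy under restriction to a tensor factor).}

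The plan is to deduce this directly from the Uhlmann monotonicity theorem already quoted in the excerpt as \prop{Uhlmann_monotonicity_Schwarz}. The point is that the inclusion $\alicealg_1 \hookrightarrow \alicealg_1 \otimes \alicealg_2$, $a \mapsto a \otimes \cI_{\alicealg_2}$, is a unital $*$-homomorphism, hence in particular a unital map satisfying the Schwarz inequality (a $*$-homomorphism is multiplicative, so $\alpha(a^*a) = \alpha(a)^*\alpha(a)$ with equality, a fortiori $\geq$). Writing $\alpha$ for this inclusion, the hypotheses of \prop{Uhlmann_monotonicity_Schwarz} to check are $\phi^{\alicealg_1} \circ \alpha \leq \phi^{\alicealg_1\alicealg_2}$ and $\psi^{\alicealg_1}\circ\alpha \leq \psi^{\alicealg_1\alicealg_2}$ as positive normal linear functionals on $\alicealg_1$. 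But by the very definition of the restriction, $\phi^{\alicealg_1}(a) = \phi^{\alicealg_1\alicealg_2}(a\otimes\cI) = (\phi^{\alicealg_1\alicealg_2}\circ\alpha)(a)$, so in fact equality holds (likewise for $\psi$), and equality certainly implies $\leq$. Thus \prop{Uhlmann_monotonicity_Schwarz} applied with $\alicealg_1$ in the role of ``$\alicealg_2$'' and $\alicealg_1\otimes\alicealg_2$ in the role of ``$\alicealg_1$'' yields exactly $\D(\phi^{\alicealg_1\alicealg_2}\|\psi^{\alicealg_1\alicealg_2}) \geq \D(\phi^{\alicealg_1}\|\psi^{\alicealg_1})$.

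For the ``as a result'' clause: apply the first part with the algebra $\alicealg_1\otimes\alicealg_2$ playing the role of the small tensor factor sitting inside $(\alicealg_1\otimes\alicealg_2)\otimes\alicealg_3$, taking $\phi = \psi^{\alicealg_1\alicealg_2\alicealg_3}$ and the reference functional to be $\psi^{\alicealg_1\alicealg_2}\otimes\psi^{\alicealg_3}$ on $(\alicealg_1\otimes\alicealg_2)\otimes\alicealg_3$. Restricting $\psi^{\alicealg_1\alicealg_2}\otimes\psi^{\alicealg_3}$ to $\alicealg_1\otimes\alicealg_2$ gives $\psi^{\alicealg_1\alicealg_2}$, so by the first part $\D\big(\psi^{\alicealg_1\alicealg_2\alicealg_3}\,\|\,\psi^{\alicealg_1\alicealg_2}\otimes\psi^{\alicealg_3}\big) \geq \D\big(\psi^{\alicealg_1\alicealg_2}\,\|\,\psi^{\alicealg_1\alicealg_2}\big) = 0$; that is not yet what we want. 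Instead one should compare the two joint states $\psi^{\alicealg_1\alicealg_2\alicealg_3}$ and its ``$\alicealg_2$-doubly-marginalized'' partner: note $\rI(\alicealg_1\otimes\alicealg_2:\alicealg_3)_\psi = \D(\psi^{\alicealg_1\alicealg_2\alicealg_3}\|\psi^{\alicealg_1\alicealg_2}\otimes\psi^{\alicealg_3})$ and $\rI(\alicealg_2:\alicealg_3)_\psi = \D(\psi^{\alicealg_2\alicealg_3}\|\psi^{\alicealg_2}\otimes\psi^{\alicealg_3})$, and now apply the first part of the proposition with $\alpha$ the inclusion $\alicealg_2\otimes\alicealg_3 \hookrightarrow \alicealg_1\otimes\alicealg_2\otimes\alicealg_3$, $\phi^{\alicealg_1\alicealg_2\alicealg_3}=\psi^{\alicealg_1\alicealg_2\alicealg_3}$, $\psi^{\alicealg_1\alicealg_2\alicealg_3}=\psi^{\alicealg_1\alicealg_2}\otimes\psi^{\alicealg_3}$ (reference). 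Restricting $\psi^{\alicealg_1\alicealg_2}\otimes\psi^{\alicealg_3}$ to $\alicealg_2\otimes\alicealg_3$ yields $\psi^{\alicealg_2}\otimes\psi^{\alicealg_3}$ (restriction of a product is the product of restrictions), and restricting $\psi^{\alicealg_1\alicealg_2\alicealg_3}$ to $\alicealg_2\otimes\alicealg_3$ yields $\psi^{\alicealg_2\alicealg_3}$; so the first part gives precisely $\rI(\alicealg_1\otimes\alicealg_2:\alicealg_3)_\psi \geq \rI(\alicealg_2:\alicealg_3)_\psi$.

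I do not anticipate a serious obstacle here: the only things to verify carefully are (i) that the coordinate inclusion of a tensor factor is a unital normal $*$-homomorphism, hence satisfies the Schwarz inequality as required by \prop{Uhlmann_monotonicity_Schwarz}, and (ii) that restriction of a product functional along such an inclusion is again the appropriate product, which is immediate from $(\psi^{\alicealg_1}\otimes\psi^{\alicealg_2})(a\otimes\cI) = \psi^{\alicealg_1}(a)$ and the analogous identity on $\alicealg_2$. The mild subtlety worth a sentence in the write-up is that \prop{Uhlmann_monotonicity_Schwarz} is stated for an arbitrary unital Schwarz map $\alpha:\alicealg_1\to\alicealg_2$ with the marginal conditions phrased as inequalities $\psi_i^{\alicealg_2}\circ\alpha \leq \psi_i^{\alicealg_1}$; in our application these are equalities, so the hypothesis holds trivially, and no normality-of-$\alpha$ issue arises beyond what is already needed to make the von Neumann tensor product and its states behave well, which is standard.
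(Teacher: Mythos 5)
Your proposal is correct and uses exactly the paper's approach: the paper's proof of the first part is precisely the observation that $\alpha : a \mapsto a \otimes \cI_{\alicealg_2}$ is a unital homomorphism (hence a Schwarz map) with $\phi^{\alicealg_1\alicealg_2}\circ\alpha = \phi^{\alicealg_1}$, followed by an application of Proposition~\ref{prop:Uhlmann_monotonicity_Schwarz}. Your handling of the ``as a result'' clause (which the paper leaves implicit) is also the intended one, after you correct your initial misstep mid-paragraph.
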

\begin{proof}
	The map $\alpha:\alicealg_1 \rightarrow \alicealg_1 \otimes \alicealg_2 : a \mapsto a \otimes \Id_{\alicealg_2}$ is a unital homomorphism, and hence satisfies Schwarz's inequality. Thus
	\begin{equation*}
		\D (\phi^{\alicealg_1} \| \psi^{\alicealg_1} ) = \D (\phi^{\alicealg_1 \alicealg_2} \circ \alpha \| \psi^{\alicealg_1 \alicealg_2} \circ \alpha ) \leq \D (\phi^{\alicealg_1 \alicealg_2} \| \psi^{\alicealg_1 \alicealg_2})
	\end{equation*}
	by~\Cref{prop:Uhlmann_monotonicity_Schwarz}. 
\end{proof}

\begin{proposition}[Quantum Gibb's inequality]\label{prop:divergence_gibbs_inequality}
	Let $\phi^{\alicealg_1 \alicealg_2}$ be a normal state on the von Neumann algebra $\alicealg_1 \otimes \alicealg_2$, and $\psi^{\alicealg_1}$, $\psi^{\alicealg_2}$ be two normal state on $\alicealg_1$ and $\alicealg_2$ respectively. Then
	\begin{equation*}
		\rI(\alicealg_1 :\alicealg_2)_{\phi} \leq \D (\phi^{\alicealg_1 \alicealg_2} \| \psi^{\alicealg_1} \otimes  \psi^{\alicealg_2}).
	\end{equation*}
\end{proposition}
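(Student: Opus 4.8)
\textbf{Proof plan for Proposition~\ref{prop:divergence_gibbs_inequality}.} The plan is to reduce the statement to the monotonicity property (Uhlmann monotonicity, Proposition~\ref{prop:Uhlmann_monotonicity_Schwarz}) applied to a suitable channel, mirroring the classical proof of Gibbs' inequality for mutual information. First I would recall that by Definition~\ref{def:mutual_info_min}, $\rI(\alicealg_1 : \alicealg_2)_{\phi} = \D(\phi^{\alicealg_1 \alicealg_2} \| \phi^{\alicealg_1} \otimes \phi^{\alicealg_2})$, so the goal is to show
\begin{equation*}
	\D(\phi^{\alicealg_1 \alicealg_2} \| \phi^{\alicealg_1} \otimes \phi^{\alicealg_2}) \leq \D(\phi^{\alicealg_1 \alicealg_2} \| \psi^{\alicealg_1} \otimes \psi^{\alicealg_2}).
\end{equation*}
The natural strategy is to decompose the right-hand side using the ``chain rule'' already available in the excerpt, namely Proposition~\ref{prop:entropy_tensor}, which gives
\begin{equation*}
	\D(\phi^{\alicealg_1 \alicealg_2} \| \psi^{\alicealg_1} \otimes \psi^{\alicealg_2}) = \D(\phi^{\alicealg_1} \| \psi^{\alicealg_1}) + \D(\phi^{\alicealg_1 \alicealg_2} \| \phi^{\alicealg_1} \otimes \psi^{\alicealg_2}).
\end{equation*}
Then I would apply Proposition~\ref{prop:entropy_tensor} a second time, with the roles of $\alicealg_1$ and $\alicealg_2$ swapped, to the second term: viewing $\phi^{\alicealg_1}\otimes\psi^{\alicealg_2}$ as a product state over $\alicealg_2 \otimes \alicealg_1$, we get
\begin{equation*}
	\D(\phi^{\alicealg_1 \alicealg_2} \| \phi^{\alicealg_1} \otimes \psi^{\alicealg_2}) = \D(\phi^{\alicealg_2} \| \psi^{\alicealg_2}) + \D(\phi^{\alicealg_1 \alicealg_2} \| \phi^{\alicealg_1} \otimes \phi^{\alicealg_2}).
\end{equation*}
Combining these two identities,
\begin{equation*}
	\D(\phi^{\alicealg_1 \alicealg_2} \| \psi^{\alicealg_1} \otimes \psi^{\alicealg_2}) = \D(\phi^{\alicealg_1} \| \psi^{\alicealg_1}) + \D(\phi^{\alicealg_2} \| \psi^{\alicealg_2}) + \rI(\alicealg_1 : \alicealg_2)_{\phi}.
\end{equation*}
Since $\phi^{\alicealg_i}$ and $\psi^{\alicealg_i}$ are both normal states (hence $\phi^{\alicealg_i}(\cI) = \psi^{\alicealg_i}(\cI) = 1$), part 1 of Proposition~\ref{prop:properties_entropy} gives $\D(\phi^{\alicealg_i} \| \psi^{\alicealg_i}) \geq 0$ for $i = 1,2$, and the claimed inequality follows immediately by dropping these two nonnegative terms.

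The main thing to check carefully is that Proposition~\ref{prop:entropy_tensor} is stated in the excerpt in exactly the symmetric form I need — it is stated for a normal state $\phi^{\alicealg_1 \alicealg_2}$ and product state $\psi^{\alicealg_1} \otimes \psi^{\alicealg_2}$, and I am applying it once as written and once after relabeling $(\alicealg_1, \alicealg_2) \mapsto (\alicealg_2, \alicealg_1)$, which is legitimate since the tensor product $\alicealg_1 \otimes \alicealg_2 \cong \alicealg_2 \otimes \alicealg_1$ canonically and all the functionals transport accordingly. One should also verify that $\D(\phi^{\alicealg_1 \alicealg_2} \| \phi^{\alicealg_1} \otimes \psi^{\alicealg_2})$ is finite (or else handle $+\infty$ separately) so that the additive identities make sense; if the right-hand side $\D(\phi^{\alicealg_1 \alicealg_2} \| \psi^{\alicealg_1} \otimes \psi^{\alicealg_2})$ is $+\infty$ the inequality is trivial, and otherwise all intermediate quantities are finite because relative entropy is monotone under the restriction maps (Proposition~\ref{prop:divergence_data_processing}) and each summand in the decomposition is nonnegative. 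I do not anticipate a serious obstacle here — this is essentially the standard ``mutual information is the minimum relative entropy to a product state'' argument, and all the operator-algebraic ingredients (the two-fold chain rule and positivity of relative entropy between states) are already assembled in Section~\ref{sec:qinfo}. The only mild subtlety is bookkeeping the symmetry in Proposition~\ref{prop:entropy_tensor}, which I would state explicitly as a one-line remark before invoking it the second time.
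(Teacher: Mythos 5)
Your proposal is correct and takes essentially the same approach as the paper: two applications of Proposition~\ref{prop:entropy_tensor} (once on each factor) to arrive at $\D(\phi^{\alicealg_1 \alicealg_2} \| \psi^{\alicealg_1}\otimes \psi^{\alicealg_2}) = \D(\phi^{\alicealg_1} \| \psi^{\alicealg_1}) + \D(\phi^{\alicealg_2} \| \psi^{\alicealg_2}) + \rI(\alicealg_1:\alicealg_2)_{\phi}$, followed by nonnegativity of the two leftover relative-entropy terms via Proposition~\ref{prop:properties_entropy}, part (1). Your remark about the symmetry of Proposition~\ref{prop:entropy_tensor} and the trivial $+\infty$ case is sensible bookkeeping that the paper leaves implicit.
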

\begin{proof}
	Using~\Cref{prop:entropy_tensor} first on $\alicealg_1$ and then on $\alicealg_2$, we see that
	\begin{align*}
		\D & (\phi^{\alicealg_1 \alicealg_2}  \| \psi^{\alicealg_1} \otimes  \psi^{\alicealg_2} )	- \rI(\alicealg_1 :\alicealg_2)_{\phi} \\ 
		&= \D (\phi^{\alicealg_1} \| \psi^{\alicealg_1}) + \D (\phi^{\alicealg_1 \alicealg_2}\| \phi^{\alicealg_1} \otimes  \psi^{\alicealg_2} ) - \D (\phi^{\alicealg_1 \alicealg_2} \| \phi^{\alicealg_1} \otimes  \phi^{\alicealg_2} )	\\
		&= \D (\phi^{\alicealg_1} \| \psi^{\alicealg_1}) + \D (\phi^{\alicealg_2} \| \psi^{\alicealg_2}), 
	\end{align*}
	which is non-negative by~\Cref{prop:properties_entropy}, part (1). 
\end{proof}

In this appendix, we only use mutual information in the very restricted context of classical-quantum states discussed below. However, 
we have not seen~\Cref{def:mutual_info_min} in the literature previously, and it's interesting to discuss other possible definitions. 
For instance, the double dual $\alicealg^{**}$ of a $C^*$-algebra $\alicealg$ is a von Neumann algebra containing $\alicealg$, such that any state $\psi$ on $\alicealg$ extends to a normal state $\widehat{\psi}$ on $\alicealg^{**}$. Thus we can define the relative entropy between two states $\phi$ and $\psi$ on a $C^*$-algebra as the relative entropy $\D(\widehat{\phi}, \widehat{\psi})$ between the normal states $\widehat{\phi}$ and $\widehat{\psi}$ on $\alicealg^{**}$. If $\alicealg$ happens to be a von Neumann algebra and $\phi$ and $\psi$ are normal states, then $\D(\widehat{\phi},\widehat{\psi}) = \D(\phi,\psi)$, so this does not lead to a new notion of relative entropy. 

If $\psi_1$ and $\psi_2$ are two states on $C^*$-algebras $\alicealg_1$ and $\alicealg_2$ respectively, then there is a unique state $\psi_1 \otimes_{min} \psi_2$ on the min-tensor product $\alicealg_1 \otimes_{min} \alicealg_2$ such that $\psi_1 \otimes \psi_2(a \otimes b) = \psi_1(a) \psi_2(b)$ for all $a \in \alicealg_1$, $b \in \alicealg_2$, and this pulls back to a unique state $\psi_1 \otimes_{max} \psi_2$ on $\alicealg_1 \otimes_{max} \alicealg_2$ with the same property (the definition for the min/max tensor product are the standard definition used for $C^*$ algebra theory, and can be found in, e.g.~\cite[Section 3.8]{goldbringConnesEmbeddingProblem2021}). Hence if $\psi$ is a state on the max tensor product $\alicealg_1 \otimes_{max} \alicealg_2$, then we can define the mutual information between $\alicealg_1$ and $\alicealg_2$ for the state $\psi$ to be 
\begin{equation*}
	\rI(\alicealg_1 : \alicealg_2)_{\psi}^{max} := \D(\psi, \psi^{\alicealg_1} \otimes_{max} \psi^{\alicealg_2}),
\end{equation*}
where $\psi^{\alicealg_i}$ is the restriction of $\psi$ to $\alicealg_i$ inside of $\alicealg_1 \otimes_{max} \alicealg_2$. 
If $\psi$ is a state on $\alicealg_1 \otimes_{min} \alicealg_2$, then we can define $\rI(\alicealg_1  : \alicealg_2)_{\psi}^{min} := \D(\psi, \psi^{\alicealg_1} \otimes_{min} \psi^{\alicealg_2})$ similarly. Any state $\psi$ on $\alicealg_1 \otimes_{min} \alicealg_2$ pulls back to a state $\widetilde{\psi}$ on $\alicealg_1 \otimes_{max} \alicealg_2$, so there are seemingly two different choices for the mutual information between $\alicealg_1$ and $\alicealg_2$ in this case, $\rI(\alicealg_1 : \alicealg_2)_{\psi}^{min}$ and $\rI(\alicealg_1 : \alicealg_2)_{\widetilde{\psi}}^{max}$. However, there is a surjective homomorphism from $\alicealg_1 \otimes_{max} \alicealg_2$ to $\alicealg_1 \otimes_{min} \alicealg_2$, so the following lemma shows that $\rI(\alicealg_1 : \alicealg_2)_{\psi}^{min} = \rI(\alicealg_1 : \alicealg_2)_{\widetilde{\psi}}^{max}$. 
\begin{lemma}\label{lem:naturalrelentropy}
	If $\alpha : \alicealg \to \bobalg$ is a surjective $*$-homomorphism between $C^*$-algebras, and $\phi$ and $\psi$ are states on $\bobalg$, then $D(\phi \circ \alpha, \psi \circ \alpha) = \D(\phi, \psi)$.
\end{lemma}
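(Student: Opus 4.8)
\textbf{Proof proposal for Lemma~\ref{lem:naturalrelentropy}.}

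The plan is to reduce the statement to the monotonicity property for relative entropy that is already available in the excerpt, namely \Cref{prop:Uhlmann_monotonicity_Schwarz}. That proposition is stated for unital maps satisfying the Schwarz inequality; a surjective $*$-homomorphism $\alpha : \alicealg \to \bobalg$ is in particular a unital $*$-homomorphism, so it satisfies the Schwarz inequality (indeed with equality, $\alpha(a^*a) = \alpha(a)^*\alpha(a)$), and we may take $\psi_1^{\alicealg_1} = \phi \circ \alpha$, $\psi_1^{\alicealg_2} = \phi$, $\psi_2^{\alicealg_1} = \psi \circ \alpha$, $\psi_2^{\alicealg_2} = \psi$; the hypotheses $\psi_i^{\alicealg_2}\circ\alpha \le \psi_i^{\alicealg_1}$ hold with equality by construction. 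This immediately gives the inequality $\D(\phi,\psi) \le \D(\phi\circ\alpha, \psi\circ\alpha)$ (in the notation there, $\D(\psi_1^{\alicealg_2}\|\psi_2^{\alicealg_2}) \le \D(\psi_1^{\alicealg_1}\|\psi_2^{\alicealg_1})$). So one direction is essentially free.

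The remaining task is the reverse inequality $\D(\phi\circ\alpha,\psi\circ\alpha) \le \D(\phi,\psi)$. The natural approach is to exhibit a ``partial inverse'' going the other way: since $\alpha$ is surjective, by the first isomorphism theorem for $C^*$-algebras $\bobalg \cong \alicealg / \ker\alpha$, and $\ker\alpha$ is a (closed, two-sided) ideal, so there is a central projection in $\alicealg^{**}$ cutting out a copy of $\bobalg^{**}$. The cleanest way to carry this out is to pass to the bidual: $\alpha$ extends to a normal surjective $*$-homomorphism $\alpha^{**} : \alicealg^{**} \to \bobalg^{**}$, which is the quotient by a central projection $z \in \alicealg^{**}$; that is, $\alpha^{**}$ restricts to a normal $*$-isomorphism from $z\alicealg^{**}$ onto $\bobalg^{**}$. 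Under this identification the states $\widehat{\phi\circ\alpha}$ and $\widehat{\psi\circ\alpha}$ on $\alicealg^{**}$ are supported on $z\alicealg^{**}$ (because $\phi\circ\alpha$ and $\psi\circ\alpha$ annihilate $\ker\alpha$), and they correspond under the isomorphism to $\widehat\phi$ and $\widehat\psi$ on $\bobalg^{**}$. Relative entropy is invariant under $*$-isomorphisms (this follows from \Cref{prop:Uhlmann_monotonicity_Schwarz} applied to the isomorphism and its inverse, both of which are unital Schwarz maps between the relevant corners, or one can invoke that the Tomita--Takesaki data transports along isomorphisms), and it is unaffected by restricting to the support projection by \Cref{prop:direct_sum_entropy} applied to the decomposition $\alicealg^{**} = z\alicealg^{**} \oplus (1-z)\alicealg^{**}$, on which both states vanish. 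Hence $\D(\phi\circ\alpha,\psi\circ\alpha) = \D(\widehat{\phi\circ\alpha},\widehat{\psi\circ\alpha}) = \D(\widehat\phi,\widehat\psi) = \D(\phi,\psi)$, which gives the lemma outright (and subsumes the easy direction above).

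I would present the argument in the second form, since it gives equality in one stroke, but keep the monotonicity argument in mind as a sanity check. The main obstacle is the bookkeeping around the bidual: one must justify that $\alpha^{**}$ is normal and surjective with kernel a weak-$*$-closed ideal generated by a central projection $z$, that $\widehat{\phi\circ\alpha} = \widehat\phi \circ \alpha^{**}$ (compatibility of the canonical extensions with $\alpha$), and that the support of $\widehat{\phi\circ\alpha}$ lies under $z$. These are standard facts about the universal enveloping von Neumann algebra (see e.g.\ the references on operator algebras cited in the excerpt), but stating them carefully is the only non-routine part; none of it involves the quantum-information machinery, just $C^*$/von Neumann algebra structure theory. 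Once those are in place, \Cref{prop:direct_sum_entropy} and the isomorphism-invariance of $\D$ finish the proof in two lines.
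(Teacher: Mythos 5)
Your argument is essentially correct, but the paper actually does not prove this lemma at all: it writes ``The proof of~\Cref{lem:naturalrelentropy} follows from~\cite[Theorem 6.19]{hiaiQuantumFDivergencesNeumann2021a}; since we do not make further use of this lemma, we leave the complete proof as an exercise for the reader.'' So there is nothing in the source to compare against, and what you have written is a genuine self-contained proof the paper declined to supply.

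Your bidual argument is sound and uses precisely the ingredients the paper does make available. The key facts you lean on are all correct: $\alpha^{**}:\alicealg^{**}\to\bobalg^{**}$ is a normal surjective $*$-homomorphism, its kernel is $(1-z)\alicealg^{**}$ for a central projection $z$, and $\alpha^{**}$ restricts to a normal $*$-isomorphism $z\alicealg^{**}\cong\bobalg^{**}$. The compatibility $\widehat{\phi\circ\alpha}=\widehat{\phi}\circ\alpha^{**}$ holds because both sides are normal and agree on the $\sigma$-weakly dense subalgebra $\alicealg\subseteq\alicealg^{**}$; from this it follows immediately that $\widehat{\phi\circ\alpha}$ (and likewise $\widehat{\psi\circ\alpha}$) annihilates $\ker\alpha^{**}=(1-z)\alicealg^{**}$, so its support lies under $z$ --- a slightly cleaner statement than ``annihilates $\ker\alpha$,'' which is only $\sigma$-weakly dense in the kernel. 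Then \Cref{prop:direct_sum_entropy} applied to $\alicealg^{**}=z\alicealg^{**}\oplus(1-z)\alicealg^{**}$ kills the $(1-z)$-summand (where both functionals are zero, so $\D(0\|0)=0$), and $*$-isomorphism invariance of relative entropy (via \Cref{prop:Uhlmann_monotonicity_Schwarz} applied to the isomorphism and its inverse, both unital $*$-homomorphisms and hence Schwarz maps) gives equality with $\D(\widehat\phi,\widehat\psi)=\D(\phi,\psi)$. One minor remark: the ``easy direction'' monotonicity argument in your first paragraph is superfluous once the equality argument works, as you yourself note, so I would drop it rather than keep it as a sanity check. Also note that since the general (non-tracial) definition of relative entropy is needed on $\alicealg^{**}$, you are implicitly relying on the paper's remark that the Araki definition and \Cref{prop:direct_sum_entropy}, \Cref{prop:Uhlmann_monotonicity_Schwarz} hold for arbitrary von Neumann algebras in standard form; it's worth saying so explicitly.
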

The proof of~\Cref{lem:naturalrelentropy} follows from~\cite[Theorem 6.19]{hiaiQuantumFDivergencesNeumann2021a}; since we do not make further use of this lemma, we leave the complete proof as an exercise for the reader. Similarly, if $\psi$ is a normal state on the tensor product $\alicealg_1 \otimes \alicealg_2$ of two von Neumann algebras, then $\rI(\alicealg_1 : \alicealg_2)_{\psi} = \rI(\alicealg_1 : \alicealg_2)_{\psi}^{min} = \rI(\alicealg_1 : \alicealg_2)_{\widetilde{\psi}}^{max}$. 

\subsubsection{Classical-quantum states}
For a discrete finite set $\X$, let $\C^{\X}$ denote the von Neumann algebra of functions from $\X$ to $\C$. To match the standard notation from quantum information, let $\bra{x}$ denote the indicator function for $x \in X$. These functions span $\C^{\X}$, and give an isomorphism between $\C^{\X}$ and the algebra of $|\X| \times |\X|$ diagonal matrices. If $\alicealg$ is another von Neumann algebra, then $\C^{\X} \otimes \alicealg = \bigoplus_{x \in X} \bra{x} \otimes \alicealg$ is the von Neumann algebra of $|\X| \times |\X|$ diagonal matrices with coefficients from $\alicealg$, and every normal state on $\C^{\X} \otimes \alicealg$ is of the form 
\begin{equation*}
	\text{\gls{CQstate}}= \sum_{x \in \X} \P(x)\bra{x} \otimes \phi_x 
\end{equation*}
for some collection of normal states $\{ \phi_x \}_{x \in \X}$ on $\alicealg$ and probability measure $\P(x)$ on $\X$. Hence such states are called \emph{classical-quantum states on $\alicealg$ with classical part $\X$}. When dealing with multiple classical subsystems, we denote $\phi_x$ by $\phi^{\X\A}_{\X=x}$. Also, note that if $\phi^{\X\A}$ is a classical-quantum state, then $\phi^{\X}$ is the classical distribution $P$ on $\X$. We use the following example to connect our definition with the standard definition for classical-quantum state used in quantum information. 
\begin{example}
	Let $\X$ be a discrete finite set, $\P(x)$ be a probability distribution over $\X$ and let $(\alicealg, \tau)$ be a tracial von Neumann algebra. Define the collection of normal state $\{ \phi_x \}_{x \in \X}$ acting on $\alicealg$ as $\phi_x(A) = \tau(\sigma_x A)$ for some positive element $\sigma_x \in \alicealg^{+}$ with $\tau(\sigma_x) = 1$. We define the classical-quantum state $\phi^{\cX \cA}= \sum_{x \in \X} \P(x)\bra{x} \otimes \phi_x$, and we see that for all $A \in \cM_{|X|}(\bC) \otimes A$
	\begin{equation*}
		\phi^{\cX \cA}(A) = \Tr \otimes \tau\left(A \cdot \left(  \sum_{x \in \X} \P(x)\ketbra{x}{x} \otimes \sigma_x \right)\right)
	\end{equation*}
	where $\Tr$ in the above equation is defined over $\cM_{|X|}(\bC)$. In this case, one can intuitively think of $\sum_{x \in \X} \P(x)\ketbra{x}{x} \otimes \sigma_x$ as the ``density matrix" for the state $\phi^{\cX \cA}$, and we see that this is consistent with the standard definition for classical-quantum state in quantum information literatures (e.g.~\cite[Definition 4.3.5]{wildeQuantumInformationTheory2013}). 
\end{example}

We now prove some properties of classical-quantum states used in~\cite[Section 5.1]{bavarianAnchoredParallelRepetition2021}.

\begin{proposition}[Chain rule for relative entropy]\label{prop:divergence_chain_rule}
	Let $\phi^{\X \alicealg}= \sum_{x} \P(x)\bra{x} \otimes \phi_x^\alicealg$ and $\psi^{\X \alicealg}= \sum_{x} \Qsf(x) \bra{x} \otimes \psi^\alicealg$ be two classical-quantum states on $\alicealg$ with classical part $\X$. Then
	\begin{equation*}
		\D(\phi^{\X \alicealg} \| \psi^{\X \alicealg}) = \D(\P \| \Qsf) + \Ex_{x \sim \P} \D(\phi_x^{\alicealg} \| \psi^{\alicealg}_x ),
	\end{equation*}
	where $\D(\P \| \Qsf) = \sum_x \P(x) \log \frac{\P(x)}{\Qsf(x)}$ denotes the relative entropy between two classical distributions $\P$ and $\Qsf$. As a result, $\D(\phi \| \psi) \geq \Ex_{x \sim \P} \D(\phi_x^{\alicealg} \| \psi_x^{\alicealg})$.
\end{proposition}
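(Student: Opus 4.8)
The statement to prove is the chain rule for relative entropy between two classical-quantum states on $\alicealg$ with the same classical part $\X$. The plan is to reduce the computation to the additivity of relative entropy under direct sums (\Cref{prop:direct_sum_entropy}) combined with the scaling property in \Cref{prop:properties_entropy}, part (2). Recall that $\C^{\X} \otimes \alicealg = \bigoplus_{x \in \X} \bra{x} \otimes \alicealg$, so $\phi^{\X\alicealg} = \bigoplus_{x} \P(x)\, \phi_x^{\alicealg}$ and $\psi^{\X\alicealg} = \bigoplus_x \Qsf(x)\, \psi_x^{\alicealg}$ as positive normal linear functionals on a direct sum of copies of $\alicealg$ (here I keep the subscript $x$ on $\psi_x$ since the statement as written uses $\psi_x^{\alicealg}$ inside the expectation, even though the classical part of $\psi$ is a single distribution $\Qsf$; the general form allows each summand to differ).

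First I would apply \Cref{prop:direct_sum_entropy} (iterated over the finite index set $\X$) to write
\begin{equation*}
	\D(\phi^{\X\alicealg} \| \psi^{\X\alicealg}) = \sum_{x \in \X} \D\big(\P(x)\,\phi_x^{\alicealg} \,\big\|\, \Qsf(x)\,\psi_x^{\alicealg}\big).
\end{equation*}
Then for each $x$ I would invoke \Cref{prop:properties_entropy}, part (2), with $\alpha = \P(x)$ and $\beta = \Qsf(x)$, using that $\phi_x^{\alicealg}$ is a state so $\phi_x^{\alicealg}(\cI) = 1$:
\begin{equation*}
	\D\big(\P(x)\,\phi_x^{\alicealg} \,\big\|\, \Qsf(x)\,\psi_x^{\alicealg}\big) = \P(x)\,\D(\phi_x^{\alicealg}\|\psi_x^{\alicealg}) - \P(x)\log\tfrac{\Qsf(x)}{\P(x)} = \P(x)\,\D(\phi_x^{\alicealg}\|\psi_x^{\alicealg}) + \P(x)\log\tfrac{\P(x)}{\Qsf(x)}.
\end{equation*}
Summing over $x$ gives exactly $\D(\P\|\Qsf) + \Ex_{x\sim\P}\D(\phi_x^{\alicealg}\|\psi_x^{\alicealg})$, as claimed. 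The final ``as a result'' clause follows immediately since $\D(\P\|\Qsf) \geq 0$ by the classical Gibbs inequality (or by \Cref{prop:properties_entropy}, part (1), applied to the abelian algebra $\C^{\X}$).

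I do not anticipate a serious obstacle here; the only points requiring a little care are: (i) checking that the direct-sum decomposition of $\phi^{\X\alicealg}$ and $\psi^{\X\alicealg}$ is literally the one to which \Cref{prop:direct_sum_entropy} applies, i.e.\ that the standard form of $\C^{\X}\otimes\alicealg$ decomposes compatibly as a direct sum of standard forms of $\alicealg$ — this is routine since $\C^{\X}$ is finite-dimensional abelian; (ii) handling the boundary case where $\Qsf(x) = 0$ for some $x$ in the support of $\P$, in which case both sides are $+\infty$ (the support condition $\supp(\phi^{\X\alicealg}) \leq \supp(\psi^{\X\alicealg})$ fails), and the identity holds trivially with the convention $\log\frac{\P(x)}{0} = +\infty$; one should also note that terms with $\P(x) = 0$ contribute $0$ to both sides. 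These are the kinds of bookkeeping details I would spell out, but the core argument is the two-line application of additivity plus scaling.
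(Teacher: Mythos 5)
Your proof is correct and essentially identical to the paper's: both decompose $\D(\phi^{\X\alicealg}\|\psi^{\X\alicealg})$ as a sum over $x$ via \Cref{prop:direct_sum_entropy} and then apply the scaling identity from \Cref{prop:properties_entropy} to each summand, yielding $\D(\P\|\Qsf)$ plus the expected conditional relative entropy. (Worth noting: the paper's proof cites part (3) of \Cref{prop:properties_entropy} for the scaling step, but it is in fact part (2) that is used, as you correctly identify; the paper's citation appears to be a typo.)
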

\begin{proof}
	Using~\Cref{prop:direct_sum_entropy} and part (3) of~\Cref{prop:properties_entropy},
	\begin{align*}
		\D(\phi^{\X \alicealg} \| \psi^{\X \alicealg}) &= \sum_{x} \D( \P(x)\phi_x^\alicealg \| \Qsf(x) \psi_x^\alicealg)  \\
		&= \sum_{x} \P(x) \D( \phi_x^\alicealg \| \psi_x^\alicealg)  + \P(x) \log(\frac{\P(x)}{\Qsf(x)})\\
		&= \Ex_{x \sim \P} \D(\phi_x^{\alicealg} \| \psi_x^{\alicealg} ) + \D(\P \| \Qsf).
	\end{align*}
	Since relative entropy between classical distributions is non-negative, $\D(\phi \| \psi) \geq \Ex_{x \sim \P} \D(\phi_x^{\alicealg} \| \psi_x^{\alicealg})$.
\end{proof}
\begin{proposition}[Conditional mutual information]\label{prop:conditional_mutual_info}
	Let $\phi^{\X \alicealg_1 \alicealg_2} = \sum_{x} P(x) \ket{x} \otimes \phi_x$ be a classical-quantum state on $\alicealg_1 \tensor \alicealg_2$ with classical part $\X$. Then
	\begin{equation*}
		\rI(\X \alicealg_1 : \alicealg_2)_{\phi} - \rI(\X : \alicealg_2)_{\phi}  =  \Ex_{x \sim \P} \rI(\alicealg_1  : \alicealg_2)_{\phi_x}.
	\end{equation*}
\end{proposition}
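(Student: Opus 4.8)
\textbf{Proof proposal for Proposition~\ref{prop:conditional_mutual_info}.}

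The plan is to mimic the standard finite-dimensional argument for the chain rule of conditional mutual information, but phrased entirely in terms of the Araki relative entropy and the additivity/chain-rule results already established in this appendix. First I would unpack both mutual information terms using~\Cref{prop:entropy_tensor} (the ``splitting'' identity $\D(\phi^{\alicealg_1\alicealg_2} \| \psi^{\alicealg_1} \otimes \psi^{\alicealg_2}) = \D(\phi^{\alicealg_1} \| \psi^{\alicealg_1}) + \D(\phi^{\alicealg_1\alicealg_2} \| \phi^{\alicealg_1} \otimes \psi^{\alicealg_2})$), applied with the first algebra taken to be $\C^{\X} \otimes \alicealg_1$. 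This gives
\begin{equation*}
	\rI(\X\alicealg_1 : \alicealg_2)_{\phi} = \D(\phi^{\X\alicealg_1\alicealg_2} \| \phi^{\X\alicealg_1} \otimes \phi^{\alicealg_2}) = \D(\phi^{\X\alicealg_1} \| \phi^{\X\alicealg_1}) + \D(\phi^{\X\alicealg_1\alicealg_2} \| \phi^{\X\alicealg_1} \otimes \phi^{\alicealg_2}),
\end{equation*}
where the first summand vanishes by part (1) of~\Cref{prop:properties_entropy}; likewise $\rI(\X:\alicealg_2)_{\phi} = \D(\phi^{\X\alicealg_2} \| \phi^{\X} \otimes \phi^{\alicealg_2})$. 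So the left-hand side of the proposition reduces to a difference of two relative entropies against the common ``second factor'' $\phi^{\alicealg_2}$, and both states involved are classical-quantum with classical part $\X$.

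Next I would apply the chain rule~\Cref{prop:divergence_chain_rule} to each term. The state $\phi^{\X\alicealg_1\alicealg_2}$ is classical-quantum on $\alicealg_1 \otimes \alicealg_2$ with classical part $\X$ and conditional states $\phi_x$; the reference state $\phi^{\X\alicealg_1} \otimes \phi^{\alicealg_2}$ is classical-quantum with the same classical distribution $\P$ and conditional states $\phi^{\alicealg_1}_{\X=x} \otimes \phi^{\alicealg_2}$. Hence
\begin{equation*}
	\D(\phi^{\X\alicealg_1\alicealg_2} \| \phi^{\X\alicealg_1} \otimes \phi^{\alicealg_2}) = \D(\P \| \P) + \Ex_{x \sim \P} \D\!\left(\phi_x \,\big\|\, \phi^{\alicealg_1}_{\X=x} \otimes \phi^{\alicealg_2}\right) = \Ex_{x \sim \P} \D\!\left(\phi_x \,\big\|\, \phi^{\alicealg_1}_{\X=x} \otimes \phi^{\alicealg_2}\right),
\end{equation*}
and similarly $\D(\phi^{\X\alicealg_2} \| \phi^{\X} \otimes \phi^{\alicealg_2}) = \Ex_{x \sim \P}\D(\phi^{\alicealg_2}_{\X=x} \| \phi^{\alicealg_2})$, noting that the restriction of $\phi_x$ to $\alicealg_2$ is exactly $\phi^{\alicealg_2}_{\X=x}$. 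Subtracting the two, the difference is $\Ex_{x\sim\P}\big[\D(\phi_x \| \phi^{\alicealg_1}_{\X=x}\otimes\phi^{\alicealg_2}) - \D(\phi^{\alicealg_2}_{\X=x} \| \phi^{\alicealg_2})\big]$. Then for each fixed $x$ I apply~\Cref{prop:entropy_tensor} once more, this time to $\phi_x$ on $\alicealg_1\otimes\alicealg_2$ with reference $\phi^{\alicealg_1}_{\X=x}\otimes\phi^{\alicealg_2}$, rewriting $\D(\phi_x \| \phi^{\alicealg_1}_{\X=x}\otimes\phi^{\alicealg_2})$ as $\D(\phi^{\alicealg_2}_{\X=x} \| \phi^{\alicealg_2}) + \D(\phi_x \| \phi^{\alicealg_2}_{\X=x}\otimes\phi^{\alicealg_1}_{\X=x})$ — being careful to order the tensor factors so that the ``first factor'' in the lemma statement is $\alicealg_2$. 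The second summand is precisely $\rI(\alicealg_1 : \alicealg_2)_{\phi_x}$ by~\Cref{def:mutual_info_min} (mutual information is symmetric in the two algebras since $\D(\phi_x \| \phi^{\alicealg_1}_{\X=x}\otimes\phi^{\alicealg_2}_{\X=x}) = \D(\phi_x \| \phi^{\alicealg_2}_{\X=x}\otimes\phi^{\alicealg_1}_{\X=x})$, the two reference states being literally equal). The $\D(\phi^{\alicealg_2}_{\X=x}\|\phi^{\alicealg_2})$ terms cancel, leaving $\Ex_{x\sim\P}\rI(\alicealg_1:\alicealg_2)_{\phi_x}$, as claimed.

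The main obstacle I anticipate is purely bookkeeping: making sure at each invocation of~\Cref{prop:entropy_tensor} that the algebra designated as the ``first factor'' in that lemma matches the algebra whose reference marginal I want to peel off, and tracking which tensor-factor ordering is in play, since the lemma as stated is not manifestly symmetric. A secondary subtlety is verifying that all the states appearing are genuinely normal so that the relative-entropy identities apply — but this is automatic here because $\phi^{\X\alicealg_1\alicealg_2}$ is assumed to be a (normal) classical-quantum state, its restrictions and partial states are normal, and finite tensor products of normal states are normal. No approximation or limiting argument is needed; the whole proof is an algebraic identity among relative entropies, so it should be short.
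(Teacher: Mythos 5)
Your proposal is correct and is essentially the paper's own proof: both reduce the two mutual-information terms to expectations over $x$ via \Cref{prop:divergence_chain_rule}, then apply \Cref{prop:entropy_tensor} once inside the expectation to collapse the difference to $\rI(\alicealg_1:\alicealg_2)_{\phi_x}$. Your opening invocation of \Cref{prop:entropy_tensor} to ``unpack'' $\rI(\X\alicealg_1:\alicealg_2)_\phi$ is superfluous (it just returns the definition after noting $\D(\phi^{\X\alicealg_1}\|\phi^{\X\alicealg_1})=0$), and your care about tensor-factor ordering is sound; otherwise the argument matches the paper's step for step.
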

\begin{proof}
	The restriction of $\phi^{\X \alicealg}$ to $\C^{\X}$ is $\phi^{\X} = \sum_x \P(x) \bra{x}$. Since $\D(P,P) = 0$,~\Cref{prop:divergence_chain_rule} implies that 
	\begin{equation*}
		\rI(\X: \alicealg_2)_{\phi} = \D \left(\sum_x \P(x)  \ket{x} \otimes \phi_x^{\alicealg_2} \| \sum_x  \P(x) \ket{x} \otimes \sum_{y}\P(y)  \phi_y^{\alicealg_2} \right) = \Ex_{x \sim \P} \D (\phi_x^{\alicealg_2} \| \phi^{\alicealg_2})~,
	\end{equation*}
	where $\phi^{\alicealg_2} = \sum_y P(y) \phi_y^{\alicealg_2}$. Similarly, 
	\begin{align*}
		\rI(\X \alicealg_1 : \alicealg_2)_{\phi} = \D(\phi^{X \alicealg_1 \alicealg_2} \| \phi^{X \alicealg_1} \otimes \phi^{\alicealg_2}) &= \Ex_{x \sim \P} \D(\phi_x \| \phi_x^{\alicealg_1} \otimes \phi^{\alicealg_2}).
	\end{align*}
	Applying~\Cref{prop:entropy_tensor} to $\alicealg_2$, we see that
	\begin{align*} 
		\rI(\X \alicealg_1 : \alicealg_2)_{\phi} - \rI(\X : \alicealg_2)_{\phi} &=\Ex_{x \sim \P} \D(\phi_x^{\alicealg_1 \alicealg_2} \| \phi_x^{\alicealg_1} \otimes \phi^{\alicealg_2} ) - \D(\phi_x^{\alicealg_2}  \| \phi^{\alicealg_2})
		\\&= \Ex_{x \sim \P} \D(\phi_x^{\alicealg_1 \alicealg_2} \| \phi_x^{\alicealg_1} \otimes \phi_x^{\alicealg_2} ) = \Ex_{x \sim \P} \rI(\alicealg_1  : \alicealg_2)_{\phi_x}.
	\end{align*}
\end{proof}
We can now prove a von Neumann algebraic version of \emph{quantum Raz's Lemma}, which is a central tool in~\cite{bavarianAnchoredParallelRepetition2021}. This is a quantum analogue of Raz's lemma, which is a key part of many proofs of the classical parallel repetition theorem~\cite{razParallelRepetitionTheorem1995a, holensteinParallelRepetitionSimplifications2009,barakStrongParallelRepetition2009}. 
\begin{lemma}[Quantum Raz's Lemma]
	\label{lem:quantum_raz}
	Let $\phi^{\X \alicealg}= \phi^{\X_1 \X_2 \ldots \X_n \alicealg}$ and $\psi^{\X \alicealg}=\psi^{\X_1}\otimes \psi^{\X_2}\otimes \ldots \otimes \psi^{\X_n} \otimes \psi^\alicealg$ be two classical-quantum states with classical component $\X= \X_1 \times \X_2 \times \ldots \X_n$. Then
	\begin{equation}
		\sum_{i=1}^n \, \rI (\X_i : \alicealg)_{\phi} \,\leq\, \D (\phi^{\X \alicealg} \,  \| \psi^{\X \alicealg})\;. 
	\end{equation}
\end{lemma}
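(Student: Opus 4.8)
\textbf{Proof plan for Quantum Raz's Lemma (\Cref{lem:quantum_raz}).}
The plan is to mimic the standard classical argument, which expands the relative entropy $\D(\phi^{\X\alicealg}\|\psi^{\X\alicealg})$ into a sum of terms indexed by $i$ using a chain rule, and then lower bounds each term by the mutual information $\rI(\X_i:\alicealg)_\phi$. The key point is that everything we need — additivity of relative entropy under direct sums (\Cref{prop:direct_sum_entropy}), the chain rule for classical-quantum states (\Cref{prop:divergence_chain_rule}), monotonicity (\Cref{prop:divergence_data_processing}), and the quantum Gibbs inequality (\Cref{prop:divergence_gibbs_inequality}) — has already been established for normal states on von Neumann algebras, so the structure of the classical proof carries over essentially verbatim once we are careful about which algebra each state lives on.

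First I would set up notation: write $\X_{<i} = \X_1\times\cdots\times\X_{i-1}$ and $\X_{\le i}$, $\X_{>i}$ similarly, and for each prefix value $x_{<i}$ let $\phi^{\X_i\cdots\X_n\alicealg}_{\X_{<i}=x_{<i}}$ denote the conditional classical-quantum state (this is legitimate since $\X_{<i}$ is a classical subsystem and $\C^{\X_{<i}}\otimes(\text{rest})$ decomposes as a direct sum over $x_{<i}$). The main step is a telescoping identity: applying the chain rule (\Cref{prop:divergence_chain_rule}) iteratively to peel off the classical registers $\X_1,\X_2,\dots,\X_n$ one at a time, and using that $\psi^{\X\alicealg}$ is a product state so that its conditionals are trivial, I would obtain
\begin{equation*}
	\D(\phi^{\X\alicealg}\|\psi^{\X\alicealg}) = \sum_{i=1}^{n} \Ex_{x_{<i}\sim\phi^{\X_{<i}}} \D\!\left(\phi^{\X_i\alicealg}_{\X_{<i}=x_{<i}} \,\big\|\, \psi^{\X_i}\otimes\phi^{\alicealg}_{\X_{<i}=x_{<i}}\right),
\end{equation*}
possibly after also applying monotonicity (\Cref{prop:divergence_data_processing}) to discard the registers $\X_{>i}$ inside each term. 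Getting this identity cleanly is where I would spend the most care: I need the chain rule in the ``conditional'' form, i.e. for each fixed prefix $x_{<i}$ treat $\X_i$ as the classical part and $\X_{>i}\alicealg$ (or just $\alicealg$, after monotonicity) as the quantum part, and I must track that the reference state on the quantum side is exactly $\psi^{\X_i}\otimes\phi^\alicealg_{\X_{<i}=x_{<i}}$ rather than something involving $\psi^\alicealg$ — this works because $\D$ only increases when we replace $\phi^\alicealg_{\X_{<i}=x_{<i}}$ by $\psi^\alicealg$ downward, but here we want the chain rule to produce a tight equality, so the product reference state against which $\X_i$ is compared should have $\X_i$-marginal $\psi^{\X_i}$ and $\alicealg$-marginal the true conditional $\phi^\alicealg_{\X_{<i}=x_{<i}}$; this is exactly what \Cref{prop:divergence_chain_rule} gives when set up correctly.

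Finally, for each $i$ I would bound the $i$-th summand from below by $\rI(\X_i:\alicealg)_\phi$. The quantum Gibbs inequality (\Cref{prop:divergence_gibbs_inequality}) says that for the classical-quantum state $\phi^{\X_i\alicealg}$ and any product reference state $\P\otimes\sigma$ with $\P$ on $\C^{\X_i}$ and $\sigma$ on $\alicealg$, we have $\rI(\X_i:\alicealg)_\phi \le \D(\phi^{\X_i\alicealg}\|\P\otimes\sigma)$; averaging this over $x_{<i}$ — using that $\phi^{\X_i\alicealg} = \Ex_{x_{<i}}\phi^{\X_i\alicealg}_{\X_{<i}=x_{<i}}$ and convexity of relative entropy in its arguments, or alternatively applying monotonicity/chain rule once more to the classical register $\X_{<i}$ — yields $\rI(\X_i:\alicealg)_\phi \le \Ex_{x_{<i}} \D(\phi^{\X_i\alicealg}_{\X_{<i}=x_{<i}}\|\psi^{\X_i}\otimes\phi^\alicealg_{\X_{<i}=x_{<i}})$, which is precisely the $i$-th term. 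Summing over $i$ gives $\sum_i \rI(\X_i:\alicealg)_\phi \le \D(\phi^{\X\alicealg}\|\psi^{\X\alicealg})$. The main obstacle, as noted, is bookkeeping the conditional states and making sure the chain rule is applied with the right reference state at each stage; the von Neumann algebra subtleties are already absorbed into the cited propositions, so no new operator-algebraic input is needed.
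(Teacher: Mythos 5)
There is a genuine gap in the ``telescoping identity'' step, which is where you said you would need to be most careful, so let me be concrete. If you iterate \Cref{prop:divergence_chain_rule} one classical factor at a time, the reference for the quantum part is forced to be $\psi^\alicealg$ at every stage: peeling $\X_1$ gives
$\D(\phi^{\X\alicealg}\|\psi^{\X\alicealg})=\D(\phi^{\X_1}\|\psi^{\X_1})+\Ex_{x_1}\D(\phi^{\X_{>1}\alicealg}_{x_1}\|\psi^{\X_{>1}}\otimes\psi^\alicealg)$, and continuing, the exact identity you get is
\begin{equation*}
\D(\phi^{\X\alicealg}\|\psi^{\X\alicealg})=\sum_{i=1}^n\Ex_{x_{<i}}\D(\phi^{\X_i}_{x_{<i}}\|\psi^{\X_i})+\Ex_x\D(\phi^\alicealg_x\|\psi^\alicealg)\,,
\end{equation*}
with the $\alicealg$ register appearing only in the final remainder, not inside each term, and with $\psi^\alicealg$ (not the conditional $\phi^\alicealg_{x_{<i}}$) in that remainder. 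The chain rule does not ``give'' the reference state $\psi^{\X_i}\otimes\phi^{\alicealg}_{x_{<i}}$ no matter how it is set up. Moreover, your proposed formula is \emph{not} an equality: writing $T_i:=\Ex_{x_{<i}}\D(\phi^{\X_i\alicealg}_{x_{<i}}\|\psi^{\X_i}\otimes\phi^\alicealg_{x_{<i}})$, one can check using \Cref{prop:entropy_tensor} and the chain rule for mutual information that
\begin{equation*}
\sum_{i=1}^n T_i \,=\, \D(\phi^{\X\alicealg}\|\psi^{\X\alicealg})-\D(\phi^\alicealg\|\psi^\alicealg)\,,
\end{equation*}
so there is a gap of $\D(\phi^\alicealg\|\psi^\alicealg)\ge 0$. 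As written, the ``possibly after applying monotonicity'' hedge does not repair this, because monotonicity discards registers, while the discrepancy here comes from swapping $\psi^\alicealg$ for $\phi^\alicealg_{x_{<i}}$ inside each term.

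That said, the plan is salvageable with a different derivation of the $T_i$ decomposition, because the inequality goes the right way. Apply \Cref{prop:divergence_chain_rule} with $\X_{<i}$ classical to both $\phi^{\X_{\le i}\alicealg}$ and $\phi^{\X_{<i}\alicealg}$ against the corresponding product reference states, subtract, and use \Cref{prop:entropy_tensor} on the difference; this gives the telescoping identity $T_i=\D(\phi^{\X_{\le i}\alicealg}\|\psi^{\X_{\le i}\alicealg})-\D(\phi^{\X_{<i}\alicealg}\|\psi^{\X_{<i}\alicealg})$, hence $\sum_i T_i\le\D(\phi^{\X\alicealg}\|\psi^{\X\alicealg})$. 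Your step bounding each $T_i\ge\rI(\X_i:\alicealg)_\phi$ via the Gibbs inequality plus monotonicity over $\X_{<i}$ is correct, and the two together give the lemma. Once repaired in this way your route is genuinely different from (and arguably more streamlined than) the paper's proof, which instead expands $\D$ via repeated application of \Cref{prop:entropy_tensor} into classical prefix terms plus one quantum remainder $\D(\phi^{\X\alicealg}\|\phi^\X\otimes\psi^\alicealg)$, applies Gibbs to each, and then leans on \Cref{prop:conditional_mutual_info} to telescope the resulting mutual informations before invoking monotonicity; your version distributes the quantum contribution across the $T_i$'s and avoids the conditional mutual information identity entirely.
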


\begin{proof}
	Let $\X_{\leq i} := \X_{1} \X_{2} \cdots \X_{i}$ and $\X_{\geq i} := \X_{i} \X_{i+1} \cdots \X_{n}$. For each $2 \leq i \leq n$, ~\Cref{prop:conditional_mutual_info} implies that 
	\begin{align}
		\nonumber \rI(\X_{\leq i-1} \alicealg: \X_i)_{\phi^{\X \alicealg}} - \rI(\X_{ \leq i-1}: \X_i)_{\phi^{\X \alicealg}} &=  \Ex_{x_{<i} \sim \phi^{\X_{\leq i-1}}}  \rI( \X_i :\alicealg)_{\phi_{x_{<i}}^{\X_{\leq i}\alicealg}} \\
		\label{eq:raz_1a} &= \rI(\X_{\leq i} : \alicealg)_{\phi^{\X \alicealg}} - \rI(\X_{\leq i-1} : \alicealg)_{\phi^{\X \alicealg}}. 
	\end{align} 
	Repeatedly applying~\Cref{prop:entropy_tensor}, we get that
	\begin{align*}
		\D(\phi^{\X \alicealg} \| \psi^{\X \alicealg}) &= \D(\phi^{\X_1}|\psi^{\X_1}) + \D(\phi^{\X \alicealg} |\phi^{\X_1}  \tensor \psi^{ \X_{\geq 2} \alicealg}) \\
		&=\D(\phi^{\X_1}|\psi^{\X_1}) + \D(\phi^{\X_1 \X_2}|\phi^{\X_1} \tensor \psi^{\X_2} ) 
		+ \D(\phi^{\X \alicealg} |\phi^{\X_{\leq 2}}  \tensor \psi^{ \X_{\geq 2} \alicealg}) \\
		&= \sum_{i=1}^n \D(\phi^{\X_{\leq i}}|\phi^{\X_{\leq i-1}} \tensor \psi^{\X_i}) + \D(\phi^{\X \alicealg} |\phi^{\X} \tensor \psi^{\alicealg}). 
	\end{align*}
	Hence by the quantum Gibb's inequality in~\Cref{prop:divergence_gibbs_inequality}, 
	\begin{align*}
		\D(\phi^{\X \alicealg} \| \psi^{\X \alicealg}) &\geq \sum_{i=2}^{n} \rI(\X_{\leq i -1 } : \X_{i})_{\phi^{\X \A}} +  \rI(\X: \alicealg)_{\phi^{\X \A}}.       \end{align*}
	Solving for $\rI(\X_{\leq i -1 } : \X_{i})_{\phi^{\X \A}}$ in~\Cref{eq:raz_1a}, we get the telescoping sum
	\begin{align*}
		\sum_{i=2}^{n} \rI(\X_{\leq i -1 } : \X_{i})_{\phi^{\X \A}} & =  
		\sum_{i=2}^{n} \left(\rI(\X_{\leq i-1} \alicealg: \X_i)_{\phi^{\X \alicealg}} - \rI(\X_{\leq i} : \alicealg)_{\phi^{\X \alicealg}} + \rI(\X_{\leq i-1} : \alicealg)_{\phi^{\X \alicealg}} \right)  \\
		&= \sum_{i=2}^{n} \rI(\X_{\leq i-1} \alicealg: \X_i)_{\phi^{\X \alicealg}} + \rI(\X_1 : \alicealg)_{\phi^{\X \alicealg}} - \rI(\X : \alicealg)_{\phi^{\X \alicealg}}.
	\end{align*}
	By~\Cref{prop:divergence_data_processing}, $\rI(\X_{\leq i -1} \alicealg : \X_{ i})_{\phi^{\X \A}} \geq \rI(\alicealg : \X_{i})$, so 
	\begin{align*}
		\D(\phi^{\X \alicealg} \| \psi^{\X \alicealg}) \geq \sum_{i=2}^{n} \rI(\X_{\leq i -1 } : \X_{i})_{\phi^{\X \A}} + \rI(\X : \alicealg)_{\phi^{\X \alicealg}} & = \sum_{i=2}^{n} \rI(\X_{\leq i -1} \alicealg : \X_{ i})_{\phi^{\X \A}} + \rI(\X_1 : \A) \\ 
		& \geq \sum_{i=1}^{n} \rI(\X_i : \alicealg)_{\phi^{\X \alicealg}}.
	\end{align*}
\end{proof}

\subsection{Proof of~\Cref{thm:anchoringparallelrep}}

Having presented the von Neumann algebra framework for nonlocal games and some quantum information theory tools within it, we prove~\Cref{thm:anchoringparallelrep} in this subsection. Before we begin, we first give some additional background on classical probability which is necessarily for the proof.  

\subsubsection{Probability distributions, random variables, and expectations.} 
In this appendix, we use $\P$, $\Qsf$, $\Ssf$ and $\Rsf$ to denote probability distributions. Given a probability distribution $P$ on a discrete finite set $\X$ and a random variable $f$ on $X$, we let $\Ex_{x \sim P} f(x)$ denote the expected value of $f$ with distribution $P$. We use $\P_X$ to denote the distribution of random variable $X$ and $\P_X(x)$ to denote the probability that $X = x$ for some value $x$. For multiple random variables, e.g.\ $X, Y, Z$, $\P_{XYZ}(x,y,z)$  denotes their joint distribution. All random variables are assumed to operate on the same probability space, which is usually implicit and clear from context. 

We use \gls{Conditionalexp} to denote the conditional distribution $\P_{YX}(y,x)/\P_X(x)$, which is defined when $\P_X(x) > 0$. We use the shorthand $\P_{X | y,z}$ to denote the distribution $\P_{X | Y =y,Z=z}$. For example, we may write $\P_{V | \omega_\mi, \vx_i, \vy_i}$ to denote $\P_{V | \Omega_\mi = \omega_\mi, \vx_i = \vx_i, \vy_i = \vy_i}$. For an event $W$ we let $\P_{X Y | W}$ denote the distribution conditioned on $W$. We use the notation $\Ex_{X} f(x)$ and $\Ex_{\P_X} f(x)$ to denote the expectation $\sum_{x} \P_X(x) f(x)$. Let $\P_{XY}$ be a joint distribution on $\X \times \Y$ and let $W$ denote an event. Then we define the distribution $\P_{X|W} \P_{Y|X}$ over $\X \times \Y$ as
\[
(\P_{X|W} \P_{Y|X})(x,y) = \P_{X|W}(x) \cdot \P_{Y | X = x}(y)~.
\]

For distributions $P_X$ and $P_Y$ over the same set $\X$ we use $\| \P_{X_0} - \P_{X_1} \|$ to denote their total variation distance,
\begin{equation} \label{eq:vardis}
	\text{\gls{Vardis}} \,=\, \frac{1}{2}\sum_{x \in \X} |\P_{X_0}(x) - \P_{X_1} (x)|\;.
\end{equation} 

We recall the following lemmas from~\cite[Section 3.2]{bavarianAnchoredParallelRepetition2021}, and we refer to Section 3.2 of the aforementioned paper for the proof.

\begin{lemma}\label{lem:trivial}
	Let $\Qsf_F$ and $\Ssf_F$ be two probability distributions for  random variable $F$ and let $\Rsf_{G | F}$ be a conditional probability distribution for random variable $\Ganchor$, conditioned on $F$. Then
	\[ \big\| \Qsf_{F} \Rsf_{G|F} - \Ssf_{F} \Rsf_{G|F}\big\|\,=\, \big\|\Qsf_{F}-\Ssf_{F}\big\|\;. \]
	Similarly, for two conditional probability distributions $\Qsf_{G|F}, \Ssf_{G|F}$ and a distribution $\Rsf_F$, 
	\[
	\big\| \Rsf_F \Qsf_{G|F} - \Rsf_F \Ssf_{G|F} \big\| \,=\, \Ex_F \big\| \Qsf_{G|F = f} - \Ssf_{G|F=f} \big\|\;,
	\]
	where $\Ex_F$ denotes the expectation over sampling $f$ from $\Rsf_F$.
\end{lemma}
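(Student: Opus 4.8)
This final statement to prove is Lemma~\ref{lem:trivial}, a pair of elementary identities about total variation distance. Let me set out how I would prove it.

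\medskip

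\textbf{Plan.} The two identities are both consequences of a direct unpacking of the definition of total variation distance \eqref{eq:vardis}, together with the fact that a conditional distribution $\Rsf_{G|F=f}$ is a probability distribution summing to $1$ for each fixed $f$. For the first identity, I would write
\[
\big\| \Qsf_{F} \Rsf_{G|F} - \Ssf_{F} \Rsf_{G|F}\big\| = \frac12 \sum_{f,g} \big| \Qsf_F(f)\Rsf_{G|F=f}(g) - \Ssf_F(f)\Rsf_{G|F=f}(g)\big| = \frac12 \sum_{f,g} \Rsf_{G|F=f}(g)\,\big| \Qsf_F(f) - \Ssf_F(f)\big|,
\]
then sum over $g$ first, using $\sum_g \Rsf_{G|F=f}(g) = 1$, to obtain $\frac12 \sum_f |\Qsf_F(f) - \Ssf_F(f)| = \|\Qsf_F - \Ssf_F\|$. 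For the second identity, I would similarly expand
\[
\big\| \Rsf_F \Qsf_{G|F} - \Rsf_F \Ssf_{G|F}\big\| = \frac12 \sum_{f,g} \Rsf_F(f)\,\big| \Qsf_{G|F=f}(g) - \Ssf_{G|F=f}(g)\big|,
\]
and recognize the inner sum over $g$ (with the factor $\tfrac12$) as $\|\Qsf_{G|F=f} - \Ssf_{G|F=f}\|$, so the whole expression equals $\sum_f \Rsf_F(f)\,\|\Qsf_{G|F=f} - \Ssf_{G|F=f}\| = \Ex_F \|\Qsf_{G|F=f} - \Ssf_{G|F=f}\|$, as claimed.

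\medskip

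\textbf{Remarks.} There is essentially no obstacle here — the only points requiring a modicum of care are (i) the interchange of the order of summation over $f$ and $g$, which is justified since all terms are nonnegative, and (ii) the domains of definition: the products $\Qsf_F \Rsf_{G|F}$ etc.\ are only meaningful where the conditioning distributions are defined, but wherever $\Qsf_F(f) = \Ssf_F(f) = 0$ (if one wished $\Rsf_{G|F=f}$ to still make sense) the corresponding summand vanishes, so these edge cases do not affect the identity. Since this is a purely bookkeeping lemma whose proof the authors relegate to a citation of \cite{bavarianAnchoredParallelRepetition2021}, I would keep the write-up to the two short computations above and not belabor it further.
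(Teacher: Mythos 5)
Your proof is correct and is the standard direct computation from the definition of total variation distance; the paper itself does not give a proof but defers to Section 3.2 of Bavarian--Vidick--Yuen, where the same unpacking-and-Fubini argument appears. Nothing to add.
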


\begin{lemma}[Data processing inequality]\label{lem:data-processing}
	Let $\Qsf_{FG}$ and $\Ssf_{FG}$ denote two probability distributions for random variables $F,G$. Then 
	\[
	\big \| \Qsf_{F} - \Ssf_F \big \| \leq \big \| \Qsf_{FG} - \Ssf_{FG} \big \|~.
	\]
\end{lemma}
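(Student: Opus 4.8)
The plan is to prove this directly from the definition of total variation distance via the triangle inequality; there is no genuine obstacle, so I would keep the argument entirely elementary. Recall from~\eqref{eq:vardis} that for two distributions on a common finite set $\X$, the total variation distance is half the $\ell_1$ distance between their probability mass functions. The distribution $\Qsf_F$ of the random variable $F$ is the marginal of the joint distribution $\Qsf_{FG}$, i.e. $\Qsf_F(f) = \sum_{g} \Qsf_{FG}(f,g)$ (summing over the range of $G$), and likewise $\Ssf_F(f) = \sum_g \Ssf_{FG}(f,g)$.

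First I would expand the left-hand side using this marginalization identity:
\[
\big\| \Qsf_F - \Ssf_F \big\| = \frac{1}{2} \sum_{f} \left| \sum_{g} \big( \Qsf_{FG}(f,g) - \Ssf_{FG}(f,g) \big) \right|.
\]
Then, applying the triangle inequality to the inner sum (with $a_g := \Qsf_{FG}(f,g) - \Ssf_{FG}(f,g)$, so that $|\sum_g a_g| \le \sum_g |a_g|$) and reindexing, I obtain
\[
\big\| \Qsf_F - \Ssf_F \big\| \le \frac{1}{2} \sum_{f} \sum_{g} \big| \Qsf_{FG}(f,g) - \Ssf_{FG}(f,g) \big| = \big\| \Qsf_{FG} - \Ssf_{FG} \big\|,
\]
which is exactly the claimed inequality.

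The only point requiring the slightest care is that the operation relating the joint distributions to the distributions of $F$ on the left-hand side is genuinely marginalization — this is immediate from the definition of the distribution of a random variable, so it is not really an obstacle. Conceptually, the statement is an instance of the fact that total variation distance is contractive under any stochastic map (here the channel that forgets the $G$-coordinate), and the computation above is precisely the specialization of that general principle; one could alternatively cite that fact, but the two-line estimate is the shortest self-contained route. This lemma will be used, together with Lemma~\ref{lem:trivial}, as a basic bookkeeping tool in the distribution-distance manipulations in the proof of the anchored parallel repetition theorem.
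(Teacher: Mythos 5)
Your proof is correct, and it is the standard one-line argument (marginalize, then apply the triangle inequality to the inner sum). The paper itself does not reproduce a proof of this lemma — it simply cites~\cite[Section~3.2]{bavarianAnchoredParallelRepetition2021} — so your elementary derivation matches what one would expect to find there, and there is nothing further to compare.
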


\subsubsection{Overview of the proof to the parallel repetition theorem} \label{sec:paralleloverview}

We give a brief overview of the proof of~\Cref{thm:anchoringparallelrep} in this subsection. This proof follows a similar structure as~\cite{bavarianAnchoredParallelRepetition2021}, which itself follows a similar structure as the proof for parallel repetition theorem for \textit{classical} values for non-local games (see, e.g.~\cite{razParallelRepetitionTheorem1995a}). These approaches argue that if a ``too good to be true" strategy exists for the parallel repeated game, then, using information theoretical tools, a strategy which violates the optimal success probability for the original game can be constructed. 

To be more precise, suppose for a $r$-fold parallel repeated game $\cG^{\otimes r}$; there exists some strategy $\strategy^{\otimes r} = \{ \cH, \ket{\psi}, \{A_\vx^\va \}, \{B_\vy^\vb \} \}$ which has a success rate higher than the value indicated in~\Cref{thm:anchoringparallelrep}. Then, the strategy $\strategy$ must be performed in some correlated manner (i.e. the answer for the question pair $(\va_i,\vb_i)$ must rely on some other question pairs which are not $(\vx_i,\vy_i)$). More precisely, since the strategies are correlated between the different folds of the parallel repeated game,  there must exist some \textit{critical subset} $C \subseteq [r]$ such that conditioning on the provers winning on the subset $C$ using this strategy, the provers can win the overall parallel repeated game with high probability. If the provers were to somehow obtain an entangled state which mimics a ``post-measurement state" in which they had already won on those $C$ coordinates, then this gives them an advantage with the remaining $[r] \setminus C$ coordinates. Notably, this also gives a comparable advantage for a single instances of the game $\cG$ by running the strategy above and embedding the coordinate $(\vx_j, \vy_j)$ onto one of the $\{(\vx_i,\vy_i)\}_{i \in [r] \setminus C}$.

Unfortunately, it is not clear how to sample such a post-measurement state locally. Since conditioning on the provers winning on coordinate $C$ might change the input distribution on coordinate $j$ (as an example, this could occur when the answer given to coordinate $j$ is entirely dependent on the question on some coordinates on $C$), and in some cases, winning on coordinate $C$ might depend on one of the provers getting a certain input on the $j$th coordinate~! This means that creating such a ``post-measurement state" would often require the full question pair $\{(\vx_i,\vy_i)\}_{i \in [r]}$ (which includes coordinates $j$). This is one of the main challenges for showing the parallel repetition theorem using this approach.

One notable example in which the above approach works is the case where the classical input distribution to the non-local game is a product distribution between both provers~\cite{jainParallelDeviceIndependentQuantum2020}; in this case, it can be shown that this ``post-measurement state" is, on average, relatively uncorrelated to the prover's question pair on coordinates outside of $C$, and hence this state can be created locally by the provers. The anchored transformation used in~\cite{bavarianAnchoredParallelRepetition2021} and this paper intuitively destroys the correlations between the provers, and by using certain conditioning (on \textit{dependency breaking variable}, which we introduce next section), the prover's distribution can be made uncorrelated, and hence a similar argument can be made.

\subsubsection{Existences of a critical subset $C$ and dependency breaking variable} 
\label{sec:p_setup}
Recall from the beginning of this appendix that $\cG = (\cX, \cA_, \mu, D)$ denotes an anchoring game which arises from~\Cref{def:anchortransformation} in this appendix. Let $\cG^{\otimes r}$ denote the $r$-fold parallel repetition game and let $\strategy^{\otimes r} = \{ \cL^2(\alicealg, \tau), \ket{\psi} = \sigma \ket{\tau}, \{A_
\vx^\va \}, \{B_\vy^\vb \} \}$ be a tracially embeddable strategy which violates the bound given in~\Cref{thm:anchoringparallelrep}. We start with the following proposition which introduce the notion of the critical subset, since the proof is the same as in~\cite{bavarianAnchoredParallelRepetition2021}, we refer to the aforementioned paper for the proof. 

\begin{proposition}[Proposition 6.5 of~\cite{bavarianAnchoredParallelRepetition2021}] \label{prop:existcriticalset}
	\label{prop:subset}
	Let $t \in \{*, co\}$ and let $\cG$ be the game indicated by \Cref{thm:anchoringparallelrep} with $\omega^{t}(\cG) < 1- \epsilon$.	Let $W$ denote the indicator for winning all $r$ coordinates for the parallel repeated game $\cG^{\otimes r}$. Suppose that $n \geq \frac{16}{\eps} \log \frac{4}{\eps \cdot \P(W)}$. Then there exists a set $C \subseteq [r]$ of size at most $t = \frac{8}{\eps} \log \frac{4}{\eps \cdot \P(W)}$ such that
	
	$$
	\Ex_I\, \P (W_i | W_C) \,\geq \, 1 - \eps/2\;,
	$$
	where $\Ex_I$ denotes the expectation over a uniformly random $i$ chosen from $[r] \setminus C$ and $\P(W_i | W_C)$ denotes the probability, using the strategy $\strategy^{\otimes r}$, of winning the $i$-th instance of $\cG$ conditioned on winning all instances indexed by $C$.
\end{proposition}
\jqnote{$C$-critical, $R = [r] \setminus (\{i\}\cup C)$-irreverential coordinate, $i$-target coordinates.}

We fix a subset \gls{Criticalset} promised by \Cref{prop:subset}, which we call the \textit{critical subset} for the parallel repeated game $\cG^{\otimes r}$. We further assume without loss of generality that $C = \{r-|C|,\ldots,n-1\}$. 

For the remainder of this section, we reintroduce \emph{dependency-breaking variables} from~\cite{bavarianAnchoredParallelRepetition2021}. These are crucial tools for controlling the correlations between the input distributions between the provers. Since most of the propositions in this section are classical in nature and are proven in~\cite{bavarianAnchoredParallelRepetition2021}, we refer to Section 4 of the original paper for the proofs.

Recall from~\Cref{sec:introcorrelations}, the distributions $\mu_X$ and $\mu_Y$ denote the marginals of the game distribution $\mu$ for the anchored game $\cG$ on the first and second coordinates respectively. We first define a ``single copy'' distribution $\hat{P}$ as the law of random variables $(\Mplayer,\Mvalue,X,Y)$, where each random variable may depend on previously defined ones. Following the same set up as~\cite[Section 4.1]{bavarianAnchoredParallelRepetition2021},  we fix a ``noise'' parameter 
\begin{equation}
	\label{eq:def-eta}
	\text{\gls{Noiseparameter}} \,=\, \frac{1}{4}\;,
\end{equation}
We define the random variable $\Mplayer$ to be a uniform distribution over the finite set $\{A,B\}$, and we define $\Mvalue$ to have the following distribution over $\X$: for all $(x,y) \in \X^2$:
\[
\P_{\Mvalue | \Mplayer = A}(x) = \left\{
\begin{array}{ll}
	\frac{\mu_x(x)}{1 - \eta_{\text{Anchor}}}  & \mbox{if $x \neq \dummy$} \\[4mm]
	\frac{\frac{1}{2} - \eta_{\text{Anchor}}}{1 - \eta_{\text{Anchor}}} & \mbox{if $x = \dummy$ }
\end{array}
\right. \quad \text{and} \quad
\P_{\Mvalue | \Mplayer = B}(y) = \left\{
\begin{array}{ll}
	\frac{\mu_y(y)}{1 - \eta_{\text{Anchor}}}  & \mbox{if $y \neq \dummy$} \\[4mm]
	\frac{\frac{1}{2} - \eta_{\text{Anchor}}}{1 - \eta_{\text{Anchor}}} & \mbox{if $y = \dummy$ }
\end{array}
\right. \;.
\]
\jqnote{Remind the reader that $\sum_x \mu_{X \setminus \bot}(x) = \frac{1}{2}$. }

In other words, conditioned on $\Mplayer=A$ (resp. $\Mplayer = B$), the variable $\Mvalue$ takes on a value in $\X$ from a rescaled version of the distribution $\mu_X$ (resp. $\mu_Y$) where less weight is given to the dummy question $\dummy$.  Finally, define the random variables $(X,Y)$ as follows. 
\begin{itemize}
	\item If $\Mplayer = A$ then $X$ is chosen to be an ``$\eta_{\text{Anchor}}$-noisy'' copy of $\Mvalue$. Precisely, $X = \Mvalue$ with probability $1 - \eta_{\text{Anchor}}$ and $X = \dummy$ with probability $\eta_{\text{Anchor}}$. Define $Y$ to equal $y$ with probability $\mu_{Y|X}(y | m)$, where $m$ is the value of $\Mvalue$. 
	\item If $\Mplayer = B$ then $Y$ is chosen to be an ``$\eta_{\text{Anchor}}$-noisy'' copy of $\Mvalue$ and $X$ equals $x$ with probability $\mu_{X|Y}(x | m)$, where $m$ is the value of $\Mvalue$. 
\end{itemize}
This specifies the distribution $\hat{P}$. We recall the following properties about $\hat{P}$ from~\cite{bavarianAnchoredParallelRepetition2021}, which will be important for the parallel repetition theorem. 

\begin{claim}[Claim 4.1 of~\cite{bavarianAnchoredParallelRepetition2021}]
	\label{clm:dependency-breaking-1}
	Conditioned on $(\Mplayer,\Mvalue)$ the random variables $X$ and $Y$ are independent. 
\end{claim}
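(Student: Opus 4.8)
The claim to prove is Claim~\ref{clm:dependency-breaking-1}: conditioned on $(\Mplayer,\Mvalue)$, the random variables $X$ and $Y$ are independent. The plan is to unwind the definition of the distribution $\hat P$ directly and verify that the joint conditional law factorizes. The construction of $\hat P$ is explicitly sequential: first $\Mplayer$, then $\Mvalue$, then the pair $(X,Y)$ where one of the two is a noisy copy of $\Mvalue$ (depending on the value of $\Mplayer$) and the other is drawn from a conditional marginal of $\mu$ given that noisy copy. So the whole proof is a short case analysis on the value of $\Mplayer \in \{A,B\}$.

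First I would fix a value of $(\Mplayer,\Mvalue)$, say $\Mplayer = A$ and $\Mvalue = m$, and write out $\P_{XY \mid \Mplayer = A, \Mvalue = m}(x,y)$. By the construction, $X$ is determined from $\Mvalue$ by an independent noising step: $X = m$ with probability $1 - \eta_{\text{Anchor}}$ and $X = \dummy$ with probability $\eta_{\text{Anchor}}$, and this randomness uses fresh coins independent of everything else. Then $Y$ is drawn as $\P_{Y \mid X = x}(y) = \mu_{Y \mid X}(y \mid m)$ — crucially, the conditioning value fed into $\mu_{Y\mid X}$ is $m = \Mvalue$, not the realized value of $X$. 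Hence, given $(\Mplayer,\Mvalue) = (A,m)$, the law of $Y$ does not depend on the realized value of $X$ at all: it is the fixed distribution $\mu_{Y\mid X}(\cdot \mid m)$. Therefore
\begin{equation*}
	\P_{XY \mid \Mplayer = A, \Mvalem}(x,y) \;=\; \P_{X \mid \Mplayer = A, \Mvalue = m}(x)\cdot \mu_{Y\mid X}(y\mid m) \;=\; \P_{X \mid \Mplayer = A, \Mvalue = m}(x)\cdot \P_{Y \mid \Mplayer = A, \Mvalue = m}(y),
\end{equation*}
which is exactly the factorization asserting conditional independence. (I notice I used an undefined macro there; in the real writeup I would just write $\Mvalue = m$ everywhere.) The case $\Mplayer = B$ is symmetric: now $Y$ is the independent noisy copy of $\Mvalue$ and $X$ is drawn from the fixed distribution $\mu_{X\mid Y}(\cdot \mid m)$ regardless of the realized $Y$, giving the same factorization. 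Averaging over the two equally likely values of $\Mplayer$ is not needed since the statement is conditional on $\Mplayer$.

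The key structural point — and the only thing one has to be careful about — is that in the definition of $(X,Y)$ the conditional marginal of $\mu$ is applied to the value $m$ of $\Mvalue$ rather than to the realized noisy value of the other coordinate; this is what breaks the dependency and is the whole reason this ``dependency-breaking'' variable is introduced. There is essentially no obstacle here: this is a routine verification from the explicit construction, and the main ``work'' is just writing the two cases cleanly and noting the independence of the noising coins. I would present it as a two-line case split with the observation above highlighted.
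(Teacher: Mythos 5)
Your proof is correct and is the same direct verification used in the cited source; the key observation — that $Y$ (resp.\ $X$) is drawn from $\mu_{Y\mid X}(\cdot \mid m)$ (resp.\ $\mu_{X\mid Y}(\cdot \mid m)$) using the value $m$ of $\Mvalue$ rather than the realized noisy coordinate — is exactly what makes the conditional law factor. (Do fix the $\Mvalem$ typo to $\Mvalue = m$ as you noted.)
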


\begin{claim}[Claim 4.2 of~\cite{bavarianAnchoredParallelRepetition2021}]
	\label{clm:dependency-breaking-2}
	$\hat{\P}_{XY|\Mplayer = A}(x,y) = \hat{\P}_{XY|\Mplayer=B}(x,y) = \mu_{XY}(x,y)$ for all $(x,y) \in \X^2$. In particular, the marginal distribution $\hat{\P}_{XY}$ is identical to the game distribution $\mu$. 
\end{claim}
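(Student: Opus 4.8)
\textbf{Proof proposal for Claim~\ref{clm:dependency-breaking-2}.}

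The plan is to verify the two stated identities by direct computation from the definition of $\hat{P}$, treating the cases $\Mplayer = A$ and $\Mplayer = B$ symmetrically (it suffices to do $\Mplayer = A$ in full and remark that $\Mplayer = B$ follows by swapping the roles of the two coordinates). The key observation driving everything is that the anchored game distribution from \Cref{def:anchortransformation} has the factored form $\mu(x,y) = \tfrac14 \mu(x,y)$ on $\cX^2$ (the original game contribution), $\tfrac14 \mu_x(x)$ when $y = \dummy$, $\tfrac14 \mu_y(y)$ when $x = \dummy$, and $\tfrac14$ when both are $\dummy$; consequently the marginal $\mu_X$ satisfies $\mu_X(\dummy) = \tfrac12$ and $\mu_X(x) = \tfrac14 \mu_x(x)$ for $x \neq \dummy$ in the notation where $\mu_x$ is the marginal of the \emph{original} game. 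Wait --- more carefully, I should be consistent with how the appendix uses $\mu_X$ versus $\mu_x$: in \Cref{sec:p_setup} the distributions $\mu_X, \mu_Y$ are the marginals of the anchored game distribution $\mu$, and these satisfy $\sum_{x \neq \dummy} \mu_X(x) = \tfrac12$ and $\mu_X(\dummy) = \tfrac12$. The first thing I would do is record this elementary fact cleanly, since it is what makes the rescaling factors $\tfrac{1}{1-\eta_{\text{Anchor}}}$ and $\tfrac{\tfrac12 - \eta_{\text{Anchor}}}{1-\eta_{\text{Anchor}}}$ combine correctly with the noise parameter $\eta_{\text{Anchor}} = \tfrac14$.

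Next I would compute the marginal law of $X$ under $\Mplayer = A$. By construction $X$ equals $\Mvalue$ with probability $1 - \eta_{\text{Anchor}}$ and equals $\dummy$ with probability $\eta_{\text{Anchor}}$, independently of $\Mvalue$. So for $x \neq \dummy$ we get $\hat{\P}_{X | \Mplayer = A}(x) = (1 - \eta_{\text{Anchor}}) \cdot \P_{\Mvalue | \Mplayer = A}(x) = (1-\eta_{\text{Anchor}}) \cdot \tfrac{\mu_X(x)}{1 - \eta_{\text{Anchor}}} = \mu_X(x)$, and for $x = \dummy$ we get $\hat{\P}_{X | \Mplayer = A}(\dummy) = (1 - \eta_{\text{Anchor}}) \cdot \P_{\Mvalue | \Mplayer = A}(\dummy) + \eta_{\text{Anchor}} = (1 - \eta_{\text{Anchor}}) \cdot \tfrac{\tfrac12 - \eta_{\text{Anchor}}}{1 - \eta_{\text{Anchor}}} + \eta_{\text{Anchor}} = \tfrac12$, which is exactly $\mu_X(\dummy)$. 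Hence the marginal of $X$ given $\Mplayer = A$ is precisely $\mu_X$.

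Then I would feed this into the conditional structure. Given $\Mplayer = A$, the definition sets $Y$ to equal $y$ with probability $\mu_{Y|X}(y \mid \Mvalue = m)$; but this needs to be read carefully. The honest intent (and what \cite{bavarianAnchoredParallelRepetition2021} uses) is that, conditioned on the value of $X$, $Y$ is drawn from $\mu_{Y|X}(\cdot \mid X)$ --- the noise has already been absorbed into $X$, so the joint $(X,Y)$ given $\Mplayer = A$ is obtained by first drawing $X \sim \mu_X$ and then $Y \sim \mu_{Y|X}(\cdot \mid X)$, which yields exactly $\hat{\P}_{XY | \Mplayer = A}(x,y) = \mu_X(x) \cdot \mu_{Y|X}(y \mid x) = \mu_{XY}(x,y)$. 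I would spell out the one subtle point here: the phrase ``$Y$ equals $y$ with probability $\mu_{Y|X}(y|m)$ where $m$ is the value of $\Mvalue$'' should be interpreted so that the resulting $(X,Y)$ pair has the right joint law; the cleanest way is to note that conditioned on $\Mvalue = m$ and $X = x$ (the latter being either $m$ or $\dummy$), $Y$ is drawn from $\mu_{Y|X}(\cdot \mid x)$, and then sum over $m$ using the marginal computation just done. By symmetry, swapping $A \leftrightarrow B$ and the two coordinates gives $\hat{\P}_{XY | \Mplayer = B}(x,y) = \mu_{XY}(x,y)$ as well. Finally, since $\Mplayer$ is uniform on $\{A,B\}$ and both conditional distributions equal $\mu_{XY}$, the marginal $\hat{\P}_{XY} = \tfrac12 \mu_{XY} + \tfrac12 \mu_{XY} = \mu_{XY}$, completing the proof.

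I do not expect a genuine obstacle here --- this is a bookkeeping verification. The only place requiring care is reconciling the slightly informal phrasing of the definition of $(X,Y)$ given $\Mplayer$ with the intended joint law, and making sure the rescaling constants in $\P_{\Mvalue \mid \Mplayer}$ are matched against $\eta_{\text{Anchor}} = \tfrac14$ and the anchored marginal $\mu_X(\dummy) = \tfrac12$; once the marginal-of-$X$ computation is in hand, the rest is immediate.
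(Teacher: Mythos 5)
Your computation of the marginal $\hat{\P}_{X|\Mplayer=A}$ is correct, and the symmetry observation for $\Mplayer=B$ and the final averaging over $\Mplayer$ are fine. The problem is the step where you "reinterpret" the definition so that conditioned on $X$, the variable $Y$ is drawn from $\mu_{Y|X}(\cdot \mid X)$. That is not a harmless rephrasing of the definition --- the definition really does draw $Y \sim \mu_{Y|X}(\cdot \mid \Mvalue)$, not $\mu_{Y|X}(\cdot \mid X)$, and the two constructions produce genuinely different joint laws over $(\Mvalue, X, Y)$. Concretely, under your proposed reading, conditionally on $\Mplayer = A$ and $\Mvalue = m \neq \dummy$, the distribution of $Y$ would change depending on whether the noise flipped $X$ to $\dummy$ or not, so $X$ and $Y$ would fail to be conditionally independent given $(\Mplayer,\Mvalue)$. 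That would contradict Claim~\ref{clm:dependency-breaking-1}, which is the entire raison d'\^etre of the dependency-breaking variable. So your reinterpretation cannot be what the construction means, and asserting that the conditional law of $Y$ given $X$ is $\mu_{Y|X}(\cdot \mid X)$ is in effect assuming the conclusion in the only case that has any content.

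To close the gap you need to actually carry out the sum over $\Mvalue$. For $x \neq \dummy$ the only contributing value is $\Mvalue = x$, and your marginal computation immediately gives $\hat{\P}_{XY|\Mplayer=A}(x,y) = \P_{\Mvalue|\Mplayer=A}(x)\,(1-\eta_{\text{Anchor}})\,\mu_{Y|X}(y\mid x) = \mu_{XY}(x,y)$, so that case is indeed as easy as you say. For $x = \dummy$ the sum has two kinds of terms,
\begin{equation*}
\hat{\P}_{XY|\Mplayer=A}(\dummy,y) \;=\; \tfrac{\tfrac12 - \eta_{\text{Anchor}}}{1-\eta_{\text{Anchor}}}\,\mu_{Y|X}(y\mid \dummy) \;+\; \tfrac{\eta_{\text{Anchor}}}{1-\eta_{\text{Anchor}}}\sum_{m \neq \dummy}\mu_X(m)\,\mu_{Y|X}(y\mid m)\;,
\end{equation*}
and showing this equals $\mu_{XY}(\dummy,y) = \tfrac12 \mu_{Y|X}(y\mid\dummy)$ requires the identity $\sum_{m \neq \dummy}\mu_{XY}(m,y) = \mu_{XY}(\dummy,y)$, i.e.\ that in the anchored distribution the event $\{X=\dummy\}$ (the prover's anchoring coin) is independent of $Y$. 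This is a genuine structural fact about the anchoring transformation in Definition~\ref{def:anchortransformation}, not a bookkeeping triviality, and it is exactly the ingredient your proof skips. Once you add it, the arithmetic collapses as expected to $\tfrac{1/2 - \eta_{\text{Anchor}}/2}{1-\eta_{\text{Anchor}}}\,\mu_{Y|X}(y\mid\dummy) = \tfrac12\,\mu_{Y|X}(y\mid\dummy)$, and the claim follows.
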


We remark that \Cref{clm:dependency-breaking-1} and \Cref{clm:dependency-breaking-2} states that, if $(x,y)$ is sampled according to $\hat{\P}$, then depending on whether $\bMplayer$ and $\bMvalue$ are conditioned, the distribution $X$ and $Y$ either follows the original distribution for the game, or becomes uncorrelated. We define the distribution $\P = (\bMplayer, \bMvalue, \bX, \bY, \bA, \bB)$ for the entirety of the input set as follows. Let $\bMplayer = ((\bMplayer)_0,\ldots,(\bMplayer)_{r-1})$, $\bMvalue= ((\bMvalue)_0,\ldots,(\bMvalue)_{r-1})$, $\bX = (\bX_0,\ldots,\bX_{r-1})$, and $\bY = (\bY_0,\ldots,\bY_{r-1})$ be vectors of random variables, define
\[
\P_{\bMplayer \bMvalue \bX \bY} \,=\, \prod_{i=0}^r \hat{\P}_{(\bMplayer)_i (\bMvalue)_i \bX_i \bY_i}~.
\]
Finally, we define random variables $\bA = (\bA_i)_{i \in [r]}$ and $\bB = (\bB_i)_{i \in [r]}$ as follows. When conditioned on $\bX$ and $\bY$, the random variables $\bA,\bB$ are independent of $\bD$ and $\bM$, and for all realizations $\vx, \vy$ of $\vx$ and $\vy$, define
\[
\P_{\bA \bB | \bX = \vx, \bY = \vy}(\va,\vb) \,=\, \bra{\psi} A_\vx^\va \, B_\vy^\vb  \ket{\psi}
\]
for all $\va = (\va_1,\ldots,\va_n) \in \A^n$ and $\vb = (\vb_1,\ldots,\vb_n) \in \B^n$. The following claim connects the distribution $P$ to the correlation set for the strategy. 

\begin{claim}[Claim 4.3 of~\cite{bavarianAnchoredParallelRepetition2021}]\label{claim:abxy}
	The marginal distribution of $\bX \bY \bA \bB$ is identical to the correlation set obtained in the repeated game $\cG^{\otimes r}$ when the provers use the strategy $\strategy^{\otimes r}$.
\end{claim}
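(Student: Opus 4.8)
The final statement to prove is \Cref{claim:abxy}, which asserts that the marginal distribution of $\bX\bY\bA\bB$ under $\P$ equals the correlation obtained in the repeated game $\cG^{\otimes r}$ when the provers use $\strategy^{\otimes r}$.

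The plan is to unwind the definitions in \Cref{sec:p_setup} and combine them with \Cref{clm:dependency-breaking-2}. First I would compute the marginal $\P_{\bX\bY}$: since $\P_{\bMplayer\bMvalue\bX\bY} = \prod_{i=0}^{r} \hat{\P}_{(\bMplayer)_i(\bMvalue)_i\bX_i\bY_i}$, marginalizing out $\bMplayer$ and $\bMvalue$ coordinate-by-coordinate gives $\P_{\bX\bY} = \prod_{i} \hat{\P}_{\bX_i\bY_i}$. By \Cref{clm:dependency-breaking-2}, each $\hat{\P}_{\bX_i\bY_i} = \mu_{XY}$, so $\P_{\bX\bY} = \mu^{\otimes r} = \mu^r$, which is exactly the question distribution of $\cG^{\otimes r}$ as defined in \Cref{sec:prenonlocalgames}. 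Next, by the definition of $\bA,\bB$ in \Cref{sec:p_setup}, conditioned on $\bX = \vx, \bY = \vy$ the variables $\bA,\bB$ are independent of $\bMplayer,\bMvalue$ (and hence the conditional law $\P_{\bA\bB|\bX=\vx,\bY=\vy}$ is already the ``full'' conditional, not merely conditioned on $\bMplayer,\bMvalue$), and $\P_{\bA\bB|\bX=\vx,\bY=\vy}(\va,\vb) = \bra{\psi}A_\vx^\va B_\vy^\vb\ket{\psi}$.

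Putting these together, for any $\vx,\vy,\va,\vb$ we get
\begin{equation*}
	\P_{\bX\bY\bA\bB}(\vx,\vy,\va,\vb) = \P_{\bX\bY}(\vx,\vy)\cdot \P_{\bA\bB|\bX=\vx,\bY=\vy}(\va,\vb) = \mu^r(\vx,\vy)\cdot \bra{\psi}A_\vx^\va B_\vy^\vb\ket{\psi}.
\end{equation*}
On the other hand, by \Cref{claim:abxy}'s target — the correlation generated by $\strategy^{\otimes r}$ in the game $\cG^{\otimes r}$ — the probability of sampling question pair $(\vx,\vy)$ and receiving answer pair $(\va,\vb)$ is exactly $\mu^r(\vx,\vy)\cdot C_{\vx,\vy,\va,\vb}$, where $C_{\vx,\vy,\va,\vb} = \bra{\psi}A_\vx^\va B_\vy^\vb\ket{\psi}$ is the correlation realized by the tracially embeddable strategy $\strategy^{\otimes r}$ (here I use that $\{A_\vx^\va\}\subseteq \alicealg$ and $\{B_\vy^\vb\}\subseteq \alicealg'$ commute, so this is a valid commuting-operator correlation). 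These two expressions agree, which proves the claim.

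I do not anticipate any real obstacle here — this is a bookkeeping lemma. The only point requiring a small amount of care is making sure that the conditional independence structure is applied in the right order: one must marginalize $\bMplayer,\bMvalue$ \emph{before} invoking the answer distribution, and use that $\bA,\bB$ are conditionally independent of $\bMplayer,\bMvalue$ given $\bX,\bY$ so that no spurious dependence on the dependency-breaking variables survives in the $\bX\bY\bA\bB$-marginal. I would also briefly note that the product structure of $\P_{\bMplayer\bMvalue\bX\bY}$ across coordinates is what makes $\P_{\bX\bY}$ a product distribution, matching $\mu^r$; this is where \Cref{clm:dependency-breaking-2} does the work, since a priori each $\hat{\P}_{\bX_i\bY_i}$ is defined through the noisy-copy construction rather than directly as $\mu$.
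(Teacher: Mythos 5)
Your proof is correct and follows exactly the route taken in the cited reference (Claim 4.3 of Bavarian--Vidick--Yuen, which this paper quotes without reproducing the proof): marginalize $\bMplayer,\bMvalue$ using the product structure, invoke \Cref{clm:dependency-breaking-2} to get $\P_{\bX\bY}=\mu^{\otimes r}$, and then use the conditional independence of $(\bA,\bB)$ from $(\bMplayer,\bMvalue)$ given $(\bX,\bY)$ to conclude $\P_{\bX\bY\bA\bB}(\vx,\vy,\va,\vb)=\mu^{r}(\vx,\vy)\,\bra{\psi}A^\va_\vx B^\vb_\vy\ket{\psi}$. No issues; your closing caution about the order in which the conditional-independence structure is applied is the one non-trivial point, and you handle it correctly.
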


We use the probability $\P$ as the distribution which models the behavior of $\strategy^{\otimes r}$ for the proof of \Cref{thm:anchoringparallelrep}. Having define the tuple of distribution $\P$, we are ready to define dependency breaking variables. These are crucial for controlling the correlations that arise when conditioning the distribution $\P$ on different events. Furthermore, we use $\blQ_C = (\bX_C,\bY_C)$ and $\blS_C = (\bA_C,\bB_C)$ to denote random variables associated with the provers' questions and answers in the coordinates indexed by the critical set $C$, and we use $\text{\gls{Critcoordinate}}=(\blQ_C, \blS_C)$ to denote the correlation for $\strategy^{\otimes r}$ over the subset of coordinate $C$. We remark that the event $\blS_C$ could potentially depend on the question pair from the coordinates outside of $C$. For $i \in [r]$ let $W_i$ denote the indicator variable for the event that the provers win round $i$ of the game for probability distribution $\P$. Let $W_C = \prod_{i \in C} W_i$, we define the \textit{dependency breaking variable} as:

\begin{definition}[Dependency breaking variable] \label{def:Depbreaking}
	Let $C \subseteq [n]$. For all $i \in [r] \setminus C$ define the \emph{$i$-th dependency-breaking variable $\bOmega_i$} as 
	\[\bOmega_i \,=\, ((\bMplayer)_i,(\bMvalue)_i)\;.\]
	Furthermore we define 
	\begin{equation}
	 \text{\gls{Depbreaking}} \,=\, (\bOmega_i)_{i \in [r] \setminus C}\qquad \text{and}\qquad \bOmega_\mi\,=\,(\bOmega_j)_{j \in [r] \setminus (C \cup \{i\})} \;,\quad \forall i \in [r] \setminus C\;.
	\end{equation}
\end{definition}

We remark that when $\eta_{\text{Anchor}} = 0$ the definition of the production distribution $\bOmega_i \blQ_C$ coincides with the one used by Holenstein~\cite{holensteinParallelRepetitionSimplifications2009}; in that case, the variable $(\bMplayer)_i$ is coupled to either $\bX_i$ or $\bY_i$ exactly. Here we set $\eta_{\text{Anchor}}$ to be a nonzero value that is related to the anchoring probability, as in~\eqref{eq:def-eta}. This ``noisy coupling'' between $\bOmega_i$ and the inputs $(\vx_i,\vy_i)$ is important for our analysis. Base on the above definition, we have the following claim. 

\begin{claim}[Claim 4.5 of~\cite{bavarianAnchoredParallelRepetition2021}]\label{claim:ufacts}
	The following properties hold for the distribution $\P$.
	\begin{enumerate}
		\item The joint distribution $(\bX,\bY,\bOmega)$ is product across its $r$ triples of coordinates. Furthermore, for any $i$, $\bX_i$ and $\bY_i$ are independent conditioned on $\bOmega_i$. In particular,
		$\P_{\bOmega_i \bX_i \bY_i} =\P_{\bOmega_i} \P_{\bX_i | \bOmega_i} \P_{\bY_i | \bOmega_i}$. 
		\item $\P_{\blS_C | \bX \bY} = \P_{\blS_C | \bOmega \bX \bY}$.
	\end{enumerate}
\end{claim}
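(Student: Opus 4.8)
\textbf{Proof plan for Claim~\ref{claim:ufacts}.} Both items are purely classical statements about the distribution $\P$ constructed in the previous subsection, and the proof is a direct verification from the definitions. The plan is to prove each part separately, tracing through how $\P$ was built up from the single-copy distribution $\hat{\P}$ together with the correlation $\P_{\bA\bB\mid\bX\bY}$ coming from the strategy $\strategy^{\otimes r}$.

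For item (1): recall that $\bOmega = (\bOmega_i)_{i \in [r]\setminus C}$ with $\bOmega_i = ((\bMplayer)_i, (\bMvalue)_i)$, and that by construction $\P_{\bMplayer\bMvalue\bX\bY} = \prod_{i\in[r]} \hat{\P}_{(\bMplayer)_i(\bMvalue)_i\bX_i\bY_i}$ is a product distribution over the $r$ coordinates. Restricting attention to the coordinates $i \in [r]\setminus C$ (and marginalising out the $C$ coordinates, which only affects factors indexed by $C$), the joint law of $(\bX_{[r]\setminus C}, \bY_{[r]\setminus C}, \bOmega)$ is still a product over those coordinates, since a marginal of a product distribution along a subset of the coordinate-blocks is the product of the corresponding marginals; actually the cleanest statement is that $(\bX, \bY, \bOmega)$, with $\bOmega$ indexed by $[r]\setminus C$, is a product across the $r$ coordinates once we agree that the $\bOmega$-block is trivial in the $C$ coordinates. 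First I would state this precisely and then for the per-coordinate independence invoke \Cref{clm:dependency-breaking-1}, which says exactly that in $\hat{\P}$, conditioned on $(\Mplayer,\Mvalue)$ the variables $X$ and $Y$ are independent; applied to the $i$-th factor this gives $\P_{\bOmega_i\bX_i\bY_i} = \P_{\bOmega_i}\P_{\bX_i\mid\bOmega_i}\P_{\bY_i\mid\bOmega_i}$, which is the displayed equation.

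For item (2): the point is that $\blS_C = (\bA_C, \bB_C)$ and that $\bA, \bB$ were defined so that, conditioned on $(\bX,\bY)$, they are independent of $(\bD, \bM)$ — in particular of $\bOmega$, which is a deterministic function of $\bM$ restricted to the $[r]\setminus C$ coordinates. Thus $\P_{\bA\bB\mid\bX\bY} = \P_{\bA\bB\mid\bOmega\bX\bY}$, and marginalising onto the $C$ coordinates of the answer vectors gives $\P_{\blS_C\mid\bX\bY} = \P_{\blS_C\mid\bOmega\bX\bY}$. I would write this as a two-line conditional-independence argument: $\bOmega$ is $\sigma(\bM)$-measurable, $(\bA,\bB) \perp (\bM,\bD) \mid (\bX,\bY)$ by construction, hence $(\bA,\bB)\perp\bOmega\mid(\bX,\bY)$, and conditional independence is preserved under taking a deterministic function ($\blS_C$) of $(\bA,\bB)$.

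There is no real obstacle here — this is a bookkeeping lemma. The only thing to be slightly careful about is making the index conventions consistent: $\bOmega$ is only defined on coordinates $i \in [r]\setminus C$, so when I say "$(\bX,\bY,\bOmega)$ is product across its $r$ triples of coordinates" I should either remark that the $\bOmega$-component is vacuous on $C$, or restate the conclusion as a product over $[r]\setminus C$ of the triples $(\bX_i,\bY_i,\bOmega_i)$ together with an independent product over $C$ of the pairs $(\bX_j,\bY_j)$; the statement in \cite[Claim 4.5]{bavarianAnchoredParallelRepetition2021} is the first phrasing. Since the proof is identical to the one in the reference, I would in fact keep it short, cite \Cref{clm:dependency-breaking-1} for the per-coordinate independence and the construction of $\P$ for the rest, and refer the reader to \cite[Section 4.1]{bavarianAnchoredParallelRepetition2021} for the full details, consistent with how the surrounding propositions in this appendix are handled.
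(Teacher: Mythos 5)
Your proposal is correct and takes the same approach as the paper: the paper does not include a proof of this claim but, as with the other classical probability lemmas in this section, defers to \cite[Section 4.1]{bavarianAnchoredParallelRepetition2021}, and the direct verification you outline (product structure of $\P_{\bMplayer\bMvalue\bX\bY}$ plus \Cref{clm:dependency-breaking-1} for item (1); $\bOmega$ being a function of $(\bMplayer,\bMvalue)$ together with the defining conditional independence of $(\bA,\bB)$ for item (2)) is exactly what that reference contains. Your observation about the index convention for $\bOmega$ on $[r]\setminus C$ is a correct and worthwhile clarification, and your closing suggestion to keep the write-up short and cite the reference matches the paper's actual treatment.
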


Intuitively, this claim shows that the distribution of the answer is independent of the choices of $\Omega$. By pre-sampling on $\Omega$, the provers can sample the question indexed on $[r] \setminus C$ independently. This is the crucial property which allows to estimate the post measurement state of the provers conditioning on winning coordinates $W_C$, where recall $W_C$ is the event where the provers win on all the coordinates $C$. 

The following lemma shows that if the event $W_C$ occurs with significant probability then conditioning on $W_C$ only has a moderate effect on the distribution of $(\bX_i, \bY_i)$, on average over a uniformly random choice of $i \in [r] \setminus C$. Furthermore, the distribution of $\bOmega_\mi \blQ_C\blS_C$ is close to being independent from $(\bX_i,\bY_i)$. 

\begin{lemma}[Lemma 4.6 of~\cite{bavarianAnchoredParallelRepetition2021}]
	\label{lem:classical_skew}
	Let $C\subseteq [n]$ be the critical subset which arises from \Cref{prop:subset}, we define the constant
	\begin{equation}\label{eq:delta}
		\text{\gls{PRparameter}} = \frac{1}{r-|C|} \left ( \log \frac{1}{\P(W_C)} + |C| \log |\A|^2 \right)\;.
	\end{equation}
	
	Then following inequalities hold:
	\begin{enumerate}
		\item $ \frac{1}{r-|C|}\sum_{i=1}^m \| \P_{\bOmega_i \bX_i \bY_i | W_C} -  \P_{\bOmega_i \bX_i \bY_i} \| \leq \sqrt{\eta_{\text{PR}}}$.
		\item $ \frac{1}{r-|C|}\sum_{i=1}^m\|  \P_{\bOmega \blQ_C \bX_i \bY_i | W_C} -  \P_{\bOmega \blQ_C |W_C} \P_{\bX_i \bY_i | \bOmega_i } \| \leq \sqrt{\eta_{\text{PR}}}$.
		\item $ \frac{1}{r-|C|}\sum_{i=1}^m\| \P_{\bOmega_i |W_C} \P_{\bOmega_\mi \blQ_C | \bX_i = \dummy, \bY_i = \dummy, W_C} -\P_{\bOmega_i | W_C} \P_{\bOmega_\mi \blQ_C |\bOmega_i  W_C}  \big \| = O(\sqrt{\eta_{\text{PR}}})$.
		\item $ \frac{1}{r-|C|}\sum_{i=1}^m\big\|  \P_{\bX_i \bY_i}  \P_{\bOmega_\mi \blQ_C | \bX_i = \dummy, \bY_i = \dummy, W_C} -  \P_{\bX_i \bY_i}  \P_{\bOmega_\mi \blQ_C | \bX_i, \bY_i, W_C} \big\| = O(\sqrt{\eta_{\text{PR}}})$.
	\end{enumerate}
\end{lemma}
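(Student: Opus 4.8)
\textbf{Proof plan for Lemma~\ref{lem:classical_skew}.}
The plan is to follow the structure of the proof of \cite[Lemma 4.6]{bavarianAnchoredParallelRepetition2021} essentially verbatim, since all four inequalities are statements purely about the classical distribution $\P$ defined in \Cref{sec:p_setup}, and nothing in them references the von Neumann algebra $\alicealg$ or the state $\sigma\ket{\tau}$ beyond what is already captured by the marginal distribution $\P_{\bX\bY\bA\bB}$ (via \Cref{claim:abxy}). The only ingredients needed are: the product structure of $(\bX,\bY,\bOmega)$ across coordinates (\Cref{claim:ufacts}(1)), the ``noisy coupling'' between $\bOmega_i$ and $(\bX_i,\bY_i)$ built into $\hat\P$ with parameter $\eta_{\text{Anchor}}=\frac14$, and standard information-theoretic manipulations: the chain rule for relative entropy (\Cref{prop:divergence_chain_rule} specialized to classical distributions, or just the classical chain rule), Pinsker's inequality in its classical form, the data processing inequality (\Cref{lem:data-processing}), and the elementary identities of \Cref{lem:trivial}. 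The key quantitative input is that the event $W_C$ has probability bounded below in terms of $\eta_{\text{PR}}$ by~\eqref{eq:delta}, so that $\log\frac{1}{\P(W_C)} + |C|\log|\A|^2 = (r-|C|)\,\eta_{\text{PR}}$.

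First I would prove item (1). The idea is that $\log\frac{1}{\P(W_C)}$ upper-bounds $\D(\P_{\bOmega\bX\bY\mid W_C}\,\|\,\P_{\bOmega\bX\bY})$ up to the correction from revealing $\blQ_C\blS_C$; more precisely, conditioning on $W_C$ is a function of $\blQ_C\blS_C$, and $\blQ_C\blS_C$ lives in a space of size at most (number of question pairs)$\times|\A|^{2|C|}$, so the relative entropy between the conditioned and unconditioned joint distributions of $(\bOmega,\bX,\bY,\blQ_C,\blS_C)$ is at most $\log\frac{1}{\P(W_C)}+|C|\log|\A|^2 \le (r-|C|)\eta_{\text{PR}}$ (using that the questions in coordinates $C$ are themselves functions of the answers-and-questions and that the relevant cardinalities are absorbed as in \cite{bavarianAnchoredParallelRepetition2021}). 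Since $(\bX,\bY,\bOmega)$ is product across coordinates under $\P$ by \Cref{claim:ufacts}(1), the quantum Raz-type lemma — here just the classical Raz lemma, which is the classical specialization of \Cref{lem:quantum_raz} or can be cited from \cite{razParallelRepetitionTheorem1995a,holensteinParallelRepetitionSimplifications2009} — gives
\[
\sum_{i=1}^{r-|C|}\D\bigl(\P_{\bOmega_i\bX_i\bY_i\mid W_C}\,\|\,\P_{\bOmega_i\bX_i\bY_i}\bigr)\le \D\bigl(\P_{\bOmega\bX\bY\mid W_C}\,\|\,\P_{\bOmega\bX\bY}\bigr)\le (r-|C|)\eta_{\text{PR}}\;.
\]
Dividing by $r-|C|$, applying Pinsker's inequality to each term, and then Jensen's inequality (concavity of square root) to pull the average inside the square root yields item (1).

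For items (2)--(4) I would follow the same recipe with the appropriate choice of ``extra'' random variables and the appropriate product-structure/chain-rule bookkeeping. Item (2): replace $\bOmega_i\bX_i\bY_i$ by $\bOmega\blQ_C\bX_i\bY_i$ inside the relative entropy, use that conditioned on $\bOmega_i$ the pair $(\bX_i,\bY_i)$ is product and independent of everything in coordinates $\ne i$ (\Cref{claim:ufacts}(1)), split the divergence via the chain rule into a sum of per-coordinate terms each dominated by the same total budget, apply Pinsker and Jensen. Items (3) and (4) are the place where the value $\eta_{\text{Anchor}}=\frac14$ actually matters: the point is that conditioning further on $\bX_i=\bY_i=\dummy$ (the anchor question) costs only a constant factor because $\P(\bX_i=\bY_i=\dummy\mid\bOmega_i)\ge\eta_{\text{Anchor}}^2$ is bounded below, so one transfers the bound from item (2)/(3) at the cost of a multiplicative $O(1)$ (which is why the $O(\sqrt{\eta_{\text{PR}}})$ has a big-$O$ rather than an explicit constant). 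Each of these uses \Cref{lem:trivial} to rewrite products of conditional distributions and \Cref{lem:data-processing} to discard auxiliary variables.

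The main obstacle — or rather, the only place requiring care rather than routine adaptation — is the bookkeeping in items (3) and (4): tracking exactly which variables are being conditioned on, ensuring the chain-rule decomposition genuinely telescopes so that the per-coordinate divergences sum to at most the global budget $(r-|C|)\eta_{\text{PR}}$, and verifying that conditioning on the anchor event $\{\bX_i=\dummy,\bY_i=\dummy\}$ only inflates the bound by a constant. All of this is done in \cite[Section 4]{bavarianAnchoredParallelRepetition2021} and, since these statements involve no quantum objects, the argument transfers without change; I would simply reproduce it, citing the relative-entropy and total-variation lemmas recalled above in place of their finite-dimensional analogues. No genuinely new idea is needed here — the whole force of the von Neumann algebra machinery developed in \Cref{sec:qinfo} is spent elsewhere (in the analogue of \cite[Proposition 5.1]{bavarianAnchoredParallelRepetition2021}), not in this purely classical lemma.
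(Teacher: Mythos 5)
Your proposal is correct and matches the paper's approach exactly: the paper itself gives no proof of this lemma, instead stating (just before the block of claims culminating in Lemma~\ref{lem:classical_skew}) that ``Since most of the propositions in this section are classical in nature and are proven in~\cite{bavarianAnchoredParallelRepetition2021}, we refer to Section 4 of the original paper for the proofs.'' Your outline is a faithful summary of that BVY21 argument and correctly identifies why it transfers without change, so you have done more than the paper requires for this statement.
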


We use $\eta_{\text{PR}}$ to denote the constant given in \eqref{eq:delta} for the critical set $C$ for the remainder of this appendix.

\subsubsection{Notation for the proof of \Cref{thm:anchoringparallelrep}} \label{sec:states-operators}

In this subsection, we set up several notations used for the proof of \Cref{thm:anchoringparallelrep}. We note many notation originates from this section are similar to notations from~\cite[Section 4.4]{bavarianAnchoredParallelRepetition2021}. Recall from the previous subsection that $\strategy^{\otimes r} = \{ \cL^2(\alicealg, \tau), \ket{\psi}, \{A_
\vx^\va \}, \{B_\vy^\vb \} \}$ is a tracially embeddable strategy for $\cG^{\otimes r}$ for some anchoring game $\cG$ which violates the bound given by~\Cref{thm:anchoringparallelrep}. Let $C$ be the subset promise by~\Cref{prop:subset}. For each $(\va_C,\vb_C) \in (\A^{2C}), (\vx, \vy) \in \X^{2n}$, define
\begin{equation}
	\label{eq:states-operators-1}
	A^{\vx}_{\va_C} = \sum_{\va | \va_C} A^{\vx}_{\va} \qquad\text{and} \qquad B^{\vy}_{\vb_C} = \sum_{\vb | \vb_C} B^{\vy}_{\vb} \;,
\end{equation}
where $\va | \va_C$ (resp. $\vb | \vb_C$) indicates summing over all tuples $\va$ consistent with $\va_C$ (resp. $\vb$ consistent with $\vb_C$). We see that the set $\{ A^\vx_{\va_C} \}$ (resp. $\{ B^\vy_{\vb_C} \}$) denotes a POVM with outcomes in the set $\A^C$ (resp. $\B^C$), for all $\vx$ (resp. $\vy$). For all $i \in [r-|C|]$, $\omega_\mi \in \Omega_{\mi}$, $(\vx_C, \vy_C) \in \blQ_C$, and $(x,y) \in \X^2$, define
\begin{equation}
	\label{eq:states-operators-2}
	A^{(\omega_\mi,\vx_C), x}_{\va_C}  = \Ex_{\vx | \omega_\mi,\vx_C , x} A^{\vx}_{\va_C} \qquad \text{and} \qquad  B^{(\omega_\mi, \vy_C), y}_{\vb_C} = \Ex_{\vy | \omega_\mi,\vy_C, y}B^{\vy}_{\vb_C}\;,
\end{equation}
where $\Ex_{\vx | \omega_\mi,\vx_C , x}$ is shorthand for $\Ex_{\vx | \bOmega_\mi = \omega_\mi, \vx_C = \vx_C, \vx_i = x}$ and similarly for $\Ex_{\vy | \omega_\mi, \vy_C, y}$. Let $\X_{/\dummy} = \{ \dummyx : x\in \X\}$ be a disjoint copy of $\X$. Here, for each $x\in \X$, ``$\dummyx$'' is a new symbol that is used to distinguish elements in $\X$ from elements in $\X_{/\dummy}$. 
For all $\dummyx \in \X_{/\perp}$ define 
\begin{equation}
	\label{eq:states-operators-3}
	A^{(\omega_\mi,\vx_C),\dummyx}_{\va_C} \,=\, \eta_{\text{Anchor}} \, A^{(\omega_\mi,\vx_C),\dummy}_{\va_C} + (1 - \eta_{\text{Anchor}}) \, A^{(\omega_\mi,\vx_C),x}_{\va_C}\;.
\end{equation}
We remark that $A^{(\omega_\mi,\vx_C),\dummyx}_{\va_C}$ can be equivalently defined as $\Ex_{\vx | \bOmega_\mi = \omega_\mi, \vx_C = \vx_C, ((\bMplayer)_i, (\bMvalue)_i) = (A, x)}$, and further remark that 
\begin{equation}	\label{eq:states-operators-4}
	A^{(\omega_\mi,\vx_C),\dummy / \dummy}_{\va_C} = 	A^{(\omega_\mi,\vx_C),\dummy}_{\va_C}.
\end{equation}

Using that all operators are positive semidefinite, we observe for later use that
\begin{align}
	A^{(\omega_\mi,\vx_C),\dummy}_{\va_C} &\leq\, \frac{1}{\eta_{\text{Anchor}}}\, A^{(\omega_\mi,\vx_C),\dummyx}_{\va_C}\;,\label{eq:states-operators-3a}\\
	A^{(\omega_\mi,\vx_C),x}_{\va_C} &\leq\, \frac{1}{ 1 - \eta_{\text{Anchor}}}\, A^{(\omega_\mi,\vx_C),\dummyx}_{\va_C}\;.\label{eq:states-operators-3b}
\end{align}
For all $i \in [n] \setminus C$, $\omega_\mi \in \bOmega_\mi$, $\vr_C = (\vx_C, \vy_C, \va_C, \vb_C) \in \bR_C$, $x \in \X$, for all $s \in \{x,\dummyx\}$, and $y \in \X$, define the (unnormalized) state
\begin{equation}\label{eq:def-state-y}
	\text{\gls{Parastatedependonomega}} \,=\, \left(A^{(\omega_\mi, \vr_C),s}_{\va_C}\right)^{1/2} \, \left(B^{(\omega_\mi, \vr_C),y}_{\vb_C}\right)^{1/2} \, \ket{\psi}
\end{equation}
and the normalization factor
\begin{equation} \label{eq:def-gamma}
	\text{\gls{Parastatedependonomeganormalfactor}} \,=\, \big \| \, \ket{\Phi_{(\omega_\mi, \vr_C ),s,y}} \, \big \|~.
\end{equation}
Finally for $\gamma_{(\omega_\mi, \vr_C ),s,y} \neq 0$, we let
\begin{equation}\label{eq:def-psitt}
	\text{\gls{Parastatedependonomeganormal}} = \gamma_{(\omega_\mi, \vr_C ),s,y}^{-1} \, \ket{\Phi_{(\omega_\mi, \vr_C ),s,y}},
\end{equation}
and $\ket{\wt{\Phi}_{(\omega_\mi, \vr_C ),s,y}}  = 0$ otherwise, this denotes the normalized version of the state $\ket{\Phi_{(\omega_{\mi}, \vr_C),s,y}}$. Intuitively, for $s \in \cX$, the state $\ket{\wt{\Phi}_{(\omega_\mi, \vr_C ),s,y}}$ corresponds to Alice and Bob first perform the following pre-sampling procedure for the strategy $\strategy^{\otimes r}$: 
\begin{itemize}
	\item For the coordinates in $[n] \setminus C$ , Alice and Bob have pre-sampled $\bOmega_{\mi} = \omega_\mi$.
	\item For the coordinates in $C \cup \{i\}$, the questions are sample normally.
\end{itemize}
In this case, $\gamma_{(\omega_\mi, \vr_C ),s,y}$ corresponds to the expected probability in which Alice and Bob obtain the answer pair $(\va_C, \vb_C)$ given the pre-sampling procedure above using the strategy $\strategy^{\otimes r}$. The state $\ket{\wt{\Phi}_{(\omega_\mi, \vr_C ),s,y}}$ correspond to the average post-measurement state for obtaining the answer pair $(\va_C, \vb_C)$. 

Similarly, for $s \in \cX_{\dummy}$, the scenario is similar as above, except for the coordinate $i$, Alice and Bob have pre-sampled $\bOmega_i = (A, s)$ as the outcome instead of sampling normally. By \eqref{eq:states-operators-4}
\begin{equation} \label{eq:randomremarkParallel}
	\ket{\wt{\Phi}_{(\omega_\mi, \vr_C ),\dummy / \dummy,y}} = \ket{\wt{\Phi}_{(\omega_\mi, \vr_C ), \dummy,y}}
\end{equation}
\noindent for all $y \in \cX$.

For notational convenience, since many of the operator will be used in the context in which is over an expectation over the variable $i, \omega_{\mi}$ and $\vr_C$. For the clarity of notation, we often omit these subscripts if it is clear from context. For example, for some fix $i$ and the operator $A_{\omega_\mi,\dummy\!/\vx_i}(\va_C)$ expected over $(\omega_{\mi}, \vr_C) \sim (\bOmega_\mi \times \blR_C)$, we will instead write $\Ex_{(\omega_{\mi}, \vr_C)} A_\dummyx$. As another example, for the state $\ket{\wt{\Phi}_{x,y}}$ expected over $(\omega_{\mi}, \vr_C) \sim \bOmega_\mi \times \blR_C$, this is use to represent the state $\ket{\wt{\Phi}_{\sspx}}$. To make the above intuition more concrete, we have the following proposition. 


\begin{proposition}
	\label{prop:gamma}
	For all $s \in \X$,
	\begin{equation*}
		\gamma_{\ssxy} = \Big( \P_{\bA_C \bB_C | \bOmega_\mi = \omega_\mi, \bX_i = x, \bY_i = y}(\va_C,\vb_C) \Big)^{1/2}\;,
	\end{equation*}
	and for all $s = \dummyx \in \X_{/\perp}$,
	\begin{align}
		\nonumber \gamma_{(\omega_{\mi}, \vr_C),s,y} &= \Big( \eta_{\text{Anchor}} \, \P_{\bA_C \bB_C | \bOmega_\mi = \omega_\mi, \bX_i = \dummy, \bY_i = y}(\va_C,\vb_C) + (1 - \eta_{\text{Anchor}}) \, \P_{\bA_C \bB_C  | \bOmega_\mi = \omega_\mi, \bX_i = x, \bY_i = y}(\va_C,\vb_C)\Big)^{1/2} \\
		&= \P_{\bA_C \bB_C | \bOmega_\mi = \omega_\mi, \bOmega_i = (A,x), \bY_i = y}(\va_C,\vb_C)^{1/2}~.\label{eq:gamma-def-3}
	\end{align}
\end{proposition}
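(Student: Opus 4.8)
\textbf{Proof proposal for Proposition~\ref{prop:gamma}.}

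The plan is to unpack the definitions of the operators and states from \eqref{eq:states-operators-1}--\eqref{eq:def-gamma}, and compute $\gamma_{(\omega_\mi,\vr_C),s,y}^2 = \| \ket{\Phi_{(\omega_\mi,\vr_C),s,y}}\|^2$ directly. For $s \in \X$, we have by definition $\ket{\Phi_{(\omega_\mi,\vr_C),s,y}} = (A^{(\omega_\mi,\vr_C),s}_{\va_C})^{1/2}(B^{(\omega_\mi,\vr_C),y}_{\vb_C})^{1/2}\ket{\psi}$, so
\[
\gamma_{(\omega_\mi,\vr_C),s,y}^2 = \braket{\psi| (B^{(\omega_\mi,\vr_C),y}_{\vb_C})^{1/2} A^{(\omega_\mi,\vr_C),s}_{\va_C} (B^{(\omega_\mi,\vr_C),y}_{\vb_C})^{1/2} |\psi}.
\]
First I would expand $A^{(\omega_\mi,\vr_C),s}_{\va_C} = \Ex_{\vx | \omega_\mi, \vx_C, s} A^{\vx}_{\va_C}$ and similarly for $B$, and use that $A^{\vx}_{\va_C} \in \alicealg$ commutes with $B^{\vy}_{\vb_C} \in \alicealg'$ (both are data-processed versions of the original commuting strategy operators) to move the $B$-factors together; since $B^{\vy}_{\vb_C}$ is a POVM element but not necessarily projective, I would instead note that $\braket{\psi|(B^{\vy}_{\vb_C})^{1/2} A^{\vx}_{\va_C} (B^{\vy}_{\vb_C})^{1/2}|\psi} = \braket{\psi| A^{\vx}_{\va_C} B^{\vy}_{\vb_C}|\psi}$, which holds because $A$ and $B$ commute. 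Combining this with the expectations over $\vx$ and $\vy$ and recalling the definition $\P_{\bA \bB | \bX = \vx, \bY = \vy}(\va,\vb) = \braket{\psi|A^\vx_\va B^\vy_\vb|\psi}$ together with the marginalization \eqref{eq:states-operators-1}, we get $\gamma^2 = \Ex_{\vx|\omega_\mi,\vx_C,x}\Ex_{\vy|\omega_\mi,\vy_C,y} \braket{\psi|A^\vx_{\va_C}B^\vy_{\vb_C}|\psi}$.

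The key remaining step is to identify this expectation with the conditional probability $\P_{\bA_C \bB_C | \bOmega_\mi = \omega_\mi, \bX_i = x, \bY_i = y}(\va_C,\vb_C)$. Here I would invoke \Cref{claim:ufacts}: part (1) gives that $(\bX,\bY,\bOmega)$ is product across coordinates and $\bX_i \perp \bY_i$ given $\bOmega_i$, while part (2) gives $\P_{\blS_C | \bX\bY} = \P_{\blS_C | \bOmega\bX\bY}$. Conditioning on $\bOmega_\mi = \omega_\mi$, $\bX_i = x$, $\bY_i = y$ and $\bR_C = \vr_C$ (which fixes $\vx_C,\vy_C$), the residual randomness in $\bX,\bY$ is exactly $\vx \sim \P_{\bX | \bOmega_\mi, \vx_C, \bX_i = x}$ and $\vy \sim \P_{\bY|\bOmega_\mi,\vy_C,\bY_i = y}$ independently, and averaging $\braket{\psi|A^\vx_{\va_C}B^\vy_{\vb_C}|\psi}$ over these is precisely $\P_{\bA_C\bB_C|\bOmega_\mi,\bX_i = x,\bY_i = y}(\va_C,\vb_C)$ by \Cref{claim:abxy}. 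This gives the first displayed equation of the proposition.

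For $s = \dummyx \in \X_{/\perp}$, I would use the definition \eqref{eq:states-operators-3}, $A^{(\omega_\mi,\vx_C),\dummyx}_{\va_C} = \eta_{\text{Anchor}} A^{(\omega_\mi,\vx_C),\dummy}_{\va_C} + (1-\eta_{\text{Anchor}})A^{(\omega_\mi,\vx_C),x}_{\va_C}$, together with linearity in the first displayed computation to split $\gamma^2$ into the convex combination
\[
\gamma^2 = \eta_{\text{Anchor}}\,\P_{\bA_C\bB_C|\bOmega_\mi=\omega_\mi,\bX_i=\dummy,\bY_i=y}(\va_C,\vb_C) + (1-\eta_{\text{Anchor}})\,\P_{\bA_C\bB_C|\bOmega_\mi=\omega_\mi,\bX_i=x,\bY_i=y}(\va_C,\vb_C),
\]
which is the first line of \eqref{eq:gamma-def-3}. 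The second line then follows by recalling the construction of the distribution $\hat\P$ in \Cref{sec:p_setup}: conditioned on $\bOmega_i = ((\bMplayer)_i,(\bMvalue)_i) = (A,x)$, the input $\bX_i$ equals $x$ with probability $1-\eta_{\text{Anchor}}$ and $\dummy$ with probability $\eta_{\text{Anchor}}$ (the ``$\eta_{\text{Anchor}}$-noisy copy'' mechanism), and in both cases the answer distribution $\P_{\blS_C|\cdot}$ depends only on the realized inputs by \Cref{claim:ufacts}(2); so the convex combination is exactly $\P_{\bA_C\bB_C|\bOmega_\mi=\omega_\mi,\bOmega_i=(A,x),\bY_i=y}(\va_C,\vb_C)$.

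The main obstacle I anticipate is bookkeeping rather than conceptual: carefully tracking which conditioning events the various expectations $\Ex_{\vx|\omega_\mi,\vx_C,x}$ correspond to under the distribution $\P$, and making sure the factorization from \Cref{claim:ufacts}(1) is applied with the correct set of conditioned coordinates (in particular that conditioning on $\bR_C = \vr_C$ rather than just $\blQ_C = \vq_C$ does not introduce extra dependence on $\bX_i,\bY_i$ — this is precisely what part (2) of the claim rules out). Also one must be slightly careful that $(B^\vy_{\vb_C})^{1/2}$ is well-defined (the operators are positive semidefinite) and that the commutation identity $\braket{\psi|(B)^{1/2}A(B)^{1/2}|\psi} = \braket{\psi|AB|\psi}$ is justified for commuting $A \in \alicealg$, $B \in \alicealg'$ — this uses that $A$ commutes with $B$ hence with $B^{1/2}$ (functional calculus), so $(B)^{1/2}A(B)^{1/2} = AB$ as operators. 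None of these steps requires a genuinely new idea beyond what is already set up in \Cref{sec:p_setup}--\Cref{sec:states-operators}.
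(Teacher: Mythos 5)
Your proposal is correct and fills in exactly the computation the paper elides — the paper's ``proof'' is just a pointer to \cite[Proposition~4.9]{bavarianAnchoredParallelRepetition2021}, and what you wrote is the standard expansion that that reference carries out, now adapted to tracially embeddable strategies by using $A\in\alicealg$, $B\in\alicealg'$ in place of the tensor-product registers.

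One small bookkeeping slip in the writeup is worth flagging, though it does not affect the argument. In \eqref{eq:def-state-y} the square roots surround the \emph{averaged} operators $A^{(\omega_\mi,\vr_C),s}_{\va_C}$ and $B^{(\omega_\mi,\vr_C),y}_{\vb_C}$, not the individual $A^\vx_{\va_C}$, $B^\vy_{\vb_C}$, and the square root is not linear. So you should collapse the square roots first — using that the averaged $A$ lies in $\alicealg$, the averaged $B$ in $\alicealg'$, hence $(B)^{1/2}A(B)^{1/2}=AB$ for the averaged operators — to obtain
\begin{equation*}
\gamma_{\ssxy}^2 \;=\; \bra{\psi}\,A^{(\omega_\mi,\vr_C),x}_{\va_C}\,B^{(\omega_\mi,\vr_C),y}_{\vb_C}\,\ket{\psi},
\end{equation*}
and only \emph{then} expand the averages over $\vx,\vy$ inside. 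Your text momentarily writes $(B^\vy_{\vb_C})^{1/2}$ inside the inner product, as if the averaging had been pulled through the square root; the final displayed formula for $\gamma^2$ is right, but the intermediate step as written is not. With that ordering fixed, the rest — identifying $\bra{\psi}A^\vx_{\va_C}B^\vy_{\vb_C}\ket{\psi}$ with $\P_{\bA_C\bB_C|\bX=\vx,\bY=\vy}(\va_C,\vb_C)$ via \Cref{claim:abxy}, reassembling the product expectation via \Cref{claim:ufacts}(1) (conditional independence of $\bX$ and $\bY$ given $\bOmega$) and \Cref{claim:ufacts}(2) (the answer law given questions does not depend on $\bOmega$), and for $s=\dummyx$ splitting $A^{\dummyx}$ by linearity and reassembling using the $\eta_{\text{Anchor}}$-noisy coupling — is exactly the intended argument.

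You are also right to note, as a presentational matter, that the proposition's right-hand side implicitly conditions on $\bX_C=\vx_C,\bY_C=\vy_C$ as well (this is carried in the subscript $\vr_C$ on $\gamma$); it is not omitted information, just compressed notation inherited from the reference.
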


The proof of this proposition is similar to~\cite[Proposition 4.9]{bavarianAnchoredParallelRepetition2021} by expanding the definition of $A^{(\omega_\mi, \vr_C),x}_{\va_C}$ and  $B^{(\omega_\mi, \vr_C),y}_{\vb_C}$. 

\subsubsection{Proof of \Cref{thm:anchoringparallelrep}}\label{sec:analysis}

The proof for~\Cref{thm:anchoringparallelrep} requires the following proposition:
\begin{proposition}
	\label{prop:local_operators}
	For every $C\subseteq [r]$, $i\in[r] \backslash C$, $(\omega_{\mi}, \vr_C)$, $x,y\in \X$, there exist two unitary operator $U_{(\omega_{\mi}, \vr_C),x} \in \alicealg$ and $V_{(\omega_{\mi}, \vr_C),y} \in \alicealg'$ such that
	$$
	\Ex_I \,\, \Ex_{(\omega_{\mi}, \vr_C) | W_C} \,\, \Ex_{XY}  \big \| U_{(\omega_{\mi}, \vr_C),x} \, V_{(\omega_{\mi}, \vr_C),y} \ket{\wt{\Phi}_{\sspp}} - \ket{\wt{\Phi}_{\ssxy}} \big\| \,=\, O\big(\eta_{\text{PR}}^{1/16} \big)\;,
	$$
	where $\Ex_I$ denotes the expectation over a uniformly random $i \in [r] \setminus C$, $\Ex_{(\omega_{\mi}, \vr_C) | W_C}$ denotes the expectation over $(\omega_{\mi}, \vr_C)$ sampled from $\P_{(\omega_{\mi}, \vr_C) | W_C}$, and $\Ex_{XY}$ denotes the expectation over $(x,y)$ sampled from $\mu$. 
\end{proposition}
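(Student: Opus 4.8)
\textbf{Proof plan for \Cref{prop:local_operators}.}

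The plan is to follow closely the tensor-product argument of~\cite[Proposition 4.10]{bavarianAnchoredParallelRepetition2021}, replacing the finite-dimensional Uhlmann-type maneuvers by their tracial von Neumann algebra analogues developed in~\Cref{sec:qinfo}. The overall strategy is a triangle-inequality chain: I want to show that the ``fully pre-sampled'' state $\ket{\wt{\Phi}_{\sspp}}$ can be rotated, using only operators in $\alicealg$ (Alice's side) and $\alicealg'$ (Bob's side), into the ``honest question'' state $\ket{\wt{\Phi}_{\ssxy}}$, up to error $O(\eta_{\text{PR}}^{1/16})$ on average over $i$, $(\omega_\mi, \vr_C)$ and $(x,y)$. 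First I would introduce the intermediate states $\ket{\wt{\Phi}_{\sspx}}$ (Alice has pre-sampled $\bOmega_i = (A, x)$, Bob still holds $\dummy$) and $\ket{\wt{\Phi}_{\sspy}}$ (Bob has pre-sampled, Alice still holds $\dummy$), and split the distance into three legs:
\begin{equation*}
	\ket{\wt{\Phi}_{\sspp}} \;\longrightarrow\; \ket{\wt{\Phi}_{\sspx}} \;\longrightarrow\; \ket{\wt{\Phi}_{\ssxy}} \quad\text{(via the analogous two legs using $\sspy$)},
\end{equation*}
and handle the $x$-leg by an operator in $\alicealg$ and the $y$-leg by an operator in $\alicealg'$, using the fact that $A$-type operators and $B$-type operators act on opposite sides of the standard form.

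The key technical ingredient is \Cref{prop:uhlmann} (the von Neumann algebraic Uhlmann/Araki--Powers--Stormer inequality): for each fixed $(\omega_\mi, \vr_C)$ and $x$, the two unit vectors $\ket{\wt{\Phi}_{\sspp}}$ and $\ket{\wt{\Phi}_{\sspx}}$ have the same reduction to $\alicealg'$ when averaged appropriately — this is where \Cref{lem:classical_skew} enters, controlling how much conditioning on $W_C$ and on $\bX_i = \dummy$ versus $\bX_i = x$ perturbs the relevant marginal distributions. Concretely, I would use \Cref{prop:gamma} to identify the squared norms $\gamma_{(\omega_\mi,\vr_C),s,y}^2$ with conditional probabilities $\P_{\bA_C\bB_C|\cdots}$, then invoke the pieces of \Cref{lem:classical_skew} to show that, after averaging over $i$ and $(\omega_\mi,\vr_C)|W_C$, the states $\ket{\wt{\Phi}_{\sspx}}$ and $\ket{\wt{\Phi}_{\sspp}}$ restrict to nearly the same normal state on $\alicealg$ (Bob's measurement on $\dummy$ is ``uninformative'' because the $\dummy/x$ mixture from~\eqref{eq:states-operators-3} dominates both, cf.~\eqref{eq:states-operators-3a}--\eqref{eq:states-operators-3b}); hence \Cref{prop:uhlmann} produces a unitary $U_{(\omega_\mi,\vr_C),x}\in\alicealg$ achieving $\bra{\wt{\Phi}_{\sspx}} U_{(\omega_\mi,\vr_C),x}\ket{\wt{\Phi}_{\sspp}}\ge 1 - O(\sqrt{\eta_{\text{PR}}})$ after the expectation, which by the standard $\|\cdot\|^2 = 2 - 2\mathrm{Re}\langle\cdot\rangle$ identity and Jensen/Cauchy--Schwarz converts to an $O(\eta_{\text{PR}}^{1/4})$ or $O(\eta_{\text{PR}}^{1/8})$ bound on the vector distance; the $\dummy$-to-$y$ leg is symmetric with $V_{(\omega_\mi,\vr_C),y}\in\alicealg'$, and the nesting of square roots across the three legs is what produces the final exponent $1/16$. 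Combining the three legs by the triangle inequality and noting~\eqref{eq:randomremarkParallel} (so that the $\sspp$ endpoint is literally $\sspxp$ with $s=\dummy$) gives the claim.

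The main obstacle will be the bookkeeping in the Uhlmann step: in the finite-dimensional proof one freely passes between a vector, its reduced density matrix, and purifications, whereas here I must argue entirely in terms of the normal states that $\ket{\wt{\Phi}_{\cdot}}$ induce on $\alicealg$ and $\alicealg'$, verify that the ``closeness of marginals'' coming from \Cref{lem:classical_skew} is genuinely closeness in the $\|\cdot\|_\alicealg$ operator-norm-on-states sense (not just in total variation of the classical variables), and track carefully that the unitary produced by \Cref{prop:uhlmann} lies in the correct algebra — $\alicealg$ for the Alice leg, $\alicealg'$ for the Bob leg — so that $U_{(\omega_\mi,\vr_C),x}$ and $V_{(\omega_\mi,\vr_C),y}$ commute and can be applied simultaneously. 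A secondary subtlety is that the operators $A^{(\omega_\mi,\vr_C),s}_{\va_C}$ and $B^{(\omega_\mi,\vr_C),y}_{\vb_C}$ are only sub-POVMs (they sum to operators $\le \cI$, not $=\cI$, because of the marginalization over $\va\mid\va_C$ and the noisy coupling), so I must repeatedly use \Cref{lem:inversetrick1} to produce contractive left/right ``inverses'' when moving square roots across the state, exactly as in the source argument. Once these algebraic hygiene points are settled, the probabilistic estimates are a direct transcription of~\cite[Section 4.4]{bavarianAnchoredParallelRepetition2021}, and I would relegate the routine computations to the same structure as there.
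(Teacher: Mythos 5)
Your overall architecture — introduce intermediate pre-sampled states, use the Uhlmann inequality of \Cref{prop:uhlmann} to manufacture unitaries, control the normalization constants via \Cref{prop:gamma} and \Cref{lem:norms-are-close}, and drive the error down to $\eta_{\text{PR}}^{1/16}$ through a chain of Pinsker/Jensen/Markov applications — matches the paper's, and you correctly flag the two delicate points specific to the von Neumann algebraic setting (that the relevant closeness must be in $\|\cdot\|_{\alicealg}$ and that \Cref{lem:inversetrick1} must replace the naive square-root manipulations of sub-POVMs). Small nit: you get the Uhlmann direction backwards at one point — if $\ket{\wt\Phi_{\ssxp}}$ and $\ket{\wt\Phi_{\sspp}}$ are close as states \emph{on $\alicealg'$}, not on $\alicealg$, is what yields $U \in (\alicealg')' = \alicealg$ — but that is a slip, not a wrong idea.

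The genuine gap is in the combining step. Your plan is to produce two unitaries, $U_x \in \alicealg$ with $U_x\ket{\wt\Phi_{\sspp}} \approx \ket{\wt\Phi_{\ssxp}}$ and $V_y \in \alicealg'$ with $V_y\ket{\wt\Phi_{\sspp}} \approx \ket{\wt\Phi_{\sspy}}$, and then close by a triangle inequality. But the natural chain
\begin{equation*}
	U_x V_y\ket{\wt\Phi_{\sspp}} \;\to\; U_x\ket{\wt\Phi_{\sspy}} \;\to\; \ket{\wt\Phi_{\ssxy}}
\end{equation*}
breaks on the second arrow: the Uhlmann bound you obtained only controls $U_x$ \emph{applied to $\ket{\wt\Phi_{\sspp}}$}, and says nothing about $U_x\ket{\wt\Phi_{\sspy}}$. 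This is precisely why \Cref{lem:unitary_bounds} constructs a \emph{third} family of unitaries $V_{\ssxy} \in \alicealg'$, satisfying \eqref{eq:vxy_bound}, which act on the \emph{noisy} intermediates $\ket{\wt\Phi_{\sspxy}}$, and why the final assembly in \Cref{claim:good-i-2} routes the estimate through the operators $(C^{\dummyx})^{1/2}, (D^{\dummyx,x})^{1/2} \in \alicealg$ produced by \Cref{lem:inversetrick1}, together with \eqref{eq:states-operators-3a}--\eqref{eq:states-operators-3b}, to derive the transfer bound \eqref{eq:22-3}:
\begin{equation*}
	\left\| U_x \ket{\Phi_{\dummy,y}} - \ket{\Phi_{x,y}}\right\| \leq 2\eta_{\text{Anchor}}^{-1/2}\left\|V_{x,y}\ket{\Phi_{\dummyx,y}} - \ket{\Phi_{\dummyx,\dummy}}\right\| + \left\|U_x\ket{\Phi_{\dummy,\dummy}} - \ket{\Phi_{x,\dummy}}\right\|.
\end{equation*}
The $V_{x,y}$ commutes with $C^{\dummyx}, D^{\dummyx,x}$ because it lives on the opposite side of the standard form — that is where the commutant structure is doing real work. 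Your "three legs" language and the reference to \eqref{eq:randomremarkParallel} hint you may have sensed this, but you never name $V_{x,y}$, never state the third bound of \Cref{lem:unitary_bounds}, and never explain why the two-unitary triangle inequality fails; without the $\dummyx$-frozen transfer unitary the argument does not close. The classical skew lemma also does not by itself give the operator-norm closeness you invoke Uhlmann on — that is supplied by the mutual-information argument of \Cref{claim:xi-change-x} and \Cref{claim:xi-change-y} plus Pinsker, which your plan conflates with the purely classical \Cref{lem:classical_skew}.
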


We remark that this is an analogue of~\cite[Proposition 5.1]{bavarianAnchoredParallelRepetition2021} for tracially embeddable strategies, and we give the proof in \Cref{sec:operators}. Based on the above proposition, we give a proof for \Cref{thm:anchoringparallelrep} below.
\begin{proof}
	Let $\cG$ be an anchored game, and supposed that there exist a tracially embeddable strategy$\strategy^{\otimes r} = \{ \cL^2(\alicealg, \tau),\ket{\psi}, \{A^\vx_\va \}, \{B^\vy_\vb \} \}$ which violates \eqref{eq:mainparallel}. Let $C$ be the critical subset $C$ as promised by \Cref{prop:subset}, and recall, without the lost of generality, we assume $C$ is the first $|C|$ coordinates of $[r]$. Fix  $i \in [r] \setminus C$, $(\omega_\mi,\vr_C = (\vx_C, \vy_C, \va_C, \vb_C)) \in (\bOmega_\mi, \bR_C)$ and $(x,y) \in \cX^2$. For each $\va$ such that $\pi_{\leq |C|} (\va) =\va_C$, by \eqref{eq:states-operators-2}, we have $A^{(\omega_{\mi}, \vx_C),x}_\va \leq A^{(\omega_{\mi}, \vx_C),x}_{\va_C}$, and hence by \Cref{lem:inversetrick1}, there exist a positive element $\widehat{A}^{(\omega_{\mi}, \vx_C),x}_{\va}$ such that
	\begin{equation*}
		A^{(\omega_{\mi}, \vx_C),x}_\va  = \left(A^{(\omega_{\mi}, \vx_C),x}_{\va_C}\right)^{1/2} \cdot \widehat{A}^{(\omega_{\mi}, \vx_C),x}_\va \cdot \left(A^{(\omega_{\mi}, \vx_C),x}_{\va_C}\right)^{1/2}.
	\end{equation*}
	Likewise, for each $\vb$ such that $\pi_{\leq |C|} (\vb) =\vb_C$, there exist a positive element $\widehat{B}^{(\omega_{\mi}, \vy_C),y}_\vb$ such that 
	\begin{equation*}
		B^{(\omega_{\mi}, \vy_C),y}_\vb  = \left(B^{(\omega_{\mi}, \vy_C),y}_{\vb_C}\right)^{1/2} \cdot \widehat{B}^{(\omega_{\mi}, \vy_C),y}_\vb \cdot\left(B^{(\omega_{\mi}, \vy_C),y}_{\vb_C}\right)^{1/2} .
	\end{equation*}
	 For fixed $(\va_C, \vb_C)$, both $\{\widehat{A}_{(\omega_{\mi}, \vx_C),x}^\va \}_{\va| \va_C}$ and $\{ \widehat{B}_{(\omega_{\mi}, \vy_C),y}^\vb \}_{\vb| \vb_C}$ forms a POVM, where $\va| \va_C$ (resp. $\vb| \vb_C$) denotes summing over $\va$ such that $\pi_{\leq |C|} (\va) =\va_C$  (resp. $\pi_{\leq |C|} (\vb) =\vb_C$ ). For each $i \in [r] \setminus C$ and $(\omega_\mi,\vr_C = (\vx_C, \vy_C, \va_C, \vb_C)) \in (\bOmega_\mi, \blR_C)$, we define a tracially embeddable strategy $\strategy_{(\omega_\mi,\vr_C)} = \{ \cL^2(\alicealg, \tau),\ket{\wt{\Phi}_\sspp},\{\widetilde{A}^{(\omega_\mi,\vr_C),x}_a\}_{(x,a) \in \cX\times \cA} , \widetilde{B}^{(\omega_\mi,\vr_C),y}_b\}_{(y,b) \in \cX \times \cA} \}$ for the game $\cG$ as the following: The joint entanglement between the prover is $\ket{\wt{\Phi}_\sspp}$ given in~\eqref{eq:def-psitt}. The measurement operators $\{\widetilde{A}^{(\omega_\mi,\vr_C),x}_a  \}$, $\{	\widetilde{B}^{(\omega_\mi,\vr_C),y}_b \}$ for the strategy $\strategy_{(\omega_\mi,\vr_C)}$ is defined as
	\begin{align*}
		\widetilde{A}^{(\omega_\mi,\vr_C),x}_a  = U_{(\omega_{\mi}, \vr_C),x}^\dagger \left(\sum_{\va | \va_i = a, \va_C}  \hat{A}^{(\omega_{\mi}, \vx_C),x}_\va  \right) U_{(\omega_{\mi}, \vr_C),x}\;\\
		\widetilde{B}^{(\omega_\mi,\vr_C),y}_b = V_{(\omega_{\mi}, \vr_C),y}^\dagger\left(\sum_{\vb | \vb_i = b, \vb_C}  \hat{B}^{(\omega_{\mi}, \vy_C),y}_\vb  \right) V_{(\omega_{\mi}, \vr_C),y}\;,
	\end{align*}
	where $\va | \va_i = a, \va_C$ (resp. $\vb | \vb_i = b, \vb_C$) denotes summing over tuples $\va$ that are consistent with $\va_C$ and the ith coordinate is equal to $a$ (resp. $\vb$ that are consistent with $\vb_C$  and the ith coordinate is equal to $b$), and $U_{(\omega_{\mi}, \vr_C),x}$, $V_{(\omega_{\mi}, \vr_C),y}$ are the unitary operators from Proposition~\ref{prop:local_operators}. We remark that we choose to represent the measurement as $\{\widetilde{A}^{(\omega_\mi,\vr_C),x}_a\}_{(x,a) \in \cX \times \cA}$ to emphasize that $\strategy_{(\omega_\mi,\vr_C)}$ is a strategy for a single copy of $\cG$ (i.e. given the question $x \in \cX$, the first prover will measure using $\{\widetilde{A}^{(\omega_\mi,\vr_C),x}_a\}_{a \in \cA}$ on her half of the state $\ket{\wt{\Phi}_\sspp}$). 
	
	We wish to show that on average over $(\omega_\mi,\vr_C)$ there exist a coordinate $i$ such that the strategy $\strategy_{(\omega_\mi,\vr_C)}$ violates the optimal success for the original game. Let $\Qsf_{A B | (\omega_{\mi}, \vr_C), x,y}$ denotes the correlation $C_{x,y,a,b}$ given the strategy $\strategy_{(\omega_{\mi}, \vr_C)}$, or
	\begin{align} 
		\nonumber \Qsf_{A B | (\omega_{\mi}, \vr_C), x,y}(a,b) &=   \braket{\wt{\Phi}_{\bot, \bot}| \tilde{A}^{(\omega_\mi,\vr_C),x}_a \cdot \tilde{B}^{(\omega_\mi,\vr_C),y}_b|\wt{\Phi}_{\bot, \bot}} \\
		&= \sum_{\va | \va_i = a, \va_C}\sum_{\vb | \vb_i = b, \vb_C} \left(  \braket{\wt{\Phi}_{\bot, \bot}|  U_{x}^\dagger  V_{y}^\dagger\left( \widehat{A}^{(\omega_{\mi}, \vx_C),x}_\va \cdot  \widehat{B}^{(\omega_{\mi}, \vx_C),y}_\vb \right) U_{x}  V_{y}|\wt{\Phi}_{\bot, \bot}}\right).
		\label{eq:main-lemma-0}
	\end{align}
	
	The following lemma shows that conditioning on $\vr_C$ selected base on the question/answer pairs which wins on the critical set $C$, the strategy on coordinate $i$ is closed to being independent from the other coordinates. 

	\begin{lemma}
		\label{lem:anchorpr_main_lemma}
		There exists a universal constant $\beta_{PR} \geq 1$ such that,
		\[
		\Ex_I \big \| \P_{(\bOmega_{\mi}, \blR_C) | W_C} \cdot \P_{XY} \cdot \Qsf_{AB | (\omega_{\mi}, \vr_C), x,y} - \P_{(\bOmega_{\mi}, \blR_C) \bX_i \bY_i  \bA_i \bB_i | W_C} \big \| \,\leq \, \beta_{PR}  \,\eta_{\text{PR}}^{1/16} \;,
		\]
		where $\eta_{\text{PR}}$ is defined in~\eqref{eq:delta} and we identify $(x,y,a,b)$ with $(\vx_i,\vy_i,\va_i,\vb_i)$.
		
	\end{lemma}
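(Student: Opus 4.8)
\textbf{Proof proposal for Lemma~\ref{lem:anchorpr_main_lemma}.}

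The plan is to follow the structure of the proof of the analogous statement in~\cite[Section~5]{bavarianAnchoredParallelRepetition2021}, but using the tracial von Neumann algebra machinery developed in~\Cref{sec:qinfo} in place of finite-dimensional density matrices, and using \Cref{prop:local_operators} as the key ``state transfer'' input. First I would unpack both sides as classical-quantum distributions. The right-hand side $\P_{(\bOmega_{\mi}, \blR_C) \bX_i \bY_i \bA_i \bB_i | W_C}$ is, by \Cref{claim:abxy} and the definitions in~\Cref{sec:states-operators}, obtained by sampling $(\omega_{\mi}, \vr_C)$ from $\P_{(\bOmega_{\mi}, \blR_C) | W_C}$, then sampling $(x,y)$ from $\P_{\bX_i \bY_i \mid \bOmega_{\mi} = \omega_{\mi}, \vr_C, W_C}$, and then producing $(a,b)$ by measuring the (correctly normalized) post-measurement state $\ket{\wt{\Phi}_{(\omega_\mi, \vr_C), x, y}}$ with the genuine operators $\widehat{A}^{(\omega_\mi, \vx_C), x}_{\va}$, $\widehat{B}^{(\omega_\mi, \vy_C), y}_{\vb}$ (marginalized onto the $i$-th coordinate). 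The left-hand side replaces the input distribution by the product $\P_{XY} = \mu$ and replaces the state $\ket{\wt{\Phi}_{\ssxy}}$ by $U_{(\omega_{\mi}, \vr_C),x} V_{(\omega_{\mi}, \vr_C),y} \ket{\wt{\Phi}_{\sspp}}$, measured with the \emph{same} operators $\widehat A, \widehat B$ (this is exactly what $\Qsf_{AB | (\omega_{\mi}, \vr_C), x,y}$ computes, cf.~\eqref{eq:main-lemma-0}, where the conjugating unitaries cancel against those hidden inside $\tilde A, \tilde B$).

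The proof then splits the total variation distance via the triangle inequality into two contributions. The first is the discrepancy in the \emph{input} distribution: replacing $\P_{\bX_i \bY_i \mid \bOmega_{\mi} = \omega_{\mi}, \vr_C, W_C}$ by $\mu = \P_{\bX_i \bY_i}$, averaged over $i$ and over $(\omega_{\mi}, \vr_C) \sim \P_{\cdot | W_C}$. By \Cref{lem:classical_skew}(4) together with \Cref{lem:trivial} and the data-processing inequality \Cref{lem:data-processing}, this is bounded by $O(\sqrt{\eta_{\text{PR}}})$ — here I would use that conditioned on $\bX_i = \bY_i = \dummy$ the variables $(\bOmega_{\mi}, \blR_C)$ are close to being independent of $(\bX_i,\bY_i, W_C)$, and that the anchoring construction makes the $\dummy$ case carry constant weight. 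The second contribution is the discrepancy in the \emph{state}: with inputs fixed to $\mu$ and conditioning to $W_C$, we compare measuring $\ket{\wt{\Phi}_{\ssxy}}$ versus measuring $U V \ket{\wt{\Phi}_{\sspp}}$. Since applying a fixed POVM to two states changes the resulting distribution by at most the trace distance of the states, which by the Araki–Powers–Stormer / gentle-measurement estimate is controlled by the vector-norm distance $\| U V \ket{\wt{\Phi}_{\sspp}} - \ket{\wt{\Phi}_{\ssxy}} \|$, this contribution is exactly $\Ex_I \Ex_{(\omega_{\mi},\vr_C)|W_C} \Ex_{XY} \| U_{(\omega_{\mi}, \vr_C),x} V_{(\omega_{\mi}, \vr_C),y} \ket{\wt{\Phi}_{\sspp}} - \ket{\wt{\Phi}_{\ssxy}} \|$ up to constants, which is $O(\eta_{\text{PR}}^{1/16})$ by \Cref{prop:local_operators}. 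Combining the two bounds and absorbing all constants into a universal $\beta_{PR}$ gives the claim, since $\sqrt{\eta_{\text{PR}}} \le \eta_{\text{PR}}^{1/16}$ for $\eta_{\text{PR}} \le 1$ (and the $\eta_{\text{PR}} > 1$ case is vacuous as the right side exceeds $1$).

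A few technical points require care. One is bookkeeping the normalization factors $\gamma_{(\omega_\mi,\vr_C),s,y}$ from~\eqref{eq:def-gamma}: the ``natural'' object the operators produce is the unnormalized state, and one must check (using \Cref{prop:gamma} and \Cref{claim:ufacts}(2)) that the probability weight $\P_{\bA_C \bB_C | \cdots}$ of landing in $(\va_C, \vb_C)$ is exactly the square of $\gamma$, so that normalizing the state is consistent with conditioning on $W_C$ on the classical side; this is where the event $W_C \subseteq \blS_C$ enters. A second point is that $U$ and $V$ lie in $\alicealg$ and $\alicealg'$ respectively, so they commute and can be pushed onto Alice's and Bob's operators separately without disturbing the bipartite measurement structure; this is precisely the reason \Cref{prop:local_operators} was stated with $U \in \alicealg$, $V \in \alicealg'$ rather than with a single global unitary.

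The main obstacle I anticipate is \emph{not} in this lemma itself — given \Cref{prop:local_operators}, the argument here is essentially a careful triangle-inequality bookkeeping exercise — but rather it is lifting the measurement-distance/gentle-measurement step and the input-distribution estimates cleanly into the von Neumann algebra setting, making sure every ``trace distance of states controls distribution distance'' step is justified for states in the Hilbert space $\cL^2(\alicealg,\tau)$ with operators drawn from $\alicealg$ and $\alicealg'$ (rather than genuine density matrices), and verifying the identifications of unnormalized post-measurement states with the correctly-weighted classical branches. The genuinely hard analytic content — the state transfer bound of \Cref{prop:local_operators}, which is where quantum Raz's lemma (\Cref{lem:quantum_raz}), the conditioning estimates of \Cref{lem:classical_skew}, and the Uhlmann-type argument (\Cref{prop:uhlmann}) all get used — is deferred to \Cref{sec:operators} and may be assumed here.
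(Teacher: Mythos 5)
Your two-part decomposition into an input-distribution error (controlled by \Cref{lem:classical_skew}) and a state-transfer error (controlled by \Cref{prop:local_operators} plus a per-instance measurement-distance bound) is exactly the paper's argument: the state-transfer part is Claims~\ref{claim:main-1} and~\ref{claim:main-2}, and the input-distribution combining step is left to~\cite[Lemma 6.2]{bavarianAnchoredParallelRepetition2021}, which you make explicit. Two cosmetic calibrations: the per-instance bound $\big\|\Qsf_{\cdot|(\omega_\mi,\vr_C),x,y}-\P_{\va_i\vb_i|(\omega_\mi,\vr_C),x,y}\big\|\le\big\|U_xV_y\ket{\wt\Phi_\sspp}-\ket{\wt\Phi_\ssxy}\big\|$ is obtained directly by a Cauchy--Schwarz computation (inequality \eqref{eq:main-2-1}), not via trace distance and Araki--Powers--Stormer/gentle measurement (that machinery only enters inside the proof of \Cref{prop:local_operators}); and the classical input-distribution estimate is extracted from items (1)--(2) of \Cref{lem:classical_skew} together with \Cref{lem:trivial} and \Cref{lem:data-processing}, rather than from item (4) as you suggest.
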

	\begin{proof}
		We remark that this is essentially the same as~\cite[Lemma 6.2]{bavarianAnchoredParallelRepetition2021}. We start with two claims, from which the proof of the lemma follows. 
		\begin{claim}\label{claim:main-1}
			For all $(\omega_{\mi}, \vr_C), x,y$ and $a,b$,
			\begin{equation*}\label{eq:anchorpr_main-2}
				\sum_{\va | \va_i = a, \va_C}\sum_{\vb | \vb_i = b, \vb_C}  \left(\braket{\wt{\Phi}_{\ssxy} | \big ( \widehat{A}^{(\omega_{\mi}, \vr_C), x}_\va \, \cdot \widehat{B}^{(\omega_{\mi}, \vr_C), y}_\vb \big ) |\wt{\Phi}_{\ssxy}  }\right) \,=\,\P_{\va_i \vb_i | (\omega_{\mi}, \vr_C),x,y}(a,b)\;.
			\end{equation*}
		\end{claim}
	\end{proof}
	
	The proof of this proposition is similar to~\cite[Claim 6.3]{bavarianAnchoredParallelRepetition2021} by expanding the definition of $\widehat{A}^{(\omega_{\mi}, \vr_C), x}_\va$ and  $\widehat{B}^{(\omega_{\mi}, \vr_C), y}_\vb$. 
	
	\begin{claim}\label{claim:main-2}
		The following holds:
		\begin{equation*}\label{eq:main-2-0}
			\Ex_I \big\| \P_{(\omega_{\mi}, \vr_C) |W_C} \cdot \P_{\vx_i \vy_i} \cdot \Qsf_{\va_i \vb_i |(\omega_{\mi}, \vr_C) \vx_i \vy_i } 
			-  \P_{(\omega_{\mi}, \vr_C) |W_C} \cdot \P_{\vx_i \vy_i} \cdot \P_{\va_i \vb_i | (\omega_{\mi}, \vr_C) \vx_i \vy_i} \big \|  
			\,=\,O(\eta_{\text{PR}}^{1/16})\;.
		\end{equation*}
	\end{claim}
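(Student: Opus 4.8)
\textbf{Proof proposal for Claim~\ref{claim:main-2}.}

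The plan is to follow the strategy of~\cite[Claim 6.4]{bavarianAnchoredParallelRepetition2021}, adapting the argument to tracially embeddable strategies using the quantum information tools developed in~\Cref{sec:qinfo}. The key point is that the left distribution $\P_{(\omega_{\mi}, \vr_C) |W_C} \cdot \P_{\vx_i \vy_i} \cdot \Qsf_{\va_i \vb_i |(\omega_{\mi}, \vr_C) \vx_i \vy_i }$ arises from measuring the states $\ket{\wt\Phi_{\sspp}}$ with the operators $U^\dagger_x V^\dagger_y \widehat A^{(\omega_{\mi},\vr_C),x}_{\va} \widehat B^{(\omega_{\mi},\vr_C),y}_{\vb} U_x V_y$ (by~\eqref{eq:main-lemma-0}), while the right distribution arises (by~\Cref{claim:main-1}) from measuring the states $\ket{\wt\Phi_{\ssxy}}$ with the operators $\widehat A^{(\omega_{\mi},\vr_C),x}_{\va} \widehat B^{(\omega_{\mi},\vr_C),y}_{\vb}$. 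So the two distributions differ only in the underlying state being measured, and~\Cref{prop:local_operators} tells us that after applying the local unitaries $U_{(\omega_{\mi},\vr_C),x}, V_{(\omega_{\mi},\vr_C),y}$, the state $\ket{\wt\Phi_{\sspp}}$ is close on average (over $i$, $(\omega_{\mi},\vr_C)\sim\P_{\cdot|W_C}$, and $(x,y)\sim\mu$) to $\ket{\wt\Phi_{\ssxy}}$.

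First I would use the fact that applying a fixed unitary $U_x$ (resp.\ $V_y$) in $\alicealg$ (resp.\ $\alicealg'$) to the state and simultaneously conjugating the measurement operators by $U_x$ (resp.\ $V_y$) does not change the resulting correlation; this lets me rewrite $\Qsf_{\va_i \vb_i |(\omega_{\mi}, \vr_C) \vx_i \vy_i}$ as the distribution obtained by measuring $U_{(\omega_{\mi},\vr_C),x} V_{(\omega_{\mi},\vr_C),y}\ket{\wt\Phi_{\sspp}}$ with $\widehat A^{(\omega_{\mi},\vr_C),x}_{\va} \widehat B^{(\omega_{\mi},\vr_C),y}_{\vb}$. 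Then, since $\{\widehat A^{(\omega_{\mi},\vr_C),x}_{\va}\}_{\va|\va_C}$ and $\{\widehat B^{(\omega_{\mi},\vr_C),y}_{\vb}\}_{\vb|\vb_C}$ are POVMs and the two distributions are obtained by applying the \emph{same} measurement to two different (sub-normalized) states, I would bound the $\ell_1$ distance between the two distributions by (a constant times) the norm distance between the states $U_{(\omega_{\mi},\vr_C),x} V_{(\omega_{\mi},\vr_C),y}\ket{\wt\Phi_{\sspp}}$ and $\ket{\wt\Phi_{\ssxy}}$. Concretely, for two unit vectors $\ket{\phi}, \ket{\phi'}$ and any POVM $\{M_k\}$ one has $\sum_k |\braket{\phi|M_k|\phi} - \braket{\phi'|M_k|\phi'}| \le 2\|\,\ket{\phi}-\ket{\phi'}\,\|$; this is the standard ``gentle measurement'' type estimate, and it applies verbatim here since $\ell_1$ distance between distributions can be computed as a supremum over $\pm 1$-weighted sums of the probabilities. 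I would also need to account for the normalization factors $\gamma$ and the factor structure through $(A^{(\omega_{\mi},\vr_C),\cdot}_{\va_C})^{1/2}$, $(B^{(\omega_{\mi},\vr_C),\cdot}_{\vb_C})^{1/2}$, but these are absorbed because $\ket{\wt\Phi_{\ssxy}}$ is defined precisely so that $\gamma^2_{\ssxy}$ (by~\Cref{prop:gamma}) equals the probability weight $\P_{\bA_C \bB_C|\bOmega_\mi=\omega_\mi,\bX_i=x,\bY_i=y}(\va_C,\vb_C)$ that converts the $W_C$-conditioned distribution into the normalized one on coordinate $i$.

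Putting these together, I would apply Jensen's inequality to move the expectation $\Ex_I \Ex_{(\omega_{\mi},\vr_C)|W_C}\Ex_{xy}$ inside the square root appearing in the bound and invoke~\Cref{prop:local_operators} directly:
\begin{align*}
&\Ex_I \big\| \P_{(\omega_{\mi}, \vr_C) |W_C} \cdot \P_{\vx_i \vy_i} \cdot \Qsf_{\va_i \vb_i |(\omega_{\mi}, \vr_C) \vx_i \vy_i } - \P_{(\omega_{\mi}, \vr_C) |W_C} \cdot \P_{\vx_i \vy_i} \cdot \P_{\va_i \vb_i | (\omega_{\mi}, \vr_C) \vx_i \vy_i} \big\| \\
&\qquad \le\; O(1)\cdot \Ex_I \Ex_{(\omega_{\mi},\vr_C)|W_C} \Ex_{xy} \big\| U_{(\omega_{\mi},\vr_C),x} V_{(\omega_{\mi},\vr_C),y}\ket{\wt\Phi_{\sspp}} - \ket{\wt\Phi_{\ssxy}}\big\| \;=\; O(\eta_{\text{PR}}^{1/16}),
\end{align*}
which is exactly the desired statement. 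The main obstacle I anticipate is the bookkeeping around sub-normalized states and the $\gamma$ normalization constants: one has to be careful that the ``probability'' of obtaining $(\va_C,\vb_C)$ on the critical coordinates is correctly matched between the $W_C$-conditioned joint distribution on the right and the state-norm estimate on the left, so that the constant $\beta_{PR}$ in~\Cref{lem:anchorpr_main_lemma} truly is universal and does not pick up a dependence on $|C|$ or $\eta_{\text{Anchor}}$. This is handled in~\cite{bavarianAnchoredParallelRepetition2021} and should go through identically once the states are interpreted in the standard form $\cL^2(\alicealg,\tau)$ and the local operators are taken from $\alicealg$ and $\alicealg'$ respectively.
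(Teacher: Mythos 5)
Your proposal is correct and takes essentially the same route as the paper: both recognize $\Qsf_{\va_i\vb_i|\cdot}$ and $\P_{\va_i\vb_i|\cdot}$ as outcome distributions of the same POVM $\{\what A^x_a \what B^y_b\}$ applied to the unit vectors $U_xV_y\ket{\wt\Phi_{\sspp}}$ and $\ket{\wt\Phi_{\ssxy}}$, bound the resulting total variation distance by the vector-norm distance (the paper spells out the Cauchy--Schwarz step inline; you invoke it as a standard estimate), and then finish with Jensen and Proposition~\ref{prop:local_operators}. The bookkeeping around $\gamma$ that you flag as a potential obstacle is in fact already absorbed into the normalizations in~\eqref{eq:def-psitt} and Claim~\ref{claim:main-1}, so there is no sub-normalization issue to handle.
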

	
	\begin{proof}
		Fix $(\omega_{\mi}, \vr_C), x,y$. We bound the total variation distance between the distribution $\Qsf$ and $\P$ below. For the ease of notation, we define
		\begin{equation*}
			\what{A}^{x}_a =  \sum_{\va | \va_i = a, \va_C} \widehat{A}^{(\omega_{\mi}, \vr_C), x}_\va \qquad  	\what{B}^{y}_b =  \sum_{\vb | \vb_i = b, \vb_C} \widehat{A}^{(\omega_{\mi}, \vr_C), y}_\vb,
		\end{equation*}
	for this proof. We see that $\{	\what{A}^{x}_a \}_{a \in \cA}$ and $\{	\what{B}^{y}_b \}_{b \in \cA}$ forms two sets of POVM. By \Cref{eq:main-lemma-0} and \Cref{claim:main-1} 
		\begin{align}
			&\big \| \Qsf_{\va_i \vb_i | (\omega_{\mi}, \vr_C), x,y} - \P_{\va_i \vb_i | (\omega_{\mi}, \vr_C),x,y}\big \|	\notag\\
			&= \frac{1}{2} \sum_{a,b}\big | \braket{\wt{\Phi}_{\bot, \bot}|  U_{x}^\dagger  V_{y}^\dagger\big ( \what{A}^{x}_a \cdot\what{B}^{y}_b \big ) U_{x}  V_{y}|\wt{\Phi}_{\bot, \bot}} -  \braket{\wt{\Phi}_{x,y} | \big ( \what{A}^{x}_a \cdot\what{B}^{y}_b \big ) |\wt{\Phi}_{x,y}  }	\big |
			\notag\\
			&= \frac{1}{2} \sum_{a,b}\big | \braket{\wt{\Phi}_{\bot, \bot}|  U_{x}^\dagger  V_{y}^\dagger \big( \what{A}^{x}_a \cdot\what{B}^{y}_b \big ) U_{x}  V_{y}|\wt{\Phi}_{\bot, \bot}} - \braket{\wt{\Phi}_{\bot, \bot}|  U_{x}^\dagger  V_{y}^\dagger\big ( \what{A}^{x}_a \cdot\what{B}^{y}_b \big) |\wt{\Phi}_{x, y}} \notag \\
			&\quad + \braket{\wt{\Phi}_{\bot, \bot}|  U_{x}^\dagger  V_{y}^\dagger \big( \what{A}^{x}_a \cdot\what{B}^{y}_b \big) |\wt{\Phi}_{x, y}} -	\braket{\wt{\Phi}_{x,y} | \big( \what{A}^{x}_a \cdot\what{B}^{y}_b \big) |\wt{\Phi}_{x,y}  }	\big | \notag \\
			&\leq \frac{1}{2}  \sum_{a,b} \left( \big \| \big( \what{A}^{x}_a \cdot\what{B}^{y}_b \big) U_{x}  V_{y}\ket{\wt{\Phi}_{\bot, \bot}}  \big \|  + \big \| \big( \what{A}^{x}_a \cdot\what{B}^{y}_b \big)\ket{\wt{\Phi}_{x, y}}  \big \| \right) \cdot  \big\| U_{x}  V_{y} \ket{\wt{\Phi}_{\bot,\bot}} - \ket{\wt{\Phi}_{x,y}} \big\| \notag \\
			&=  \frac{1}{2}  \left(  \sum_{a,b} \big \| \big( \what{A}^{x}_a \cdot\what{B}^{y}_b \big) U_{x}  V_{y}\ket{\wt{\Phi}_{\bot, \bot}}  \big \|  +  \sum_{a,b} \big \| \big( \what{A}^{x}_a \cdot\what{B}^{y}_b \big)\ket{\wt{\Phi}_{x, y}}  \big \| \right) \cdot  \big\| U_{x}  V_{y} \ket{\wt{\Phi}_{\bot,\bot}} - \ket{\wt{\Phi}_{x,y}} \big\| \notag \\
			&\leq \big\| U_{x}  V_{y} \ket{\wt{\Phi}_{\bot,\bot}} - \ket{\wt{\Phi}_{x,y}} \big\| 	\noindent \,\;.\label{eq:main-2-1}
		\end{align}
		Where the last line follows from Jensen's inequality and  $\{	\what{A}^{x}_a \}_{a \in \cA}$ and $\{	\what{B}^{y}_b \}_{b \in \cA}$ being both POVM. Thus, returning to \eqref{eq:main-2-0}
		\begin{align*}
			&\Ex_I \,\, \big\| \P_{(\omega_{\mi}, \vr_C) |W_C} \cdot \P_{\vx_i \vy_i} \cdot \Qsf_{\va_i \vb_i |(\omega_{\mi}, \vr_C) \vx_i \vy_i } 
			-  \P_{(\omega_{\mi}, \vr_C) |W_C} \cdot \P_{\vx_i \vy_i} \cdot \P_{\va_i \vb_i | (\omega_{\mi}, \vr_C) \vx_i \vy_i} \big \|  \\
			&\quad = \Ex_I \, \, \Ex_{(\omega_{\mi}, \vr_C) |W_C} \,\, \Ex_{\vx_i \vy_i} \,\, \big \| \Qsf_{\va_i \vb_i |(\omega_{\mi}, \vr_C) \vx_i \vy_i} - \P_{\va_i \vb_i | (\omega_{\mi}, \vr_C) \vx_i \vy_i} \big \|    \\
			&\quad \leq \sqrt{2} \, \Ex_I \, \, \Ex_{(\omega_{\mi}, \vr_C) |W_C} \,\, \Ex_{XY} \big\| U_{(\omega_{\mi}, \vr_C),x} \otimes V_{(\omega_{\mi}, \vr_C),y} \ket{\wt{\Phi}_\sspp} - \ket{\wt{\Phi}_{\ssxy}} \big\|\\
			&\quad \leq O(\eta_{\text{PR}}^{1/16})\;,
		\end{align*}
		where the first inequality is by~\eqref{eq:main-2-1} and the last inequality follows from \Cref{prop:local_operators}.
	\end{proof}
	
	
	Based on \Cref{lem:anchorpr_main_lemma}, we are ready to construct a strategy for the single instance of the non-local game $\cG$ which creates the contradiction. We remark that the remainder of this proof follows similarly as~\cite[Section 6.2]{bavarianAnchoredParallelRepetition2021}. Recall by the definition of the critical set $C$ in~\Cref{prop:subset}, we have
	\begin{equation*}
		\Ex_I\, \P (W_i | W_C) \,\geq \, 1 - \eps/2\;,
	\end{equation*}
	and recall from \Cref{claim:ufacts}, since sampling each $\bOmega_i$ are independent from each coordinates, and the answers are independent from the distribution $\bOmega_i$ for $i \in [r] \setminus (C \cup \{i\})$. This implies by sampling a uniformly random $i \in [r] \setminus C$ and then sampling from the distribution $\P_{\vx_i \vy_i (\omega_{\mi}, \vr_C)  \va_i \vb_i| W_C}$ yields a tuple $(i,\vx_i,\vy_i,(\omega_{\mi}, \vr_C),\va_i,\vb_i)$ such that $V(\vx_i,\vy_i,\va_i,\vb_i) = 1$ (i.e $W_i = 1$) with probability at least $1 - \eps/2$. By \Cref{lem:anchorpr_main_lemma}, the distribution $\P_{\bX_i \bY_i (\omega_{\mi}, \vr_C)  \bA_i \bB_i| W_C}$, is $\beta_{PR}  \,\eta_{\text{PR}}^{1/16}$ close to Alice and Bob performing the following strategy for the non-local game $\cG$

	\begin{enumerate}
		\item Post-select the question and answer pair $r_{C} \sim \blR_C$ for the critical set $C$ in which they win on all $C$ coordinates for the strategy $\strategy^n$, uniformly a coordinates $i \in [r] \setminus c$ and $(\omega_\mi) \sim \Omega_{\mi}$.  
		\item Up on receiving $(x,y)$, perform the strategy $\strategy_{(\omega_{\mi}, \vr_C)}$.
	\end{enumerate}
	By convexity, there exist some coordinate $i$ in which we can fix on step 1 of the above procedure which succeed with probability at least $1 - \eps/2 - \beta_{PR}  \eta_{\text{PR}}^{1/16}$, where $\beta_{PR} $ is the universal constant from \Cref{lem:anchorpr_main_lemma}. Let the strategy define above be $\strategy^{\text{contra}}$. Set
	\begin{equation*}
		c = \frac{1}{32 \, \log(e) \, (4\beta_{PR} )^{16}}. 
	\end{equation*}
	By the initial contradiction assumption, we have
	\begin{equation*}
		\omega(\cG^{\otimes r}, \strategy^{\tensor r}) \geq \frac{4}{\epsilon} \exp \left ( - \frac{c \, \eps^{17} \, r}{ \log (|\cA|+1) } \right),
	\end{equation*}
	and by rearranging the equation for $r$, we have
	\begin{equation*}
		\frac{\log (|\cA|+1)}{c \cdot \eps^{17}} \ln{\left(\frac{4}{\eps \cdot \omega(\cG^{\otimes r}, \strategy^{\tensor r} )} \right)} \leq n. 
	\end{equation*}
	Hence, we have 
	\[
	r \geq \frac{r}{\log (|\cA|+1)} \geq \frac{1}{c \cdot \, \eps^{17}} \,  \ln{\left(\frac{4}{\eps \cdot \omega(\cG^{\otimes r}, \strategy^{\tensor r} )} \right)} \geq \frac{16}{\eps} \log{\left(\frac{4}{\eps \cdot \omega(\cG^{\otimes r}, \strategy^{\tensor r} )} \right)}\;,
	\]
	where we use that $0 < \eps \leq 1$, and $0 < c \leq \frac{\eps^{16}}{16 \cdot \log(e)}$. Hence, if we consider $\eta_{\text{PR}}$, using the fact that $|C| \leq \frac{8}{\eps} \log \frac{4}{\eps \cdot \P(W)} \leq r/2$ from \Cref{prop:subset},
	\begin{align*}
		\eta_{\text{PR}} &= \frac{1}{r-|C|} \left ( \log \frac{1}{\P(W_C)} + |C| \log |\A|^2 \right)\;\\
		& \leq \frac{2}{n} \Big ( \frac{16 \cdot  \log |\A|^2}{\eps} \log \frac{4}{\eps \cdot \P(W)} \Big) \\
		&\leq \frac{2}{n} \cdot \frac{16 \cdot s}{\eps} \cdot \frac{c \log (e) \, \eps^{17} \, n}{s} \\
		&= 32 \, c \, \log(e) \, \eps^{16}~.
	\end{align*}
	This implies that 
	\begin{equation}
		\omega(\cG, \strategy^{\text{contra}}) \geq 1 - \eps/2 - \beta_{PR} \cdot  \eta_{\text{PR}}^{1/16} \geq   1 - \eps/2 - \beta_{PR}  \eta_{\text{PR}} > 1 - \eps, 
	\end{equation}
	\noindent giving a strategy which wins $\cG$ is strictly greater than $1 - \eps$, a contradiction. This concludes the proof for~\Cref{thm:anchoringparallelrep}.
\end{proof}

\subsection{Existence of the local unitary}\label{sec:operators}

In this section, we give a proof for~\Cref{prop:local_operators}. Similar to the proof of~\cite[Proposition 5.1]{bavarianAnchoredParallelRepetition2021}, the proof relies on two main lemmas. The first lemma, given below, guarantees the existence of a local unitary operator which allows the provers to make the local adjustment as per described in~\Cref{sec:paralleloverview}. We remark that this is an commuting operator model variant of~\cite[Proposition 5.1]{bavarianAnchoredParallelRepetition2021}, and the proof follows a similar structure. 
\begin{lemma}
	\label{lem:unitary_bounds}
	For all $i$, $(\omega_{\mi}, \vr_C)$, $x$ and $y$ there exists two unitary operator $U_{(\omega_{\mi}, \vr_C), x} \in \alicealg$ and $V_{(\omega_{\mi}, \vr_C), y} \in  \alicealg'$ such that with probability at least $1 - O(\eta_{\text{PR}}^{1/16})$ over the choice of a uniformly random $i \in [n] \setminus C$, 
	\begin{align}
	\Ex_{\bOmega_{-i} \blR_C |  W_C} \, \, \, \Ex_{\bX} \, \, \, \big \|U_{(\omega_{\mi}, \vr_C), x}  \ket{\wt{\Phi}_\sspp} - \ket{\wt{\Phi}_\ssxp }   \big \| &=  O(\eta_{\text{PR}}^{1/16})\;,\label{eq:ux_bound}\\
 	\Ex_{\bOmega_{-i} \blR_C |  W_C} \, \, \, \Ex_{Y} \, \, \, \big \| V_{(\omega_{\mi}, \vr_C),y}  \ket{\wt{\Phi}_\sspp}  - \ket{\wt{\Phi}_\sspy }  \big \| &=  O(\eta_{\text{PR}}^{1/16})\;,\label{eq:vy_bound}\\
	\Ex_{\bOmega_{-i} \blR_C |  W_C} \, \, \, \Ex_{\bX \bY}  \,\, \, \big \| V_{\ssxy}  \ket{\wt{\Phi}_\sspxy}  - \ket{\wt{\Phi}_\sspxp }  \big \| &=  O(\eta_{\text{PR}}^{1/16})\;.\label{eq:vxy_bound}
	\end{align}
	where $\Ex_{\bX}$, $\Ex_{\bY}$, and $\Ex_{\bX \bY}$ denote expectations under $\mu_X(x)$, $\mu_Y(y)$, and $\mu(x,y)$ respectively.
\end{lemma}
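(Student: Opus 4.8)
The plan is to follow the blueprint of~\cite[Section 5]{bavarianAnchoredParallelRepetition2021}, but carried out inside the tracial von Neumann algebra $\alicealg$ and its commutant $\alicealg'$ using the quantum information tools set up in~\Cref{sec:qinfo}. The heart of the matter is a ``classical-to-quantum'' lifting: from the classical distributional closeness statements in~\Cref{lem:classical_skew} one first derives quantum mutual-information bounds showing that, on average over a uniformly random target coordinate $i\in[r]\setminus C$, the post-measurement state $\ket{\wt\Phi_{\sspx}}$ (conditioned on winning $W_C$) is almost independent of the coordinate-$i$ question in the sense of the relative entropy $D(\cdot\|\cdot)$ defined in~\eqref{eq:relative_ent}. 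Concretely, one builds the classical-quantum state $\phi^{\bOmega \blQ_C \bX_i \alicealg}$ whose $\alicealg$-register carries the (unnormalized) post-measurement vectors from~\eqref{eq:def-state-y}, applies quantum Raz's lemma~\Cref{lem:quantum_raz} together with the product structure of $(\bX,\bY,\bOmega)$ from~\Cref{claim:ufacts}, and uses the bound $\eta_{\text{PR}}$ of~\eqref{eq:delta} to control the total divergence $\sum_i D(\cdots)$. Averaging and applying Pinsker's inequality~\Cref{prop:pinsker} then gives, for a $(1-O(\eta_{\text{PR}}^{1/16}))$-fraction of coordinates $i$, that the states $\ket{\wt\Phi_{\sspx}}$ and $\ket{\wt\Phi_{\sspp}}$ (and their $y$-analogues) have $\alicealg$-reduced states that are $O(\eta_{\text{PR}}^{c})$-close in the state norm for small universal $c$.

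Once such reduced-state closeness is in hand, the existence of the unitaries is exactly the content of the von Neumann algebraic Uhlmann theorem~\Cref{prop:uhlmann}: since $\ket{\wt\Phi_{\sspp}}$ and $\ket{\wt\Phi_{\sspx}}$ are unit vectors in $\cL^2(\alicealg,\tau)$ inducing nearby states on $\alicealg$, there is a unitary $V\in\alicealg'$ with $\bra{\wt\Phi_{\sspx}}V\ket{\wt\Phi_{\sspp}}\ge 1-\tfrac12\|\cdot\|_{\alicealg}$; the point is that the $\alicealg'$-unitary leaves the $\alicealg$-marginal untouched, which is what a ``local adjustment on Bob's side'' means. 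Dually, closeness of the $\alicealg'$-reduced states (for the $x$-variation, which acts on the $A$-side operators in~\eqref{eq:def-state-y}) produces a unitary $U\in\alicealg$, using~\Cref{prop:uhlmann} applied to the algebra $\alicealg'$ in standard form with $\alicealg=(\alicealg')'$. I would set $U_{(\omega_\mi,\vr_C),x}$ and $V_{(\omega_\mi,\vr_C),y}$ to be exactly these Uhlmann unitaries, noting carefully that $U\in\alicealg$ (Alice's algebra) and $V\in\alicealg'$ (Bob's algebra), so that they genuinely act on disjoint sides as required by~\Cref{prop:local_operators}. The estimates~\eqref{eq:ux_bound},~\eqref{eq:vy_bound} follow directly; the mixed estimate~\eqref{eq:vxy_bound} requires the same argument applied to the pair $\ket{\wt\Phi_{\sspxy}}$, $\ket{\wt\Phi_{\sspxp}}$, which differ only in Bob's coordinate-$i$ question $y$ vs $\dummy$, so again a single $\alicealg'$-unitary suffices; here one additionally invokes parts (3)--(4) of~\Cref{lem:classical_skew} to handle the conditioning on $\bX_i=\dummy$ and relation~\eqref{eq:randomremarkParallel}.

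The more delicate bookkeeping is the passage from the divergence bounds to the vector-norm bounds of the form $\Ex\|U\ket{\wt\Phi_{\sspp}}-\ket{\wt\Phi_{\ssxp}}\|=O(\eta_{\text{PR}}^{1/16})$, because the states appearing in~\eqref{eq:def-state-y} are unnormalized with normalizing factors $\gamma_{(\omega_\mi,\vr_C),s,y}$ given by square roots of conditional winning probabilities (\Cref{prop:gamma}), and one must control the fluctuation of these $\gamma$'s on average. This is handled exactly as in~\cite{bavarianAnchoredParallelRepetition2021}: one shows $\Ex|\gamma_{\sspx}-\gamma_{\sspp}|^2$ is small using~\Cref{lem:classical_skew} part (1) together with the data-processing/Cauchy--Schwarz manipulations, and the anchoring bounds~\eqref{eq:states-operators-3a}--\eqref{eq:states-operators-3b} (the $1/\eta_{\text{Anchor}}$ and $1/(1-\eta_{\text{Anchor}})$ factors, with $\eta_{\text{Anchor}}=\tfrac14$ a universal constant) to pass between $\dummyx$-labelled and $x$-labelled operators without blowing up constants. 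The main obstacle I anticipate is \emph{not} any conceptual gap — the von Neumann algebraic analogues of all the needed inequalities (Uhlmann, Pinsker, Raz, monotonicity, chain rule) have been assembled in~\Cref{sec:qinfo} — but rather the careful tracking that every adjustment unitary lands in the correct algebra ($\alicealg$ for Alice, $\alicealg'$ for Bob) and that the cascade of triangle inequalities combining the normalization control, the Uhlmann step, and the anchoring rescalings produces a final exponent of $1/16$ matching the downstream use in~\Cref{prop:local_operators} and~\Cref{thm:anchoringparallelrep}. Since the structural proof is line-for-line parallel to~\cite[Section 5]{bavarianAnchoredParallelRepetition2021} once one substitutes ``reduced density matrix / partial trace'' by ``restriction to $\alicealg$ (resp. $\alicealg'$)'' and ``Uhlmann's theorem for purifications'' by~\Cref{prop:uhlmann}, I would present the lifted statements and the algebra-placement checks in detail and refer to the original for the routine divergence-chasing.
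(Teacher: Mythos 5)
Your proposal is correct and follows essentially the same route as the paper's proof: classical-quantum states carrying the post-measurement $\alicealg$-register, Raz's lemma plus the chain-rule/monotonicity bound $\eta_{\text{PR}}$ for the averaged mutual information, Pinsker to convert to state-norm closeness, the von Neumann algebraic Uhlmann theorem (Proposition~\ref{prop:uhlmann}) to produce the unitaries with the correct algebra placement (via the purification identities analogous to Claims~\ref{clm:xi-purification} and~\ref{clm:lambda-purification}), normalization control via Lemma~\ref{lem:norms-are-close}, and a final Markov argument over $i$ yielding the $\eta_{\text{PR}}^{1/16}$ exponent. The only small imprecision is that the quantum mutual-information bound (Claim~\ref{claim:xi-change-x}) is derived directly from the divergence machinery applied to the CQ state and the $\log(1/\P(W_C))$ and $|C|\log|\cA|^2$ terms — Lemma~\ref{lem:classical_skew} enters separately (for the normalization fluctuations in Lemma~\ref{lem:norms-are-close} and the good-index condition downstream), not as the source of the mutual-information bound itself — but this does not affect the soundness of the plan.
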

The second lemma, given below, relates the normalization factors $\gamma_\xy$ and $\gamma_{\pxy}$ defined in~\eqref{eq:def-gamma}. Since the second lemma is identical to~\cite[Lemma 5.17]{bavarianAnchoredParallelRepetition2021}, we instead refer the reader to the original reference for the proof. 

\begin{lemma}\label{lem:norms-are-close}
	With probability at least $1 - O(\eta_{\text{PR}}^{1/4})$ over the choice of $i \in [n] \setminus C$, 
	\begin{equation}
		\label{eq:norms-are-close-s0}
		\Ex_{\bX \bY} \, \Ex_{\bOmega_{\mi} \bR_C | \bX_i = \dummy,\bY_i = \dummy, W_C} \, \Big | 1 - \frac{\gamma_{(\omega_{\mi}, \vr_C), x,y}}{\gamma_{(\omega_{\mi}, \vr_C),\dummy,\dummy}} \Big |^2 \leq O(\eta_{\text{PR}}^{1/4})\;,
	\end{equation}
	and
	\begin{equation}
		\label{eq:norms-are-close-s1}
		\Ex_{\bX \bY} \, \Ex_{\bOmega_{\mi} \bR_C | \bX_i = \dummy, \bY_i = \dummy, W_C} \, \Big | 1 - \frac{\gamma_\sspxy}{\gamma_{(\omega_{\mi}, \vr_C),\dummy,\dummy}} \Big |^2 \leq O(\eta_{\text{PR}}^{1/4})\;.
	\end{equation}
\end{lemma}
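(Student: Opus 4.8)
\textbf{Plan for the proof of Lemma~\ref{lem:norms-are-close}.}

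The statement relates the normalization factors $\gamma_{(\omega_\mi,\vr_C),s,y}$ for different choices of $(s,y)$ to the ``fully dummy'' factor $\gamma_{(\omega_\mi,\vr_C),\dummy,\dummy}$. By Proposition~\ref{prop:gamma}, each squared normalization factor is exactly a conditional probability of the answer pair $(\va_C,\vb_C)$ on the critical coordinates: $\gamma_{(\omega_\mi,\vr_C),x,y}^2 = \P_{\bA_C\bB_C|\bOmega_\mi=\omega_\mi,\bX_i=x,\bY_i=y}(\va_C,\vb_C)$ and $\gamma_{(\omega_\mi,\vr_C),\dummy,\dummy}^2 = \P_{\bA_C\bB_C|\bOmega_\mi=\omega_\mi,\bX_i=\dummy,\bY_i=\dummy}(\va_C,\vb_C)$, and similarly $\gamma_{(\omega_\mi,\vr_C),\dummyx,y}^2 = \P_{\bA_C\bB_C|\bOmega_\mi=\omega_\mi,\bOmega_i=(A,x),\bY_i=y}(\va_C,\vb_C)$. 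So the plan is to translate the claimed $\ell_2$-closeness of ratios into a statement about how little the distribution of $(\bOmega_\mi,\blR_C)$ conditioned on $W_C$ changes when we additionally condition on $\bX_i,\bY_i$ taking particular values versus the dummy value. This is precisely the content of parts (3) and (4) of Lemma~\ref{lem:classical_skew}, which bound (on average over $i$) the total variation distances $\|\P_{\bOmega_i|W_C}\P_{\bOmega_\mi\blQ_C|\bX_i=\dummy,\bY_i=\dummy,W_C} - \P_{\bOmega_i|W_C}\P_{\bOmega_\mi\blQ_C|\bOmega_i W_C}\|$ and $\|\P_{\bX_i\bY_i}\P_{\bOmega_\mi\blQ_C|\bX_i=\dummy,\bY_i=\dummy,W_C} - \P_{\bX_i\bY_i}\P_{\bOmega_\mi\blQ_C|\bX_i,\bY_i,W_C}\|$ by $O(\sqrt{\eta_{\text{PR}}})$.

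First I would reduce the ratio bound to a bound on $|\gamma_{(\omega_\mi,\vr_C),x,y}^2 - \gamma_{(\omega_\mi,\vr_C),\dummy,\dummy}^2|$ after summing against the appropriate measure. The key algebraic trick (as in~\cite[Lemma 5.17]{bavarianAnchoredParallelRepetition2021}) is the elementary inequality: for nonnegative $\alpha,\beta$ with $\beta>0$, $\mathbb{E}|1 - \alpha/\beta|^2 \le \mathbb{E}|1 - \alpha^2/\beta^2|$ whenever one can bound $|1-\alpha/\beta| \le |1 - \alpha^2/\beta^2|$ pointwise in the regime $\alpha/\beta \le 1$, and otherwise absorb the contribution of the ``large'' region into the total-variation error; more robustly, write $|1-\alpha/\beta| = |\beta-\alpha|/\beta = |\beta^2-\alpha^2|/(\beta(\alpha+\beta)) \le |\beta^2-\alpha^2|/\beta^2$. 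Taking the expectation $\Ex_{\bX\bY}\Ex_{\bOmega_\mi\bR_C|\bX_i=\dummy,\bY_i=\dummy,W_C}$, substituting Proposition~\ref{prop:gamma}, and recognizing the resulting sum as $\sum_{(\omega_\mi,\vr_C)}|\P_{\bOmega_\mi\bR_C|\bX_i=\dummy,\bY_i=\dummy,W_C}(\omega_\mi,\vr_C)\cdot(\text{something}) - (\ldots)|$ — i.e. an $\ell_1$ distance between two joint distributions that agree except for the conditioning on $(\bX_i,\bY_i)$ — I would bound it by the TV distances in Lemma~\ref{lem:classical_skew} parts (3) and (4). One subtlety: $\gamma$ involves $\bR_C = (\blQ_C,\blS_C) = (\bX_C,\bY_C,\bA_C,\bB_C)$, which includes the answers $\blS_C$, whereas Lemma~\ref{lem:classical_skew} parts (3),(4) are stated for $\bOmega_\mi\blQ_C$ only; but by part (2) of Claim~\ref{claim:ufacts} the answer distribution $\P_{\blS_C|\bX\bY}$ is independent of $\bOmega$, so conditioning on $\blS_C$ (via the event determining $\va_C,\vb_C$) factors through, and the data-processing inequality (Lemma~\ref{lem:data-processing}) extends the bound to include $\blS_C$. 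For~\eqref{eq:norms-are-close-s1} the only change is that $\gamma_\sspxy^2$ involves conditioning on $\bOmega_i=(A,x)$ rather than $\bX_i=x$; here the relevant averaging measure is $\Ex_{\bOmega_i|W_C}$-type and one invokes part (3) of Lemma~\ref{lem:classical_skew} instead of part (4), together with the ``noisy coupling'' identity~\eqref{eq:gamma-def-3}.

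The main obstacle I anticipate is bookkeeping the exact measures: one must be careful that in~\eqref{eq:norms-are-close-s0} the outer expectation is $\Ex_{\bX\bY}$ times $\Ex_{\bOmega_\mi\bR_C|\bX_i=\dummy,\bY_i=\dummy,W_C}$ (a product of the marginal game distribution with a conditional), and verify that after the algebraic manipulation the resulting sum genuinely matches a TV distance appearing in Lemma~\ref{lem:classical_skew}, rather than a superficially similar but differently-normalized quantity. A second minor point is handling the case $\gamma_{(\omega_\mi,\vr_C),\dummy,\dummy}=0$: there $\ket{\wt\Phi}$ is defined to be zero and the ratio is vacuous, so such $(\omega_\mi,\vr_C)$ contribute nothing and can be excluded from the sum. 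Given these, the proof is a few lines of manipulation; since it is essentially identical to~\cite[Lemma 5.17]{bavarianAnchoredParallelRepetition2021} with the finite-dimensional density matrices replaced by the tracial-state formulas of Proposition~\ref{prop:gamma}, I would state it and then refer to the original reference for the detailed calculation, as is done elsewhere in this appendix.
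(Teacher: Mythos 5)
Your proposal is correct and lands on essentially the same approach as the paper: the paper states the lemma and simply defers to~\cite[Lemma 5.17]{bavarianAnchoredParallelRepetition2021}, on precisely the grounds you identify — once Proposition~\ref{prop:gamma} recasts each $\gamma^2$ as a classical conditional probability, the argument (the elementary inequality $(1-t)^2 \le |1-t^2|$, the translation into TV distances, and Lemma~\ref{lem:classical_skew} parts (3)–(4) together with data processing to pass from $\blQ_C$ to $\blR_C$) is purely classical and carries over verbatim to the tracial setting. One small tidy-up: the pointwise bound $(1-t)^2 \le |1-t^2| = |1-t|(1+t)$ holds for all $t\ge 0$ with no case-split, so the ``otherwise absorb the large region'' contingency you flagged is not needed.
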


\subsubsection{Local operator lemma}
In this subsection, we give a proof for~\Cref{lem:unitary_bounds}. Recall from the previous section that the strategy $\strategy^{\otimes r} = \{ \cL^2(\alicealg, \tau), \ket{\psi} = \sigma \ket{\tau}, \{A_\vx^\va \}, \{B_\vy^\vb \} \}$ is a tracially embeddable strategy which realizes a contradiction in~\Cref{lem:anchortrans}. For all $\omega$, $(\vx_C, \vy_C)$, $\va_C$, and $\bB_C$, we define the measurement operator
\begin{equation}
	\label{eq:a-bomega-def}
	A^{\omega, (\vx_C, \vy_C)}_{\va_C} = \Ex_{\bX | \bOmega = \omega, \blQ_C = (\vx_C, \vy_C)} \, A^{\vx}_{\va_C} \qquad \text{and} \qquad B^{\omega, (\vx_C, \vy_C)}_{\vb_C} =  \Ex_{\bY | \bOmega = \omega, \blQ_C = (\vx_C, \vy_C)} \, B^\vy_{\vb_C}
\end{equation}
where $A^{\vx}_{\va_C},B^\vy_{\vb_C}$ are defined in~\eqref{eq:states-operators-1}. For all $\omega,\vx,\vy,\va_C,\vb_C$, we define the vector state
\begin{align}
	\label{eq:Xidef}\ket{\Xi_{\omega,\vx_C, \vy,\va_C,\vb_C}} &= \left(A^{\omega, (\vx_C, \vy_C)}_{\va_C} \right)^{1/2} \, \left(B^\vy_{\vb_C} \right )^{1/2} \ket{\psi} \\
	\ket{\Xi_{\omega,\vx_C, \vy,\va_C,\vb_C}} &= \left(A^{\omega, (\vx_C, \vy_C)}_{\va_C} \right)^{1/2} \, \left(B^\vy_{\vb_C} \right )^{1/2} \ket{\psi}, 
\end{align}
where $\ket{\Xi_{\omega,\vx_C, \vy,\va_C,\vb_C}}, \ket{\Lambda_{\omega,\vx,\vy_C, \va_C,\vb_C}} \in \cL^2(\alicealg, \tau)$. We remark that in the above definition, $A^{\omega, (\vx_C, \vy_C)}_{\va_C}$ and $B^\vy_{\vb_C}$ uses the same value of $\vy$. Let $\Xi_{\omega,\vx_C, \vy,\va_C,\vb_C}$ and $\Lambda_{\omega,\vx,\vy_C, \va_C,\vb_C}$ denote the abstract normal states acting on $\cL^2(\alicealg, \tau)$ specified by the vector $\ket{\Xi_{\omega,\vx_C, \vy,\va_C,\vb_C}}$ and $\ket{\Lambda_{\omega,\vx,\vy_C, \va_C,\vb_C}}$, respectively, i.e. 
\begin{align*}
\Xi_{\omega,\vx_C, \vy,\va_C,\vb_C} (A) &= \braket{\Xi_{\omega,\vx_C, \vy,\va_C,\vb_C}| A |\Xi_{\omega,\vx_C, \vy,\va_C,\vb_C}} \\
\Lambda_{\omega,\vx,\vy_C, \va_C,\vb_C} (A) &= \braket{\Lambda_{\omega,\vx,\vy_C, \va_C,\vb_C}| A |\Lambda_{\omega,\vx,\vy_C, \va_C,\vb_C}}
\end{align*}
Now define the classical-quantum states 
\begin{align}
	\label{eq:Xi-def} \text{\gls{CQcond1}} &= \sum_{\omega, \vx_C,\vy, \va_C, \vb_C} \P_{\bOmega \bX_C \bY} (\omega,\vx_C, \vy) \, \bra{\omega, \vx_C, \vy, \va_C, \vb_C}\otimes \Xi_{\omega,\vx_C, \vy,\va_C,\vb_C} ~,   \\
	\label{eq:Lambda-def} \text{\gls{CQcond2}} &= \sum_{\omega, \vx, \vy_C, \va_C, \vb_C} \P_{\bOmega \bX \bY_C} (\omega, \vx, \vy_C) \, \bra{\omega, \vx, \vy_C, \va_C, \vb_C} \otimes \Lambda_{\omega,\vx,\vy_C, \va_C,\vb_C}~,
\end{align}
Both states are classical on the space $\bOmega$, $\bX$, $\bX_C$, $\bY$, $\bY_C$ and $\ac$ and quantum on the space $\alicealg$.We remark that all classical register listed above are multiple classical registers, as each of them are probability distribution over multiple coordinates (i.e. $\bX = (\bX_0, \cdots, \bX_{r-1})$). Observe that this state looks very similar -- but not quite -- to the one that occurs in an actual execution of the strategy $\strategy^{\otimes r}$. There are several important differences: one is that the measurements only produce answers for the coordinates indexed by $C$. Another difference is that the measurement operators $A^{\omega, (\vx_C, \vy_C)}_{\va_C}$ are not part of the strategy but instead are derived from the measurements operator. Also, note that there is no explicit register for question vector $\bX$ (except for the $\bX_C$ questions, which are included in the register $\bOmega$); instead these questions are implicitly averaged over within the $A^{\omega, (\vx_C, \vy_C)}$ measurement for a fix value of $(\omega, \vx_C, \vy_C)$. 

The state $\Xi^{\bOmega \bX_C \bY \ac \alicealg}$ is defined such that when restricted to the classical register $\Xi^{\bOmega \bX_C \bY \ac}$, the resulting state representing the probability distribution $\P_{\bOmega \bX_C \bY \bA_C \bB_C}$. To see this, observe that for any $(\omega,\vx_C, \vy,  \va_C, \vb_C)$, we have
\begin{align*}
	\Xi^{\bOmega \bX_C \bY \ac} &= \sum_{\omega, \vx_C,\vy, \va_C, \vb_C} \P_{\bOmega \bX_C \bY}   (\omega,\vx_C, \vy) \, \cdot \bra{\omega, \vx_C, \vy, \va_C, \vb_C} \otimes \Xi_{\omega,\vx_C, \vy,\va_C,\vb_C}(\cI_{\alicealg}) ~,   \\ \\
	&= \sum_{\omega, \vx_C,\vy, \va_C, \vb_C} \left(\P_{\bOmega \bX_C \bY}  (\omega,\vx_C, \vy)  \braket{\psi| A^{\omega, (\vx_C, \vy_C)}_{\va_C} B^\vy_{\vb_C}|\psi  } \right) \cdot  \, \bra{\omega, \vx_C, \vy, \va_C, \vb_C}  ~,    \\
	&= \sum_{\omega, \vx_C,\vy, \va_C, \vb_C}\P_{\bOmega \bX_C \bY}  (\omega,\vx_C, \vy)   \left( \Ex_{\bX | \bOmega = \omega, \bX_C = \vx_C,\bY= \vy} \braket{\psi|  A^{\vx}_{\va_C}  B^\vy_{\vb_C}|\psi} \right)  \cdot  \, \bra{\omega, \vx_C, \vy, \va_C, \vb_C}  ~,    \\
	&= \sum_{\omega, \vx_C,\vy, \va_C, \vb_C}\P_{\bOmega \bX_C \bY}  (\omega,\vx_C, \vy)  \left(  \Ex_{\bX | \bX_C = \vx_C,\bY= \vy} \P_{\bA \bB | \bX = \vx, \bY = \vy}(\va,\vb) \right)  \cdot  \, \bra{\omega, \vx_C, \vy, \va_C, \vb_C}  ~,    \\
	&= \sum_{\omega, \vx_C,\vy, \va_C, \vb_C}  \left( \P_{\bOmega \bX_C \bY \bA_C \bB_C}  (\omega,\vx_C, \vy, \va_C, \vb_C) \right)  \cdot  \bra{\omega, \vx_C, \vy, \va_C, \vb_C}  ~,    
\end{align*}
where in the third line we used Item 1 from Claim~\ref{claim:ufacts} and in the fourth line we used Item 2 from Claim~\ref{claim:ufacts} and each $\bY_i$ being independent of each other. By a similar calculation, the state $\Lambda^{\bOmega \bX \bA_C \bB_C}$ represents the probability distribution $\P_{\bOmega \bX \bY_C \bA_C \bB_C}$. Since both $\P_{\bOmega \bX_C \bY \bA_C \bB_C}$ and $\P_{\bOmega \bX \bY_C \bA_C \bB_C}$ are probability distributions, the state $\Xi^{\bOmega \bX_C \bY \ac} $ and $\Lambda^{\bOmega \bX \bY_C \ac}$ are indeed classical quantum states. 


Recall, the event $W_C$ corresponds to the event where the provers produces an winning answer given a winning question pair on all coordinates on the critical set $C$. Since the event $W_C$ is determined by the random variables $(\bOmega, \blR_C)$ we can condition the states $\Xi^{\bOmega \bX_C \bY \ac \alicealg},  \Lambda^{\bOmega \bX \bY_C  \ac \alicealg}$ on the event $W_C$ to obtain states
\begin{align*}
	\text{\gls{CQcond3}} &= \frac{1}{\P(W_C)} \sum_{\substack{\omega, \vx_C, \vy, \va_C, \vb_C: \\ (\vx_C, \vy_C,\va_C,\vb_C) \in W_C}} \P_{\bOmega \bX_C \bY} (\omega,\vx_C, \vy) \,  \cdot \bra{\omega, \vx_C, \vy, \va_C, \vb_C} \otimes \Xi_{\omega,\vx_C, \vy,\va_C,\vb_C} ~,   \\
	\text{\gls{CQcond4}} &= \frac{1}{\P(W_C)} \sum_{\substack{\omega, \vx_C, \vy, \va_C, \vb_C: \\ (\vx_C, \vy_C,\va_C,\vb_C) \in W_C}}  \, \P_{\bOmega \bX \bY_C} (\omega, \vx, \vy_C) \,\cdot \bra{\omega, \vx, \vy_C, \va_C, \vb_C} \otimes \Lambda_{\omega,\vx,\vy_C, \va_C,\vb_C}~,
\end{align*}
Since the event $W_C$ is a subset of all possible coordinates in $\cX^{2|C|} \times \cA^{2|C|}$, by definition, we have
\begin{equation}
	\P(W_C) \cdot \xi^{\bOmega \bX_C \bY \ac \alicealg}  \leq  \Xi^{\bOmega \bX_C \bY \ac \alicealg} \, \qquad 	\P(W_C) \cdot  \lambda^{\bOmega \bX \bY_C  \ac \alicealg}  \leq \Lambda^{\bOmega \bX \bY_C  \ac \alicealg}. \label{eq:Xitoxiinequality}
\end{equation} 
For a fix $\omega$ and $\vr_C = (\vx_C, \vy_C, \va_C, \vb_C) \in \cX^{2|C| \cA^{2|C|}}$, we write $\xi^{\bOmega \bX_C \bY \ac \alicealg}_{(\omega, \vr_C)}$ as the (normalize) state 
\begin{equation*}
	\Xi^{\bOmega \bX_C \bY \ac \alicealg}_{(\omega, \vr_C)} =  \sum_{\vy: \vy|_{C} = \vy_C} \P_{\bOmega \bX_C \bY} (\omega,\vx, \vy_C) \,  \cdot \bra{\omega, \vx_C, \vy, \va_C, \vb_C} \otimes \Xi_{\omega,\vx, \vy_C,\va_C,\vb_C} ~,  \\
\end{equation*}
In other words, the state $\Xi^{\bOmega \bX_C \bY \ac \alicealg}_{(\omega, \vr_C)}$ is equivalent to the quantum-classical $\xi^{\bOmega \bX_C \bY \ac \alicealg}$ restricted to the component where $\Omega = \omega$ and $\blR_C = \vr_C$ for all coordinates in $C$. We define the state $\Xi^{\bOmega \bX_C \bY \ac \alicealg}_{\omega, \vx_C, \vy_C}$ as the same conditioning above, but only for fixed $\vx_C, \vy_C$ value, and we define the state $\Lambda^{\bOmega \bX_C \bY \ac \alicealg}_{(\omega, \vr_C)}$ and $\Lambda^{\bOmega \bX_C \bY \ac \alicealg}_{\omega, \vx_C, \vy_C}$ in a similar manner as above. 

Likewise, for $\vr_C = (\vx_C, \vy_C, \va_C, \vb_C) \in W_C$, we define the (normalized) state $\xi^{\bOmega \bX_C \bY \ac \alicealg}_{(\omega, \vr_C)}$
\begin{equation*}
	\xi^{\bOmega \bX_C \bY \ac \alicealg}_{(\omega, \vr_C)} = \frac{1}{\P(W_C)} \sum_{\vy: \vy|_{C} = \vy_C} \P_{\bOmega \bX_C \bY} (\omega,\vx, \vy_C) \,  \cdot \bra{\omega, \vx, \vy_C, \va_C, \vb_C} \otimes \Xi_{\omega,\vx_C, \vy,\va_C,\vb_C} ~,  \\
\end{equation*}
and $\xi^{\bOmega \bX_C \bY \ac \alicealg}_{(\omega, \vr_C)} = 0$  if $\vr_C \not\in W_C$. We define the state $\lambda^{\bOmega \bX_C \bY \ac \alicealg}_{(\omega, \vr_C)} $ in a similar manner as above. By definition, for a fixed $\omega$ and $\vr_C = (\vx_C, \vy_C, \va_C, \vb_C) \in W_C$, we have $\xi^{\bOmega \bX_C \bY \ac \alicealg}_{(\omega, \vr_C)} = \Xi^{\bOmega \bX_C \bY \ac \alicealg}_{(\omega, \vr_C)}$. Similarly to the proof of~\cite[Lemma 5.12]{bavarianAnchoredParallelRepetition2021}, the main step of proving~\Cref{lem:unitary_bounds} is given by two claims which build on top of each other. The following claim is an analogue of~\cite[Claim 5.13]{bavarianAnchoredParallelRepetition2021} for the commuting operator model. We remark that the proof is rewritten for clarity.  

\begin{claim}\label{claim:xi-change-x}
	\begin{align}
		\label{eq:xi-change-1}\Ex_{i \sim [r] \setminus C} \,  \Ex_{\bOmega \blR_C| W_C}\, \I \big (\bY_i ; \alicealg \big )_{	\xi^{\bOmega \bX_C \bY \ac \alicealg}_{(\omega, \vr_C)}} \,&= \, O(\eta_{\text{PR}})~,  \\
		\label{eq:xi-change-2} \Ex_{i \sim [r] \setminus C} \,  \Ex_{\bOmega \blR_C| W_C} \,\I \big (\bX_i ; \bobalg \big )_{	\lambda^{\bOmega \bX \bY_C  \ac \alicealg}_{(\omega, \vr_C)} } \,&= \, O(\eta_{\text{PR}})~. 
	\end{align}
\end{claim}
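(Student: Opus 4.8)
\textbf{Proof proposal for Claim~\ref{claim:xi-change-x}.}

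The plan is to follow the structure of the analogous claim in~\cite{bavarianAnchoredParallelRepetition2021}, adapting the information-theoretic manipulations to the von Neumann algebra framework developed in Subsections~\ref{sec:qinfo}. We prove only~\eqref{eq:xi-change-1}; the bound~\eqref{eq:xi-change-2} follows by the symmetric argument with the roles of Alice and Bob (and of $\Xi$ and $\Lambda$) exchanged. First I would observe that, by the chain rule for relative entropy (\Cref{prop:divergence_chain_rule}) and the definition of mutual information as a relative entropy, we can write the averaged quantity $\Ex_{i} \Ex_{\bOmega \blR_C | W_C} \I(\bY_i : \alicealg)_{\xi_{(\omega,\vr_C)}}$ in terms of a single relative entropy between a conditioned state and a suitable product state. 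Concretely, the state $\xi^{\bOmega \bX_C \bY \ac \alicealg}$ (which is $\Xi^{\bOmega \bX_C \bY \ac \alicealg}$ conditioned on $W_C$) has classical part containing all the coordinates $\bY = (\bY_0,\ldots,\bY_{r-1})$, and the $\bY_i$'s for $i \notin C$ play the role of the ``$\X_i$'' variables in quantum Raz's lemma (\Cref{lem:quantum_raz}), with $\alicealg$ in the role of the quantum side.

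The key step is to bound $\sum_{i \in [r]\setminus C} \I(\bY_i : \alicealg)_{\xi}$ (with the appropriate conditioning on $\bOmega \blR_C$ folded into the classical registers) by the relative entropy $\D(\xi \,\|\, \rho)$, where $\rho$ is the state obtained by replacing the correlated joint distribution of $(\bY_i)_{i \notin C}$ with a product of their marginals (conditioned on the relevant $\bOmega_i$), tensored with a fixed reference state on $\alicealg$. This is exactly the content of \Cref{lem:quantum_raz}, applied with $\X = \prod_{i \notin C}\bY_i$ as the multi-coordinate classical register and the remaining classical data $(\bX_C, \va_C, \vb_C, (\bOmega_i)_{i \notin C})$ carried along as side information. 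Then I would argue that this relative entropy is controlled by $\log(1/\P(W_C)) + |C|\log|\A|^2$: the point is that $\xi$ is $\Xi$ rescaled by $1/\P(W_C)$ on the support of $W_C$, so by \Cref{prop:properties_entropy} part (2) and the monotonicity properties (\Cref{prop:relative_min_entropy_chain_rule2}, \Cref{prop:divergence_data_processing}), passing from the unconditioned state $\Xi$ — whose classical $\bY$-marginal is genuinely a product distribution by \Cref{claim:ufacts} part (1), making the relevant divergence zero — to the conditioned state $\xi$ costs at most $\log(1/\P(W_C))$, plus the entropy contribution of the $|C|$ answer pairs $\va_C,\vb_C$ which is at most $|C|\log|\A|^2$. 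Dividing by $r - |C|$ and recalling the definition~\eqref{eq:delta} of $\eta_{\text{PR}}$ yields the claimed $O(\eta_{\text{PR}})$ bound.

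The main obstacle I anticipate is making the ``$\Xi$ vs.\ $\xi$'' comparison rigorous at the level of states on von Neumann algebras rather than density matrices. In finite dimensions one simply writes $\xi = \Xi|_{W_C}/\P(W_C)$ and uses additivity of entropy under direct sums; here the conditioning on the classical register $\blR_C \in W_C$ must be phrased via the direct-sum decomposition $\C^{\X}\otimes\alicealg = \bigoplus_x \bra{x}\otimes\alicealg$ and \Cref{prop:direct_sum_entropy}, and one must be careful that $\xi$ is a genuine \emph{normal state} (rather than merely positive) on the algebra and that the inequality~\eqref{eq:Xitoxiinequality} feeds correctly into \Cref{prop:properties_entropy} part (3). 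A secondary technical point is verifying that the measurement operators $A^{\omega,(\vx_C,\vy_C)}_{\va_C}$, which are averages of POVM elements rather than POVM elements of the actual strategy, still produce a legitimate classical-quantum state of the required form so that the mutual-information machinery applies; this is exactly the computation sketched after~\eqref{eq:Lambda-def} showing $\Xi^{\bOmega\bX_C\bY\ac}$ represents $\P_{\bOmega\bX_C\bY\bA_C\bB_C}$, which I would invoke directly.

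Once Claim~\ref{claim:xi-change-x} is in hand, the remaining steps toward \Cref{lem:unitary_bounds} (and thence \Cref{prop:local_operators}) would be: use Pinsker's inequality (\Cref{prop:pinsker}) to convert the small conditional mutual informations into statements that the states $\Xi_{\omega,\vx_C,\vy,\va_C,\vb_C}$ are close, on average, to states independent of $\bY_i$; then invoke the Uhlmann-type theorem \Cref{prop:uhlmann} to produce the unitaries $V_{(\omega_{\mi},\vr_C),y} \in \alicealg'$ (and symmetrically $U_{(\omega_{\mi},\vr_C),x} \in \alicealg$) realizing~\eqref{eq:ux_bound}--\eqref{eq:vxy_bound}; and finally combine with \Cref{lem:norms-are-close} and the total-variation bounds of \Cref{lem:classical_skew} to pass from the $\Xi$/$\Lambda$ states to the $\wt{\Phi}$ states appearing in \Cref{prop:local_operators}. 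This last assembly is where the exponent $1/16$ in $\eta_{\text{PR}}^{1/16}$ emerges, through repeated applications of Cauchy--Schwarz and the square-root loss in Pinsker and Uhlmann, exactly as in~\cite{bavarianAnchoredParallelRepetition2021}.
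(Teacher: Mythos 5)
Your proposal is correct and follows essentially the same route as the paper: condition the state on $W_C$, express the averaged mutual information as a single relative entropy via quantum Raz's lemma to extract the $1/(r-|C|)$ factor, and then bound that relative entropy by $\log(1/\P(W_C)) + |C|\log|\A|^2$ using the operator inequality $\P(W_C)\,\xi \leq \Xi$ together with \Cref{prop:relative_min_entropy_chain_rule2} and the $|\cA|^{2|C|}$-bounded comparison between $\Xi^{\bY\ac\alicealg}$ and $\Xi^{\bY}\otimes\Xi^\alicealg$. Two small remarks. First, your explicit invocation of \Cref{lem:quantum_raz} is exactly what produces the $\tfrac{1}{r-|C|}$; this step is in fact required (the paper's displayed proof justifies the passage from $\bY_i$ to the full $\bY$ by ``monotonicity,'' which alone would lose the $\tfrac{1}{r-|C|}$ factor — the Raz's-lemma application is what one actually needs there, and the hypothesis is supplied by the product structure of $\P_{\bOmega\bX\bY}$ from \Cref{claim:ufacts} part (1)). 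Second, a minor imprecision: the product structure of $\Xi^{\bY}_\omega$ does not make any ``baseline divergence zero''; rather, it certifies that $\Xi^{\bY}_\omega\otimes\Xi^{\alicealg}_\omega$ has the product form required as the second argument of \Cref{lem:quantum_raz}, and it is the combination of $\P(W_C)\xi\leq\Xi$ (for the $\log(1/\P(W_C))$ term) with the dimension bound $\Xi^{\bY\alicealg}_{\omega,\vx_C,\vy_C}\leq|\cA|^{2|C|}\,\Xi^{\bY}\otimes\Xi^\alicealg$ (for the $|C|\log|\A|^2$ term) that gives the numerator of $\eta_{\text{PR}}$.
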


\begin{proof}
	We present the proof for~\eqref{eq:xi-change-1}; the proof for~\eqref{eq:xi-change-2} follows from a similar calculation. First, for a fixed $\omega$ and $\vr_C = (\vx_C, \vy_C, \va_C, \vb_C) \in W_C$,  $\xi^{\bOmega \bX_C \bY \ac \alicealg}_{(\omega, \vr_C)} = \Xi^{\bOmega \bX_C \bY \ac \alicealg}_{(\omega, \vr_C)}$. Hence, by rearranging~\eqref{eq:xi-change-1},
	\begin{align}
		\Ex_{i \sim [r] \setminus C} \,  \Ex_{\bOmega \blR_C| W_C} \I \big (\bY_i ; \alicealg \big )_{\xi^{\bOmega \bX_C \bY \ac \alicealg}_{(\omega, \vr_C)}} &= \Ex_{i \sim [r] \setminus C} \,  \Ex_{\bOmega \blR_C| W_C} \D \big (\xi^{\bY_i \alicealg}_{(\omega, \vr_C)} \big \| \Xi_{(\omega, \vr_C)} ^{\bY_i} \otimes \Xi_{(\omega, \vr_C)} ^{\alicealg} \big )\notag\\
		\notag &\leq \Ex_{i \sim [r] \setminus C} \,  \Ex_{\bOmega \blR_C| W_C} \D \big (\xi^{\bY \alicealg}_{(\omega, \vr_C)} \big \| \Xi_{(\omega, \vr_C)} ^{\bY} \otimes \Xi_{(\omega, \vr_C)} ^{\alicealg} \big ) \\
		\label{eq:claim1a} &\leq \Ex_{i \sim [r] \setminus C} \,  \Ex_{\bOmega \blQ_C| W_C} \D \big (\xi^{\bY \alicealg}_{\omega, \vx_C, \vy_C} \big \| \Xi_{(\omega, \vr_C)} ^{\bY} \otimes \Xi_{\omega, \vx_C, \vy_C}^{\alicealg} \big ) 
	\end{align}
	Where the second line follows from~\Cref{prop:divergence_data_processing} and the third line follows from~\Cref{prop:divergence_chain_rule} on the distribution $\blS_C$. We wish to use~\Cref{prop:relative_min_entropy_chain_rule2} to bound the inequality. For all $\omega$ and $(\vx_C, \vy_C) \in \cX^{|C|^2}$
	\begin{align}
		\Xi^{\bY \ac \alicealg}_{\omega, \vx_C, \vy_C}  &= \sum_{\vy, \va_C, \vb_C} \P_{\bY | \omega, (\vx_C, \vy_C)}(\vy) \,\, \bra{\vy} \otimes \bra{\va_C \vb_C} \otimes \Xi_{\omega,\vx_C, \vy,\va_C,\vb_C}^{\alicealg}  \notag \\
		&\leq  \sum_{\vy} \P_{\bY | \omega}(\vy) \,\, \bra{\vy} \otimes \Tr_{|\cA|^{2|C|}} \otimes\left( \sum_{\va_C,\vb_C} \Xi_{\omega,\vx_C, \vy,\va_C,\vb_C}^{\alicealg} \right) \notag \\
		&= \Xi_{\omega , \vx_C, \vy_C}^{\bY} \otimes \Tr_{|\cA|^{2|C|}} \otimes \Xi^{\alicealg}_{\omega, \vx_C, \vy_C}   \;,\label{eq:claim3-2} 
	\end{align}
	where the second line follows since the state $\bra{\vy} \leq \Tr_{|\cA|^{2|C|}}$, and the third line follows because $\sum_{\va_C,\vb_C} \Xi_{\omega,\vx_C, \vy,\va_C,\vb_C}^{\alicealg}(\cI_{\alicealg}) = 1$ (by~\eqref{eq:Xidef} where both of the measurement operator $A$ and $B$ are POVMs). By considering the above state on the restriction of $\bM_{|\Y|^{r}} \otimes \cI_{|\cA|^{2|C|}} \otimes \alicealg$,
	\begin{equation*}
		\Xi^{\bY  \alicealg}_{\omega, \vx_C, \vy_C} \leq (|\cA|)^{2|C|} \cdot \left(\Xi_{\omega , \vx_C, \vy_C}^{\bY} \otimes \Xi^{\alicealg}_{\omega, \vx_C, \vy_C}\right).
	\end{equation*}
	Hence, by combining~\Cref{prop:relative_min_entropy_chain_rule2} and~\eqref{eq:claim1a}
	\begin{align*}
			\Ex_{i \sim [r] \setminus C} \,  \Ex_{\bOmega \blR_C| W_C} \I \big (\bY_i ; \alicealg \big )_{\xi^{\bOmega \bX_C \bY \ac \alicealg}_{(\omega, \vr_C)}}	&\leq \frac{1}{r-|C|} \left(  \Ex_{\bOmega \blQ_C| W_C} \D \big (\xi_{\omega, \vx_C, \vy_C}^{\bY  \alicealg} \big \| \Xi_{\omega, \vx_C, \vy_C}^{  \bY  \alicealg} \big) +  |C| \cdot \log{|\cA|^2} \, \right).\notag\\
			&\leq \frac{1}{r-|C|} \left( \Ex_{\bOmega \blQ_C| W_C} \D \big (\xi_{\omega, \vx_C, \vy_C}^{\bOmega \bX_C \bY \ac \alicealg} \big \| \Xi_{\omega, \vx_C, \vy_C}^{\bOmega \bX_C \bY \ac \alicealg} \big) +  |C| \cdot \log{|\cA|^2} \, \right),\notag\\
				&\leq \frac{1}{r-|C|} \left( \Ex_{\bOmega \blQ_C} \D \big (\xi_{\omega, \vx_C, \vy_C}^{\bOmega \bX_C \bY \ac \alicealg} \big \| \Xi_{\omega, \vx_C, \vy_C}^{\bOmega \bX_C \bY \ac \alicealg} \big) +  |C| \cdot \log{|\cA|^2} \, \right),\notag\\
			&\leq \frac{1}{r-|C|} \left( \D \big (\xi^{\bOmega \bX_C \bY \ac \alicealg} \big \| \Xi^{\bOmega \bX_C \bY \ac \alicealg} \big) +  |C| \cdot \log{|\cA|^2} \, \right),\notag\\
			&\leq \frac{1}{r-|C|} \left( \log\left(\frac{1}{P(W_C)}\right) +  |C| \cdot \log{|\cA|^2} \, \right) = \eta_{\text{PR}},\notag\\
	\end{align*}
	where the first line follows from~\Cref{prop:divergence_data_processing}, the second line follows from~\Cref{prop:divergence_chain_rule} and $\xi_{\omega, \vx_C, \vy_C}^{\bOmega \bX_C \bY \ac \alicealg} = 0$ whenever $(\vx_C, \vy_C, \va_C, \vb_C) \not\in W_C$. The third line follows from conditioning a probability distribution will never increase the relative entropy (i.e.~\Cref{prop:relative_min_entropy_chain_rule2}). The fifth line follows by combining~\Cref{prop:relative_min_entropy_chain_rule2} and~\eqref{eq:Xitoxiinequality}. Thus showing~\eqref{eq:xi-change-1}.

\end{proof}

Given a fix $\omega_{\mi}$ sampled from $\bOmega_{\mi}$, $\vr_C \in W_C$ and $(x,y) \in \cX^2$, let $\omega_{\mi,x}^{A} = (\omega_{\mi}, \omega_{i} = (A,x) )$ in $\bOmega$. We define the (normalize) state
\begin{equation} \label{eq:xidefcondoni}
	\xi^{\bOmega \bX_C \bY \ac \alicealg}_{(\omega_\mi, \vr_C), x,y} = \sum_{\vy: \vy|_{C} = \vy_C,  \vy_i = y} \P_{\bOmega \bX_C \bY} ( \omega_{\mi,x}^{A},\vx, \vy_C) \,  \cdot \bra{\omega_{\mi,x}^{A}, \vx_C, \vy, \va_C, \vb_C} \otimes \Xi_{\omega_{\mi,x}^{A},\vx_C, \vy,\va_C,\vb_C} ~.   \\
\end{equation}
Similarly, let $\omega_{\mi,y}^{B} = (\omega_{\mi}, \omega_{i} = (B,y))$ and we define the (normalize) state
\begin{equation*}
	\lambda^{\bOmega \bX \bY_C \ac \alicealg}_{(\omega_\mi, \vr_C), x,y} = \sum_{\vx: \vx|_{C} = \vx_C,  \vx_i = x} \P_{\bOmega \bX_C \bY} ( \omega_{\mi,y}^{B},\vx_C, \vy) \,  \cdot \bra{\omega_{\mi,y}^{B}, \vx, \vy_C, \va_C, \vb_C} \otimes \lambda_{\omega_{\mi,y}^{B},\vx, \vy_C,\va_C,\vb_C} ~.  \\
\end{equation*}
The second claim relates the states $\xi$ and $\lambda$ associated with different choices of $i,(\omega_{\mi}, \vr_C),x,y$. 
\begin{claim}\label{claim:xi-change-y}
	The following hold:
	\begin{align}
		\Ex_{i \sim [r] \setminus C} \, \Ex_{\bOmega_{-i} \blR_C |  W_C} \, \, \, \Ex_{\bX \bY} \,\, \big \| \xi^{\alicealg}_{(\omega_\mi, \vr_C), x,y} - \xi^{\alicealg}_{(\omega_\mi, \vr_C), x,\bot} \big \|_\alicealg^2\; &= \; O\big(\sqrt{\eta_{\text{PR}}} \big)~, \label{eq:E_A_1} \\
		\Ex_{i \sim [r] \setminus C} \, \Ex_{\bOmega_{-i} \blR_C |  W_C}   \, \, \, \Ex_{\bX \bY} \,\, \big \| \lambda^{\alicealg}_{(\omega_\mi, \vr_C), x,y} - \lambda^{\alicealg}_{(\omega_\mi, \vr_C), \bot,y} \big \|_\alicealg^2\; &= \; O\big(\sqrt{\eta_{\text{PR}}} \big)~, \label{eq:E_A_2}
	\end{align}
	where the expectation over $XY$ is with respect to the distribution $\mu_{XY}$. 
\end{claim}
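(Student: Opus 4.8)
\textbf{Proof plan for Claim~\ref{claim:xi-change-y}.}

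The plan is to derive \eqref{eq:E_A_1} (and symmetrically \eqref{eq:E_A_2}) from the mutual-information bound \eqref{eq:xi-change-1} of Claim~\ref{claim:xi-change-x} via Pinsker's inequality (\Cref{prop:pinsker}) together with the structure of the dependency-breaking variables established in \Cref{lem:classical_skew}. The key observation is that conditioned on the event $W_C$, the state $\xi^{\bOmega\bX_C\bY\ac\alicealg}_{(\omega,\vr_C)}$ behaves like a classical-quantum state whose classical register $\bY_i$ is \emph{almost} independent of the quantum register $\alicealg$; and moreover the distribution of $\bY_i$ conditioned on the event $W_C$ and $\bOmega_\mi$, when we additionally set the $i$-th question to the anchor $\dummy$, becomes close to the distribution of $\bY_i$ conditioned on $\bOmega_i$ alone. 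Since the anchor question $\dummy$ carries no information, this is precisely what lets us replace $y$ by $\dummy$ at small cost.

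First I would unpack the definition of $\xi^{\alicealg}_{(\omega_\mi,\vr_C),x,y}$ in \eqref{eq:xidefcondoni}: this is the restriction to $\alicealg$ of the state $\xi^{\bOmega\bX_C\bY\ac\alicealg}_{(\omega,\vr_C)}$ further conditioned on $\bOmega_i = (A,x)$ and $\bY_i = y$. By the data-processing reading of \eqref{eq:xi-change-1}, averaged over $i$ and over $(\omega,\vr_C)\sim \P_{\bOmega\blR_C|W_C}$, the conditional mutual information $\I(\bY_i:\alicealg)$ inside $\xi_{(\omega_\mi,\vr_C)}$ is $O(\eta_{\text{PR}})$; by \Cref{prop:conditional_mutual_info} this controls, on average, the quantity $\D\big(\xi^{\bY_i\alicealg}_{(\omega_\mi,\vr_C)} \,\|\, \xi^{\bY_i}_{(\omega_\mi,\vr_C)}\otimes \xi^{\alicealg}_{(\omega_\mi,\vr_C)}\big)$, i.e.\ expanding the chain rule over $\bY_i$, $\Ex_{\bY_i}\D\big(\xi^{\alicealg}_{(\omega_\mi,\vr_C),\cdot,y}\,\|\,\xi^{\alicealg}_{(\omega_\mi,\vr_C)}\big) = O(\eta_{\text{PR}})$ on average. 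Then Pinsker (\Cref{prop:pinsker}) converts this into $\Ex\,\|\xi^{\alicealg}_{(\omega_\mi,\vr_C),\cdot,y} - \xi^{\alicealg}_{(\omega_\mi,\vr_C)}\|^2 = O(\eta_{\text{PR}})$, where the reduced state with no conditioning on $\bY_i$ acts as an anchor-like intermediate object. A triangle inequality ($\|u-v\|^2 \le 2\|u-w\|^2 + 2\|w-v\|^2$) then reduces \eqref{eq:E_A_1} to comparing $\xi^{\alicealg}_{(\omega_\mi,\vr_C),x,\dummy}$ with this intermediate state, which is the same computation applied to the single value $y=\dummy$.

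The subtle point, and the one I expect to be the main obstacle, is that \eqref{eq:xi-change-1} gives the bound when $(\omega,\vr_C)$ is sampled from $\P_{\bOmega\blR_C|W_C}$, whereas the claim requires sampling $(\omega_\mi,\vr_C)$ from $\P_{\bOmega_\mi\blR_C|W_C}$ and then \emph{independently} sampling $(x,y)\sim\mu_{XY}$ (equivalently, sampling $\bX_i\bY_i\sim\mu$ rather than from the $W_C$-conditioned distribution, which may skew toward or away from the anchor). Bridging these two sampling procedures is exactly what items (3) and (4) of \Cref{lem:classical_skew} are for: they say that, on average over $i$, the distribution $\P_{\bOmega_i|W_C}\P_{\bOmega_\mi\blR_C|\bX_i=\dummy,\bY_i=\dummy,W_C}$ is $O(\sqrt{\eta_{\text{PR}}})$-close to $\P_{\bOmega_i|W_C}\P_{\bOmega_\mi\blR_C|\bOmega_i W_C}$, and that replacing the conditioning on $(\bX_i,\bY_i)=(\dummy,\dummy)$ by conditioning on general $(\bX_i,\bY_i)$ under the $\mu$-distribution costs only $O(\sqrt{\eta_{\text{PR}}})$ more in total variation. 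I would use these together with \Cref{lem:trivial} and \Cref{lem:data-processing} to replace the $W_C$-conditioned question distribution by $\mu$ at the cost of $O(\sqrt{\eta_{\text{PR}}})$ in trace distance of the relevant states (using that trace distance of states is bounded by total variation of the classical mixing distribution), which upgrades the $O(\eta_{\text{PR}})$ from Pinsker to the claimed $O(\sqrt{\eta_{\text{PR}}})$. The proof of \eqref{eq:E_A_2} is identical with the roles of Alice/$\bX$ and Bob/$\bY$, the algebras $\alicealg$ and $\bobalg$, and the states $\xi$ and $\lambda$ interchanged, using \eqref{eq:xi-change-2}. Throughout I would be careful that the $\eta_{\text{Anchor}}$-noisy coupling between $\bOmega_i$ and $(\bX_i,\bY_i)$ (the fact that $\bOmega_i=(A,x)$ forces $\bX_i\in\{x,\dummy\}$, not $\bX_i=x$) does not break the argument — this is handled precisely because we compare against the anchor question $\dummy$, which lies in the support of $\bX_i$ regardless of $\bOmega_i$.
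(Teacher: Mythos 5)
Your proposal is correct and follows essentially the same approach as the paper: start from the mutual-information bound of Claim~\ref{claim:xi-change-x}, apply Pinsker's inequality (\Cref{prop:pinsker}) together with the chain rule over $\bY_i$ to get an average squared trace-distance bound of $O(\eta_{\text{PR}})$ under the $W_C$-conditioned sampling, then use a triangle inequality through the marginal and items (3)--(4) of \Cref{lem:classical_skew} to transfer from that sampling to $\Ex_{\bOmega_\mi\blR_C|W_C}\Ex_{\mu_{XY}}$, degrading the bound to $O(\sqrt{\eta_{\text{PR}}})$. This is precisely the outline; the paper's text performs only the Pinsker step explicitly and defers the distributional bookkeeping to \cite[Claim 5.14]{bavarianAnchoredParallelRepetition2021}, and the details you fill in match what that citation carries out. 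One small quibble: the distribution change \emph{weakens} (not "upgrades") the $O(\eta_{\text{PR}})$ bound to $O(\sqrt{\eta_{\text{PR}}})$, since the total-variation shift from \Cref{lem:classical_skew} is itself of order $\sqrt{\eta_{\text{PR}}}$ and the integrand is only bounded, not small; the terminology is off but the mathematics is right.
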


\begin{proof}
	We show~\eqref{eq:E_A_1}; the proof of~\eqref{eq:E_A_2} is similar. Fix $i \in [r] \setminus C$ and $\vr_C \in W_C$, the state
	\[
		\xi_{(\omega, \vr_C)}^{\bY_i \alicealg} = \Ex_{\bY_i | \blR_C= \vr_C, W_C} \, \bra{\vy_i}^{\bY_i} \otimes \xi_{(\omega, \vr_C), \vy_i}^\alicealg = \xi_{(\omega, \vr_C)}^{\bY_i} \otimes 	\xi_{(\omega, \vr_C)}^{\alicealg}
	\]
	where the state $\xi_{(\omega, \vr_C), \vy_i}^\alicealg$ is $\xi^{\bOmega \bX_C \bY \ac \alicealg}_{(\omega, \vr_C)}$ conditioning on the ith coordinates of $\vy$ being $\vy_I$. By applying~\Cref{prop:pinsker}
	\begin{align}
		\Ex_{i \sim [r] \setminus C} \, \Ex_{\bOmega_{-i} \blR_C |  W_C} \, \, \, \big \| \xi_{(\omega, \vr_C), \vy_i}^\alicealg -  \xi_{(\omega, \vr_C)}^\alicealg \big \|_\alicealg^2 &\leq 2  \ln 2 \, \Ex_{i \sim [r] \setminus C} \, \Ex_{\bOmega_{-i} \blR_C |  W_C}  \,\,  \D \big(   \xi_{(\omega, \vr_C), \vy_i}^\alicealg  \, \big \|\,  \xi_{(\omega, \vr_C)}^\alicealg \big )\notag\\
		&= 2\ln 2\, \Ex_I \, \Ex_{\blR | W_C} \,\,  \I (\bY_i ; \alicealg)_{\xi^{\bOmega \bX_C \bY \ac \alicealg}_{(\omega, \vr_C)}} \notag\\
		& = O(\eta_{\text{PR}})~,\label{eq:claim24-1a}
	\end{align} 
	where the last line follows from~\Cref{claim:xi-change-x}. The rest of the proof follows in an identical manner to that of~\cite[Claim 5.14]{bavarianAnchoredParallelRepetition2021} .
	
\end{proof}

We are now ready to give the proof of \Cref{lem:unitary_bounds}.

\begin{proof}[Proof of \Cref{lem:unitary_bounds}]
	We start by showing the existence of operators $V_{(\omega_{\mi}, \vr_C),y}$ that satisfy~\eqref{eq:vy_bound}.  Let $i \in [r] \setminus C$, $(\omega_{\mi}) \in \bOmega_{\mi}$, $\vr_C \in W_C$ and $(x,y) \in \cX^2$. We start the proof by showing the following claim. 
	
	\begin{claim}
		\label{clm:xi-purification}
		For all $A \in \alicealg$, we have
		\begin{align*}
			\wt{\Phi}_{\sspxy}(A) &= \xi^{\alicealg}_{(\omega_\mi, \vr_C), x,y}(A) \\
			\wt{\Phi}_{\sspy}(A) &= \xi_\sspy^{\alicealg}(A)
		\end{align*}
		where $\wt{\Phi}_\sspxy$ is the state defined by
		\begin{equation*}
			\wt{\Phi}_\sspxy(A) = \braket{\wt{\Phi}_\sspxy| A |\wt{\Phi}_\sspxy}
		\end{equation*}
		with $\ket{\wt{\Phi}_\sspxy}$ being defined on~\eqref{eq:def-psitt}. 
	\end{claim}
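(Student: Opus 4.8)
\textbf{Proof plan for Claim~\ref{clm:xi-purification}.}

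The plan is to unwind both sides and recognize that the state $\ket{\wt\Phi_{\sspxy}}$ is, by construction, exactly the normalized purification (in the positive cone) of the unnormalized state obtained by applying $\left(A^{(\omega_\mi,\vx_C),\dummyx}_{\va_C}\right)^{1/2}\left(B^{(\omega_\mi,\vy_C),y}_{\vb_C}\right)^{1/2}$ to $\ket\psi$. First I would recall the relevant definitions: from~\eqref{eq:def-state-y} we have $\ket{\Phi_{(\omega_\mi,\vr_C),\dummyx,y}} = \left(A^{(\omega_\mi,\vr_C),\dummyx}_{\va_C}\right)^{1/2}\left(B^{(\omega_\mi,\vr_C),y}_{\vb_C}\right)^{1/2}\ket\psi$ (here I abbreviate $\vr_C = (\vx_C,\vy_C,\va_C,\vb_C)$), with normalization $\gamma_{(\omega_\mi,\vr_C),\dummyx,y}$ from~\eqref{eq:def-gamma}, and $\ket{\wt\Phi} = \gamma^{-1}\ket\Phi$ from~\eqref{eq:def-psitt}. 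On the other side, $\xi^{\alicealg}_{(\omega_\mi,\vr_C),x,y}$ is the $\alicealg$-restriction of the classical-quantum state~\eqref{eq:xidefcondoni}, which is a sum over $\vy$ (consistent with $\vy_C$ and $\vy_i=y$) of $\P_{\bOmega\bX_C\bY}(\omega^A_{\mi,x},\vx,\vy_C)$ times the abstract states $\Xi_{\omega^A_{\mi,x},\vx_C,\vy,\va_C,\vb_C}$ defined via the vectors $\ket{\Xi}$ in~\eqref{eq:Xidef}.

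The key computational step is to identify the $\alicealg$-restriction of~\eqref{eq:xidefcondoni} with the functional $A\mapsto \braket{\Phi_{(\omega_\mi,\vr_C),\dummyx,y}\,|\,A\,|\,\Phi_{(\omega_\mi,\vr_C),\dummyx,y}}$, and then normalize. Concretely, I would expand $\braket{\Phi | A | \Phi}$ using the definition of the operators $A^{(\omega_\mi,\vx_C),\dummyx}_{\va_C}$ in terms of $A^{(\omega_\mi,\vx_C),\dummy}_{\va_C}$ and $A^{(\omega_\mi,\vx_C),x}_{\va_C}$ from~\eqref{eq:states-operators-3}, which themselves are expectations $\Ex_{\vx|\omega_\mi,\vx_C,\dummy}$ resp.\ $\Ex_{\vx|\omega_\mi,\vx_C,x}$ of $A^\vx_{\va_C}$ as in~\eqref{eq:states-operators-2}; combined, $A^{(\omega_\mi,\vx_C),\dummyx}_{\va_C} = \Ex_{\vx|\bOmega_\mi=\omega_\mi,\vx_C=\vx_C,((\bMplayer)_i,(\bMvalue)_i)=(A,x)} A^\vx_{\va_C}$, which is precisely $A^{\omega,(\vx_C,\vy_C)}_{\va_C}$ of~\eqref{eq:a-bomega-def} evaluated at $\omega = \omega^A_{\mi,x}$. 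Similarly $B^{(\omega_\mi,\vy_C),y}_{\vb_C} = \Ex_{\vy|\bOmega_\mi=\omega_\mi,\vy_C=\vy_C,\vy_i=y}B^\vy_{\vb_C}$, which by Item~1 of~\Cref{claim:ufacts} (the coordinates $\bY_j$ are conditionally independent given $\bOmega$, and $\bY_i$ depends only on $\bOmega_i$) is an average of $B^\vy_{\vb_C}$ over $\vy$ weighted by $\P_{\bY|\bOmega=\omega^A_{\mi,x}}$ restricted to $\vy_i=y$. Matching the square roots against those in $\ket\Xi$ of~\eqref{eq:Xidef} and using the fact that $\Xi_{\omega,\vx_C,\vy,\va_C,\vb_C}(A) = \braket{\psi|(A^{\omega,(\vx_C,\vy_C)}_{\va_C})^{1/2} A (A^{\omega,(\vx_C,\vy_C)}_{\va_C})^{1/2}|\psi}$ — wait, more carefully: $\Xi$ uses $A^{\omega,(\vx_C,\vy_C)}_{\va_C}$ on one side and $B^\vy_{\vb_C}$ on the other, so the restriction $\xi^{\alicealg}_{(\omega_\mi,\vr_C),x,y}$ before normalization is $\sum_\vy \P(\cdots)\, \braket{\psi|(B^\vy_{\vb_C})^{1/2}(A^{\omega^A_{\mi,x},(\vx_C,\vy_C)}_{\va_C})(B^\vy_{\vb_C})^{1/2}\,\cdot|\psi}$ — and I would then use that $A^{\omega^A_{\mi,x}}_{\va_C}$ does not depend on $\vy$ (only on $\vx_C,\vy_C$ through the conditioning) to pull the $\vy$-sum inside, recovering $\braket{\psi|(B^{(\omega_\mi,\vy_C),y}_{\vb_C})^{1/2}(A^{(\omega_\mi,\vx_C),\dummyx}_{\va_C})(B^{(\omega_\mi,\vy_C),y}_{\vb_C})^{1/2}\cdot|\psi}$, which equals $\braket{\Phi_{(\omega_\mi,\vr_C),\dummyx,y}|\cdot|\Phi_{(\omega_\mi,\vr_C),\dummyx,y}}$ since $A$ and $B$ operators commute and are self-adjoint — hence after normalizing by $\gamma^2_{(\omega_\mi,\vr_C),\dummyx,y} = \|\ket\Phi\|^2$ (which matches the total classical weight by~\Cref{prop:gamma}), we get $\wt\Phi_\sspxy(A) = \xi^\alicealg_{(\omega_\mi,\vr_C),x,y}(A)$. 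The $s=\dummy$ case (the second displayed identity, with $x = y = \dummy$ on the left) is the same computation specialized via~\eqref{eq:randomremarkParallel}, and I should double-check the indexing there — the statement writes $\wt\Phi_\sspy$ and $\xi_\sspy$, which I read as the $x=\dummy$ specialization.

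The main obstacle is bookkeeping rather than conceptual: one must carefully track that the conditioning event $((\bMplayer)_i,(\bMvalue)_i)=(A,x)$ in $\omega^A_{\mi,x}$ is what produces the ``$\dummyx$'' noisy-coupling operator~\eqref{eq:states-operators-3}, i.e.\ confirm via~\Cref{prop:gamma} and the definition of $\hat\P$ (the noise parameter $\eta_{\text{Anchor}}$) that the weight $\P_{\bOmega\bX_C\bY}(\omega^A_{\mi,x},\vx,\vy_C)$ decomposes exactly as $\eta_{\text{Anchor}}\cdot(\dummy\text{-term}) + (1-\eta_{\text{Anchor}})\cdot(x\text{-term})$, and that the normalization $\gamma^2_{(\omega_\mi,\vr_C),\dummyx,y}$ agrees with $\P_{\bA_C\bB_C|\bOmega_\mi=\omega_\mi,\bOmega_i=(A,x),\bY_i=y}(\va_C,\vb_C)$ as in~\eqref{eq:gamma-def-3}. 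Once this accounting is in place, the identity of states follows because both sides are positive normal functionals on $\alicealg$ agreeing on all $A\in\alicealg$. I expect to present this as a short sequence of equalities rather than a long argument, citing~\Cref{claim:ufacts}, \Cref{prop:gamma}, and equations \eqref{eq:Xidef}, \eqref{eq:a-bomega-def}, \eqref{eq:states-operators-2}, \eqref{eq:states-operators-3}, \eqref{eq:def-state-y}, \eqref{eq:def-psitt}, \eqref{eq:xidefcondoni} at the appropriate places.
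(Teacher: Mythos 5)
Your plan is essentially the paper's own proof: unwind $\xi^{\alicealg}_{(\omega_\mi,\vr_C),x,y}$ using the definition of $\Xi$ in~\eqref{eq:Xidef}, pull the $\vy$-sum through $A^{\omega^A_{\mi,x},(\vx_C,\vy_C)}_{\va_C}$ (which does not depend on $\vy$) using Item~1 of~\Cref{claim:ufacts} and the product structure of the $\bY$-coordinates, recognize the result as $\gamma_{\sspxy}^{-2}\braket{\Phi_{\sspxy}|\cdot|\Phi_{\sspxy}}$ via~\eqref{eq:states-operators-2}--\eqref{eq:states-operators-3},~\eqref{eq:a-bomega-def}, and~\Cref{prop:gamma}, then commute $B^{(\omega_\mi,\vy_C),y}_{\vb_C}$ past elements of $\alicealg$; the second identity follows by specializing $x=\dummy$ together with~\eqref{eq:randomremarkParallel}. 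One small slip: in your parenthetical you write ``with $x=y=\dummy$ on the left'' for the second identity, whereas only $x$ is set to $\dummy$ there — you immediately correct this in the same sentence, and the argument you describe is the right one.
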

	
	\begin{proof}
		Recall from~\eqref{eq:xidefcondoni}, we can write the state $\xi_\ssxy^{\bOmega \bX_C \bY \ac \alicealg}$ explicitly as
		\begin{equation}
			\xi^{\bOmega \bX_C \bY \ac \alicealg}_{(\omega_\mi, \vr_C), x,y} = \sum_{\vy: \vy|_{C} = \vy_C,  \vy_i = y} \frac{ \P_{\bOmega \bX_C \bY} ( \omega_{\mi,x}^{A},\vx, \vy_C) \,  }{\P_{\bOmega \blR_C \bY_i } ( \omega_{\mi,x}^{A},\vr_C, y) } \cdot \bra{\omega_{\mi,x}^{A}, \vx_C, \vy, \va_C, \vb_C} \otimes \Xi_{\omega_{\mi,x}^{A},\vx_C, \vy,\va_C,\vb_C}~.   \\
		\end{equation}
		To see that the normalization is correct, we evaluate this state on the identity $\Id_{\alicealg}$ on the state $	\xi^{\alicealg}_{(\omega_\mi, \vr_C), x,y}$ to get
		\begin{align*}
			\xi^{\alicealg}_{(\omega_\mi, \vr_C), x,y}(\Id_{\alicealg}) &= \sum_{\vy: \vy|_{C} = \vy_C,  \vy_i = y} \frac{ \P_{\bOmega \bX_C \bY} ( \omega_{\mi,x}^{A},\vx, \vy_C) \,  }{\P_{\bOmega \blR_C \bY_i } ( \omega_{\mi,x}^{A},\vr_C, y) }  \Xi_{\omega_{\mi,x}^{A},\vx_C, \vy,\va_C,\vb_C}(\Id_{\alicealg})~. \\
			&=\sum_{\vy: \vy|_{C} = \vy_C,  \vy_i = y} \frac{ \P_{\bOmega \bX_C \bY} ( \omega_{\mi,x}^{A},\vx, \vy_C) \,  }{\P_{\bOmega \blR_C \bY_i } ( \omega_{\mi,x}^{A},\vr_C, y) } \cdot \bra{\psi} A^{\omega_{\mi,x}^{A}, (\vx_C, \vy_C)}_{\va_C} \, B^\vy_{\vb_C} \ket{\psi} \\
			&= \frac{ 1 }{\P_{\bOmega \blR_C \bY_i }  ( \omega_{\mi,x}^{A},\vr_C, y) } \sum_{\substack{\vx: \vx|_{C} = \vx_C \\ \vy: \vy|_{C} = \vy_C,  \vy_i = y}} \P_{\bOmega \bX \bY  \bA_C \bB_C} ( \omega_{\mi,x}^{A},\vx, \vy, \va_C, \vb_C)  \\
			&= 1~.
		\end{align*}
		Now we compute the restriction of $\xi_\ssxy^{\alicealg}$ to the subalgebra $\alicealg$. For all $M \in \alicealg$
		\begin{align}
			&\notag \xi_\ssxy^{\alicealg}(M) = \sum_{\vy: \vy|_{C} = \vy_C,  \vy_i = y} \frac{ \P_{\bOmega \bX_C \bY} ( \omega_{\mi,x}^{A},\vx, \vy_C) \,  }{\P_{\bOmega \blR_C \bY_i } ( \omega_{\mi,x}^{A},\vr_C, y) }  \Xi_{\omega_{\mi,x}^{A},\vx_C, \vy,\va_C,\vb_C}(M)~. \\
			&= \sum_{\vy: \vy|_{C} = \vy_C,  \vy_i = y} \frac{ \P_{\bOmega \bX_C \bY} ( \omega_{\mi,x}^{A},\vx, \vy_C) \,  }{\P_{\bOmega \blR_C \bY_i } ( \omega_{\mi,x}^{A},\vr_C, y) }  \cdot \bra{\psi} \left (A^{ \omega_{\mi,x}^{A}, (\vx_C, \vy_C)}_{\va_C} \right)^{1/2} \, M \, \left(A^{\omega_{\mi,x}^{A}, (\vx_C, \vy_C)}_{\va_C} \right )^{1/2} B^\vy_{\vb_C} \, \ket{\psi}   \notag \\
			\notag &= \frac{ \P_{\bOmega \bX_C \bY_C, \bY_i} ( \omega_{\mi,x}^{A},\vx_C, \vy_C, y) \,  }{\P_{\bOmega \blR_C \bY_i } ( \omega_{\mi,x}^{A},\vr_C, y) }  \cdot \\
			& \qquad \bra{\psi} \left (A^{\omega_{\mi,x}^{A}, (\vx_C, \vy_C)}_{\va_C} \right)^{1/2} \, M \, \left(A^{\omega_{\mi,x}^{A}, (\vx_C, \vy_C)}_{\va_C} \right )^{1/2} \left( \sum_{\vy: \vy|_{C} = \vy_C,  \vy_i = y} P_{\bY| \Omega = \omega_{\mi}}(\vy) \cdot  B^\vy_{\vb_C} \right) \, \ket{\psi}   \notag \\
			&= \frac{\bra{\psi} \left (A^{\omega_{\mi,x}^{A}, (\vx_C, \vy_C)}_{\va_C} \right)^{1/2} \, M \, \left(A^{\omega_{\mi,x}^{A}, (\vx_C, \vy_C)}_{\va_C} \right )^{1/2} \cdot B^{(\omega_{\mi}, \vy_C), y}_{\va_C}| \, \ket{\psi}}{\P_{\bA_C \bB_C| \bOmega = \omega_{\mi,x}^{A}, \bX_C = \vx_C, \bY_C = \vy_C, \bY_i = y } ( \va_C, \vb_C) }     \notag \\
			&= \gamma_\sspxy^{-2} \bra{\psi} \left (A^{\omega_{\mi,x}^{A}, (\vx_C, \vy_C)}_{\va_C} \right)^{1/2} \, M \, \left(A^{\omega_{\mi,x}^{A}, (\vx_C, \vy_C)}_{\va_C} \right )^{1/2} \, B^{(\omega_{\mi}, \vy_C), y}_{\va_C} \, \ket{\psi} \notag \\
			\ &= \gamma_\sspxy^{-2}\cdot  \bra{\psi} \left(A^{\omega_{\mi,x}^{A}, (\vx_C, \vy_C)}_{\va_C} \right)^{1/2} \left(B^{(\omega_{\mi}, \vy_C), y}_{\va_C}\right)^{1/2}  \, M \left( \left(A^{\omega_{\mi,x}^{A}, (\vx_C, \vy_C)}_{\va_C} \right)^{1/2} B^{(\omega_{\mi}, \vy_C), y}_{\va_C}\right)^{1/2}  \, \ket{\psi},	\label{eq:xi-purification-1}
		\end{align}
		where line 3 follows from $\bY_i$ being independent from all other $\bY_j$ and $\bOmega_j$ for $j \neq i$ and~\Cref{claim:ufacts}, line 4 follows from the definition of $ B^{(\omega_{\mi}, \vy_C), y}$ from~\eqref{eq:states-operators-2}, line 5 follows from~\Cref{prop:gamma}, and the last line follows from $B^{(\omega_{\mi}, \vy_C), y}_{\va_C}  \in \alicealg'$. We see that the quantity given in~\eqref{eq:xi-purification-1} are precisely the definition to $\wt{\Phi}_{\sspxy}(M)$ given in~\eqref{eq:def-psitt}. 
		
		For the second part of the claim, whenever $x = \bot$, $A^{\omega_{\mi,x}^{A}}_{\va_C}$ are the same as $	A^{(\omega_\mi,\vx_C), x}_{\va_C}$ given in~\eqref{eq:states-operators-2}. Hence the second part of the claim holds by~\eqref{eq:randomremarkParallel}. This concludes the proof of~\Cref{clm:xi-purification}.
	\end{proof}

	By combining~\Cref{clm:xi-purification} and~\Cref{claim:xi-change-y}, we have
	\begin{align*}
		&\Ex_{i \sim [r] \setminus C} \, \Ex_{\bOmega_{-i} \blR_C |  W_C} \, \, \, \Ex_{\bX \bY} \,\, \big \| 	\wt{\Phi}_{\ssxy}- 	\wt{\Phi}_{(\omega_\mi, \vr_C), x,\bot} \big \|_\alicealg^2\; = \; O\big(\sqrt{\eta_{\text{PR}}} \big) \\
		&= \Ex_{i \sim [r] \setminus C} \, \Ex_{\bOmega_{-i} \blR_C |  W_C} \, \, \, \Ex_{\bX \bY} \,\, \big \| \xi^{\alicealg}_{(\omega_\mi, \vr_C), x,y} - \xi^{\alicealg}_{(\omega_\mi, \vr_C), x,\bot} \big \|_\alicealg^2\; = \; O\big(\sqrt{\eta_{\text{PR}}} \big)
	\end{align*}
	By conditioning $\cX = \bot$ on the third expectation (which occurs with probability $\eta_{\text{Anchor}} = \frac{1}{2}$),
	\begin{equation*}
		\Ex_{i \sim [r] \setminus C} \, \Ex_{\bOmega_{-i} \blR_C |  W_C} \, \, \, \Ex_{Y} \,\, \big \| 	\wt{\Phi}_{(\omega_\mi, \vr_C), \bot,y}- 	\wt{\Phi}_{(\omega_\mi, \vr_C), \bot, \bot} \big \|_\alicealg^2\; = \; O\big(\sqrt{\eta_{\text{PR}}} \big) 
	\end{equation*}
	Now, by applying~\Cref{prop:uhlmann}, there exists a collection of Unitary operator $\{V_{(\omega_{\mi}, \vr_C), y}\}_{y \in \cY}$ in $\alicealg'$ such that 
	\begin{align*}
		&\Ex_{i \sim [r] \setminus C} \, \Ex_{\bOmega_{-i} \blR_C |  W_C} \, \, \, \Ex_{Y} \,\,  \bigbra{\wt{\Phi}_\sspy } V_{(\omega_{\mi}, \vr_C), y}  \bigket{\wt{\Phi}_\sspp}  \notag \\
		&\geq 1 - \frac{1}{2} \Ex_{i \sim [r] \setminus C} \, \Ex_{\bOmega_{-i} \blR_C |  W_C} \, \, \, \Ex_{Y} \, \, \Ex_{Y}  \,\,  \big \| \wt{\Phi}_\sspy - \wt{\Phi}_\sspp \big \|_\alicealg \notag \\
		&\geq 1 - O(\eta_{\text{PR}}^{1/4})\;, 
	\end{align*}
	where the second line follows from Jensen's inequality. By translating the above equation to Euclidean distance, and then apply Jensen's inequality, we have
	\begin{align*}
		&\Ex_{i \sim [r] \setminus C} \, \Ex_{\bOmega_{-i} \blR_C |  W_C} \, \, \, \Ex_{Y} \,\,  \big \| \ket{\wt{\Phi}_\sspy } - V_{(\omega_{\mi}, \vr_C), y}  \ket{\wt{\Phi}_\sspp}  	\big \| \\
		& \leq \sqrt{\Ex_{i \sim [r] \setminus C} \, \Ex_{\bOmega_{-i} \blR_C |  W_C} \, \, \, \Ex_{Y} \,\,  \big \| \ket{\wt{\Phi}_\sspy } - V_{(\omega_{\mi}, \vr_C), y}  \ket{\wt{\Phi}_\sspp}  \big \|^2}= O\big(\eta_{\text{PR}}^{1/8}\big)~.
	\end{align*}
	Applying Markov's inequality over the index $i$ establishes~\eqref{eq:vy_bound}. The argument for~\eqref{eq:vxy_bound} proceeds similarly. We start by using \Cref{clm:xi-purification}, which establish that the states $\ket{\wt{\Phi}_\sspxy}$ and $\ket{\wt{\Phi}_\sspxp}$ are purifications of the states $\xi^{\alicealg}_\sspxy$ and $\xi^{\alicealg}_\sspxp$ respectively. Using Uhlmann's Theorem and Claim~\ref{claim:xi-change-y} in a similar way to how we derived~\eqref{eq:vy_bound} we deduce the existence of operators $V_{\ssxy}$ satisfying~\eqref{eq:vxy_bound}. To prove~\eqref{eq:ux_bound} we use the following claim whose proof is analogous to that of \Cref{clm:xi-purification}.
	
	\begin{claim}
		\label{clm:lambda-purification}
		For all $A \in \alicealg'$, we have
		\begin{gather*}
			\wt{\Phi}_{\ssxp}(A) = \lambda_\ssxp(A)
		\end{gather*}
		where $\ket{\wt{\Phi}_\ssxy}$ is the vector state defined in~\eqref{eq:def-psitt}. 
	\end{claim}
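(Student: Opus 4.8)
\textbf{Proof plan for Claim~\ref{clm:lambda-purification}.}

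The plan is to mirror the computation already carried out in the proof of Claim~\ref{clm:xi-purification}, but with the roles of the two provers swapped: instead of leaving the $B$-measurement on the critical coordinates intact and averaging $A$ over the question vector conditioned on $\bOmega$, we now fix the $A$-measurement $A^\vx_{\va_C}$ on the critical coordinates and average $B$ over $\vy$ conditioned on $\bOmega$. Concretely, $\lambda_{\ssxp}^{\alicealg'}$ is the restriction to $\alicealg'$ of the state $\lambda^{\bOmega\bX\bY_C\ac\alicealg}_{(\omega_\mi,\vr_C),x,\bot}$ defined from $\omega^{B}_{-i,\bot}=(\omega_\mi,\omega_i=(B,\bot))$; since the $i$-th coordinate is the anchor symbol, $B^{\omega^B_{-i,\bot},(\vx_C,\vy_C)}_{\vb_C}$ reduces to $B^{(\omega_\mi,\vy_C),\bot}_{\vb_C}$, which by the analogue of~\eqref{eq:states-operators-2} and~\eqref{eq:randomremarkParallel} is exactly the operator appearing in $\ket{\wt{\Phi}_{\ssxp}}=\bigl(A^{(\omega_\mi,\vr_C),\dummy}_{\va_C}\bigr)^{1/2}\bigl(B^{(\omega_\mi,\vr_C),\dummy}_{\vb_C}\bigr)^{1/2}\ket{\psi}$, using that for Bob the anchor question plays the symmetric role to Alice's.

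First I would verify that the state $\lambda^{\bOmega\bX\bY_C\ac\alicealg}_{(\omega_\mi,\vr_C),x,\bot}$ is correctly normalized, by evaluating it on $\Id_{\alicealg}$: expanding the definition, summing over $\vx$ with $\vx|_C=\vx_C$ and the $i$-th coordinate equal to $\dummy$, and using Claim~\ref{claim:ufacts} (independence of the $\bX_j$ conditioned on $\bOmega$, and $\P_{\blS_C|\bX\bY}=\P_{\blS_C|\bOmega\bX\bY}$) together with~\Cref{prop:gamma} to identify the normalization constant as $\P_{\bA_C\bB_C\mid \bOmega=\omega^B_{-i,\bot},\bX_C=\vx_C,\bY_C=\vy_C}(\va_C,\vb_C) = \gamma_{(\omega_\mi,\vr_C),\dummy,\dummy}^{2}$. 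Then, for $M\in\alicealg'$, I would compute $\lambda^{\alicealg'}_{(\omega_\mi,\vr_C),x,\bot}(M)$ directly: insert the definition of $\lambda_{\omega,\vx,\vy_C,\va_C,\vb_C}=\bigl(A^{\omega,(\vx_C,\vy_C)}_{\va_C}\bigr)^{1/2}\bigl(B^\vy_{\vb_C}\bigr)^{1/2}\ket{\psi}$, push $M$ past the $A$-factors (which lie in $\alicealg$, hence commute with $M\in\alicealg'$), and then collapse the sum over $\vx$ consistent with $\vx_C$ and $\vx_i=\dummy$ against $\P_{\bX\mid\bOmega=\omega_\mi}$ to produce $A^{(\omega_\mi,\vx_C),\dummy}_{\va_C}$, exactly as in~\eqref{eq:xi-purification-1}. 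Dividing by $\gamma^2_{(\omega_\mi,\vr_C),\dummy,\dummy}$ and regrouping the square roots — now moving $A^{1/2}$ to the outside, legal since $A\in\alicealg$ and $M\in\alicealg'$ commute — gives $\gamma^{-2}\,\bra{\psi}\bigl(B^{(\omega_\mi,\vy_C),\dummy}_{\vb_C}\bigr)^{1/2}\bigl(A^{(\omega_\mi,\vx_C),\dummy}_{\va_C}\bigr)^{1/2} M \bigl(A^{(\omega_\mi,\vx_C),\dummy}_{\va_C}\bigr)^{1/2}\bigl(B^{(\omega_\mi,\vy_C),\dummy}_{\vb_C}\bigr)^{1/2}\ket{\psi}$, which is precisely $\wt{\Phi}_{\ssxp}(M)$ by~\eqref{eq:def-state-y}--\eqref{eq:def-psitt}.

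I expect no serious obstacle here — the statement is the left–right mirror image of Claim~\ref{clm:xi-purification} and the proof is the same bookkeeping with the positions of $\alicealg$ and $\alicealg'$ interchanged; the only point requiring a moment's care is keeping track of which measurement operators are genuinely in $\alicealg$ versus $\alicealg'$ so that the commutations used to regroup the square roots are justified (recall that in a tracially embeddable strategy both $\{A^x_a\}$ and $\{(B^y_b)^{op}\}$ sit in $\alicealg$, but here the relevant $B$-operators are the representatives in $\alicealg'$, and $M$ is taken in $\alicealg'$, so the $A$-factors and $M$ commute while the $B$-factors and $M$ need not). Once Claim~\ref{clm:lambda-purification} is in hand, the proof of~\eqref{eq:ux_bound} finishes exactly as the proof of~\eqref{eq:vy_bound} did: $\ket{\wt{\Phi}_{\ssxy}}$ and $\ket{\wt{\Phi}_{\ssxp}}$ are purifications (in the positive-cone sense of~\Cref{prop:positive_cone_polar_decomposition}) of $\lambda^{\alicealg'}_{(\omega_\mi,\vr_C),x,y}$ and $\lambda^{\alicealg'}_{(\omega_\mi,\vr_C),x,\bot}$, so~\eqref{eq:E_A_2} of Claim~\ref{claim:xi-change-y} together with Uhlmann's theorem (\Cref{prop:uhlmann}) yields unitaries $U_{(\omega_\mi,\vr_C),x}\in\alicealg$ with the claimed bound, after conditioning $\bY=\dummy$ with probability $\eta_{\text{Anchor}}=\tfrac12$ and applying Jensen's and Markov's inequalities over $i$ as before.
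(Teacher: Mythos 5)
Your plan is the same ``mirror the $\Xi$-computation with the roles of $\alicealg$ and $\alicealg'$ interchanged'' argument the paper invokes when it says the proof is ``analogous to that of Claim~\ref{clm:xi-purification},'' so the structure is right. But there is a persistent error that runs through the writeup: you have written $\dummy$ in the Alice slot where the claim calls for $x$. By~\eqref{eq:def-state-y}, with $\ssxp = (\bomega_\mi,\vr_C),x,\dummy$, the unnormalized state is
\[
\ket{\Phi_\ssxp} \,=\, \bigl(A^{(\omega_\mi,\vr_C),x}_{\va_C}\bigr)^{1/2}\bigl(B^{(\omega_\mi,\vr_C),\dummy}_{\vb_C}\bigr)^{1/2}\ket{\psi}\,,
\]
with Alice's factor indexed by the real question $x$; the formula you write for $\ket{\wt\Phi_\ssxp}$ carries $A^{(\omega_\mi,\vr_C),\dummy}_{\va_C}$, which is the formula for $\ket{\wt\Phi_\sspp}$. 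The same slip recurs in the collapse step: $\lambda^{\bOmega\bX\bY_C\ac\alicealg}_{(\omega_\mi,\vr_C),x,\bot}$ sums over $\vx$ with $\vx|_C=\vx_C$ and $\vx_i=x$ (see the definition directly after~\eqref{eq:xidefcondoni}), so the collapse must produce $A^{(\omega_\mi,\vx_C),x}_{\va_C}$, not $A^{(\omega_\mi,\vx_C),\dummy}_{\va_C}$. As written you have proved $\wt\Phi_{\sspp}(M)=\lambda_{\sspp}(M)$ rather than the claim, and your gloss ``for Bob the anchor question plays the symmetric role to Alice's'' is the source of the confusion — the situation is not symmetric in this state: only Bob's slot is anchored, while Alice's carries $x\in\X$.

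A secondary issue: the formula you ``insert'' for $\lambda_{\omega,\vx,\vy_C,\va_C,\vb_C}$ is actually the $\Xi$ formula (averaged Alice, pointwise Bob). The computation you then describe — commuting $M\in\alicealg'$ past the Alice factors and averaging over $\vx$ — requires the intended $\Lambda$, namely
\[
\ket{\Lambda_{\omega,\vx,\vy_C,\va_C,\vb_C}} \,=\, \bigl(A^{\vx}_{\va_C}\bigr)^{1/2}\bigl(B^{\omega,(\vx_C,\vy_C)}_{\vb_C}\bigr)^{1/2}\ket{\psi}\,,
\]
with a \emph{pointwise} $A^\vx_{\va_C}$. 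With the averaged-$A$ form you wrote, there is nothing $\vx$-dependent to collapse, and the step you describe immediately afterwards does not parse. (The display defining $\Xi$ and $\Lambda$ in the paper does appear to duplicate the $\Xi$ line, so you likely inherited this typo.) Once both corrections are made — pointwise $A^\vx_{\va_C}$ in $\Lambda$, and $x$ in place of $\dummy$ throughout the Alice slot — the bookkeeping proceeds exactly as in~\eqref{eq:xi-purification-1} and gives the claim; the Uhlmann wrap-up you sketch then yields~\eqref{eq:ux_bound} as the proof of~\eqref{eq:vy_bound} did, reading $\lambda^{\alicealg'}$ in place of the paper's $\lambda^{\alicealg}$ in~\eqref{eq:E_A_2} so that the resulting unitary lies in $\alicealg$.
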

	
	Using the above claim, we can use Uhlmann's Theorem in a similar manner as above to deduce the existence of operators $U_{(\omega_{\mi}, \vr_C),x}$ in $\alicealg$ which satisfy~\eqref{eq:ux_bound}. This concludes the proof for \Cref{lem:unitary_bounds}
\end{proof}

\subsubsection{Proof of \Cref{prop:local_operators}}

We end the section with the proof of \Cref{prop:local_operators}. We remark that this subsection follows the structure of~\cite[Section 5.4]{bavarianAnchoredParallelRepetition2021}.

\begin{proof}
	For every $i,(\omega_{\mi}, \vr_C)$, $x$ and $y$ let $U_{(\omega_{\mi}, \vr_C), x}$, $V_{(\omega_{\mi}, \vr_C), y}$ and $V_{\ssxy}$ denote the unitary guarantee by~\Cref{lem:unitary_bounds}. For notational convenience we suppress the dependence on $(i, (\omega_{\mi}, \vr_C))$; thus the operators $U_x, V_y, V_{x,y}$, the states $\ket{\Phi_{x,y}}$, and their normalizations $\gamma_{x,y}$ all implicitly depend on $i$ and $(\omega_{\mi}, \vr_C = (\vx_C, \vy_C, \va_C, \vb_C))$. We also write $\Ex_{\bOmega_{-i} \blR_C| \dummy, \dummy, W_C}$ as shorthand for $\Ex_{\bOmega_{-i} \blR_C| \bX_i = \dummy, \bY_i = \dummy, W_C}$
	
	For a fixed index $i \in [r] \setminus C$, we call the index to be \emph{good} if it satisfies (i) the conclusions of \Cref{lem:unitary_bounds}, (ii) the conclusions of \Cref{lem:norms-are-close}, and (iii) it holds that
	\begin{equation}\label{eq:good-i-cond}
		\left \| \P_{\bOmega_{\mi}, \blR_C | \bX_i = \dummy, \bY_i = \dummy,W_C} - \P_{\bOmega_{-i} \blR_C | W_C} \right \| \,\leq\, O(\eta_{\text{PR}}^{1/4})\;.
	\end{equation}
	By applying the data processing inequality (\Cref{lem:data-processing}) to Item 3 of \Cref{lem:classical_skew} to marginalize over the random variable $\bOmega_i$, and then applying Markov's inequality over the index $i$, we get that~\eqref{eq:good-i-cond} holds with probability at least $1 - O(\eta_{\text{PR}}^{1/4})$ over a uniformly random choice of $i$. This combined with \Cref{lem:unitary_bounds} and \Cref{lem:norms-are-close} implies that an index $i$ is good with probability at least $1 - O(\eta_{\text{PR}}^{1/16})$. 
	For a good index $i$, by combining the bound from~\Cref{lem:unitary_bounds} and~\eqref{eq:good-i-cond}, 
	\begin{align}
		\Ex_{\bOmega_{\mi} \blR_C | \dummy,\dummy,W_C} \,\, \Ex_{\bX} \,\, \,\,\, \big \| \ket{\wt{\Phi}_\xp } - U_{x}  \ket{\wt{\Phi}_\pp}  \big \| &=  O(\eta_{\text{PR}}^{1/16})\;,\label{eq:ux_bound-2}\\
		\Ex_{\bOmega_{\mi} \blR_C | \dummy,\dummy,W_C} \,\, \Ex_{\bY} \,\, \,\,\, \big \| V_y  \ket{\wt{\Phi}_\pp}  - \ket{\wt{\Phi}_\py }  \big \| &=  O(\eta_{\text{PR}}^{1/16})\;,\label{eq:vy_bound-2}\\
		\Ex_{\bOmega_{\mi} \blR_C | \dummy,\dummy,W_C} \,\, \Ex_{\bX \bY} \,\,  \big \|  V_\xy  \ket{\wt{\Phi}_\pxy}  - \ket{\wt{\Phi}_\pxp }  \big \| &=  O(\eta_{\text{PR}}^{1/16}) \label{eq:vxy_bound-2}
	\end{align}
	where we bound $O(\eta_{\text{PR}}^{1/16}) + O(\eta_{\text{PR}}^{1/4}) = O(\eta_{\text{PR}}^{1/16})$. 	he main step of the proof of~\Cref{prop:local_operators} is to combine $U_x$ and $V_y$ together by showing the following claim. We remark that the following claim is an commuting operator value variant of~\cite[Claim 5.19]{bavarianAnchoredParallelRepetition2021}. 
	\begin{claim}\label{claim:good-i-2}
		\begin{align}
			\Ex_{\bOmega_{\mi} \blR_C | \dummy,\dummy,W_C} \,\, \Ex_{\bX \bY} \, \big \| U_{x} \, V_{y} \ket{\wt{\Phi}_{\pp}} - \ket{\wt{\Phi}_{\xy}} \big\|  
			&\leq 	\Ex_{\bOmega_{\mi} \blR_C | \dummy,\dummy,W_C} \, \Ex_{\bX \bY} \, \gamma_\pp^{-1} \, \left \| V_y \ket{{\Phi}_{\dummy,\dummy}} - \ket{{\Phi}_{\dummy,y}} \right \|\label{eq:good-i-2a}\\
			&\qquad + 2\eta_{\text{Anchor}}^{-1/2} \gamma_\pp^{-1} \,\big \|  V_{x,y} \ket{\Phi_{\dummyx,y}} - \ket{{\Phi}_{\dummyx,\dummy}} \big \| \label{eq:good-i-2b}\\
			&\qquad + \gamma_\pp^{-1} \, \left \|  U_x \ket{{\Phi}_{\dummy,\dummy}} -  \ket{{\Phi}_{x,\dummy}} \right \| + O(\eta_{\text{PR}}^{1/8})\;,\label{eq:good-i-2c}
		\end{align}
	where $\gamma_\pp$ is defined in~\eqref{eq:def-gamma}. 
	\end{claim}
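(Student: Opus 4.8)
\textbf{Proof plan for Claim~\ref{claim:good-i-2}.}

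The strategy is the triangle-inequality chain familiar from~\cite[Claim 5.19]{bavarianAnchoredParallelRepetition2021}, now carried out with $U_x \in \alicealg$ and $V_y, V_{x,y} \in \alicealg'$ and with the entangled state $\ket{\psi} = \sigma\ket{\tau}$. First I would recall that, since $U_x$ acts in $\alicealg$ and $V_y$ in $\alicealg'$, these operators commute, so $U_x V_y \ket{\wt{\Phi}_\pp}$ may be rewritten by moving $V_y$ past $U_x$ whenever convenient. The plan is to interpolate between $U_x V_y \ket{\wt{\Phi}_\pp}$ and $\ket{\wt{\Phi}_\xy}$ through the chain of intermediate vectors $U_x V_y\ket{\wt{\Phi}_\pp} \to U_x \ket{\wt{\Phi}_\py} \to V_y U_x\ket{\wt{\Phi}_\pp}$ (using commutation) $\to V_y \ket{\wt{\Phi}_\xp}$ (rewriting $V_y U_x \ket{\wt{\Phi}_\pp}$ via the bound on $U_x\ket{\wt\Phi_\pp} - \ket{\wt\Phi_\xp}$, using that $V_y$ is an isometry so it preserves norms) $\to \ket{\wt{\Phi}_\xy}$. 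The steps $U_x V_y \ket{\wt\Phi_\pp} \to U_x\ket{\wt\Phi_\py}$ and $V_y\ket{\wt\Phi_\xp} \to \ket{\wt\Phi_\xy}$ are controlled directly by~\eqref{eq:vy_bound-2} and~\eqref{eq:vxy_bound-2} after changing the base point; this is where the passage from unnormalized states $\ket{\Phi}$ to normalized $\ket{\wt\Phi}$, and hence the $\gamma_\pp^{-1}$ prefactors, enter.

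The delicate point, and the one I expect to be the main obstacle, is that the unitaries $V_y$ and $V_{x,y}$ are \emph{not} the same operator, and the bounds~\eqref{eq:vy_bound-2}–\eqref{eq:vxy_bound-2} involve the state $\ket{\wt\Phi_\pxy}$ built from the ``noisy'' anchor operator $A^{(\omega_\mi,\vx_C),\dummyx}_{\va_C}$ rather than from $A^{(\omega_\mi,\vx_C),x}_{\va_C}$ directly. So I would use the convex decomposition~\eqref{eq:states-operators-3}, namely $A^{(\omega_\mi,\vx_C),\dummyx}_{\va_C} = \eta_{\text{Anchor}} A^{(\omega_\mi,\vx_C),\dummy}_{\va_C} + (1-\eta_{\text{Anchor}}) A^{(\omega_\mi,\vx_C),x}_{\va_C}$, together with the operator inequalities~\eqref{eq:states-operators-3a}–\eqref{eq:states-operators-3b}, to bound $\|(\cdot)\ket{\Phi_{\dummy,y}}\|$ and $\|(\cdot)\ket{\Phi_{x,y}}\|$ by $\eta_{\text{Anchor}}^{-1/2}$ times the corresponding quantity at the noisy base point. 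This is exactly the source of the factor $2\eta_{\text{Anchor}}^{-1/2}$ in line~\eqref{eq:good-i-2b}. I would be careful to track which terms get multiplied by $\gamma_\pp^{-1}$ versus $\eta_{\text{Anchor}}^{-1/2}\gamma_\pp^{-1}$, and to use~\Cref{lem:norms-are-close} to replace ratios $\gamma_{x,y}/\gamma_\pp$ and $\gamma_\pxy/\gamma_\pp$ by $1$ up to an $L^2$ error of $O(\eta_{\text{PR}}^{1/8})$ — this is where the final $O(\eta_{\text{PR}}^{1/8})$ slack comes from, after an application of Cauchy–Schwarz to pass from the $L^2$ bounds of~\Cref{lem:norms-are-close} to $L^1$ bounds in the expectation over $\bOmega_{\mi}\blR_C|\dummy,\dummy,W_C$ and $\bX\bY$.

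Finally, I would assemble the pieces: each triangle-inequality step contributes either one of the three named terms on the right side of~\eqref{eq:good-i-2a}–\eqref{eq:good-i-2c}, or an $O(\eta_{\text{PR}}^{1/16})$ term from~\eqref{eq:ux_bound-2}–\eqref{eq:vxy_bound-2}, or an $O(\eta_{\text{PR}}^{1/8})$ term from the normalization replacement, and $\eta_{\text{PR}}^{1/16} = O(\eta_{\text{PR}}^{1/8})$ up to the exponent being replaced — actually one should keep $O(\eta_{\text{PR}}^{1/16})$; I'd double check the claimed exponent in the statement, but since the statement writes $O(\eta_{\text{PR}}^{1/8})$ I would verify $\eta_{\text{PR}}^{1/16}$ terms are in fact absorbed, possibly by noting $\eta_{\text{PR}} \le 1$ forces $\eta_{\text{PR}}^{1/8}\le \eta_{\text{PR}}^{1/16}$, so the claim as stated with $O(\eta_{\text{PR}}^{1/8})$ would actually need the exponent $1/16$ — I would follow the original reference here and adjust the bookkeeping to match whichever exponent is consistent. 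Once Claim~\ref{claim:good-i-2} is in hand, \Cref{prop:local_operators} follows by bounding each of the three named terms on its right-hand side: the first by~\eqref{eq:vy_bound-2} (after the $\gamma_\pp^{-1}$ is handled via~\Cref{lem:norms-are-close}, since $\gamma_\pp^{-1}\|V_y\ket{\Phi_\pp} - \ket{\Phi_\py}\| = \|V_y\ket{\wt\Phi_\pp} - (\gamma_\py/\gamma_\pp)\ket{\wt\Phi_\py}\|$), the second by~\eqref{eq:vxy_bound-2} with the extra $\eta_{\text{Anchor}}^{-1/2} = O(1)$ factor, and the third by~\eqref{eq:ux_bound-2}, all giving $O(\eta_{\text{PR}}^{1/16})$; then averaging over good $i$ (which carry total probability $1 - O(\eta_{\text{PR}}^{1/16})$) and bounding the contribution of bad $i$ trivially by $2$ (norm difference of unit vectors) gives the claimed $O(\eta_{\text{PR}}^{1/16})$ overall, and finally one passes from the conditional expectation $\Ex_{\bOmega_\mi \blR_C|\dummy,\dummy,W_C}\Ex_{\bX\bY}$ back to $\Ex_{\bOmega_\mi\blR_C|W_C}\Ex_{XY}$ using Item 4 of~\Cref{lem:classical_skew} and~\eqref{eq:good-i-cond}.
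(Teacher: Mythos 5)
Your proposal follows essentially the same route as the paper: a triangle-inequality interpolation combined with the factorization of the noisy anchor operator $A^{\dummyx}$ via the contractions $C^{\dummyx}$ and $D^{\dummyx,x}$ supplied by \Cref{lem:inversetrick1} and the operator inequalities~\eqref{eq:states-operators-3a}--\eqref{eq:states-operators-3b}. You correctly identify both the source of the $\gamma_\pp^{-1}$ prefactors (passing from $\ket{\wt\Phi}$ to $\ket{\Phi}$, with the ratio $\gamma_{x,y}/\gamma_\pp$ controlled by~\Cref{lem:norms-are-close}) and the source of $2\eta_{\text{Anchor}}^{-1/2}$ (the two factorization terms at the $\dummyx$ base point). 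You also correctly flag the central obstacle: $V_y$ and $V_{x,y}$ are distinct operators, so one cannot naively move $V_y$ across base points.

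One quibble: the chain of intermediate vectors you wrote initially, $U_xV_y\ket{\wt\Phi_\pp}\to U_x\ket{\wt\Phi_\py}\to V_yU_x\ket{\wt\Phi_\pp}\to V_y\ket{\wt\Phi_\xp}\to\ket{\wt\Phi_\xy}$, does not quite work as written. The third vector is equal to the first by commutativity of $\alicealg$ and $\alicealg'$, so that step is vacuous, and the last step $V_y\ket{\wt\Phi_\xp}\to\ket{\wt\Phi_\xy}$ is precisely the transition that is \emph{not} controlled by any available bound (since~\eqref{eq:vy_bound-2} controls $V_y$ only at the $(\dummy,\dummy)$ base point, and~\eqref{eq:vxy_bound-2} involves $V_{x,y}$, not $V_y$). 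The paper instead does a single two-way triangle inequality on $\gamma_\pp^{-1}\|U_xV_y\ket{\Phi_\pp}-\ket{\Phi_\xy}\|$, isolating $\|V_y\ket{\Phi_\pp}-\ket{\Phi_\py}\|$ and $\|U_x\ket{\Phi_\py}-\ket{\Phi_\xy}\|$; the latter is then split a second time by writing $\ket{\Phi_{\dummy,y}}=\eta_{\text{Anchor}}^{-1/2}(C^{\dummyx})^{1/2}\ket{\Phi_{\dummyx,y}}$ and $\ket{\Phi_{x,y}}=(1-\eta_{\text{Anchor}})^{-1/2}(D^{\dummyx,x})^{1/2}\ket{\Phi_{\dummyx,y}}$, inserting $V_{x,y}$ (which commutes with $U_x$, $(C^{\dummyx})^{1/2}$, $(D^{\dummyx,x})^{1/2}$), and applying the triangle inequality once more to produce the middle term $\|U_x\ket{\Phi_\pp}-\ket{\Phi_\xp}\|$ plus two copies of $\eta_{\text{Anchor}}^{-1/2}\|V_{x,y}\ket{\Phi_{\dummyx,y}}-\ket{\Phi_{\dummyx,\dummy}}\|$. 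That said, you recognize that this is where the argument must go, so the confusion in the stated chain is a presentational, not conceptual, issue.

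Finally, on the exponent bookkeeping: your aside about $\eta_{\text{PR}}^{1/8}$ versus $\eta_{\text{PR}}^{1/16}$ runs the inequality backward. Since $\eta_{\text{PR}}\le 1$, one has $\eta_{\text{PR}}^{1/8}\le\eta_{\text{PR}}^{1/16}$, so writing the additive slack as $O(\eta_{\text{PR}}^{1/8})$ is the \emph{tighter} claim, and it is correct: it comes exactly from~\eqref{eq:local_operators-0}, i.e.\ from~\Cref{lem:norms-are-close} followed by Jensen, which gives $\eta_{\text{PR}}^{1/8}$. The weaker exponent $\eta_{\text{PR}}^{1/16}$ only enters afterwards in~\Cref{prop:local_operators}, when the right-hand-side terms of the claim are each estimated by~\eqref{eq:ux_bound-2}--\eqref{eq:vxy_bound-2}.
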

	Before proving the claim, it would be useful to work out the following two bounds. Using the definition
	\begin{equation}
		\Ex_{\bOmega_{-i} \blR_C| \dummy, \dummy, W_C} \, \Ex_{\bX \bY} \,  \big \|\ket{\wt{\Phi}_{x,y}} - \gamma_{\dummy,\dummy}^{-1}\ket{\Phi_{x,y}} \big \| 
		= \Ex_{\bOmega_{-i} \blR_C| \dummy, \dummy, W_C}  \,\Ex_{\bX \bY} \, \Big|1 - \frac{\gamma_{x,y}}{\gamma_{\dummy,\dummy}} \Big|  = O(\eta_{\text{PR}}^{1/8})~,\label{eq:local_operators-0}
	\end{equation}
	where the second line is by Jensen's inequality and~\eqref{eq:norms-are-close-s0} in \Cref{lem:norms-are-close}. Similarly,
	\begin{align}
		\Ex_{\bOmega_{-i} \blR_C| \dummy, \dummy, W_C}  \, \Ex_{\bX \bY} \,  \big \|\ket{\wt{\Phi}_{\pxy}} - \gamma_{\dummy,\dummy}^{-1}\ket{\Phi_{\pxy}} \big \| &=\Ex_{\bOmega_{-i} \blR_C| \dummy, \dummy, W_C}  \,\Ex_{\bX \bY} \, \Big|1 - \frac{\gamma_{\pxy}}{\gamma_{\dummy,\dummy}} \Big| = O\big(\eta_{\text{PR}}^{1/8}\big)~,\label{eq:local_operators-0b}
	\end{align}
	by~\eqref{eq:norms-are-close-s1}. We note that in the above division by $\gamma_\pp$ is well-defined because $(\omega_{\mi}, \vr_C)$ is sampled with positive probability from the distribution $\P_{(\omega_{\mi}, \vr_C) | \dummy,\dummy,W_C}$. We are now ready to prove~\Cref{claim:good-i-2}. 
	\begin{proof}
		We start by writing 
		\begin{align}
			&\Ex_{\bOmega_{\mi} \blR_C | \dummy,\dummy,W_C} \, \Ex_{\bX \bY} \, \big \| U_{x} \, V_{y} \ket{\wt{\Phi}_{\pp}} - \ket{\wt{\Phi}_{\xy}} \big\| \notag \\
			&\leq  	\Ex_{\bOmega_{\mi} \blR_C | \dummy,\dummy,W_C} \, \Ex_{\bX \bY} \, \left \| U_{x} \, V_{y} \bigket{\wt{\Phi}_{\pp}} - \frac{\gamma_\xy}{\gamma_\pp} \bigket{\wt{\Phi}_{\xy}} \right\| +  \left \| \frac{\gamma_\xy}{\gamma_\pp} \bigket{\wt{\Phi}_{\xy}} - \bigket{\wt{\Phi}_{\xy}} \right\| \notag \\
			&= 	\Ex_{\bOmega_{\mi} \blR_C | \dummy,\dummy,W_C} \, \Ex_{\bX \bY} \, \gamma_\pp^{-1} \left \| U_{x} \, V_{y} \bigket{{\Phi}_{\pp}} - \bigket{{\Phi}_{\xy}} \right\| +  \left | \frac{\gamma_\xy}{\gamma_\pp} -1  \right| \notag \\
			&\leq 	\Ex_{\bOmega_{\mi} \blR_C | \dummy,\dummy,W_C} \, \Ex_{\bX \bY} \, \gamma_\pp^{-1} \left \| U_{x} \, V_{y} \bigket{{\Phi}_{\pp}} - \bigket{{\Phi}_{\xy}} \right\| + O(\eta_{\text{PR}}^{1/8}) \notag \\
			 &\leq \Ex_{\bOmega_{\mi} \blR_C | \dummy,\dummy,W_C} \, \Ex_{\bX \bY} \, \gamma_\pp^{-1} \left( \left \| U_{x} \, V_{y} \bigket{{\Phi}_{\pp}} - U_{x} \bigket{{\Phi}_{\py}} \right\| + \left \| U_{x} \bigket{{\Phi}_{\py}} - \bigket{{\Phi}_{\xy}} \right\|\right) + O(\eta_{\text{PR}}^{1/8})  \notag \\
			  \label{eq:local_operators-1}&\leq \Ex_{\bOmega_{\mi} \blR_C | \dummy,\dummy,W_C} \, \Ex_{\bX \bY} \, \gamma_\pp^{-1} \left( \left \|  \, V_{y} \bigket{{\Phi}_{\pp}} - \bigket{{\Phi}_{\py}} \right\| + \left \| U_{x} \bigket{{\Phi}_{\py}} - \bigket{{\Phi}_{\xy}} \right\|\right) + O(\eta_{\text{PR}}^{1/8})   \;,
		\end{align}
		where the third line follows from~\eqref{eq:local_operators-0}, and the last line follows from $U_x \in \cU(\alicealg)$. For all $\dummyx \in \cX_{\dummy}$, by~\Cref{lem:inversetrick1} and~\eqref{eq:states-operators-3a} and~\eqref{eq:states-operators-3b}. There exist an operator $(C^{\dummyx})^{\frac{1}{2}}$ and $(D^{\dummyx})^{\frac{1}{2}}$ such that 
		\begin{align*}
		\eta_{\text{Anchor}}^{-\frac{1}{2}} (C^{\dummyx})^{\frac{1}{2}}  (A^{\dummyx})^{\frac{1}{2}}&= (A^{\dummy})^{\frac{1}{2}} \\
		(1-\eta_{\text{Anchor}})^{-\frac{1}{2}}  (D^{\dummyx, x})^{\frac{1}{2}} (A^{\dummyx})^{\frac{1}{2}} &= (A^{x})^{\frac{1}{2}}.
		\end{align*}
		By~\eqref{eq:def-state-y}, we have 
		\begin{align*}
			\ket{\Phi_{\dummy,y}} &= (B^y)^{\frac{1}{2}} (A^{\dummy})^{\frac{1}{2}}  \ket{\psi} =  	\eta_{\text{Anchor}}^{-\frac{1}{2}}   (C^{\dummyx})^{\frac{1}{2}} (B^y)^{\frac{1}{2}}  (A^{\dummyx})^{\frac{1}{2}}  \ket{\psi} = 	(1-\eta_{\text{Anchor}})^{-\frac{1}{2}}   (C^{\dummyx})^{\frac{1}{2}}   \ket{\Phi_{\dummyx,y}} \\
			\ket{\Phi_{x,y}} &=		(1-\eta_{\text{Anchor}})^{-\frac{1}{2}}     (D^{\dummyx, x})^{\frac{1}{2}}   \ket{\Phi_{\dummyx,y}} 
		\end{align*}
	 	Thus, for each $x, y \in \cX^2$,
		\begin{align}
			\left \| U_x \ket{\Phi_{\dummy,y}} - \ket{\Phi_{x,y}} \right \| &=   \big \| \eta_{\text{Anchor}}^{-\frac{1}{2}} U_x  (C^{\dummyx})^{\frac{1}{2}} \ket{\Phi_{\dummyx,y}} -	(1-\eta_{\text{Anchor}})^{-\frac{1}{2}} (D^{\dummyx, x})^{\frac{1}{2}}   \ket{\Phi_{\dummyx,y}} \big \| \notag \\
			&=   \big \|  \eta_{\text{Anchor}}^{-\frac{1}{2}} U_x  (C^{\dummyx})^{\frac{1}{2}}  \, V_{x,y} \ket{\Phi_{\dummyx,y}} - 	(1-\eta_{\text{Anchor}})^{-\frac{1}{2}} (D^{\dummyx, x})^{\frac{1}{2}}    \, V_{x,y} \ket{\Phi_{\dummyx,y}} \big \| \notag \\
			&\leq \eta_{\text{Anchor}}^{-\frac{1}{2}} \big \|  U_x  (C^{\dummyx})^{\frac{1}{2}}  \, V_{x,y} \ket{\Phi_{\dummyx,y}} -  U_x  (C^{\dummyx})^{\frac{1}{2}} \ket{\Phi_{\dummyx, \dummy}} \big \|  \label{eq:22-2a}\\
			& \qquad +   \big \|  \eta_{\text{Anchor}}^{-\frac{1}{2}} U_x  (C^{\dummyx})^{\frac{1}{2}} \ket{\Phi_{\dummyx,\dummy}} - 	(1-\eta_{\text{Anchor}})^{-\frac{1}{2}} (D^{\dummyx, x})^{\frac{1}{2}}     \ket{\Phi_{\dummyx,\dummy}} \big \| \label{eq:22-2b}\\
			& \qquad + 	(1-\eta_{\text{Anchor}})^{-\frac{1}{2}} \big \| (D^{\dummyx, x})^{\frac{1}{2}}  \ket{\Phi_{\dummyx,\dummy}} - (D^{\dummyx, x})^{\frac{1}{2}}   \, V_{x,y} \ket{\Phi_{\dummyx,y}} \big \|~,\label{eq:22-2c}
		\end{align}
		where the second line follows from $V_{x,y} \in \cU(\alicealg')$, and the third line follows from the triangle inequality. We bound each of these three terms as follows. For~\eqref{eq:22-2a}
		\begin{equation*}
			\eta_{\text{Anchor}}^{-\frac{1}{2}} \big \|  U_x  (C^{\dummyx})^{\frac{1}{2}}  \, V_{x,y} \ket{\Phi_{\dummyx,y}} -  U_x  (C^{\dummyx})^{\frac{1}{2}} \ket{\Phi_{\dummyx, \dummy}} \big \| \leq  \|  \, V_{x,y} \ket{\Phi_{\dummyx,y}} -  \ket{\Phi_{\dummyx, \dummy}} \big \|.
		\end{equation*}
		Similarly, for~\eqref{eq:22-2c}, since $	(1-\eta_{\text{Anchor}})^{-\frac{1}{2}}  \leq \eta_{\text{Anchor}}^{-\frac{1}{2}}$ whenever $\eta_{\text{Anchor}} \leq \frac{1}{2}$ ( $\eta_{\text{Anchor}} = \frac{1}{4}$)
		\begin{equation*}
			(1-\eta_{\text{Anchor}})^{-\frac{1}{2}} \big \| (D^{\dummyx, x})^{\frac{1}{2}}  \ket{\Phi_{\dummyx,\dummy}} - (D^{\dummyx, x})^{\frac{1}{2}}   \, V_{x,y} \ket{\Phi_{\dummyx,y}} \big \| \leq \eta_{\text{Anchor}}^{-\frac{1}{2}} \big \|  \ket{\Phi_{\dummyx,\dummy}} -   \, V_{x,y} \ket{\Phi_{\dummyx,y}} \big \|
		\end{equation*}
		Finally, for~\eqref{eq:22-2b}
		\begin{equation*}
	 \big \|  \eta_{\text{Anchor}}^{-\frac{1}{2}} U_x  (C^{\dummyx})^{\frac{1}{2}} \ket{\Phi_{\dummyx,\dummy}} - 	(1-\eta_{\text{Anchor}})^{-\frac{1}{2}} (D^{\dummyx, x})^{\frac{1}{2}}     \ket{\Phi_{\dummyx,\dummy}} \big \| = \|  U_x   \ket{\Phi_{\dummyx,\dummy}} -     \ket{\Phi_{x,\dummy}} \big \| 
		\end{equation*}
		
		Putting the three bounds together, from~~\eqref{eq:22-2a}--\eqref{eq:22-2c} we get
		\begin{align}
			\left \| U_x \ket{\Phi_{\dummy,y}} - \ket{\Phi_{x,y}} \right \| \leq   2 \eta_{\text{anchor}}^{-1/2} \big \|  V_{x,y} \ket{\Phi}_{\dummyx,y} - \ket{{\Phi}_{\dummyx,\dummy}} \big \| + \left \|  U_x \ket{{\Phi}_{\dummy,\dummy}} -  \ket{{\Phi}_{x,\dummy}} \right \| \; , \label{eq:22-3}
		\end{align}
		from which inserting it into~\eqref{eq:local_operators-1} proves the claim. 
	\end{proof}
	
	To conclude the proof \Cref{prop:local_operators} it remains to bound each of the three terms on the right-hand side of Claim~\ref{claim:good-i-2} by $O(\eta_{\text{PR}}^{1/16})$, and then use~\eqref{eq:good-i-cond} to exchange the expectation $ 	\Ex_{\bOmega_{\mi} \blR_C | \dummy,\dummy,W_C} \,\, $ with $ \Ex_{(\bOmega_{\mi}, \blR_C) | W_C}$ by introducing an additive $O(\eta_{\text{PR}}^{1/4})$ error. 
	We start with bounding~\eqref{eq:good-i-2a}:
	\begin{align*}
	\Ex_{\bOmega_{\mi} \blR_C | \dummy,\dummy,W_C} \, \Ex_{\bY} \, \gamma_\pp^{-1} &\,\, \left \|  V_y \ket{{\Phi}_{\dummy,\dummy}} -  \ket{{\Phi}_\py} \right \|  \\
		&=\Ex_{\bOmega_{\mi} \blR_C | \dummy,\dummy,W_C} \, \Ex_{\bY} \,\, \Big \|  V_y \ket{\wt{\Phi}_{\dummy,\dummy}} -  \frac{\gamma_\py}{\gamma_\pp} \ket{\wt{\Phi}_\py} \Big \| \\
		&\leq  \Ex_{\bOmega_{\mi} \blR_C | \dummy,\dummy,W_C} \, \Ex_{\bY} \, \big \|  V_y \ket{\wt{\Phi}_{\dummy,\dummy}} - \ket{\wt{\Phi}_{\py}} \big \| + \Big \| \ket{\wt{\Phi}_{\py}} - \frac{\gamma_{\py}}{\gamma_\pp} \ket{\wt{\Phi}_{\py}}\Big \|\\
		&= O(\eta_{\text{PR}}^{1/16}) + \Ex_{\bOmega_{\mi} \blR_C | \dummy,\dummy,W_C} \, \Ex_{\bY} \, \Big | 1 - \frac{\gamma_{\py}}{\gamma_\pp} \Big | \\
		&= O(\eta_{\text{PR}}^{1/16}) + O(\eta_{\text{PR}}^{1/8}) = O(\eta_{\text{PR}}^{1/16})~,
	\end{align*}
	where the third line uses~\eqref{eq:ux_bound-2} to bound the first term and the last line follows from~\eqref{eq:local_operators-0} and conditioning on $X = \dummy$ (which occurs with $\eta_{Anchor} = \frac{1}{2}$ probability), which occurs with probability $\frac{1}{2}$. We bound~\eqref{eq:good-i-2c} in an analogous fashion. 
	Finally, we bound~\eqref{eq:good-i-2b} as follows:
	\begin{align*}
		&2\eta_{\text{anchor}}^{-1/2}  \, \Ex_{\bOmega_{\mi} \blR_C | \dummy,\dummy,W_C}  \, \Ex_{\bX \bY} \, \, \gamma_\pp^{-1} \,\, \left \|  V_{x,y} \ket{{\Phi}_{\pxy}} -  \ket{{\Phi}_\pxp} \right \| \\
		&= 2\eta_{\text{anchor}}^{-1/2} \,  \Ex_{\bOmega_{\mi} \blR_C | \dummy,\dummy,W_C}  \, \Ex_{\bX \bY} \,\, \left \|  \frac{\gamma_\pxy}{\gamma_\pp} V_{x,y} \ket{\wt{\Phi}_{\pxy}} -  \frac{\gamma_\pxp}{\gamma_\pp} \ket{\wt{\Phi}_\pxp} \right \| \\
		&\leq  2\eta_{\text{anchor}}^{-1/2} \,\Ex_{\bOmega_{\mi} \blR_C | \dummy,\dummy,W_C}  \, \Ex_{\bX \bY} \, \Big \| \frac{\gamma_\pxy}{\gamma_\pp} \ket{\wt{\Phi}_{\pxy}} - \ket{\wt{\Phi}_{\pxy}} \Big \| + \Big \|  V_{x,y} \ket{\wt{\Phi}_{\pxy}} - \ket{\wt{\Phi}_{\pxp}} \Big \|  \\
		& \qquad \qquad \qquad \qquad \qquad \qquad \qquad \qquad + \left \| \ket{\wt{\Phi}_{\pxp}} - \frac{\gamma_{\pxp}}{\gamma_\pp} \ket{\wt{\Phi}_{\pxp}}\right \|\\
		&= 2\eta_{\text{anchor}}^{-1/2} \, \Ex_{\bOmega_{\mi} \blR_C | \dummy,\dummy,W_C}  \, \Ex_{\bX \bY} \, \Big | 1 - \frac{\gamma_{\pxy}}{\gamma_\pp} \Big | + O(\eta_{\text{PR}}^{1/16}/\eta_{\text{anchor}}^{1/2}) + 2\eta_{\text{anchor}}^{-1/2} \, \Ex_{\bOmega_{\mi} \blR_C | \dummy,\dummy,W_C}  \, \Ex_{\bX \bY} \, \Big | 1 - \frac{\gamma_{\pxp}}{\gamma_\pp} \Big | \\
		&= O(\eta_{\text{PR}}^{1/8}) + O(\eta_{\text{PR}}^{1/16}) + O(\eta_{\text{PR}}^{1/8}) = O(\eta_{\text{PR}}^{1/16})~.
	\end{align*}
	The last line follows from $\eta_{\text{anchor}} = \frac{1}{4}$,~\eqref{eq:local_operators-0b} to bound the first term,~\eqref{eq:vxy_bound-2} to bound the second term, and ~\eqref{eq:local_operators-0b} along with conditioning on $Y = \dummy$ to bound the last term.
	
\end{proof}
\newpage
\section{Soundness proofs}

In this appendix, we give a proof for the ``soundness" clause for both~\Cref{prop:QuestionReduction} and~\Cref{prop:AnswerReduction}. As mentioned previously, the proof of the ``soundness" clause for~\Cref{prop:QuestionReduction} follows a similar structure to~\cite[Section 8.4]{jiMIPRE2022a} and we present it in~\Cref{sec:QRsoundness}, and the proof the ``soundness" clause for~\Cref{prop:AnswerReduction} follows a similar structure to~\cite[Section 10.7]{jiMIPRE2022a} and we present it in~\Cref{sec:ARsoundness}. The only notable change is the translation between finite-dimensional strategies to tracially embeddable strategies using the ``translation chat" given in~\Cref{fig:tracialemdtofd}. 

\subsection{Proof for the ``soundness" clause for the question reduction transformation} \label{sec:QRsoundness}
In this subsection, we continue the proof of soundness of~\Cref{thm:Introspectgame} below. We first establish some notations which we use in the proof. Let $(\alicealg, \tau)$ be a tracial von Neumann algebra represented under the standard form $(\chi_{\tau}, \cL^2(\alicealg, \tau), \ket{\tau})$. $\cM_n(\bC) \otimes \alicealg$ is also a tracial von Neumann algebra with the trace $\Tr$. In this case, the standard form for $\cM_n(\bC) \otimes \alicealg$ are represented as $\cM_n(\bC) \otimes \cI_n \otimes \alicealg$ with the tracial state $\ket{\text{ME}_n} \otimes \ket{\tau}$. In this case the opposite map $op$ maps elements from $\left(\cM_n(\bC)\right)_A \otimes \left(\cI_n\right)_B \otimes \alicealg$ to $ \left(\cI_n\right)_A \otimes \left(\cM_n(\bC)\right)_B  \otimes \alicealg$. Hence, when discussing measurement operators $P_{A_1 \alicealg}$ from $\strategy''$, $\left(P_{A_1 \alicealg}\right)^{op}$ are define in $B_1 \alicealg$ (where the $\alicealg$ register are defined within the Hilbert space $\cL^2(\alicealg, \tau)$ and contains measurements from both $\alicealg$ and $\alicealg'$). We sometimes write $P^{op}_{B_1 \alicealg}$ for a measurement operator $P_{A_1 \alicealg}$ to specified the registers. 

For a canonical register subspace $V \subseteq \bF_{2^p}^m$, we define $\bC_V \subseteq (\bC^{2^{p}})^{\otimes m}$ as the subspace span by the basis $\{\ket{v}\}_{v \in V}$, and $\cI_{V} = \sum_{v \in V} \ketbra{v_0, \cdots v_{m-1}}{v_0, \cdots v_{m-1}} \in \cM_{2^{p \cdot m}}(\bC)$. Furthermore, for $A,B \in \bobalg$ for some von Neumann algebra $\bobalg$, we write $[A,B] = A\cdot B - B \cdot A$. For a multi-outcome measurement $\{P_{a,b}\}_{a \in \cA, b \in \cB}$, we denote $P_{a} = \sum_{b \in \cB} P_{a,b}$. For a function $\textbf{f}: \cA \rightarrow \cC$, we also use $P_{[\textbf{f}(a)| b]}$ to denote the data process measurement being applied on $P_a$. For two sets of POVM $\{P_{a_1, b}\}_{a_1 \in\cA , b \in \cB}$, $\{Q_{c, a_2}\}_{c \in \cB, a_2 \in \cA}$, we write $P_{a_1} \approx Q_{a_2 = a_1}$ to emphasize the outcome variables for the measurement outcome which follows the $\approx$ relationship. 

\subsubsection{Preliminary lemmas}
Before continuing with the proof, we begin by recall the following important lemma from~\cite{jiMIPRE2022a}. 
\begin{lemma}[Pauli twirl decomposition, Lemma 8.15 of~\cite{jiMIPRE2022a}] \label{lem:PauliTwirl}
	Let $\cX, \cA$ be two finite sets, $\mu$ be a distribution over $\cX$, $V$ be a canonical subspace of $\bF_{2^p}^m$, and $(\alicealg, \tau)$ be a tracial von Neumann algebra represented in the standard form $(\chi_{\tau}, \cL^2(\alicealg, \tau), \ket{\tau})$. For each $x \in \cX$, let $W_x$ and $V_x$ be two canonical subspace of $V$ such that $W_x \subseteq V_x \subseteq V$, and let $\decidertypedfunction_x: W_x \rightarrow W_x$ be a linear map. 
	
	Consider the state $\ket{\psi} = \ket{\text{ME}_{2^{p}}}^{\otimes m}_{A_1 A_2} \otimes \ket{\text{Aux}}_{\alicealg} \in  \left(\bC^{2^{p }} \otimes \bC^{2^{p }} \right)_{A_1 B_1}^{\otimes m} \otimes \cL^2(\alicealg, \tau)$ where $\ket{\text{Aux}}_{\alicealg}$ is an arbitrary vector. For each $x \in \cX$, let $\{ M^{x}_{t, a}\}_{t \in W_x, a \in \cA}$ be a set of PVM on $\cB(\bC_{V_x}) \otimes \alicealg$. Suppose that the following condition holds for some $\eps > 0$
	\begin{align*}
		&M^{x}_{t} \otimes \cI_{V_x^C} \approx_{\eps} \rho_{[\decidertypedfunction_x | t]}^{p, Z} \otimes \cI_{W_x^C} \otimes \cI_\alicealg\\
		&\left[(M^{x}_{t, a} \otimes \cI_{V_x^C}),  (\rho_{z}^{p, Z} \otimes \cI_{W_x^C} \otimes \cI_\alicealg)\right] \approx_{\eps} 0 \\
		&\left[(M^{x}_{t, a} \otimes \cI_{V_x^C}),  (\rho_{[\decidertypedfunction^{\bot}_x| t^{\bot}]}^{p, X} \otimes \cI_{W_x^C} \otimes \cI_\alicealg) \right] \approx_{\eps} 0, 
	\end{align*}
	where the $\approx$ is defined over the distribution $\mu$ and summing over $t, t^{\bot} \in W_x$ and $a \in \cA$. In the equation above, both $\rho_{[\decidertypedfunction^{\bot}_x| t^{\bot}]}^{p, X}$ and $\rho_{[\decidertypedfunction_x | t]}^{p, Z}$ are in $\cB(\cH_{V_x \setminus W_x}) \otimes \alicealg$. 
	
	Then for each $x \in \cX$ and $t \in W_x$, there exists a set of POVM $\{M_a^{x,t}\}_{a \in \cA}$ acting on $\cB(\bC_{V_x \setminus U_x}) \otimes \alicealg$ such that on average over $x \sim \mu$
	\begin{equation*}
		(M_{t,a}^x \otimes \cI_{V_x^C}) \approx_{O(\poly(\eps))}   \rho_{[\decidertypedfunction_x | r]}^{p, Z} \otimes M_a^{x,t} \otimes \cI_{V_x^C}.
	\end{equation*}
\end{lemma}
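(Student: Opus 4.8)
The plan is to follow the structure of the proof of Lemma 8.15 in~\cite{jiMIPRE2022a}, translating each step into the tracially embeddable strategy setting using the dictionary of~\Cref{fig:tracialemdtofd}. First I would fix $x \in \cX$ and work on the finite-dimensional tensor factor $\cB(\bC_{V_x})$ while carrying the infinite-dimensional register $\alicealg$ along passively; the key point is that all three hypotheses are expressed purely in terms of the $\approx_\eps$ distance over the state $\ket{\psi} = \ket{\text{ME}_{2^p}}^{\otimes m}_{A_1 A_2}\otimes\ket{\text{Aux}}_{\alicealg}$, and $\ket{\text{ME}_{2^p}}^{\otimes m}$ plays exactly the role of the maximally entangled state that makes the observable-switching trick $A\ket{\tau} = A^{op}\ket{\tau}$ available on the finite factor. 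So the Pauli-twirl manipulations — expanding $M^x_{t,a}$ in the generalized Pauli basis on $\cB(\bC_{V_x})$ via~\eqref{eq:genpaulitomeasure}, using the twisted commutation relations~\eqref{eq:genpauliaddmult}, \eqref{eq:genpaulicomm} to see which Fourier components survive, and projecting onto the commutant of the $Z$-measurements on $W_x$ — all go through verbatim, with $\alicealg$ tagging along in the tensor product.

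The main technical step I would carry out is the following: write $M^x_{t,a} = \sum_{v,w \in V_x \setminus W_x} c^x_{t,a}(v,w)\, \rho^{p,X}(v)\rho^{p,Z}(w)$ as a Fourier expansion over the ``free'' register $V_x \setminus W_x$ (with operator-valued coefficients $c^x_{t,a}(v,w)$ living in $\cB(\bC_{W_x})\otimes\alicealg$). The first hypothesis $M^x_t \otimes \cI_{V_x^C} \approx_\eps \rho^{p,Z}_{[\decidertypedfunction_x|t]}\otimes\cI_{W_x^C}\otimes\cI_\alicealg$ forces, after summing over $a$, the measurement to be close to the $Z$-data-processed Pauli on $W_x$; the second and third commutation hypotheses, via~\Cref{lem:linePaulicommutation} and the twisted commutation relation, force the surviving Fourier weight of $M^x_{t,a}$ on the $W_x$-directions to be concentrated on the term $\rho^{p,Z}_{[\decidertypedfunction_x|t]}$. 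Collecting the residual operator on $\cB(\bC_{V_x\setminus W_x})\otimes\alicealg$ for each fixed $(x,t)$ and applying the orthogonalization lemma~\Cref{lem:orthogonalizationlemma} (in its Hilbert--Schmidt form for $\sigma\ket{\tau}$, noting $\ket{\text{Aux}}$ need not be of that form but the lemma as stated applies to a generic unit vector) gives the POVM $\{M^{x,t}_a\}$ and the claimed approximation $(M^x_{t,a}\otimes\cI_{V_x^C}) \approx_{O(\poly(\eps))} \rho^{p,Z}_{[\decidertypedfunction_x|r]}\otimes M^{x,t}_a\otimes\cI_{V_x^C}$, after averaging over $x\sim\mu$.

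I expect the main obstacle to be bookkeeping rather than anything genuinely new: keeping track of which operators act on which of the three registers $\bC_{W_x}$, $\bC_{V_x\setminus W_x}$, $\bC_{V_x^C}$ and the commutant $\alicealg$, and making sure that the $op$-map and the cyclic-separating property of $\ket{\tau}$ are invoked correctly so that Cauchy--Schwarz and data-processing steps (as in~\Cref{lem:combmeasure}, \Cref{lem:closetodistance}) still close up with the Hilbert--Schmidt norm on $\alicealg$ replacing the matrix norm. A secondary subtlety is that, unlike the finite-dimensional case, one cannot take partial traces freely over the infinite factor, so wherever~\cite{jiMIPRE2022a} reduces to a reduced density matrix I would instead work directly with the vector state $\ket{\text{Aux}}$ and the trace $\Tr\otimes\tau$ on $\cM_n(\bC)\otimes\alicealg$, exactly as done in the preamble of~\Cref{sec:QRsoundness}. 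Since all these substitutions are mechanical and the operator identities~\eqref{eq:genpaulitoobser}--\eqref{eq:genpaulicomm} are stated for arbitrary Hilbert spaces, the proof should not require any ingredient beyond those already assembled, and I would present it by quoting the structure of~\cite[Lemma 8.15]{jiMIPRE2022a} and only spelling out the places where the infinite-dimensional register changes a step.
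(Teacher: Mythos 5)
Your plan matches what the paper actually does: the paper does not reprove this lemma at all, but simply remarks (after stating Lemmas~\ref{lem:PauliTwirl}--\ref{lem:QRsoundnessproofcomm2}) that ``although the lemmas in this subsection is originally define for finite dimensional matrices, the proof can be trivially modified for the infinite dimension setting.'' Your proposal is a fleshed-out version of exactly that claim, and the key observations you make — that the generalized-Pauli identities~\eqref{eq:genpaulitoobser}--\eqref{eq:genpaulicomm} and~\Cref{lem:linePaulicommutation} are Hilbert-space-agnostic, that the observable-switching trick on $\ket{\text{ME}_{2^p}}^{\otimes m}$ is exactly what makes the twirl argument close up, and that one should work with the vector state $\ket{\text{Aux}}$ and $\Tr\otimes\tau$ rather than partial traces over $\alicealg$ — are the right ones and consistent with the translation chart in~\Cref{fig:tracialemdtofd} and the setup laid out at the start of~\Cref{sec:QRsoundness}.

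One small but genuine slip: you do not need, and should not invoke, the orthogonalization lemma~\Cref{lem:orthogonalizationlemma} to produce the family $\{M^{x,t}_a\}$. The conclusion of the Pauli-twirl lemma asks only for a POVM, and the twirl itself (equivalently, discarding the Fourier components of $M^x_{t,a}$ with nontrivial $X$-support on $W_x$) already yields positive operators that sum to the identity on $\cB(\bC_{V_x\setminus W_x})\otimes\alicealg$; no rounding step is required. Moreover~\Cref{lem:orthogonalizationlemma} goes in the opposite direction — it promotes a POVM that is ``almost projective'' on the state to a genuine PVM — so invoking it here to manufacture a POVM is both unnecessary and not what the lemma does. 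Also a minor nomenclature point: \eqref{eq:genpauliaddmult} is the group-additivity relation, not a twisted commutation relation; only \eqref{eq:genpaulicomm} is the anticommutation-type identity. Neither point affects the correctness of the overall route.
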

We also recall several lemmas from~\cite[Section 8.4.2]{jiMIPRE2022a}, which is useful for decomposing PVM measurements between different Hilbert spaces, and as well as showing commutation relationships between measurements.

\begin{lemma}[Decomposition of measurements over the $\approx$ distance, Lemma 8.16 of~\cite{jiMIPRE2022a}] \label{lem:QRsounddecomp}
	Let $\cA$ and $\cB$ be two finite sets, $\eps>0$, $\ket{\psi_Q} \in \cH_Q$ and $\ket{\psi_A} \in \cH_A$. Furthermore, let $\{ Q_{a} \} \subseteq \cB(\cH_Q)$ be a set of PVM and for all $a \in \cA$, and let $\{ A^a_b\}_{b \in \cB} ,\{ B^a_b\}_{b \in \cB} \subseteq \cB(\cH_A)$ be two sets of POVMs. Then the following are equivalent:
	\begin{itemize}
			\item $(Q_a \otimes A^a_b) \approx_{\eps}(Q_a \otimes B^a_b)$ on the state $\ket{\psi_Q} \otimes \ket{\psi_A}$.
			\item Over the distribution $P(a) =  \braket{\psi_Q| Q_a|\psi_Q}$ and the state $\ket{\psi_A}$, we have $ A^a_b \approx_{\eps} B^a_b$.
	\end{itemize} 
\end{lemma}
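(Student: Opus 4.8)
\textbf{Proof plan for Lemma~\ref{lem:QRsounddecomp} (Decomposition of measurements over the $\approx$ distance).}

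The plan is to prove the two directions of the equivalence separately, both by a direct computation that unpacks the definition of the $\approx_\eps$ relation given in~\eqref{eq:Closedef}, exploiting heavily the fact that $\{Q_a\}$ is a PVM (so the $Q_a$ are orthogonal projectors) while $\{A^a_b\}$ and $\{B^a_b\}$ are only POVMs. The main point is that when we expand $\sum_{a,b}\|(Q_a\tensor A^a_b - Q_a \tensor B^a_b)\ket{\psi_Q}\ket{\psi_A}\|^2$ and use $Q_a Q_{a'} = \delta_{a,a'} Q_a$ together with $Q_a^2 = Q_a$, the cross terms involving different values of $a$ vanish, and the expression factors as $\sum_a \braket{\psi_Q|Q_a|\psi_Q}\cdot \sum_b \|(A^a_b - B^a_b)\ket{\psi_A}\|^2$, which is exactly $\Ex_{a\sim P}\sum_b \|(A^a_b - B^a_b)\ket{\psi_A}\|^2$ with $P(a) = \braket{\psi_Q|Q_a|\psi_Q}$.

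First I would establish the forward direction. Starting from $(Q_a\tensor A^a_b)\approx_\eps (Q_a \tensor B^a_b)$, I write out
\begin{align*}
\sum_{a,b}\big\|\big(Q_a\tensor A^a_b - Q_a\tensor B^a_b\big)\ket{\psi_Q}\ket{\psi_A}\big\|^2 &= \sum_{a,b} \braket{\psi_Q|Q_a^* Q_a|\psi_Q}\cdot \braket{\psi_A|(A^a_b - B^a_b)^*(A^a_b-B^a_b)|\psi_A}\\
&= \sum_a P(a) \sum_b \big\|(A^a_b - B^a_b)\ket{\psi_A}\big\|^2,
\end{align*}
where the first equality uses that $Q_a$ is a projector (so $Q_a^* Q_a = Q_a$) and that the operators on the two tensor factors act independently, and the second uses $P(a) = \braket{\psi_Q|Q_a|\psi_Q}$. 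Since the left-hand side is $\leq O(\eps)$ by hypothesis, the right-hand side is too, which is precisely the statement $A^a_b \approx_\eps B^a_b$ over $P$ and $\ket{\psi_A}$. The reverse direction is the same computation read backwards: the chain of equalities above shows that the $\approx_\eps$ quantity for $(Q_a\tensor A^a_b)$ vs. $(Q_a\tensor B^a_b)$ on $\ket{\psi_Q}\ket{\psi_A}$ equals the $\approx_\eps$ quantity for $A^a_b$ vs. $B^a_b$ over $P$ and $\ket{\psi_A}$, so one bound gives the other.

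I do not expect any real obstacle here; the only care needed is to make sure the PVM hypothesis on $\{Q_a\}$ is used in exactly the right place (it is what kills the off-diagonal $a\neq a'$ terms and turns $Q_a^*Q_a$ into $Q_a$), and to be careful that $P$ is genuinely a probability distribution, i.e.\ $\sum_a P(a) = \braket{\psi_Q|\sum_a Q_a|\psi_Q} = \braket{\psi_Q|\psi_Q} = 1$, which holds since $\ket{\psi_Q}$ is a unit vector and $\{Q_a\}$ is a PVM (hence $\sum_a Q_a = \cI$). Everything else is a bookkeeping expansion of the squared norm, and the constants in the $O(\cdot)$ notation transfer verbatim between the two formulations.
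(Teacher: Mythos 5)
Your proof is correct and is the natural (essentially unavoidable) argument for this lemma; the paper itself cites the statement from~\cite{jiMIPRE2022a} without reproducing a proof, and the key exact identity
\[
\sum_{a,b}\bigl\|\bigl(Q_a\otimes(A^a_b-B^a_b)\bigr)\ket{\psi_Q}\ket{\psi_A}\bigr\|^2 \;=\; \sum_a \braket{\psi_Q|Q_a|\psi_Q}\sum_b\bigl\|(A^a_b-B^a_b)\ket{\psi_A}\bigr\|^2
\]
does give both directions at once, with the same constant. One small narrative inaccuracy: you say that "the cross terms involving different values of $a$ vanish" via $Q_a Q_{a'}=\delta_{a,a'}Q_a$, but in the quantity $\sum_{a,b}\|\cdot\|^2$ there are no cross terms between distinct $a$ to begin with — the sum already runs over separate squared norms for each $(a,b)$ pair. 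The PVM hypothesis is used only to turn $\braket{\psi_Q|Q_a^*Q_a|\psi_Q}$ into $\braket{\psi_Q|Q_a|\psi_Q}=P(a)$ (via $Q_a^*=Q_a$ and $Q_a^2=Q_a$), and to ensure $P$ normalizes to $1$; orthogonality $Q_aQ_{a'}=0$ for $a\neq a'$ is not actually invoked. This does not affect correctness, but the remark should be dropped or rephrased so as not to suggest an expansion step that isn't happening.
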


\begin{lemma}[Lemma 8.18 of~\cite{jiMIPRE2022a}]  \label{lem:QRsoundnessproofLem3}
	Let $\cX$, $\cY$ and $\cZ$ be three finite sets, and $(\alicealg, \tau)$ be a tracial von Neumann algebra represented in the standard form $(\chi_{\tau}, \cL^2(\alicealg, \tau), \ket{\tau})$. Furthermore, for all $x \in \cX$, let $y \in \cY$, let $\{ A^{y}_{x,z}\} \subseteq \alicealg$ be a set of POVM, $\{B_{x,y,z}\} \subseteq \alicealg'$ be a set of PVM Suppose that
	\begin{equation*}
		\sum_{x,y,z} \braket{\psi| A^y_{x,z} B_{x,y,z} |\psi} \geq 1 -\eps,
	\end{equation*} 
	for some state $\ket{\psi} \in \cL^2(\alicealg, \tau)$ and $\eps > 0$. Then with respect to the state $\ket{\psi}$,  $ B_{x,y,z} \approx_{\eps}  A^y_{x,z} B_{x,y}$. 
\end{lemma}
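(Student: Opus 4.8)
\textbf{Proof proposal for \Cref{lem:QRsoundnessproofLem3}.} This is a standard ``consistency implies closeness'' estimate, and the plan is to prove it by a direct expansion of the relevant squared norm, exploiting three structural facts: (i) $\{B_{x,y,z}\}\subseteq\alicealg'$ is a PVM, so each $B_{x,y,z}$ is a projection and $B_{x,y,z}B_{x,y}=B_{x,y,z}$ where $B_{x,y}=\sum_{z}B_{x,y,z}$ is again a projection; (ii) $A^y_{x,z}\in\alicealg$ commutes with every $B_{x',y',z'}\in\alicealg'$; and (iii) $\{A^y_{x,z}\}$ being a POVM gives $0\le A^y_{x,z}\le\cI$ (hence $(A^y_{x,z})^2\le A^y_{x,z}$) and $\sum_{z}A^y_{x,z}\le\cI$ for each $x,y$.

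The first step is to write, for each triple $(x,y,z)$,
\[
\big\|\big(B_{x,y,z}-A^y_{x,z}B_{x,y}\big)\ket{\psi}\big\|^2
= \braket{\psi|B_{x,y,z}|\psi} - 2\,\mathrm{Re}\braket{\psi|B_{x,y,z}A^y_{x,z}B_{x,y}|\psi} + \braket{\psi|B_{x,y}(A^y_{x,z})^2B_{x,y}|\psi},
\]
using $B_{x,y,z}^2=B_{x,y,z}$ and self-adjointness of all operators involved. For the cross term I would use $B_{x,y,z}B_{x,y}=B_{x,y,z}$ together with $[A^y_{x,z},B_{x,y,z}]=0$ to rewrite $B_{x,y,z}A^y_{x,z}B_{x,y}=A^y_{x,z}B_{x,y,z}=(A^y_{x,z})^{1/2}B_{x,y,z}(A^y_{x,z})^{1/2}$, which is positive semidefinite, so $\mathrm{Re}\braket{\psi|B_{x,y,z}A^y_{x,z}B_{x,y}|\psi}=\braket{\psi|A^y_{x,z}B_{x,y,z}|\psi}\ge 0$. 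For the third term, $(A^y_{x,z})^2\le A^y_{x,z}$ and $B_{x,y}$ being a projection commuting with $A^y_{x,z}$ give $\braket{\psi|B_{x,y}(A^y_{x,z})^2B_{x,y}|\psi}\le\braket{\psi|A^y_{x,z}B_{x,y}|\psi}$.

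Next I would sum over $(x,y,z)$. The first sum is $\sum_{x,y,z}\braket{\psi|B_{x,y,z}|\psi}=1$ since $\{B_{x,y,z}\}$ is a PVM; the cross term sum is $\ge 1-\eps$ by the hypothesis $\sum_{x,y,z}\braket{\psi|A^y_{x,z}B_{x,y,z}|\psi}\ge 1-\eps$; and for the third sum, writing $A^y_x:=\sum_z A^y_{x,z}\le\cI$ and using $B_{x,y}A^y_xB_{x,y}\le B_{x,y}$, we get $\sum_{x,y,z}\braket{\psi|A^y_{x,z}B_{x,y}|\psi}=\sum_{x,y}\braket{\psi|A^y_xB_{x,y}|\psi}\le\sum_{x,y}\braket{\psi|B_{x,y}|\psi}=1$. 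Combining the three bounds yields
\[
\sum_{x,y,z}\big\|\big(B_{x,y,z}-A^y_{x,z}B_{x,y}\big)\ket{\psi}\big\|^2 \le 1 - 2(1-\eps) + 1 = 2\eps,
\]
which is exactly $B_{x,y,z}\approx_{\eps}A^y_{x,z}B_{x,y}$ with respect to $\ket{\psi}$.

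I do not anticipate a genuine obstacle here; the ``hard part'' is purely bookkeeping — keeping straight which operators are projections versus general POVM elements, correctly using the $\alicealg$/$\alicealg'$ commutation to move $A^y_{x,z}$ past $B_{x,y,z}$ and $B_{x,y}$, and confirming the intended normalization of the POVM $\{A^y_{x,z}\}$ (so that $\sum_z A^y_{x,z}\le\cI$), which is what makes the third sum bounded by $1$. All of these are verbatim analogues of the finite-dimensional argument in \cite{jiMIPRE2022a}, now read in $\cB(\cL^2(\alicealg,\tau))$ via the dictionary of \Cref{fig:tracialemdtofd}, so no new infinite-dimensional input is needed.
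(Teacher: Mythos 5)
Your proof is correct and follows the standard argument of Lemma~8.18 in \cite{jiMIPRE2022a}, which the paper cites and invokes without reproving (it only remarks that the finite-dimensional proof ``trivially modifies'' to the tracial setting). Your expansion of the squared norm, the handling of the cross term via the $\alicealg$/$\alicealg'$ commutation and $B_{x,y,z}B_{x,y}=B_{x,y,z}$, and the bound on the third term via $(A^y_{x,z})^2\le A^y_{x,z}$ and $\sum_z A^y_{x,z}\le\cI$ are all exactly the steps one needs, and the final $\le 2\eps=O(\eps)$ is what $\approx_\eps$ requires under the paper's convention; this goes through regardless of whether the POVM normalization is over $z$ alone or over $(x,z)$ jointly, so the slight ambiguity in the lemma's phrasing does no harm.
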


\begin{lemma}[Approximation relationship implies commutation, Lemma 5.25 of~\cite{jiMIPRE2022a}]  \label{lem:QRsoundnessproofcomm1}
	Let $\alicealg$ be a von Neumann algebra, let $\cX$, $\cA$, $\cB$, and $\cC$ be four finite sets. For every $x \in \cX $, let $\{A^x_{a, b}\}_{a \in \cA, b \in \cB}, \{C^x_{a, c}\}_{a \in \cA, c \in \cC} \subseteq \alicealg$ be two sets of POVM and let $\{B^x_{a,b,c}\}_{a \in \cA, b \in \cB, c \in \cC} \subseteq \alicealg'$ be a set of PVM. Suppose that for some $\delta > 0$
	\begin{equation*}
		A^x_{a,b} \approx_{\delta} B^x_{a,b} \, \qquad \, C^x_{a,c} \approx_{\delta} B^x_{a,c}. 
	\end{equation*}
	Then $[A^x_{a,b},  C^x_{a,c}]  \approx_{\delta} 0$.
\end{lemma}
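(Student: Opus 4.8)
\textbf{Proof plan for Lemma~\ref{lem:QRsoundnessproofcomm1}.} The statement to prove is: for a von Neumann algebra $\alicealg$, finite sets $\cX,\cA,\cB,\cC$, POVMs $\{A^x_{a,b}\},\{C^x_{a,c}\} \subseteq \alicealg$, and a PVM $\{B^x_{a,b,c}\} \subseteq \alicealg'$ satisfying $A^x_{a,b} \approx_\delta B^x_{a,b}$ and $C^x_{a,c} \approx_\delta B^x_{a,c}$, one concludes $[A^x_{a,b}, C^x_{a,c}] \approx_\delta 0$. The plan is to expand the squared-norm defining $[A^x_{a,b},C^x_{a,c}] \approx_\delta 0$, namely $\Ex_x \sum_{a,b,c} \| (A^x_{a,b} C^x_{a,c} - C^x_{a,c} A^x_{a,b})\ket{\psi}\|^2$, insert $B^x_{a,b}$ and $B^x_{a,c}$ as intermediate operators, and use the crucial fact that elements of $\alicealg$ commute with elements of $\alicealg'$ to rearrange factors, then bound the error terms using the two given closeness hypotheses together with the combination lemmas (\Cref{lem:combmeasure}) and the fact that all the operators in sight are contractions (being POVM elements).

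First I would write $A^x_{a,b} C^x_{a,c} - C^x_{a,c} A^x_{a,b}$ and replace the ``inner'' copy of $C^x_{a,c}$ in the first term by $B^x_{a,c}$ and the ``inner'' copy in the second term by $B^x_{a,c}$ as well, i.e.\ use $C^x_{a,c} \approx_\delta B^x_{a,c}$ twice (via~\Cref{lem:combmeasure}, multiplying on the appropriate side by the contraction $A^x_{a,b}$, which is legitimate since $\{A^x_{a,b}\}$ is a POVM and its marginal is a POVM). This gives
\[
A^x_{a,b} C^x_{a,c} \approx_\delta A^x_{a,b} B^x_{a,c}, \qquad C^x_{a,c} A^x_{a,b} \approx_\delta B^x_{a,c} A^x_{a,b},
\]
over the state $\ket{\psi}$ and the relevant distribution. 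Since $B^x_{a,c} \in \alicealg'$ and $A^x_{a,b} \in \alicealg$ commute exactly, $A^x_{a,b} B^x_{a,c} = B^x_{a,c} A^x_{a,b}$, so the two right-hand sides are literally equal. Then by the triangle inequality for the $\approx$ distance (stated in the excerpt right after~\eqref{eq:Closedef}, the ``$A \approx_\delta B$ and $B\approx_\eps C$ implies $A \approx_{\delta+\eps} C$'' remark), combining the two displayed approximations yields $A^x_{a,b} C^x_{a,c} \approx_{O(\delta)} C^x_{a,c} A^x_{a,b}$, which is exactly $[A^x_{a,b}, C^x_{a,c}] \approx_\delta 0$ up to the $O(\cdot)$ absorbed into the definition of $\approx$.

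The only subtlety — and the step I expect to require the most care — is making sure the closeness hypothesis is applied on the correct side and that the ``combine measurement'' lemma~\Cref{lem:combmeasure} actually applies: that lemma is stated for combining a POVM $\{C^{x,y}_{a,c}\}$ with a pair of measurements that are $\approx$-close, and here I need both the ``left multiplication'' and ``right multiplication'' versions, which \Cref{lem:combmeasure} does provide. One must also check that the hypothesis $A^x_{a,b} \approx_\delta B^x_{a,b}$ is not actually needed for the argument as I have sketched it — indeed the commutation only uses $C \approx_\delta B$ together with exact commutation of $\alicealg$ with $\alicealg'$; the hypothesis on $A$ is presumably there for symmetry or for how the lemma is invoked downstream, and I would note this. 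An alternative symmetric route uses both hypotheses: replace the inner $C$ by $B_{a,c}$ in $A_{a,b}C_{a,c}$ and replace the inner $A$ by $B_{a,b}$ in $C_{a,c}A_{a,b}$, then commute $B_{a,c}$ past $A_{a,b}$ and $B_{a,b}$ past $C_{a,c}$, and finally bridge $A_{a,b}B_{a,c}$ to $B_{a,b}B_{a,c}$ (using $A\approx B$) and $B_{a,c}B_{a,b} = B_{a,b}B_{a,c}$ (the $B$'s are a PVM so they commute among themselves); this uses both hypotheses and is the more robust version to write up. I would present this symmetric version, carefully tracking that each replacement step is an application of~\Cref{lem:combmeasure} with a contraction on one side, and conclude by the triangle inequality.
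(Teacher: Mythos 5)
Since the paper states this lemma as a citation to [jiMIPRE2022a, Lemma 5.25] and does not supply its own proof, there is no argument in the text to compare yours against, so I will assess your proposal on its own merits.

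Your ``symmetric route'' is correct, and since you say that is the version you would write up, your final product is sound. But the commentary surrounding it contains a genuine misconception that is worth correcting. Your ``first route'' claims that the hypothesis $A^x_{a,b} \approx_\delta B^x_{a,b}$ is not actually needed and that everything follows from $C^x_{a,c} \approx_\delta B^x_{a,c}$ plus exact commutation of $\alicealg$ with $\alicealg'$. This is false, and the gap is in the step $C^x_{a,c} A^x_{a,b} \approx_\delta B^x_{a,c} A^x_{a,b}$. You derive this by right-multiplying the relation $C^x_{a,c} \approx_\delta B^x_{a,c}$ by the POVM $\{A^x_{a,b}\}$, invoking the second conclusion of~\Cref{lem:combmeasure}. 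But look at how the first conclusion of~\Cref{lem:combmeasure} is proved: it pulls the outer POVM out of the norm via $\sum_c (C^{x,y}_{a,c})^* C^{x,y}_{a,c} \leq \cI$, which is only possible because the outer POVM sits \emph{outside} the difference $(A-B)$. For the right-multiplied version $\sum_b \|(C^x_{a,c}-B^x_{a,c})A^x_{a,b}\ket{\psi}\|^2$, the outer POVM $A^x_{a,b}$ sits on the state side of the difference, and the only way to bound the sum by $\|(C^x_{a,c}-B^x_{a,c})\ket{\psi}\|^2$ is to commute $A^x_{a,b}$ past $(C^x_{a,c}-B^x_{a,c})$. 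That works for the $B$ piece (opposite commutants) but fails for the $C$ piece — both $A^x_{a,b}$ and $C^x_{a,c}$ live in $\alicealg$ and do not commute, which is of course the entire content of the lemma. So the ``first route'' is circular: the right-multiplication step tacitly assumes the approximate commutation you are trying to prove. Every place in this paper that~\Cref{lem:combmeasure} is invoked, the outer POVM lives in the opposite algebra from the operators being replaced, which is why the right-multiplication clause is harmless there; your first route is the one application where it breaks.

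Your symmetric route avoids this because every replacement multiplies on the \emph{left} by a POVM (the unconditionally valid first clause of~\Cref{lem:combmeasure}) or commutes across algebras exactly. Concretely, the chain
\begin{align*}
A^x_{a,b}C^x_{a,c}\ket{\psi} &\approx_\delta A^x_{a,b}B^x_{a,c}\ket{\psi} = B^x_{a,c}A^x_{a,b}\ket{\psi} \approx_\delta B^x_{a,c}B^x_{a,b}\ket{\psi} \\
&= B^x_{a,b}B^x_{a,c}\ket{\psi} \approx_\delta B^x_{a,b}C^x_{a,c}\ket{\psi} = C^x_{a,c}B^x_{a,b}\ket{\psi} \approx_\delta C^x_{a,c}A^x_{a,b}\ket{\psi}
\end{align*}
uses only left multiplication by a POVM at each $\approx_\delta$ (with the outer POVM being $A^x_{a,b}$, $B^x_{a,c}$, $B^x_{a,b}$, $C^x_{a,c}$ respectively), two exact commutations between $\alicealg$ and $\alicealg'$, and one commutation $B^x_{a,c}B^x_{a,b}=B^x_{a,b}B^x_{a,c}$, which holds because $\{B^x_{a,b,c}\}$ is a PVM and the two marginals share the $a$ index (indeed $B^x_{a,b}B^x_{a,c}=B^x_{a,b,c}$). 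Both hypotheses are used, each twice. I would delete the ``first route'' entirely and the remark that the $A$ hypothesis is dispensable, and present only this chain; with that edit your proof is correct.
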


\begin{lemma}[Decomposition measurements preserves approximate commutation over the $\approx$ distance, Lemma 8.16 of~\cite{jiMIPRE2022a}] \label{lem:QRsoundnessproofcomm2}
	Let $\cA$, $\cB$, and $\cC$ be three finite sets, $\eps>0$, $\ket{\psi_Q} \in \cH_Q$ and $\ket{\psi_A} \in \cH_A$. Furthermore, let $\{ Q_{a} \} \subseteq \cB(\cH_Q)$ be a set of PVM, let $\{ A^a_b\}_{b \in \cB} ,\{ B^a_b\}_{b \in \cB} \subseteq \cB(\cH_A)$ be two sets of POVMs, and let $A_{a,b} = Q_a \otimes A^a_b$, $B_{a,c} = Q_a \otimes A^a_c$. 
	
	Suppose that $[A_{a,b}, B_{a,c}] \approx_{\eps} 0$ with respect to the state $\ket{\psi_Q} \otimes \ket{\psi_A}$. Then
	\begin{equation*}
		[A^a_b, B^a_c] \simeq_{\eps} 0
	\end{equation*}
	where the $\approx$ is defined over the distribution $P(a) = \braket{\psi_Q| Q_a|\psi_Q}$, the state $\ket{\psi_A}$., and summing over $(b,c) \in \cB \times \cC$. 
\end{lemma}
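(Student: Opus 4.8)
\textbf{Plan for the proof of Lemma~\ref{lem:QRsoundnessproofcomm2}.}

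The statement to prove is the final boxed lemma: if $A_{a,b} = Q_a \otimes A^a_b$ and $B_{a,c} = Q_a \otimes B^a_c$ with $\{Q_a\}$ a PVM and $[A_{a,b}, B_{a,c}] \approx_\eps 0$ on $\ket{\psi_Q}\otimes\ket{\psi_A}$, then $[A^a_b, B^a_c] \simeq_\eps 0$ over the distribution $\P(a) = \braket{\psi_Q|Q_a|\psi_Q}$, the state $\ket{\psi_A}$, and the sum over $(b,c)$. The first thing I would do is unwind the two definitions. The hypothesis $[A_{a,b},B_{a,c}]\approx_\eps 0$ means, by the definition of the $\approx$ distance in~\eqref{eq:Closedef},
\[
\sum_{a,b,c} \bigl\| (A_{a,b} B_{a,c} - B_{a,c} A_{a,b})(\ket{\psi_Q}\otimes\ket{\psi_A}) \bigr\|^2 \leq O(\eps).
\]
The conclusion $[A^a_b, B^a_c] \simeq_\eps 0$, by the $\simeq$ definition in~\eqref{eq:Consistencydef} applied with the outcome set $(b,c)$ and the ``diagonal'' pairing implicit in writing a commutator (i.e.\ the quantity $\sum_{a} \P(a) \sum_{b,c}\braket{\psi_A| [A^a_b, B^a_c]^\dagger [A^a_b,B^a_c]|\psi_A}$ being small, which is what the lemma statement intends by $\simeq$ for a commutator — this should match the usage in~\cite[Lemma 8.17]{jiMIPRE2022a}), is exactly the statement that $\sum_a \P(a)\sum_{b,c}\|[A^a_b,B^a_c]\ket{\psi_A}\|^2 \leq O(\eps)$. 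So the whole content is a purely algebraic identity relating the two sums.

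The key computation is that, since $Q_a$ is a \emph{projective} measurement and the two operators $A_{a,b}, B_{a,c}$ share the \emph{same} index $a$ on the $Q$-register, we have
\[
A_{a,b} B_{a,c} - B_{a,c} A_{a,b} = (Q_a \otimes A^a_b)(Q_a \otimes B^a_c) - (Q_a \otimes B^a_c)(Q_a \otimes A^a_b) = Q_a \otimes (A^a_b B^a_c - B^a_c A^a_b),
\]
using $Q_a^2 = Q_a$. Therefore
\[
\bigl\| (A_{a,b}B_{a,c} - B_{a,c}A_{a,b})(\ket{\psi_Q}\otimes\ket{\psi_A})\bigr\|^2 = \braket{\psi_Q|Q_a|\psi_Q}\cdot \bigl\| [A^a_b, B^a_c]\ket{\psi_A}\bigr\|^2 = \P(a)\,\bigl\|[A^a_b,B^a_c]\ket{\psi_A}\bigr\|^2,
\]
where I used $\|Q_a\ket{\psi_Q}\|^2 = \braket{\psi_Q|Q_a^2|\psi_Q} = \braket{\psi_Q|Q_a|\psi_Q}$ for the projector $Q_a$. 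Summing over $a,b,c$ gives the equality of the two sums exactly, so the hypothesis with bound $O(\eps)$ transfers verbatim to the conclusion with the same bound. That is essentially the entire proof; I would also remark (as in the parallel statement \ref{lem:QRsoundnessproofLem3}, \ref{lem:QRsoundnessproofcomm1}) that $\ket{\psi_Q}$ and $\ket{\psi_A}$ need not be unit vectors for the argument, only that the cross-terms factor.

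The only mild subtlety — and the place I would be careful — is matching conventions: I should make sure that the meaning of ``$[A^a_b,B^a_c]\simeq_\eps 0$'' in the lemma statement really is the $\ell^2$-of-commutator quantity above, since the $\simeq$ notation was defined in~\eqref{eq:Consistencydef} for POVMs with a specific ``off-diagonal'' pairing $\sum_{a\neq b}$. Given that a commutator is not a pair of POVMs, the lemma is tacitly extending the $\simeq_\eps$ shorthand to commutators in the same way~\cite[Lemma 8.17]{jiMIPRE2022a} does; I would state this reading explicitly at the start of the proof (``here $[A^a_b,B^a_c]\simeq_\eps 0$ abbreviates $\Ex_{a\sim\P}\sum_{b,c}\|[A^a_b,B^a_c]\ket{\psi_A}\|^2 \leq O(\eps)$''), after which there is no obstacle at all — the proof is the one-line projector computation above plus a summation. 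No nontrivial inequality or structural result is needed; everything reduces to $Q_a^2 = Q_a$.
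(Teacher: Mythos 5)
Your proof is correct. The paper does not actually supply its own argument for this lemma — it is stated (alongside \Cref{lem:QRsounddecomp}) as a direct citation of Lemma~8.16 of~\cite{jiMIPRE2022a}, with only a remark at the end of the subsection that the finite-dimensional proofs carry over to the infinite-dimensional setting — so there is no competing in-paper derivation to compare against. The argument you give, namely that $Q_a^2 = Q_a$ forces the factorization
\begin{equation*}
	[Q_a \otimes A^a_b,\, Q_a \otimes B^a_c] \;=\; Q_a \otimes [A^a_b, B^a_c],
\end{equation*}
so that $\bigl\|[A_{a,b},B_{a,c}]\,(\ket{\psi_Q}\otimes\ket{\psi_A})\bigr\|^2 = P(a)\,\bigl\|[A^a_b,B^a_c]\ket{\psi_A}\bigr\|^2$, and that summing over $(a,b,c)$ turns the hypothesis into the conclusion as an exact identity, is the intended one-step computation; it is the commutator version of the same projector trick that underlies \Cref{lem:QRsounddecomp}.

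You are also right to flag the notational wrinkle in the statement: the conclusion is printed with $\simeq_\eps$, but the sentence immediately following it describes "the $\approx$" over the distribution $P(a)$, and the $\simeq$ notation of~\eqref{eq:Consistencydef} was only defined for pairs of POVMs, not for a single operator family like a commutator. The intended reading is the $\approx_\eps 0$ closeness of~\eqref{eq:Closedef} (equivalently, the $\ell^2$-of-commutator quantity you write out), and your explicit unpacking of that is exactly how the lemma is used downstream in~\Cref{claim:QRconstructingstrategy}. No further ideas are needed.
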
 

We remark that although the lemmas in this subsection is originally define for finite dimensional matrices, the proof can be trivially modified for the infinite dimension setting.

\subsubsection{The proof}

We continue the proof from~\Cref{sec:introspectionprotocol}. Recall, given the original game $\cG$, the input distribution $\mu$ is described by a $(k,m,p)$ CL distribution which is defined over two $k$-th level CL function $\decidertypedfunction^{A}, \decidertypedfunction^{B}: \bF_{2^p}^m \rightarrow \bF_{2^p}^m$ over registers $\{V_j\}_{j \in [k]}$, $ \bF_{2^p}^m = V = \bigoplus_{j \in [k]} V_j$. For the introspection transformation $\cG^{\text{Intro}} =(\cX^{\text{Intro}}, \cA^{\text{Intro}}, \mu^{\text{Intro}}, D^{\text{Intro}})$ of $\cG$, we refer the question $x \in \cX^{\text{Intro}}$ as an introspection question if the question label for $x$ corresponds to a vertex which intersects with either a orange or green edge (i.e. all the vertices on the right side of ``(Pauli, X)" and ``(Pauli, Z)"). 

Furthermore recall from~\Cref{sec:introspectionprotocol}, there exist a projective, symmetric strategy 
\begin{equation} \label{eq:QRsoundnessstrategy}
	\strategy'' =  \left(\bC^{2^{p\cdot m}}_{A_1} \otimes \bC^{2^{p \cdot m}}_{B_1} \otimes  \cL^2(\alicealg,\tau), \ket{\psi} = \ket{\text{ME}_{2^p}}^{\otimes m}_{A_1 B_1} \otimes \ket{\text{Aux}}_{\alicealg}, \{ P_{a}^x \} \right),
\end{equation}
such that $\omega(\cG^{\text{Intro}}, \strategy'') > 1 - O(\poly(n)) = \delta_1$, with 
\begin{equation} \label{eq:QRsoundnessapproxgenpauli}
	\rho^{p, W}_{s}  \approx_{O(\poly(k, \eps))} P^{(\text{Gen Pauli, W})}_s
\end{equation}
for $W \in \{X,Z\}$. In this case, $\alicealg$ also includes the extra finite-dimensional registers $A_2 B_2$, and the extra EPR pair guaranteed by~\Cref{thm:soundnessPaulibasis}. Our goal is construct a strategy for $\cG$ which succeed with probability $1 - O(\poly(\eps))$ using the measurement $P^{\text{Intro}, \decidertypedfunction^{A}}$ and $P^{\text{Intro}, \decidertypedfunction^{B}}$. Unless otherwise stated, the $\approx$ and $\simeq$ relationship in this subsection are define over the state $\ket{\psi}$ used to define the strategy $\strategy''$ 

For every $s \in \bF_{2^p}^m$, we partition $s = \sum_{j \in [k]} s_j$ where each $s_j \in V_j$, and we write 
\begin{equation}
	s_{<j} = \sum_{i \in [j]} s_i, \, \qquad \, s_{\geq j} = \sum_{j \leq i <k} s_i.
\end{equation}
Furthermore, for $W \in \{X,Z\}$ and $s_j \in V_j$, we use the notation $\rho^{p, W}_{s_j} =  \sum_{t \in \bF_{2^p}^m| t_j = s_j}\rho^{p, W}_{s_j}$, and we define $\rho^{p, W}_{s_{< j}}$ (resp. $\rho^{p, W}_{s_{\geq j}}$) in a similar manner for $s_{< j} \in V_{< j}$ (resp. $s_{\geq j} \in  V_{\geq j}$). Since $\rho^{p, W}$ is projective, we have $\rho^{p, W}_s = \Pi_{i \in [k]} \rho^{p, W}_{s_i}$ by definition. Since $V_j$ is a canonical basis subspace, 
\begin{equation}
	\rho^{p, W}_s = \rho^{p, W}_{s_0, \cdots, s_{k-1}}  = \bigotimes_{i \in [k]} \rho^{p, W}_{s_i}
\end{equation}
where each $s_i \in V_i$. When decomposing the generalized Pauli measurement in this manner, we often times write it as $\bigotimes_{i \in [k]} \left(\rho^{p, W}_{s_i}\right)_{V_i} \in \bigotimes_{i \in [k]} \bC_{V_i} = \bC^{2^{p\cdot m}}$ to emphasize the underlying Hilbert space. For a canonical basis subspace $V \subseteq \bF_{2^p}^m$, we write as the (normalized) state $\left(\ket{\text{ME}_{2^p}}^{\otimes m}_{A_1 B_1} \right)_V := \sum_{x \in V} \ket{x} \otimes \ket{x}$.

For $j \in [k]$,  $s_{\leq j}, t_{\leq j} \in V_{\leq j}$ and $r_{< j} \in V_{< j}$, and $P \in \{A,B\}$, define
\begin{align*}
	\left(\rho^{p,Z}_{[\decidertypedfunction^P_{\leq j}(s_{\leq j})| t_{\leq j}]}\right)_{V_{\leq j}} &:= \left(\rho^{p,Z}_{[\decidertypedfunction^P_{0,0}\left(s_{\leq j}\right)_0|\left(t_{\leq j}\right)_0]}\right)_{V_0} \otimes \left(\bigotimes_{1 \leq  i \leq j} \left( \rho^{p,Z}_{\left[\decidertypedfunction^P_{i,\left( t_{\leq j} \right)_{<i-1}} \left((s_{\leq j})_{i} \right)|\left(t_{\leq j}\right)_i \right]} \right)_{V_j} \right), \\
	\left(\rho^{p,X}_{[\left(\decidertypedfunction^P\right)^{\bot}_{\leq j, r_{< j}}(s_{\leq j})| t_{\leq j}]}\right)_{V_{\leq j}} &:= \left(\rho^{p,X}_{\left[\left(\decidertypedfunction^P_{0,0}\right)^{\bot}\left(s_{\leq j}\right)_0|\left(t_{\leq j}\right)_0 \right]}\right)_{V_0}  \otimes \left(\bigotimes_{1 \leq  i < j} \left( \rho^{p,X}_{\left[\left(\decidertypedfunction^P_{i,\left( r_{< j} \right)_{<i-1}}\right)^{\bot} \left((s_{\leq j})_{i} \right)|\left(t_{\leq j}\right)_i \right]} \right)_{V_j}\right), 
\end{align*} 
where 
\begin{equation*}
	s_{\leq j} = \sum_{i \in [j]} \left(	s_{\leq j} \right)_i  \in \bigoplus_{i \in [j]} V_i, \, t_{\leq j} = \sum_{i \in [j]} \left(	t_{\leq j} \right)_i \in \bigoplus_{i \in [j]} V_i, \, \text{ and } \, r_{< j} = \sum_{i \in [j-1]} \left(r_{< j} \right)_i \in \bigoplus_{i \in [j-1]} V_i. 
\end{equation*}
By definition, for $s, t \in \bF_{2^p}^m$, $\left(\rho^{p,Z}_{[\decidertypedfunction^P_{< k}(s)| t]}\right) = \left(\rho^{p,Z}_{[\decidertypedfunction^P(s)| t]}\right)$. Base on the synchronicity condition of~\Cref{fig:Introspectionverification}, we have the following claim.
\begin{claim}[Consistency of measurement output] \label{claim:QRsoundsyncmeasurement}
	For all $x \in \cX^{\text{Intro}}$ where $x$ is an introspection question, $P^{x}_a \approx_{O(\poly(k, \delta_1))} (P^{x}_a)^{op}$ over the state $\ket{\text{ME}_{2^{p}}}^{\otimes m}_{A_1 A_2} \otimes \ket{\text{Aux}}_{\alicealg}$. 
\end{claim}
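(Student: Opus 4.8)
\textbf{Proof plan for Claim~\ref{claim:QRsoundsyncmeasurement}.}

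The plan is to exploit the synchronicity check built into the introspection protocol together with the fact that we are working with the symmetric projective strategy $\strategy''$ of~\eqref{eq:QRsoundnessstrategy}. Recall that every vertex in the typed graph $(\decidertypedtype^{\text{Intro}}_k, \decidertypedquestionpair^{\text{Intro}}_k)$ carries a self-loop, and on a self-loop the decision rule of~\Cref{fig:Introspectionverification} forces the two provers to output the same answer. Since $\strategy''$ succeeds in $\cG^{\text{Intro}}$ with probability at least $1-\delta_1$, and since by the sampling procedure of~\Cref{fig:Introspectionsample} the self-loop on any fixed introspection vertex $x$ is selected with probability $\Omega(1/k)$ (there are $O(k)$ vertices, and within the rejection-sampling step a self-loop of a fixed vertex has constant weight among all edges incident to it), the strategy must win the synchronicity test at $x$ with probability at least $1 - O(k\,\delta_1)$. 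Unwinding the definition of winning the self-loop test, this says
\begin{equation*}
	\sum_{a} \braket{\psi|P^x_a (P^x_a)^{op}|\psi} \geq 1 - O(k\,\delta_1),
\end{equation*}
where here $\ket{\psi}$ is the state in~\eqref{eq:QRsoundnessstrategy}, equivalently $\ket{\text{ME}_{2^p}}^{\otimes m}_{A_1 B_1} \otimes \ket{\text{Aux}}_{\alicealg}$.

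The next step is to convert this ``consistency'' bound (a $\simeq$-type statement) into the ``closeness'' statement $P^x_a \approx_{O(\poly(k,\delta_1))} (P^x_a)^{op}$ that the claim asserts. Here I would invoke~\Cref{lem:closetodistance}: the measurements $\{P^x_a\}$ lie in $\alicealg$ (as registers $A_1$ together with the commuting-operator part) and $\{(P^x_a)^{op}\}$ lie in $\alicealg'$, so they commute; moreover $\{P^x_a\}$ is a PVM since $\strategy''$ is a projective strategy, and hence so is $\{(P^x_a)^{op}\}$. Thus item~2 of~\Cref{lem:closetodistance} applies directly and converts $P^x_a \simeq_{O(k\delta_1)} (P^x_a)^{op}$ into $P^x_a \approx_{O(k\delta_1)} (P^x_a)^{op}$ over the state $\ket{\psi}$, which since $O(k\delta_1) = O(\poly(k,\delta_1))$ gives exactly the stated bound. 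One small bookkeeping point: the claim states the approximation over $\ket{\text{ME}_{2^p}}^{\otimes m}_{A_1 A_2} \otimes \ket{\text{Aux}}_{\alicealg}$, which is precisely the state $\ket{\psi}$ of $\strategy''$ after identifying $A_2$ with part of the $\alicealg$ register per the convention set up at the start of~\Cref{sec:QRsoundness}, so there is nothing further to check there.

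The only genuine obstacle is the first step: being careful about the probability weight of the self-loop at a fixed introspection vertex. Because the input distribution uses rejection sampling on $\decidertypedquestionpair^{\text{Intro}}_k$ and the graph has both the Pauli-basis subgraph (on $O(s^{\text{SB}}_{n^\alpha})$ effective vertices, but with constant degree) and the introspection vertices (with $O(k)$ of them), one needs that conditioned on landing in the introspection part of the graph the self-loop of a given vertex still has weight $\Omega(1/\poly(k))$; this is where the $\poly(k)$ rather than linear dependence in the error could enter, but it is harmless for the statement as written since any polynomial factor of $k$ is absorbed into $O(\poly(k,\delta_1))$. Everything else is a direct application of the two cited lemmas, so I do not anticipate any further difficulty.
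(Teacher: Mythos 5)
Your argument matches the paper's proof essentially step for step: bound the failure probability of the self-loop $(x,x)$ by exploiting that it is sampled with inverse-polynomial-in-$k$ probability, read off the consistency bound $\sum_a \braket{\psi|P^x_a (P^x_a)^{op}|\psi} \geq 1 - O(\poly(k)\delta_1)$, and then invoke \Cref{lem:closetodistance} to pass from $\simeq$ to $\approx$.

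One slip worth correcting: to convert the consistency statement $P^x_a \simeq (P^x_a)^{op}$ into the closeness statement $P^x_a \approx (P^x_a)^{op}$ you need the \emph{first} bullet of \Cref{lem:closetodistance} (which only requires commutation, automatic here since $P^x_a \in \alicealg$ and $(P^x_a)^{op} \in \alicealg'$), not the second bullet as you wrote. The second bullet goes the opposite direction ($\approx$ plus both-PVM $\Rightarrow$ $\simeq$); your observation that the measurements are projective is therefore not needed for this direction, though it is harmless. With that reference fixed, the argument is correct and coincides with the paper's.
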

\begin{proof}
	Fix an introspection question $x$, given a question pair sampled from $\mu^{\text{Intro}}$, there is a $O(\frac{1}{k^2})$ probability that the sampled question pair is $(x,x)$. Since $\strategy''$ is a strategy for $\cG^{\text{Intro}}$ which succeed with probability $1 - \delta_1$, this implies that $P^{x}_a \simeq_{O(\poly(k, \delta_1))} (P^{x}_a)^{op}$ by the ``synchronicity" clause given by~\Cref{fig:Introspectionverification}. The claim then follows from~\Cref{lem:closetodistance} to convert between $\simeq$ distance to $\approx$ distance.
\end{proof}

Based on the verification procedure given in~\Cref{fig:Introspectionverification}, we have the following claim about the approximation related to the Pauli $X$ measurement

\begin{claim}[Approximation of strategies related to the Pauli $X$ measurement] \label{claim:QRsoundXmeasure}
	Let $P \in \{A,B\}$ and $0 < j < k$, then 
	\allowdisplaybreaks{
	\begin{align} 
		\label{eq:QRproofX-1} P^{\text{Hide}_0, \, \decidertypedfunction^P}_{t^{\bot}_{\leq 0}, r_{> 0}} &\approx_{O(k, \delta_1)} \left(\rho^{p,X}_{[\left(\decidertypedfunction^P_{0,0}\right)^{\bot}(s_0)| t^{\bot}_{\leq 0}]}\right)_{V_0} \otimes  \left(\rho^{p,X}_{s_{> 0} = r_{> 0}}\right)_{_{V_{>0}}} \otimes \cI_{\alicealg}, 
	\end{align}
	where we partition the measurement outcome for $\rho^{p, X}$ as $\sum_{i \in [k]} s_i \in \bigoplus_{i \in [k]} V_i$ and $s_{> 0} \in  V_{> 0}$ in the above equation. 
}

\end{claim}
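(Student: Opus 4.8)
\textbf{Proof proposal for Claim~\ref{claim:QRsoundXmeasure}.}

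The plan is to extract the approximation~\eqref{eq:QRproofX-1} from three sources of constraints that the strategy $\strategy''$ is forced to satisfy because it wins $\cG^{\text{Intro}}$ with probability $1-\delta_1$: the ``Hiding test'' edge $(\text{Gen Pauli},X)$ \textbf{--} $(\text{Hide}_0,\decidertypedfunction^P)$ from~\Cref{fig:Introspectionverification}, the ``Synchronicity/EPR test'' edge $(\text{Pauli},X)$ \textbf{--} $(\text{Gen Pauli},X)$, and the consistency of the measurement output from~\Cref{claim:QRsoundsyncmeasurement}. First I would note that each of these edges is sampled with probability $\Theta(1/k^2)$ under the rejection-sampled typed distribution, so winning $\cG^{\text{Intro}}$ with probability $1-\delta_1$ implies that the corresponding local consistency conditions hold up to error $O(\poly(k,\delta_1))$; these translate, via~\Cref{lem:QRsoundnessproofLem3} and~\Cref{lem:closetodistance}, into $\approx$-approximation statements between the data-processed Pauli measurements and the provers' strategy measurements on the relevant registers.

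Concretely, the $(\text{Gen Pauli},X)$ \textbf{--} $(\text{Hide}_0,\decidertypedfunction^P)$ decision rule asks that the prover holding $(\text{Gen Pauli},X)$ and the prover holding $(\text{Hide}_0,\decidertypedfunction^P)$ produce consistent values: writing $s_X = (s_X)_0 + (s_X)_0^C + (s_X)_{>0}$, the Hide prover must output $(s_X)_0^C = t_{\leq 0}^\bot$ and $(s_X)_{>0}=r_{>0}$. Using~\Cref{claim:QRsoundsyncmeasurement} to replace $P^{(\text{Gen Pauli},X)}$ by its opposite, then~\Cref{lem:QRsoundnessproofLem3} on the winning condition, I would conclude
\[
P^{\text{Hide}_0,\decidertypedfunction^P}_{t^\bot_{\leq 0}, r_{>0}} \ \approx_{O(\poly(k,\delta_1))}\ \bigl(P^{(\text{Gen Pauli},X)}_{s}\bigr)^{op}_{[\cdot \,|\, t^\bot_{\leq 0}, r_{>0}]},
\]
where the right side is the appropriate data-processed version of the Gen Pauli $X$ measurement. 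Then I would invoke~\eqref{eq:QRsoundnessapproxgenpauli}, which says $P^{(\text{Gen Pauli},X)}_s \approx_{O(\poly(k,\eps))} \rho^{p,X}_s$, together with~\Cref{lem:combmeasure} to push the data-processing function $s \mapsto ((s)_0^C, (s)_{>0})$ through the approximation. Because $V_0$ is a canonical basis subspace, the generalized Pauli $X$ measurement tensor-factorizes as $\rho^{p,X}_s = \bigotimes_{i\in[k]}(\rho^{p,X}_{s_i})_{V_i}$, and the data-processed version on the register $V_0$ recording $(s_0)^C$ (the canonical-complement coordinate of $\ker(\decidertypedfunction^P_{0,0})$) is exactly $(\rho^{p,X}_{[(\decidertypedfunction^P_{0,0})^\bot(s_0)\,|\,t^\bot_{\leq 0}]})_{V_0}$ by the definition of $(\decidertypedfunction^P)^\bot$ from~\Cref{sec:fieldintro}; on $V_{>0}$ it is simply $(\rho^{p,X}_{s_{>0}=r_{>0}})_{V_{>0}}$. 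The $\alicealg$-register carries the identity since the Gen Pauli measurement acts trivially there. Triangle-inequality bookkeeping ($O(\poly(k,\delta_1)) + O(\poly(k,\eps)) = O(\poly(k,\delta_1))$, since $\delta_1 \geq \eps$) then yields~\eqref{eq:QRproofX-1}.

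The main obstacle I anticipate is the careful translation between the $\simeq$ (consistency) and $\approx$ (closeness) distances while pushing measurements through tensor factors and data-processing maps, since, as noted after~\Cref{fig:tracialemdtofd} and in the Data-processing fact, $\approx$ is not preserved under arbitrary data processing — one needs the PVM hypotheses of~\Cref{lem:closetodistance} and the POVM-combination structure of~\Cref{lem:combmeasure} to be satisfied at each step. In particular, I would need to be careful that the Gen Pauli measurements are genuinely projective (they are, being eigenspace projections of generalized Pauli observables, cf.~\eqref{eq:genpaulitoobser}) so that converting $\simeq$ to $\approx$ loses only a square root, and that the opposite-algebra bookkeeping from~\Cref{claim:QRsoundsyncmeasurement} respects the tensor-factor structure. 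Once~\eqref{eq:QRproofX-1} is established, the analogous higher-level statements for $(\text{Hide}_j,\decidertypedfunction^P)$ and $(\text{Read},\decidertypedfunction^P)$ (which the full claim presumably also asserts) follow by an identical argument applied inductively along the chain of Hide edges, using~\Cref{lem:linePaulicommutation} to handle the commutation of the Pauli $Z$ ``line'' measurements with the Pauli $X$ ``orthogonal'' measurements at each level.
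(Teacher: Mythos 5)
Your proposal is correct and follows essentially the same route as the paper: extract a $\simeq$-consistency from the $(\text{Gen Pauli},X)$ \textbf{--} $(\text{Hide}_0,\decidertypedfunction^P)$ winning condition, substitute the genuine Pauli $X$ via~\eqref{eq:QRsoundnessapproxgenpauli}, convert $\simeq$ to $\approx$ using~\Cref{lem:closetodistance} together with~\Cref{claim:QRsoundsyncmeasurement} and the triangle inequality, and finish by unfolding the $\ker \oplus \ker^C$ decomposition of $V_0$ via the definition of $(\decidertypedfunction^P_{0,0})^\bot$ and projectivity of $\rho^{p,X}$. The paper's proof of this particular claim does not need~\Cref{lem:QRsoundnessproofLem3} or~\Cref{lem:combmeasure} (those enter in the inductive $\text{Hide}_j$/$\text{Read}$ chain of~\Cref{claim:structureHideandRead}, not in the base case), so those citations are redundant here, but they do not harm the argument.
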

\begin{proof}
	 Fix $P \in \{A,B\}$, since the question pair (Gen Pauli, X) \textbf{--} ($\text{Hide}_0$, $\decidertypedfunction^{P}$) is sampled with probability $O(k)$ from the distribution $\mu^{\text{Intro}}$, and $\strategy''$ is a strategy for $\cG^{\text{Intro}}$ which succeed with probability $1 - \delta_1$, combining with~\eqref{eq:QRsoundnessapproxgenpauli}
	\begin{equation*}
		P^{\text{Hide}_0, \decidertypedfunction}_{(t^{\bot}_{\leq 0}, r_{> 0})} \simeq_{O(k, \delta_1)} \left(\rho^{p, X}_{s_0 = t^{\bot}_{\leq 0}, s_{> 0} =  r_{> 0}} \right)^{op},
	\end{equation*}
	where $s = s_0 + s_0^{C} +s_{> 0} = \ker{\decidertypedfunction^{P}_{0,0}} \oplus \ker{\decidertypedfunction^{P}_{0,0}}^{C} \oplus V_{> 0}$ according to the verification procedure from~\Cref{fig:Introspectionverification}. By applying~\Cref{lem:closetodistance}, ~\Cref{claim:QRsoundsyncmeasurement} and the triangle inequality for $\approx$ distance,
	\begin{equation*}
		P^{\text{Hide}_0, \decidertypedfunction}_{(t^{\bot}_{\leq 0}, r_{> 0})} \approx_{O(k, \delta_1)} \rho^{p, X}_{s_0 = t^{\bot}_{\leq 0}, s_{> 0} =  r_{> 0}} \otimes \cI_{\alicealg}.
	\end{equation*}
	Finally, by the definition of $\left(\decidertypedfunction^P_{0,0}\right)^{\bot}$, and $ \rho^{p, X}$ is projective, we obtain~\eqref{eq:QRproofX-1}.
\end{proof}

Similarly, we have the following claim about the approximation about the Pauli $Z$ measurement. 

\begin{claim}[Approximation of strategies related to the Pauli $Z$ measurement] \label{claim:QRsoundZmeasure}
	For every $P \in \{A,B\}$ and $1 \leq j < k$ the following hold
	\begin{align}
		\label{eq:QRproofZ-1} P^{\text{Sample}, \, \decidertypedfunction^P}_{s_{\text{sample}}} &\approx_{O(\poly(k, \delta_1))}  \rho^{p,Z}_{s_{\text{sample}}} \otimes \cI_{\alicealg}, \\
		\label{eq:QRproofZ-2} P^{\text{Intro}, \, \decidertypedfunction^P}_{x_P, a_P}   &\approx_{O(\poly(k, \delta_1))}  \rho^{p,Z}_{[\decidertypedfunction^P(s)| x_P]}  \otimes \cI_{\alicealg}, \\
		\label{eq:QRproofZ-3} P^{\text{Read}, \, \decidertypedfunction^P}_{t_{\text{Read}}^{\text{Line}}} &\approx_{O(\poly(k, \delta_1))}  \rho^{p,Z}_{[\decidertypedfunction^P(s)| t_{\text{Read}}^{\text{Line}}]} \otimes \cI_{\alicealg}, \\
		\label{eq:QRproofZ-4} P^{\text{Hide}_j, \, \decidertypedfunction^P}_{t_{< j}^{\text{Line}}} &\approx_{O(\poly(k, \delta_1))} \left(\rho^{p,Z}_{[\decidertypedfunction^P_{< j-1}(s_{< j})| t_{< j}^{\text{Line}}]}\right)_{V_{< j}} \otimes \cI_{V_{\geq j}} \otimes \cI_{\alicealg}.  
	\end{align}

\end{claim}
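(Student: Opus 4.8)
\textbf{Proof plan for Claim~\ref{claim:QRsoundZmeasure}.}

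The plan is to establish the four approximations in exactly the order they are listed, since each one feeds into the next through the chain of ``Hide'' question pairs described in Figure~\ref{fig:Introspectionverification}. The starting point for each is the same template already used in the proof of Claim~\ref{claim:QRsoundXmeasure}: a given question pair in the introspection protocol is sampled with probability $O(1/k)$ (or $O(1/k^2)$ in the consistency case), so a $(1-\delta_1)$-winning strategy forces an $\simeq_{O(\poly(k,\delta_1))}$ relation between the relevant measurement on one prover's side and its opposite on the other; then \Cref{lem:closetodistance} converts $\simeq$ to $\approx$ (all the operators involved are PVMs by the orthogonalization built into $\strategy''$), \Cref{claim:QRsoundsyncmeasurement} replaces the $op$ version by the un-flipped version on the $A_1$/$A_2$ register, and the triangle inequality for $\approx$ absorbs the errors. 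First I would do \eqref{eq:QRproofZ-1}: the pair $(\text{Gen Pauli},Z)$ \textbf{--} $(\text{Sample},\decidertypedfunction^P)$ wins iff $s_Z = s_{\text{Sample}}$, so combining with the approximation $\rho^{p,Z}_s \approx_{O(\poly(k,\eps))} P^{(\text{Gen Pauli},Z)}_s$ from \eqref{eq:QRsoundnessapproxgenpauli} gives $P^{\text{Sample},\decidertypedfunction^P}_{s_{\text{sample}}} \approx_{O(\poly(k,\delta_1))} \rho^{p,Z}_{s_{\text{sample}}} \otimes \cI_{\alicealg}$ directly.

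For \eqref{eq:QRproofZ-2}, the pair $(\text{Sample},\decidertypedfunction^P)$ \textbf{--} $(\text{Intro},\decidertypedfunction^P)$ wins iff $\decidertypedfunction^P(s_{\text{Sample}}) = x_P$, so $P^{\text{Intro},\decidertypedfunction^P}_{x_P} \approx_{O(\poly(k,\delta_1))} (P^{\text{Sample},\decidertypedfunction^P}_{[\decidertypedfunction^P|x_P]})^{op}$; applying \eqref{eq:QRproofZ-1} together with the data-processing fact and \Cref{claim:QRsoundsyncmeasurement} gives $P^{\text{Intro},\decidertypedfunction^P}_{x_P} \approx_{O(\poly(k,\delta_1))} \rho^{p,Z}_{[\decidertypedfunction^P(s)|x_P]}\otimes\cI_{\alicealg}$ (the answer label $a_P$ is summed over at this stage). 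Then \eqref{eq:QRproofZ-3} follows by the same argument applied to the pair $(\text{Read},\decidertypedfunction^P)$ \textbf{--} $(\text{Intro},\decidertypedfunction^P)$, which wins iff $t^{\text{Line}}_{\text{Read},P} = x_P$, chained against \eqref{eq:QRproofZ-2}. Finally \eqref{eq:QRproofZ-4} is obtained by a downward induction on $j$ from $k-1$: the base case $j=k-1$ comes from the pair $(\text{Hide}_{k-1},\decidertypedfunction^P)$ \textbf{--} $(\text{Read},\decidertypedfunction^P)$, which forces $t_{\text{Read},P} = t_{k-1}$ and $t^{\bot}_{\text{Read},P}=t^{\bot}_{k-1}$, and combined with \eqref{eq:QRproofZ-3} yields the claim for $j=k-1$; the inductive step uses the pair $(\text{Hide}_j,\decidertypedfunction^P)$ \textbf{--} $(\text{Hide}_{j+1},\decidertypedfunction^P)$, whose winning condition matches the $Z$-components $\tilde t^{\text{Line}}_{<j} = t^{\text{Line}}_{<j}$ and transfers the approximation one level down, keeping the $X$-components and $r$-components traced out (identity on $V_{\geq j}$).

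The main obstacle I anticipate is bookkeeping rather than anything conceptual: each ``Hide'' measurement $P^{\text{Hide}_j,\decidertypedfunction^P}$ outputs the triple $(t^{\text{Line}}_{<j}, t^{\bot}_{\leq j}, r_{>j})$, so when I pass between levels I must carefully split the register $\bF_{2^p}^m$ into $V_{<j} \oplus V_j \oplus V_{>j}$, track that the $Z$-data lives on $V_{<j}$ while the $X$-data lives on $V_{\leq j}$ and the free (noise) part on $V_{>j}$, and make sure the kernel/canonical-complement decompositions $\ker(\decidertypedfunction^P_{j,\cdot})$ and $\ker(\decidertypedfunction^P_{j,\cdot})^{C}$ used in Figure~\ref{fig:Introspectionverification} are consistent with the generalized-Pauli twirl structure of \Cref{lem:linePaulicommutation} and \Cref{lem:PauliTwirl}. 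The subtlety is that the $j$-th level linear function $\decidertypedfunction^P_{j,\cdot}$ depends on the measurement outcome of the previous levels, so the approximations must be stated as data-processed measurements $\rho^{p,Z}_{[\decidertypedfunction^P_{<j-1}|\cdot]}$ in the sense of \eqref{eq:introcompdata}, and the error from each level accumulates --- but since $k$ is a constant and each step costs only $O(\poly(k,\delta_1))$, the total remains $O(\poly(k,\delta_1))$ as claimed. Once Claim~\ref{claim:QRsoundZmeasure} and Claim~\ref{claim:QRsoundXmeasure} are in hand, the remaining soundness argument proceeds by invoking \Cref{lem:PauliTwirl} to extract POVMs $\{M^{x,t}_a\}$ from the ``Intro'' measurements and reconstructing the strategy for $\cG$, exactly as in~\cite[Section 8.4.3]{jiMIPRE2022a} via the translation chart of Figure~\ref{fig:tracialemdtofd}.
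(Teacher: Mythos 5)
Your plan matches the paper's own sketch essentially line for line: \eqref{eq:QRproofZ-1} from the $(\text{Gen Pauli},Z)$~\textbf{--}~$(\text{Sample},\decidertypedfunction^P)$ pair combined with~\eqref{eq:QRsoundnessapproxgenpauli}, \eqref{eq:QRproofZ-2} by chaining through $(\text{Sample})$~\textbf{--}~$(\text{Intro})$, \eqref{eq:QRproofZ-3} through $(\text{Read})$~\textbf{--}~$(\text{Intro})$, and \eqref{eq:QRproofZ-4} by downward induction on $j$ with base case $j=k-1$ from $(\text{Hide}_{k-1})$~\textbf{--}~$(\text{Read})$ and inductive step from the $(\text{Hide}_j)$~\textbf{--}~$(\text{Hide}_{j+1})$ winning condition, all run through the same $\simeq\!\to\!\approx$/synchronicity/triangle-inequality template established for~\eqref{eq:QRproofX-1}. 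The only thing the paper names that you leave implicit is that the inductive step formally invokes \Cref{lem:QRsoundnessproofLem3} to extract the $\approx$ relation from the $(\text{Hide}_j)$~\textbf{--}~$(\text{Hide}_{j+1})$ consistency check, but your description of ``transferring the approximation one level down'' is exactly what that lemma does, so this is a correct proof taking the same route.
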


\begin{proof}
	Since, the proof for each approximation follows a similar structure as~\eqref{eq:QRproofX-1} from~\Cref{claim:QRsoundXmeasure}, we only give a rough sketch for the proof for each of the equation below.
	\begin{itemize}
		\item \Cref{eq:QRproofZ-1} follows from the verification procedure for (Gen Pauli, $Z$) \textbf{--} $(\text{Sample}, \decidertypedfunction^{P})$ question pair from~\Cref{fig:Introspectionverification}. 
		\item \Cref{eq:QRproofZ-2} follows from the verification procedure for $(\text{Sample}, \decidertypedfunction^{P})$ \textbf{--} $(\text{Intro},  \decidertypedfunction^{P})$ question pair, and applying the triangle inequality of $\approx$ distance to ~\Cref{eq:QRproofZ-1}. 
		\item \Cref{eq:QRproofZ-3} follows from the verification procedure for $(\text{Read},  \decidertypedfunction^{P})$ \textbf{--}  $(\text{Intro},  \decidertypedfunction^{P})$ question pair,  and applying the triangle inequality of $\approx$ distance to ~\Cref{eq:QRproofZ-2}. 
		\item For~\Cref{eq:QRproofZ-4} in the case where $j  = k-1$. The equation follows from the verification procedure for  $(\text{Hide}_{k-1}, \decidertypedfunction^{P})$ \textbf{--}   $(\text{Read},  \decidertypedfunction^{P})$ question pair. 
		\item For~\Cref{eq:QRproofZ-4} in the case where $0<j  < k-1$. The equation follows from an inductive proof using j = k-1 as the base case, and the inductive step follows from the verification procedure for $(\text{Hide}_{i}, \decidertypedfunction^{P})$ \textbf{--}  $(\text{Hide}_{i+1}, \decidertypedfunction^{P})$ and~\Cref{lem:QRsoundnessproofLem3}.
	\end{itemize}
\end{proof}

Base on the commutation with the Pauli-$X$ and Pauli-$Z$ measurement, we conclude the following approximation relationship related to the $P^{\text{Hide}_j, \, \decidertypedfunction^P}$ and $P^{\text{Read}, \, \decidertypedfunction^P}$. Since the proof of the below claim is almost identical to~\cite[Lemma 8.22, Lemma 8.23]{jiMIPRE2022a}, except we use~\Cref{claim:QRsoundsyncmeasurement} to swap the measurement operator to one register.
\begin{claim} \label{claim:structureHideandRead}
	For $P \in \{A,B\}$ and all $j \in [k]$, 
	\begin{align}
		\label{eq:QRHidevariable} P^{\text{Hide}_j, \, \decidertypedfunction^P}_{t^{\text{Line}}_{< j} ,t^{\bot}_{\leq j}, r_{> j}} &\approx_{O(\poly(k, \delta_1))}  \left(\rho^{p,X}_{[\left(\decidertypedfunction^P\right)^{\bot}_{< j, t^{\text{Line}}_{< j}}(s_{\leq j})| t^{\bot}_{\leq j}]} \cdot \rho^{p,Z}_{[\decidertypedfunction^P_{< j-1}(s_{< j})| t_{< j}^{\text{Line}}]}\right)_{V_{\leq j}} \otimes  \left(\rho^{p,X}_{s_{> j} = r_{> j} }\right)_{V_{> j}} \otimes \cI_{\alicealg}, \\
		\label{eq:QRReadvariable}  P^{\text{Read}, \decidertypedfunction^P}_{t_{\text{Read}, P}^{\bot},t_{\text{Read}, P}^{\text{Line}}} &\approx_{O(\poly(k, \delta_1))}  \left(\rho^{p,X}_{\left[\left(\decidertypedfunction^P\right)^{\bot}_{\leq k, t_{\text{Read}, P}^{\text{Line}} }(s)| t_{\text{Read}, P}^{\bot} \right]} \cdot \rho^{p,Z}_{[\decidertypedfunction^P(s)| t_{\text{Read}}^{\text{Line}}]} \right)_{V} \otimes \cI_{\alicealg}. 
	\end{align}
\end{claim}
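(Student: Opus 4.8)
\textbf{Proof plan for Claim~\ref{claim:structureHideandRead}.}

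The plan is to build the structural statements for $P^{\text{Hide}_j,\decidertypedfunction^P}$ and $P^{\text{Read},\decidertypedfunction^P}$ out of the three ingredients already established: the approximation of these operators by pure generalized Pauli measurements (Claims~\ref{claim:QRsoundXmeasure} and~\ref{claim:QRsoundZmeasure}), the consistency of the measurement outputs across the two provers' registers (Claim~\ref{claim:QRsoundsyncmeasurement}), and the commutation machinery (Lemmas~\ref{lem:QRsoundnessproofcomm1}, \ref{lem:QRsoundnessproofcomm2}, and~\ref{lem:PauliTwirl}). Concretely, I would mimic the argument of~\cite[Lemma 8.22, Lemma 8.23]{jiMIPRE2022a}, but replace each appeal to the finite-dimensional ``swap one operator to the other tensor factor'' step by an appeal to Claim~\ref{claim:QRsoundsyncmeasurement} (which says $P^x_a \approx (P^x_a)^{op}$ for introspection questions), using the translation chart in~\Cref{fig:tracialemdtofd} to read the resulting inequalities in the tracially embeddable setting.

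For~\eqref{eq:QRHidevariable}, the key step is to show that, on average over the relevant distribution, the measurement $P^{\text{Hide}_j,\decidertypedfunction^P}$ jointly outputs the $Z$-component $t^{\text{Line}}_{<j}$ (coming from $\rho^{p,Z}_{[\decidertypedfunction^P_{<j-1}(s_{<j})| t^{\text{Line}}_{<j}]}$) and the $X$-component $t^{\bot}_{\leq j}$ (coming from $\rho^{p,X}_{[(\decidertypedfunction^P)^{\bot}_{<j,t^{\text{Line}}_{<j}}(s_{\leq j})| t^{\bot}_{\leq j}]}$), together with the leftover $X$-measurement $\rho^{p,X}_{s_{>j}=r_{>j}}$ on $V_{>j}$. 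First I would record the $X$-side structure from~\eqref{eq:QRproofX-1} (and its straightforward level-$j$ analogue, obtained by induction on $j$ exactly as~\eqref{eq:QRproofZ-4} is obtained, using~\Cref{lem:QRsoundnessproofLem3} on the question pair $(\text{Hide}_i,\decidertypedfunction^P)$--$(\text{Hide}_{i+1},\decidertypedfunction^P)$) and the $Z$-side structure from~\eqref{eq:QRproofZ-4}. Then, since the $X$-measurement $\rho^{p,X}_{[(\decidertypedfunction^P)^{\bot}|\cdot]}$ and the $Z$-measurement $\rho^{p,Z}_{[\decidertypedfunction^P|\cdot]}$ pairwise commute by~\Cref{lem:linePaulicommutation} (after checking the kernel-containment hypothesis, which holds by construction of $\decidertypedfunction^{\bot}$), I would combine the two approximations into a single product measurement. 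The care needed here is that the $Z$-line variable $t^{\text{Line}}_{<j}$ is itself the ``conditioning'' for the $X$-measurement, so I must first condition on $t^{\text{Line}}_{<j}$, apply~\Cref{lem:PauliTwirl} (the Pauli twirl decomposition) to pull the $\rho^{p,Z}_{[\decidertypedfunction^P|t]}$ factor out of $P^{\text{Hide}_j}$, and only then glue on the $X$-part. This is where the bulk of the work lies, and it follows the $\MIP^*=\RE$ proof verbatim up to the notational translation.

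Equation~\eqref{eq:QRReadvariable} is the ``$j=k$'' version of~\eqref{eq:QRHidevariable}: the Read question has no leftover $V_{>k}$ register, so the argument is the same with $V_{\leq k}=V$ and no trailing $\rho^{p,X}_{s_{>k}}$ factor. I would derive it either directly by repeating the above with the $(\text{Hide}_{k-1},\decidertypedfunction^P)$--$(\text{Read},\decidertypedfunction^P)$ verification check feeding in the base case, or (cleaner) by first establishing~\eqref{eq:QRReadvariable} as the base case and then running the induction downward through the Hide levels to get~\eqref{eq:QRHidevariable} for all $j$; the two orderings are interchangeable since each Hide-to-Hide and Hide-to-Read consistency check contributes only an $O(k,\delta_1)$ error, and there are at most $k$ of them, so the total error stays $O(\poly(k,\delta_1))$.

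The main obstacle I anticipate is bookkeeping rather than conceptual: one must be careful that the generalized Pauli measurements $\rho^{p,X}_{[(\decidertypedfunction^P)^{\bot}|\cdot]}$ and $\rho^{p,Z}_{[\decidertypedfunction^P|\cdot]}$ appearing in the product are measured on the correct nested canonical subspaces $V_{\leq j}$ (versus $V_{>j}$, or the kernel/complement splittings inside each $V_i$), and that the commutation lemma~\Cref{lem:linePaulicommutation} is applied only where its kernel-containment hypothesis is genuinely satisfied — which is exactly the point of defining $\decidertypedfunction^{\bot}$ as the projection onto $(\ker(\decidertypedfunction)^{\bot})^C$ so that $\ker(L)^{\bot}=\ker(L^{\bot})$. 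Once these subspace assignments are tracked correctly, every individual step is an instance of a lemma already in hand, and the error accumulation is a geometric-style sum of $k$ terms each $O(\poly(\eps))$.
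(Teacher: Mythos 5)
Your overall architecture matches the paper's: induct on the Hide level, seed the induction with \eqref{eq:QRproofX-1}, use the $(\text{Hide}_i,\decidertypedfunction^P)$--$(\text{Hide}_{i+1},\decidertypedfunction^P)$ consistency check via Lemma~\ref{lem:QRsoundnessproofLem3}, swap operators between registers via Claim~\ref{claim:QRsoundsyncmeasurement} and Lemma~\ref{lem:closetodistance}, and track an $O(\poly(k,\delta_1))$ error over $O(k)$ links. Going down from Read instead of up from $\text{Hide}_0$ is a harmless reordering. However, there is one concrete misstep: the appeal to Lemma~\ref{lem:PauliTwirl} to ``pull out'' the $\rho^{p,Z}$ factor is not what the paper does, and it is not the right tool here.

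The Pauli twirl lemma exists to extract an \emph{unknown} residual POVM from a measurement whose $Z$-marginal is approximately the data-processed Pauli $Z$; it returns an uncharacterized $\{M^{x,t}_a\}$. In Claim~\ref{claim:structureHideandRead} the target right-hand side is an \emph{explicit} product of a $Z$-Pauli and an $X$-Pauli measurement, so if you used the twirl you would still be left with the job of identifying the residual $M^{x,t}_a$ with the $X$-product — an extra argument you haven't supplied, and one that would also run into a subspace mismatch (the twirl yields a residual on $\cB(\bC_{V_x\setminus W_x})\otimes\alicealg$, whereas the $X$-part here spans both the within-$V_{\leq j}$ kernel directions and the entire $V_{>j}$ block). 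The paper instead multiplies the two Pauli approximations together directly: the inductive hypothesis gives the Pauli form of $P^{\text{Hide}_i}$, equation~\eqref{eq:QRproofZ-4} gives the $Z$-marginal of $P^{\text{Hide}_{i+1}}$, and Lemma~\ref{lem:combmeasure} together with the triangle inequality for $\approx$ lets you compose these one-sided approximations (after an op-swap) into the full product form at level $i+1$. The nested conditioning of the $X$-measurement on $t^{\text{Line}}_{<j}$ that you flag is handled by the definitions of $\rho^{p,Z}_{[\decidertypedfunction^P_{<j-1}|\cdot]}$ and $\rho^{p,X}_{[(\decidertypedfunction^P)^\bot_{<j,\cdot}|\cdot]}$ as level-by-level conditioned products, not by a twirl; their mutual commutation via Lemma~\ref{lem:linePaulicommutation} is what makes the explicit product a well-defined PVM and allows the direct multiplication to go through. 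The Pauli twirl is genuinely needed only in the \emph{next} step (Claim~\ref{claim:QRconstructingstrategy}), where the residual game-strategy measurement is not a Pauli operator and must be extracted. With that one correction, the rest of your plan lines up with the paper.
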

\begin{proof}
	We start by showing~\Cref{eq:QRHidevariable}, the proof follows by an inductive argument. For the case where $j = 0$, this precisely follows from~\Cref{eq:QRproofX-1}. For the inductive step, fix $1 \leq i < k$ and assume~\Cref{eq:QRHidevariable} holds for all $0 \leq j \leq i$, we wish to show~\Cref{eq:QRHidevariable} for $i+1$. Since the question pair $(\text{Hide}_i, \,\decidertypedfunction^{P})$ \textbf{--}$(\text{Hide}_{i+1}, \,\decidertypedfunction^{P})$ in $\cG^{\text{Intro}}$ are selected with probability $O(\frac{1}{k})$, by the ``Hiding test" verification procedure given in~\Cref{fig:Introspectionverification}
	\begin{equation*}
		\sum_{\substack{\left(v_{\leq i}, v_{\leq i+1}^{\bot}, u_{> i+1}\right) \\ \quad \in  V_{< i+1} \times V_{\leq i+1} \times V_{> i+1}}} \braket{\psi| P^{\text{Hide}_i, \, \decidertypedfunction^P}_{ \left(\substack{t_{< i}^{\text{line}} = v_{< i},\\
					t_{\leq i}^{\bot} = v_{\leq i}^{\bot}, \\
					\left[ \decidertypedfunction^{\bot}_{i+1, v_{\leq i} } (r_{i})| v_{i+1}^{\bot}, \right] \\
					r_{> i+1} =   u_{> i+1}}. \right) } \cdot \left( P^{\text{Hide}_{i+1}\, \decidertypedfunction^P}_{ v_{\leq i}, v_{\leq i+1}^{\bot},u_{> i+1}}  \right)^{op} |\psi} \geq 1 - O(\poly(k, \delta_1))
	\end{equation*}
	where we partition the variable according to the convention that $v_{\leq i} = v_{< i} + v_i \in V_{< i} \oplus V_{i}$ and likewise with the other variables in the sum. We wish to apply~\Cref{lem:QRsoundnessproofLem3} where the ``A" measurement is $P^{\text{Hide}_i, \, \decidertypedfunction^P}$, the ``B" measurement is $P^{\text{Hide}_{i+1}, \, \decidertypedfunction^P}$, the ``x" variable is $v_{< i}$, the ``y" variable is $v_{i}$ and the ``z" variable are $( v_{\leq i+1}^{\bot}, u_{> i+1})$. By this formulation, for $u_{i} \in V_i$, we can rewrite the ``A" measurement as 
	\begin{equation}
		P^{\text{Hide}_i, \, \decidertypedfunction^P,u_{i}}_{v_{< i}, v_{\leq i}^{\bot},\left[ \decidertypedfunction^{\bot}_{i+1, v_{\leq i} } (u_{i})| v_{i+1}^{\bot}, \right],  u_{> i+1}}
	\end{equation}
	using the summation for the measurement outcome above, where in this case, we divide up the output $r_{> i} = r_{i} + r_{> i+1}$ and write it as the third and fourth output of $P^{\text{Hide}_i, \, \decidertypedfunction^P, v_{i+1}}$. By the inductive hypothesis,
	\begin{align}
			\nonumber &P^{\text{Hide}_i, \, \decidertypedfunction^P, v_{i+1}}_{v_{< i}, v_{\leq i}^{\bot},\left[ \decidertypedfunction^{\bot}_{i+1, v_{\leq i} } (u_{i})| v_{i+1}^{\bot}, \right],  u_{> i+1}} \\
			\nonumber &\approx_{O(\poly(k, \delta_1))} \left(\rho^{p,X}_{[\left(\decidertypedfunction^P\right)^{\bot}_{< i, u_{< i}}(s_{\leq i})| u^{\bot}_{\leq i}]} \cdot \rho^{p,Z}_{[\decidertypedfunction^P_{< i-1}(s_{< i})| t_{< i}]}\right)_{V_{\leq i}} \\
			\label{eq:QRHidevariablepf-1}&\qquad \, \otimes \left(\rho^{p,X}_{\left[ \decidertypedfunction^{\bot}_{i+1, v_{\leq i} } (u_{i})| v_{i+1}^{\bot}, \right]}\right)_{V_{i+1}} \otimes \left(\rho^{p,X}_{s_{> i+1} = r_{> i+1} }\right)_{V_{> i +1}} \otimes \cI_{\alicealg}. 
	\end{align}
	Similarly, since $P^{\text{Hide}_{i+1}, \, \decidertypedfunction^P}$ is projective, we can write the corresponding $B_{x,y}$ as $P^{\text{Hide}_{i+1}, \, \decidertypedfunction^P}_{v_{< i}, v_i} = P^{\text{Hide}_{i+1}, \, \decidertypedfunction^P}_{v_{< i+1}}$. Hence
	\allowdisplaybreaks{
	\begin{align*}
		P^{\text{Hide}_{i+1}\, \decidertypedfunction^P}_{ t_{< i+1}^{\text{Line}},t_{\leq i+1}^{\bot}, r_{> i+1} }  &\approx_{O(\poly(k, \delta_1))} \left( P^{\text{Hide}_{i+1}\, \decidertypedfunction^P}_{ t_{< i+1}^{\text{Line}},t_{\leq i+1}^{\bot}, r_{> i+1} } \right)^{op} \\
		&\approx_{O(\poly(k, \delta_1))}  P^{\text{Hide}_i, \, \decidertypedfunction^P,u_{i}}_{t_{< i}^{\text{Line}},t_{\leq i}^{\bot},\left[ \decidertypedfunction^{\bot}_{i+1, t_{< i+1}^{\text{Line}} } (r_i)| t_{i+1}^{\bot}, \right],  r_{> i+1}} \cdot \left(P^{\text{Hide}_{i+1}, \, \decidertypedfunction^P}_{ t_{< i+1}^{\text{Line}}}\right)^{op} \\
		&\approx_{O(\poly(k, \delta_1))}  \left(\rho^{p,X}_{[\left(\decidertypedfunction^P\right)^{\bot}_{< i, u_{< i}}(s_{\leq i})| u^{\bot}_{\leq i}]} \cdot \rho^{p,Z}_{[\decidertypedfunction^P_{< i-1}(s_{< i})| t_{< i}]}\right)_{V_{\leq i}} \\
		&\qquad \, \otimes \left(\rho^{p,X}_{\left[ \decidertypedfunction^{\bot}_{i+1, v_{\leq i} } (u_{i})| v_{i+1}^{\bot}, \right]}\right)_{V_{i+1}} \otimes \left(\rho^{p,X}_{s_{> i+1} = r_{> i+1} }\right)_{V_{> i +1}} \otimes \cI_{\alicealg} \cdot \left(P^{\text{Hide}_{i+1}, \, \decidertypedfunction^P}_{ t_{< i+1}^{\text{Line}}}\right)^{op} \\
		&\approx_{O(\poly(k, \delta_1))}  \left(\rho^{p,X}_{[\left(\decidertypedfunction^P\right)^{\bot}_{< i, u_{< i}}(s_{\leq i})| u^{\bot}_{\leq i}]} \cdot \rho^{p,Z}_{[\decidertypedfunction^P_{< i-1}(s_{< i})| t_{< i}]}\right)_{V_{\leq i}} \cdot \left(\rho^{p,Z}_{[\decidertypedfunction^P_{< i}(s_{< i+1})| t_{< i+1}^{\text{Line}}]}\right)_{V_{\leq i}}  \\
		&\qquad \, \otimes \left(\rho^{p,X}_{\left[ \decidertypedfunction^{\bot}_{i+1, v_{\leq i} } (u_{i})| v_{i+1}^{\bot}, \right]}\right)_{V_{i+1}} \otimes \left(\rho^{p,X}_{s_{> i+1} = r_{> i+1} } \right)_{V_{> i +1}} \otimes \cI_{\alicealg} \\
		&= \left(\rho^{p,X}_{[\left(\decidertypedfunction^P\right)^{\bot}_{< i+1, t^{\text{Line}}_{< i+1}}(s_{\leq i+1})| t^{\bot}_{\leq i+1}]} \cdot \rho^{p,Z}_{[\decidertypedfunction^P_{< i}(s_{< i+1})| t_{< i+1}^{\text{Line}}]}\right)_{V_{\leq i}} \otimes  \left(\rho^{p,X}_{s_{> i} = r_{> i+1} }\right)_{V_{> i}} \otimes \cI_{\alicealg}
	\end{align*}
}
	where the first inequality follows from~\Cref{claim:QRsoundsyncmeasurement} and~\Cref{lem:closetodistance}, the second line follows from~\Cref{lem:QRsoundnessproofLem3} labelled above. Line 3 follows from applying~\Cref{lem:combmeasure} along with~\eqref{eq:QRHidevariablepf-1}. Line 4 follows from applying~\Cref{lem:combmeasure} along with~\eqref{eq:QRproofZ-4} and~\Cref{claim:QRsoundsyncmeasurement}. The last line follows from the definition of $\decidertypedfunction^P$ and $\left(\decidertypedfunction^P\right)^{\bot}$. This shows the case for $i+1$, which show~\eqref{eq:QRHidevariable}. \eqref{eq:QRReadvariable} follows from a similar argument using the question pair $(\text{Hide}_{k-1}, \,\decidertypedfunction^{P})$ \textbf{--} $(\text{Read}, \,\decidertypedfunction^{P})$. 
\end{proof}

Finally, we show that the measurement $P^{\text{Intro}, \decidertypedfunction^P}$ can be decompose as a tensor product of a ``question sampling" PVM using the Generalized Pauli $Z$ measurement and a ``game strategy" PVM. We remark that this is an analogue for~\cite[Lemma 8.24]{jiMIPRE2022a} for the commuting operator model and the proof follows a similar structure. 
\begin{claim} \label{claim:QRconstructingstrategy}
	Fix $P \in \{A,B\}$. For every $j \in [k+1]$ and $\left(x_p\right)_{< j} \in V_{< j}$, there exist a set of POVM $\left\{\left(P^{\text{Intro}, \, \decidertypedfunction^P, \left(x_p\right)_{< j}}_{x_{\geq j}, a_P} \right)_{V_{\geq j} \alicealg} \right\}_{ x_{\geq j} \in V_{\geq j}, a_P \in \cA} \in \cB(\bC_{V_{\geq j}}) \otimes \alicealg$ such that 
	\begin{equation*}
		\left(P^{\text{Intro} \, \decidertypedfunction^P}_{(x_P)_{< j}, (x_P)_{\geq j}, a_P} \right)_{V \alicealg}	\approx_{O(\poly(k, \delta_1)^{1/2^j})} \left(\rho^{p,Z}_{[\decidertypedfunction^P_{< j}(s_{< j})|(x_P)_{< j} ]}\right)_{V_{< j}} \otimes \left(P^{\text{Intro}, \, \decidertypedfunction^P, \left(x_p\right)_{< j}}_{(x_P)_{\geq j}, a_P} \right)_{V_{\geq j} \alicealg}, 
	\end{equation*}
	where the summation are over $((x_P)_{< i}, (x_P)_{\geq i}, a_P) \in V_{< i} \times V_{\geq i} \times \cA$.
\end{claim}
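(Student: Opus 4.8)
�The plan is to prove \Cref{claim:QRconstructingstrategy} by induction on $j$, mirroring the argument in~\cite[Lemma 8.24]{jiMIPRE2022a} but using the translation chart in~\Cref{fig:tracialemdtofd} to pass from finite-dimensional tensor product strategies to tracially embeddable strategies. The base case $j = 0$ is immediate: $V_{< 0} = \{0\}$, so $\rho^{p,Z}_{[\decidertypedfunction^P_{< 0}(s_{< 0})|(x_P)_{< 0}]}$ is trivial, and we simply set $\left(P^{\text{Intro}, \decidertypedfunction^P, \emptyset}_{x_{\geq 0}, a_P}\right)_{V_{\geq 0} \alicealg} = \left(P^{\text{Intro}, \decidertypedfunction^P}_{x, a_P}\right)_{V \alicealg}$, and the approximation holds with error $0$.

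For the inductive step, assume the statement holds for some $0 \le j < k$ with POVM family $\left\{\left(P^{\text{Intro}, \decidertypedfunction^P, (x_P)_{< j}}_{x_{\geq j}, a_P}\right)_{V_{\geq j} \alicealg}\right\}$. The idea is to apply the Pauli twirl decomposition (\Cref{lem:PauliTwirl}) to this family, playing the role of the measurement $\{M^x_{t,a}\}$ there, on the Hilbert space $\bC_{V_{\geq j}} \otimes \alicealg$. The three hypotheses of \Cref{lem:PauliTwirl} must be verified: (i) an approximate $Z$-measurement consistency, namely that $\left(P^{\text{Intro}, \decidertypedfunction^P, (x_P)_{< j}}_{x_{\geq j}}\right)$ data-processes to $\rho^{p,Z}_{[\decidertypedfunction^P_{j, (x_P)_{< j}}|t]}$ on the register $V_j$ — this follows by combining the inductive hypothesis with \Cref{eq:QRproofZ-2} from \Cref{claim:QRsoundZmeasure} and the definition of the conditionally linear function $\decidertypedfunction^P$; (ii) approximate commutation of $\left(P^{\text{Intro}, \decidertypedfunction^P, (x_P)_{< j}}_{x_{\geq j}, a_P}\right)$ with the $Z$-measurements $\rho^{p,Z}_z$ — this is \Cref{lem:QRsoundnessproofcomm1} applied to \Cref{eq:QRproofZ-2} and the $Z$-part of \Cref{eq:QRReadvariable} (since $P^{\text{Intro}}$ and $P^{\text{Read}}$ are cross-checked by the question pair $(\text{Read}, \decidertypedfunction^P)$ \textbf{--} $(\text{Intro}, \decidertypedfunction^P)$); and (iii) approximate commutation with the $X$-measurements $\rho^{p,X}_{[(\decidertypedfunction^P)^{\bot}_{j}|t^{\bot}]}$ — this is obtained the same way from the $X$-part of \Cref{eq:QRReadvariable} together with \Cref{lem:QRsoundnessproofcomm1}. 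Once these are established, \Cref{lem:PauliTwirl} gives, for each value of $(x_P)_j \in V_j$, a POVM family on $\bC_{V_{> j}} \otimes \alicealg$; combining the index $(x_P)_{<j}$ with $(x_P)_j$ into $(x_P)_{<j+1}$ and absorbing the factor $\rho^{p,Z}_{[\decidertypedfunction^P_{j, \cdot}|\cdot]}$ into the chain $\rho^{p,Z}_{[\decidertypedfunction^P_{< j}(s_{< j})|(x_P)_{< j}]}$ via the recursive definition of $\decidertypedfunction^P$ yields the decomposition at level $j+1$. The error accumulates polynomially and picks up a square root at each application of the twirl, accounting for the $O(\poly(k,\delta_1)^{1/2^{j}})$ bound, and the data-processing/decomposition bookkeeping between registers is handled by \Cref{lem:QRsounddecomp} and \Cref{lem:QRsoundnessproofcomm2}.

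The main obstacle I anticipate is verifying hypothesis (i) of \Cref{lem:PauliTwirl} cleanly at each induction step: one needs the ``post-twirl'' measurement $\left(P^{\text{Intro}, \decidertypedfunction^P, (x_P)_{< j}}_{x_{\geq j}, a_P}\right)$ to actually sample the $j$-th level linear function $\decidertypedfunction^P_{j, \decidertypedfunction^P_{j-1}((x_P)_{<j})}$ applied to the $V_j$-component of the $Z$-outcome, and this requires carefully unwinding the recursive definition of the conditionally linear function (\Cref{def:condlinfun}) so that the linear map whose kernel is being measured at level $j$ matches the branch determined by $(x_P)_{<j}$. The consistency of the ``Hide'' and ``Read'' questions established in \Cref{claim:structureHideandRead} is exactly designed to enforce this branching behaviour, but threading it through the twirl hypotheses — and keeping track of which register ($V_{<j}$, $V_j$, $V_{>j}$, or $\alicealg$) each operator lives on, using the translation chart of \Cref{fig:tracialemdtofd} to justify the observable-switching steps $A \ket{\tau} = A^{op}\ket{\tau}$ — is the delicate part. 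Once this lemma is in hand, taking $j = k$ gives $\left(P^{\text{Intro} \, \decidertypedfunction^A}_{x_A, a_A}\right) \approx \rho^{p,Z}_{[\decidertypedfunction^A(s)|x_A]} \otimes \left(P^{\text{Intro}, \decidertypedfunction^A, x_A}_{a_A}\right)$ and similarly for $B$, from which the final strategy for $\cG$ is assembled (the measurements $\left(P^{\text{Intro}, \decidertypedfunction^A, x_A}_{a_A}\right)$ and $\left(P^{\text{Intro}, \decidertypedfunction^B, x_B}_{b_B}\right)$ acting on the shared state, with the $\rho^{p,Z}$ factors sampling the question pair $(x_A, x_B) = (\decidertypedfunction^A(s), \decidertypedfunction^B(s)) \sim \mu$), completing the soundness proof of \Cref{thm:Introspectgame}.
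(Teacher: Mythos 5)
Your overall plan is correct and is essentially the same as the paper's: induction on $j$, with \Cref{lem:PauliTwirl} applied at each level, and the three twirl hypotheses verified by cross-referencing the $(\text{Intro})$ measurement against other question labels via the consistency checks of~\Cref{fig:Introspectionverification}. The base case and the way you absorb $\rho^{p,Z}_{[\decidertypedfunction^P_{j,\cdot}|\cdot]}$ into the $Z$-chain via the recursive definition of the CL function also match the paper.

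There is one genuine gap, however, and it is precisely where the exponent $1/2^j$ in the error bound comes from. \Cref{lem:PauliTwirl} requires its input family $\{M^x_{t,a}\}$ to be a set of \emph{PVMs}, but its output $\{M_a^{x,t}\}$ is only a \emph{POVM}. So after one application of the twirl, the family $\left(P^{\text{Intro},\,\decidertypedfunction^P,(x_P)_{<j+1}}_{(x_P)_{\geq j+1}, a_P}\right)$ you obtain is not directly usable as the input to the next round of the induction. The paper repairs this with an extra step: it shows (via \Cref{lem:approximationofPOVMtoPVM}, which converts an $\approx_\eps$-closeness to a PVM into the bound $\braket{\psi|B_a^2|\psi}\geq 1-O(\sqrt\eps)$) that the post-twirl POVM is nearly projective, and then invokes the orthogonalization lemma (\Cref{lem:orthogonalizationlemma}) to replace it with a genuine PVM while incurring a square-root loss. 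This is the only place a square root enters: \Cref{lem:PauliTwirl} itself preserves a polynomial bound. Your proposal attributes the square-root loss to the twirl and never mentions the POVM$\to$PVM conversion, which means the inductive step as written cannot be iterated.

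A secondary, more minor deviation: for hypothesis (ii) of \Cref{lem:PauliTwirl} (approximate commutation with an unconstrained $\rho^{p,Z}_z$ on $V_j$) the paper uses the $(\text{Sample},\decidertypedfunction^P)$ question, not the Read question. The Sample label gives a measurement close to the undata-processed $\rho^{p,Z}_s$ on the full register, which is what hypothesis (ii) needs; the $Z$-part of \Cref{eq:QRReadvariable} is already filtered through $\decidertypedfunction^P$ and does not directly give commutation with arbitrary $\rho^{p,Z}_z$. You should route hypothesis (ii) through the $(\text{Gen Pauli}, Z)$ -- $(\text{Sample},\decidertypedfunction^P)$ and $(\text{Sample},\decidertypedfunction^P)$ -- $(\text{Intro},\decidertypedfunction^P)$ consistency checks instead, combined with \Cref{lem:QRsoundnessproofcomm1} and \Cref{lem:QRsoundnessproofcomm2}.
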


\begin{proof}
	We show this claim via induction. For $k = 0$, the claim trivially follows by setting 
	\begin{equation*}
		P^{\text{Intro} \, \decidertypedfunction^P, 0}_{(x_P)_{< i}, (x_P)_{\geq i}, a_P} = P^{\text{Intro}, \decidertypedfunction^P}_{(x_P)_{< i}+ (x_P)_{\geq i}, a_P}.
	\end{equation*}

	For the inductive step, fix $1 \leq i < k+1$ and assume~\Cref{eq:QRHidevariable} holds for all $0 \leq j \leq i$, we wish to show the claim for $i+1$. For $(x_P)_{< i}$, let $\left(P^{\text{Intro} \, \decidertypedfunction^P}_{(x_P)_{< j}, (x_P)_{\geq j}, a_P}\right)_{V \alicealg}$ be the PVM guaranteed by the inductive hypothesis. Let $x \sim \decidertypedfunction^P_{< i}$ denote the distribution where $s$ is first sampled uniformly randomly from $V_{<i}$, and the first $i$-th levels of $ \decidertypedfunction^P$ are then applied to $s$. The goal is to use~\Cref{lem:PauliTwirl} in order to construct the measurement require for the lemma. To do so, we show that on average over $(x_P)_{< i}  \sim \decidertypedfunction^P_{< i}$, the following set of equations hold:
	\begin{align}
		\label{eq:ARproofdecompIntro-1} &\left(P^{\text{Intro}, \, \decidertypedfunction^P, \left(x_p\right)_{< i}}_{(x_P)_{i}} \right)_{V_{\geq i} \alicealg} \approx_{O(\poly(k, \delta_1)^{1/2^j})} \left(\rho_{\left[\decidertypedfunction_{i, (x_p)_{< i}}^{P} (s_i) | (x_P)_i \right]}^{p, Z}\right)_{V_i} \otimes \cI_{V_{> i}} \otimes \cI_\alicealg \\
		=\label{eq:ARproofdecompIntro-2} &\left[\left(P^{\text{Intro}, \, \decidertypedfunction^P, \left(x_p\right)_{< i}}_{(x_P)_{\geq i}, a_P} \right)_{V_{\geq i} \alicealg},  \left(\rho_{z}^{p, Z}\right)_{V_i} \otimes \cI_{V_i^C} \otimes \cI_\alicealg)\right]\approx_{O(\poly(k, \delta_1)^{1/2^j})} 0 \\
		\label{eq:ARproofdecompIntro-3} &\left[\left(P^{\text{Intro}, \, \decidertypedfunction^P, \left(x_p\right)_{< i}}_{(x_P)_{\geq i}, a_P} \right)_{V_{\geq i} \alicealg} ,  \left(\rho_{\left[\left(\decidertypedfunction_{i, (x_p)_{< i}}^{P}\right)^{\bot} (s_i) | (x_P)^{\bot} \right]}^{p, X}\right)_{V_i} \otimes \cI_{V_{> i}} \otimes \cI_\alicealg) \right] \approx_{O(\poly(k, \delta_1)^{1/2^j})} 0, 
	\end{align}
	 where we partition the output $(x_P)_{\leq i}$ from $P^{\text{Intro}, \, \decidertypedfunction^P, \left(x_p\right)_{< i}}$ as $(x_P)_{< i} + (x_P)_{i} \in V_{< i} \oplus V_{i}$. In this case, the ``x" variable from~\Cref{lem:PauliTwirl} correspond to $(x_p)_{< i}$, the ``t" variable corresponds to $(x_P)_i$ and the ``a" variable corresponds to $((x_P)_{> i}, a_P)$. 
	 
	 For~\eqref{eq:ARproofdecompIntro-1}, by~\Cref{eq:QRproofZ-3} when restricted to the output to $(x_P)_{< i+1}$, 
	 \begin{align*}
	 		\left(P^{\text{Intro} \, \decidertypedfunction^P}_{(x_P)_{< i+1}} \right)_{V \alicealg}	&\approx_{O(\poly(k, \delta_1)^{1/2^j})} \left(\rho^{p,Z}_{[\decidertypedfunction^P_{< i+1}|(x_P)_{< i+1} ]}\right)_{V_{< i+1}}\otimes \cI_{V_{> i}} \otimes \cI_\alicealg,  \\
	 		&=\left(\rho^{p,Z}_{[\decidertypedfunction^P_{< i}|(x_P)_{< i} ]}\right)_{V_{< i}} \otimes \left(\rho_{\left[\decidertypedfunction_{i, (x_p)_{< i}}^{P} | (x_P)_i \right]}^{p, Z}\right)_{V_i} \otimes \cI_{V_{> i}} \otimes \cI_\alicealg
	 \end{align*}
 	where the second equality follows from the definition of $\decidertypedfunction^P$. \eqref{eq:ARproofdecompIntro-1} then follows from~\Cref{lem:QRsoundnessproofcomm1}. 
 	
 	For~\eqref{eq:ARproofdecompIntro-2}. By using the fact that $\strategy''$ succeed with probability $1 - \delta_1$, the fact that the question pair (Gen Pauli, W) \textbf{--} (Sample, $\decidertypedfunction^P$) is sampled with probability $O(k)$ from the distribution $\mu^{\text{Intro}}$, and~\Cref{lem:closetodistance}, ~\eqref{eq:QRsoundnessapproxgenpauli} and the triangle inequality of $\approx$ distance, we have
	\begin{equation}
		\left(P^{\text{Sample} \, \decidertypedfunction^P}_{s_{< i+1}} \right)_{V \alicealg} \approx_{O(\poly(k, \delta_1)^{1/2^j})} \left(\rho^{p, Z}_{s_{< i+1}} \right)_{V_{< i+1}} \otimes \cI_{V_{\geq i+1}} \otimes \cI_{\alicealg}.
	\end{equation}
	By applying the same argument with the (Sample, $\decidertypedfunction^P$) \textbf{--}  (Intro, $\decidertypedfunction^P$) along with the inductive hypothesis
	\begin{equation}
		\left(P^{\text{Sample} \, \decidertypedfunction^P}_{a_P} \right) \approx_{O(\poly(k, \delta_1)^{1/2^j})} \left(\rho^{p,Z}_{[\decidertypedfunction^P_{< i}(s_{< i})|(x_P)_{< i} ]}\right)_{V_{< i}} \otimes \left(P^{\text{Intro}, \, \decidertypedfunction^P, \left(x_p\right)_{< i}}_{a_P} \right)_{V_{\geq i} \alicealg}.
	\end{equation}	
	Hence, by~\Cref{lem:QRsoundnessproofcomm1}
	\begin{equation*}
		[\left(\rho^{p, Z}_{s_{< i+1}} \right)_{V_{< i+1}} \otimes \cI_{V_{\geq i+1}} \otimes \cI_{\alicealg}, \left(\rho^{p,Z}_{[\decidertypedfunction^P_{< i}(s_{< i})|(x_P)_{< i} ]}\right)_{V_{< i}} \otimes \left(P^{\text{Intro}, \, \decidertypedfunction^P, \left(x_p\right)_{< i}}_{a_P} \right)_{V_{\geq i} \alicealg}] \approx_{O(\poly(k, \delta_1)^{1/2^j})} 0.
	\end{equation*}
	Finally, since the underlying state are $\ket{\text{ME}_{2^p}}^{\otimes m}$ on the registers $V$, ~\eqref{eq:ARproofdecompIntro-2} follows from~\Cref{lem:QRsoundnessproofcomm2}. For~\eqref{eq:ARproofdecompIntro-3}, by restricting the output from~\Cref{eq:QRReadvariable}
	\begin{equation*}	
		 P^{\text{Read}, \decidertypedfunction^P}_{\left(t_{\text{Read}, P}^{\bot}\right)_{i}, \left(t_{\text{Read}, P}^{\text{Line}}\right)_{< i}} \approx_{O(\poly(k, \delta_1)^{1/2^j})}  \left(\rho^{p,Z}_{\left[\decidertypedfunction^P_{< i}(s_{< i})| \left(t_{\text{Read}, P}^{\text{Line}}\right)_{< i} \right]}\right)_{V_{< j}}  \otimes \left(\rho_{\left[\left(\decidertypedfunction_{i,  \left(t_{\text{Read}, P}^{\text{Line}}\right)_{< i}}^{P}\right)^{\bot} (s_i)| \left(t_{\text{Read}, P}^{\bot}\right)_{i} \right]}^{p, X}\right)_{V_i}.
	\end{equation*}
	Similarly, by considering the  (Read, $\decidertypedfunction^P$) \textbf{--}  (Intro, $\decidertypedfunction^P$) along with the inductive hypothesis
	\begin{equation}
		\left(P^{\text{Read} \, \decidertypedfunction^P}_{a_P} \right) \approx_{O(\poly(k, \delta_1)^{1/2^j})} \left(\rho^{p,Z}_{[\decidertypedfunction^P_{< i}(s_{< i})|(x_P)_{< i} ]}\right)_{V_{< i}} \otimes \left(P^{\text{Intro}, \, \decidertypedfunction^P, \left(x_p\right)_{< i}}_{a_P} \right)_{V_{\geq i} \alicealg}.
	\end{equation}	
	\eqref{eq:ARproofdecompIntro-3} then follows from~\Cref{lem:QRsoundnessproofcomm1} to obtain the commutation relationship, followed by~\Cref{lem:QRsoundnessproofcomm2} on the register $V_{< i}$.
	
	By~\Cref{lem:PauliTwirl}, by taking the ``t" variable as $(x_P)_{i} \in V_i$ and the ``a" variable as $\left( (x_P)_{> i+1}, a_P \right) \in V_{> i+1} \times \cA $, for every $\left(x_p\right)_{< i} + (x_P)_{i} = \left(x_p\right)_{< i+1} \in V_{i+1}$, there exists a set of POVM measurement $\widehat{\left(P^{\text{Intro}, \, \decidertypedfunction^P, \left(x_p\right)_{< i+1}}_{ (x_P)_{> i+1}, a_P } \right)_{V_{> i} \alicealg}}$ such that, on expectation over $\left(x_p\right)_{< i} \sim \decidertypedfunction^P_{< i}$
	\begin{equation*}
		\left(P^{\text{Intro}, \, \decidertypedfunction^P, \left(x_p\right)_{< i}}_{(x_P)_{\geq i}, a_P} \right)_{V_{\geq i} \alicealg} 	\approx_{O(\poly(\eps))}   \left(\rho_{\left[\decidertypedfunction_{i, (x_p)_{< i}}^{P} | (x_P)_i \right]}^{p, Z}\right)_{V_i} \otimes \widehat{\left(P^{\text{Intro}, \, \decidertypedfunction^P, \left(x_p\right)_{< i+1}}_{ (x_P)_{\geq i+1}, a_P } \right)_{V_{> i} \alicealg}}. 
	\end{equation*}
	Since $\bra{ME_{2^P}}^{\otimes m}_{V_i} \left(\left(\rho^{p,Z}_{[\decidertypedfunction^P_{< i}(s_{< i})|(x_P)_{< i} ]}\right)_{V_{< i}} \otimes \cI_V \right)_{A_1 B_1} \ket{\text{ME}_{2^P}}^{\otimes m}_{V_i} $ precisely describes the distribution $\left(x_p\right)_{< i}\sim \decidertypedfunction^P_{< i}$. By~\Cref{lem:QRsounddecomp}, 
	\begin{align*}
		&\left(\rho^{p,Z}_{[\decidertypedfunction^P_{< i}(s_{< i})|(x_P)_{< i} ]}\right)_{V_{< i}} \otimes \left(P^{\text{Intro}, \, \decidertypedfunction^P, \left(x_p\right)_{< i}}_{(x_P)_{\geq i}, a_P} \right)_{V_{\geq i} \alicealg} \\
		&\approx_{O(\poly(\eps))}   \left(\rho^{p,Z}_{[\decidertypedfunction^P_{< i}(s_{< i})|(x_P)_{< i} ]}\right)_{V_{< i}} \otimes \left(\rho_{\left[\decidertypedfunction_{i, (x_p)_{< i}}^{P} (s_i)| (x_P)_i \right]}^{p, Z}\right)_{V_i} \otimes \widehat{\left(P^{\text{Intro}, \, \decidertypedfunction^P, \left(x_p\right)_{< i+1}}_{ (x_P)_{\geq i+1}, a_P } \right)_{V_{> i} \alicealg}} \\
		&= \left(\rho^{p,Z}_{[\decidertypedfunction^P_{< i+1}(s_{< i+1})|(x_P)_{< i+1} ]}\right)_{V_{< i+1}} \otimes \widehat{\left(P^{\text{Intro}, \, \decidertypedfunction^P, \left(x_p\right)_{< i+1}}_{ (x_P)_{\geq i+1}, a_P } \right)_{V_{> i} \alicealg}},
	\end{align*}
	where the last line follows from the definition of $\decidertypedfunction^P$. Thus, combining with the inductive hypothesis, 
	\begin{equation} 	\label{eq:ARproofdecompIntro-4}
		\left(P^{\text{Intro} \, \decidertypedfunction^P}_{(x_P)_{< i+1}, (x_P)_{\geq i+1}, a_P} \right)_{V \alicealg}	\approx_{O(\poly(k, \delta_1)^{1/2^j})} \left(\rho^{p,Z}_{[\decidertypedfunction^P_{< i+1}(s_{< i+1})|(x_P)_{< i+1} ]}\right)_{V_{< i+1}} \otimes \widehat{\left(P^{\text{Intro}, \, \decidertypedfunction^P, \left(x_p\right)_{< i+1}}_{(x_P)_{\geq i+1}, a_P} \right)_{V_{\geq i+1} \alicealg}}, 
	\end{equation}
	Finally, we wish to replace each of the $\widehat{\left(P^{\text{Intro}, \, \decidertypedfunction^P, \left(x_p\right)_{< i+1}}_{(x_P)_{\geq i+1}, a_P} \right)_{V_{\geq i+1} \alicealg}}$ with a set of PVM via~\Cref{lem:orthogonalizationlemma}. To do so, we show the following lemma
	\begin{lemma}[Approximation of almost projective measurements] \label{lem:approximationofPOVMtoPVM}
		Let $\cA$ be finite sets, and $\alicealg \subseteq \BofH$ be a von Neumann algebra. Let $\{ A_{a} \}_a \subseteq \alicealg$ be a set of PVM and $\{ B_{a} \}_a \subseteq \alicealg$ be a set of POVM such that 
		\begin{equation*}
			A_{a} \approx_{\eps}  B_{a}
		\end{equation*}
		for some state $\ket{\psi} \in \cH$ and some $\eps > 0$. Then $\braket{\psi|B_a^2|\psi} \geq 1 - O(\sqrt{\eps})$.
	\end{lemma}
	
	\begin{proof}
		Since $\{A_a^x\}$ is a set of PVM, by~\Cref{lem:closetodistance}, $A_{a} \approx_{\sqrt{\eps}}  B_{a}$ over $\ket{\psi}$ and $\eps > 0$. By definition, we have $\sum_{a} \braket{\psi| A_a B_a|\psi} \geq 1 - \sqrt{\eps}$, or
		\begin{align*}
			\sqrt{\eps} &\geq 1 -  \sum_{a} \braket{\psi| A_a B_a|\psi} \geq 1 - \sqrt{\sum_a \braket{\psi| A_a^2|\psi}} \cdot  \sqrt{\sum_a \braket{\psi| B_a^2|\psi}} = 1 - \sqrt{\sum_a \braket{\psi| B_a^2|\psi}} 
		\end{align*}
		where the last equality follows from $\{A_a\}_{a \in\cA}$ being a set of PVM. This implies that $ \sum_a \braket{\psi| B_a^2|\psi}  \geq (1 - \sqrt{\eps})^2 = 1 - O(\sqrt{\eps})$, as desired. 
	\end{proof}
	Applying the above lemma to~\eqref{eq:ARproofdecompIntro-4} along with~\Cref{lem:QRsounddecomp}, this implies that on expectation over $(x_P)_{<i+1} \sim \decidertypedfunction^P_{<i+1}$, we have $\widehat{\left(P^{\text{Intro}, \, \decidertypedfunction^P, \left(x_p\right)_{< i+1}}_{(x_P)_{\geq i+1}, a_P} \right)_{V_{\geq i+1} \alicealg}^2 } \approx_{O(\poly(k, \delta_1))^{1/2^{j+1}}} 0$. Hence, by~\Cref{lem:orthogonalizationlemma}, there exist sets of PVM $\left\{\left(P^{\text{Intro}, \, \decidertypedfunction^P, \left(x_p\right)_{< i+1}}_{(x_P)_{\geq i+1}, a_P} \right)_{V_{\geq i+1} \alicealg} \right\}_{{(x_P)_{\geq i+1}} \in V_{\geq i+1}, \, a_P \in  \cA}$ indexed by $ \left(x_p\right)_{< i+1} \in V_{< i+1}$ such that, on expectation over $(x_P)_{<i+1} \sim \decidertypedfunction^P_{<i+1}$, 
	\begin{equation*}
		\left(P^{\text{Intro} \, \decidertypedfunction^P}_{(x_P)_{< i+1}, (x_P)_{\geq i+1}, a_P} \right)_{V \alicealg}  \approx_{O(\poly(k, \delta_1))^{1/2^{j+1}}} \left(P^{\text{Intro}, \, \decidertypedfunction^P, \left(x_p\right)_{< i+1}}_{(x_P)_{\geq i+1}, a_P} \right)_{V_{\geq i+1} \alicealg}.
	\end{equation*}
	By applying~\Cref{lem:QRsounddecomp} again, and the triangle inequality of $\approx$ distance applied to~\eqref{eq:ARproofdecompIntro-4},
	\begin{equation*}
		\left(P^{\text{Intro} \, \decidertypedfunction^P}_{(x_P)_{< i+1}, (x_P)_{\geq i+1}, a_P} \right)_{V \alicealg}  \approx_{O(\poly(k, \delta_1))^{1/2^{j+1}}} \left(\rho^{p,Z}_{[\decidertypedfunction^P_{< i+1}(s_{< i+1})|(x_P)_{< i+1} ]}\right)_{V_{< i+1}} \otimes \left(P^{\text{Intro}, \, \decidertypedfunction^P, \left(x_p\right)_{< i+1}}_{(x_P)_{\geq i+1}, a_P} \right)_{V_{\geq i+1} \alicealg}.
	\end{equation*}
	This shows the claim for $i+1$, thus concluding the proof.

\end{proof} 

We remark that the dependency of $\frac{1}{1/2^j}$ power in the above lemma arises from using~\Cref{lem:orthogonalizationlemma} in order to make force the POVM guaranteed by~\Cref{lem:PauliTwirl} to be PVMs. Since $k$ is assumed to be a constant in this paper and $j \in[k+1]$, this power dependency does not change the result of this paper. 

By~\Cref{claim:QRconstructingstrategy} for the case where $j = k$ and $P = A$, for every $x_P \in \bF_{2^p}^m$, there exist a set of PVM $\left\{\left(P^{\text{Intro}, \, \decidertypedfunction^P, x_A}_{a_A} \right)_{\alicealg} \right\}_{a_A \in \cA} \subseteq \alicealg$ such that

\begin{equation} \label{eq:QRsoundnessAlice}
	\left(P^{\text{Intro} \, \decidertypedfunction^A}_{x_A, a_A} \right)_{A_1 \alicealg} \approx_{O(\poly(k, \delta_1)^{1/2^k})} \left(\rho^{p,Z}_{[\decidertypedfunction^A|(x_A) ]}\right)_{A_1} \otimes \left(P^{\text{Intro}, \, \decidertypedfunction^P, x_A}_{a_A} \right)_{\alicealg}, 
\end{equation}

By the same argument as~\Cref{claim:QRconstructingstrategy} applied to $\left(P^{\text{Intro} \, \decidertypedfunction^P}_{x_P, a_P} \right)_{B_1 \alicealg}^{op} \subseteq \cB(\bC_{V}) \otimes \alicealg'$ and take the case where $j = k$ and $P = B$, for every $x_P \in \bF_{2^p}^m$, there exist a set of PVM $\left\{\left(Q^{\text{Intro}, \, \decidertypedfunction^P, x_B}_{a_B} \right)_{\alicealg} \right\}_{a_B \in \cA} \subseteq \alicealg$ such that
\begin{equation}\label{eq:QRsoundnessBob}
	\left(P^{\text{Intro} \, \decidertypedfunction^B}_{x_B, a_B} \right)_{B_1 \alicealg}^{op} \approx_{O(\poly(k, \delta_1)^{1/2^k})} \left(\rho^{p,Z}_{[\decidertypedfunction^B|(x_B) ]}\right)_{B_1} \otimes \left(Q^{\text{Intro}, \, \decidertypedfunction^P, x_B}_{a_B} \right)_{\alicealg}^{op}. 
\end{equation}

Since the question pair (Intro, $\decidertypedfunction^A$) \textbf{--} (Intro, $\decidertypedfunction^B$) is sampled with probability $O(k)$ from the distribution $\mu^{\text{Intro}}$. This means that whenever the question/answer pair $(x_A, x_B, a_A,a_B)$ are sampled by the measurement $\braket{\psi|	\left(P^{\text{Intro} \, \decidertypedfunction^A}_{x_A, a_A} \right)_{A_1 \alicealg} 	\left(P^{\text{Intro} \, \decidertypedfunction^B}_{x_B, a_B} \right)_{B_1 \alicealg}^{op}|\psi}$, by the ``Introspection of $\cG$" question clause,  from~\Cref{fig:Introspectionverification}, the question/answer pair succeed in $\cG$ (i.e. $D(x_A, x_B, a_A,b_B) =1$) with probability at least $1 - O(\poly(k, \delta_1))$. Combining with~\eqref{eq:QRsoundnessAlice}, ~\eqref{eq:QRsoundnessBob} and the definition of $\ket{\psi}$ from~\Cref{eq:QRsoundnessstrategy}, this shows that the question/answer pair sampled by
\begin{align*}
	 &\left(\bra{\text{ME}_{2^p}}^{\otimes m}\left(\rho^{p,Z}_{[\decidertypedfunction^A|(x_A) ]}\right)_{A_1} \otimes \left(\rho^{p,Z}_{[\decidertypedfunction^B|(x_B) ]}\right)_{B_1} \ket{\text{ME}_{2^p}}^{\otimes m} \right)_{A_1 B_1} \\
	 &\qquad \otimes \left(\braket{\text{Aux}| \left(P^{\text{Intro}, \, \decidertypedfunction^P, x_A}_{a_A} \right) \cdot \left(Q^{\text{Intro}, \, \decidertypedfunction^P, x_B}_{a_B} \right)^{op} \text{Aux}} \right)_{\alicealg}
\end{align*}
succeed on $\cG$ with probability at least $1 - O(\poly(k, \delta_1)) - O(\poly(k, \delta_1)^{1/2^k}) = 1 - O(\poly(k, \delta_1)^{1/2^k})$. To conclude the proof, we see that $(x_A, x_B)$ sampled from the measurement
\begin{equation*}
	\left(\bra{\text{ME}_{2^p}}^{\otimes m}\left(\rho^{p,Z}_{[\decidertypedfunction^A|(x_A) ]}\right)_{A_1} \otimes \left(\rho^{p,Z}_{[\decidertypedfunction^B|(x_B) ]}\right)_{B_1} \ket{\text{ME}_{2^p}}^{\otimes m} \right)_{A_1 B_1}
\end{equation*}
 are equal to $(x_A,x_B) \sim \mu$. This shows that the strategy
\begin{equation}
	\strategy^{\cG} =  \left(\cL^2(\alicealg,\tau), \ket{\text{Aux}}_{\alicealg}, \{ P^{\text{Intro}, \, \decidertypedfunction^P, x_A}_{a_A} \}, \{Q^{\text{Intro}, \, \decidertypedfunction^P, x_B}_{a_B} \} \right),
\end{equation}
satisfies $\omega(\cG, \strategy^{\cG}) > 1 - O(\poly(k, \delta_1)^{1/2^k}) = 1 - O(\poly(\exp{k}, \eps))  $ by the initial definition of $\delta_1$. This shows the ``soundness" clause for~\Cref{thm:Introspectgame}.

\subsection{Proof for the ``soundness" clause for the answer reduction transformation}  \label{sec:ARsoundness}
As mentioned in~\Cref{sec:Answerreductionproof}, this subsection follows a similar structure as~\cite[Section 10.7]{jiMIPRE2022a}

Fix $\alpha, n \in \bN$, let $\cG_n = (\cX_n, \cA_n,\mu_n, D_n)$ be the $n$th game of $\verifiersequence$, and let $\cG_n^{\Answerred} = (\cX_n^{\Answerred}, \cA_n^{\Answerred},\mu_n^{\Answerred}, D_n^{\Answerred})$ and $\cG_n^{\text{Ora}}  = (\cX_n^{\text{Ora}}, \cA_n^{\text{Ora}},\mu_n^{\text{Ora}}, D_n^{\text{Ora}})$ denote the answer reduction transformation given in~\Cref{sec:Answerreductionproof} and~\Cref{sec:Oracularization}. Given a question label $x \in \cX_n^{\Answerred}$, we write $x = (x^{\text{Ora}}, x^{\text{LDL}}, (x^{\text{game}},x^{\text{LDC}}))$ where each elements corresponds to the ``oracularizable question label", ``SLDT question label", question content for the original game $\cG_n$, and  question content for SLDT from~\Cref{fig:Answerredsample} respectively. When we fix certain question labels, we might omit that portion of the question label for simplicity notation. Furthermore, let $(\mathtt{PCPParameter}_{\alpha}, \mathtt{ComputePCP}_{\alpha})$ be the two Turing machine guaranteed by~\Cref{thm:classicalPCPP}, and let $(m^{\text{ans}},m, g, p) = \mathtt{PCPParameter}_{\alpha}(n)$. 

Before we start giving a proof of the ``soundness clause", we first give a first overview on how the proof goes. To show the ``soundness" clause given in~\Cref{eq:ARsoundnessclause}, it is equivalent to show that there exist a polynomial function $\textbf{t}_{\alpha}^{\Answerred}$ with $\textbf{t}_{\alpha}^{\Answerred} = O(\polylog(n), \poly(\eps))$ such that for model $t \in \{*, co\}$
\begin{equation*}
	\omega^{t}(\cG^{\Answerred}_n) > 1 - \eps \Longrightarrow \omega^t(\cG) > 1 -\mathbf{t}^{\Answerred}_{\alpha}(\eps, n). 
\end{equation*}
Hence, assume that $\omega^{t}(\cG^{\Answerred}_n) > 1  - \eps$, and fix strategy in model $t$ such that $\strategy$ succeed at $\cG^{\Answerred}_n$ with probability at most $\eps$. We show the soundness clause by proving the following:
\begin{enumerate}
	\item By using~\Cref{thm:soundnessQLDT}, we first show that there exist a strategy $\strategy^{\text{poly}}$ for $\cG^{\Answerred}_n$ which consist of the provers first performing hidden measurements and sampled $6+m$ low-individual degree polynomials, and then using these polynomials to pass all the low-individual degree polynomial/simultaneous low-individual degree polynomial test for $\cG_n^{\Answerred}$.
	\item Then we use~\Cref{thm:classicalsoundnessPCPP} on the polynomial generated by $\strategy^{\text{poly}}$ to construct a strategy $\strategy^{\text{Ora}}$ for $\cG^{\text{Ora}}_n$ which succeed with probability at least $1- O(\polylog(n), \poly(\eps))$. 
	\item Finally, we conclude the proof by applying~\Cref{lem:oratransformation} to show that there exist a strategy for $\cG_n$  which succeed with probability at least $1- O(\polylog(n), \poly(\eps))$, thus showing the lemma. 
\end{enumerate}
For simplicity of notations, we work with synchronous strategies in order to show point 1 and 2. For a question pair $(x,y) = \left( (x^{\text{Ora}}, x^{\text{LDL}}, (x^{\text{game}},x^{\text{LDC}} )), (y^{\text{Ora}}, y^{\text{LDL}}, (y^{\text{game}},y^{\text{LDC}}))\right)$ for $\cG^{\Answerred}_n$, we observe that the synchronous question pair for $\cG^{\Answerred}_n$ corresponds to the case where $x^{\text{Ora}} = y^{\text{Ora}}$ and $x^{\text{LDL}} = y^{\text{LDL}}$ which occur with constant probability. This implies that the game $\cG^{\Answerred}_n$ is $\frac{1}{c_b}$-balanced for some constant $\frac{1}{c_b}$, and hence any strategy $\strategy$ for $\cG^{\Answerred}_n$ which succeed with probability $1- \eps$ must be $c_b \cdot  \eps$-synchronous. By~\Cref{thm:Rounding}, we have  
\begin{equation}
	\omega^{t}(\cG^{\Answerred}_n) > 1 - \eps  \Longrightarrow	\omega^{t}_s(\cG^{\Answerred}_n) > 1 - \eps - \textbf{s}^{\text{Rounding}}(c_b\eps) = 1 - \delta_1.
\end{equation}
Hence, fix a synchronous strategy $\strategy = (\cL^2(\alicealg, \tau), \ket{\tau}, \{A_a^x \})$ such that $\omega(\cG^{\Answerred}_n, \strategy) > 1 - c_1 \cdot \delta_1$, where $c_1$ is a sufficiently small constant choose later down the proof.

Define the function $\text{eval}_s^n: \Idpoly(p,m^{\text{ans}},p)^{\times n} \rightarrow \bF_{2^p}^{\times n}$ as $\text{eval}_s^n(\textbf{g}_0, \cdots, \textbf{g}_{n-1}) = (\textbf{g}_0(s),\cdots ,\textbf{g}_{n-1}(s))$. We wish to first show the following claim:
\newcommand{\game}{\text{game}}
\begin{claim} \label{claim:ARpolyPVM}
	For all $(x^{\game},y^{\game}) \in \cX^2$, there exist six sets of PVM in $\alicealg'$, 
	\begin{itemize}
		\item $\{G_{\overline{\textbf{g}}}^{(\text{Prover, A}), x^{\game}}\}_{\overline{\textbf{g}} \in \Idpoly(p,m^{\text{ans}},p)}$,
		\item $\{G_{\overline{\textbf{g}}}^{(\text{Prover, B}), y^{\game}}\}_{\overline{\textbf{g}} \in \Idpoly(p,m^{\text{ans}},p)}$,
		\item $\{G_{\overline{\textbf{g}}}^{(\text{Ora})_{o}, (x^{\game}, y^{\game})}\}_{\overline{\textbf{g}} \in \Idpoly(p,m^{\text{ans}},p)}$ for $o \in \{0,1,2\}$,
		\item $\{G_{\textbf{g}_{U_0}, \cdots, \textbf{g}_{U_4}, \textbf{g}_{\Gamma}, \textbf{g}_{B_0} \cdots,\textbf{g}_{B_{m-1}}}^{(\text{Ora}), (x^{\game}, y^{\game})}\}_{\textbf{g}_v \in \Idpoly(p,m,p)}$, 
	\end{itemize}
	such that the following hold: For $s \in \bF_{2^p}^{m^{\text{ans}}}$ and $n \in \bN$,
	\begin{align}
		\label{eq:ARsoundAlicepoly} A^{(\text{Prover, A}), (\text{Point}), x^{\game} ,s}_{u} &\simeq_{O(\poly(\eps, \log(n), \alpha))} G_{[\textbf{eval}_{s}^1 | u]}^{(\text{Prover, A}), x^{\game}}, \\
		\label{eq:ARsoundBobpoly} A^{(\text{Prover, B}), (\text{Point}), y^{\game} ,s}_{u} &\simeq_{O(\poly(\eps, \log(n), \alpha))} G_{[\textbf{eval}_{s}^1 | u]}^{(\text{Prover, B}), y^{\game}}, \\
		\label{eq:ARsoundvarpoly}  A^{(\text{Ora})_{o}, (\text{Point}), ((x^{\game}, y^{\game}) ,s)}_{u} &\simeq_{O(\poly(\eps, \log(n), \alpha))} G_{[\textbf{eval}_{s}^1 | u]}^{(\text{Ora})_{o}, (x^{\game}, y^{\game})}, \quad o \in \{0,1,2\}, \\
		\label{eq:ARsoundAllpoly} A^{(\text{Ora}), (\text{Point}), ((x^{\game}, y^{\game}) ,s)}_{u_0, \cdots, u_4, \gamma, \beta_0, \cdots,\beta_{m-1}} &\simeq_{O(\poly(\eps, \log(n), \alpha))} G_{[\textbf{eval}_s^{6+m}|( u_0, \cdots, u_4, \gamma, \beta_0, \cdots,\beta_{m-1})]}^{\text{Ora}, (x^{\game}, y^{\game})}, 
	\end{align}
	where $\simeq$ for the above four equations is defined over the distribution $(x^{\game}, y^{\game}) \sim \mu_n$ and $s \sim \bF_{2^p}^{m^{\text{ans}}}$ for the first 3 equation, and $s \sim \bF_{2^p}^{m}$ for the last equation and over the state $\ket{\tau}$.
\end{claim}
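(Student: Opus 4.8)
\textbf{Proof plan for Claim~\ref{claim:ARpolyPVM}.}

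The plan is to apply the quantum low-individual degree soundness theorem (\Cref{thm:soundnessQLDT}) and the simultaneous quantum low-individual degree soundness theorem (\Cref{lem:simuQLID}) to the relevant sub-tests embedded in $\cG_n^{\Answerred}$, and then repackage the resulting ``prover polynomial'' PVMs, which live in $\alicealg'$, into the six families indexed by the original game questions $(x^{\game},y^{\game})$. First I would recall from the construction in~\Cref{fig:Answerredsample} that, conditioned on the oracularization label being $(\text{Prover, A})$ (resp. $(\text{Prover, B})$, resp. $(\text{Ora}_o)$), the provers play exactly a $(p,m^{\text{ans}},p)$-quantum low-individual degree test on a polynomial that encodes $a$ (resp. $b$, resp. $w_o$); conditioned on the oracularization label being $(\text{Ora})$, they play a $(p,m^{\text{ans}},p,6+m)$-simultaneous quantum low-individual degree test. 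Each of these sub-tests is selected with constant probability from $\mu_n^{\Answerred}$, so a strategy of value $1-c_1\delta_1$ for $\cG_n^{\Answerred}$ restricts, conditioned on $(x^{\game},y^{\game})$, to a strategy of value $1 - O(\log^{\gamma^{\Answerred}}(n)\cdot\delta_1)$ for each sub-test on average over $(x^{\game},y^{\game})\sim\mu_n$ (the $\log^{\gamma^{\Answerred}}(n)$ factor absorbing the $|\decidertypedtype|$-dependence from the detyping step in~\Cref{lem:detypeandsync} and the number of question labels); by Markov, the fraction of $(x^{\game},y^{\game})$ on which the restricted value drops below $1-O(\poly(\eps,\log(n),\alpha))$ is small, and we can handle those by an arbitrary completion of the PVMs.

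The core step is: for each fixed $(x^{\game},y^{\game})$ with good restricted value, apply \Cref{thm:soundnessQLDT} to the three single-polynomial cases and \Cref{lem:simuQLID} to the simultaneous case. Since $\strategy$ is synchronous and $\cG_n^{\Answerred}$ is $\frac{1}{c_b}$-balanced, the restricted strategies are approximately synchronous, so by~\Cref{cor:orthogonalizationlemmasync} (or by directly invoking the fact that the soundness theorems apply to tracially embeddable strategies, as noted after \Cref{thm:soundnessQLDT} and in~\Cref{lem:simuQLID}) we obtain, for each fixed $(x^{\game},y^{\game})$, PVMs $\{G_{\overline{\textbf{g}}}^{(\text{Prover, A}),x^{\game}}\}$, $\{G_{\overline{\textbf{g}}}^{(\text{Prover, B}),y^{\game}}\}$, $\{G_{\overline{\textbf{g}}}^{(\text{Ora})_o,(x^{\game},y^{\game})}\}$, and $\{G_{\textbf{g}_{U_0},\dots,\textbf{g}_{B_{m-1}}}^{\text{Ora},(x^{\game},y^{\game})}\}$ in $\alicealg'$ whose outcomes are low-individual degree polynomials over $\bF_{2^p}^{m^{\text{ans}}}$ (resp.\ the $(6+m)$-tuple over $\bF_{2^p}^m$), and such that the ``point'' measurement on the corresponding sub-test is consistent, in the $\simeq$ sense over $s$ and the tracial state, with the data-processed polynomial PVM $G_{[\textbf{eval}_s^{\bullet}|\cdot]}$. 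The four soundness functions $\mathbf{\eta}_{\text{LD}}(p,m^{\text{ans}},p,\cdot)$ and $\mathbf{\eta}_{\text{SLD}}(p,m^{\text{ans}},p,6+m,\cdot)$ are, by the parameter choices in~\Cref{thm:classicalPCPP} (in particular $p=O(\polylog(\alpha,\log n))$, $m^{\text{ans}},m=O(\poly(\alpha,\log n))$, and the requirement $\frac{p\,m}{2^p}\le s^{-c_{\text{SLD},2}\alpha}$), all of the form $O(\poly(\eps,\log(n),\alpha))$; assembling these with the Markov loss and the rounding loss $\textbf{s}^{\text{Rounding}}(c_b\eps)$, and re-indexing the ``good'' and ``bad'' $(x^{\game},y^{\game})$ via the triangle inequality for $\simeq$ (together with data processing, \Cref{lem:closetodistance} to move between $\approx$ and $\simeq$ as needed), yields the four displayed consistency statements~\eqref{eq:ARsoundAlicepoly}--\eqref{eq:ARsoundAllpoly}.

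The main obstacle I anticipate is bookkeeping rather than conceptual: carefully tracking that the restriction of the global strategy $\strategy$ to each oracularization label really does behave as a synchronous tracially embeddable strategy for the isolated sub-test on the relevant coordinates (the detyped sampler in~\Cref{def:detypeCLsampler} and the series composition of CL functions mean the sub-test lives on a canonical basis subspace, and one must check that the measurement operators and the tracial state restrict correctly, using the translation chart of~\Cref{fig:tracialemdtofd}), and that the $\simeq$-error accumulated through the chain ``rounding $\to$ Markov over $(x^{\game},y^{\game})$ $\to$ the $\mathbf{\eta}$-soundness functions $\to$ orthogonalization/data-processing'' stays within $O(\poly(\eps,\log(n),\alpha))$. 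A secondary subtlety is ensuring the $G$ PVMs can be taken to be \emph{genuine} PVMs in $\alicealg'$ rather than mere POVMs; this is exactly what \Cref{thm:soundnessQLDT} and \Cref{lem:simuQLID} provide (the $G$'s there are PVMs in $\alicealg'$), so no extra application of the orthogonalization lemma~\Cref{lem:orthogonalizationlemma} is needed beyond what is used to symmetrize. Once Claim~\ref{claim:ARpolyPVM} is in hand, the next steps of the soundness proof (feeding these polynomials into \Cref{thm:classicalsoundnessPCPP} to extract a valid PCPP proof, hence answers $a,b$ with $D_n(x^{\game},y^{\game},a,b)=1$, hence an oracularizable strategy for $\cG_n^{\text{Ora}}$, hence via~\Cref{lem:oratransformation} a strategy for $\cG_n$) follow the template of~\cite[Section 10.7]{jiMIPRE2022a} with the finite-dimensional notation replaced per~\Cref{fig:tracialemdtofd}.
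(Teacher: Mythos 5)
Your proposal is correct and takes essentially the same approach as the paper: restrict the strategy to the sub-test selected by a given pair of oracularization labels (a $(p,m^{\text{ans}},p)$-LDT for (Prover A), (Prover B), $(\text{Ora}_o)$, and a $(p,m,p,6+m)$-SLDT for (Ora)), note each sub-test is selected with constant probability so the restricted strategy wins with value $1-O(\delta_1)$ on average over $(x^{\game},y^{\game})\sim\mu_n$, and invoke \Cref{thm:soundnessQLDT} and \Cref{lem:simuQLID} to extract the PVMs, concluding by the parameter bounds of \Cref{thm:classicalPCPP}. Two small points where the paper is leaner than your sketch: (i) the strategy $\strategy$ was already fixed to be a \emph{synchronous} strategy achieving value $1-c_1\delta_1$ before the claim (via \Cref{thm:Rounding}), so no further appeal to \Cref{cor:orthogonalizationlemmasync} is needed, and the restriction of a synchronous strategy to a fixed question label remains synchronous; and (ii) because the soundness functions $\mathbf{\eta}_{\text{LD}}$, $\mathbf{\eta}_{\text{SLD}}$ are stated as averaged consistency over the test distribution and are concave in $\eps$, the averaged bound over $(x^{\game},y^{\game})$ follows directly from the averaged value $1-9\delta_1$ without a separate Markov step; also, the detyping overhead in \Cref{lem:CLsamplableLDT} is a universal constant, not a $\log^{\gamma^{\Answerred}}(n)$ factor.
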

\begin{proof}
$\cG^{\Answerred}$, when the question set is restricted to the case where the oracularization question label is restricted to ``(Prover A)", and the question content for $\cG_n$ is fixed to some $x^{\game} \in \cX$, is precisely an instance of the $(p,m,p)$-low-individual degree test. Since $\strategy$ succeed with probability $1 - c_1 \cdot \delta_1$, and the probability that both oracularization question label in a question pair to be both ``(Prover A)" being $\frac{1}{9}$. This implies that, on average over $(\mu_n)_X$, the strategy $\strategy$, when restricted to the case where $(x^{\text{Ora}}, y^{\text{Ora}})$ being both ``(Prover A)", succeed with probability at least $1 - 9 \cdot \delta_1$. Hence, by~\Cref{thm:soundnessQLDT}
\begin{equation*}
 	A^{(\text{Prover, A}), (\text{Point}), x^{\game} ,s}_{u} \simeq_{\mathbf{\eta}_{\text{LD}}(p,m,p,9 \cdot \delta_1)} G_{[\textbf{eval}_{s}^1 | u]}^{(\text{Prover, A}), x^{\game}},
\end{equation*}
where $\simeq$ for the above equations is defined over the distribution $(x^{\game}) \sim \mu_n$ and $s \sim \bF_{2^p}^{m^{\text{ans}}}$
By the choice of the parameter $p, m^{\text{ans}} = O(\poly(\alpha, \log(n)))$, this immediately shows~\eqref{eq:ARsoundAlicepoly}. ~\eqref{eq:ARsoundBobpoly} and~\eqref{eq:ARsoundvarpoly} follows from a similar argument (except with the oracularization question label replaced with ``(Prover B)" and ``$(\text{Ora})_{o}$", $i = \{0,1,2\}$ respectively). ~\eqref{eq:ARsoundAllpoly} follows a similar argument with the label ``(Ora)" and using~\Cref{lem:simuQLID}. This completes the proof.
\end{proof}
We wish to modify the output for the PVM associated with the ``(Prover, A)", ``(Prover, B)" and ``$(\text{Ora})_{o}$", $i \in \{0,1,2\}$ as PVM which outputs $\textbf{g} \in \Idpoly(p,m,p)$. For $s \in \bF_{2^p}^m$, partition $s = (s_0, \cdots, s_4, w)$ where $s_i \in \bF_{2^p}^{m^{\text{ans}}}$, $i \in [5] $ and $w \in  \bF_{2^p}^{5+g}$, we make the following post measurement processing to PVMs given in the last lemma as follows
\begin{itemize}
	\item Treat the outputs $\textbf{g}$ from $\{G_{\overline{\textbf{g}}}^{(\text{Prover, A}), x^{\game}}\}$ as $\textbf{g} \in \Idpoly(p,m,p)$ as $\textbf{g}(s) =\overline{\textbf{g}}(s_0)$.
	\item Treat the outputs $\textbf{g}$ from $\{G_{\overline{\textbf{g}}}^{(\text{Prover, B}), y^{\game}}\}$ as $\textbf{g} \in \Idpoly(p,m,p)$ as $\textbf{g}(s) =\overline{\textbf{g}}(s_1)$.
	\item For $o \in \{0,1,2\}$, treat the outputs $\textbf{g}$ from $\{G_{\overline{\textbf{g}}}^{(\text{Ora})_{o}, (x^{\game}, y^{\game})}\}$ as $\textbf{g} \in \Idpoly(p,m,p)$ with $\textbf{g}(s) =\overline{\textbf{g}}(s_{o+2})$.
\end{itemize}
To distinguish the two measurement outputs, we write the output polynomial as $\overline{\textbf{g}}$ if the resulting polynomial output are from $\Idpoly(p,m^{\text{ans}},p)$ and $\textbf{g}$ if the output are from $\Idpoly(p,m,p)$. For $i \in [5]$, define the PVM measurement
\begin{align}
	\label{eq:ARproofOracFull-1} G_{\textbf{g}_{U_i}}^{(\text{Ora}), (x^{\game}, y^{\game}), U_i} &= \sum_{\substack{\textbf{g}_{U_0}, \cdots,\textbf{g}_{U_{i-1}},\textbf{g}_{U_{i+1}}, \cdots,\textbf{g}_{U_4} \\ \textbf{g}_{\Gamma}, \textbf{g}_{B_0},\cdots,  \textbf{g}_{B_{m-1}} \in \Idpoly(p, m, p) } }  G_{\textbf{g}_{U_0}, \cdots,\textbf{g}_{U_4},  \textbf{g}_{\Gamma}, \textbf{g}_{B_0},\cdots,  \textbf{g}_{B_{m-1}}  }^{(\text{Ora}), (x^{\game}, y^{\game})} \\
	\label{eq:ARproofOracFull-2} G_{\textbf{g}_{\Gamma}, \textbf{g}_{B_0} \cdots,\textbf{g}_{B_{m-1}}}^{(\text{Ora}), (x^{\game}, y^{\game}), \text{Full}} &= \sum_{\textbf{g}_{U_0}, \cdots, \textbf{g}_{U_4}  \in \Idpoly(p, m, p) }  G_{\textbf{g}_{U_0}, \cdots,\textbf{g}_{U_4},  \textbf{g}_{\Gamma}, \textbf{g}_{B_0},\cdots,  \textbf{g}_{B_{m-1}}  }^{(\text{Ora}), (x^{\game}, y^{\game})}.
\end{align}
Since $ G^{(\text{Ora}), (x^{\game}, y^{\game})} $ is projective, for all $(x,y) \in \cX_n^2$ and outputs $\textbf{g}_v$, $v \in \{U_0, \cdots, U_4, \Gamma, B_0, \cdots, B_{m-1}\}$
\begin{align} \label{eq:ARproofGOrafull}
	 G^{(\text{Ora}), (x, y} = &G^{(\text{Ora}), (x, y),U_0 } \cdots  G^{(\text{Ora}), (x, y), U_4} G^{(\text{Ora}), (x, y), \text{Full}} G^{(\text{Ora}), (x, y),U_4 } \cdots G^{(\text{Ora}), (x, y), U_0}.
\end{align}
Base on the decision procedure for $\cG^{\Answerred}$, we show the following claim 
\begin{claim} \label{claim:ARconsistpoly}
	On average over $(x,y) \sim \mu$, $s \in \bF_{2^p}^m$ and over the state $\ket{\tau}$
	\begin{align}
		\label{eq:ARconsistAlicepoly} G_{[\textbf{eval}_{s}^1 | u_0]}^{(\text{Prover, A}), x^{\game}}	&\simeq_{O(\poly(\eps, \log(n), \alpha))} 	(G_{[\textbf{eval}_{s}^1 | u_0]}^{(\text{Ora}), (x^{\game}, y^{\game}), U_0})^{op}  \\
		\label{eq:ARconsistBobpoly} G_{[\textbf{eval}_{s}^1 | u_1]}^{(\text{Prover, B}), y^{\game}}	&\simeq_{O(\poly(\eps, \log(n), \alpha))} 	(G_{[\textbf{eval}_{s}^1 | u_1]}^{(\text{Ora}), (x^{\game}, y^{\game}), U_1})^{op}  \\
		\label{eq:ARconsistOracpoly} G_{[\textbf{eval}_{s}^{o+2} | u_{o+2}]}^{(\text{Ora})_{o}, (x^{\game}, y^{\game})}	&\simeq_{O(\poly(\eps, \log(n), \alpha))} 	(	G_{[\textbf{eval}_{s}^{o+2} | u_{o+2}]}^{(\text{Ora}), (x^{\game}, y^{\game}), U_{o+2}})^{op} , \, o \in \{0,1,2\}.
	\end{align} 
\end{claim}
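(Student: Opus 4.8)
The plan is to prove \Cref{claim:ARconsistpoly} by exploiting the ``Prover consistency check'' steps in the decision procedure $\deciderTM^{\Answerred}$ (see~\Cref{fig:Answerredver}), which compare the ``point'' outputs of the two provers when one receives an oracularization/prover label and the other receives the ``(Ora)'' label. I would handle the three equations in parallel since they have essentially identical structure; let me describe the argument for~\eqref{eq:ARconsistAlicepoly}.

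First I would observe that the question pair where one prover gets the oracularization label ``(Prover, A)'' with SLDT label ``(Point)'', and the other gets ``(Ora)'' with SLDT label ``(Point)'', is sampled with constant probability by $\samplerTM^{\Answerred}$ (the relevant constants are: $1/9$ for the pair of oracularization labels, and a constant for the pair of SLDT point labels). By the ``Prover consistency check'' item 2 in~\Cref{fig:Answerredver}, on these question pairs the verifier rejects unless $u_A = u_0$, so since $\omega(\cG^{\Answerred}_n,\strategy) \geq 1 - c_1\delta_1$ and $\strategy$ is synchronous, we get
\begin{equation*}
	A^{(\text{Prover, A}), (\text{Point}), x^{\game}, s}_{u} \simeq_{O(\delta_1)} \bigl(A^{(\text{Ora}), (\text{Point}), (x^{\game}, y^{\game}), s}_{u_0 = u, u_1, \cdots, u_4, \gamma, \beta_0, \cdots, \beta_{m-1}}\bigr)^{op},
\end{equation*}
over $(x^{\game},y^{\game}) \sim \mu_n$, $s \sim \bF_{2^p}^{m^{\text{ans}}}$ (or $\bF_{2^p}^m$ after embedding $s_0$ appropriately), and the state $\ket{\tau}$, where the $\simeq$ on the right is over all the remaining outcome variables. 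Data-processing the right-hand side onto its $u_0$-marginal and applying the triangle inequality for $\simeq$, together with~\eqref{eq:ARsoundAllpoly} from~\Cref{claim:ARpolyPVM} and~\eqref{eq:ARsoundAlicepoly}, would let me replace both the ``Prover A'' measurement and the ``(Ora)'' measurement by the corresponding $G$-measurements $G^{(\text{Prover, A}), x^{\game}}$ and $G^{(\text{Ora}), (x^{\game}, y^{\game})}$ on the complementary algebra. Finally, summing~\eqref{eq:ARproofOracFull-1} over the unused polynomial outcomes to pass from $G^{(\text{Ora}), (x^{\game}, y^{\game})}$ to $G^{(\text{Ora}), (x^{\game}, y^{\game}), U_0}$ (using data-processing of $\simeq$, which only decreases the distance), and matching up the post-measurement processing that turns the $m^{\text{ans}}$-variate $\overline{\textbf{g}}$ into the $m$-variate $\textbf{g}$ with the partition $s = (s_0, \dots, s_4, w)$, yields~\eqref{eq:ARconsistAlicepoly}. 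The equations~\eqref{eq:ARconsistBobpoly} and~\eqref{eq:ARconsistOracpoly} follow verbatim, using consistency-check items 3 and 4 of~\Cref{fig:Answerredver} and the corresponding parts of~\Cref{claim:ARpolyPVM}, with the embedding index shifted to $s_1$ and $s_{o+2}$ respectively.

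The main obstacle I anticipate is bookkeeping rather than conceptual: keeping the various $\simeq$-distances over the right joint distributions straight, making sure that the outcome labellings of the $G$-measurements on the commutant $\alicealg'$ are consistently identified via the $\mathrm{op}$-map (as in~\Cref{fig:tracialemdtofd}), and correctly matching the coordinate embeddings $\overline{\textbf{g}} \mapsto \textbf{g}$ used in the post-measurement processing defined just after~\Cref{claim:ARpolyPVM}. One technical point worth care: the Schwartz--Zippel-type error from restricting an $m^{\text{ans}}$-variate evaluation to a subcube coordinate is already controlled by the parameter requirements in~\Cref{thm:classicalPCPP} (conditions 1--3 on $p$), so all the accumulated errors remain $O(\poly(\eps, \log(n), \alpha))$ as claimed; I would just invoke those bounds rather than re-derive them. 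After establishing this claim, the remainder of the soundness proof proceeds as outlined earlier in this subsection: use~\eqref{eq:ARproofGOrafull} together with \Cref{claim:ARconsistpoly} and the evaluation/formula/subcube checks to produce a consistent low-degree PCPP proof, invoke~\Cref{thm:classicalsoundnessPCPP} to extract answers $(a,b)$ for $\cG_n^{\text{Ora}}$, and finish with~\Cref{lem:oratransformation}.
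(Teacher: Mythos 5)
Your proposal matches the paper's proof: you use the ``Prover consistency check'' items of $\deciderTM^{\Answerred}$ on the constant-probability question pairs to relate $A^{(\text{Prover, P})}$ (resp.\ $A^{(\text{Ora})_o}$) to the $u_i$-marginal of $A^{(\text{Ora})}$, then chain via~\eqref{eq:ARsoundAlicepoly}--\eqref{eq:ARsoundAllpoly} and data-processing to the $G$-PVMs, exactly as in the paper. The only small imprecision is the phrase ``triangle inequality for $\simeq$'': as the paper notes, one must pass through $\approx$ using \Cref{lem:closetodistance} (which is available since all the $G$-measurements are PVMs and $\strategy$ is projective), but this is bookkeeping rather than a gap.
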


\begin{proof}
	We show the proof for~\eqref{eq:ARconsistAlicepoly} below, the proof for~\eqref{eq:ARconsistBobpoly} and~\eqref{eq:ARconsistOracpoly} follows a similar proof. Consider the question pair $(x,y)$ for $\cG^{\Answerred}$ where $(x^{\text{Ora}}, y^{\text{Ora}}) =$ ((Prover A), (Ora)) and $x^{\text{LDL}} = y^{\text{LDL}} = $ (Point), this occur with constant probability. Since $\strategy$ succeed with probability at least $1-  c_1 \cdot\delta_1$, by point 2 of the ``Prover consistency check" from~\Cref{fig:Answerredver},
	\begin{equation} \label{eq:ARproofB1}
		A^{((\text{Prover, A}), (\text{Point}), (x^{\game} ,s))}_{u_0} \simeq_{O(\poly(\eps, \log(n), \alpha))} A^{(\text{Ora}), (\text{Point}), ((x^{\game}, y^{\game}) ,s)}_{u_0, \cdots, u_4, \gamma, \beta_0, \cdots,\beta_{m-1}} 
	\end{equation}
	over $(x,y) \sim \mu$ and $s \sim \bF_{2^p}^m$ and over the state $\ket{\tau}$. ~\eqref{eq:ARconsistAlicepoly} then follows from~\Cref{lem:closetodistance} point 1 and 2 which translates between $\simeq$ distance to $\approx$ distance, and the triangle inequality for $\approx$ distance applied to~\eqref{eq:ARsoundAlicepoly},~\eqref{eq:ARproofB1}, and~\eqref{eq:ARsoundAllpoly}. 
\end{proof}

For $(x^{\game}, y^{\game}) \in \cX_n$, define the POVM measurement $M^{(x^{\game}, y^{\game})}_{\textbf{g}_{U_0}, \cdots, \textbf{g}_{U_4}, \textbf{g}_{\Gamma}, \textbf{g}_{B_0} \cdots,\textbf{g}_{B_{m-1}}}$ with outcomes $\textbf{g}_v \in \Idpoly(p,m,p)$ as
\begin{align*}
	M^{(x, y)}_{\textbf{g}_{U_0}, \cdots, \textbf{g}_{U_4}, \textbf{g}_{\Gamma}, \textbf{g}_{B_0} \cdots,\textbf{g}_{B_{m-1}}} = &G_{\textbf{g}_{U_0}}^{(\text{A}), x} G_{\textbf{g}_{U_1}}^{(\text{B}), y}  G_{\textbf{g}_{U_2}}^{(\text{Orc})_0, (x,y)}  G_{\textbf{g}_{U_3}}^{(\text{Orc})_1, (x,y)}  G_{\textbf{g}_{U_4}}^{(\text{Orc})_2, (x,y)} 	G_{\textbf{g}_{\Gamma}, \textbf{g}_{B_0} \cdots,\textbf{g}_{B_{m-1}}}^{(\text{Ora}), (x, y), \text{Full}} \cdot \\
	&\quad G_{\textbf{g}_{U_4}}^{(\text{Orc})_2, (x,y)}  G_{\textbf{g}_{U_3}}^{(\text{Orc})_1, (x,y)}  G_{\textbf{g}_{U_2}}^{(\text{Orc})_0, (x,y)} G_{\textbf{g}_{U_1}}^{(\text{B}), y} G_{\textbf{g}_{U_0}}^{(\text{A}), x}, 
\end{align*}
where for $P \in \{A,B\}$, we shorten the label (Prover, P) to (P), and remove the superscript ``\game" in the above equation. We remark that in contrast to $G^{(\text{Ora}), (x^{\game}, y^{\game})}$, the output $\textbf{g}_{U_i}$, $i \in [5]$ from $M^{(x, y)}$ are secretly polynomials in $\Idpoly(p, m^{\text{ans}},m)$ which is consistent with the definition for the 5 polynomials given in~\Cref{thm:classicalPCPP}. 

Furthermore, we define $M^{(x^{\text{game}}, y^{\text{game}}), U_i}$ for $i \in [5]$, and $M^{(x^{\text{game}}, y^{\text{game}}), \text{Full}}$ in a similar manner as~\eqref{eq:ARproofOracFull-1} and~\eqref{eq:ARproofOracFull-2}. By definition
\begin{align*}
	M^{(x^{\text{game}}, y^{\text{game}}), U_0} &=  G^{(\text{Prover, A}), x^{\game}} \\
	M^{(x^{\text{game}}, y^{\text{game}}), U_1} &=  G^{(\text{Prover, B}), y^{\game}} \\
	M^{(x^{\text{game}}, y^{\text{game}}), U_{o+2}} &=  G_{\overline{\textbf{g}}}^{(\text{Ora})_{o}, (x^{\game}, y^{\game})}, o \in \{0,1,2\}.
\end{align*}
For $(x^{\text{game}}, y^{\text{game}}) \in \cX^2$ and output tuple $(\textbf{g}_{U_0}, \cdots, \textbf{g}_{U_4}, \textbf{g}_{\Gamma}, \textbf{g}_{B_0} \cdots,\textbf{g}_{B_{m-1}})$ from $M^{(x^{\text{game}}, y^{\text{game}})}$. We refer to the output as ``good" if for a uniformly random $s =(s_0, \cdots, s_4, b_0, \cdots, b_4, z) \sim \bF_{2^p}^m$, the following occurs with probability over $\frac{1}{2}$:
\begin{itemize}
	\item $\textbf{g}_{\Gamma}(s) = \textbf{g}_{\deciderTM}(s) (\textbf{g}_{U_0}(s) - b_0)  (\textbf{g}_{U_1}(s) - b_1)  (\textbf{g}_{U_2}(s) - b_2)  (\textbf{g}_{U_3}(s) - b_3)  (\textbf{g}_{U_4}(s) - b_4)$
	\item $\textbf{g}_{\Gamma}(s) = \sum_{i \in [m]} \textbf{g}_{B_i}(s) \textbf{zero}(s) $,
\end{itemize}
where $\textbf{g}_{\deciderTM} = \mathtt{ComputePCP}_{\alpha}(\langle \deciderTM \rangle ,n, x^{\text{game}}, y^{\text{game}})$. By~\Cref{thm:classicalsoundnessPCPP}, if the output for $M^{(x^{\text{game}}, y^{\text{game}})}$ is a ``good" output, then there exist $a,b \in \{0,1\}^*$ with $|a|, |b|\leq \log^{\alpha}(n)$ such that $\textbf{g}_{U_0} =  \text{enc}_{\Gamma}(a)$ and $\textbf{g}_{U_1} =  \text{enc}_{\Gamma}(b)$ and $D(x^{\text{game}}, y^{\text{game}},a,b) =1$

We now proof the following claim regarding the measurement $M^{(x^{\text{game}}, y^{\text{game}})}$
\begin{claim} \label{claim:ARgoodoutput}
	On average over $(x^{\text{game}}, y^{\text{game}}) \sim \mu_n$ and the state $\ket{\tau}$
	\begin{equation} \label{eq:ARproofSynccondpoly}
		M^{(x^{\text{game}}, y^{\text{game}})} \simeq_{O(\poly(\eps, \log(n), \alpha))}  (M^{(x^{\text{game}}, y^{\text{game}})})^{op}.
	\end{equation}
	Furthermore, on average over $(x^{\text{game}}, y^{\text{game}})$, the measurement output for $\braket{\tau|M^{(x^{\text{game}}, y^{\text{game}})}|\tau}$ is ``good" with probability at least $1 - O(\poly(\eps, \log(n), \alpha))$
\end{claim}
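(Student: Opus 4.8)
The plan is to establish both parts of \Cref{claim:ARgoodoutput} by leveraging the synchronicity of $\strategy$ together with the approximations already collected in \Cref{claim:ARpolyPVM} and \Cref{claim:ARconsistpoly}, and then feed the ``good output'' conclusion into \Cref{thm:classicalsoundnessPCPP}. For \eqref{eq:ARproofSynccondpoly}, the key observation is that $M^{(x^{\text{game}}, y^{\text{game}})}$ is built as a product of PVMs, all living in $\alicealg'$, and each individual factor is $O(\poly(\eps,\log n,\alpha))$-close to the corresponding factor of $(M^{(x^{\text{game}}, y^{\text{game}})})^{op}$. Concretely, I would first use \Cref{claim:ARconsistpoly} to swap $G^{(\text{Prover, A}),x}$, $G^{(\text{Prover, B}),y}$, and the $G^{(\text{Ora})_o,(x,y)}$ factors in $M^{(x,y)}$ for their counterparts $(G^{(\text{Ora}),(x,y),U_i})^{op}$ appearing inside $(M^{(x,y)})^{op}$; the consistency of $\strategy$ on same-label question pairs (the first ``Prover consistency check'' rule of \Cref{fig:Answerredver}) handles the ``Full'' factor. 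Then I would repeatedly apply \Cref{lem:combmeasure} to propagate these single-factor approximations through the product while the other (POVM) factors sit to the side, using the telescoping identity \eqref{eq:ARproofGOrafull} to reassemble $(M^{(x,y)})^{op}$ from its factors. This is essentially the same bookkeeping argument as in the oracularization soundness proof (\Cref{lem:oratransformation}), just with more factors; summing the $O(\poly(\eps,\log n,\alpha))$ errors over the constantly-many factors keeps the bound at $O(\poly(\eps,\log n,\alpha))$.

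For the ``good output'' part, the plan is to show that the two polynomial identities defining a good output hold with probability close to $1$ under $\braket{\tau|M^{(x,y)}|\tau}$. This comes directly from the ``PCPP proof check'' rules of \Cref{fig:Answerredver}: whenever the verifier samples the question pair with both oracularization labels equal to $(\text{Ora})$ and both SLDT labels equal to $(\text{Point})$ — which happens with constant probability — and $\strategy$ wins (which it does with probability $1-c_1\delta_1$), the sampled tuple $(u_0,\dots,u_4,\gamma,\beta_0,\dots,\beta_{m-1})$ must satisfy $\gamma = \textbf{g}_{\deciderTM}(s)\cdot(u_1-b_0)\cdots(u_4-b_4)$ and $\gamma = \sum_i \beta_i \cdot \textbf{zero}(s_i)$. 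Using \eqref{eq:ARsoundAllpoly} from \Cref{claim:ARpolyPVM}, the sampled $u_i,\gamma,\beta_j$ are, up to $O(\poly(\eps,\log n,\alpha))$ error, exactly $\textbf{g}_{U_i}(s)$, $\textbf{g}_\Gamma(s)$, $\textbf{g}_{B_j}(s)$ for the polynomials output by $M^{(x,y)}$ (after the post-processing identifying the $\Idpoly(p,m^{\text{ans}},p)$ outputs with $\Idpoly(p,m,p)$ ones). Since $s$ is uniformly random, an averaging argument (Markov over the choice of output tuple from $M^{(x,y)}$) then shows that for all but a $O(\poly(\eps,\log n,\alpha))$ fraction of output tuples, the two identities hold for at least half of the $s$, i.e.\ the output is good.

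The main obstacle I anticipate is the careful matching between the measurement $M^{(x,y)}$ — which is a nested product of the six PVMs with the ``Full'' PVM sandwiched in the middle — and the raw strategy measurements $A^{(\text{Ora}),(\text{Point}),\dots}$ that actually appear in the decision procedure. One has to verify that the post-measurement processing (treating $\overline{\textbf{g}}\in\Idpoly(p,m^{\text{ans}},p)$ as $\textbf{g}\in\Idpoly(p,m,p)$ via $\textbf{g}(s)=\overline{\textbf{g}}(s_i)$) is compatible across all the data-processing steps, and that the order of the factors in $M^{(x,y)}$ doesn't introduce extra error — this is exactly where projectivity of the $G$'s and the identity \eqref{eq:ARproofGOrafull} are essential, and where \Cref{lem:combmeasure} must be applied with the correct ``side'' operator each time. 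A secondary subtlety is ensuring the constant $c_1$ chosen for $\delta_1$ is small enough that all the constant-probability conditioning events (both-$(\text{Ora})$, both-$(\text{Point})$, same-label, etc.) each inherit only a constant blow-up of the error; this is routine but must be tracked. Once \Cref{claim:ARgoodoutput} is in hand, the remaining steps — extracting $a,b$ with $D(x^{\text{game}},y^{\text{game}},a,b)=1$ via \Cref{thm:classicalsoundnessPCPP}, building a strategy $\strategy^{\text{Ora}}$ for $\cG_n^{\text{Ora}}$, and then applying \Cref{lem:oratransformation} to descend to a strategy for $\cG_n$ — are comparatively direct, and I would carry them out as outlined in the overview.
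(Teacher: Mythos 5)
Your overall plan is the same one the paper carries out, but there is a persistent bookkeeping error in the first part that would derail a direct write-up. The factors $(G^{(\text{Ora}),(x,y),U_i})^{op}$ do \emph{not} appear inside $(M^{(x,y)})^{op}$: by definition $(M^{(x,y)})^{op}$ is the reversed product of $(G^{(\text{Prover, A}),x})^{op}$, $(G^{(\text{Prover, B}),y})^{op}$, $(G^{(\text{Ora})_o,(x,y)})^{op}$, and $(G^{(\text{Ora}),(x,y),\text{Full}})^{op}$ --- not of the $(G^{(\text{Ora}),(x,y),U_i})^{op}$'s. What \Cref{claim:ARconsistpoly} together with \eqref{eq:ARproofGOrafull} actually buys you is the intermediate estimate
\begin{equation*}
M^{(x^{\text{game}},y^{\text{game}})} \simeq_{O(\poly(\eps,\log n,\alpha))} G^{(\text{Ora}),(x^{\text{game}},y^{\text{game}})},
\end{equation*}
obtained by repeatedly invoking \Cref{lem:combmeasure} to swap each $G^{(\text{Prover, P}),\cdot}$ or $G^{(\text{Ora})_o,\cdot}$ factor of $M$ for the corresponding $(G^{(\text{Ora}),\cdot,U_i})^{op}$ and then recombining via the projectivity and mutual commutation of the $G^{(\text{Ora}),\cdot,U_i}$'s. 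The synchronicity claim \eqref{eq:ARproofSynccondpoly} then drops out from three facts: (i) the above relation, (ii) its dual $(M)^{op}\simeq (G^{(\text{Ora})})^{op}$ coming from the tracial state, and (iii) $G^{(\text{Ora})}\simeq_0 (G^{(\text{Ora})})^{op}$, which holds with zero error because $G^{(\text{Ora})}$ is a genuine PVM and $\ket\tau$ is tracial. The triangle inequality for the $\approx$-distance plus \Cref{lem:closetodistance} finishes. You do flag projectivity as ``essential,'' but your plan treats it as a side remark rather than as the only place where the actual synchronicity enters; without making the detour through the projective PVM $G^{(\text{Ora})}$ explicit, there is no step in your outline that establishes $M\simeq (M)^{op}$, since $M$ itself is a product of non-commuting factors and hence not projective.

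Your sketch for the ``good output'' half is sound and matches the paper: one also has a small omission there in that \eqref{eq:ARsoundAllpoly} compares $A^{(\text{Ora}),(\text{Point}),\cdot}$ to $G^{(\text{Ora})}$, not to $M$, so you first need to apply the data-processing inequality to the relation $M\simeq G^{(\text{Ora})}$ just established, and only then invoke \eqref{eq:ARsoundAllpoly}; after that, the argument --- the verifier's ``PCPP proof check'' forces the two polynomial identities with probability $1-c_2 c_1\delta_1$, and choosing $c_1$ small enough keeps this at least $\tfrac12$ so that the sampled polynomial tuple is ``good'' with probability $1-O(\poly(\eps,\log n,\alpha))$ --- proceeds exactly as you describe.
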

\begin{proof}
	 We first show that, on average over $(x^{\text{game}}, y^{\text{game}}) \sim \mu_n$ and the state $\ket{\tau}$
	\begin{equation} \label{eq:ARsoundnessgoodoutcome-1}
		M^{(x^{\text{game}}, y^{\text{game}})} \simeq_{O(\poly(\eps, \log(n), \alpha))}  G^{(\text{Ora}), (x^{\game}, y^{\game})}
	\end{equation}
	Since $G^{(\text{Ora}), (x^{\game}, y^{\game})}$ is projective, by the definition of $M^{(x^{\text{game}}, y^{\text{game}})}$, the last $1+m$ measurement outcome for $G^{(\text{Ora}), (x^{\game}, y^{\game})}$ ($ \textbf{g}_{\Gamma}, \textbf{g}_{B_0},\cdots,  \textbf{g}_{B_{m-1}}$) will always be the same as the measurement $M^{(x^{\text{game}}, y^{\text{game}})}$ when the two measurements are made simultaneously (on any state). 
	For $i \in [5]$, by~\Cref{claim:ARconsistpoly} and the Schwartz-Zippel lemma (\Cref{lem:Schwartz_Zipple})
	\begin{align*}
		M^{(x^{\text{game}}, y^{\text{game}}), U_i}_{\textbf{g}_{U_i}}&\simeq_{\delta_2} (G_{\textbf{g}_{U_i}}^{(\text{Ora}), (x^{\game}, y^{\game}), U_i})^{op}
	\end{align*}
	for $\delta_2 = O(\poly(\eps, \log(n), \alpha)) + \frac{m \cdot d}{2^p}$. Hence, by repeatedly applying~\Cref{lem:combmeasure}, the underlying vector state is a tracial state, and using~\Cref{lem:closetodistance} to convert between $\simeq$ distance to $\approx$ distance,
	\allowdisplaybreaks{
		\begin{align*}
			 &G^{(\text{Ora}), (x, y)} = G^{(x, y),U_0 } \cdots  G^{(x, y), U_4} G^{(x, y), \text{Full}} G^{(x, y),U_4 } \cdots G^{ (x, y), U_0} \\
			 &\approx_{\delta_2} (M^{(x, y), U_0})^{op} G^{(x, y),U_0 } \cdots  G^{(x, y), U_4} G^{(x, y), \text{Full}} G^{(x, y),U_4 } \cdots G^{(x, y),U_1 } \\
			 &\cdots \\
			 &\approx_{\delta_2} (M^{(x, y), U_4} \cdots M^{(x, y), U_0})^{op}  G^{ (x, y),U_0 } \cdots  G^{(x, y),U_4 }   G^{(x, y), U_4} G^{(x, y), \text{Full}}  \\
			 &=  (M^{(x, y), \text{Full}} M^{(x, y), U_4} \cdots M^{(x, y), U_0})^{op}  G^{ (x, y),U_0 } \cdots  G^{(x, y),U_4 } \\
			 &\approx_{\delta_2}  (M^{(x, y), U_4} M^{(x, y), \text{Full}} M^{(x, y), U_4} \cdots M^{(x, y), U_0})^{op}  G^{ (x, y),U_0 } \cdots  G^{(x, y),U_3 }  \\
			 &\cdots \\
			 &\approx_{\delta_2}  ( M^{(x, y), U_0} \cdots M^{(x, y), U_4} M^{(x, y), \text{Full}} M^{(x, y), U_4} \cdots M^{(x, y), U_0})^{op}  =  (M^{(x, y)})^{op}
		\end{align*}
	}
	where we remove the superscript ``game" and (\text{Ora}) in the above derivation for clarity. Hence, by using the triangle inequality for $\approx$ distance and~\Cref{lem:closetodistance}, this implies that 
	\begin{equation*}
		G^{(\text{Ora}), (x^{\text{game}}, y^{\text{game}})}  \simeq_{10 \delta_2}(M^{(x^{\text{game}}, y^{\text{game}})})^{op}
	\end{equation*}
	and since the underlying state is a tracial state and $\delta_2 = O(\poly(\eps, \log(n), \alpha))$, this shows~\eqref{eq:ARsoundnessgoodoutcome-1}. Since the underlying state for~\eqref{eq:ARsoundnessgoodoutcome-1} is the tracial state $\ket{\tau}$, we also have
	\begin{equation} \label{eq:ARsoundnessgoodoutcome-2}
		(M^{(x^{\text{game}}, y^{\text{game}})})^{op} \simeq_{O(\poly(\eps, \log(n), \alpha))}  (G^{(\text{Ora}), (x^{\game}, y^{\game})})^{op}
	\end{equation}
	For~\eqref{eq:ARproofSynccondpoly}, since $G^{(\text{Ora}), (x^{\game}, y^{\game})} $ are all projective measurements, 
	\begin{equation} \label{eq:ARsoundnessgoodoutcome-3}
		G^{(\text{Ora}), (x^{\game}, y^{\game})} \simeq_0 (	G^{(\text{Ora}), (x^{\game}, y^{\game})})^{op}
	\end{equation}
	over $(x^{\game}, y^{\game}) \sim \mu_n$ and the tracial state $\ket{\tau}$.~\Cref{eq:ARproofSynccondpoly} then follows from~\Cref{lem:closetodistance} and the triangle inequality of $\approx$ distance being applied to \eqref{eq:ARsoundnessgoodoutcome-1}, \eqref{eq:ARsoundnessgoodoutcome-2} and \eqref{eq:ARsoundnessgoodoutcome-3}. 
	
	For the second part of~\Cref{claim:ARgoodoutput}, by applying the data processing inequality to~\eqref{eq:ARsoundnessgoodoutcome-1},
	\begin{equation} \label{eq:ARsoundnessgoodoutcome-4}
		M^{(x^{\text{game}}, y^{\text{game}})}_{[\textbf{eval}_s^{6+m}|( u_0, \cdots, u_4, \gamma, \beta_0, \cdots,\beta_{m-1})]} \simeq_{O(\poly(\eps, \log(n), \alpha))} A^{(\text{Ora}), (\text{Point}), ((x^{\game}, y^{\game}) ,s)}_{[\textbf{eval}_s^{6+m}|( u_0, \cdots, u_4, \gamma, \beta_0, \cdots,\beta_{m-1})]}.
	\end{equation}
	Hence by applying the triangle inequality for $\approx$ distance and~\Cref{lem:closetodistance} to~\eqref{eq:ARsoundnessgoodoutcome-4} and~\eqref{eq:ARsoundAllpoly}, we obtain
	\begin{equation}~\label{eq:ARsoundnessgoodoutcome-5}
		M^{(x^{\text{game}}, y^{\text{game}})}_{[\textbf{eval}_s^{6+m}|( u_0, \cdots, u_4, \gamma, \beta_0, \cdots,\beta_{m-1})]} \simeq_{O(\poly(\eps, \log(n), \alpha))}  A^{(\text{Ora}), (\text{Point}), ((x^{\game}, y^{\game}) ,s)}_{u_0, \cdots, u_4, \gamma, \beta_0, \cdots,\beta_{m-1}}. 
	\end{equation}
	Recall that $\strategy$ is a synchronous strategy which succeed at $\cG^{\text{AR}}$ with probability at least $1 -  c_1 \cdot \delta_1$. Since the oracularization question label is pick with constant probability, by the ``PCPP proof check" clause of~\Cref{fig:Answerredver}, on expectation over $(x^{\game},y^{\game}) \sim \mu$ and $s = (s_0, \cdots, s_4, b_0, \cdots, b_4, z) \in \bF_{2^p}^m$, the measurement 
	\begin{equation*}
		\braket{\tau|A_{(u_0, \cdots, u_4, \gamma, \beta_0, \cdots, \beta_{m-1} )}^{(\text{Orc}), (\text{Point}), ((x^{\game}, y^{\game}),  s)}|\tau} 
	\end{equation*}	
	outputs the answer which satisfies the properties below with probability $1 - c_2 c_1 \cdot\delta_1$
	\begin{enumerate}
		\item $\gamma = \textbf{g}_{\deciderTM}(s) \cdot (u_1 - b_0) \cdots (u_4 - b_4)$,
		\item $\gamma = \sum_{i \in[m]} \beta_i \cdot \textbf{zero}(s_i)$,
	\end{enumerate}
	where $\textbf{g}_{\deciderTM} = \mathtt{ComputePCP}_{\alpha}-(\langle \deciderTM \rangle ,n, x^{\text{game}}, y^{\text{game}})$. Pick $c_1 \in (0,1)$ used to define $\strategy$ such that $1 - c_2 c_1 \cdot\delta_1 \geq \frac{1}{2}$. Combine this with~\Cref{eq:ARsoundnessgoodoutcome-5}, this shows that on average over $(x^{\game}, y^{\game}) \sim \mu_n$, the probability that $M^{(x^{\text{game}}, y^{\text{game}})}$ gives an output which is ``good" with probability at least $1-= O(\poly(\eps, \log(n), \alpha))$, thus completing the claim for the lemma. 
\end{proof}
Base on the POVM $M^{(x^{\text{game}}, y^{\text{game}})}$, we define a symmetric strategy $\strategy^{\text{Ora}} =(\cL^2(\alicealg, \tau), \ket{\tau}, \{B_a^x \})$ for $\cG^{\text{Ora}}$ as follows: Fixed $(x^{\text{game}}, y^{\text{game}}) \in \cX_n$, the measurement operator $\{B^{\text{(Prover, A)}, x^{\text{game}}}_a\}$ as a data processing measurement as follows:
\begin{itemize}
	\item Perform the measurement $M^{x^{\text{game}}, U_0}_{\textbf{g}_{U_0}}$ and obtain a polynomial $\textbf{g}_{U_0}$.
	\item If there exist an $a \in \cA_n$ such that $|a| \leq \log^{\alpha}(n)$ and $\textbf{g}_{U_0} = \text{enc}$, output $a$. Otherwise, output $0$. 
\end{itemize}
The measurement operator $\{B^{\text{(Prover, B)}, y^{\text{game}}}\}$ is define in a similar manner as $\{B^{\text{(Prover, A)}, x^{\text{game}}}\}$ with the measurement $M^{x^{\text{game}}, U_1}_{\textbf{g}_{U_1}}$. The measurement operator $\{B^{\text{(Orac)}, (x^{\text{game}},y^{\text{game}})}\}$ similarly define as a data processing measurement as follows:
\begin{itemize}
	\item Perform the measurement $M^{(x^{\text{game}}, y^{\text{game}})}$ and obtain the tuple of polynomials $(\textbf{g}_{v})$.
	\item If the given measurement outcome is a ``good" output, by~\Cref{thm:classicalsoundnessPCPP}, there exist $(a,b) \in \cA_n^2$ such that $\textbf{g}_{U_0} = \text{enc}_{\Gamma}(a)$, $\textbf{g}_{U_1} = \text{enc}_{\Gamma}(b)$, output the corresponding $(a,b) \in \cA_n^2$. Otherwise, output $(0,0)$. 
\end{itemize}
We remark that the above strategy is not necessarily synchronous strategy, since $\{M^{(x^{\text{game}}, y^{\text{game}})}\}$ does not necessarily have to be a PVM. We make the following claim about $\strategy^{\text{Ora}}$. 
\begin{claim}
	$\omega(\cG^{\textbf{Orac}}, \strategy^{\text{Ora}}) \geq 1 - O(\poly(\eps, \log(n), \alpha))$. 
\end{claim}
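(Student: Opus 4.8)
The goal is to show that the strategy $\strategy^{\text{Ora}}$ constructed from the measurements $\{M^{(x^{\game},y^{\game})}\}$ wins $\cG^{\text{Ora}}$ with probability $1 - O(\poly(\eps,\log(n),\alpha))$. The plan is to verify each of the winning conditions of $\cG^{\text{Ora}}$ from~\Cref{fig:Oracularization} in turn, using~\Cref{claim:ARgoodoutput} together with~\Cref{claim:ARpolyPVM} and the classical soundness of the PCPP (\Cref{thm:classicalsoundnessPCPP}). First I would handle the ``(Oracularization) \textbf{--} (Oracularization)'' question pair: by~\eqref{eq:ARproofSynccondpoly} the measurement $M^{(x^{\game},y^{\game})}$ is $O(\poly(\eps,\log(n),\alpha))$-consistent with its opposite over $\ket{\tau}$, so the two provers output the same tuple of polynomials with high probability, and hence the same decoded answer pair $(a,b)$. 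By the second part of~\Cref{claim:ARgoodoutput}, this outcome is ``good'' with probability $1-O(\poly(\eps,\log(n),\alpha))$ on average over $(x^{\game},y^{\game})\sim\mu_n$; then~\Cref{thm:classicalsoundnessPCPP} guarantees that the decoded $(a,b)$ satisfies $D_n(x^{\game},y^{\game},a,b)=1$. This handles winning conditions 1 and (together with the consistency test) 2 of~\Cref{fig:Oracularization}.

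Next I would handle the ``(Prover, A) \textbf{--} (Oracularization)'' and ``(Prover, B) \textbf{--} (Oracularization)'' pairs. The key observation is that $B^{\text{(Prover, A)},x^{\game}}$ is a data-processing of $M^{(x^{\game},y^{\game}),U_0}$, and by construction $M^{(x^{\game},y^{\game}),U_0}$ is exactly the measurement $M^{x^{\game},U_0}$ used to define the (Prover, A) operator, which in turn equals $G^{(\text{Prover, A}),x^{\game}}$. The point is that the first coordinate $\textbf{g}_{U_0}$ of the tuple produced by $M^{(x^{\game},y^{\game})}$ on the oracularization side agrees (when both measurements are performed, since $G$ is projective and all the $G^{\cdot,U_i}$ measurements commute appropriately) with the polynomial $\textbf{g}_{U_0}$ produced on the (Prover, A) side; decoding then gives $a = a_A$. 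One must also invoke that the outcome is ``good'' so that~\Cref{thm:classicalsoundnessPCPP} forces $D_n(x^{\game},y^{\game},a,b)=1$ and that $\textbf{g}_{U_0}$ is genuinely the Reed--Muller encoding $\text{enc}_{\Gamma}(a)$ of some short string $a$. The ``(Prover, A) \textbf{--} (Prover, A)'' self-consistency and ``(Prover, A) \textbf{--} (Prover, B)'' (free win) conditions are then immediate. Collecting the errors across all constant-probability question pairs, each of which contributes an $O(\poly(\eps,\log(n),\alpha))$ failure probability, gives $\omega(\cG^{\text{Ora}},\strategy^{\text{Ora}}) \ge 1 - O(\poly(\eps,\log(n),\alpha))$.

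Finally, to close out the soundness clause of~\Cref{prop:AnswerReduction}, I would feed $\strategy^{\text{Ora}}$ into~\Cref{lem:oratransformation}: since $\omega^t(\cG^{\text{Ora}}_n) \ge 1 - O(\poly(\eps,\log(n),\alpha))$, there is a strategy for $\cG_n$ in model $t$ with value at least $1 - \mathbf{P}^{\text{Ora}}(O(\poly(\eps,\log(n),\alpha))) = 1 - O(\poly(\eps,\log(n),\alpha))$. Composing with the earlier rounding step $\omega^t(\cG^{\Answerred}_n) > 1-\eps \Rightarrow \omega^t_s(\cG^{\Answerred}_n) > 1-\delta_1$ and tracking that all the constants $c_1,c_2,c_b$ are universal, one obtains a universal polynomial $\mathbf{s}^{\Answerred}_\alpha$ with $\mathbf{s}^{\Answerred}_\alpha = O(\polylog(n),\poly(\eps))$ such that $\omega^t(\cG_n) \le 1 - \mathbf{\eps}(n) \Rightarrow \omega^t(\cG^{\Answerred}_n) \le 1 - \mathbf{s}^{\Answerred}_\alpha(\mathbf{\eps}(n),n)$, which is exactly~\eqref{eq:ARsoundnessclause}.

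\textbf{Main obstacle.} I expect the delicate point to be the bookkeeping in the ``(Prover, P) \textbf{--} (Oracularization)'' step: one must argue that the decoded answer on the Prover side genuinely matches the relevant coordinate of the tuple decoded on the Oracularization side, which requires knowing that $M^{(x^{\game},y^{\game})}$, restricted to its $U_i$-coordinate, is consistent (up to $O(\poly(\eps,\log(n),\alpha))$) with the corresponding stand-alone low-individual-degree measurement $G^{(\text{Prover, P}),\cdot}$ --- this is essentially~\Cref{claim:ARconsistpoly} combined with the definitions of $M^{(x^{\game},y^{\game}),U_i}$, but one has to be careful that the data-processing (decoding $\textbf{g}_{U_i}$ to a string) is applied consistently on both sides and that the ``good output'' event, which is only established on average over $(x^{\game},y^{\game})$, can be conditioned on without blowing up the error. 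The translation from the finite-dimensional arguments of~\cite[Section 10.7]{jiMIPRE2022a} to tracially embeddable strategies via~\Cref{fig:tracialemdtofd} is otherwise routine, using~\Cref{lem:combmeasure}, \Cref{lem:closetodistance}, and the fact that the underlying vector state $\ket{\tau}$ is tracial throughout.
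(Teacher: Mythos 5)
Your proposal is correct and follows essentially the same approach as the paper: verify each question-pair type of $\cG^{\text{Ora}}$ using~\eqref{eq:ARproofSynccondpoly} for consistency, the second part of~\Cref{claim:ARgoodoutput} plus~\Cref{thm:classicalsoundnessPCPP} for the proof check, and the identity $M^{(x^{\game},y^{\game}),U_0} = G^{(\text{Prover, A}),x^{\game}}$ (projectivity plus tracial state) for the cross-checks, then union-bound the error terms. The one small caveat is that the marginal identity $M^{(x,y),U_i} = G^{\cdot}$ follows from the sandwich structure of $M$ and the projectivity of the factors, not from the $G^{\cdot,U_i}$ measurements commuting (they need not), but this does not affect the validity of the argument.
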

\begin{proof}
	We consider the performance of $\strategy^{\text{Ora}}$ for different question pairs given in~\Cref{fig:Oracularization} below:
	\begin{itemize}
		\item (Prover, P) \textbf{-}  (Prover, P), $P \in \{A,B\}$: Since both $M^{x^{\text{game}}, U_A} = G^{(\text{Prover, A}), x^{\game}} $ and $M^{y^{\text{game}}, U_B} = G^{(\text{Prover, B}), y^{\game}} $ are both projective and $\strategy^{\text{Ora}}$ uses the tracial state as the underlying state. This implies that $\strategy^{\text{Ora}}$ always succeed on this question pair.
		\item 	(Oracularization) \textbf{--} (Oracularization): For the ``consistency" part of this question pair,  $B^{\text{(Orac)}, (x^{\text{game}},y^{\text{game}})}$ is a data processed measurement of $M^{(x^{\text{game}}, y^{\text{game}})}$, which by~\eqref{eq:ARproofSynccondpoly}, are consistent with probability at least $1 - O(\poly(\eps, \log(n), \alpha))$. For the ``proof checking" part of this question pair,, whenever $M^{(x^{\text{game}}, y^{\text{game}})}$ returns a ``good" output when performing the measurement $B^{\text{(Orac)}, (x^{\text{game}},y^{\text{game}})}$, the corresponding output $(a,b) \in \cA_n^2$ always satisfies $D_n(x,y,a,b) = 1$ by~\Cref{thm:classicalsoundnessPCPP}. By~\Cref{claim:ARgoodoutput}, this occurs with $1 - O(\poly(\eps, \log(n), \alpha))$. Combining these two facts, this implies that $\strategy^{\text{Ora}}$ succeed on this question pair with probability at least $1 - O(\poly(\eps, \log(n), \alpha))$.  
		\item (Oracularization) \textbf{--} (Prover, P), $P \in \{A,B\}$: Restricted to the case when the provers receiving the question label `` (Oracularization)" and obtain a ``good" outcome from the measurement of $M^{(x^{\text{game}}, y^{\text{game}})}$, by the definition of $M$, the `` (Prover, P)" prover would receive the same polynomial from his/her measurement output, and hence output a consistent answer label as the `` (Oracularization)" prover. Since a ``good" outcome occurs with probability  $1 - O(\poly(\eps, \log(n), \alpha))$, this implies that $\strategy^{\text{Ora}}$ succeed with probability on this question pair with probability at least $1 - O(\poly(\eps, \log(n), \alpha))$.  
	\end{itemize}
	By averaging out the probability given above, we see that $\omega(\cG^{\textbf{Orac}}, \strategy^{\text{Ora}}) > 1 - O(\poly(\eps, \log(n), \alpha))$, completing the proof of the claim. 
\end{proof}
This shows that for model $t \in \{*,co\}$, $\omega^t(\cG) > 1- \eps$ implies that $\omega^t(\cG^{\text{Ora}}) > 1- O(\poly(\eps, \log(n), \alpha))$. The proof of~\Cref{prop:AnswerReduction} then follows from the ``soundness" clause of~\Cref{lem:oratransformation}.

\newpage
\begin{CJK*}{UTF8}{gbsn}
\printglossary[style={indexgroup},title={Nomenclature}]
\end{CJK*}
\end{document}